\definecolor{darkgreen}{rgb}{0, .5, 0}
\definecolor{darkred}{rgb}{.5, 0, 0}
\definecolor{orange}{rgb}{1, 0.3, 0.1}
\theoremstyle{plain}
\newtheorem{theorem}{Theorem}[section]
\newtheorem*{mocktheorem*}{Mock Theorem}
\newtheorem{proposition}[theorem]{Proposition}
\newtheorem{corollary}[theorem]{Corollary} 
\newtheorem{assumption}[theorem]{Assumption} 
\newtheorem{lemma}[theorem]{Lemma} 
\newtheorem{example}[theorem]{Example}
\theoremstyle{definition} 
\newtheorem{definition}[theorem]{Definition}
\newtheorem{remark}[theorem]{Remark}
\numberwithin{equation}{section}
\newcommand{\E}{{\mathbb{E}}}
\providecommand{\R}{{\mathbb{R}}}
\newcommand{\dd}{{\rm d}}
\providecommand{\N}{{\mathbb N}}
\newcommand{\n}{{\rm n}}
\newcommand{\1}{\ensuremath{\mathbf{1}}}
\def\P{{\mathbb P}}
\providecommand{\abs}[1]{\ensuremath{\left\lvert#1\right\rvert}}
\DeclareRobustCommand*\circFSuper[1]{\accentset{\circ}{f}^{#1}}
\newcommand{\circFSuperSub}[2]{\accentset{\circ}{f}^{#1}_{#2}}
\newcommand{\circG}{\accentset{\circ}{g}}
\newcommand{\circGSuperSub}[2]{\accentset{\circ}{g}^{#1}_{#2}}
\newcommand{\circH}{\accentset{\circ}{h}}
\newcommand{\circHSub}[1]{\accentset{\circ}{h}_{#1}}
\newcommand{\circHSuperSub}[2]{\accentset{\circ}{h}^{#1}_{#2}}
\newcommand{\circD}{\accentset{\circ}{D}}
\newcommand{\circDSub}[1]{\accentset{\circ}{D}_{#1}}
\newcommand{\circDSuperSub}[2]{\accentset{\circ}{D}^{#1}_{#2}}
\DeclareRobustCommand*\circP{\accentset{\circ}{P}}
\newcommand{\circRho}{\accentset{\circ}{\rho}}
\newcommand{\circPhi}{\accentset{\circ}{\phi}}
\newcommand{\circPsi}{\accentset{\circ}{\psi}}
\let\origdoublepage\cleardoublepage
\newcommand{\clearemptydoublepage}{%
  \clearpage
  {\pagestyle{empty}\origdoublepage}%
}
\let\cleardoublepage\clearemptydoublepage
\newcommand{\Abgabedatum}{23. Januar 2019}
\begin{document}
\singlespacing
\pagenumbering{roman}
\selectlanguage{ngerman}

\begin{titlepage}
\newcommand{\HRule}[1]{\rule{\linewidth}{#1}} 	
\centering

\vspace*{4.5cm}

\begin{otherlanguage}{english}{\Large

\setlength{\arrayrulewidth}{2pt}
\begin{tabular}{m{0.00025\textwidth}m{0.92\textwidth}m{0.00025\textwidth}}
\hline
~\vspace*{2.0cm}~&\centering\uppercase{\textbf{Mathematical Modeling of Systemic Risk in Financial Networks:\\\large Managing Default Contagion and Fire Sales}}&\\ [-0.0cm]
\hline
\end{tabular}
}
\end{otherlanguage}

\Large
\vspace*{2cm}

Dissertation\\
an der Fakultät für Mathematik, Informatik und Statistik\\
der Ludwig-Maximilians-Universität München

\vfill

eingereicht von

\vspace*{0.2cm}
Daniel Ritter

\vspace*{2cm}
\Abgabedatum

\end{titlepage}

\thispagestyle{empty}
\cleardoublepage

\begin{titlepage}
\setcounter{page}{3}

\Large
~
\vfill

\noindent 1.~Gutachter: Prof.~Dr.~Thilo Meyer-Brandis

\vspace*{0.2cm}
\noindent 2.~Gutachter: Prof.~Dr.~Konstantinos~Panagiotou

\vspace*{0.2cm}
\noindent 3.~Gutachter: Prof.~Dr.~Rama~Cont

\vspace*{0.6cm}
\noindent Tag der m\"undlichen Pr\"ufung: 13.~Mai 2019

\end{titlepage}

\thispagestyle{empty}
\cleardoublepage

\begin{titlepage}
\setcounter{page}{5}

\centering

{\LARGE \bf Eidesstattliche Versicherung}\\
\normalsize(Siehe Promotionsordnung vom 12.07.11, \S~8, Abs.~2 Pkt.~5.)

\Large
\flushleft

\vspace*{2cm}
Hiermit erkläre ich, Daniel Ritter, an Eidesstatt, dass die Dissertation von mir selbstst\"andig, ohne unerlaubte Beihilfe angefertigt ist.


\vspace*{2cm}
M\"unchen, den \Abgabedatum

\end{titlepage}

\thispagestyle{empty}
\cleardoublepage

\setcounter{page}{7}

\chapter*{Zusammenfassung}

Wie die Finanzkrise in 2007/08 eindrucksvoll zeigte, bergen Ansteckungseffekte in Finanznetzwerken eine große Gefahr für die Stabilität des gesamten Systems. Ohne ausreichende Kapitalanforderungen an Banken und andere Finanzinstitutionen können sich anfangs lokal beschränkte Schocks über verschiedene Ansteckungskanäle im gesamten System ausbreiten und sich dabei um ein Vielfaches verstärken. Das Ziel dieser Dissertation ist es deswegen, zwei ausgewählte Ansteckungskanäle dieses sogenannten \emph{systemischen Risikos} genauer zu untersuchen, mathematisch zu modellieren und Konsequenzen für das systemische Risikomanagement von Finanzinstitutionen abzuleiten.

Der erste Ansteckungskanal, welchen wir betrachten, ist \emph{Default Contagion}. Der zugrundeliegende Effekt ist hierbei, dass insolvente Institutionen ihre Schulden oder andere finanzielle Verpflichtungen nicht mehr -- oder nur teilweise -- bedienen können. Gläubiger oder anderweitig direkt beeinflusste Parteien im System sind deshalb gezwungen, Abschreibungen vorzunehmen, und werden durch die erlittenen finanziellen Verluste möglicherweise selbst in die Insolvenz getrieben. Dies wiederum läutet eine neue Runde im \emph{Default Contagion}-Prozess ein. In unserem Modell beschreiben wir jede Institution vereinfacht durch die Gesamtheit der Finanzpositionen, denen sie ausgesetzt ist, sowie ihr ursprüngliches Kapital. Unser Ausgangspunkt ist hierbei die Arbeit von Detering et al.~\cite{Detering2015a} -- ein Modell für Ansteckung in ungewichteten Netzwerken -- welches insbesondere die exakte Netzwerkkonfiguration als zufällig betrachtet und asymptotische Ergebnisse für große Netzwerke herleitet. Wir erweitern dieses Modell, sodass auch gewichtete Netzwerke betrachtet werden können und dadurch eine Anwendung auf Finanznetzwerke möglich wird. Genauer leiten wir für einen gegebenen anfänglichen Schock einen expliziten, asymptotischen Ausdruck für den durch Ansteckung verursachten Gesamtschaden im System her und liefern ein notwendiges und hinreichendes Kriterium dafür, dass ein ungeschocktes Finanznetzwerk stabil gegenüber kleinen Schocks ist. Ferner entwickeln wir eine explizite Formel für notwendiges und hinreichendes Risikokapital auf Ebene der einzelnen Institutionen, sodass die Stabilität des Finanznetzwerks gewährleistet wird. Durch Simulationen zeigen wir, dass unsere asymptotischen Resultate bereits für Finanznetzwerke in der typischen Größenordnung von einigen tausend Institutionen eine sehr gute Beschreibung liefern.

In einem nächsten Schritt entwickeln wir eine mehrdimensionale Erweiterung unseres Modells für \emph{Default Contagion}, um die in Finanznetzwerken beobachteten komplexen Strukturen abbilden zu können -- allen voran ist das die \emph{Core-Periphery} Struktur, aber auch mehrschichtige Strukturen, regionale Konzentrationen und Mischformen davon. Zu diesem Zweck weisen wir jeder Institution im Netzwerk einen zusätzlichen Parameter zu, welcher deren Typ beschreibt. Das Netzwerk wird dadurch in Subsysteme (Blöcke) eingeteilt. Dieses neue Modell ermöglicht es insbesondere, die Auswirkungen eines lokalen Schocks in einem der Subsysteme (z.\,B.~ein bestimmtes Land) auf das Gesamtsystem zu quantifizieren. Unsere Resultate zeigen, dass diese zusätzliche Komplexität die Stabilität des Finanzsystems stark beeinträchtigen kann, und wir entwickeln Maßnahmen, mit denen sich einzelne Subsysteme vor der Ansteckung durch andere Subsysteme schützen können. Außerdem gelingt es uns, realistischere finanzielle Verpflichtungen zu modellieren, deren Höhe von beiden Vertragsparteien abhängt. Bisher war es nur unter der Annahme, dass das Ausmaß einer Ansteckung lediglich von der exponierten Seite abhängt, möglich, aussagekräftige analytische Ergebnisse abzuleiten. Wie wir demonstrieren, kann diese vereinfachende Annahme zu einer gravierenden Unterschätzung des Risikopotentials in einem System führen, und die zusätzliche Komplexität in unserem Modell ist deswegen essentiell, um die Stabilität eines Systems realistisch einschätzen zu können.

Als nächstes entwickeln wir ein Modell für den Ansteckungskanal \emph{Fire Sales}, bei dem Institutionen auf einen anfänglichen Schock mit dem Verkauf von Aktien reagieren -- z.\,B.~aufgrund entsprechender Regularien. Dadurch geraten die Aktienpreise unter Druck und Investoren erleiden weitere Verluste. Dies wiederum führt erneut zu Verkäufen und der Prozess setzt sich weiter fort. Zur Modellierung dieses Ansteckungsprozesses beschreiben wir jede Institution durch die Anzahl und Art ihrer gehaltenen Aktien sowie ihr ursprüngliches Kapital und den durch einen anfänglichen Schock verursachten Verlust. Zusätzlich nehmen wir an, dass Institutionen ihre Entscheidung zum Verkauf von Aktien anhand einer gegebenen Funktion treffen und auch die Auswirkungen der Verkäufe auf die Aktienpreise durch eine gegebene Funktion beschrieben werden. In unserer Modellierung greifen wir Ideen aus der Literatur zu \emph{Default Contagion} auf und es gelingt uns so, eine rigorose Beschreibung des Prozesses zu liefern. Insbesondere bestimmen wir asymptotisch den Gesamtschaden im System, der durch den Anfangsschock und anschließende \emph{Fire Sales} verursacht wird, und wir liefern eine Klassifikation von stabilen Systemen sowie hinreichendes Risikokapital, um die Stabilität eines Finanzsystems sicherzustellen. Erneut belegen wir die Anwendbarkeit unserer asymptotischen Resultate durch geeignete Simulationen.

Schließlich kombinieren wir die Modelle für \emph{Default Contagion} und \emph{Fire Sales}, um ein kompletteres Bild von Ansteckungseffekten in Krisenzeiten zu bekommen.  Unsere Ergebnisse zeigen, dass sich die beiden Ansteckungskanäle gegenseitig enorm verstärken können, und unterstreichen deswegen die Wichtigkeit von kombinierten Modellen für das Verständnis von systemischem Risiko. Auch für den kombinierten Fall gelingt es uns, Kapitalanforderungen herzuleiten, die ausreichen, um die Stabilität des Systems zu gewährleisten, und deshalb von großem Interesse für regulatorische Einrichtungen sind.

\cleardoublepage
\selectlanguage{english}
\chapter*{Abstract}

As impressively shown by the financial crisis in 2007/08, contagion effects in financial networks harbor a great threat for the stability of the entire system. Without sufficient capital requirements for banks and other financial institutions, shocks that are locally confined at first can spread through the entire system and be significantly amplified by various contagion channels. The aim of this thesis is thus to investigate in detail two selected contagion channels of this so-called \emph{systemic risk}, provide mathematical models and derive consequences for the systemic risk management of financial institutions.

The first contagion channel we consider is \emph{default contagion}. The underlying effect is here that insolvent institutions cannot service their debt or other financial obligations anymore -- at least partially. Debtors and other directly impacted parties in the system are thus forced to write off their losses and can possibly be driven into insolvency themselves due to their incurred financial losses. This on the other hand starts a new round in the default contagion process. In our model we simplistically describe each institution by all the financial positions it is exposed to as well as its initial capital. In doing so, our starting point is the work of Detering et al.~\cite{Detering2015a} -- a model for contagion in unweighted networks -- which particularly considers the exact network configuration to be random and derives asymptotic results for large networks. We extend this model such that weighted networks can be considered and an application to financial networks becomes possible. More precisely, for any given initial shock we deduce an explicit asymptotic expression for the total damage caused in the system by contagion and provide a necessary and sufficient criterion for an unshocked financial system to be stable against small shocks. Moreover, we develop an explicit formula for necessary and sufficient risk capital at the level of single institutions that ensures stability of the financial network. We demonstrate by simulations that our asymptotic results give a good description for financial networks of the size of a few thousand institutions already.

In the next step, we develop a multi-dimensional extension of our model for default contagion such that we can describe the complex structures observed in financial networks -- particularly \emph{core-periphery}-structures but also multi-layered structures, regional concentrations and mixtures thereof. To this end, we assign to each institution in the network an additional parameter describing its type. The network is thereby divided in different subsystems (blocks). In particular, this new model enables us to quantify the impact of a local shock in one of the subsystems (e.\,g.~a certain country) to the global system. Our results show that the additional complexity can significantly affect the stability of the financial system and we develop measures for the individual subsystems to secure themselves against contagion from other subsystems. Furthermore, we accomplish a more realistic modeling of financial obligations whose size may depend on both contracting parties. So far, meaningful analytical results could only be derived under the assumption that the amount of contagion only depends on the exposed party. We demonstrate that this simplifying assumption can lead to a grave underestimation of the risk potential of a system and the additional complexity in our model is thus essential for a realistic assessment of a system's stability.

Next, we develop a model for the contagion channel of \emph{fire sales} at which institutions react to an initial shock by selling asset shares -- forced by regulations for instance. As a result the share prices come under pressure and investors suffer further losses. This in turn leads again to asset sales and the process proceeds. For the modeling of this contagion channel, we describe each institution by the number and kind of its held asset shares as well as its initial capital and the losses suffered due to some initial shock. Additionally we assume that institutions make their decision to sell shares according to some given function and also the price impact of sales is described by a given function. In our modeling we resort to ideas from the \emph{default contagion} literature and we thus achieve a rigorous description of the process. In particular, we asymptotically determine the total damage to the system caused by the initial shock and the subsequent \emph{fire sales}, and we provide a classification of stable systems as well as sufficient risk capital to ensure stability of a financial system. Again we verify the applicability of our asymptotic results by suitable simulations.

Finally, we combine the models for \emph{default contagion} and \emph{fire sales} to get a more complete picture of contagion effects in periods of crisis. Our results show that the two contagion channels can tremendously amplify each other and thus stress the importance of combined models for the understanding of \emph{systemic risk}. Also for the combined case we achieve to derive capital requirements sufficient to ensure stability of the system that are hence of great interest to regulatory institutions.


\cleardoublepage
\setcounter{tocdepth}{2}
\tableofcontents

%

\cleardoublepage
\pagenumbering{arabic}
\chapter{Introduction}

We are living in an ever more connected world today providing us with uncountable possibilities to make our lives more informed, more efficient, more profitable, more enjoyable and easier. These benefits come at a cost, however. More precisely, the growing complexity of dependencies in any kind of connected structure gives rise to concern regarding its stability. In many settings an initial local shock to a system that considered for itself may be ever so innocuous can be transmitted along the connections and grow to become a major threat to the entire system -- so-called \emph{systemic risk}. While this type of risk certainly is of general importance, in this thesis we focus on the area of \emph{financial systems} where financial institutions are linked by various types of dependencies that allow financial distress to spread. In particular, we aim to make a contribution towards a better understanding of the underlying mechanisms, their joint impact and how to prevent systemic cascades. As will be further discussed in Section \ref{1:sec:systemic:risk}, two of the main drivers of systemic risk are the contagion channels \emph{default contagion} and \emph{fire sales} which we will introduce in Sections \ref{1:sec:default:contagion} and \ref{1:sec:fire:sales} respectively. In Section \ref{1:sec:contribution}, we will give an outlook on the contributions made in the chapters to follow.

\section{Systemic Risk}\label{1:sec:systemic:risk}
Today there are not one but many definitions of \emph{systemic risk} emphasizing different aspects of the same rough underlying idea \cite{BIS1994,Duffie2003b,Kaufman1995,Kaufman2003,Mishkin1995,Schwarcz2008,Taylor2010}. As Hurd points out in \cite{Hurd2016}, however, the following three ingredients are essential for the concept of systemic risk.
\begin{enumerate}
\item \textbf{A triggering event of some kind:} This can be any stress scenario (external or internal) impairing the institutions in the system. Examples could be a sudden drop in asset values, the failure of one or more institutions due to mismanagement or crime, new legislation, natural disasters, or terrorist attacks to mention just a few.
\item \textbf{Propagation of distress within the financial system:} The initially local shock event (e.\,g.~the burst of the US housing bubble in 2008) spreads to other parts of the system by direct or indirect relations between institutions. This process bears an undeniable resemblance with the contagion of diseases between humans -- initially healthy people can become infected if in contact with disease carriers and further communicate the disease themselves afterwards -- and it is thus termed \emph{financial contagion process} nowadays. Hurd lists the following four main channels of contagion. 
\begin{enumerate}[\alph*.]
\item\label{1:enum:contagion:channels:asset:correlation} \emph{Asset Correlation:} Actually not being a contagion channel in the narrower sense, similar or highly correlated asset portfolios can make different institutions susceptible to the same kinds of initial shock events, thus considerably weakening the system as a whole and fueling other types of contagion.
\item \emph{Default Contagion:} Also called \emph{balance sheet contagion}, it describes the effect that upon default of some institution $i$ in the system all other institutions need to write off according interbank assets (pending financial obligations of $i$) in their balance sheet. These assets can be usual \emph{loans} but also \emph{securities cross-holdings}, \emph{derivatives} or \emph{foreign exchange} (see \cite{Thurner2015} for instance). As a result other institutions may default and more write-offs may be the consequence. For more details see Section \ref{1:sec:default:contagion}.
\item \emph{Liquidity Contagion:} Concerned with liquidity rather than solvency this contagion channel explains how the shortage of funding can spread through the system. Some institution $i$ might find itself in the situation of not having enough liquidity to meet short term obligations, thus recalling or not rolling over its issued loans. By this decision on the other hand, $i$'s debtors can run into the danger of a liquidity shortage and reclaim their issued loans for their part and so on.
\item \emph{Market Illiquidity and Asset Fire Sales:} As in \ref{1:enum:contagion:channels:asset:correlation}, overlapping asset portfolios are the catalyst of this channel of contagion. More than only a correlated initial shock, however, it describes how financial distress can force institutions to sell off their assets, hence depressing market prices and possibly forcing other institutions to react to dropped values of their portfolios by selling off assets as well. For more details see Section \ref{1:sec:fire:sales}.
\end{enumerate}
\item \textbf{Significant macroeconomic impact:} As the financial system serves several purposes for the wider economy and society such as the provision of liquidity and credits or the infrastructure for payment systems, a breakdown of large parts of it generally does great harm also beyond the participating institutions.
\end{enumerate}
Returning to the analogy of financial contagion to the contagion of diseases between humans, in his famous speech \cite{Haldane2009} Haldane compared the global financial crisis of 2007/08 to the SARS epidemic of 2002/03. Both started with an external trigger event. Uncertainty about its causes and consequences lead to panic and overreaction spreading across the globe. In the end, the macroeconomic impact was huge compared to the relatively moderate implications of the triggering event.

In particular, the tremendous repercussions for economy and society of a global spread of distress necessitate the intervention by regulating institutions to mitigate or even prevent such cascades from happening in the future. In classical risk management before the latest crisis it was generally accepted to measure risk on an institution level only, taking into account market risk of falling asset prices and credit risk of defaulting direct counterparties, but neglecting second-order and feedback effects due to contagion which were deemed negligible. It was one of the main insights from 2007/08, however, that this was indeed an oversimplification of the real situation.

Much work has hence been put into the proper understanding of contagion and amplification effects over the last decade and despite it being a relatively young field of research the sheer number of publications devoted to systemic risk once again stresses the importance of the topic. In the following, we briefly summarize three of the main lines of research. For a more detailed overview of research on systemic risk see \cite{book:sr} for example.

\subsection{Systemic Risk Measures}
The basic idea of this line of research is to generalize the classical notion of monetary risk measures (see \cite{Foellmer2002} for instance) to account for systemic effects. One can distinguish four different approaches to this problem. First, a rather direct approach applying the concept of value-at-risk and expected-shortfall to systemic quantities. Prominent work here is SRISK from \cite{Acharya2012,Brownlees2017}, the \emph{conditional value at risk} (CoVaR) from \cite{Adrian2016} and the \emph{marginal expected shortfall} (MES) as well as the \emph{systemic expected shortfall} (SES) from \cite{Acharya2017}. While CoVaR is concerned with the extent of systemic damage given that a certain individual institution experiences large losses, SRISK, MES and SES turn the tables and investigate the individual damage of a certain institution in the event of a systemic crisis. Also see \cite{Bisias2012} for more details on the above mentioned and other risk measures. Second, in a more axiomatic fashion, it is possible to consider a multi-variate random vector of risk factors in a certain financial system and first aggregate those to a single uni-variate systemic risk quantity before applying a classical uni-variate risk measure to it (adding sufficient capital to make the risk factor acceptable) -- see \cite{Chen2013} and \cite{Kromer2016} for instance. In the third approach, the order of adding capital and aggregating is reversed. That is, first a sufficient multi-variate vector of capitals is determined to make the system acceptable from a regulator's perspective. Aggregating those capitals then leads to the systemic risk measure. This approach is taken for example in the works \cite{Armenti2015}, \cite{Biagini2017}, \cite{Biagini2018b} and \cite{Feinstein2017}. Finally, emphasizing the dependence of distress in financial systems, there is a line of research concerned with conditional risk measures. See \cite{Acciaio2013}, \cite{Bion-Nadal2004}, \cite{Detlefsen2005}, \cite{Filipovic2012}, \cite{Hoffmann2016} and \cite{Hoffmann2018} for example.

\subsection{Mean-field games}
A further line of research was started by \cite{Fouque2013} and considers a system of diffusion dynamics as a model for the capitals of institutions in the system. More precisely, the logarithm of an institution $i$'s capital $Y_t^{(i)}$ at time $t$ is supposed to be driven by an individual Brownian motion $W^{(i)}$ and borrowing/lending-activity between institutions $i$ and $j$ is modeled by a drift term $\alpha\big(Y_t^{(j)}-Y_t^{(i)}\big)$ for some positive constant $\alpha$. In the mean-field limit, as the size of the system becomes large, a propagation of chaos result is derived in the sense that the individual capitals decouple and the random variables $Y^{(i)}$ are given as independent Ornstein-Uhlenbeck processes. Furthermore, one of the main results in \cite{Fouque2013} is that the borrowing/lending according to $\alpha$ stabilizes the system against small and medium-sized shocks, making a collapse of the system less probable, but at the same time exacerbating the extent of cascades if those occur. The model from \cite{Fouque2013} has been extended for example in \cite{Carmona2018}, \cite{Carmona2015}, \cite{Fouque2013b} and \cite{Kley2014} to capture effects of core/periphery structures and borrowing from resp.~lending to a central bank. By similar means, \cite{Biagini2018} considers the effect of an asset bubble on the robustness of individual institutions in a large network. Note that this line of research is rather abstract with regard to the modeling of borrowing/lending as an institution $j$ with higher capital will lend money to an institution $i$ with lower capital that it will never reclaim unless the sign of $Y_t^{(j)}-Y_t^{(i)}$ flips at some point in time -- even then there is no memory about the total amount owed between institutions.

\subsection{Network models}\label{1:ssec:network:models}
Maybe the most direct approach of modeling contagion in financial systems -- and the one followed in this thesis -- is to consider explicitly the financial system with its institutions and linkages between them. That is, if we consider a financial system of $n\in\N$ institutions we describe the direct exposures (liabilities) between institutions by a matrix $e=(e_{i,j})_{1\leq i,j\leq n}$ where $e_{i,j}$ denotes the exposure that institution $j$ has to $i$. Also other model parameters can be considered for each institution such as its capital reserves or assets held for instance. Probably the most prominent models following the network approach are the Eisenberg-Noe model from \cite{Eisenberg2001} and the Gai-Kapadia model from \cite{Gai2010} which we will discuss along with their respective extensions in the following section. Moreover, in \cite{Capponi2016} the authors use matrix majorization tools to compare systemic losses in financial systems according to different concentration of liabilities. The work \cite{Chong2016} develops a structural model for default and allows to compute explicitly the joint probabilities of default and survival by a Bayesian network approach. One drawback of network models is that typically the whole network needed to be observed for calibration, while often only aggregated data is available. The authors of \cite{Gandy2017} thus develop a Bayesian methodology that allows to estimate the individual entries of the liability matrix $e$ from this aggregated data and they apply it to stress testing of European banks.

\section{Default Contagion}\label{1:sec:default:contagion}
As roughly outlined above, \emph{default contagion} considers the progressive infection of direct creditors in a financial system. Initially some distressed institutions in the financial system may not be able to repay (in full) their pending loans or other liabilities and declare bankruptcy. As a consequence exposed institutions in the network need to write off their losses. This is completely analogue to classical credit risk management. However, due to these first-order losses further institutions may become insolvent and more write-offs ensue. In particular, some institution in the system that does not maintain a direct relationship with any of the initially distressed institutions but is linked to them by some vulnerable joint business partner can still get into trouble even though it may not even have knowledge of this indirect link. Furthermore, even for directly exposed institutions classical risk management can significantly underestimate losses as it only considers losses from the particular business relation itself but neglects exposures to institutions that are driven into default themselves. During the financial crisis in 2007/08 it was demonstrated that these second order effects neglected in classical risk management actually play an important role when assessing a system's stability with respect to initial shock events and it is thus necessary to better understand those effects.

In this section, we want to elaborate on two intensively pursued approaches within the network line of research from Subsection \ref{1:ssec:network:models}. While both consider contagion on a network represented by some exposure matrix $e$, the methods used are quite different. Whereas the Eisenberg-Noe (EN) model proposed in \cite{Eisenberg2001} and its extensions consider the final state of an explicitly observed system after contagion via an equilibrium representation, the research branch started by \cite{Gai2010} and \cite{Cont2016} models the network configuration $e$ as a random sample calibrated to some observed network and then considers the contagion cascade round by round.

\subsection{Eisenberg-Noe-type Models}
The seminal work \cite{Eisenberg2001} was one of the first papers of financial mathematics that systematically addressed feedback effects of contagion in financial networks. The authors considered a financial network consisting of $n\in\N$ institutions and described it by the observed liability matrix $e\in\R_{+,0}^{n\times n}$ and the vector $c\in\R_{+,0}^n$ of cash values available to each institution which may also include external assets minus external liabilities that are of higher seniority than interbank liabilities. The idea is now to assume that the whole financial system was cleared in the sense that all liabilities were settled. If each institution $i$ in the system is solvent in the sense that its total assets $c+\sum_{1\leq j\leq n}e_{j,i}$ (cash + interbank assets) exceed or equal its total interbank liabilities $\tilde{p}_i:=\sum_{1\leq j\leq n}e_{i,j}$, then it is always possible that all liabilities are repaid. The clearing vector of the system is then given by $p=\tilde{p}$, where $p_i$ shall denote the total amount paid by institution $i$. If, however, $c+\sum_{1\leq j\leq n}e_{j,i} < \tilde{p}_i$, then institution $i$ is insolvent and cannot settle all its liabilities in full. In fact, we assume that $i$ repays as much of its debt as possible and the money available to it is split among its creditors $j$ according to the relative liabilities
\[ \Pi_{i,j} = \begin{cases}e_{i,j}\,\tilde{p}_i \hspace*{-0.04cm}\big.^{-1},&\text{if }\tilde{p}_i>0,\\0,&\text{otherwise.}\end{cases} \]
Note, however, that the value of $i$'s interbank assets further depends on the actual amount that $i$'s debtors are able to clear and there is thus a potential interdependence. Nevertheless, for a feasible clearing vector $p=(p_i)_{1\leq i\leq n}$ it cannot be the case that $p_i<(c_i+\sum_{1\leq j\leq n}\Pi_{j,i}p_j)\wedge\tilde{p}_i$ as $i$ would end up with a positive amount of money although not repaying all of its debt. On the other hand, $i$ will not pay out more than its total outstanding loans $\tilde{p}_i$ and it cannot repay more than it has available. Hence, it cannot be the case that $p_i>(c_i+\sum_{1\leq j\leq n}\Pi_{j,i}p_j)\wedge\tilde{p}_i$. Any adequate clearing vector $p$ thus needs to satisfy
\begin{equation}\label{1:eqn:eisenber:noe:fixed:point}
p = \left( c + \Pi^\top p \right) \wedge \tilde{p}.
\end{equation}
It is then the main result in \cite{Eisenberg2001} that indeed such $p$ exists. Moreover, in general there are a least and a largest solution to \eqref{1:eqn:eisenber:noe:fixed:point} but under mild assumptions on the model parameters those actually coincide and there is a unique way to clear the financial system.

The EN-model therefore describes the final state of the system from an equilibrium perspective and thus loses the sequential notion briefly outlined before. We can, however, translate this idea into an algorithm to determine the largest clearing vector $p^+$ in rounds $k\in\N$. We initially assume that each institution clears all of its liabilities and set $p^{(0)}=\tilde{p}$. Nevertheless, any insolvent institution $i$ can only repay a total amount of
\[ p_i^{(1)}:=c_i+\sum_{1\leq j\leq n}\Pi_{j,i}p_j^{(0)} = c_i + \sum_{1\leq j\leq n}e_{j,i} < \tilde{p_i} = p_i^{(0)}. \]
Thus all institutions in the system have to write off their losses with initially insolvent institutions and the money paid out by them is described by the vector
\[ p^{(2)} := \left( c + \Pi^\top p^{(1)}\right) \wedge \tilde{p}. \]
This in turn may further reduce the money available to already insolvent institutions and also institutions that were solvent before may go bankrupt. Continuing the process, in round $k\geq3$ we thus derive
\[ p^{(k)} = \left(c+\Pi^\top p^{(k-1)}\right) \wedge \tilde{p} \]
and it can be shown that indeed $p^{(k)}\to p^+$ as $k\to\infty$. 

For a given financial system and an according clearing vector we can now determine the losses in the system and identify the institutions that default due to contagion effects. As remarked earlier already, however, it is a strong assumption that the complete network configuration is available. If at all possible (especially for international relations), above calculations have to be performed by some regulating institution after collecting all the necessary data from participating institutions. Moreover, the fixed point equation \eqref{1:eqn:eisenber:noe:fixed:point} is very high-dimensional and as thus intuition about the solution $p$ itself and its robustness to changes of the model parameters is hard to find in general.

Nevertheless, due to its tractability the EN-model has proven to be a popular model. In particular, there have been made many extensions to the original model from \cite{Eisenberg2001} to account for a more realistic setting of the financial networks. In \cite{Elliott2014}, \cite{GLASSERMAN2015383} and \cite{Rogers2013} for instance, bankruptcy costs have been included in the model representing losses in the enforcement process of claims and deductions on asset values when liquidating them. The latter one should not be equated with fire sales, however, as there is no contagion considered in the before listed works. This channel (for more details see Section \ref{1:sec:fire:sales}) has been included in the EN-model by the works \cite{Amini2016}, \cite{Cecchetti2016} and \cite{Cifuentes2005} for example. Moreover, the extensions \cite{Elliott2014} and \cite{Elsinger2009} consider cross-holdings between banks. In the model of \cite{Weber2016}, all the above mentioned effects are consolidated.

\subsection{Random Graph Models}\label{1:ssec:cascade:models}
As mentioned earlier already, it is often unrealistic to assume a completely known network configuration of the financial system. Even if a regulating institution may be able to collect all the data about direct exposures in its area of responsibility, this becomes less reasonable when considering also cross-border relations. Moreover, the precise network configuration may exhibit changes over time maybe even on a daily basis. However, as has been shown in empirical works (see \cite{Cont2013} for instance) global statistics such as the degree distributions in the network are relatively constant over time. A second popular approach to model default contagion in financial networks is thus by means of random graphs, where vertices represent institutions in the system and weighted, directed edges the exposures between them. The underlying probability measure can then be calibrated to observed data and the random samples are typical configurations of the actual present or a future network. It is then one possibility to apply the EN-methodology described in the previous subsection to any random configuration in order to compute the final clearing vector (see \cite{Gandy2017} for example). In general this requires a numerical dealing with the problem to derive statements about the typical magnitude of a cascade and the corresponding stability of a system. Especially for large networks, however, one can employ the powerful methodology of probabilistic limit theorems in random settings and the results derived will hold for all typical realizations of the random network and in particular for the observed configuration. Moreover, as those results are typically given in terms of the global statistics of the system and these are relatively constant over time as remarked before, a strong robustness over time is achieved.

To this end, we change our view of the default contagion process away from the equilibrium perspective and towards a more direct modeling of the default process that resembles the algorithm to obtain the largest clearing vector in the EN-model. That is, starting with a set of initially defaulted institutions distress is transmitted to their direct neighbors in the first round. This may cause new defaults in the system and start a second round of default contagion and so on. Different than in the EN-setting where the actually repaid debt can be described according to the severity of the default (the actual loss incurred), it is necessary here to consider a fixed recovery rate (typically even $0\%$) regardless of the amount of money available to the defaulted institution. This assumption can be justified by the fact that the processing of defaults may take months or even longer and at the time of bankruptcy the actual value of the according interbank assets are highly uncertain. One good example of this reasoning is the insolvency of Lehman Brothers in 2008 which took years to liquidate completely and whose traded recovery-rate amounted to only $8.625\%$ in a bond auction for settling credit default swaps just three weeks after the default \cite{creditfixings}.

The random graphs approach to default contagion was started with \cite{Gai2010} and a well known representative of this line of research is the model in \cite{Cont2016} (also see \cite{Amini2014c}). There the authors chose to describe the financial network by the so-called \emph{configuration model} that takes as input the observed empirical degree distribution (jointly for in- and out-degrees) of some financial network and draws uniformly at random a configuration satisfying this distribution. The results in \cite{Cont2016} then allow for any given initial shock event to compute asymptotically for large networks the size of the cluster of finally defaulted institutions. Similar at a first glance to the EN-setting, the final state of the system is described by a fixed point equation. Despite the large system size, however, this equation is actually uni-variate and thus much more tractable than in the EN-model. Moreover, in \cite{Cont2016} a measure of resilience is derived for the financial system that essentially counts the number of so-called \emph{contagious links}, i.\,e.~exposures that alone are large enough to transmit default from one institution to another. A rather strong consequence of this result is that financial systems can be seen as being resilient to small initial shocks as soon as such contagious links are prohibited by a regulating institution. Put in other words, only local effects are responsible for the spread of contagion. The description of resilience here shows a general advantage of the asymptotic viewpoint in the random graphs line of research: For infinitely large networks it is possible to choose arbitrarily small initial shocks for which the final state of the system can be investigated. In comparison, for finite networks of size $n\in\N$ the least possible positive shock size (expressed as the initial default fraction) is $n^{-1}$. It is thus necessary to choose certain parameters for the initial shock and specify a maximal amplification to define resilience of a financial system. For infinitely large networks the notion of resilience emerges completely natural, letting the initial shock size tend to zero (see Chapters \ref{chap:systemic:risk}, \ref{chap:block:model}, \ref{chap:fire:sales} and \ref{chap:fire:sales:default} for more details).

To describe more realistic network topologies the works \cite{Hurd2016,Hurd2017} consider an assortative version of the configuration model where edges are categorized according to the degrees of their adjacent vertices. This allows for a better description of core/periphery structures where periphery banks are almost exclusively connected to a small set of highly connected core banks.

There is a major point of criticism of the models \cite{Cont2016,Amini2014c,Hurd2016,Hurd2017}, however. To be precise, the very heterogeneous nature of real financial networks cannot be reproduced sufficiently. Many empirical analyses such as \cite{Boss2004} for the Austrian banking network or \cite{Cont2013} for the Brazilian one show that typically observed degree sequences are asymptotically heavy-tailed and have infinite second moment. In this case, however, \cite{Janson2009} shows that with high probability (probability converging to one as the network size diverges) the configuration model produces non-simple graphs, i.\,e.~there occur self-loops or multiple edges of the same direction between the same institutions. That is, if we calibrate the configuration model to a specific observed financial network which is by its nature described by a simple graph (no institution is exposed to itself and if there are several contracts between the same institutions those are aggregated to a single one) still in most cases the resulting network turns out to be non-simple and its characteristics are distorted from the originally observed one. While this problem could in general be solved by conditioning on the random graph being simple, the results from \cite{Cont2016} are not applicable in this case anymore as those are all formulated asymptotically with high probability only.


In \cite{Detering2015a}, the authors thus switch to a different kind of random graph model that by definition always produces simple configurations while at the same time allowing for degree sequences with asymptotically infinite second moment and thus capturing the strong heterogeneity of empirically observed data. The model they chose is a directed version of the well-known Chung-Lu random graph \cite{ChungLu2002} that assigns to each institution a certain vertex-weight describing its tendency to form edges to/from other institutions. Especially in the literature of statistical physics random graph models of this kind are referred to as \emph{fitness models} \cite{Caldarelli2002,Gandy2017,Servedio2004}.

\cite{Detering2015a} is formulated for unweighted edges only, however, and contagion is abstractly defined according to certain individual thresholds for each institution that describe the number of debtors that need to fail in order for the particular institution to fail itself. Calibrating these thresholds is clearly a very challenging task as they depend on the capital of the institution, the amounts of credit issued to other institutions and the order of their default. Nevertheless, it is one of the main results in \cite{Detering2015a} that the mere absence of contagious links does not ensure resilience of a financial system with strongly heterogeneous degrees. Moreover, a resilience criterion extending the one from \cite{Cont2016} is derived.

In this thesis we will follow the random graphs approach to default contagion and in fact \cite{Detering2015a} will be the starting point for two different models in the following. We thus summarize its main findings in Subsection \ref{2:ssec:special:case:threshold:model}.

\section{Fire Sales}\label{1:sec:fire:sales}
While the previously discussed channel of \emph{default contagion} distributes financial distress in a system via direct contractual dependencies, the links between institutions relevant for the channel of \emph{fire sales} are their overlapping asset portfolios. The underlying dynamics are as follows. Consider for simplicity a financial system in which institutions can invest in a single common asset. Some initial shock event may now diminish the capital of at least one institution in the system which reacts to this loss by selling some of its asset shares either due to external regulations such as leverage constraints or internal preferences. If the number of sold shares is large compared to typical trading volumes on the market (especially if multiple institutions start to sell shares at the same time) the surplus on the market will reduce the share price by so-called price impact. This has two consequences for the institutions in the system. First, the selling institutions incur even further losses on the particular trades themselves and second, as asset portfolios are marked-to-market each institution invested in the asset effectively loses money (at least in the short term). These additional losses and a general uncertainty about the situation on the financial markets can now provoke even more institutions to sell even more of their asset shares thus further reducing prices and portfolio values and so on. By this iterated process a propagation of distress through the system is described and the initial losses can be amplified considerably. In particular, even institutions that were spared from the initial triggering event come under pressure due to their asset portfolio's overlap with distressed institutions that they might not even have known of.

Additionally to the above mentioned works \cite{Amini2016,Cecchetti2016,Cifuentes2005,Weber2016} extending the EN-model to account for fire sales there is a variety of different approaches to model fire sales in financial systems. In a sense similar to the cascade models from Subsection \ref{1:ssec:cascade:models}, in \cite{Caccioli2014} the authors describe fire sales as a branching process and they find certain criteria for stability of a network. In \cite{Ibragimov2011} the consequences of portfolio diversification are investigated and it is found that benefits for individual institutions can come at a disadvantage for the system as a whole. Similar considerations are made in \cite{Beale2011}. The work \cite{Wagner2010} shows that institutions are motivated to create heterogeneous portfolios to avoid the risk of joint liquidation. Related to these works is also \cite{Kley2016} for the setting of a reinsurance market with overlapping risky objects. For a continuous time setting the effect of fire sales on the dynamics and correlations of asset prices are investigated in \cite{Cont2013a}. Moreover, extending the work \cite{Khandani2011}, in \cite{Cont2016b} the authors develop a model that can explain volatility spikes and the increase of correlations in times of financial distress. The effects of fire sales have also been the object of interest in a series of empirical papers. \cite{Guo2016} for instance analyzes the topology of the induced network of overlapping asset portfolios and \cite{Braverman2014} uses a network representation to quantify dependencies. The authors of \cite{Cont2017} propose a framework for stress testing systems with regard to fire sales and develop centrality indices for institutions in a network of common asset holdings. Other methods to quantify the dependencies due to asset holdings have been developed for example in \cite{girardi18} by means of the scalar product of portfolio weights or in \cite{Kritzman2011} by the so-termed absorption ratio using a principal component decomposition of asset returns. In \cite{Duarte2013} the authors develop a systemic index of aggregate vulnerability and demonstrate the decreasing stability of the financial system in the years before the global crisis.

\section{Contribution of This Thesis}\label{1:sec:contribution}
We will now give an outlook on the contributions towards the understanding of systemic risk made in the following. The chapters of this thesis are in large parts adopted from the papers \cite{Detering2016}, \cite{Detering2018}, \cite{Detering2018b} and \cite{Detering2018c} and are all devoted to different aspects of default contagion and fire sales. The common thread of the different chapters is that we use random graph methods to model particular empirically observed network characteristics, we analyze asymptotically the effect of shock events on the system and compute the final state of the system at the end of the contagion cascade, we derive criteria for resilience and non-resilience in our models that allow us to understand which network characteristics promote or hinder the spread of distress in the financial system, and we derive sufficient systemic capital requirements that can be prescribed by some regulating institution to all participants in the system \mbox{in order to contain and prevent large cascades.}

\paragraph{Contribution in Chapter \ref{chap:systemic:risk}:} In Chapter \ref{chap:systemic:risk}, for the purpose of analyzing default contagion, we develop a weighted, directed random graph model for financial networks of direct exposures that allows to describe degree sequences with infinite second moment. The model thus extends \cite{Cont2016} and is applicable also to more realistic settings of financial networks.

Using a uni-variate fixed point equation, we then derive asymptotic results for large networks about the final systemic damage caused by some given initial shock event. In particular, we introduce an index of systemic importance for each bank and we consider the final systemic damage to be the total systemic importance of all finally defaulted institutions. This approach is new compared to \cite{Cont2016} and allows for a more relevant assessment of systemic risk than only considering the number of finally defaulted institutions. Further, this approach is in line with and can easily be applied to current regulatory methods where institutions are assigned to certain classes of systemic importance \cite{BaselCommittee2013,Fed2015}.

We then apply our asymptotic results to derive criteria for resilience respectively non-resilience. At this step we make use of the previously mentioned naturally emerging notion of resilience where we let the relative size of the initial shock be arbitrarily small and we particularly call a system non-resilient if nevertheless the final relative systemic damage is lower bounded by some positive constant.

We then go even one step further and apply the resilience criteria in order to obtain explicit formulas for sharp capital requirements for each individual institution. In general (if asymptotically the degree distributions have an infinite second moment), these requirements are more restrictive than the mere absence of contagious links in the system as in \cite{Cont2016}. Instead they can be identified with a sublinear function of the specific in-degree of an institution. In particular, diversification is encouraged by our systemic capital requirements which is for example in line with the results in \cite{Capponi2016}. Moreover, a very appealing feature of our formulas for capital requirements is that those can be computed locally by each institution itself if only a regulating institution with knowledge about the global system statistics publishes two global constants. In particular, it is completely transparent for each institution how the capitals are determined and as those depend on the particular institution's own business relations only, they prevent manipulation by individual institutions either of their own capital requirements or their competitors. As the derived capital requirements are sharp, one can interpret them as an allocation of the total systemic risk (the total necessary capital to secure the system) to each individual institution. This adds to research questions discussed in \cite{Biagini2018b}, \cite{Feinstein2017} and \cite{Hoffmann2017}, but to the best of our knowledge is the first such allocation that does not rely on complete knowledge of the system but can be computed locally.

Finally, we verify by numerical simulations that our asymptotic results and in particular the systemic capital requirements are applicable also to financial systems of reasonable finite size.

\paragraph{Contribution in Chapter \ref{chap:block:model}:} Looking for ways to benefit from portfolio diversification and access to different international markets \cite{Aoki2010,Artis2007,Artis2012,Dell'ariccia2008,Faria2007}, over the last decades institutions have entered more and more cross-border relationships \cite{Chinazzi2013,Degryse2010,Halaj2013,Minoiu2013} thus leading to an increasingly complex global network structure and linking different regional financial systems all over the globe. Moreover, the modern financial landscape is typically described by tiered structures usually referred to as \emph{core/periphery}. Empirical evidence of this fact for several countries is for example given in \cite{Boss2004} for Austria, \cite{Cont2013} for Brazil, \cite{Craig2014} for Germany, \cite{Fricke2015} for Italy, \cite{Veld2014} for the Netherlands, \cite{Langfield2014} for the United Kingdom and \cite{Alves2013} for the European interbank network.

Motivated by this clustered and tiered observed network structures, in Chapter \ref{chap:block:model} we propose a random graph model for default contagion in which each institution is given a certain type (such as country, core/periphery or mixtures thereof) assigning it to a certain subsystem. Our model constitutes a multi-variate extension of \cite{Detering2015a} and assigns to each institution a vector of different vertex-weights describing the different tendencies to develop edges to certain subsystems. By this approach the above explained assortativity can be achieved. Note that for calibration purposes the assignment to a certain subsystem poses an additional challenge but methods for community detection and the identification of core respectively periphery vertices can be used. See for example \cite{Blondel2008,Clauset2004,Copic2009,Fortunato2016,Zhang2014,Zhao2011} respectively \cite{Holme2005,Rombach2017}.

Moreover, in our proposed model we overcome an issue that persisted in the literature so far and regards the distribution of exposures. While certainly exposures between core institutions should be larger than an exposure between a core and a periphery institution intuitively, in the literature (also in Chapter \ref{chap:systemic:risk}) it has so far been assumed that the distribution of such an exposure only depends on the creditor bank. This allowed to decouple the actual exposures from the network configuration. In Chapter \ref{chap:block:model}, we develop new multi-variate tools to tackle this issue directly, further increasing the dimensionality of the problem.

Yet, we achieve to derive analogue results as in \cite{Detering2015a} thus giving an explicit formula for the final systemic damage caused by an initial shock and providing an understanding of network structures promoting or obstructing the propagation of distress by deriving sharp criteria for resilience and non-resilience. Due to the multi-dimensionality of the model parameters we have to deal with a multi-variate fixed point equation but it is worth to note that the problem stays low-dimensional compared to an EN-type problem. Using the criteria for resilience, we then describe a family of sufficient systemic capital requirements securing the system against small initial shocks. Furthermore, we integrate ideas of exposure modeling from Chapter \ref{chap:systemic:risk} into our new model to significantly reduce the dimensionality of the problem while at the same time keeping the advantage of counterparty-dependent exposures.

Finally, we provide applications of our model in particular analyzing the effect of several subsystems on the stability of the global network. Moreover, we demonstrate by a simple example that the possibility to model counterparty-dependent exposures is a necessary feature in models of financial networks in order not to underestimate the impact of contagion.

\paragraph{Contribution in Chapter \ref{chap:fire:sales}:} In Chapter \ref{chap:fire:sales}, we seize on the idea of an asymptotic modeling from the literature on default contagion and develop an asymptotic model for fire sales. The underlying model parameters are for each institution a vector of share holdings of different illiquid assets, its initial capital and an amount of initially incurred losses. Moreover, we choose to consider an abstract function that describes the sales strategy for each institution and depends on the incurred losses relative to an institution's initial capital. Drops of asset values shall be modeled by a certain price impact function. Our setup allows for a large degree of freedom regarding the choice of these functions.

\pagebreak
Using global statistics of the institutions' model parameters we are then able to describe the final state of the system by a multi-variate fixed point equation where each dimension represents one of the considered assets. An arising difficulty at this description, however, is the emergence of discontinuities in the underlying system of functions. In addition to the final systemic damage due to defaults the final state also includes information about the total number of shares of each asset that were sold during the fire sales process and the hence induced total price impact that is an important quantity as every investor in the asset -- even if not participating in the fire sales process -- will effectively  lose money by the devaluation of its portfolio (at least in the short term). This includes in particular private investors, the wider economy and even countries (e.\,g.~the government pension fund of Norway). 

Similar as in the setting of default contagion a notion of resilience is emerging in the limit of large financial systems and we characterize systems as stable or unstable. Moreover, we are again able to compute systemic capital requirements that ensure resilience of the system and by numerical simulations we demonstrate their positive effect for system stability over classical risk management policies according to Basel III \cite{BaselCommittee2011} for instance.

We then employ our new theory to investigate positive and negative effects of portfolio diversification adding to current research in \cite{Beale2011,Frey2018,Ibragimov2011,Wagner2010}.

\paragraph{Contribution in Chapter \ref{chap:fire:sales:default}:} In Section \ref{1:sec:systemic:risk}, we explained that financial distress is propagated through financial systems via several channels of contagion and we already analyzed in detail two of the main drivers -- \emph{default contagion} and \emph{fire sales} -- individually. Clearly the different effects can amplify each other, however, if considered simultaneously. Consider for example a fire sales process and assume that due to the dropped value of its asset portfolio some institution in the system defaults. Then as described before all creditors of this institution need to write off their respective interbank assets and thus incur further losses. This in turn may lead to further asset sales and devaluation of asset portfolios by price impact.

It is thus important to model the interplay of different contagion channels in order to get a better picture of their joint impact on systemic risk. This is the purpose of Chapter \ref{chap:fire:sales:default}. In particular, we combine our models of default contagion and fire sales from Chapters \ref{chap:block:model} and \ref{chap:fire:sales} and consider a simultaneous cascade process for heterogeneous and assortative networks. As before we take an asymptotic approach to the problem and describe the final systemic damage as well as the final number of sold asset shares by a multi-variate fixed point equation where the underlying functions reflect the joint impact of both contagion drivers.

In order to understand which structures of financial systems facilitate or prevent the propagation of financial distress we extend our previous results on resilience and non-resilience and we further derive sufficient systemic capital requirements to secure a system against the combined effects of default contagion and fire sales.

Moreover, we demonstrate in a specific example that in fact the two channels can significantly amplify each other which stresses the importance of integrated models in systemic risk research.

%

\cleardoublepage
\chapter{A Model for Default Contagion in Financial Networks}\label{chap:systemic:risk}

In this chapter, we propose a first model for default contagion in financial networks. Our main motivation is to represent the large degree of heterogeneity that constitutes one of the defining features of real financial networks (see \cite{Boss2004} for Austria or \cite{Cont2013} for Brazil for instance) but could not be described sufficiently by previous works such as \cite{Cont2016}. We build on the network model from \cite{Detering2015a} and represent the financial network by means of a random graph in particular. Compared to \cite{Detering2015a}, however, we augment the underlying probability space such that exposures between institutions as well as capitals of the institutions can be modeled by random variables. We can thus describe exposures of different size in the network which makes our model applicable to reasonable financial networks. Still the underlying random graph allows to construct asymptotic degree sequences with infinite second moment, thus reflecting the networks' strong heterogeneity.

In Section \ref{2:random:graph}, we then describe the default contagion process and we determine the final state of large financial systems hit by some initial shock. At this, in particular, rather than considering the fraction of finally defaulted institutions only (cf.~\cite{Cont2016,Detering2015a}), we introduce an index of systemic importance of each institution and we are thus able to determine the damage to the financial system itself but also to the wider economy or society. In Section \ref{2:sec:resilience} we go one step further and investigate the vulnerability of an a priori unshocked financial system to a small shock. We derive criteria for resilience and apply them to derive sharp capital requirements securing the system. In particular for the relevant case of degree sequences with infinite second moment, these regulations are considerably more restrictive than the ones derived in \cite{Cont2016}. We end this chapter with a series of simulation results in Section \ref{2:simulation:study} demonstrating the applicability of our asymptotic theory also for finite systems of a size typical for the real world. All the proofs are deferred to Section \ref{2:sec:proofs}.

\vspace*{-7pt}
\paragraph{My own contribution:} This chapter is in large parts adopted from \cite{Detering2016} and is thus joint work with Nils Detering, Thilo Meyer-Brandis and Konstantinos Panagiotou. I was significantly involved in the development of all parts of that paper and did most of the editorial work. In particular, I made major contributions to Lemma \ref{2:lem:f:continuous}, Theorem \ref{2:thm:asymp:2}, Proposition \ref{2:prop:convergence:speed}, Theorem \ref{2:thm:cont:res}, Theorem \ref{2:threshold:res}, Theorem \ref{2:thm:functional:nonres}, Theorem \ref{2:thm:Pareto:type}, Proposition \ref{2:prop:robust:capital:requirements} and Theorem \ref{2:cor:threshold:res}. Moreover, all simulations have been designed, implemented and interpreted by myself. Subsection \ref{2:ssec:weights:estimation} is my own work and only contained in an earlier version of \cite{Detering2016}. 

\section{Default Contagion on a Weighted, Directed Random Graph}\label{2:random:graph}
We shall present a stochastic model for a weighted, directed financial network. It will be based on the directed random graph model proposed in \cite{Detering2015a} (see Subsection \ref{2:ssec:special:case:threshold:model}) but complemented by edge weights. 
The main objective will be to assess the damage caused by default contagion asymptotically when the network size grows to infinity.

\subsection{Default Contagion and Systemic Importance}\label{2:ssec:default:contagion:systemic:importance}
We first describe the process of default contagion on a given (deterministic) financial network. If $n\in\N$ is the size of the network, we label the institutions (for simplicity called banks hereafter) by indices $i \in [n]$, where $[n]:=\{1,\dots,n \}$, and interpret them as vertices in a graph. If furthermore $e_{i,j}\in\R_{+,0}$ describes the exposure of bank $j$ to bank $i$, then we draw a directed edge of weight $e_{i,j}$ from $i$ to $j$ in the graph if $e_{i,j}>0$. We do not allow for self-loops or multiple edges between two vertices pointing in the same direction. That is, $e_{i,i}=0$ for all $i\in[n]$ and the network structure is completely determined by the exposure matrix $(e_{i,j})_{i,j\in[n]}$. Moreover, consider for each bank $i\in[n]$ its capital/equity $c_i\in\R_{+,0,\infty}:=\R_{+,0}\cup\{\infty\}$ and a value of \emph{systemic importance} $s_i\in\R_{+,0}$ which measures the potential damage caused by the default of bank $i$ and could, for instance, be computed by a regulating institution according to the indicator-based approach developed by the \textit{Basel Committee on Banking Supervision} in its framework text from 2013 in order to measure  global systemic importance of banks \cite{BaselCommittee2013}. 
The focus is on the impact of a potential default on the global financial system and wider economy. A 
similar approach is pursued by the \textit{Board of Governors of the Federal Reserve System} \cite{Fed2015}. 
Another example of deriving systemic importance values is to use \textit{DebtRank} as introduced in \cite{Battiston2012a}. It focuses on the relative monetary impact of a bank in an interbank network.

We call bank $i$ \emph{solvent} if $c_i>0$ and \emph{insolvent/defaulted} if $c_i=0$ (due to an exogenous shock to the network). The set of \emph{initially defaulted} banks is thus given by \mbox{$\mathcal{D}_0 = \{ i \in [n] \,:\, c_i = 0\}$.} They trigger a default cascade $ \mathcal{D}_0\subseteq  \mathcal{D}_1\subseteq ...$ given by 
\begin{equation}\label{2:eqn:default:cascade}
\mathcal{D}_k = \Bigg\{ i \in [n] \,:\, c_i \le \sum_{j\in \mathcal{D}_{k-1}} e_{j,i} \Bigg\},
\end{equation} 
where in each round $k\geq1$ of the cascade process bank $i$ has to write off its exposures to banks that defaulted in round $k-1$ and goes bankrupt as soon as its total write-offs exceed its initial capital. The chain of default sets clearly stabilizes after at most $n-1$ rounds and we call $\mathcal{D}_n=\mathcal{D}_{n-1}$ the {\em final default cluster} in the network induced by $\mathcal{D}_0$. We could easily introduce a constant recovery rate $R\in[0,1)$ to our model by multiplying exposures $e_{j,i}$ by a factor $1-R$ in \eqref{2:eqn:default:cascade}.

A first approach, that is often pursued in current literature, is to identify the damage caused to the financial network with the fraction $n^{-1}\vert\mathcal{D}_n\vert$. That is, damage is bearable if only few banks default as a result of the external shock event and the thereby started cascade process and it becomes the more threatening the larger the final fraction of defaulted banks $n^{-1}\vert\mathcal{D}_n\vert$ gets. In line with current regulator considerations, however, it is more realistic to consider the more general index of systemic importance of defaulted banks to really measure the damage to the economy. Instead of the size of the final default cluster $\mathcal{D}_n$, in the following we hence consider its total systemic importance $\mathcal{S}_n:=\sum_{i\in\mathcal{D}_n}s_i$ as a measure for the damage caused. Clearly, the special case $\mathcal{S}_n=\vert\mathcal{D}_n\vert$ is covered by setting $s_i=1$ for each $i\in[n]$.

\subsection[A Special Case: the Threshold Model]{A Special Case: the Threshold Model from \cite{Detering2015a}}\label{2:ssec:special:case:threshold:model}
Consider for now the special case that $e_{i,j}\in\{0,1\}$ and $c_i\in\N_{0,\infty}:=\N_0\cup\{\infty\}$. That is, whether or not a bank in the network defaults depends on the number of defaulted debtors (and the bank's individual integer-valued capital $c_i$). This setting (we call it the \emph{threshold model}) was considered in \cite{Detering2015a} and we recall its model assumptions and the main result here.

Instead of a deterministic network structure, we describe the network as a random graph. To this end, (in addition to capital $c_i$ and systemic importance $s_i$) assign to each vertex $i\in[n]$ two deterministic vertex-weights $w_i^-\in\R_{+,0}$ and $w_i^+\in\R_{+,0}$ and define the probability $p_{i,j}$ of a directed edge from vertex $i$ to vertex $j$ being present by
\begin{equation}\label{2:conn:prob}
p_{i,j}=\begin{cases}\min \{1,n^{-1}w^+_i w^-_j \},&i\neq j,\\0,& i=j.\end{cases}
\end{equation}
Further, let $X_{i,j}$ be the indicator function for the event of edge $(i,j)$ sent from vertex $i$ to vertex $j$ being present and assume that these events are independent for all $(i,j)\in[n]^2$. The role of in-weight $w^-_i$ respectively out-weight $w^+_i$ is to determine the tendency of vertex $i\in [n]$ to have incoming respectively outgoing edges. The vertex-weights are deterministic and purely used as a mean to specify the edge probabilities. They should not be confused with the edge-weights $e_{i,j}$. The construction of the random graph via vertex-weights resembles the one in \cite{ChungLu2002} or more general in \cite{Bollobas2007}. Note, however, that our random graph is a directed generalization as in \cite{Detering2015a}.

For each random configuration of the network, we can then consider the cascade process \eqref{2:eqn:default:cascade} to derive the random final default cluster $\mathcal{D}_n$ as well as its random systemic importance $\mathcal{S}_n$. The idea in the following is to let the network grow in a regular fashion (see Assumption \ref{2:ass:regularity}) and to use law-of-large-numbers effects in order to derive a deterministic limit for $n^{-1}\mathcal{S}_n$.

For each network size $n\in\N$ let $\bm{w}^-(n)=(w_1^-(n),\ldots,w_n^-(n))$, $\bm{w}^+(n)=(w_1^+(n),\ldots,w_n^+(n))$, $\bm{s}(n)=(s_1(n),\ldots,s_n(n))$ and $\bm{c}(n)=(c_1(n),\ldots,c_n(n))$ sequences of in-weights, out-weights, systemic importances and capitals of the individual banks. We impose the following regularity conditions:
\begin{assumption}\label{2:ass:regularity}
For each $n\in\N$, denote the joint empirical distribution function of $\bm{w}^-(n)$, $\bm{w}^+(n)$, $\bm{s}(n)$ and $\bm{c}(n)$ by
\[ F_n(x,y,v,l)=n^{-1}\sum_{i\in[n]}\1\{w_i^-(n)\leq x,w_i^+(n)\leq y,s_i(n)\leq v,c_i(n)\leq l\},\hspace{9pt} (x,y,v,l)\in\R_{+,}^3\times\N_{0,\infty}, \]
and let $(W_n^-,W_n^+,S_n,C_n)$ a random vector distributed according to $F_n$. We assume that:
\begin{enumerate}
\item\label{2:ass:regularity:1} \textbf{Convergence in distribution:} There exists a distribution function $F$ on $\R_{+,0}^3\times\N_{0,\infty}$ such that $F(x,y,v,l)=0$ for all $x,y\leq x_0$ and $x_0>0$ small enough, and such that at all points $(x,y,v,l)$ for which $F_l(x,y,v):=F(x,y,v,l)$ is continuous in $(x,y,v)$, it holds $\lim_{n\to\infty}F_n(x,y,v,l)=F(x,y,v,l)$. Denote by $(W^-,W^+,S,C)$ a random vector distributed according to $F$.
\item \textbf{Convergence of average weights and systemic importance:} $W^-$, $W^+$ and $S$ are integrable and $\E[W_n^-]\to\E[W^-]$, $\E[W_n^+]\to\E[W^+]$ as well as $\E[S_n]\to\E[S]$ as $n\to\infty$.
\end{enumerate}
\end{assumption}
This assumption is of a technical nature and concerned with the behavior of the network parameters as the size of the network tends to infinity. For practical purposes one can think of Assumption \ref{2:ass:regularity} ensuring that the limiting network keeps the observed parameter distribution of some real network we want to investigate. In particular, the expected weights are assumed to stay finite. In \cite{Detering2015a} it was derived that Assumption \ref{2:ass:regularity} implies $D_i^-\sim\mathrm{Poi}(w_i^-\E[W^+])$ respectively $D_i^+\sim\mathrm{Poi}(w_i^+\E[W^-])$ in the limit $n\to\infty$, where $D_i^-$ and $D_i^+$ denote the random in- respectively out-degree of vertex $i$ with weights $(w_i^-,w_i^+)$. Conversely, one can show that for an observed network topology, i.\,e.~given in- and out-degrees, maximum likelihood estimators of the in- and out-weights are approximately given by the in- and out-degrees (normalized by some global factor) -- see Subsection \ref{2:ssec:weights:estimation}. That is, morally one can think of the in- respectively out-weight of a vertex as its in- respectively out-degree.

Further, note that we did not assume $W^-$ or $W^+$ to have finite second moment. By the result from \cite{Detering2015a} that the empirical degree distribution for the model above converges weakly to a random vector $(D^-,D^+)$ distributed as $\mathrm{Poi}(W^-\E[W^+],W^+\E[W^-])$, we see that our model is hence capable of modeling networks without a second moment condition on their degree-sequences. In particular, choosing $W^-$ and $W^+$ power law distributed with parameters $\beta^-$ respectively $\beta^+$ results in power law distributions for the degrees $D^-$ and $D^+$ with the very same parameters. This allows to calibrate our model parameters to observed empirical in- and out-degree sequences. As we will see in Subsections \ref{2:ssec:threshold:requirements} and \ref{2:ssec:capital:requirements}, these power law parameters carry the most important information about the network when it comes to determining resilient capital requirements.

Consider now the following heuristics: Let $\zeta\in[0,\E[W^+]]$ denote the total out-weight of finally defaulted banks divided by $n$. Then in the limit $n\to\infty$ for any bank $i\in[n]$ the number of finally defaulted neighbors in the network is given by a random variable $\mathrm{Poi}(w_i^-\zeta)$. Bank $i$ is thus finally defaulted itself if and only if $\mathrm{Poi}(w_i^-\zeta)\geq c_i$. Summing over all banks in the network we therefore derive the following identity:
\[ \E[W^+\psi_C(W^-\zeta)] = \zeta, \]
where
\[ \psi_l(x) := \P\left(\mathrm{Poi}(x)\geq l\right) = \begin{cases}\sum_{j\geq l}e^{-x}x^j/j!,&0\leq l<\infty,\\0,&l=\infty.\end{cases} \]
Moreover, summing up the systemic importance values, the final damage caused by defaulted banks should be given by $\E[S\psi_C(W^-\zeta)]$.

Motivated by these heuristics consider now the function
\[ f(z;(W^-,W^+,C)):=\E\left[W^+\psi_C(W^-z)\right]-z. \]
By the dominated convergence theorem, $f(z;(W^-,W^+,C))$ is continuous and has a smallest root $\hat{z}\in[0,\E[W^+]]$. Furthermore, let
\[ d(z;(W^-,W^+,C)) := \E[W^-W^+\phi_C(W^-z)]-1, \]
the weak derivative of $f$ (see Lemma \ref{2:lem:f:continuous}), where
\[ \phi_l(x) := \P\left(\mathrm{Poi}(x)=l-1\right)\1\{l\geq1\}. \]

A sequence of events $(E_n)_{n\in\N}$ shall hold with high probability (w.\,h.\,p.) if $\P(E_n)\to1$, as $n\to\infty$. The following theorem for the threshold model will be used in the proofs of our main results in this chapter. 

\pagebreak
\begin{theorem}[adapted from {\cite[Theorem 7.2]{Detering2015a}}]\label{2:thm:threshold:model}
Consider a sequence of financial systems satisfying Assumption \ref{2:ass:regularity}. Then the following holds:
\begin{enumerate}
\item For all $\epsilon>0$ with high probability:
\[ n^{-1}\mathcal{S}_n \geq \E\left[S\psi_C(W^-\hat{z})\right] - \epsilon. \]
\item If $d(z)$ is bounded from above by some constant $\kappa<0$ on a neighborhood of $\hat{z}$, then
\[ n^{-1}\mathcal{S}_n \xrightarrow{p} \E\left[S\psi_C(W^-\hat{z})\right],\quad\text{as }n\to\infty. \]
\end{enumerate}
\end{theorem}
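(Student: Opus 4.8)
The plan is to adapt the proof of \cite[Theorem 7.2]{Detering2015a}. The key structural observation is that in the threshold model the cascade \eqref{2:eqn:default:cascade} is driven purely by the random graph, the capitals $c_i$, and the out-weights $w_i^+$ (the latter only through the edge probabilities \eqref{2:conn:prob}), while the systemic importances $s_i$ play no active role in the dynamics: they are inert labels attached to the vertices. One therefore analyses the default cascade exactly as in \cite{Detering2015a} and, upon termination, reads off $\mathcal{S}_n=\sum_{i\in\mathcal{D}_n}s_i$ as just one more additive functional of the final default cluster, in complete parallel with $\lvert\mathcal{D}_n\rvert$. If \cite{Detering2015a} already formulates its limit theorem for general additive functionals $\sum_{i\in\mathcal{D}_n}g_i$, both claims follow by specialising $g_i=s_i$; otherwise one carries $s_i$ through the argument sketched below.

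First I would run the cascade round by round. Set $\Phi(z):=\E[W^+\psi_C(W^-z)]$ and define $z_0:=\E\bigl[W^+\1\{C=0\}\bigr]=\Phi(0)$ and $z_k:=\Phi(z_{k-1})$; note $n^{-1}\sum_{i\in\mathcal{D}_0}w_i^+\to z_0$ by Assumption \ref{2:ass:regularity}. By \eqref{2:conn:prob}, conditionally on the vertex-weights the edge indicators are independent, so for a fixed vertex $i$ the number of its in-neighbours inside a revealed set $\mathcal{A}$ with $n^{-1}\sum_{j\in\mathcal{A}}w_j^+\to z$ is a sum of independent Bernoulli variables with parameters $\min\{1,n^{-1}w_i^-w_j^+\}$, hence converges in distribution to $\mathrm{Poi}(w_i^-z)$. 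Proceeding inductively over $k$: given $n^{-1}\sum_{j\in\mathcal{D}_{k-1}}w_j^+\xrightarrow{p}z_{k-1}$, a vertex $i$ enters $\mathcal{D}_k$ iff it has at least $c_i$ in-neighbours in $\mathcal{D}_{k-1}$, an event of asymptotic probability $\psi_{c_i}(w_i^-z_{k-1})$; summing $w_i^+$ respectively $s_i$ over $i\in[n]$ and combining Assumption \ref{2:ass:regularity} with a second-moment (Chebyshev) argument — valid because for fixed $k$ only bounded-depth, locally tree-like neighbourhoods are explored — yields $n^{-1}\sum_{i\in\mathcal{D}_k}w_i^+\xrightarrow{p}z_k$ and $n^{-1}\sum_{i\in\mathcal{D}_k}s_i\xrightarrow{p}\E\bigl[S\psi_C(W^-z_{k-1})\bigr]$. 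Since $\Phi$ is nondecreasing and $z_0=\Phi(0)\ge 0$, the sequence $(z_k)$ is nondecreasing and bounded by $\E[W^+]$, hence $z_k\uparrow L$ with $\Phi(L)=L$; moreover $L=\hat z$, because by monotonicity $z_k\le\hat z$ for every $k$ (as $\hat z$ is a nonnegative fixed point of $\Phi$), while $L$ is itself a nonnegative fixed point, so $L\ge\hat z$ by minimality of $\hat z$.

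For claim 1 I would then note that, since $\mathcal{D}_k\subseteq\mathcal{D}_n$, the above gives $n^{-1}\mathcal{S}_n\ge n^{-1}\sum_{i\in\mathcal{D}_k}s_i\xrightarrow{p}\E[S\psi_C(W^-z_{k-1})]$. The map $z\mapsto\E[S\psi_C(W^-z)]$ is continuous by dominated convergence (using $S$ integrable and $\psi_C\le 1$) and nondecreasing, so given $\epsilon>0$ one fixes $k$ large enough that $\E[S\psi_C(W^-z_{k-1})]\ge\E[S\psi_C(W^-\hat z)]-\epsilon/2$; then $n^{-1}\mathcal{S}_n\ge\E[S\psi_C(W^-\hat z)]-\epsilon$ with high probability, which is the first assertion.

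Claim 2 requires the matching upper bound, i.e.\ ruling out overshoot: one must show $n^{-1}\sum_{i\in\mathcal{D}_n}w_i^+\xrightarrow{p}\hat z$, from which $n^{-1}\mathcal{S}_n\xrightarrow{p}\E[S\psi_C(W^-\hat z)]$ follows by the same bookkeeping. This is where the hypothesis $d(z)\le\kappa<0$ near $\hat z$ enters: it is precisely the transversality of the root $\hat z$ of $f$ (recall $d$ is the weak derivative of $f$ by Lemma \ref{2:lem:f:continuous}), guaranteeing $f(z)<0$ strictly for $z$ slightly above $\hat z$. The argument runs the exploration as a single sequential process — unveil defaulted vertices one at a time, send a default \emph{mark} along each revealed out-edge, and let a vertex default once it has collected $c_i$ marks — and tracks the normalised accumulated out-weight $Z^{(n)}$ as a function of the number of marks sent. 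One shows, uniformly over a window of length $\Theta(n)$, that $Z^{(n)}$ stays within $o_p(1)$ of the deterministic curve dictated by the fixed-point relation until it first reaches $\hat z$; beyond that point fewer new defaults are created than are consumed (strictly, because $d(\hat z)<0$), so the queue of unveiled defaulted vertices empties within $o(n)$ further steps and the final out-weight equals $\hat z+o_p(1)$. The hard part will be exactly this uniform concentration of the exploration process over the $\Theta(n)$-long window — in contrast with the fixed-$k$, bounded-depth estimate behind claim 1 — which requires Azuma/Wormald-type arguments; but this is established in \cite[Theorem 7.2]{Detering2015a} and needs no modification here, since it concerns only the default/out-weight dynamics and not the passive labels $s_i$.
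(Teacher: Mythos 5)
The paper does not actually prove this statement: it is imported wholesale from \cite[Theorem 7.2]{Detering2015a} (the only ``adaptation'' being the replacement of $\lvert\mathcal{D}_n\rvert$ by $\mathcal{S}_n=\sum_{i\in\mathcal{D}_n}s_i$), so there is no in-paper proof to compare against line by line. The closest in-paper analogue is the multi-type generalization in Chapter \ref{chap:block:model} (Proposition \ref{3:prop:finitary:weights} and Theorem \ref{3:thm:general:weights}), and your proposal matches its architecture where it matters most: for claim 2 you correctly identify that the proof is a sequential exploration tracked by Wormald's differential-equation method, that the stability hypothesis $d\le\kappa<0$ near $\hat z$ is what empties the queue of unexposed defaulted vertices in $o(n)$ steps after the trajectory reaches $\hat z$, and that $s_i$ is a passive label whose partial sums can simply be carried as one extra coordinate of the Wormald system (this is literally what the quantity $s(t)$ and its ODE do in the proof of Proposition \ref{3:prop:finitary:weights}). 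That observation is the entire content of the ``adaptation,'' and you have it right.

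Where you genuinely diverge is the lower bound. The paper's route proves claim 1 with the \emph{same} sequential process (run Wormald up to any time $\tau<\hat\tau$ and use monotonicity), whereas you use a fixed-round iteration $z_k=\Phi(z_{k-1})\uparrow\hat z$ with a bounded-depth, locally-tree-like concentration argument. Your fixed-point bookkeeping ($z_k\le\hat z$ by monotonicity, $L=\hat z$ by minimality and continuity of $\Phi$) is fine, but the concentration step is softer than you suggest on two counts. First, $\mathcal{D}_{k-1}$ is \emph{not} independent of the in-edges of a fixed vertex $i$ for $k\ge3$: whether $i$ defaulted in an earlier round feeds back into the default status of $i$'s out-neighbours, which may in turn be in-neighbours of $i$; making ``locally tree-like'' rigorous here is exactly the work the sequential construction is designed to avoid. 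Second, Assumption \ref{2:ass:regularity} gives only first moments of $W^\pm$, so a Chebyshev bound on $n^{-1}\sum_i w_i^+\1\{i\in\mathcal{D}_k\}$ (or on $n^{-1}\sum_i s_i\1\{i\in\mathcal{D}_k\}$) requires a truncation/finitary-approximation layer of the kind used in Subsection \ref{3:ssec:proof:main:general}; with possibly infinite second moments the depth-$k$ neighbourhoods are not uniformly small and hub vertices create non-negligible correlations. Neither issue is fatal --- both are handled in the cited source and in Chapter \ref{chap:block:model} --- but as written your claim 1 argument has a real gap that the sequential route closes for free, which is presumably why the original proof does not use the round-by-round approach even for the lower bound.
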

The theorem thus allows us to compute the final damage $\mathcal{S}_n$ for $n\to\infty$.

\subsection{The Exposure Model}\label{2:ssec:exposure:model}
On the base of the threshold model from the previous subsection, we will now construct our weighted, directed random graph model for financial systems. At this, we uncouple the occurrence of an edge sent from $i$ to $j$ from the size of its possible edge-weight. That is, we model the occurrence of edges by the random matrix $X=X(n)=(X_{i,j})_{i,j\in[n]}$ from Subsection \ref{2:ssec:special:case:threshold:model} and we assign to each pair $(i,j)\in[n]^2$ with $i\neq j$ a random variable $E_{i,j}>0$ representing $j$'s \emph{possible} exposure to $i$ (set $E_{i,i}=0$ for all $i\in[n]$) such that $E=E(n)=(E_{i,j})_{i,j\in[n]}$ is independent of $X$ (clearly one can construct such a joint probability space). The random exposure of $j$ to $i$ is then given by $e_{i,j}=X_{i,j}E_{i,j}$.

To make the model analytically tractable, we assume that for each bank $j$ the list of possible exposures $E_{1,j},\ldots,E_{j-1,j},E_{j+1,j},\ldots,E_{n,j}$ is an exchangeable sequence of random variables. That is, for each $j\in[n]$ and each permutation $\pi$ of  $[n]\backslash\{j\}$ 
\[ \left(E_{1,j},\ldots,E_{j-1,j},E_{j+1,j},\ldots,E_{n,j}\right) \stackrel{d}{=} \left(E_{\pi(1),j},\ldots,E_{\pi(j-1),j},E_{\pi(j+1),j},\ldots,E_{\pi(n),j}\right). \]
This is equivalent to taking for each bank $j\in[n]$ an arbitrary sequence of random variables $\tilde{E}_{1,j},\ldots,\tilde{E}_{j-1,j},\tilde{E}_{j+1,j},\ldots,\tilde{E}_{n,j}$ and transforming them into a list of exposures $E_{1,j},\ldots,E_{j-1,j}$,\linebreak $E_{j+1,j},\ldots,E_{n,j}$ by $E_{i,j}=\tilde{E}_{\pi(i),j}$ for some random permutation $\pi$ independent of $\{\tilde{E}_{i,j}\}_{i\in[n]\backslash\{j\}}$ and uniformly drawn from the set of all permutations of $[n]\backslash\{j\}$. We remark that the requirement of exchangeable exposures is a typical assumption made in the literature (such as \cite{Cont2016}). Note, however, that in this setting the distribution of the exposure size $E_{i,j}$ only depends on the creditor bank $j$ and not on the debtor bank $i$, which might be a criticizable assumption for example in strongly pronounced core/periphery networks where also the exposures might exhibit stronger heterogeneity. The relaxation of this assumption is technically more demanding and deferred to Chapter \ref{chap:block:model}.

Furthermore, assign to each bank $i\in[n]$ a possibly stochastic capital value $c_i\in\R_{+,0,\infty}$ and a deterministic systemic importance value $s_i\in\R_{+,0}$. Using \eqref{2:eqn:default:cascade} we can then again determine the random final default cluster $\mathcal{D}_n$ and its systemic importance $\mathcal{S}_n$. It is the aim of the following subsection to derive results about convergence and deterministic bounds similar as in Subsection \ref{2:ssec:special:case:threshold:model} for the threshold model.

Table \ref{2:tab:parameters} summarizes all important parameters in the exposure model and compares them to the observed quantities in a financial network.
\begin{table}
\begin{center}
\begin{tabular}{|c|c|}
\hline\textbf{Observed Network} & \textbf{Exposure Model}\\\hline
capital $c_i\in\R_{+,0}$, & capital $c_i\in\mathcal{L}^0(\R_{+,0,\infty})$,\\
systemic importance $s_i\in\R_{+,0}$, & systemic importance $s_i\in\R_{+,0}$,\\
in-degree $d_i^-\in\N_0$, & in-weight $w_i^-\in\R_{+,0}$,\\
out-degree $d_i^+\in\N_0$, & out-weight $w_i^+\in\R_{+,0}$,\\
&  (edge probability $p_{i,j}=\min\{1,n^{-1}w_i^+w_j^-\}\1\{i\neq j\}$)\\
exposure sequence & exchangeable sequence of possible edge weights\\
$(e_{i,j})_{j\in[n]\backslash\{i\}}\subset\R_{+,0}$ & $(E_{i,j})_{j\in[n]\backslash\{i\}}\subset\mathcal{L}^0(\R_+)$\\\hline
\end{tabular}
\caption{Comparison of observed quantities in a financial network and the model parameters in the exposure model}\label{2:tab:parameters}
\end{center}
\end{table}

\subsection{Asymptotic Results for Default Contagion in the Exposure Model}\label{2:ssec:asymptotic:results:exposure:model}
The setting in the exposure model is more complex than in the threshold model since we cannot decide if a bank defaults only based on the number of its neighbors that default: \eqref{2:eqn:default:cascade} asserts that this also depends on the actual exposures between the banks. However, one crucial assumption that we made is that these exposures are exchangeable, so intuitively it should make no difference which neighbors of a given bank default, but just their actual number. 

To formalize this intuitive argument, define for each bank $i\in [n]$ the random threshold value
\begin{equation}\label{2:ex1:perc:thres}
\tau_i(n):= 
\inf \Bigg\{ s \in \{0\}\cup[n-1] \,:\, \sum_{\ell\leq s} E_{\rho_i(\ell),i} \geq c_i \Bigg\},
\quad \text{where}\quad
\rho_i(\ell):=\ell+\1\{\ell\geq i\},
\end{equation}
with the usual convention $\inf\emptyset:=\infty$, that is, $\tau_i$ is allowed to take the value $\infty$ if capital $c_i$ is larger than the sum of all possible exposures. In this case, bank $i$ can never default. The use of the enumeration $\rho_i$ becomes necessary in (\ref{2:ex1:perc:thres}) since we want to spare $i$ in this natural ordering. The value $\tau_i$ then determines the \textit{hypothetical} default threshold of $i$, assuming that $i$'s neighbors default in the order of their natural index given by $\rho_i$ and that all edges $(j,i)$, $1\leq j\leq \rho_i(\tau_i)$, $i\neq j$, are present in the graph. We denote the hypothetical threshold sequence by $
\pmb{\tau}(n)=(\tau_1(n),\ldots,\tau_n(n))$. 
The thresholds are only hypothetical, since not all of the first $\rho_i(\tau_i)$ exposures must be present in the graph and the vertices do usually not default in their natural order. However, we know that the exposures are exchangeable, so all these simplifications should have no effect; it will turn out in the proof of Theorem~\ref{2:thm:asymp:1} that indeed the value $\tau_i$ captures the actual dynamics: the qualitative characteristics of the contagion process in the exposure model are the same as in the threshold model with capital sequence $\pmb{\tau}(n)$.

As an equivalent of Assumption \ref{2:ass:regularity} for the threshold model we need to impose the following regularity conditions:

\begin{assumption}\label{2:vertex:assump}
For each $n\in\N$, denote the random joint empirical distribution function of $\bm{w}^-(n)$, $\bm{w}^+(n)$, $\bm{s}(n)$ and $\pmb{\tau}(n)$ by
\[ G_n(x,y,v,l)=n^{-1}\sum_{i\in[n]}\1\{w_i^-(n)\leq x,w_i^+(n)\leq y,s_i(n)\leq v,\tau_i(n)\leq l\},\hspace{8pt} (x,y,v,l)\in\R_{+,0}^3\times\N_{0,\infty}. \]
Then we assume that:
\begin{enumerate}
\item \textbf{Almost sure convergence in distribution:} There exists a deterministic distribution function $G$ on $\R_{+,0}^3\times\N_{0,\infty}$ such that $G(x,y,v,l)=0$ for all $x,y\leq x_0$ and $x_0>0$ small enough, and such that at all points $(x,y,v,l)$ for which $G_l(x,y,v):=G(x,y,v,l)$ is continuous in $(x,y,v)$, it holds almost surely $\lim_{n\to\infty}G_n(x,y,v,l)=G(x,y,v,l)$. Denote by $(W^-,W^+,S,C)$ a random vector distributed according to $G$.
\pagebreak
\item \textbf{Convergence of average weights and systemic importance:} $W^-$, $W^+$ and $S$ are integrable and $\int_{\R_{+,0}^3\times\N_{0,\infty}} x\,\dd G_n(x,y,v,l) \to \E[W^-]$, $\int_{\R_{+,0}^3\times\N_{0,\infty}} y\,\dd G_n(x,y,v,l) \to \E[W^+]$ as well as $\int_{\R_{+,0}^3\times\N_{0,\infty}} v\,\dd G_n(x,y,v,l) \to \E[S]$ as $n\to\infty$.
\end{enumerate}
\end{assumption}
To ensure that Assumption \ref{2:vertex:assump}~holds, a twofold regularity is needed. Firstly, for a vertex with given in- and out-weight, the distribution of the threshold value must stabilize, even though the number of exposures appearing in the sum in (\ref{2:ex1:perc:thres}) increases. Secondly, a law of large numbers for the empirical distribution of the threshold values has to hold. See Subsection \ref{2:ssec:examples} for general examples of financial systems satisfying Assumption \ref{2:vertex:assump}.

For the remainder of this subsection, we consider a sequence of financial systems denoted as\linebreak $(\bm{w}^-(n),\bm{w}^+(n),\bm{s}(n),E(n),\bm{c}(n))$ and satisfying Assumption \ref{2:vertex:assump}. In particular, we denote by $(W^-,W^+,S,T)$ a random vector distributed according to the limiting distribution $G$ from Assumption \ref{2:vertex:assump}. We assume that the financial systems have experienced an external shock such that a positive fraction of banks have capital zero. In the notation from above this means $\P(T=0)>0$.  
Hence we are in a situation in which a default cascade is about to happen and we are interested in the damage to the financial system and the wider economy, given by $\mathcal{S}_n=\sum_{i\in\mathcal{D}_n}s_i$ the total systemic importance of defaulted banks after the contagion process. This damage $\mathcal{S}_n$ is a random number for each $n\in\N$. As the network size gets large, however, we show that $n^{-1}\mathcal{S}_n$ converges to a deterministic value which we can determine exactly. To this end, we denote
\[ f(z;(W^-,W^+,T)):=\E\left[W^+\psi_T(W^-z)\right]-z, \]
where as in Subsection \ref{2:ssec:special:case:threshold:model}
\[ \psi_l(x) := \P\left(\mathrm{Poi}(x)\geq l\right) = \begin{cases}\sum_{j\geq l}e^{-x}x^j/j!,&0\leq l<\infty,\\0,&l=\infty,\end{cases} \]
and
\[ d(z;(W^-,W^+,T)) := \E[W^-W^+\phi_T(W^-z)]-1, \]
where again as in Subsection \ref{2:ssec:special:case:threshold:model}
\[ \phi_l(x) := \P\left(\mathrm{Poi}(x)=l-1\right)\1\{l\geq1\}. \]
Whenever $(W^-,W^+,T)$ is clear from the context, we abbreviate $f(z;(W^-,W^+,T))$ by $f(z)$ and $d(z;(W^-,W^+,T))$ by $d(z)$. The following lemma summarizes some properties of $f$ and $d$. 
\begin{lemma}\label{2:lem:f:continuous}
The function $f(z)$ is continuous on $[0,\infty)$ and admits the following representation:
\begin{equation}\label{2:eqn:integral:representation}
f(z) = \E[W^+\1\{T=0\}] + \int_0^z d(\xi)\dd\xi
\end{equation}
If $\P(T=0)>0$, then $f(z)$ has a strictly positive root $\hat{z}$.
\end{lemma}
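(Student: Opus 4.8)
The plan is to handle the three assertions in turn. First I would record the elementary pointwise facts about the functions entering $f$ and $d$. For finite $l$ one may write $\psi_l(x)=1-\sum_{j=0}^{l-1}e^{-x}x^j/j!$, which is $C^\infty$ in $x$, while $\psi_\infty\equiv 0$; differentiating term by term and telescoping gives $\psi_l'(x)=e^{-x}x^{l-1}/(l-1)!=\phi_l(x)$ for $l\ge 1$, and trivially $\psi_l'=\phi_l$ for $l\in\{0,\infty\}$. I would also note $0\le\psi_l,\phi_l\le 1$ and $\psi_l(0)=\1\{l=0\}$, so in particular $f(0)=\E[W^+\1\{T=0\}]$. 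Continuity of $f$ on $[0,\infty)$ is then immediate: for each realization of $(W^-,W^+,T)$ the map $z\mapsto W^+\psi_T(W^-z)$ is continuous and dominated by $W^+$, which is integrable by Assumption \ref{2:vertex:assump}, so dominated convergence yields continuity of $z\mapsto\E[W^+\psi_T(W^-z)]$ and hence of $f$.

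For the integral representation \eqref{2:eqn:integral:representation} the idea is to apply the fundamental theorem of calculus pathwise: for fixed $z\ge 0$ and fixed $(W^-,W^+,T)$ (the case $W^-=0$ being trivial), the chain rule gives $\psi_T(W^-z)=\psi_T(0)+\int_0^z W^-\phi_T(W^-\xi)\,\dd\xi$. Multiplying by $W^+\ge 0$, taking expectations, and swapping $\E$ with $\int_0^z$ leads to $\E[W^+\psi_T(W^-z)]=\E[W^+\1\{T=0\}]+\int_0^z\E[W^+W^-\phi_T(W^-\xi)]\,\dd\xi$; subtracting $z$ and writing $\E[W^+W^-\phi_T(W^-\xi)]=d(\xi)+1$ then gives exactly the claimed formula. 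The interchange of $\E$ and the $\xi$-integral is the one place where care is needed, since the crude bound on the integrand involves $W^-W^+$, which need not be integrable under Assumption \ref{2:vertex:assump}; I would sidestep this by noting that the integrand $(\omega,\xi)\mapsto W^+W^-\phi_T(W^-\xi)$ is nonnegative, so Tonelli applies unconditionally, and that the left-hand side is bounded by $\E[W^+]<\infty$, which forces all the quantities to be finite and, as a by-product, gives local integrability of $d$ (so that \eqref{2:eqn:integral:representation} is meaningful). I expect this to be the only genuinely delicate step.

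Finally, for the existence of a strictly positive root: Assumption \ref{2:vertex:assump} keeps the vertex-weights bounded away from zero, say $W^+\ge x_0>0$ almost surely, so when $\P(T=0)>0$ we obtain $f(0)=\E[W^+\1\{T=0\}]\ge x_0\,\P(T=0)>0$. On the other hand, since $\psi_T\le 1$ and $0<\E[W^+]<\infty$, one has $f(\E[W^+])=\E\big[W^+\psi_T(W^-\E[W^+])\big]-\E[W^+]\le 0$. Continuity of $f$ together with the intermediate value theorem then produces a root $\hat z\in(0,\E[W^+]]$, which is strictly positive, completing the argument.
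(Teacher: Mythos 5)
Your proposal is correct and follows essentially the same route as the paper: continuity by dominated convergence using integrability of $W^+$, the representation \eqref{2:eqn:integral:representation} via the pathwise fundamental theorem of calculus combined with Fubini, and the positive root from $f(0)=\E[W^+\1\{T=0\}]>0$ together with $f(z)\le\E[W^+]-z$ and the intermediate value theorem. Your explicit appeal to Tonelli (nonnegativity of the integrand) to justify the interchange is a slightly more careful version of the paper's bare "Fubini", but it is the same argument.
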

In particular, $d(z)$ is the weak derivative of $f(z)$ and if $d(z)$ is continuous on some interval $I\subset[0,\infty)$, then $f(z)$ is continuously differentiable on $I$ with derivative $d(z)$.

We derive the following result about $\mathcal{S}_n$, the damage caused to the system. 
It resembles Theorem \ref{2:thm:threshold:model} for the threshold model and indeed in the proof 
we make use of this result. However, due to the hypothetical nature of the threshold sequence $\pmb{\tau}(n)$ this application is not straight-forward and requires considerable effort.

\pagebreak
\begin{theorem}\label{2:thm:asymp:1}
Under Assumption \ref{2:vertex:assump}, suppose $\P(T=0)>0$ and let $\hat{z}$ be the smallest positive root of $f(z
)$. If the weak derivative $d(z)$ of $f(z)$ is bounded from above by some constant $\kappa<0$ on a neighborhood of $\hat{z}$, then
\[ n^{-1}\mathcal{S}_n \xrightarrow{p} \E [S \psi_T(W^-\hat{z}) ], \quad \text{ as } n\rightarrow \infty. \]
\end{theorem}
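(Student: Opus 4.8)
The plan is to reduce the exposure model to the threshold model of Subsection~\ref{2:ssec:special:case:threshold:model} run with the hypothetical threshold sequence $\pmb{\tau}(n)$ from \eqref{2:ex1:perc:thres}, and then to invoke Theorem~\ref{2:thm:threshold:model}(2). The heuristic behind the reduction is exactly the one indicated after \eqref{2:ex1:perc:thres}: when bank $i$ is reached by the cascade, the family $(E_{j,i})_{j\neq i}$ is exchangeable, and the set of in-neighbours of $i$ that have defaulted so far is, to leading order, independent of this family (it is governed by capitals and exposures attached to \emph{other} creditors, together with the graph structure around $i$). Hence the cumulative write-off $\sum_{j\in\mathcal{D}_{k-1}}X_{j,i}E_{j,i}$ in \eqref{2:eqn:default:cascade} first reaches $c_i$ essentially when the \emph{number} of defaulted in-neighbours of $i$ reaches $\tau_i$, which is precisely the threshold-model dynamics with capital sequence $\pmb{\tau}(n)$.

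Making this precise is the bulk of the work and the step I expect to be the main obstacle. One first sets up an exploration that reveals edges into and out of a bank, and the exposures into a bank, only on demand; by the representation recalled after \eqref{2:ex1:perc:thres} the exposures into $i$ may be thought of as presented in uniformly random order, so that $\tau_i$ is, in distribution, the index at which the corresponding random-order partial sum of possible exposures first exceeds $c_i$. Two issues then have to be controlled. First, $\tau_i$ is \emph{hypothetical}: \eqref{2:ex1:perc:thres} pretends that all of the first $\rho_i(\tau_i)$ potential in-edges of $i$ are present and that $i$'s neighbours default in increasing index order, neither of which holds along the actual cascade. Second, there is a feedback dependence between ``which in-neighbours of $i$ have defaulted'' and the exposures into $i$, created by cycles through $i$. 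Both are addressed using that the Chung--Lu-type graph \eqref{2:conn:prob} with integrable limiting weights is locally tree-like, so that such cycles are asymptotically negligible, together with exchangeability to reshuffle the default order; concretely one sandwiches the exposure-model cascade between two threshold-model cascades whose capital sequences differ from $\pmb{\tau}(n)$ on only $o(n)$ vertices, so that by continuity of the limiting functionals (and of $\hat z$ in the threshold distribution) both bounds share the same asymptotics.

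Granting the reduction, the conclusion is immediate. By Assumption~\ref{2:vertex:assump} the sequence $(\bm{w}^-(n),\bm{w}^+(n),\bm{s}(n),\pmb{\tau}(n))$ satisfies Assumption~\ref{2:ass:regularity} almost surely, with limiting distribution $G$ and marginal law $(W^-,W^+,S,T)$; thus the functions produced by Theorem~\ref{2:thm:threshold:model} coincide with $f(\cdot\,;(W^-,W^+,T))$ and $d(\cdot\,;(W^-,W^+,T))$, so $\hat{z}$ is the same smallest positive root and, since $\P(T=0)>0$, Lemma~\ref{2:lem:f:continuous} gives $\hat{z}>0$; moreover the hypothesis ``$d(z)\le\kappa<0$ on a neighbourhood of $\hat{z}$'' is exactly the hypothesis of Theorem~\ref{2:thm:threshold:model}(2). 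Applying that theorem to the threshold model with capital sequence $\pmb{\tau}(n)$ yields $n^{-1}\mathcal{S}_n\xrightarrow{p}\E[S\psi_T(W^-\hat{z})]$ for that model, and by the coupling of the previous paragraph the induced values of $\mathcal{S}_n$ differ by $o(n)$ in probability (using $\E[S_n]\to\E[S]<\infty$ to bound the systemic importance of the $o(n)$ exceptional vertices). Hence the same limit holds for $\mathcal{S}_n$ in the exposure model, which is the assertion.
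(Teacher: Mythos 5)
Your overall strategy --- reduce to the threshold model run with the hypothetical thresholds $\pmb{\tau}(n)$ and then invoke Theorem~\ref{2:thm:threshold:model} --- is the paper's strategy, and your final paragraph (conditioning on $\{\tau_i\}_{i\in[n]}$ so that Assumption~\ref{2:vertex:assump} supplies the regularity needed for Theorem~\ref{2:thm:threshold:model}, whose hypotheses then match those of the present theorem) is exactly how the paper concludes. The gap is in the reduction itself, which you correctly identify as the bulk of the work but then only gesture at. The paper does \emph{not} need a sandwich between two perturbed threshold cascades, an $o(n)$ exceptional set, or any local tree-likeness of the Chung--Lu graph: the reduction is exact in distribution. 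The idea you are missing is to decouple the exposure labels from the vertex labels. One runs the cascade sequentially (exposing one defaulted vertex at a time) and assigns to the $k$-th edge that is actually revealed into vertex $j$ the weight $E_{\rho_j(k),j}$, i.e.\ one attaches exposures to the \emph{order of exposition} rather than to the natural indices of the sending vertices. Since for each $j$ the list $\{E_{i,j}\}_{i\in[n]\backslash\{j\}}$ is exchangeable and independent of the edge indicators, the re-indexed graph has exactly the same law as the original one, and in it bank $j$ defaults precisely at the $\tau_j$-th default among those in-neighbours whose edges to $j$ are present. Both obstacles you list then evaporate: the ``hypothetical'' ordering in \eqref{2:ex1:perc:thres} becomes the actual ordering by construction, and there is no feedback to control, because the identification with the threshold model holds pathwise for the re-indexed (equidistributed) process, not merely to leading order.

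As written, your substitute argument is not viable. The mismatch between the index order assumed in \eqref{2:ex1:perc:thres} and the realized default order affects essentially every vertex, not an $o(n)$ fraction, so it cannot be absorbed into capital sequences that ``differ from $\pmb{\tau}(n)$ on only $o(n)$ vertices''; moreover you never specify how the two sandwiching threshold sequences would be constructed, nor why they would satisfy the regularity required to feed them into Theorem~\ref{2:thm:threshold:model}. Tree-likeness controls dependence created by short cycles but does not resolve the reordering problem, and the paper's argument never invokes it. Once the sandwich is replaced by the exact exchangeability-based re-indexing, the remainder of your write-up goes through as stated.
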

Two remarks are in order: First, if $f(z)$ is continuously differentiable on a neighborhood of $\hat{z}$ with $f'(\hat{z})<0$ (i.\,e.~$\hat{z}$ stable), then Theorem \ref{2:thm:asymp:1} is applicable. This is a standard assumption in current literature. In \cite{Cont2016}, for instance, the authors assume degree sequences of finite second moment. In this case, it is straight forward to show that $f(z)$ is continuously differentiable. Secondly, without the assumption of stableness, $n^{-1}\mathcal{S}_n$ does not converge to a deterministic number in general (see \cite{Janson2012} for a comparable result in a much simpler setting). However, in the following theorem we are still able to state asymptotic bounds rather than an exact limiting value. We believe that the derived bounds are sharp in the sense that they cannot be improved without further assumptions. Proving this, however, is beyond the scope of this work.

\begin{theorem}\label{2:thm:asymp:2}
Under Assumption \ref{2:vertex:assump}, suppose $\P(T=0)>0$ and let $\hat{z}$ be the smallest positive root of $f(z)$. Further, let $z^*$ be the smallest value of $z>0$ at which $f(z)$ crosses zero,
\[ z^* := \inf\left\{z>0\,:\,f(z)<0\right\}. \]
Then the following holds:
\begin{enumerate}
\item\label{2:thm:asymp:2:1} For all $\epsilon>0$, with high probability $n^{-1}\mathcal{S}_n \geq  \E [S \psi_T(W^-\hat{z}) ]-\epsilon$.
\item\label{2:thm:asymp:2:2} If further $d(z)$ is continuous on some neighborhood of $z^*$, then for all $\epsilon>0$ with high probability $n^{-1}\mathcal{S}_n \leq  \E [S \psi_T(W^-z^*) ]+\epsilon$.
In particular, if $\hat{z}=z^*$, then
\[ n^{-1}\mathcal{S}_n \xrightarrow{p} \E [S \psi_T(W^-\hat{z}) ], \quad \text{ as } n\rightarrow \infty. \]
\end{enumerate}
\end{theorem}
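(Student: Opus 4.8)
The overall approach is to reduce the exposure model to the threshold model of Subsection \ref{2:ssec:special:case:threshold:model} via the hypothetical threshold sequence $\pmb{\tau}(n)$, and then to invoke Theorem \ref{2:thm:threshold:model}. Conditionally on the realization of $E(n)$ — which by Assumption \ref{2:vertex:assump} a.s.\ makes the empirical distributions $G_n$ converge to $G$ — the final default cluster $\mathcal D_n$ of the exposure model has the same law as the final default cluster of a threshold model with vertex-weights $(\bm{w}^-(n),\bm{w}^+(n))$, systemic importances $\bm{s}(n)$ and integer capitals $\pmb{\tau}(n)$; this threshold model satisfies Assumption \ref{2:ass:regularity} with limiting distribution $G$ and limit vector $(W^-,W^+,S,T)$, and its associated functions $f,d$ are exactly those in the statement. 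This reduction, together with the passage from the merely hypothetical thresholds to the true cascade dynamics, is precisely what is carried out in the proof of Theorem \ref{2:thm:asymp:1}, and I would reuse it, along with the bookkeeping needed to transfer the conditional statements back to unconditional ``with high probability'' statements. Granting this, part \ref{2:thm:asymp:2:1} is immediate: it is exactly part~1 of Theorem \ref{2:thm:threshold:model}, whose relevant root is $\hat z$, which is strictly positive by Lemma \ref{2:lem:f:continuous} since $\P(T=0)>0$.

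For part \ref{2:thm:asymp:2:2} the idea is to squeeze $\mathcal S_n$ from above by the damage of a slightly \emph{more} shocked but \emph{stable} threshold model, to which part~2 of Theorem \ref{2:thm:threshold:model} does apply. Fix $\epsilon>0$. Since $z^*=\inf\{z>0:f(z)<0\}$, there is $z'>z^*$, arbitrarily close to $z^*$, with $f(z')<0$. Perturb the threshold model by additionally defaulting a uniformly chosen random subset of $\lfloor\eta n\rfloor$ banks (i.e.\ resetting their thresholds to $0$) for a small $\eta>0$; the perturbed system again satisfies Assumption \ref{2:ass:regularity} a.s., and its limit function becomes $\tilde f_\eta(z)=(1-\eta)f(z)+\eta(\E[W^+]-z)$ with weak derivative $\tilde d_\eta(z)=(1-\eta)d(z)-\eta$. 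Because $f\geq 0$ on $[0,z^*)$ and $\E[W^+]\geq z^*$ (as $f(\E[W^+])\leq 0$), we have $\tilde f_\eta>0$ on $[0,z^*)$, while $\tilde f_\eta(z')<0$ once $\eta$ is small enough; hence the smallest positive root $\hat z_\eta$ of $\tilde f_\eta$ lies in $[z^*,z']$, and by first choosing $z'$ close to $z^*$ and then $\eta$ small we make $\hat z_\eta$ arbitrarily close to $z^*$.

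It remains to see that $\hat z_\eta$ is a \emph{stable} root, and this is exactly where the hypothesis that $d$ is continuous near $z^*$ is used: then $f$ is continuously differentiable near $z^*$ with $f'=d$, and from $f\geq 0$ on $[0,z^*)$, $f(z^*)=0$, and $f$ negative arbitrarily close above $z^*$ one gets $d(z^*)=f'(z^*)\leq 0$; by continuity $d\leq \eta/2$ on a neighbourhood of $z^*$, so there $\tilde d_\eta=(1-\eta)d-\eta\leq -\eta/2<0$, which for small $\eta$ gives that $\tilde d_\eta$ is bounded above by a negative constant on a neighbourhood of $\hat z_\eta$. Part~2 of Theorem \ref{2:thm:threshold:model} then yields $n^{-1}\mathcal S_n^{\mathrm{perturbed}}\xrightarrow{p}\eta\,\E[S]+(1-\eta)\,\E[S\psi_T(W^-\hat z_\eta)]$ (using $\psi_0\equiv 1$ and that the forced set carries the ambient marginal law). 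Since the perturbed cascade starts from a larger default set, $\mathcal S_n\leq \mathcal S_n^{\mathrm{perturbed}}$ by monotonicity, and since $z\mapsto\E[S\psi_T(W^-z)]$ is continuous by dominated convergence and nondecreasing, taking $z'$ and then $\eta$ small enough gives, with high probability, $n^{-1}\mathcal S_n\leq \E[S\psi_T(W^-z^*)]+\epsilon$. Finally, if $\hat z=z^*$, then parts \ref{2:thm:asymp:2:1} and \ref{2:thm:asymp:2:2} squeeze $n^{-1}\mathcal S_n$ between $\E[S\psi_T(W^-\hat z)]\mp\epsilon$, giving the convergence.

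The step I expect to be the main obstacle is the reduction itself: proving rigorously that, despite the ``hypothetical'' nature of $\pmb{\tau}(n)$ (neighbours are assumed to fail in their natural order and all relevant edges to be present), the exposure-model cascade is genuinely equivalent to the threshold-model cascade driven by $\pmb{\tau}(n)$, and transferring the conditional high-probability statements — conditional on the a.s.\ event of Assumption \ref{2:vertex:assump} — to unconditional ones uniformly in $n$. This is the ``considerable effort'' already needed for Theorem \ref{2:thm:asymp:1}. Relative to that, the genuinely new content here — the perturbation argument for the upper bound — is comparatively soft, its only delicate point being the verification that the perturbed root is stable, for which the continuity of $d$ near $z^*$ (and not merely of $f$) is exactly what is required.
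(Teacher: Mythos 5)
Your overall strategy coincides with the paper's: reduce to the threshold model via $\pmb{\tau}(n)$ (reusing the proof of Theorem \ref{2:thm:asymp:1}), get Part \ref{2:thm:asymp:2:1} directly from Part 1 of Theorem \ref{2:thm:threshold:model}, and for Part \ref{2:thm:asymp:2:2} dominate the system by an additionally shocked one whose first positive root sits just above $z^*$ and is stable. Your additional shock $\tilde f_\eta(z)=(1-\eta)f(z)+\eta(\E[W^+]-z)$ is exactly the paper's $f_p$, and your location estimate $\hat z_\eta\in(z^*,z']$ is correct.

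The gap is in the stability verification, precisely in the case $d(z^*)=0$. You argue: by continuity, $d\leq\eta/2$ on a neighbourhood of $z^*$, hence $\tilde d_\eta\leq-\eta/2$ there, and ``for small $\eta$'' this neighbourhood contains $\hat z_\eta$. But the neighbourhood on which $d\leq\eta/2$ depends on $\eta$ and shrinks as $\eta\to0$, while $\hat z_\eta$ need not approach $z^*$ at a comparable rate. Concretely, from $\tilde f_\eta(\hat z_\eta)=0$ one gets $f(\hat z_\eta)=-\tfrac{\eta}{1-\eta}(\E[W^+]-\hat z_\eta)\asymp-\eta$; if $f(z^*+\epsilon)=O(\epsilon^2)$ near $z^*$ (which is typical when $d(z^*)=0$), then $\hat z_\eta-z^*\gtrsim\sqrt{\eta}$, whereas the radius of the set $\{d\leq\eta/2\}$ around $z^*$ can be of order $\eta$. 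Moreover $d$ may take positive values at points arbitrarily close above $z^*$ even though $f$ dips below zero there, so you cannot conclude that $\tilde d_\eta$ is bounded above by a negative constant near $\hat z_\eta$; you only get $\tilde d_\eta(\hat z_\eta)\leq0$, which is not enough to invoke Part 2 of Theorem \ref{2:thm:threshold:model}. There is also a quantifier clash: making $\tilde f_\eta(z')<0$ forces $\eta\lesssim|f(z')|$, so you cannot first fix $\eta$ (to fix the neighbourhood) and then pick $z'$ inside it.

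The paper closes exactly this hole with its second case: it sets $\delta(\tilde\epsilon):=-\inf_{0<\epsilon\leq\tilde\epsilon}\epsilon^{-1}f(z^*+\epsilon)$, chooses $\tilde\epsilon$ so that $\delta(\epsilon)<\delta(\tilde\epsilon)$ for all $\epsilon<\tilde\epsilon$, deduces $d(z^*+\tilde\epsilon)\leq-\delta(\tilde\epsilon)<0$ at that \emph{specific} point, and then tunes the shock size $p(\tilde\epsilon)$ so that $z^*+\tilde\epsilon$ is \emph{exactly} the first positive root of the shocked system. Your argument is complete in the case $d(z^*)<0$ (there a fixed neighbourhood with $d\leq d(z^*)/2$ works), but for $d(z^*)=0$ you need an argument of this more careful type; as written, the claim that the perturbed root is stable does not follow.
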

Note that under the assumptions of Theorem \ref{2:thm:asymp:2} it holds that $f(0)>0$. Together with continuity of $f$ from Lemma \ref{2:lem:f:continuous} this implies that $z^*\geq\hat{z}$. In general, however, it is possible that $z^*>\hat{z}$, for example if $f$ has a local minimum at $\hat{z}$.

\subsection{Examples for Financial Systems Satisfying Assumption \ref{2:vertex:assump}}\label{2:ssec:examples}
In this section, we want to demonstrate the wide applicability of our model. In Example \ref{2:example:1}, we describe a financial system where banks are sorted into buckets according to their weight and capitals as well as exposures have the same distributions for all banks within each bucket.

\begin{example}\label{2:example:1}
Let $({\bf w^-}(n), {\bf w^+}(n), {\bf s}(n))$ be a triple consisting of in-weight, out-weight and systemic importance sequences such that the empirical distribution
\[ \tilde{F}_n(x,y,v)=n^{-1}\sum_{i\in[n]}\1\left\{w_i^-(n)\leq x,w_i^+(n)\leq y, s_i\leq v\right\},\quad(x,y,v)\in\R_{+,0}^3, \]
converges to some distribution function $\tilde{F}$. To be consistent with Assumption \ref{2:vertex:assump}, we require $\tilde{F}(x,y,v)=0$ for all $x,y\leq x_0$ and some $x_0>0$.  Further, assume $\lim_{n\to\infty}\E[(W_n^-,W_n^+,S_n)]=\E[(W^-,W^+,S)]$, where $(W_n^-,W_n^+,S_n)\sim\tilde{F}_n$ and $(W^-,W^+,S)\sim\tilde{F}$. Choose some partition of $[0,\infty)^3$ into countably many sets $D_k$, $k\in \N$, and denote \mbox{$\mathcal{W}_k:=\{i\in[n]\,:\, (w_i^-,w_i^+,s_i)\in D_k\}$.} Let the distributions of $E_{j,i}$, $j\in[n]$, and $c_i$ be equal across vertices $i \in \mathcal{W}_k$ and assume them all to be independent. From this assumption it follows that for $n\geq l\in\N_0$ the 
distribution of $Y_i^l:=\1\{\tau_i\leq l\}$ only depends on the category $\mathcal{W}_k\ni i$. Therefore, denote $q_k^l:=\P(Y_i^l=1)$ for $i\in\mathcal{W}_k$.

We show that Assumption~\ref{2:vertex:assump} is satisfied. For this let $(x,y,v)\in \mathbb{R}_{+,0}^3$ a point of continuity of $\tilde{F}$ and define
\[ \mathcal{W}^{(x,y,v)}:=\{i \in [n] \,:\, w^-_i \leq x, w^+_i \leq y, s_i\leq v \}. \]
The cardinality of $\mathcal{W}^{(x,y,v)}\cap\mathcal{W}_k$ stabilizes as a fraction of $n$ for all $k$ by the regularity of the weight sequences. Since clearly $Y^l_i\in\mathcal{L}^1$, this yields that 
\begin{align*}
G (x,y,v,l) := \lim_{n\rightarrow \infty }n^{-1}\sum_{i \in \mathcal{W}^{(x,y,v)}} \E[ Y_i^l] &= 
\lim_{n\rightarrow \infty }n^{-1}\sum_{k\in\N}\abs{\mathcal{W}^{(x,y,v)}\cap\mathcal{W}_k}q_k^l
\end{align*}
exists. By the strong law of large numbers applied to each category, it follows that almost surely
\[ \lim_{n\rightarrow \infty} G_n(x,y,v,l)=\lim_{n\rightarrow \infty }n^{-1}\sum_{i \in \mathcal{W}^{(x,y,v)}} Y_i^l=G (x,y,v,l),\quad(x,y,v,l)\in\R_{+,0}^3\times\N_0. \]
Note that, although dropped from the notation, the random variables $\{Y_i^l\}_{i \in \mathcal{W}^{(x,y,v)}}$ depend on $n$. Therefore, we need a strong law of large numbers for independent triangular arrays here. Since $Y_i^l\in\mathcal{L}^4$, the standard proofs of the strong law of large numbers using the generalized Chebyshev's inequality and the Borel-Cantelli Lemma carry over to this case.

Finally, complete $G(x,y,v,l)$ also for points $(x,y,v)\in\R_{+,0}^3$ which are not continuity points of $\tilde{F}$ simply by taking limits from above.
\end{example}
As already observed in a similar setting in \cite{Cont2016}, the independence of the exposure random variables 
can be weakened.
\begin{example}\label{2:example:2}
Similar as above, assume that vertices are partitioned into the $K$ classes $\mathcal{W}_1,\dots,\mathcal{W}_K$  with vertices with the same marginal distributions of the capitals and exposures. The sets may depend on the network size $n$ but we shall assume that $\lim_{n\rightarrow \infty} n^{-1} \abs{\mathcal{W}_{k}} =:\lambda (k)$, $k\in[K]$, i.\,e.~the fraction of vertices of a given class stabilizes. 
For each class $k \in [K]$ we are given generating sequences $\{c_l^k\}_{l\in\N}$ and $\{E_l^k\}_{l\in\N}$ of random variables in $\R_{+,0}$ resp.~$\R_+$. Further, we assume that $\{c_l^k\}_{l\in\N}$ and $\{E_l^k\}_{l\in\N}$ are infinite exchangeable systems and independent of each other for every $k\in[K]$ 
(see for example \cite{Aldous1985} for the definition of exchangeability and infinite exchangeable systems). For each network size $n$, assign now to every vertex $i\in\mathcal{W}_k$ a capital from $\{c_l^k\}_{l\in\N}$ and $n-1$ exposures from $\{E_l^k\}_{l\in\N}$ according to any deterministic rule such that no capital or exposure is used more than once. 
Define then the threshold value as in (\ref{2:ex1:perc:thres}) and for a fixed $m\in\N$ and vertex $i$ in class $k\in [K]$ the indicator random variable $Y^m_{k,i}:=\1\{\tau_i = m\}$, that determines whether vertex $i$ has threshold value $m$. Observe that for $n\geq m+1 $ every vertex $i\in [n]$ has more than $m$ exposures and the distribution of $Y^m_{k,i}$ is thus independent of $n$.
Let $\beta_k (1),\dots , \beta_k (\abs{\mathcal{W}_k} )$ the indices of the vertices in $\mathcal{W}_k$. By construction then  
\[ \mathcal{L} ( Y^m_{k,\beta_k (1)}, \dots ,  Y^m_{k,\beta_k (\abs{\mathcal{W}_k} } ) = \mathcal{L} ( Y^m_{k,\sigma_k (\beta_k (1))}, \dots ,  Y^m_{k,\sigma_k ( \beta_k (\abs{\mathcal{W}_k} ) }) \]
for all $\sigma_k\in \Sigma (\mathcal{W}_k)$, that is, for each $k\in [K]$, the random variables $\{Y^m_{k,i}\}$ build an exchangeable system. 
Since for fixed $n$ the sequence $\{Y^m_{k,i} \}_{i\in \mathcal{W}_k}$ is just the restriction to a finite subset of variables of an infinite exchangeable system for $\abs{\mathcal{W}_k} \rightarrow \infty$ it converges in law to an infinite exchangeable system. Let $\Delta_k$ be its directing measure. 
This implies that the system of random variables $(Y^m_{k,i})_{k\in [K],i\in \mathcal{W}_k}$ forms a multi-exchangeable system (see \cite{Graham2008} for definition).
Define the empirical measure by
\[ \Lambda^m_k := \abs{\mathcal{W}_k}^{-1} \sum_{i=1}^{\abs{\mathcal{W}_k}} \delta_{Y^m_{k,\beta_k (i)}} \]
for each $k \in [K]$. By \cite[Theorem 2]{Graham2008} convergence in distribution of $\{Y^m_{k,i} \}_{i\in \mathcal{W}_k}$ implies convergence in distribution of the empirical measure sequence $(\Lambda_k)_{k\in [K]}$, without any assumptions on the dependency structure across classes. Since the above considerations apply for all $m\in \mathbb{N}$, convergence in distribution of the empirical measure sequence 
\[ \Lambda_k := n^{-1} \sum_{i \in [n]} \delta_{\tau_i} \]
follows for all $k \in [K]$. By the Skorohod Coupling Theorem \cite[Theorem 4.30]{Kallenberg2001}, there exists a probability space with random elements $\{ \tilde{\Lambda}_k \}_{k \in [K]}$ distributed as $\{ \Lambda_k \}_{k \in [K]}$ such that $\{ \tilde{\Lambda}_k \}_{k \in [K]}$ converges almost surely as required.
\end{example}

\section{Resilient Networks and Systemic Capital Requirements}\label{2:sec:resilience}
In the previous section, we quantified the default propagation in financial networks after an external shock to banks' capitals, i.\,e.~in a network with initially defaulted banks. It is of interest, however, to be able to determine how systemically risky a network is prior to a shock event. That is, for a sequence of financial systems $(\bm{w}^-(n),\bm{w}^+(n),\bm{s}(n),E(n),\bm{c}(n))$ satisfying Assumption \ref{2:vertex:assump} with $\P(T>0)=1$ we want to observe today's network topology and exposures and, keeping them unchanged, apply some small random shock to the capitals only ex post. A \textit{resilient}, systemically unrisky network should only experience minor damage by this whereas in \textit{non-resilient}, systemically risky networks even a small shock can cause huge harm to the whole system. An advantage over static models such as the Eisenberg-Noe model \cite{Eisenberg2001} is that we can assess stability already for an unshocked system. Further, this section will show that whether a financial network is judged resilient or non-resilient only depends on the distributions of $W^-$, $W^+$ and $T$. These have been shown to be relatively stable over time even if locally the network might change noticeably.
\subsection{Resilience Criteria for Unshocked Networks}
In order to incorporate such small random shocks into our model, we introduce a sequence $\bm{m}(n)=(m_1(n),\ldots,m_n(n))$ of binary marks $m_i\in\{0,1\}$ to $(\bm{w}^-(n),\bm{w}^+(n),\bm{s}(n),E(n),\bm{c}(n))$, where $m_i=0$ means that bank $i$ defaults ex post due to some shock event and hence loses all its capital to start the cascade process, i.\,e.~the new capital is given by $c_im_i$ for each bank $i\in[n]$. Otherwise, the capital distribution stays the same. 
We extend Assumption \ref{2:vertex:assump} such that there exists a distribution $\overline{G}$ \mbox{and the new empirical distribution}
\[ \overline{G}_n(x,y,v,l,k) = n^{-1}\sum_{i\in[n]}\1\{w_i^-(n)\leq x,w_i^+(n)\leq y, s_i\leq v, \tau_i(n)\leq l, m_i(n)\leq k\} \]
converges almost surely at all continuity points $(x,y,v)$ of $\overline{G}_{l,k}(x,y,v):=\overline{G}(x,y,v,l,k)$ and denote by $(W^-,W^+,S,T,M)$ a random vector distributed according to 
$\overline{G}$. We assume that $\P(T=0)=0<\P(M=0)$ such that indeed $M$ causes ex post defaults in an unshocked system.

We want to consider a financial system as being non-resilient to initial shocks if even very small shocks $M$ can cause significant damage $\mathcal{S}_n^M$, measured by the total systemic importance of the defaulted banks $\mathcal{D}_n^M$ at the end of the contagion process triggered by $M$. Mathematically this is expressed as follows:

\begin{definition}\label{2:def:non:resilience}
A financial system is said to be \emph{non-resilient} if there exists a constant $\Delta>0$ such that for each ex post default $M$ with $\P(M=0)>0$ it holds that
\[ n^{-1}\mathcal{S}_n^M \geq \Delta\quad\text{w.\,h.\,p.} \]
\end{definition}
The following theorem states a sufficient criterion for a system to be non-resilient. 
\begin{theorem}[Non-resilience Criterion]\label{2:prop:nonres}
Under Assumption \ref{2:vertex:assump} suppose that $\P(T=0)=0$ and that there exists $z_0>0$ such that
\begin{equation}\label{2:condition:nonres}
f(z)>0,\quad\text{for all }0<z<z_0.
\end{equation}
Then it holds for all $M$ with $\P(M=0)>0$ that
\[ n^{-1}\mathcal{S}_n^M \geq \E\left[S\psi_T(W^-z_0)\right] \quad\text{w.\,h.\,p.} \]
In particular, if $\E[S\1\{T<\infty\}]>0$, then the system is non-resilient.
\end{theorem}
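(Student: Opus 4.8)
The plan is to reduce the statement to the first part of Theorem~\ref{2:thm:asymp:2} applied to the \emph{shocked} system obtained from the marks $\bm{m}(n)$, exploiting \eqref{2:condition:nonres} to push the relevant root of the fixed-point function beyond $z_0$. First I would note that replacing the capital $c_i$ by $c_im_i$ turns the hypothetical threshold $\tau_i$ into $\tau_i^M:=\tau_i\,\1\{m_i=1\}$ (a bank with $m_i=0$ now has threshold $0$ and belongs to the initial default set), and that, from the assumed almost sure convergence of $\overline{G}_n$ together with the convergence of averages, the system $(\bm{w}^-(n),\bm{w}^+(n),\bm{s}(n),E(n),\bm{c}(n)\bm{m}(n))$ again satisfies Assumption~\ref{2:vertex:assump}, now with limiting vector $(W^-,W^+,S,T^M)$ where $T^M=T\,\1\{M=1\}$. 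Since $\P(T=0)=0$ and $\P(M=0)>0$, we get $\P(T^M=0)=\P(M=0)>0$, so by Lemma~\ref{2:lem:f:continuous} the associated function $f^M(z)=\E[W^+\psi_{T^M}(W^-z)]-z$ has a smallest positive root $\hat{z}^M$.

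Next I would compare $f^M$ with $f$. Using $\psi_0\equiv1$ and $0\le\psi_T\le1$, one has $f^M(z)=\E[W^+\psi_T(W^-z)\1\{M=1\}]+\E[W^+\1\{M=0\}]-z\ge\E[W^+\psi_T(W^-z)]-z=f(z)$ for all $z\ge0$, so \eqref{2:condition:nonres} forces $f^M(z)>0$ on $(0,z_0)$, and since $\hat{z}^M$ is a positive root of $f^M$, necessarily $\hat{z}^M\ge z_0$. Applying the first part of Theorem~\ref{2:thm:asymp:2} to the shocked system then gives, for every $\epsilon>0$, that $n^{-1}\mathcal{S}_n^M\ge\E[S\psi_{T^M}(W^-\hat{z}^M)]-\epsilon$ w.\,h.\,p. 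Since $\psi_\ell(x)$ is non-increasing in $\ell$, $T^M\le T$ yields $\psi_{T^M}\ge\psi_T$; since $\psi_\ell(x)$ is non-decreasing in $x$ and $\hat{z}^M\ge z_0$, we get $\psi_T(W^-\hat{z}^M)\ge\psi_T(W^-z_0)$; and with $S\ge0$ these combine to $\E[S\psi_{T^M}(W^-\hat{z}^M)]\ge\E[S\psi_T(W^-z_0)]$, which is exactly the asserted bound (up to the arbitrarily small $\epsilon$ carried by the w.\,h.\,p.\ statement of Theorem~\ref{2:thm:asymp:2}).

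For the final non-resilience claim I would note that $\E[S\psi_T(W^-z_0)]$ does not depend on $M$ and that, by Assumption~\ref{2:vertex:assump}, $W^-\ge x_0>0$ almost surely; hence $\psi_T(W^-z_0)>0$ on $\{T<\infty\}$ and therefore $\E[S\psi_T(W^-z_0)]\ge\E[S\psi_T(W^-z_0)\1\{T<\infty\}]>0$ whenever $\E[S\1\{T<\infty\}]>0$. Taking $\Delta$ to be, say, half of this value verifies Definition~\ref{2:def:non:resilience} uniformly over all admissible ex post defaults $M$.

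The main obstacle is the first step: rigorously checking that the shocked system satisfies Assumption~\ref{2:vertex:assump}, that is, almost sure convergence of the joint empirical distribution of the weights, systemic importances and the \emph{modified} thresholds $\tau_i^M$, together with convergence of the corresponding averages --- this requires passing the a.s.\ weak convergence of $\overline{G}_n$ through the map $(l,k)\mapsto lk$ on the (discrete) threshold and mark coordinates. Everything afterwards is soft: the inequality $f^M\ge f$ is elementary from $\psi_0\equiv1$ and $\psi_T\le1$, and the monotonicity of $\psi_\ell(\cdot)$ in both of its arguments is immediate. One subtlety worth spelling out is that, since $M$ may be correlated with $(W^-,W^+,S,T)$, the indicators $\1\{M=1\}$ and $\1\{M=0\}$ must be kept inside the expectations in the comparison step rather than being factored out.
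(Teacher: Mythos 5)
Your argument is correct and is essentially the route the paper itself indicates (the paper omits the details and refers to Part 1 of Theorem~\ref{2:thm:asymp:2} together with the analogue of \cite[Theorem 7.3]{Detering2015a}): pass to the shocked system with thresholds $T M$, use $\psi_{TM}\ge\psi_T$ to get $f^M\ge f$ and hence $\hat z^M\ge z_0$, then apply the lower bound of Theorem~\ref{2:thm:asymp:2} and the monotonicity of $\psi_\ell(x)$ in both arguments. The two points you flag — verifying Assumption~\ref{2:vertex:assump} for the thresholds $\tau_i m_i$ via the extended convergence of $\overline G_n$, and absorbing the $\epsilon$ (e.g.\ by noting $f^M(z_0)>f(z_0)\ge 0$, so the inequality is strict where it matters) — are exactly the routine steps the paper suppresses.
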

The proof of Theorem~\ref{2:prop:nonres} follows from Part \ref{2:thm:asymp:2:1}.~of Theorem \ref{2:thm:asymp:2} and arguments analogue to ones used in \cite[Theorem 7.3]{Detering2015a} and is thus omitted here.

We can interpret Theorem \ref{2:prop:nonres} as follows: If a financial network satisfies condition (\ref{2:condition:nonres}), then no matter how small the fraction of banks which are driven to bankruptcy by an external shock event, after the cascade process of defaults always a damage larger than the constant $\E\left[S\psi_T(W^-z_0)\right]$ is caused to the system. In the reasonable case that $\E[S\1\{T<\infty\}]>0$, this lower bound for the damage is strictly positive and the system is hence non-resilient according to Definition \ref{2:def:non:resilience}. In particular, by choosing $s_i=1$ for all $i\in[n]$ and hence $S\equiv 1$, we derive that the final default fraction $n^{-1}\vert\mathcal{D}_n^M\vert$ is lower bounded by the constant $\E\left[\psi_T(W^-z_0)\right]$, which is positive (unless $\P(T=\infty)=1$).

\begin{figure}[t]
    \hfill\subfigure[]{\includegraphics[width=0.4\textwidth]{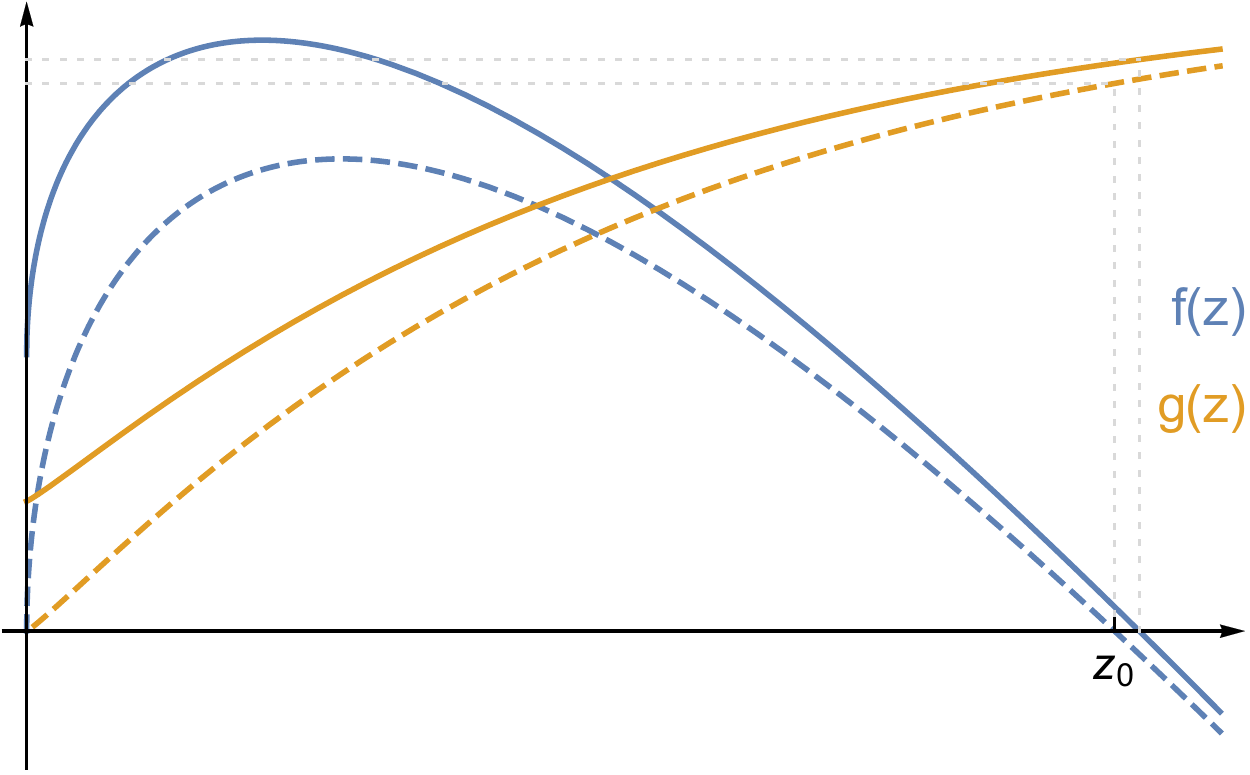}\label{2:fig:condition:nonres}}
    \hfill\subfigure[]{\includegraphics[width=0.4\textwidth]{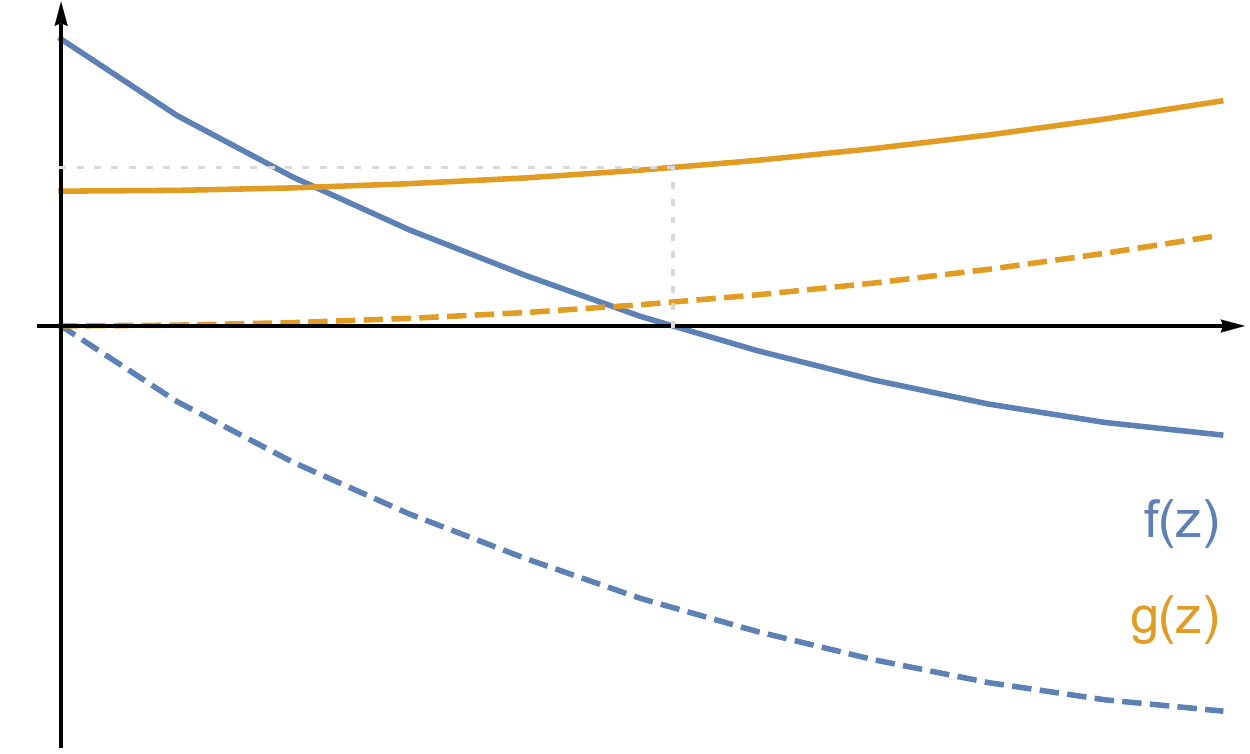}\label{2:fig:condition:res}}\hfill
\caption{Examples of functions $f(z)=\E[W^+\psi_T(W^-z)]-z$ (blue) satisfying conditions (\ref{2:condition:nonres}) (a) respectively (\ref{2:condition:res:2}) (b). Orange: the function $g(z)=\E[S\psi_T(W^-z)]$. Dashed: the unshocked functions. Solid: the shocked functions.}\label{2:fig:conditions:nonres:res}\hfill
\vspace{-8mm}
\end{figure}

Condition (\ref{2:condition:nonres}) is an assumption on $f$ which is illustrated in Figure \ref{2:fig:condition:nonres}. Whereas for $\P(M=0)=0$ the first non-negative root of the function is zero, any howsoever small increase in $\P(M=0)$, and hence upwards shift of $f(z)$, makes the first root jump above $z_0$ and causes default of a set of size larger than $n\E[\psi_T(W^-z_0)]$ and systemic importance larger than $n\E[S\psi_T(W^-z_0)]$.

If on the other hand function $f(z;(W^-,W^+,T))$ is such behaved that the first positive root $\hat{z}^M$ of $f(z;(W^-,W^+,TM))$ tends to zero as $\P(M=0)$ becomes smaller, one can expect that also the final default cluster $\mathcal{D}_n^M$ and its systemic importance $\mathcal{S}_n^M$ vanish and the system can hence be regarded as \textit{resilient} to small shocks. See Figure \ref{2:fig:condition:res} for an exemplary illustration. This intuition is formalized in the following definition, theorems and proposition. 
\begin{definition}\label{2:def:resilience}
A financial system is said to be \emph{resilient} if for each $\epsilon>0$ there exists $\delta>0$ such that
\[ n^{-1}\mathcal{S}_n^M < \epsilon\quad\text{w.\,h.\,p.~for all }M\text{ such that }\P(M=0)<\delta. \]
\end{definition}
In words this means that the final damage to the system $\mathcal{S}_n^M$ can be controlled by the initial default fraction $\P(M=0)$. Theorem \ref{2:prop:res} is then the analogue of \cite[Theorem 7.4]{Detering2015a} transferred to our exposure model.

\begin{theorem}[Resilience Criterion]\label{2:prop:res}
Under Assumption \ref{2:vertex:assump} suppose $\P(T=0)=0$ and that there exists $z_0>0$ such that
\begin{equation}\label{2:condition:res:2}
d(z)<0,\quad\text{for all }0<z<z_0.
\end{equation}
Then for any sequence of ex post defaults $\{M_i\}_{i\in\N}$ with $\lim_{i\to\infty}\P(M_i=0)=0$, it follows that for any $\epsilon>0$, there exists $i_\epsilon$ such that
\[ n^{-1}\mathcal{S}_n^{M_i} \leq \epsilon \quad \text{w.\,h.\,p. for all } i\geq i_\epsilon. \]
In particular, the system is resilient.
\end{theorem}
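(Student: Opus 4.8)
The plan is to deduce the claim from Theorem~\ref{2:thm:asymp:2}, applied to the sequence of \emph{shocked} financial systems, exploiting that turning on a small ex post default $M$ perturbs the governing functions $f$ and $d$ only by a controllable amount. For each $i$ let $f_{M_i}$ and $d_{M_i}$ be the functions from Subsection~\ref{2:ssec:asymptotic:results:exposure:model} built with the shocked threshold $TM_i$ in place of $T$, and put $\pi_i:=\E[W^+\1\{M_i=0\}]$. Since $\psi_0\equiv1$ and $\phi_0\equiv0$, splitting each expectation over $\{M_i=0\}$ and $\{M_i=1\}$ gives, for all $z\ge0$,
\[ f_{M_i}(z)\le\pi_i+f(z),\qquad d_{M_i}(z)\le d(z),\qquad f_{M_i}(0)=\pi_i, \]
the last equality using $\P(T=0)=0$. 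Because $W^+$ and $S$ are integrable and $\P(M_i=0)\to0$, absolute continuity of the integral yields $\pi_i\to0$ and $\E[S\1\{M_i=0\}]\to0$.

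First I would locate the smallest positive root of $f_{M_i}$. By Lemma~\ref{2:lem:f:continuous} for the shocked system and $\P(T=0)=0$ one has $f_{M_i}(z)=\pi_i+\int_0^z d_{M_i}(\xi)\dd\xi$; condition~\eqref{2:condition:res:2} together with $d_{M_i}\le d$ then makes $f_{M_i}$ continuous and strictly decreasing on $[0,z_0)$, with $f_{M_i}(0)=\pi_i$, and, since $\P(T=0)=0$ gives $f(0)=0$, it likewise forces $f(z)<0$ on $(0,z_0)$. Fixing any $z_1\in(0,z_0)$, for all $i$ large enough that $\pi_i<-f(z_1)$ we obtain $f_{M_i}(z_1)\le\pi_i+f(z_1)<0$, so $f_{M_i}$ changes sign inside $(0,z_1)$; strict monotonicity makes this sign change the unique zero in $[0,z_0)$, so the smallest positive root $\hat z^{M_i}$ coincides with the first crossing point $z^*_{M_i}$ of Theorem~\ref{2:thm:asymp:2} and satisfies $\hat z^{M_i}<z_1$. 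As $z_1$ was arbitrary, $\hat z^{M_i}=z^*_{M_i}\to0$ as $i\to\infty$.

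Next I would invoke Part~\ref{2:thm:asymp:2:2} of Theorem~\ref{2:thm:asymp:2} for the shocked system: for any prescribed tolerance it gives, w.\,h.\,p., $n^{-1}\mathcal{S}_n^{M_i}\le\E[S\psi_{TM_i}(W^-z^*_{M_i})]+\epsilon/2$. Splitting over $M_i$ and using that $\psi_l$ is nondecreasing in its argument with $z^*_{M_i}\le z_1$,
\[ \E[S\psi_{TM_i}(W^-z^*_{M_i})]\le\E[S\1\{M_i=0\}]+\E[S\psi_T(W^-z_1)], \]
where the first term tends to $0$ with $i$ and the second tends to $\E[S\1\{T=0\}]=0$ as $z_1\downarrow0$ by dominated convergence (domination by the integrable $S$, using $\P(T=0)=0$). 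Hence, given $\epsilon>0$, first pick $z_1\in(0,z_0)$ with $\E[S\psi_T(W^-z_1)]<\epsilon/4$, then $i_\epsilon$ so that $\E[S\1\{M_i=0\}]<\epsilon/4$ and $\pi_i<-f(z_1)$ for $i\ge i_\epsilon$; this yields $n^{-1}\mathcal{S}_n^{M_i}\le\epsilon$ w.\,h.\,p.\ for $i\ge i_\epsilon$, and resilience in the sense of Definition~\ref{2:def:resilience} follows since $\epsilon$ is arbitrary.

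The step I expect to be the main obstacle is verifying the regularity hypothesis of Part~\ref{2:thm:asymp:2:2} of Theorem~\ref{2:thm:asymp:2} for the shocked systems, namely continuity of $d_{M_i}$ on a neighborhood of $z^*_{M_i}$: since $z^*_{M_i}\to0$ this neighborhood creeps toward the origin, so one needs an integrable bound for $z\mapsto W^-W^+\1\{M_i=1\}\phi_T(W^-z)$ that is uniform on the relevant small (but, for fixed $i$, fixed) interval around $z^*_{M_i}$, and the crude estimate $\phi_l\le1$ is too weak when $W^-W^+$ fails to be integrable. I would produce such a bound by splitting off the atypically large values of $T$ and applying a Poisson point-probability estimate such as $\P(\mathrm{Poi}(\mu)=k)\le C\mu^{-1/2}$ on the remainder, together with Assumption~\ref{2:vertex:assump}; should this prove delicate, the whole reduction can instead be routed through the threshold model with the hypothetical shocked capital sequence $\pmb{\tau}(n)\bm{m}(n)$, exactly as in the proof of Theorem~\ref{2:thm:asymp:1}, and combined with the threshold-model resilience statement \cite[Theorem~7.4]{Detering2015a}, of which the present theorem is the exposure-model analogue. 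Everything else---the monotonicity of $f_{M_i}$, the intermediate value theorem, and the dominated-convergence limits---is routine.
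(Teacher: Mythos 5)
Your first half is right and is what any proof needs: the inequalities $f_{M_i}\le \pi_i+f$, $d_{M_i}\le d$ and $f_{M_i}(0)=\pi_i\to 0$ all hold, and combined with the integral representation from Lemma \ref{2:lem:f:continuous} they force the first positive root of the shocked system below any fixed $z_1\in(0,z_0)$ once $i$ is large, so $\hat z^{M_i}=z^*_{M_i}\to 0$. The gap sits exactly where you suspect it. Part \ref{2:thm:asymp:2:2}.~of Theorem \ref{2:thm:asymp:2} requires $d_{M_i}$ to be continuous on a neighborhood of $z^*_{M_i}$, and the hypothesis $d(z)<0$ on $(0,z_0)$ carries no continuity information at all (contrast Theorem \ref{2:thm:cont:res}, where continuity of $d$ is an explicit extra assumption). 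Your proposed patch does not close this: the bound $\P(\mathrm{Poi}(\mu)=k)\le C\mu^{-1/2}$ dominates $W^-W^+\1\{M_i=1\}\phi_T(W^-z)$ only by $Cz^{-1/2}(W^-)^{1/2}W^+$, and Assumption \ref{2:vertex:assump} guarantees only first moments of $W^\pm$; for the comonotone Pareto weights with exponents close to $2$ that are the main object of this chapter, $\E[(W^-)^{1/2}W^+]=\infty$, so there is no integrable dominating function and continuity of $d_{M_i}$ cannot be extracted this way.

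The theorem is nonetheless true, and there are two ways to finish. Your fallback is the paper's actual (omitted) argument: reduce to the threshold model via the hypothetical shocked thresholds exactly as in the proof of Theorem \ref{2:thm:asymp:1} and quote the threshold-model resilience result \cite[Theorem 7.4]{Detering2015a}, of which this theorem is declared the analogue. If you prefer to stay inside the exposure model, the correct repair is not a continuity argument but a further independent Bernoulli shock $M_p$ with $\P(M_p=0)=p$: for the doubly shocked system one computes $d_{i,p}(z)=(1-p)d_{M_i}(z)-p\le -p$ for all $z\in(0,z_0)$, so the hypothesis of Theorem \ref{2:thm:asymp:1} (weak derivative bounded above by a negative constant near the first root) holds automatically as soon as $\hat z_{i,p}<z_0$, which your monotonicity argument supplies for $i$ large and $p$ small. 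Theorem \ref{2:thm:asymp:1} together with the monotone coupling $\mathcal{S}_n^{M_i}\preceq\mathcal{S}_n^{(i,p)}$ then bounds $n^{-1}\mathcal{S}_n^{M_i}$ w.\,h.\,p.~by $\E[S\1\{M_i=0\}]+p\,\E[S]+\E[S\psi_T(W^-\hat z_{i,p})]+o(1)$, and sending $p\to0$, $i\to\infty$ is handled by exactly the dominated-convergence step you already wrote.
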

Theorem \ref{2:prop:res} states that the systemic importance of all finally defaulted banks tends to zero as the initial default fraction tends to zero, which is exactly our definition of resilience. However, it makes no statement about the rate of convergence. If we assume not only that $d(z) < 0$ for $z$ small enough but even $\limsup_{z\to0+}d(z) < 0$, then we derive the following result concerning convergence speed. 

\begin{proposition}\label{2:prop:convergence:speed}
Under Assumption \ref{2:vertex:assump} suppose $\P(T=0)=0$,
\[ \kappa:=\limsup_{z\to0+}d(z)<0\quad\text{and}\quad \kappa_S:=\limsup_{z\to0+}\E\left[W^-S\phi_T(W^-z)\right]<\infty. \]
Then for any sequence $\{M_i\}_{i\in\N}$ with $\lim_{i\to\infty}\P(M_i=0)=0$, it follows that w.\,h.\,p.
\[ n^{-1}\mathcal{S}_n^{M_i} \leq \E[S\1_{\{M_i=0\}}] - \kappa^{-1}\kappa_S\E[W^+\1_{\{M_i=0\}}] + o(\E[W^+\1_{\{M_i=0\}}]). \]
If $f(z)$ and $\E[S\psi_T(W^-z)]$ are continuously differentiable from the right at $z=0$ with derivatives $\kappa<0$ and $\kappa_S<\infty$, then we derive
\[ n^{-1}\mathcal{S}_n^{M_i} \xrightarrow{p} \E[S\1_{\{M_i=0\}}] - \kappa^{-1}\kappa_S\E[W^+\1_{\{M_i=0\}}] + o(\E[W^+\1_{\{M_i=0\}}]). \]
\end{proposition}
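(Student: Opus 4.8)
The plan is to realise the $M$--shocked system as another instance of the exposure--model framework and then apply Theorem \ref{2:thm:asymp:1} to it, tracking how the relevant root and limiting damage depend on $\P(M=0)$. Shocking by $M$ turns the threshold sequence into $\tau_i^{M}(n)=\tau_i(n)$ on $\{m_i=1\}$ and $\tau_i^{M}(n)=0$ on $\{m_i=0\}$, so the extended version of Assumption \ref{2:vertex:assump} shows that the shocked system again satisfies Assumption \ref{2:vertex:assump}, now with limiting threshold variable $T^M$ equal to $T$ on $\{M=1\}$ and to $0$ on $\{M=0\}$; in particular $\P(T^M=0)=\P(M=0)>0$. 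Since $\psi_0\equiv 1$ and $\phi_0\equiv 0$, the functions attached to the shocked system are
\[ f^M(z):=f(z;(W^-,W^+,T^M))=f(z)+\E[W^+\1\{M=0\}]-\E[W^+\psi_T(W^-z)\1\{M=0\}] \]
and $d^M(z):=d(z;(W^-,W^+,T^M))=d(z)-\E[W^-W^+\phi_T(W^-z)\1\{M=0\}]\le d(z)$. Write $\eta_M:=\E[W^+\1\{M=0\}]$; integrability of $W^+$ gives $\eta_M\to0$ whenever $\P(M=0)\to0$.

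Because $\kappa=\limsup_{z\to0+}d(z)<0$ there is a fixed $z_1>0$ with $d(z)\le\kappa/2<0$, hence also $d^M(z)\le\kappa/2<0$, for all $z\in(0,z_1)$. By Lemma \ref{2:lem:f:continuous} applied to the shocked system, $f^M$ is continuous, $f^M(0)=\eta_M$, and $f^M(z)=\eta_M+\int_0^z d^M(\xi)\,\dd\xi\le\eta_M+(\kappa/2)z$ on $(0,z_1)$; therefore the smallest positive root $\hat z^M$ of $f^M$ (which exists by Lemma \ref{2:lem:f:continuous} since $\P(T^M=0)>0$) satisfies $\hat z^M\le 2\eta_M/|\kappa|<z_1$ once $\eta_M$ is small enough. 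Thus $d^M$ is bounded above by $\kappa/2<0$ on a neighbourhood of $\hat z^M$, and Theorem \ref{2:thm:asymp:1} applied to the shocked system gives, for every such $M$,
\[ n^{-1}\mathcal{S}_n^{M}\xrightarrow{p}\E[S\psi_{T^M}(W^-\hat z^M)]=\E[S\1\{M=0\}]+\E[S\psi_T(W^-\hat z^M)\1\{M=1\}],\qquad n\to\infty. \]

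It remains to locate $\hat z^M$ and expand the limit in $\eta_M$. Fix a small $\varepsilon>0$ and shrink $z_1$ so that $d(z)<\kappa+\varepsilon<0$ and $\E[W^-S\phi_T(W^-z)]<\kappa_S+\varepsilon$ on $(0,z_1)$. Then $f^M(z)\le(\kappa+\varepsilon)z+\eta_M$ forces $\hat z^M\le\eta_M/|\kappa+\varepsilon|$, and using $\psi_T(W^-z)\1\{M=1\}=\1\{M=1\}\int_0^z W^-\phi_T(W^-\xi)\,\dd\xi$ (the identity $\psi_l(x)=\int_0^x\phi_l(t)\,\dd t$ for $l\ge1$, together with $\P(T=0)=0$) and Tonelli,
\[ \E[S\psi_T(W^-\hat z^M)\1\{M=1\}]=\int_0^{\hat z^M}\E[SW^-\phi_T(W^-\xi)\1\{M=1\}]\,\dd\xi\le(\kappa_S+\varepsilon)\,\hat z^M\le\frac{\kappa_S+\varepsilon}{|\kappa+\varepsilon|}\,\eta_M. \]
Since convergence in probability lets us add an arbitrarily small error term w.\,h.\,p., and $\tfrac{\kappa_S+\varepsilon}{|\kappa+\varepsilon|}\to-\kappa^{-1}\kappa_S$ as $\varepsilon\to0$, choosing $\varepsilon=\varepsilon_i\to0$ and error $\epsilon_i=o(\eta_{M_i})$ along any sequence with $\P(M_i=0)\to0$ yields the first assertion. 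For the second assertion the right--differentiability hypotheses give $d(\xi)\to\kappa$ and $\E[SW^-\phi_T(W^-\xi)]\to\kappa_S$ as $\xi\to0+$, so the same estimates become two--sided up to $o(1)$--factors: $\hat z^M=\eta_M/|\kappa|+o(\eta_M)$ and $\E[S\psi_T(W^-\hat z^M)\1\{M=1\}]=\kappa_S\hat z^M+o(\hat z^M)=-\kappa^{-1}\kappa_S\eta_M+o(\eta_M)$, whence $\E[S\psi_{T^M}(W^-\hat z^M)]=\E[S\1\{M=0\}]-\kappa^{-1}\kappa_S\eta_M+o(\eta_M)$ and the stated convergence in probability follows.

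The main obstacle is that $W^-$, $W^+$, $S$ and the products $W^-W^+$, $SW^-$ need not be integrable, so one may not differentiate under the expectation or use dominated convergence globally; every estimate must live on the fixed neighbourhood $(0,z_1)$ of the origin, where the $\limsup$--hypotheses provide uniform bounds. The genuinely delicate point, which turns the displayed identities for $f^M$ and $d^M$ into the statement that they are \emph{uniformly small} perturbations of $f$ and $d$ near $0$, is to show that the error terms $\E[W^-W^+\phi_T(W^-\xi)\1\{M=0\}]$, $\E[SW^-\phi_T(W^-\xi)\1\{M=0\}]$ and $\E[W^+\psi_T(W^-\xi)\1\{M=0\}]$ are $o(1)$ as $\P(M=0)\to0$, uniformly in $\xi\in(0,z_1)$; this rests on a uniform--integrability argument exploiting $\int_0^\infty\phi_T(u)\,\dd u=1$, the individual integrability of $W^-$, $W^+$, $S$, and the finiteness of $\kappa$ and $\kappa_S$. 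The remaining bookkeeping --- assembling the w.\,h.\,p.\ statements along $\{M_i\}$ and absorbing the various $\varepsilon$-- and $o(1)$--terms into a single $o(\E[W^+\1\{M_i=0\}])$ --- parallels the proof of Theorem \ref{2:prop:res}.
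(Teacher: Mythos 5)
Your proposal is correct and follows essentially the same route as the paper's proof: bound $f(z;(W^-,W^+,TM_i))$ from above near the origin via the integral representation of Lemma \ref{2:lem:f:continuous} and the two $\limsup$ hypotheses, deduce $\hat z_i \leq -\kappa^{-1}\E[W^+\1\{M_i=0\}]+o(\E[W^+\1\{M_i=0\}])$, bound $\E[S\psi_{TM_i}(W^-z)]$ analogously, and conclude with Theorem \ref{2:thm:asymp:1}. Your treatment is in fact more explicit than the paper's (which compresses the argument to a few lines and asserts the two-sided version by saying "the inequalities become equalities"), and you correctly identify that only the second assertion requires controlling the subtracted $\1\{M_i=0\}$ error terms uniformly near $z=0$.
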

In particular, if $\{M_i\}_{i\in\N}$ is independent of $W^+$ and $S$, then w.\,h.\,p.
\[ n^{-1}\mathcal{S}_n^{M_i} \leq \P(M_i=0)\left(\E[S]-\kappa^{-1}\kappa_S\E[W^+]\right) + o(\P(M_i=0)) = \mathcal{O}(\P(M_i=0)) \]
and $1-\kappa^{-1}\kappa_S\E[W^+]/\E[S]$ can be regarded as the maximal amplification factor of the systemic importance of initially defaulted banks $\E[S\1\{M_i=0\}]=\P(M_i=0)\E[S]$. If further $S\equiv1$ and $W^-W^+$ is integrable, above result is the analogon to \cite[Corollary 20]{Amini2014c}:
\[ n^{-1}\vert\mathcal{D}_n^{M_i}\vert \xrightarrow{p} \P(M_i=0)\left(1+\frac{\E[W^+]\E[W^-\1\{T=1\}]}{1-\E[W^-W^+\1\{T=1\}]}\right) + o(\P(M_i=0)) \]
Both Theorem \ref{2:prop:res} and Proposition \ref{2:prop:convergence:speed} are concerned with the behavior of the weak derivative $d(z)$ of $f(z)$ near $z=0$. The following criterion that rather focuses on the behavior of $f(z)$ near $z=0$ will turn out to be useful later. 
\begin{theorem}\label{2:thm:cont:res}
Under Assumption \ref{2:vertex:assump} suppose $\P(T=0)=0$, $d(z)$ to be continuous on $(0,z_0)$ for some $z_0>0$ and
\begin{equation}\label{2:thm:cont:res:ass}
\inf\left\{z>0\,:\,f(z)<0\right\} = 0.
\end{equation} 
Then for any sequence $\{M_i\}_{i\in\N}$ with $\lim_{i\to\infty}\P(M_i=0)=0$, it follows that for any $\epsilon>0$, there exists $i_\epsilon$ such that
\[ n^{-1}\mathcal{S}_n^{M_i} \leq \epsilon \quad \text{w.\,h.\,p. for all } i\geq i_\epsilon. \]
In particular, the system is resilient.
\end{theorem}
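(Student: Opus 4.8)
The plan is to transfer the upper bound of Theorem~\ref{2:thm:asymp:2}, Part~\ref{2:thm:asymp:2:2}, to the \emph{shocked} system obtained from the marks $\bm{m}(n)$. Since only the capitals are multiplied by $m_i\in\{0,1\}$, the hypothetical threshold of bank $i$ in the shocked system equals $\tau_i$ if $m_i=1$ and $0$ if $m_i=0$, and the a.s.\ convergence of $\overline{G}_n$ postulated in the extended Assumption~\ref{2:vertex:assump} shows that the shocked system again satisfies Assumption~\ref{2:vertex:assump}, now with limiting threshold variable $T^{M_i}$ that equals $0$ on $\{M_i=0\}$ and $T$ on $\{M_i=1\}$. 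Writing $f^{M_i}$, $d^{M_i}$ for the functions associated with $(W^-,W^+,T^{M_i})$ and using $\psi_0\equiv1$, $\phi_0\equiv0$ one gets
\[ f^{M_i}(z) = f(z) + \E[W^+\bigl(1-\psi_T(W^-z)\bigr)\1\{M_i=0\}], \qquad d^{M_i}(z) = d(z) - \E[W^-W^+\phi_T(W^-z)\1\{M_i=0\}]. \]
Note $f(0)=\E[W^+\1\{T=0\}]=0$ since $\P(T=0)=0$, while $f^{M_i}(0)=\E[W^+\1\{M_i=0\}]>0$. Also, the marginal law of $(W^-,W^+,S,T)$ does not change with $i$, and a single integrable random variable is uniformly integrable over events of vanishing probability (indeed $\E[|X|\1_A]\le\E[|X|\1\{|X|>K\}]+K\,\P(A)$), so $\E[W^+\1\{M_i=0\}]\to0$ and $\E[S\1\{M_i=0\}]\to0$ as $i\to\infty$.

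First I would show $z^{*,M_i}:=\inf\{z>0:f^{M_i}(z)<0\}\to0$. By assumption~\eqref{2:thm:cont:res:ass} there is a sequence $\zeta_k\downarrow0$ with $\zeta_k<z_0$ and $f(\zeta_k)<0$. Fixing $k$, the first identity above together with $\E[W^+\1\{M_i=0\}]\to0$ gives $f^{M_i}(\zeta_k)\le f(\zeta_k)+\E[W^+\1\{M_i=0\}]<0$ for all $i$ large, hence $z^{*,M_i}\le\zeta_k$ eventually; letting $k\to\infty$ yields $z^{*,M_i}\to0$. Moreover $z^{*,M_i}>0$, because $f^{M_i}(0)>0$ and $f^{M_i}$ is continuous by Lemma~\ref{2:lem:f:continuous}, so for all $i$ large $z^{*,M_i}\in(0,z_0)$.

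The step I expect to require the most care is upgrading the continuity hypothesis on $d$ to continuity of $d^{M_i}$ near $z^{*,M_i}$. Since $d$ is continuous on $(0,z_0)$, the map $z\mapsto\E[W^-W^+\phi_T(W^-z)]$ is finite and continuous there, and because $z\mapsto\phi_T(W^-z)$ is continuous, $W^-W^+\phi_T(W^-z)$ converges pointwise a.s.\ as $z$ varies in $(0,z_0)$; Scheffé's lemma then upgrades this to $L^1$-convergence, so $z\mapsto\E[W^-W^+\phi_T(W^-z)\1\{M_i=0\}]$ is continuous on $(0,z_0)$, and hence so is $d^{M_i}$ by the second identity above. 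In particular $d^{M_i}$ is continuous on a neighborhood of $z^{*,M_i}$ once $i$ is large.

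Finally I would apply Theorem~\ref{2:thm:asymp:2}, Part~\ref{2:thm:asymp:2:2}, to the shocked system (it satisfies Assumption~\ref{2:vertex:assump}, $\P(T^{M_i}=0)\ge\P(M_i=0)>0$, and $d^{M_i}$ is continuous near its crossing point): for every $\epsilon'>0$, with high probability $n^{-1}\mathcal{S}_n^{M_i}\le\E[S\psi_{T^{M_i}}(W^-z^{*,M_i})]+\epsilon'$. Using $\psi_{T^{M_i}}(W^-z)=\1\{M_i=0\}+\psi_T(W^-z)\1\{M_i=1\}$,
\[ \E[S\psi_{T^{M_i}}(W^-z^{*,M_i})] \le \E[S\1\{M_i=0\}] + \E[S\psi_T(W^-z^{*,M_i})], \]
where the first term tends to $0$; for the second, $z^{*,M_i}\to0$ yields $\psi_T(W^-z^{*,M_i})\to\psi_T(0)=\1\{T=0\}=0$ a.s., so by dominated convergence ($\psi_T\le1$, $S$ integrable) it too tends to $0$. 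Hence, given $\epsilon>0$, taking $\epsilon'=\epsilon/2$ and $i_\epsilon$ so large that all "for $i$ large" conditions above hold and $\E[S\psi_{T^{M_i}}(W^-z^{*,M_i})]<\epsilon/2$ for $i\ge i_\epsilon$, we obtain $n^{-1}\mathcal{S}_n^{M_i}\le\epsilon$ w.h.p.\ for all $i\ge i_\epsilon$. The concluding assertion that the system is resilient in the sense of Definition~\ref{2:def:resilience} then follows by the standard contradiction argument: a violation would produce a sequence $\{M_i\}$ with $\P(M_i=0)\to0$ along which $n^{-1}\mathcal{S}_n^{M_i}$ stays bounded away from $0$ with non-vanishing probability, contradicting what was just shown.
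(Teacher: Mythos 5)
Your proposal is correct and follows essentially the same route as the paper's proof: deduce $z^{*,M_i}\to0$ from \eqref{2:thm:cont:res:ass}, apply Part \ref{2:thm:asymp:2:2}.~of Theorem \ref{2:thm:asymp:2} to the shocked system, and then let $\E[S\psi_T(W^-z^{*,M_i})]+\E[S\1\{M_i=0\}]$ vanish. Your Scheff\'e argument for the continuity of $d^{M_i}$ is in fact a slightly more careful justification than the paper's one-line appeal to dominated convergence (which would nominally require an integrable dominating function for $W^-W^+\phi_T(W^-z)$), but it establishes exactly the same fact.
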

Assumption (\ref{2:thm:cont:res:ass}) describes that $f(z)$ becomes negative immediately after $z=0$. It is in some sense the opposite of assumption (\ref{2:condition:nonres}) and ensures that the roots $z^*_i$ (analogue as in Theorem \ref{2:thm:asymp:2} but for the shocked systems) tend to zero as the shock size $\P(M_i=0)$ shrinks to zero.

\subsection{Systemic Threshold Requirements}\label{2:ssec:threshold:requirements}
A natural problem that is also of highest interest to regulators is to identify capital requirements for the individual banks which can be determined from observable quantities of the network and that are sufficient to make the network resilient to external shocks. Observable quantities are the in- and out-degrees $(d_i^-)_{i\in[n]}$ respectively $(d_i^+)_{i\in[n]}$, which function as estimators of the in- and out-weights $(w_i^-)_{i\in[n]}$ respectively $(w_i^+)_{i\in[n]}$, and interbank exposures. In this section we will first focus on identifying threshold requirements in the threshold model (see Subsection \ref{2:ssec:special:case:threshold:model})
that guarantee resilience. In the next section, we then discuss how to translate the threshold requirements into systemic capital requirements in the exposure model.

More precisely, in this section we seek threshold requirements for bank $i$ of the form \mbox{$\tau_i=\tau(w_i^-)$,} where $\tau:\R_{+,0}\to\N$ is a non-decreasing function. Such a functional form has the interpretation that the threshold (capital) requirement of a bank only depends on its risk of defaulting due to default of debtors (exposure risk). In contrast, if bank $i$'s threshold (capital) requirement $\tau(w_i^-,w_i^+)$ was also depending on the out-weight $w_i^+$, this would also take possible defaults caused by bank $i$ into account. This risk management policy would not be in line with traditional risk management techniques and would certainly be harder to communicate to the banks.

Note, in particular, the changed nature of the threshold values $\tau_i$. While, so far, the capitals $c_i$ (which equal $\tau_i$ in the threshold model) were exogenous quantities in our model, we now aim to determine them from the in-weights $w_i^-$ under the constraint of building a resilient network, hence making them endogenous quantities.

To investigate resilience of a financial system implementing threshold requirements given by $\tau$ we want to use the resilience criteria from the previous section. In particular we have to ensure that Assumption \ref{2:vertex:assump} is satisfied for the functional thresholds $\tau_i=\tau(w_i^-)$, i.\,e.~we need
\[ \lim_{n\to\infty}n^{-1}\sum_{i\in[n]}\1\{w_i^-(n)\leq x,w_i^+(n)\leq y,s_i(n)\leq v, \tau(w_i^-(n))\leq l\} = G(x,y,v,l) \]
for some distribution $G:\R_{+,0}^3\times\N_{0,\infty}\to[0,1]$ and all points $(x,y,v,l)\in\R_{+,0}^3\times\N_{0,\infty}$ for which $G_l(x,y,v):=G(x,y,v,l)$ is continuous. Note that depending on the choice of $\tau$ and $G$ for the limiting random vector $(W^-,W^+,S,T)\sim G$ it does not necessarily hold that \mbox{$\P(T=\tau(W^-))=1$.} This is because $W^-$ could have positive mass at some point of discontinuity of $\tau$ and it would then be important whether the in-weight distributions converge from below or from above. Instead one easily derives that $\P(\accentset{\circ}{\tau}(W^-)\leq T\leq \bar{\tau}(W^-))=1$ where $\accentset{\circ}{\tau}(w):=\lim_{\epsilon\to0+}\tau((1-\epsilon)w)$ and $\bar{\tau}(w):=\lim_{\epsilon\to0+}\tau(1+\epsilon)w)$ are the left-continuous resp.~right-continuous modifications of $\tau$. If, however, $\tau$ only admits discontinuities at $\tilde{w}\in\R_{+,0}$ such that $\P(W^-=\tilde{w})=0$, then in fact $\P(T=\tau(W^-))=1$ and we will assume this from now on. All our results on resilience and non-resilience in the following can easily be extended for the functions $\accentset{\circ}{\tau}$ resp.~$\bar{\tau}$.
\begin{assumption}\label{2:ass:tau}
Consider sequences $\mathbf{w}^-(n)$, $\mathbf{w}^+(n)$ and $\mathbf{s}(n)$ of in-weights, out-weights and systemic importance values such that their empirical distribution and mean converge to those of a random vector $(W^-,W^+,S)$. Moreover, let $\tau:\R_{+,0}\to\N_0$ be a non-decreasing function and assume that its points of discontinuity are all null-sets of $W^-$. In particular, letting $\tau_i(n)=\tau(w_i^-(n))$, $i\in[n]$, Assumption \ref{2:vertex:assump} is satisfied and it holds $T=\tau(W^-)$ a.\,s.
\end{assumption}
Empirical studies of financial networks such as \cite{Boss2004} or \cite{Cont2013} show that degrees follow Pareto distributions (at least in the tail). We denote in the following $X\sim\mathrm{Par}(\beta,x_\text{min})$, $\beta>1$, $x_\text{min}>0$, if the random variable $X$ has density
\[ f_X(x)=(\beta-1)x_\text{min}^{\beta-1}x^{-\beta}\1\{x\geq x_\text{min}\}. \]
As mentioned before, to reproduce Pareto distributed degrees in our model we need to choose the weights Pareto distributed as well. Hence let $W^-\sim\mathrm{Par}(\beta^-,w_\text{min}^-)$ and $W^+\sim\mathrm{Par}(\beta^+,w_\text{min}^+)$, where $\beta^->2$, $\beta^+>2$, $w_\text{min}^->0$ and $w_\text{min}^+>0$. In particular, any choice of an increasing function $\tau$ will satisfy Assumption \ref{2:ass:tau}. The main result of this section then identifies a criterion for function $\tau$ that ensures resilience of the financial network. 
\begin{theorem}\label{2:threshold:res}
Suppose Assumption \ref{2:ass:tau} for a non-decreasing function $\tau:\R_{+,0}\to\N\backslash\{0,1\}$ such that for each bank $i\in[n]$ the threshold value $\tau_i$ depends on in-weight $w_i^-$ by $\tau_i=\tau(w_i^-)$. Moreover, assume for the limiting weight distributions that $W^-\sim\mathrm{Par}(\beta^-,w_\text{min}^-)$ respectively $W^+\sim\mathrm{Par}(\beta^+,w_\text{min}^+)$, $\beta^-,\beta^+>2$, $w_\text{min}^-,w_\text{min}^+>0$. Set
\[ \gamma_\text{\normalfont c}:=2+\frac{\beta^--1}{\beta^+-1}-\beta^-\quad\text{and}\quad \alpha_\text{\normalfont c}:=\frac{\beta^+-1}{\beta^+-2}w_\text{\normalfont min}^+\left(w_\text{\normalfont min}^-\right)^{1-\gamma_\text{\normalfont c}}. \]
Then the system is resilient if one of the following holds:
\begin{enumerate}
\item \label{2:threshold:res:1} $\gamma_\text{\normalfont c}<0$
\item \label{2:threshold:res:2} $\gamma_\text{\normalfont c}=0$ and $\liminf_{w\to\infty}\tau(w)>\alpha_\text{\normalfont c}+1$.
\item \label{2:threshold:res:3} $\gamma_\text{\normalfont c}>0$ and $\liminf_{w\to\infty}w^{-\gamma_\text{\normalfont c}}\tau(w)>\alpha_\text{\normalfont c}$.
\end{enumerate}
\end{theorem}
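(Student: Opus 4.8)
The plan is to reduce to one of the resilience criteria of Section~\ref{2:sec:resilience} and then estimate the relevant functional for Pareto weights by a scaling substitution. Since $\tau$ takes values in $\N\setminus\{0,1\}$ we have $\P(T=0)=0$, so by Lemma~\ref{2:lem:f:continuous} $f(0)=0$ and $f(z)=\int_0^z d(\xi)\,\dd\xi$; a routine dominated-convergence argument (using the factor $e^{-W^-z}$ to dominate the Pareto tail uniformly on compact subsets of $(0,\infty)$) shows that $d$ is continuous on $(0,\infty)$. Hence it suffices to prove
\[
\limsup_{z\to 0+} d(z) < 0 ,
\]
because then $d<0$ on some interval $(0,z_0)$ and Theorem~\ref{2:prop:res} gives resilience; equivalently $\inf\{z>0:f(z)<0\}=0$ and Theorem~\ref{2:thm:cont:res} applies, while in case~\ref{2:threshold:res:1} the stronger fact $\limsup_{z\to0+}d(z)=-1$ even brings Proposition~\ref{2:prop:convergence:speed} into play. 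Writing $g_z(w):=w\,\phi_{\tau(w)}(wz)=w\,\P(\mathrm{Poi}(wz)=\tau(w)-1)$ we have $d(z)+1=\E[W^-W^+\phi_T(W^-z)]=\E[W^+g_z(W^-)]$, so everything comes down to showing $\limsup_{z\to0+}\E[W^+g_z(W^-)]<1$.

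Since only the marginals of $W^-$ and $W^+$ are prescribed, I would bound $\E[W^+g_z(W^-)]$, uniformly over all admissible couplings, by its value for the comonotone coupling (Hardy--Littlewood rearrangement, which for the large-weight regime governing the asymptotics coincides with coupling $W^-$ and $W^+$ monotonically); this extremal, least resilient, case is what makes $\alpha_c$ appear, its factor $\tfrac{\beta^+-1}{\beta^+-2}w_{\mathrm{min}}^+$ being $\E[W^+]$. Under comonotonicity $W^+\stackrel{d}{=}w_{\mathrm{min}}^+(w_{\mathrm{min}}^-)^{-a}(W^-)^a$ with $a:=(\beta^--1)/(\beta^+-1)$; inserting the Pareto density of $W^-$, substituting $u=wz$, and using the identities $\beta^--1-a=1-\gamma_c$ and $\beta^--1=(1-\gamma_c)\,\tfrac{\beta^+-1}{\beta^+-2}$, this yields
\[
d(z)+1\ \le\ (1-\gamma_c)\,\alpha_c\; z^{-\gamma_c}\!\int_{w_{\mathrm{min}}^- z}^{\infty} u^{\,\gamma_c-1}\,\P\!\big(\mathrm{Poi}(u)=\tau(u/z)-1\big)\,\dd u .
\]
(If one prefers not to invoke comonotonicity directly, the same scaling comes out of splitting $W^+$ into a bounded and a heavy-tailed part, controlling the former via $\E[g_z(W^-)]$ and the latter via a pointwise Poisson bound at the mode.) Note $\gamma_c<1$ always (equivalent to $\beta^+>2$) and $\tau\ge\tau_0:=\tau(w_{\mathrm{min}}^-)\ge2$.

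It remains to analyse the right-hand side as $z\to0+$, split along the three cases. In case~\ref{2:threshold:res:1} ($\gamma_c<0$): using $\P(\mathrm{Poi}(u)=m)\le1$ near the lower limit and $\P(\mathrm{Poi}(u)=m)\le u^me^{-u}/m!$ with Stirling elsewhere, the integral stays bounded uniformly in $z$ (up to a $z^{\gamma_c+\tau_0-1}$ factor, plus a logarithm if $\gamma_c+\tau_0=1$), so the prefactor $z^{-\gamma_c}\to0$ forces $d(z)+1\to0$. In case~\ref{2:threshold:res:2} ($\gamma_c=0$) the prefactor is $1$ and the weight is $u^{-1}$; if $\liminf_{w\to\infty}\tau(w)=\infty$ the integrand vanishes pointwise and $d(z)+1\to0$, while if $\tau(w)\uparrow L<\infty$ dominated convergence gives $d(z)+1\to\alpha_c\int_0^\infty u^{-1}\P(\mathrm{Poi}(u)=L-1)\,\dd u=\alpha_c/(L-1)$, which is $<1$ exactly because $L=\liminf_{w\to\infty}\tau(w)>\alpha_c+1$. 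In case~\ref{2:threshold:res:3} ($\gamma_c>0$) the prefactor $z^{-\gamma_c}\to\infty$, so the mass of the integral must come from the neighbourhood of the Poisson mode: the resonance $u^*(z)$ solving $u\approx\tau(u/z)-1$ satisfies $u^*(z)\sim z^{-\gamma_c/(1-\gamma_c)}\to\infty$, there $\P(\mathrm{Poi}(u^*)=\tau(u^*/z)-1)\sim(2\pi u^*)^{-1/2}$ by the local limit theorem, the function $u\mapsto u-(\tau(u/z)-1)$ crosses zero with asymptotic slope $1-\gamma_c$ (the relevant window has effective width $\sqrt{u^*}/(1-\gamma_c)$, covering many level sets of $\tau$), and integrating the Gaussian profile against $u^{\gamma_c-1}$ gives $\int u^{\gamma_c-1}\P(\cdots)\,\dd u\sim z^{\gamma_c}/\big((1-\gamma_c)\,\alpha'\big)$ with $\alpha':=\liminf_{w\to\infty}w^{-\gamma_c}\tau(w)$; hence $d(z)+1\le\alpha_c/\alpha'+o(1)$, which is $<1$ precisely when $\alpha'>\alpha_c$.

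The main obstacle is case~\ref{2:threshold:res:3}: one must obtain matching upper (and, for sharpness, lower) bounds for $\int_{w_{\mathrm{min}}^- z}^{\infty} u^{\gamma_c-1}\,\P(\mathrm{Poi}(u)=\tau(u/z)-1)\,\dd u$ when the dominant contribution comes from a window around the Poisson mode that drifts to infinity as $z\to0$, tracking simultaneously the power-law weight $u^{\gamma_c-1}$, the Gaussian profile from the local limit theorem, and the drift of the target value $\tau(u/z)-1$ itself --- it is this drift, of asymptotic slope $\gamma_c$, that forces both the exponent $\gamma_c$ and the critical constant $\alpha_c$. One further has to make the comonotone worst-case reduction rigorous for an essentially arbitrary non-decreasing (hence possibly wildly discontinuous) $\tau$, and to absorb the error terms arising because only the $\liminf$ of $w^{-\gamma_c}\tau(w)$, rather than a genuine limit, is assumed; in the borderline subcases ($\gamma_c\in\{-1,0\}$ with $\tau$ unbounded) logarithmic corrections appear but do not affect the conclusion.
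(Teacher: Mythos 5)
Your reduction to the comonotone coupling is the step that breaks. You propose to bound $\E[W^+g_z(W^-)]$ with $g_z(w)=w\,\phi_{\tau(w)}(wz)$, uniformly over all couplings with the given Pareto marginals, by its comonotone value. But the Hardy--Littlewood rearrangement inequality identifies the comonotone coupling as extremal only for integrands that are \emph{monotone} in $W^-$, and $g_z$ is not: it is a bump concentrated on the Poisson resonance $wz\approx\tau(w)-1$, of height $\asymp w^*(u^*)^{-1/2}$ and width $\asymp\sqrt{u^*}/z$ in $w$. The genuinely worst coupling therefore places the upper tail of $W^+$ on this resonance window rather than on $\{W^-\text{ large}\}$, and a short computation shows that for such a coupling $d(z)+1$ grows like a positive power of $z^{-1}$ when $\gamma_c>0$ (and, when $\gamma_c=0$, tends to a constant that can exceed $1$ even under $\liminf\tau>\alpha_c+1$, e.g.\ for large $\alpha_c$). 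Since the theorem prescribes only the marginals of $(W^-,W^+)$ — and the paper explicitly claims robustness with respect to the dependence structure — the statement you are reducing to, $\limsup_{z\to0+}d(z)<0$, is \emph{false} for some admissible systems; resilience holds for them nonetheless because Theorem \ref{2:thm:cont:res} only requires the averaged condition $\limsup_{z\to0+}z^{-1}f(z)<0$, and $z^{-1}f(z)=z^{-1}\int_0^zd(\xi)\,\dd\xi$ is insensitive to the tall, narrow positive spikes of $d$ that adversarial couplings create. Two further points compound the problem in case 3 even if you restrict to the comonotone coupling: $\phi_l(x)$, unlike $\psi_l(x)$, is not monotone in $l$, so the hypothesis $\liminf w^{-\gamma_c}\tau(w)>\alpha_c$ does not let you replace $\tau$ by the boundary threshold pointwise; and you cannot fall back on the crude bound $\phi\le1$ in the tail region, because $\E[W^-W^+\1\{W^->t\}]=\infty$ whenever $\gamma_c>0$, so the local-limit-theorem analysis you flag as "the main obstacle" is not a refinement but the entire proof, and it is left as a heuristic.

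The paper's route avoids all of this by working with the integrated object $z^{-1}f(z)=z^{-1}\E[W^+\psi_T(W^-z)]-1$: since $\psi_l$ is decreasing in $l$ one may replace $\tau$ by $\lceil\tilde\alpha w^{\gamma_c}\rceil$ (resp.\ by the constant $\alpha$ when $\gamma_c=0$); a Chernoff bound kills the region $W^-\lesssim(\tilde\alpha^{-1}z)^{1/(\gamma_c-1)}$ using only $\E[W^+]<\infty$; and above that scale the trivial bound $\psi\le1$ leaves $\E[W^+\1\{W^->t(z)\}]$, whose integrand \emph{is} non-decreasing in $W^-$, so comonotonicity is legitimately the worst case and yields $\alpha_c(1+\epsilon)z/\tilde\alpha$. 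Combined with continuity of $d$ on $(0,z_0)$ (Lemma \ref{2:lem:cont:differentiability}, after truncating $\tau$ at $w^\eta$), Theorem \ref{2:thm:cont:res} applies. Your case 1 is fine as written (there the functional $\E[W^-W^+]$ is monotone, so the comonotone bound is valid, and $d(z)\to-1$ by dominated convergence), your constants $\alpha_c/(L-1)$ and $\alpha_c/\alpha'$ are the correct targets, and your case 2 computation survives because $\tau$ non-decreasing makes $\liminf\tau$ an honest limit — but cases 2 and 3 need to be rerouted through $\psi$ and $z^{-1}f(z)$ rather than $\phi$ and $d(z)$.
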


\noindent The theorem identifies different criteria for $\tau$ depending on the quantity $\gamma_\text{c}$ and hence the values of $\beta^-$ and $\beta^+$. Since $\beta^->2$ and $\beta^+>2$, we note that always $\gamma_\text{c}<1$. That is, also in Part \ref{2:threshold:res:3}.~of the theorem it is possible to choose a sub-linear threshold function $\tau$ that ensures resilience. On the other hand, even the constant threshold function $\tau(w)=2$ for all $w\in\R_{+,0}$ ensures resilience by Part \ref{2:threshold:res:1}.~whenever $\gamma_\text{c}<0$. This is in particular the case if $\beta^->3$ and $\beta^+>3$, i.\,e.~if $W^-$ and $W^+$ both admit finite second moments. This is in line with the results from \cite{Cont2016}. In addition, the theorem makes statements about cases when $\beta^-<3$ and $\beta^+>3$ or vice versa. Such parameters were observed on real markets for example in \cite{Cont2013}. In these cases, all $\gamma_\text{c}<0$, $\gamma_\text{c}=0$ or $\gamma_\text{c}>0$ are possible and only the exact values of $\beta^-$ and $\beta^+$ determine the condition for resilience.

\pagebreak
\begin{remark}\label{2:rem:contagious:links}
In Theorem \ref{2:threshold:res} we make the assumption of $\tau(w)\geq2$. In other words, each bank must at least be capable of sustaining the default of its largest debtor. This requirement has already been implemented in an even stricter form in the \textit{Supervisory framework for measuring and controlling large exposures} by the \textit{Basel Committee on Banking Supervision} from 2014 which will become applicable as from 2019 \cite{BaselCommittee2014}. While being economically sensible, the assumption is actually not necessary in order to derive analytical results. For the case of $\gamma_\text{c}<0$ it is enough to postulate $\E\left[W^-W^+\1\{\tau(W^-)=1\}\right]<1$ in order to ensure resilience. Also in the case of $\gamma_\text{c}\geq0$, it suffices to adjust $\alpha_\text{c}$ for a factor $(1-\E[W^-W^+\1\{\tau(W^-)=1\}])^{-1}$, whenever $\E\left[W^-W^+\1\{\tau(W^-)=1\}\right]<1$. Both results follow from a simple modification of our proof.
\end{remark}

\noindent Note that Theorem \ref{2:threshold:res} is formulated with assumptions on the marginal distributions of $W^-$ and $W^+$ only. Hence, the result is robust with respect to the dependency structure of the weights, i.\,e.~the resilience criteria are sufficient for all dependency structures. As Theorem \ref{2:thm:functional:nonres} will show, in the case of comonotone weights, the values of $\gamma_\text{c}$ and $\alpha_\text{c}$ are sharp. Also in the case of upper tail dependent weights (a reasonable assumption for real financial networks) the value of $\gamma_c$ is sharp. By $W^-$ and $W^+$ being upper tail dependent we mean that
\[ \lambda:=\liminf_{p\to0}\P(F_{W^+}(W^+)>1-p \mid F_{W^-}(W^-)>1-p) > 0. \]
If even
\[ \Lambda(x) := \lim_{p\to0}\P(F_{W^+}(W^+)>1-xp \mid F_{W^-}(W^-)>1-p) \]
exists for all $x\geq0$, we are able to determine explicitly sharp thresholds $\alpha_\text{c}(\Lambda)$ given by
\[ \alpha_\text{c}(\Lambda) := w_\text{min}^+(w_\text{min}^-)^{1-\gamma_\text{c}}\int_0^\infty\Lambda\left(x^{1-\beta^+}\right)\dd x. \]
For comonotone dependence (i.\,e.~$\Lambda(x)=1\wedge x$), $\alpha_\text{c}(\Lambda)$ coincides with $\alpha_\text{c}$ from Theorem \ref{2:threshold:res}.

\begin{theorem}\label{2:thm:functional:nonres}
Consider the same situation as in Theorem \ref{2:threshold:res}. If $\gamma_c>0$, the following holds:
\begin{enumerate}
\item \label{2:thm:functional:nonres:1} If $\limsup_{w\to\infty}w^{-\gamma_\text{\normalfont c}}\tau(w) < \lambda \frac{\beta^+-2}{\beta^+-1} \alpha_\text{c}$, then the system is non-resilient.
\item \label{2:thm:functional:nonres:2} If $\Lambda(x)$ exists for each $x\geq0$ and $\limsup_{w\to\infty}w^{-\gamma_\text{\normalfont c}}\tau(w) < \alpha_\text{\normalfont c}(\Lambda)$, then the system is non-resilient. If $\liminf_{w\to\infty}w^{-\gamma_\text{\normalfont c}}\tau(w) > \alpha_\text{\normalfont c}(\Lambda)$, then the system is resilient.
\end{enumerate}
\end{theorem}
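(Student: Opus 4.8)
The plan is to reduce everything to the sign of $f(z)=\E[W^+\psi_T(W^-z)]-z$ near $z=0$ and then invoke the criteria of Section~\ref{2:sec:resilience}, refining along the way the asymptotic analysis of $f$ already carried out in the proof of Theorem~\ref{2:threshold:res}. Since $\tau\geq 2$ we have $T=\tau(W^-)\geq 2$ a.s., so $\P(T=0)=0$ and $\E[S\1\{T<\infty\}]=\E[S]$. Hence: if $f(z)>0$ for all $z$ in some interval $(0,z_0)$, Theorem~\ref{2:prop:nonres} yields non-resilience; and if $f(z)<0$ for all $z\in(0,z_0)$, then $\inf\{z>0:f(z)<0\}=0$, and $d(z)=\E[W^-W^+\phi_T(W^-z)]-1$ is continuous on $(0,z_0)$ (for $z$ bounded away from $0$ and $W^-$ large, $\phi_{\tau(W^-)}(W^-z)$ is the density of $\mathrm{Poi}(W^-z)$ evaluated deep in its lower tail, as $\tau(W^-)=o(W^-)$ because $\gamma_\text{c}<1$, so it is $\leq e^{-cW^-z}$ and dominated convergence applies), so Theorem~\ref{2:thm:cont:res} yields resilience. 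It thus suffices to determine the sign of $f$ near $0$ under the respective hypotheses.

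The core is the asymptotics of $g(z):=\E[W^+\psi_{\tau(W^-)}(W^-z)]$ as $z\to 0+$. Substituting $u=W^-z$ one sees that the relevant scale is the Poisson mean $s_\ast(z):=\alpha^{1/(1-\gamma_\text{c})}z^{-\gamma_\text{c}/(1-\gamma_\text{c})}\to\infty$, where $\alpha$ denotes the pertinent limit of $w^{-\gamma_\text{c}}\tau(w)$: since $\gamma_\text{c}\in(0,1)$ and $\tau(w)$ is of order $\alpha w^{\gamma_\text{c}}$, Chernoff bounds for the Poisson distribution (the rate function $1-\rho+\rho\ln\rho$ being bounded away from $0$ for $\rho<1$ bounded away from $1$, the lower tail being exponentially small for $\rho>1$) show that $\psi_{\tau(W^-)}(W^-z)\to\1\{W^-z>s_\ast(z)\}$ with a sharp transition whose boundary layer has width $O(\sqrt{s_\ast(z)})$. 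Consequently $g(z)=(1+o(1))\,\E[W^+\1\{W^->w_\ast(z)\}]$ with $w_\ast(z)=s_\ast(z)/z=(\alpha/z)^{1/(1-\gamma_\text{c})}$. Writing $p=\P(W^->w)=(w_{\min}^{-}/w)^{\beta^--1}\to 0$ and using the Pareto form of $W^+$ together with the joint upper tail of $(W^-,W^+)$, one gets $\E[W^+\1\{W^->w\}]=p\cdot\E[W^+\mid W^->w]=(1+o(1))\,w_{\min}^{+}\,p^{(\beta^+-2)/(\beta^+-1)}\int_0^\infty x^{-1/(\beta^+-1)}\,\dd\Lambda(x)$ in the setting of Part~\ref{2:thm:functional:nonres:2}, and $\geq(\lambda-o(1))\,w_{\min}^{+}\,p^{(\beta^+-2)/(\beta^+-1)}$ in that of Part~\ref{2:thm:functional:nonres:1} (condition on $\{W^->w\}$, use $\P(F_{W^+}(W^+)>1-p\mid W^->w)\geq\lambda-o(1)$ and bound $W^+$ on that event from below by $F_{W^+}^{-1}(1-p)$). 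Inserting $w=w_\ast(z)$ and using $1-\gamma_\text{c}=(\beta^--1)(\beta^+-2)/(\beta^+-1)$, so that $p^{(\beta^+-2)/(\beta^+-1)}=(w_{\min}^{-})^{1-\gamma_\text{c}}z/\alpha$, everything collapses to $g(z)=(1+o(1))\,\alpha_\text{c}(\Lambda)z/\alpha$ (Part~\ref{2:thm:functional:nonres:2}), resp.\ $g(z)\geq(\lambda\,\tfrac{\beta^+-2}{\beta^+-1}\,\tfrac{\alpha_\text{c}}{\alpha}-o(1))z$ (Part~\ref{2:thm:functional:nonres:1}), once one verifies the elementary identity $\int_0^\infty\Lambda(x^{1-\beta^+})\,\dd x=\int_0^\infty x^{-1/(\beta^+-1)}\,\dd\Lambda(x)$ (which equals $\tfrac{\beta^+-1}{\beta^+-2}$ for $\Lambda(x)=1\wedge x$, so that $\alpha_\text{c}$ of the comonotone $\Lambda$ equals $\alpha_\text{c}$). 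Thus $f(z)=g(z)-z$ has the asserted sign near $0$: for the non-resilience claims one uses the eventual bound $\tau(w)\leq(\limsup_{w\to\infty} w^{-\gamma_\text{c}}\tau(w)+\epsilon)w^{\gamma_\text{c}}$, which bounds $\psi$ and hence $f$ from below, while for the resilience claim one uses $\tau(w)\geq(\liminf_{w\to\infty} w^{-\gamma_\text{c}}\tau(w)-\epsilon)w^{\gamma_\text{c}}$ together with $\psi_{\tau(w)}(wz)\leq\psi_2(wz)\leq(wz)^2/2$ for the bounded $w$, which bounds $f$ from above; letting $\epsilon\downarrow 0$ produces exactly the thresholds $\lambda\tfrac{\beta^+-2}{\beta^+-1}\alpha_\text{c}$ and $\alpha_\text{c}(\Lambda)$.

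The work lies in the error control. For the $u$-integral I would split the $W^-$-range into: (i) $W^-=O(1)$, where $W^-z\to 0$ and $\psi_{\tau(W^-)}(W^-z)\leq\psi_2(W^-z)\leq (W^-z)^2/2$, contributing $O(z^2)=o(z)$; (ii) an intermediate range $M\leq W^-\leq(1-\delta)w_\ast(z)$ with $M=M(z)\to\infty$ slowly, where $\tau(W^-)\to\infty$ and the Chernoff bound, uniformly in this range (using monotonicity of $\tau$ and of the Poisson rate function), makes the contribution smaller than any power of $z$; (iii) the boundary layer $|W^-z-s_\ast(z)|\lesssim\sqrt{s_\ast(z)}$, of relative size $O(s_\ast(z)^{-1/2})=o(1)$; and (iv) $W^-\geq(1+\delta)w_\ast(z)$, which up to the $o(1)$ coming from $\delta\downarrow 0$ produces the main term. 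I expect the two genuinely delicate points to be: making the Chernoff estimates in step~(ii) uniform in $z$; and, for Part~\ref{2:thm:functional:nonres:2}, justifying $\E[W^+\mid W^->w]/(w_{\min}^{+}\,p^{-1/(\beta^+-1)})\to\int_0^\infty x^{-1/(\beta^+-1)}\,\dd\Lambda(x)$, which needs a uniform-integrability argument for the rescaled conditional variable $(1-F_{W^+}(W^+))/p$ given $\{W^->w\}$, since the weight $x\mapsto x^{-1/(\beta^+-1)}$ is unbounded near $x=0$; the remaining estimates parallel those in the proof of Theorem~\ref{2:threshold:res}.
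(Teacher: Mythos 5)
Your proposal is correct and follows essentially the same route as the paper's proof: locate the transition of $\psi_{\tau(W^-)}(W^-z)$ at $W^-\approx(\alpha/z)^{1/(1-\gamma_\text{c})}$ via Chernoff/Chebyshev bounds, reduce $f(z)+z$ up to $o(z)$ to the tail expectation $\E[W^+\1\{W^->w_\ast(z)\}]$, evaluate that through the tail-dependence function (your Stieltjes integral $\int_0^\infty x^{-1/(\beta^+-1)}\,\dd\Lambda(x)$ equals the paper's $\int_0^\infty\Lambda(x^{1-\beta^+})\,\dd x$ by exactly the integration by parts you indicate, and the single-term lower bound gives the $\lambda$-threshold of Part 1), and conclude via Theorems \ref{2:prop:nonres} and \ref{2:thm:cont:res}. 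The only presentational difference is that the paper writes $\E[W^+\1\{W^->w\}]=\int_0^\infty\P(W^+>x,\,W^->w)\,\dd x$ and passes to the limit by dominated convergence with the comonotone envelope $1\wedge x^{1-\beta^+}$, which sidesteps the uniform-integrability point you flag as delicate.
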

In Part \ref{2:thm:functional:nonres:2}.~of the theorem, we characterize threshold functions $\tau$ that are asymptotically smaller respectively larger than $\alpha_\text{c}(\Lambda)w^{\gamma_\text{c}}$. In the proof we calculate the derivative of $f(z)$ at $z=0$ in order to show non-resilience ($f'(0)>0$) respectively resilience ($f'(0)<0$). If $\tau(w)$ asymptotically behaves like $\alpha_\text{c}(\Lambda)w^{\gamma_\text{c}}$, we obtain $f'(0)=0$ and hence both (\ref{2:condition:nonres}) and (\ref{2:thm:cont:res:ass}) are possible (not simultaneously). In this case, the exact form of $\tau$ and not only its asymptotics are important to decide whether the system is resilient or non-resilient.

\begin{remark}
If the weights $W^-$ and $W^+$ are not upper tail dependent, the conditions from Theorem \ref{2:threshold:res} are generally too strict. If their dependency is such that $\E[W^+(W^-)^{1-\gamma}]<\infty$ for some $\gamma\in(0,\gamma_\text{c}]$ for example, then $\liminf_{w\to\infty}w^{-\gamma}\tau(w)>0$ is already a sufficient criterion for resilience of the system. This can easily be derived from line (\ref{2:eqn:expectation:upper:part}) in the proof of Theorem \ref{2:threshold:res}.
\end{remark}

\noindent Theorems \ref{2:threshold:res} and \ref{2:thm:functional:nonres} both describe financial systems whose weights are given by Pareto distributed random variables. While such random variables model the tails of empirical degree distributions very well, typically for small weights there is a non-negligible discrepancy. However, the proofs of Theorems \ref{2:threshold:res} and \ref{2:thm:functional:nonres} show that it is in fact only the tail that determines (non-)resilience of a financial system. Therefore, assume in the following that there exist constants $K^-,K^+\in(0,\infty)$ and $\beta^-,\beta^+>2$ such that
\begin{equation}\label{2:eqn:Pareto:type}
1-F_{W^\pm}(w) \leq \left(\frac{w}{K^\pm}\right)^{1-\beta^\pm}
\end{equation}
for $w$ large enough. That is, the tails of the survival functions of $W^-$ and $W^+$ are bounded by the powers $1-\beta^-$ resp.~$1-\beta^+$. Then the following version of Theorem \ref{2:threshold:res} holds.
\begin{theorem}\label{2:thm:Pareto:type}
Suppose Assumption \ref{2:ass:tau} for a non-decreasing function $\tau:\R_{+,0}\to\N\backslash\{0,1\}$ such that for each bank $i\in[n]$ the threshold value $\tau_i$ depends on in-weight $w_i^-$ by $\tau_i=\tau(w_i^-)$. Moreover, let the distribution functions of $W^-$ and $W^+$ satisfy \eqref{2:eqn:Pareto:type}. For $\gamma_\text{\normalfont c}$ defined as before, the system is resilient if one of following holds:
\begin{enumerate}
\item \label{2:thm:Pareto:type:1} $\gamma_\text{\normalfont c}<0$
\item \label{2:thm:Pareto:type:2} $\gamma_c=0$ and $\liminf_{w\to\infty}\tau(w)>\frac{\beta^+-1}{\beta^+-2}K^+K^-+1$
\item \label{2:thm:Pareto:type:3} $\gamma_c>0$ and $\liminf_{w\to\infty}w^{-\gamma_c}\tau(w)>\frac{\beta^+-1}{\beta^+-2}K^+(K^-)^{1-\gamma_c}$
\end{enumerate}
\end{theorem}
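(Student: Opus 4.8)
The plan is to follow the proof of Theorem \ref{2:threshold:res} almost verbatim, the one genuinely new observation being that that argument uses nothing about $W^-$ and $W^+$ beyond \emph{upper} bounds on their survival functions (we want to bound $f$ from above, so as to push its first positive root down to $0$), so the explicit Pareto densities used there may be replaced by the one-sided bound \eqref{2:eqn:Pareto:type}. Since $\tau(W^-)\geq2$ a.s.\ and $\P(T=0)=0$ under Assumption \ref{2:ass:tau}, the representation \eqref{2:eqn:integral:representation} gives $f(0)=\E[W^+\1\{T=0\}]=0$; so by \eqref{2:eqn:integral:representation}, continuity of $d$ on some interval $(0,z_0)$ (obtained by dominated convergence as in the proof of Theorem \ref{2:threshold:res}), and Theorem \ref{2:thm:cont:res}, it suffices to prove
\[
  \limsup_{z\to0+}\; z^{-1}\,\E\!\left[W^+\,\psi_{\tau(W^-)}(W^-z)\right] \;<\; 1 ,
\]
because then $f(z)<0$ throughout some $(0,z_0)$, i.e.\ $\inf\{z>0:f(z)<0\}=0$.

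To bound this $\limsup$, fix $R$ large enough that \eqref{2:eqn:Pareto:type} holds for all $w\geq R$, and split $\E[W^+\psi_{\tau(W^-)}(W^-z)]$ over $\{W^-\leq R\}$, $\{R<W^-\leq w_\ast(z)\}$ and $\{W^->w_\ast(z)\}$, where $w_\ast(z)$ denotes the crossover scale at which the Poisson mean $W^-z$ reaches the threshold level $\tau(W^-)$; that is, $w_\ast(z)\asymp z^{-1/(1-\gamma_{\mathrm c})}$ when $\gamma_{\mathrm c}>0$ and $w_\ast(z)\asymp z^{-1}$ when $\gamma_{\mathrm c}\leq0$ (recall $\gamma_{\mathrm c}<1$ always). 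On $\{W^-\leq R\}$ we use $\tau\geq2$ and $\psi_{\tau(w)}(wz)\leq\psi_2(wz)\leq (wz)^2/2$, so this piece of $z^{-1}\E[\,\cdot\,]$ is $\leq\tfrac12R^2z\,\E[W^+]=O(z)$; thus the bulk of the laws, left unconstrained by \eqref{2:eqn:Pareto:type}, enters only here and only via $\E[W^+]<\infty$, and is harmless. On the middle range, the bound $\psi_l(x)\leq x^l/l!$ together with $\tau(w)\geq2$ (and, for $\gamma_{\mathrm c}>0$, the fact that $\tau(w)\to\infty$) and a Chernoff estimate for the Poisson upper tail make $\psi_{\tau(W^-)}(W^-z)$ small enough that, after division by $z$ and use of the tail bound on $W^-$ and $\E[W^+]<\infty$, this piece also vanishes.

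What remains is the far tail: on $\{W^->w_\ast(z)\}$ bound $\psi\leq1$, so the matter reduces to $\limsup_{z\to0+}z^{-1}\E[W^+\1\{W^->w_\ast(z)\}]$. Here the dependency between $W^-$ and $W^+$ must be controlled, and since only the marginals are given we use the worst-case (comonotone) bound: for every $t$ and every coupling, $\E[W^+\1\{W^->t\}]\leq\int_0^{\P(W^->t)}F_{W^+}^{-1}(1-u)\,\dd u$, into which \eqref{2:eqn:Pareto:type} feeds $\P(W^->t)\leq(t/K^-)^{1-\beta^-}$ and $F_{W^+}^{-1}(1-u)\leq K^+u^{-1/(\beta^+-1)}$, giving $\E[W^+\1\{W^->t\}]\leq\frac{\beta^+-1}{\beta^+-2}K^+\bigl((t/K^-)^{1-\beta^-}\bigr)^{(\beta^+-2)/(\beta^+-1)}$. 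Setting $t=w_\ast(z)$ and invoking the algebraic identity $1-\gamma_{\mathrm c}=(\beta^--1)\tfrac{\beta^+-2}{\beta^+-1}$ (immediate from the definition of $\gamma_{\mathrm c}$), the powers of $z$ cancel exactly, leaving, in case \ref{2:thm:Pareto:type:3}, the bound $\frac{\beta^+-1}{\beta^+-2}K^+(K^-)^{1-\gamma_{\mathrm c}}\big/\liminf_{w\to\infty}w^{-\gamma_{\mathrm c}}\tau(w)$; in case \ref{2:thm:Pareto:type:2} the same computation with $\gamma_{\mathrm c}=0$ gives $\frac{\beta^+-1}{\beta^+-2}K^+K^-\big/\bigl(\liminf_{w\to\infty}\tau(w)-1\bigr)$, the $-1$ reflecting that a bank of threshold $\tau$ can absorb $\tau-1$ defaults before failing (cf.\ Remark \ref{2:rem:contagious:links}); and in case \ref{2:thm:Pareto:type:1}, $\gamma_{\mathrm c}<0$, we have $\liminf_{w\to\infty}w^{-\gamma_{\mathrm c}}\tau(w)=\infty$ since $\tau\geq2$, so the far tail contributes $0$. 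In each case the stated hypothesis makes the relevant bound strictly below $1$, which with the two vanishing ranges yields the required inequality, and hence resilience.

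\textbf{Main obstacle.} Exactly as in Theorem \ref{2:threshold:res}, the subtle point is that $\P(\mathrm{Poi}(\lambda)\geq l)$ does not drop abruptly from $1$ to $0$ at $\lambda=l$ but passes through a transition window, so a crude truncation at a constant fraction of the crossover scale yields only the right \emph{exponent} of $z$, not the sharp \emph{constant} $\frac{\beta^+-1}{\beta^+-2}K^+(K^-)^{1-\gamma_{\mathrm c}}$. Recovering that constant requires the change of variables $W^-\mapsto w_\ast(z)\,x$ turning $z^{-1}\E[W^+\psi_{\tau(W^-)}(W^-z)\1\{W^->R\}]$ into a convergent integral of the type $\int_0^\infty(\cdots)\,\dd x$ occurring in Theorem \ref{2:threshold:res}, and then bounding that integral uniformly over the dependency structure of $(W^-,W^+)$ by its comonotone value via the estimate above. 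Beyond this, the argument is the routine replacement of the Pareto density by \eqref{2:eqn:Pareto:type} throughout the proof of Theorem \ref{2:threshold:res}.
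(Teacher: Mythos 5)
Your proposal is correct and follows essentially the same route as the paper: reduce resilience to $\limsup_{z\to0+}z^{-1}\E\left[W^+\psi_{\tau(W^-)}(W^-z)\right]<1$ via Theorem \ref{2:thm:cont:res}, note that only upper bounds on the survival functions enter, pass to the comonotone worst case over the unknown dependency structure, and dominate by genuine Pareto variables so that the explicit computations of Theorem \ref{2:threshold:res} apply. The paper packages the last two steps as a single coupling, $\tilde{W}^\pm=F_{W^\pm}^{-1}(U)\leq X^\pm=(1-U)^{1/(1-\beta^\pm)}K^\pm$ for $U$ large by \eqref{2:eqn:Pareto:type}, whereas you reach the comonotone bound through the Hardy--Littlewood rearrangement inequality $\E[W^+\1\{W^->t\}]\leq\int_0^{\P(W^->t)}F_{W^+}^{-1}(1-u)\,\dd u$; these are equivalent, and for the non-indicator integrands needed in case \ref{2:thm:Pareto:type:2} your version still yields $\E[W^+g(W^-)]\leq\E[\tilde W^+g(\tilde W^-)]$ for non-decreasing $g$ via the layer-cake representation. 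One caveat: your main-text derivation of case \ref{2:thm:Pareto:type:2} is not right as stated. For $\gamma_c=0$ the threshold $\alpha=\liminf_{w\to\infty}\tau(w)$ is constant, so $\psi_\alpha(wz)$ has no sharp cutoff on the scale $w\asymp z^{-1}$: the middle range $\{R<W^-\leq w_\ast(z)\}$ contributes a nonvanishing constant rather than vanishing, and the far-tail indicator computation alone gives $1/\alpha$, not $1/(\alpha-1)$ --- the factor $1/(\alpha-1)$ is not ``absorbing $\tau-1$ defaults'' but the value of $\sum_{k\geq\alpha}\Gamma(k-1)/k!$ obtained by integrating the full Poisson tail against the Pareto law over the entire transition window, exactly as in part \ref{2:threshold:res:2}.~of Theorem \ref{2:threshold:res}. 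Your ``main obstacle'' paragraph does identify the correct fix (change of variables plus comonotone domination of the full integral), so this is a slip of exposition rather than a missing idea; in case \ref{2:thm:Pareto:type:3}, where $\tau(w)\to\infty$ makes the Poisson transition window multiplicatively narrow, your cruder truncation argument is genuinely sufficient and matches the paper's.
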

Note that by the same means also Theorem \ref{2:thm:functional:nonres} can be generalized. For non-resilience the inequality in \eqref{2:eqn:Pareto:type} needs to be inverted such that it describes a lower bound on the tail of the distributions.

\subsection{Systemic Capital Requirements}\label{2:ssec:capital:requirements}
In this section, we translate the threshold requirements from Theorem \ref{2:threshold:res} to capital requirements in the exposure model. That is, we state explicit amounts of capital each bank has to be able to procure in stress scenarios in order for the system to be resilient. As for the threshold requirements, it is important to note that each bank can compute its capital requirements on its own by just knowing its local neighborhood in the network. Further, a bank's capital requirement only depends on the default risk the bank exposes itself to and not on the default risk the bank poses to other banks. Proposition \ref{2:prop:robust:capital:requirements} states a straightforward robust way to translate threshold requirements into sufficient capital requirements. In general, it might lead to capital requirements that are too high and hence unnecessarily reduce interbank lending and liquidity, however. Thus, we further provide Theorem \ref{2:cor:threshold:res} below, which accurately determines capital requirements under a certain regularity assumption on the exposure lists.

\begin{proposition}\label{2:prop:robust:capital:requirements}
Suppose Assumption \ref{2:ass:tau} for non-decreasing $\tau:\R_{+,0}\to\N\backslash\{0,1\}$ and limiting weights $W^-\sim\mathrm{Par}(\beta^-,w_\text{min}^-)$ respectively $W^+\sim\mathrm{Par}(\beta^+,w_\text{min}^+)$ with $\beta^-,\beta^+>2$, $w_\text{min}^-,w_\text{min}^+>0$. Further, assume that $\liminf_{w\to\infty}\tau(w)>\alpha_\text{\normalfont c}+1$ if $\gamma_\text{\normalfont c}=0$ respectively $\liminf_{w\to\infty}w^{-\gamma_\text{\normalfont c}}\tau(w)>\alpha_\text{\normalfont c}$ if $\gamma_\text{\normalfont c}>0$, where the quantities $\gamma_\text{\normalfont c}$ and $\alpha_\text{\normalfont c}$ are as in Theorem \ref{2:threshold:res}. Then the system is resilient if 
\[ c_i > \max\Bigg\{\sum_{j\in J}E_{j,i}~\Bigg\vert~J\subset[n], \vert J\vert=\tau(w_i^-)-1\Bigg\} \quad\text{almost surely for all }i\in[n], \]
i.\,e.~capital $c_i$ of bank $i\in[n]$ is larger than the sum of the $\tau(w_i^-)-1$ largest exposures of $i$.
\end{proposition}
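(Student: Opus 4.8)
The plan is to compare the given exposure-model system directly with the threshold model of Subsection~\ref{2:ssec:special:case:threshold:model} equipped with the \emph{deterministic} thresholds $\tau(w_i^-)$, and then to invoke Theorem~\ref{2:threshold:res}. The key elementary observation is that the capital requirement forces bank $i$ to withstand \emph{any} $\tau(w_i^-)-1$ of its exposures. Indeed, recall that the exposure of $j$ to $i$ is $e_{j,i}=X_{j,i}E_{j,i}$, so the nonzero summands in \eqref{2:eqn:default:cascade} are exactly the $E_{j,i}$ with $X_{j,i}=1$ and $j$ defaulted. If at round $k$ the set of defaulted in-neighbours of $i$ in the sampled graph $X$ has cardinality at most $\tau(w_i^-)-1$, then $\sum_{j\in\mathcal{D}_{k-1}}e_{j,i}=\sum_{j\in\mathcal{D}_{k-1},\,X_{j,i}=1}E_{j,i}\le\max\{\sum_{j\in J}E_{j,i}:J\subset[n],\,|J|=\tau(w_i^-)-1\}<c_i$ (with the convention that the maximum ranges over $|J|=n-1$ when $\tau(w_i^-)-1\ge n$, in which case bank $i$ never defaults). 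Thus by \eqref{2:eqn:default:cascade} bank $i$ does not default in round $k$. Equivalently, for the random threshold of \eqref{2:ex1:perc:thres} one has $\tau_i(n)\ge\tau(w_i^-(n))$ almost surely; in particular $c_i>0$ a.\,s., so $\P(T=0)=0$ as required by the resilience setup.

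Next I would couple the two cascades on the \emph{same} random graph $X$ and the same ex post mark sequence $\bm{m}(n)$. Write $\mathcal{D}_k^{M}$ for the round-$k$ default set in the exposure model under the shock $M$ (initial defaults $\{i:m_i=0\}$, cascade \eqref{2:eqn:default:cascade} with capitals $c_im_i$), and $\widetilde{\mathcal{D}}_k^{M}$ for the round-$k$ default set in the threshold model with deterministic thresholds $\tau(w_i^-)$ under the same shock (a bank $i$ with $m_i=1$ defaults once at least $\tau(w_i^-)$ of its in-neighbours have defaulted; a bank with $m_i=0$ is initially defaulted since $\tau(w_i^-)\ge2$). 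Both cascades start from the same set $\{i:m_i=0\}$. By induction on $k$, using the displayed inequality in the inductive step, $\mathcal{D}_k^{M}\subseteq\widetilde{\mathcal{D}}_k^{M}$ for every $k$: if $i\in\mathcal{D}_k^{M}\setminus\mathcal{D}_{k-1}^{M}$ then $i$ has at least $\tau(w_i^-)$ in-neighbours in $\mathcal{D}_{k-1}^{M}\subseteq\widetilde{\mathcal{D}}_{k-1}^{M}$, hence $i\in\widetilde{\mathcal{D}}_k^{M}$; and $\mathcal{D}_{k-1}^{M}\subseteq\widetilde{\mathcal{D}}_{k-1}^{M}\subseteq\widetilde{\mathcal{D}}_k^{M}$. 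Passing to the stabilised sets, $\mathcal{D}_n^{M}\subseteq\widetilde{\mathcal{D}}_n^{M}$, and since $s_i\ge0$ this yields $n^{-1}\mathcal{S}_n^{M}=n^{-1}\sum_{i\in\mathcal{D}_n^{M}}s_i\le n^{-1}\sum_{i\in\widetilde{\mathcal{D}}_n^{M}}s_i=n^{-1}\widetilde{\mathcal{S}}_n^{M}$.

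Finally I would apply Theorem~\ref{2:threshold:res} to the threshold system with thresholds $\tau(w_i^-)$: all its hypotheses are among the present assumptions (Assumption~\ref{2:ass:tau} holds, the limiting weights are $\mathrm{Par}(\beta^-,w_\text{min}^-)$ resp.\ $\mathrm{Par}(\beta^+,w_\text{min}^+)$, $\tau$ takes values in $\N\setminus\{0,1\}$), and the stated conditions on $\tau$ in terms of $\gamma_\text{\normalfont c}$ and $\alpha_\text{\normalfont c}$ are precisely conditions \ref{2:threshold:res:2} and \ref{2:threshold:res:3} of that theorem, while condition \ref{2:threshold:res:1} handles $\gamma_\text{\normalfont c}<0$ unconditionally. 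Hence the threshold system is resilient in the sense of Definition~\ref{2:def:resilience}: for every $\epsilon>0$ there is $\delta>0$ with $n^{-1}\widetilde{\mathcal{S}}_n^{M}<\epsilon$ w.\,h.\,p.\ for all $M$ with $\P(M=0)<\delta$. Combining this with the coupling bound of the previous paragraph gives $n^{-1}\mathcal{S}_n^{M}<\epsilon$ w.\,h.\,p.\ under the same condition on $M$, which is exactly resilience of the exposure-model system.

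The main obstacle is the domination step, which is conceptually simple but must be set up carefully: one has to apply the ``any $\tau(w_i^-)-1$ exposures'' bound to the in-edges of $i$ that are actually present in the sampled graph $X$, use the \emph{same} graph and the \emph{same} mark sequence in both cascades, and check that the inductive comparison runs in the right direction. One should also dispose cleanly of the degenerate case $\tau(w_i^-)\ge n$ (bank never defaults). A pleasant point, worth noting, is that, unlike Theorems~\ref{2:thm:asymp:1}--\ref{2:thm:asymp:2}, this argument needs no regularity of the limiting distribution of $\bm{\tau}(n)$, since it never passes through the exposure model's own fixed-point function $f$.
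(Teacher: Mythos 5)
Your proposal is correct and follows essentially the same route as the paper: the capital condition forces the hypothetical thresholds $\tau_i$ of \eqref{2:ex1:perc:thres} to satisfy $\tau_i\ge\tau(w_i^-)$ a.\,s., the exposure cascade is then dominated pathwise by the threshold cascade with deterministic thresholds $\tau(w_i^-)$ on the same graph and shock, and Theorem~\ref{2:threshold:res} finishes the argument. The paper compresses all of this into two sentences; your write-up merely makes the coupling and the induction explicit (and your closing remark that no regularity of $\bm{\tau}(n)$ is needed is accurate).
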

Analogously, a robust translation of Theorems \ref{2:thm:functional:nonres} and \ref{2:thm:Pareto:type} to the exposure model is possible.

Proposition \ref{2:prop:robust:capital:requirements} requires each bank $i$ to be able to cope with default of its $\tau(w_i^-)$ largest exposures. But as we have seen in the proof of Theorem \ref{2:threshold:res}, only the thresholds and hence the capitals of large banks in the network matter for resilience. For large banks with many exposures on the other hand one can expect an averaging effect of the exposure sizes to occur if they are not too irregular. Hence, one can presume that in this case multiplying threshold values from the threshold model by average exposure sizes for each bank leads to the same resilience characteristics. We formalize this in Theorem \ref{2:cor:threshold:res} under Assumption \ref{2:ass:exposures} on the exposure sequences. This assumption is motivated by the following reasoning:

For each bank $i$, let $\{E_{j,i}\}_{j\in\N\backslash\{i\}}$ be a sequence of i.\,i.\,d.~positive random variables. Let $\lambda_i:=\E[E_{\rho_i(1),i}]<\infty$ be their mutual expectation and denote $S_k^i:=\sum_{j=1}^k E_{\rho_i(j),i}$. If there is some $t>1$ such that $\E\left[\vert E_{\rho_i(1),i}\vert^t\right]<\infty$, then by the Baum-Katz-Theorem from \cite{Baum1965} for all $\epsilon>0$,
\begin{equation}\label{2:eqn:Baum:Katz:1}
k^{t-1}\P\left(S_{k}^i\geq (1+\epsilon)k\lambda_i\right) \to 0,\quad \text{as }k\to\infty,
\end{equation}
and for all $x>1$,
\begin{equation}\label{2:eqn:Baum:Katz:2}
k^{tx-1}\P\left(S_{k}^i\geq\epsilon \lambda_i k^x\right) \to 0,\quad \text{as }k\to\infty.
\end{equation}

\begin{assumption}\label{2:ass:exposures}
Motivated by the above, we assume that for each bank $i\in[n]$ with exposure list $\{E_{j,i}\}_{j\in\N\backslash\{i\}}$ of mutual mean $\lambda_i$, we can find $t>1$ such that the convergences in (\ref{2:eqn:Baum:Katz:1}) and (\ref{2:eqn:Baum:Katz:2}) hold. Moreover, we assume them to be uniform for $i\in[n]$ (but not necessarily for $\epsilon$ or $x$).
\end{assumption}

\noindent Assumption \ref{2:ass:exposures} ensures a certain regularity of the exposures without bounding their mean.

\begin{theorem}\label{2:cor:threshold:res}
Suppose Assumption \ref{2:ass:tau} for non-decreasing $\tau:\R_{+,0}\to\N\backslash\{0,1\}$ and such that $W^-\sim\mathrm{Par}(\beta^-,w_\text{min}^-)$ and $W^+\sim\mathrm{Par}(\beta^+,w_\text{min}^+)$ with $\beta^-,\beta^+>2$, $w_\text{min}^-,w_\text{min}^+>0$. The quantities $\gamma_\text{\normalfont c}$ and $\alpha_\text{\normalfont c}$ shall be defined as in Theorem \ref{2:threshold:res}. Further, assume $c_i>\max_{j\in[n]\backslash\{i\}}E_{j,i}$ almost surely for all $i\in[n]$. Then the following holds:
\begin{enumerate}
\item \label{2:cor:threshold:res:1} If $\gamma_\text{\normalfont c}<0$, then the system is always resilient.
\end{enumerate}
Now further assume that the exposure lists $\{E_{j,i}\}_{j\in\N\backslash\{i\}}$, $i\in\N$, satisfy Assumption \ref{2:ass:exposures} for some $t>1$. Then the system is resilient if one of the following holds:
\begin{enumerate}
\setcounter{enumi}{1}
\item \label{2:cor:threshold:res:2} $\gamma_\text{\normalfont c}=0$ and there exist some $\gamma>0$ 
such that $\liminf_{w\to\infty}w^{-\gamma}\tau(w)>0$ and for all $i\in[n]$, $c_i\geq \tau(w_i^-)\lambda_i$ almost surely.
\item \label{2:cor:threshold:res:3} $\gamma_\text{\normalfont c}>0$, $\liminf_{w\to\infty}w^{-\gamma_\text{\normalfont c}}\tau(w)>\alpha_\text{\normalfont c}$ and for all $i\in[n]$, $c_i\geq \tau(w_i^-)\lambda_i$ almost surely.
\end{enumerate}
\end{theorem}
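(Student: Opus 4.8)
My plan is to deduce resilience from the criteria of Subsection~\ref{2:sec:resilience} (which apply directly to the exposure model, whose cascade is governed by the random hypothetical threshold sequence $\pmb{\tau}(n)$ of \eqref{2:ex1:perc:thres}, cf.\ Theorems~\ref{2:thm:asymp:1} and~\ref{2:thm:asymp:2}). Since $c_i>0$ surely, $\P(T=0)=0$ for the unshocked system; and, granting the regularity that makes Assumption~\ref{2:vertex:assump} applicable to $\pmb{\tau}(n)$ (it holds by the law-of-large-numbers arguments of Examples~\ref{2:example:1} and~\ref{2:example:2} once the exposure lists have the requisite exchangeability/class structure), it then suffices to verify that the function $f$ and its weak derivative $d$ attached to the limiting law $G$ of $(W^-,W^+,S,T)$ meet the hypotheses of Theorem~\ref{2:prop:res} or Theorem~\ref{2:thm:cont:res}. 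The Pareto assumptions on $W^\pm$ let us reuse the tail estimates from the proof of Theorem~\ref{2:threshold:res} when doing so.

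For Part~\ref{2:cor:threshold:res:1}, the blanket hypothesis $c_i>\max_{j}E_{j,i}$ forces $E_{\rho_i(1),i}<c_i$, hence $\tau_i\ge 2$ surely; therefore $T\ge 2$ in the limit and $\P(T=1)=0$, i.e.\ the limiting system has no contagious links. Together with $\E[W^-W^+\1\{T=1\}]=0<1$ this is precisely the situation covered — via the relaxation noted in Remark~\ref{2:rem:contagious:links} — by Part~\ref{2:threshold:res:1} of Theorem~\ref{2:threshold:res}: when $\gamma_c<0$ the resilience criterion is met and the system is resilient. No regularity of the exposures beyond Assumption~\ref{2:ass:tau} is needed at this step.

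For Parts~\ref{2:cor:threshold:res:2} and~\ref{2:cor:threshold:res:3} the crux is that, under the capital rule $c_i\ge\tau(w_i^-)\lambda_i$, a bank of large in-weight has threshold $\tau_i$ not much below $\tau(w_i^-)$ with overwhelming probability. Fix a small $\epsilon>0$. For $i$ with $\tau(w_i^-)$ large in terms of $\epsilon$ one has $\{\tau_i\le(1-\epsilon)\tau(w_i^-)\}\subseteq\{S_{k_i}^i\ge(1+\epsilon)k_i\lambda_i\}$ with $k_i:=\lceil(1-\epsilon)\tau(w_i^-)\rceil$, so Assumption~\ref{2:ass:exposures} with \eqref{2:eqn:Baum:Katz:1} (and \eqref{2:eqn:Baum:Katz:2} if $\tau$ grows faster than linearly) yields $\P(\tau_i\le(1-\epsilon)\tau(w_i^-))=o(\tau(w_i^-)^{-(t-1)})$, uniformly in $i$. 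In both parts $\tau(w)\to\infty$ — in Part~\ref{2:cor:threshold:res:2} because $\liminf_{w\to\infty}w^{-\gamma}\tau(w)>0$ forces $\liminf_{w\to\infty}\tau(w)=\infty$, in particular $>\alpha_c+1$ as required in Part~\ref{2:threshold:res:2} of Theorem~\ref{2:threshold:res} — so on the large-weight part of the network the random thresholds dominate those of the threshold model with threshold function $w\mapsto\lfloor(1-\epsilon)\tau(w)\rfloor$. Since the $\liminf$ conditions of Parts~\ref{2:cor:threshold:res:2} and~\ref{2:cor:threshold:res:3} are strict, they survive multiplication by $1-\epsilon$ for $\epsilon$ small, so that reduced threshold model is still resilient by Theorem~\ref{2:threshold:res}; the plan is to transfer this conclusion to our system and then let $\epsilon\to0$.

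The step I expect to be hardest is precisely this transfer: passing from the per-vertex concentration estimate to a statement about the limiting function $f$. The bound on $\P(\tau_i\le(1-\epsilon)\tau(w_i^-))$ is only small for vertices of large in-weight, so a crude union bound over all $n$ vertices is hopeless. As in the proof of Theorem~\ref{2:threshold:res} one splits the vertex set at an in-weight level $L$, notes that the vertices with $w_i^-\le L$ contribute only $o(z)$ to $\E[W^+\psi_T(W^-z)]$ as $z\to0+$ (since $\psi_l(W^-z)=O((W^-z)^2)$ there for $l\ge2$, which is ensured by $T\ge2$), and runs the Baum--Katz comparison only on $\{w_i^->L\}$, where the concentration bites and the Pareto tail computations of Theorem~\ref{2:threshold:res} carry over. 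Making this localization quantitative — choosing $L$ suitably, tracking the $\epsilon$-losses, and checking that the resilience criterion of Theorem~\ref{2:prop:res} or~\ref{2:thm:cont:res} holds in the limit $\epsilon\to0$ uniformly over the admissible shocks $M$ (which those theorems already handle) — is the bulk of the technical work.
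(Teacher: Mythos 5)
Your overall architecture is the right one and matches the paper's: Part~\ref{2:cor:threshold:res:1} is exactly as in the paper ($c_i>\max_j E_{j,i}$ forces $\tau_i\geq2$, so the argument of Part~\ref{2:threshold:res:1} of Theorem~\ref{2:threshold:res} applies verbatim), and for Parts~\ref{2:cor:threshold:res:2} and~\ref{2:cor:threshold:res:3} the goal is indeed to show $\limsup_{z\to0+}z^{-1}f(z)<0$ and invoke Theorem~\ref{2:thm:cont:res}, reusing the Pareto tail computations and isolating the vertices whose realized threshold is anomalously small.

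The gap is in how you treat that exceptional set. Your plan rests on the single dichotomy $\tau_i\gtrless(1-\epsilon)\tau(w_i^-)$ together with the bound $\P(\tau_i\le(1-\epsilon)\tau(w_i^-))=o(\tau(w_i^-)^{-(t-1)})$ from \eqref{2:eqn:Baum:Katz:1}, after which you would compare with the deterministic threshold model $w\mapsto\lfloor(1-\epsilon)\tau(w)\rfloor$. But on the complement event the realized threshold can be as small as $2$, so the exceptional contribution to $z^{-1}f(z)$ is only controlled by
\[
z^{-1}\,\E\big[W^+\psi_2(W^-z)\,(W^-)^{-\gamma_{\mathrm c}(t-1)}\big],
\]
and a comonotone Pareto computation (using $\psi_2(x)\lesssim x^2$ on $W^-\lesssim z^{-1}$ and $\psi_2\le1$ beyond) shows this behaves like $z^{\,\gamma_{\mathrm c}(2-t)+o(1)}$ near $z=0$; for $t$ close to $1$ this does \emph{not} vanish relative to $z$, so the dichotomy cannot close the argument. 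The missing idea is to stratify the exceptional event according to \emph{how} small the threshold is: the paper splits $\{T\le(1+\epsilon)\alpha_{\mathrm c}(W^-)^{\gamma_{\mathrm c}}\}$ into bands $\{(W^-)^{(N-1)\delta}<T\le(W^-)^{N\delta}\}$ with $0<\delta<\gamma_{\mathrm c}(t-1)$, uses \eqref{2:eqn:Baum:Katz:2} (not merely \eqref{2:eqn:Baum:Katz:1}) to get the much stronger bound $\P(\tau_i\le(w_i^-)^{N\delta})\lesssim(w_i^-)^{N\delta-t\gamma_{\mathrm c}}$ — small thresholds require the sum of \emph{few} exposures to exceed a large multiple of its mean — and pairs each band with the observation that $\psi_{(W^-)^{(N-1)\delta}}(W^-z)$ is negligible unless $W^->z^{1/((N-1)\delta-1)}$, so the probability gain and the restriction on the contributing weights together yield $o(z)$ band by band. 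You relegate \eqref{2:eqn:Baum:Katz:2} to the case of superlinear $\tau$, which misreads its role; without the stratification your transfer step fails, and this is the substance of the proof rather than routine bookkeeping.
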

Theorem \ref{2:cor:threshold:res} provides the banks with a formula that is easy to use and only requires the regulator to announce $\alpha_\text{c}$ and $\gamma_\text{c}$. Resilient capital requirements are then determined according to average exposure size $\lambda_i$ and number of exposures $d_i^-\sim w_i^-$. Since the average exposure size $\lambda_i$ is proportional to $(d_i^-)^{-1}$ while the factor $\alpha_\text{c}(d_i^-)^{\gamma_\text{c}}$ is sublinear in $d_i^-$, in particular a deconcentration of loans is favorable for the banks to reduce systemic risk charges.

\begin{remark}
Theorem \ref{2:cor:threshold:res} extends Theorem \ref{2:threshold:res} to the exposure model under Assumption \ref{2:ass:exposures} for the exposure sequences. By the same means, also Theorems \ref{2:thm:functional:nonres} and \ref{2:thm:Pareto:type} can be extended.
\end{remark}

\section{Simulation Study}\label{2:simulation:study}
All previous chapters have been formulated in the limit as the number of banks $n$ tends to $\infty$ and the fraction of initially defaulted banks $p$ tends to 0. It is hence reasonable to investigate whether the results are good approximations also for real networks which are finite with only a few thousand institutions and experience a shock of a positive fraction of banks. Since our model is based on the non-observable weight-parameters, one would have to estimate them from the degree-sequences which are observable for real network configurations at least by regulating institutions. Hence, we will start this section by a short note on weight-estimation. Since specific transactions between banks are not disclosed to the public there is no data basis for us to investigate real networks, however. Instead, we will subsequently discuss our findings by simulating networks. For simplicity we consider the final default fraction $n^{-1}\vert\mathcal{D}_n\vert$ as the systemic risk measure, i.\,e.~$s_i=1$ for all banks $i\in[n]$.

\subsection{Estimation of Weights}\label{2:ssec:weights:estimation}
Since the weight sequences are not directly observable from real networks, we give a few lines here on how to estimate them from data that is observable. First note that for a network $G$ of size $n$ with edge set $E(G)$ 
the likelihood of weight sequences $\bm{w}^-=(w_1^-,\ldots,w_n^-)$ and $\bm{w}^+=(w_1^+,\ldots,w_n^+)$ is given by
\[ L(w_1^-,w_1^+,\ldots,w_n^-,w_n^+\mid E(G)) = \prod_{(i,j)\in E(G)}\left(\frac{w_i^+w_j^-}{n}\wedge1\right)\prod_{\substack{(i,j)\not\in E(G)\\i\neq j}}\left(1-\frac{w_i^+w_j^-}{n}\wedge1\right). \]
One can always derive the maximum-likelihood estimators $\hat{w}_1^-,\ldots,\hat{w}_n^-,\hat{w}_1^+,\ldots,\hat{w}_n^+$ by numerically maximizing $L$. 
In order to obtain some intuition about them, we further want to derive an approximation of the estimators. For this, we assume that $w_i^+w_j^-\ll n$ for all $i,j\in[n]$ which is a reasonable assumption at least when $W^+$, $W^-$ are square-integrable. We can hence approximate
\[ L(w_1^-,w_1^+,\ldots,w_n^-,w_n^+\mid E(G)) \approx \frac{1}{n^s} \prod_{i\in[n]}\left(w_i^-\right)^{d_i^-}\left(w_i^+\right)^{d_i^+} \exp\left(-w_i^+\frac{\sum_{j\in[n]}w_j^-}{n}\right), \]
where $s:=\sum_{i\in[n]}d_i^-=\sum_{i\in[n]}d_i^+$. By the product form $w_i^+w_j^-$ in (\ref{2:conn:prob}), we are free to multiply all out-weights $w_i^+$ by some constant $\eta$ if, at the same time, we multiply all in-weights by its inverse $\eta^{-1}$. Motivated by the fact that $\sum_{i\in[n]}d_i^-=\sum_{i\in[n]}d_i^+$, we use this degree of freedom to set $\sum_{i\in[n]}w_i^-=\sum_{i\in[n]}w_i^+$ and want to maximize the approximated likelihood function under this constraint. (Other constraints, such as $\sum_{i\in[n]}w_i^-=\mathrm{const}$, are also possible and lead to the same result in the end.) By Lagrange's multiplier method this leads to a maximization of
\[ \prod_{i\in[n]}\left(w_i^-\right)^{d_i^-}\left(w_i^+\right)^{d_i^+} \exp\left(-w_i^+\frac{\sum_{j\in[n]}w_j^-}{n}\right) + \lambda\left(\sum_{k\in[n]}w_k^--w_k^+\right). \]
Differentiating with respect to $w_l^-$ resp.~$w_l^+$ for all $l\in[n]$, we are left with solving the equations
\[ 0 = \prod_{i\in[n]}\left(w_i^-\right)^{d_i^-}\left(w_i^+\right)^{d_i^+} \exp\left(-w_i^+\frac{\sum_{j\in[n]}w_j^-}{n}\right) \left(\frac{d_l^-}{w_l^-}-\frac{\sum_{k\in[n]}w_k^+}{n}\right) + \lambda \]
respectively
\[ 0 = \prod_{i\in[n]}\left(w_i^-\right)^{d_i^-}\left(w_i^+\right)^{d_i^+} \exp\left(-w_i^+\frac{\sum_{j\in[n]}w_j^-}{n}\right) \left(\frac{d_l^+}{w_l^+}-\frac{\sum_{k\in[n]}w_k^-}{n}\right) - \lambda. \]
In particular, $d_l^-/w_l^-$ resp.~$d_l^+/w_l^+$ must be independent of $l$ and we can thus find constants $\lambda^-$ and $\lambda^+$ such that $w_l^-=\lambda^- d_l^-$ and $w_l^+=\lambda^+ d_l^+$. Using the constraints $\sum_{i\in[n]}w_i^-=\sum_{i\in[n]}w_i^+$ and $\sum_{i\in[n]}d_i^-=\sum_{i\in[n]}d_i^+$, we obtain that $\lambda=0$ and $\lambda^-=\lambda^+=\sqrt{n/\sum_{i\in[n]}d_i^-}$ such that the approximated likelihood function is maximized by
\[ w_i^-=d_i^-\sqrt{\frac{n}{\sum_{j\in[n]}d_j^-}},\qquad w_i^+=d_i^+\sqrt{\frac{n}{\sum_{j\in[n]}d_j^-}}. \]
That is, the approximated weight estimators are proportional to the observed degrees and only normalized in a certain sense. The normalization is necessary due to our choice of $p_{i,j}$ in (\ref{2:conn:prob}). If we had chosen $p_{i,j}=1\wedge w_i^+w_j^-/\sum_{k\in[n]}w_k^+$ instead for example, then $w_i^-$ and $w_i^+$ could be interpreted directly as expected degrees and estimated by $d_i^-$ respectively $d_i^+$. All previous calculations would need to be adjusted by a factor $n/\sum_{k\in[n]}w_k^+\approx1/\E[W^+]$ but analogous results would still hold.

The smaller the observed fraction $\max_{i,j\in[n]}d_i^+d_j^-/\sum_{k\in[n]}d_k^-$, the better is above approximation of $w_i^+w_j^-=d_i^+d_j^-n/\sum_{k\in[n]}d_k^-\ll n$. On networks where $\max_{i,j\in[n]}d_i^+d_j^-/\sum_{k\in[n]}d_k^-$ is large, $\bm{w}^-$ and $\bm{w}^+$ have to be estimated numerically.

\subsection{Simulations for the Threshold Model}
For our simulations, we make use of the findings in \cite{Cont2013} that the empirical in- and out-degrees as well as the exposure sizes in the Brazilian banking network are power law distributed. For November 2008, the authors of \cite{Cont2013} estimated the power law exponents $\beta^-=2.132$ and \linebreak$\beta^+=2.8861$ for the degree sequences and $\xi=2.5277$ for the exposures. In our weight-based model, these degree distributions are obtained by choosing in- and out-weights power law distributed with exponents $\beta^-$ and $\beta^+$ as well. In addition to this, we assume them to be comonotone and Pareto distributed with minimal weights $w_\text{min}^-=w_\text{min}^+=1$. 

In a first simulation, we consider a threshold model with above weight parameters and assume absence of contagious links but nothing more. That is, we set $\tau_i=2$ for all $i\in[n]$. In order to start the cascade process, we assume initial default of $p=1\%$ uniformly chosen banks in the network. We then simulate the default process for $n\in\{100k\,:\, k\in[100]\}$ and $100$ different configurations of the random network for each $n$. The results for the final fraction of defaulted banks are plotted in Figure \ref{2:fig:Convergence}. As can be seen from Figure \ref{2:fig:TheoreticalFraction}, the theoretical value of the final default fraction as $n$ tends to infinity can be determined to be approximately $84.54\%$. This value is drawn as a red line in Figure \ref{2:fig:Convergence}. Already for small $n$, most of the simulations yield results that are close to this theoretical value and the networks can hence be understood as being non-resilient. As $n$ grows to $10^4$ the final fractions become even more precise. In particular, there is not a single resilient sample anymore for $n\geq500$. Here and in the following, by resilience for finite networks and positive shocks sizes we informally mean that the final default fraction is small compared to some reference value (here $84.54\%$).

\begin{figure}[t]
    \hfill\subfigure[]{\includegraphics[width=0.4\textwidth]{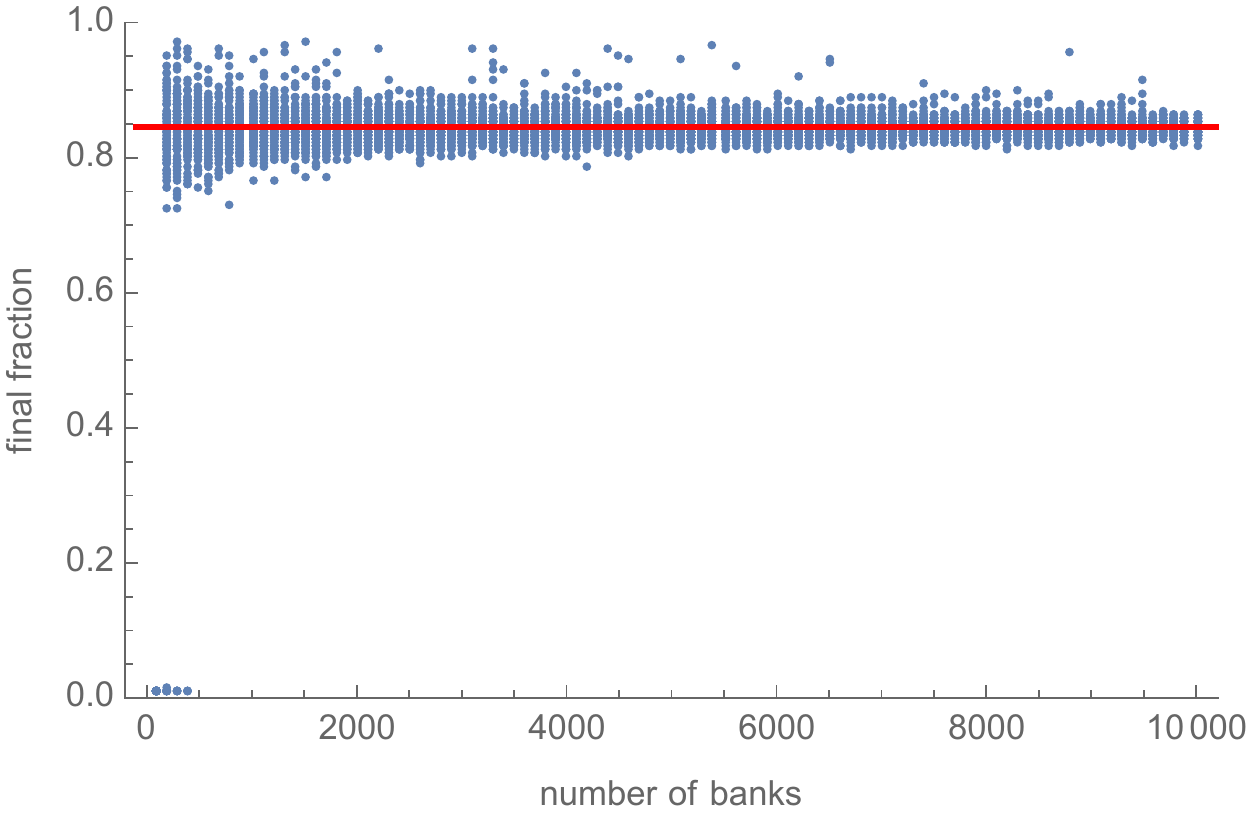}\label{2:fig:Convergence}}
    \hfill\subfigure[]{\includegraphics[width=0.4\textwidth]{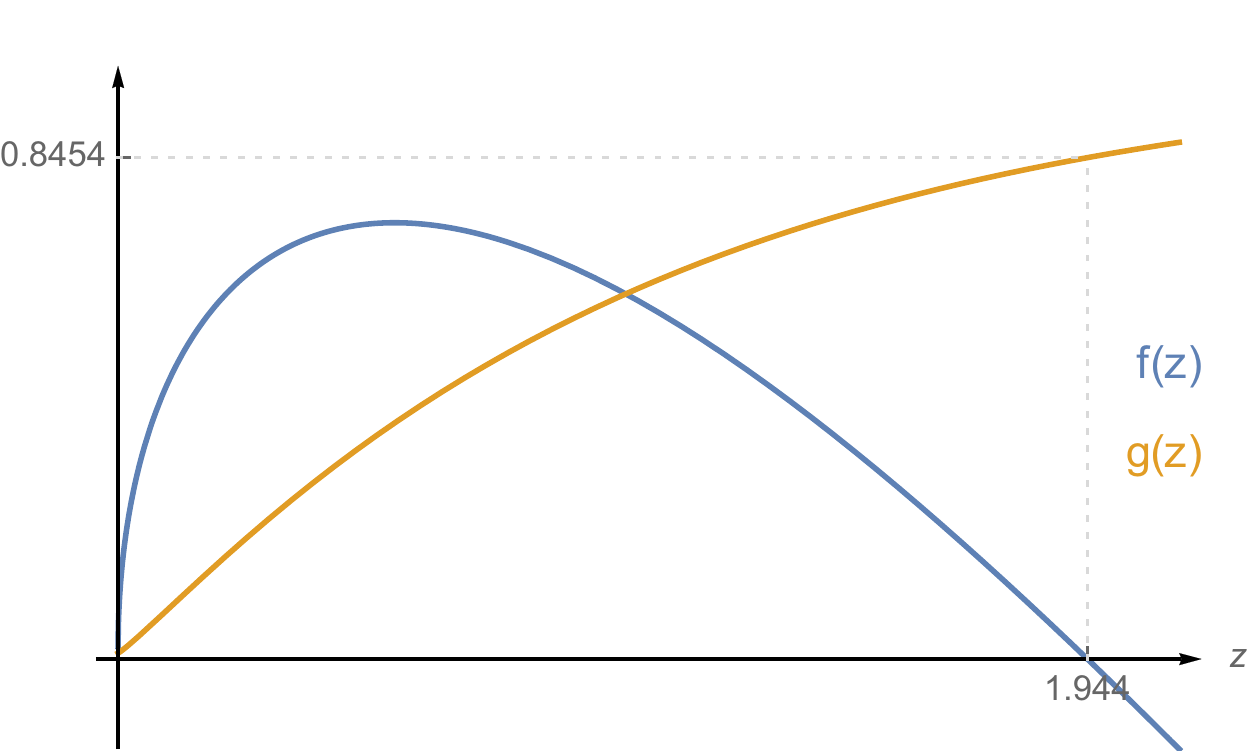}\label{2:fig:TheoreticalFraction}}\hfill
\caption{(a) Convergence of the final fraction of defaulted banks in the threshold model for networks of finite size. (b) Determination of the theoretical final default fraction in the threshold model for networks whose sizes grow to infinity and with $p=1\%$ initial defaults and constant threshold $2$. Blue: $f(z)=(1-p)\E[W^+\psi_2(W^-z)]+p\E[W^+]-z$ with root $\hat{z}\approx 1.94433$. Orange: $g(z)=(1-p)\E[\psi_2(W^-z)]+p$ with $g(\hat{z})\approx 0.845434$.}\label{2:fig:Convergence:TheoreticalFraction}
\end{figure}
\begin{figure}[t]
	\hfill\subfigure[]{\includegraphics[width=0.4\textwidth]{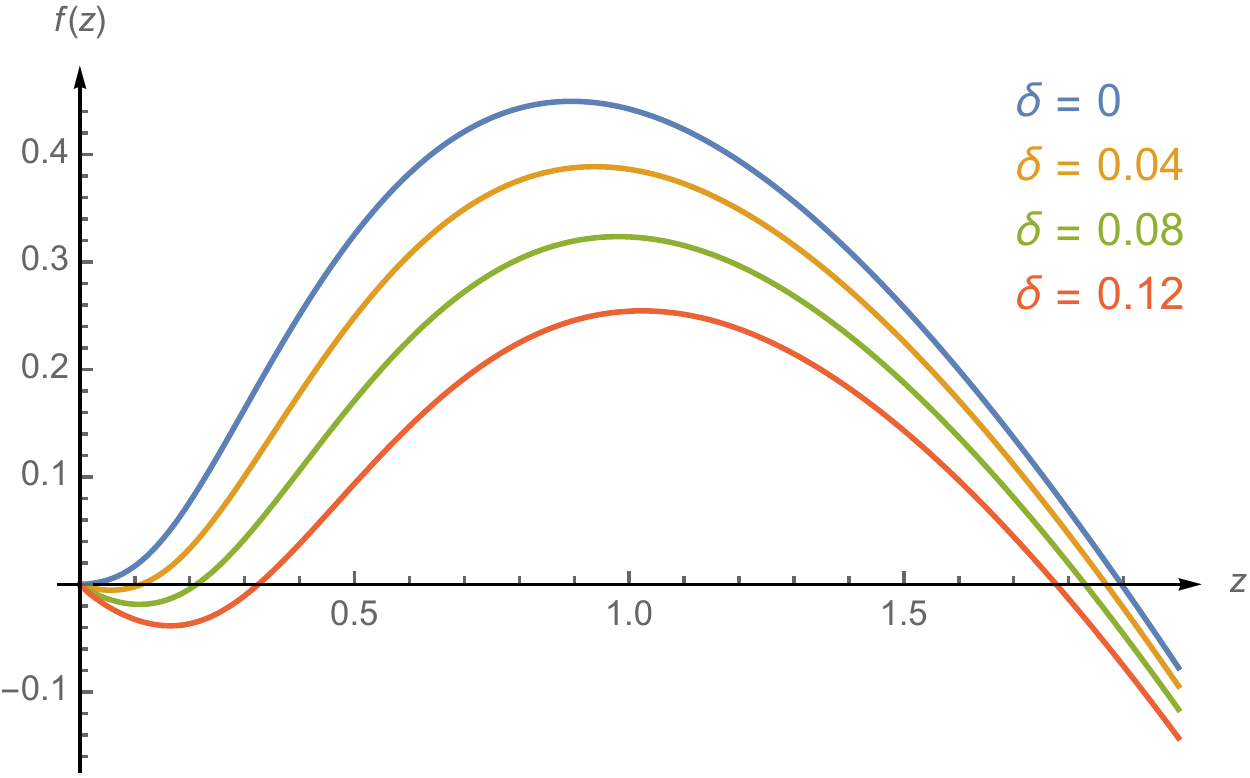}\label{2:fig:influenceOfDelta}}
    \hfill\subfigure[]{\includegraphics[width=0.4\textwidth]{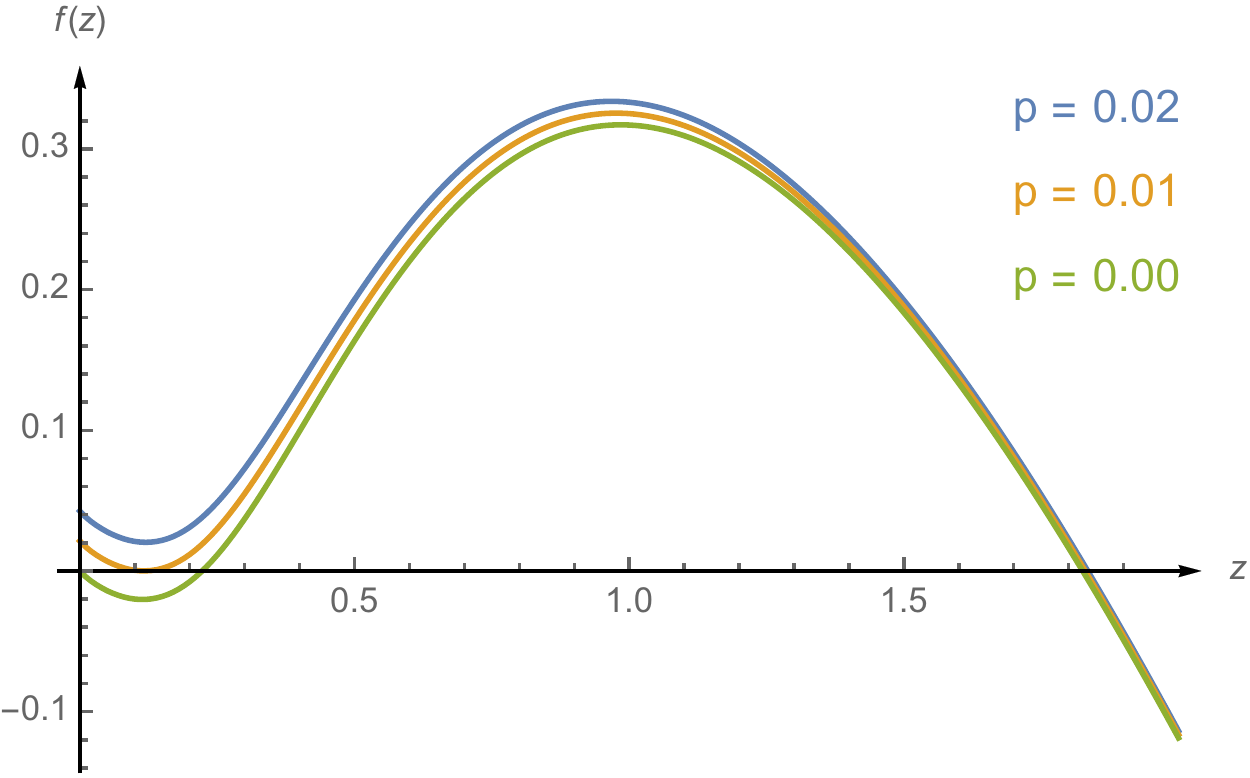}\label{2:fig:influenceOfP}}\hfill
\caption{(a) Influence of $\delta$ on the shape of $f(z)=\E[W^+\psi_T(W^-z)]-z$ with capital requirements $\tau_i=\max\{2,\lfloor (\alpha_\text{c}(1+\delta) (w_i^-)^{\gamma_\text{c}(1+\delta)}\rfloor\}$. (b) Influence of $p$ on the shape of function $f(z)=(1-p)\E[W^+\psi_T(W^-z)]+p\E[W^+]-z$ for the example of $\delta=0.0839$.}\label{2:fig:influenceOfDelta:influenceOfP}
\end{figure}

Instead of the absence of contagious links, Theorem \ref{2:threshold:res} predicts certain threshold requirements to make our network model resilient to small initial shocks. Keeping above network parameters unchanged, we compute $\alpha_\text{c}\approx 2.13$ and $\gamma_\text{c}\approx0.468$. A natural choice for the threshold of bank $i\in[n]$ is then $\tau_i=\max\{2,\lfloor \alpha (w_i^-)^\gamma\rfloor\}$, where $\alpha=\alpha_\text{c}(1+\delta)$, $\gamma=\gamma_\text{c}(1+\delta)$ and $\delta\in[-1,\infty)$ denotes a (possibly negative) buffer. By Theorems \ref{2:threshold:res} and \ref{2:thm:functional:nonres}, networks are resilient to initial shocks for $\delta>0$ and non-resilient for $\delta<0$. The influence of $\delta$ on $f(z)$ can be seen in Figure \ref{2:fig:influenceOfDelta}. In particular, one notes that resilience for positive $\delta$ stems from the negative hump of $f(z)$ subsequent to zero. Further note, however, that resilience is only guaranteed to shocks whose size tends to zero. Even networks, where the number of banks tends to infinity but which are shocked by a strictly positive initial default fraction $p$, will only be resilient for $\delta>\delta_p$ for a certain $\delta_p>0$. This is because $f(z)$ depends on $p$ by $f(z)=(1-p)\E[W^+\psi_T(W^-z)]+p\E[W^+]-z$ if a uniformly chosen fraction $p$ of all banks in the network defaults at the beginning. The influence of $p$ on $f(z)$ can be seen in Figure \ref{2:fig:influenceOfP}. In order for a network to be resilient to an initial shock of $p$ the hump subsequent to $0$ needs to become negative in Figure \ref{2:fig:influenceOfP}. By this, it is always possible to determine the least necessary buffer $\delta$ to make a system resilient to a shock of initial default fraction $p$ numerically (see Table \ref{2:tab:deltaFromP} for the corresponding values $\delta_p$ for $p=0.001k$, $k\in[10]$). Note that a buffer of $\delta=0.0839$ yields $\alpha=2.31$ and $\gamma=0.507$ and hence the thresholds required to make the system resilient to shocks of $1\%$ are still strongly sublinear.
\begin{table}[h]
	\caption{List of values for buffer $\delta$ corresponding to initial default of $\lfloor pn\rfloor$ banks}
		\hfill\begin{tabular}{r|c|c|c|c|c|c|c|c|c|c}
			$p$ $[\%]$ & 0.1 & 0.2 & 0.3 & 0.4 & 0.5 & 0.6 & 0.7 & 0.8 & 0.9 & 1.0\\\hline
			$\delta_p$ $[\%]$ & 2.35 & 3.44 & 4.30 & 5.04 & 5.71 & 6.36 & 6.89 & 7.42 & 7.91 & 			8.39
		\end{tabular}
	\label{2:tab:deltaFromP}
	\hfill
	\vspace{-0mm}
\end{table}

\begin{figure}[t]
    \hfill\subfigure[]{\includegraphics[width=0.4\textwidth]{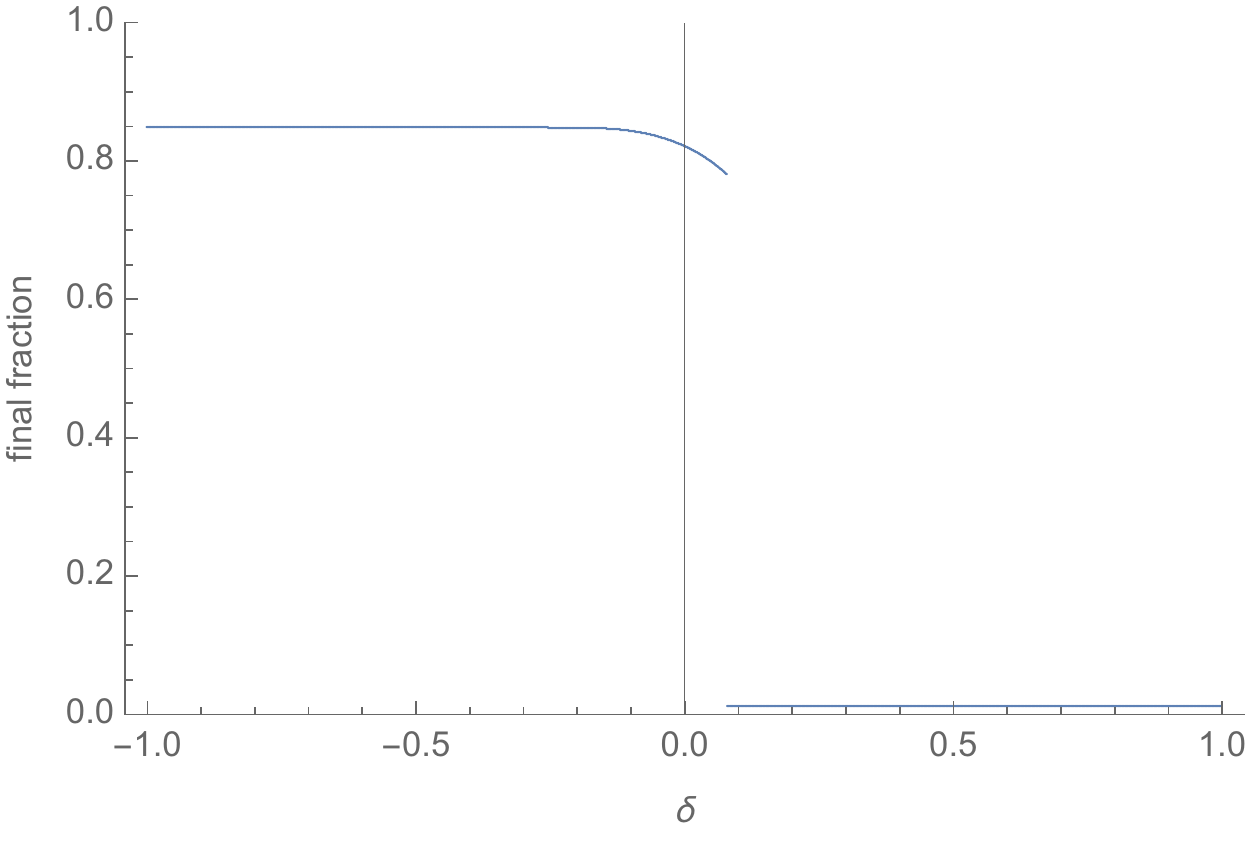}\label{2:fig:SimulationDelta1000000}}
    \hfill\subfigure[]{\includegraphics[width=0.4\textwidth]{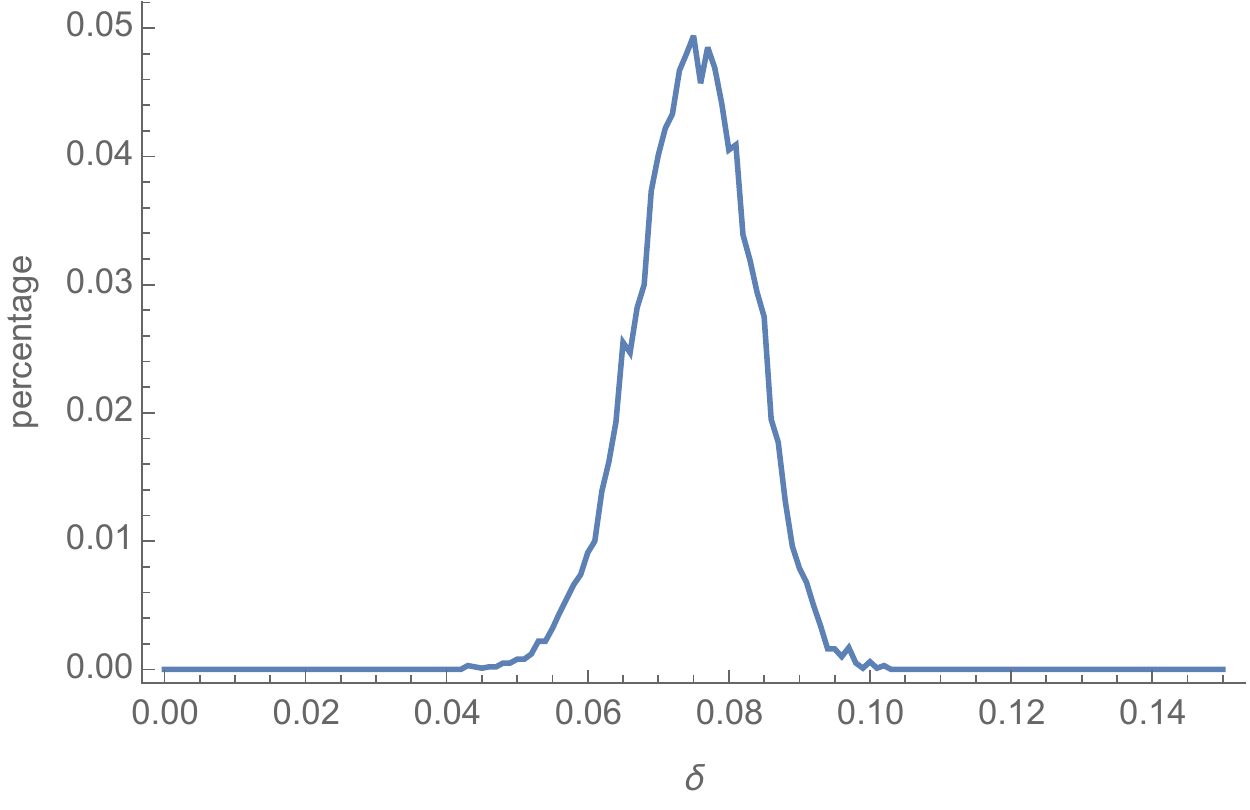}\label{2:fig:DeltaDistribution_0.01_1000000}}\hfill
\caption{(a) A typical result for the final fraction in a network of $10^6$ banks with initial default fraction of $p=1\%$ as $\delta$ varies between $-1$ and $1$ in steps of $10^{-3}$. (b) The distribution of jump points for $10^4$ networks of size $n=10^6$ with initial default fraction $p=1\%$.}\label{2:fig:SimulationDelta1000000:DeltaDistribution_0.01_1000000}
\vspace{0.5cm}
	\hfill\includegraphics[width=0.4\textwidth]{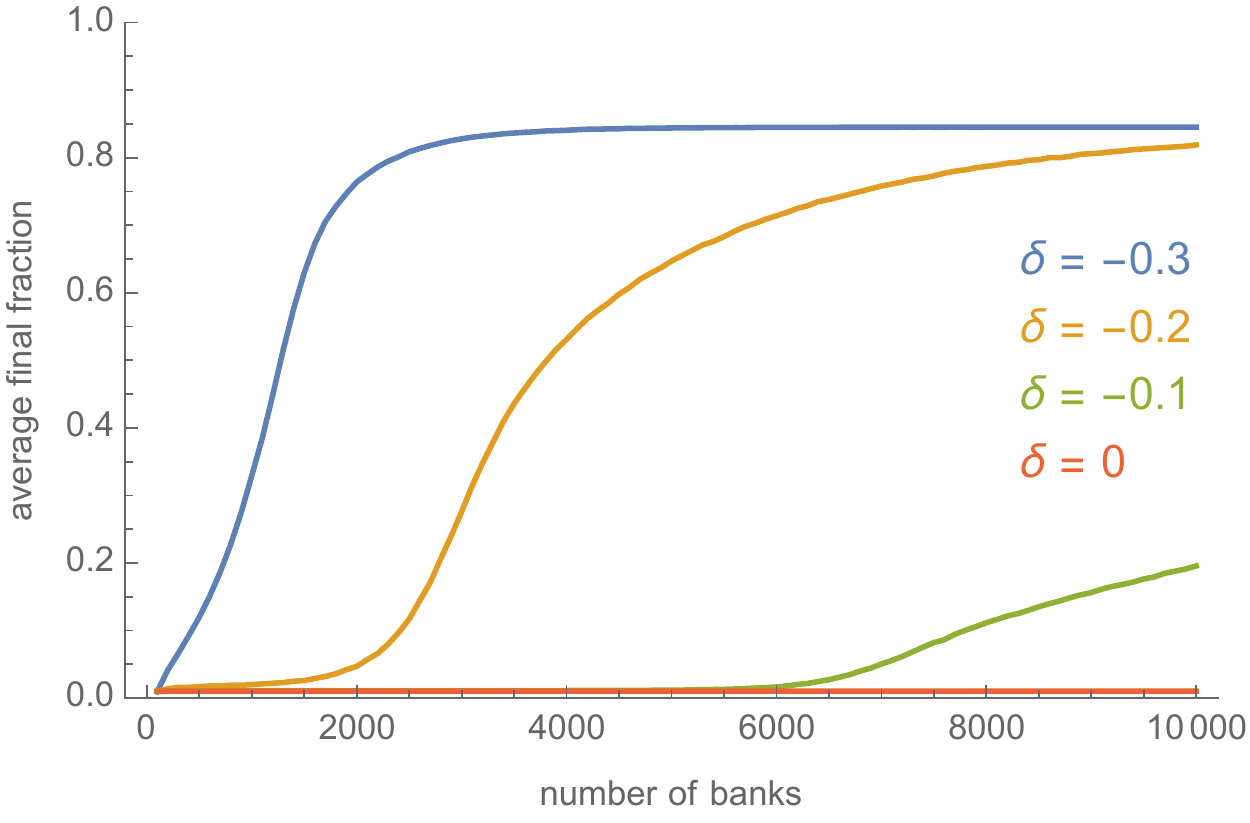}\hfill
		\caption{Average final fraction of defaulted banks in finite networks}
		\label{2:fig:gemittelt}
\end{figure}

\pagebreak
We want to verify above results by simulations. For this, we simulate a very large network consisting of $n=10^6$ banks and keeping the network topology constant we let $\delta$ vary between $-1$ and $1$ in steps of $10^{-3}$. For each simulated network, we then find that it becomes resilient for $\delta$ large enough. This becomes visible by a jump of the final fraction of defaulted banks at this particular $\delta$ as illustrated in Figure \ref{2:fig:SimulationDelta1000000} for a sample network. The jump shows that in the end it is only one bank whose default lets the whole system crash.

Keeping track of the values of $\delta$ at which the final fraction drops near $p=1\%$ for $10^4$  simulated networks yields the distribution shown in Figure \ref{2:fig:DeltaDistribution_0.01_1000000}. It shows a peak at about \mbox{$\delta=0.076$} and hence supports our theoretical findings from above. Deviations from the theoretical value $\delta_{0.01}\approx 0.0839$ are small and can be explained by the finite (albeit very large) network size.

Having looked at the theoretical capital requirements for very large networks, it is now sensible to turn our attention to networks of a few thousand banks as they arise in the real world. Figure \ref{2:fig:gemittelt} shows the final fraction in networks of size $n\leq10^4$ with initial default fraction $p=0.01$ for $\delta$ between $-0.3$ and $0$. For each $n$, we averaged over $10^5$ simulations. The figure shows that networks of size $n\leq10^4$ are already resilient for $\delta=0$. Even for $\delta=-0.2$ the network is rather resilient if $n\leq2,000$ resp.~for $\delta=-0.1$ if $n\leq6,000$. That is, our result is robust in the sense that already lower threshold requirements are sufficient to make the systems resilient to small shocks. The deviations stem from the relatively small network sizes of only a few thousand. Here, rare extreme values of vertex weights fail to appear despite the missing second moment condition or those large banks are not infected by the uniform initial infection. 

For managing systemic risk in real networks it might, however, be of interest not only how some uniform initial default influences the system but also how the default of the largest banks does. In a further simulation, we hence choose the $\lfloor pn\rfloor$ largest (by weights) banks in the network to default at the beginning. The function $f(z)$ then qualitatively keeps its shape as in Figure \ref{2:fig:influenceOfDelta:influenceOfP} but is shifted upwards. Again, we can compute corresponding values of $\delta$ and $p$ numerically. We list our results in Table \ref{2:tab:deltaFromPWithLargest}. As one expects, the values of $\delta_p$ are larger in this case than the ones we obtained for uniform infection in Table \ref{2:tab:deltaFromP}, but only by a factor of about $2$ and as before the resulting capital requirements are strongly sublinear.

\begin{table}[t]
	\caption{List of values for buffer $\delta$ corresponding to initial default of the $\lfloor pn\rfloor$ largest banks}
		\hfill\begin{tabular}{r|c|c|c|c|c|c|c|c|c|c}
		$p$ $[\%]$& 0.1 & 0.2 & 0.3 & 0.4 & 0.5 & 0.6 & 0.7 & 0.8 & 0.9 & 1.0\\\hline
		$\delta_p$ $[\%]$& 4.09 & 6.05 & 7.61 & 8.90 & 10.0 & 11.0 & 11.9 & 12.7 & 13.4 & 14.1
		\end{tabular}\hfill
	\label{2:tab:deltaFromPWithLargest}
\end{table}

\subsection{Simulations for the Exposure Model}

We can now turn to the simulation of a weighted network as in the exposure model. In addition to the network parameters of the threshold model in the previous section, we assume that for $i\neq j$, exposures $E_{j,i}$ are given by $E_{j,i}\stackrel{d}{=}E_i$ for Pareto distributed random variables $E_i$ with exponent $\xi=2.5277$ as in \cite{Cont2013} and minimal value $E_{\text{min},i}$. The exposures are assumed independent of each other and the network topology. The minimal exposures $E_{\text{min},i}$, can be chosen arbitrarily since they act as a constant factor for all exposures $E_{j,i}$ and capital $c_i$.

\begin{figure}[t]
    \hfill\subfigure[]{\includegraphics[width=0.4\textwidth]{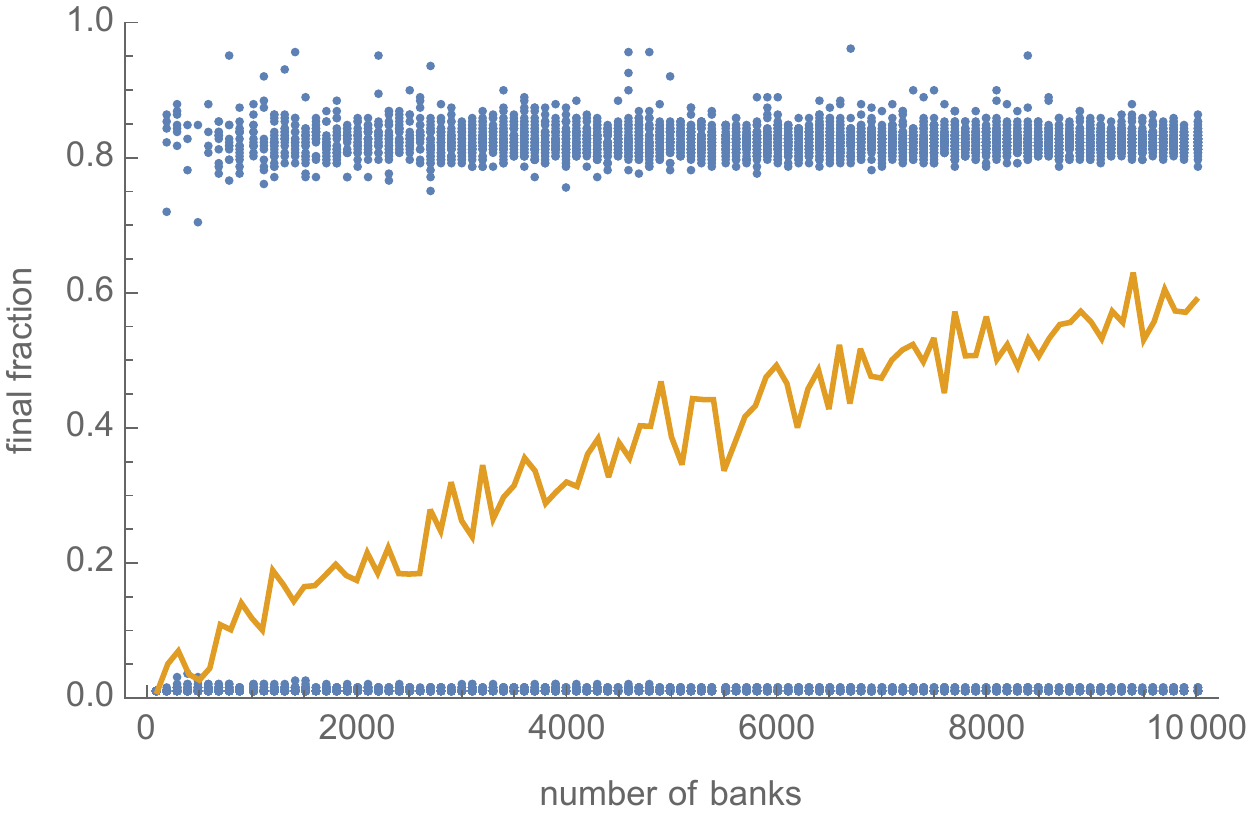}\label{2:fig:exposure:model}}
    \hfill\subfigure[]{\includegraphics[width=0.4\textwidth]{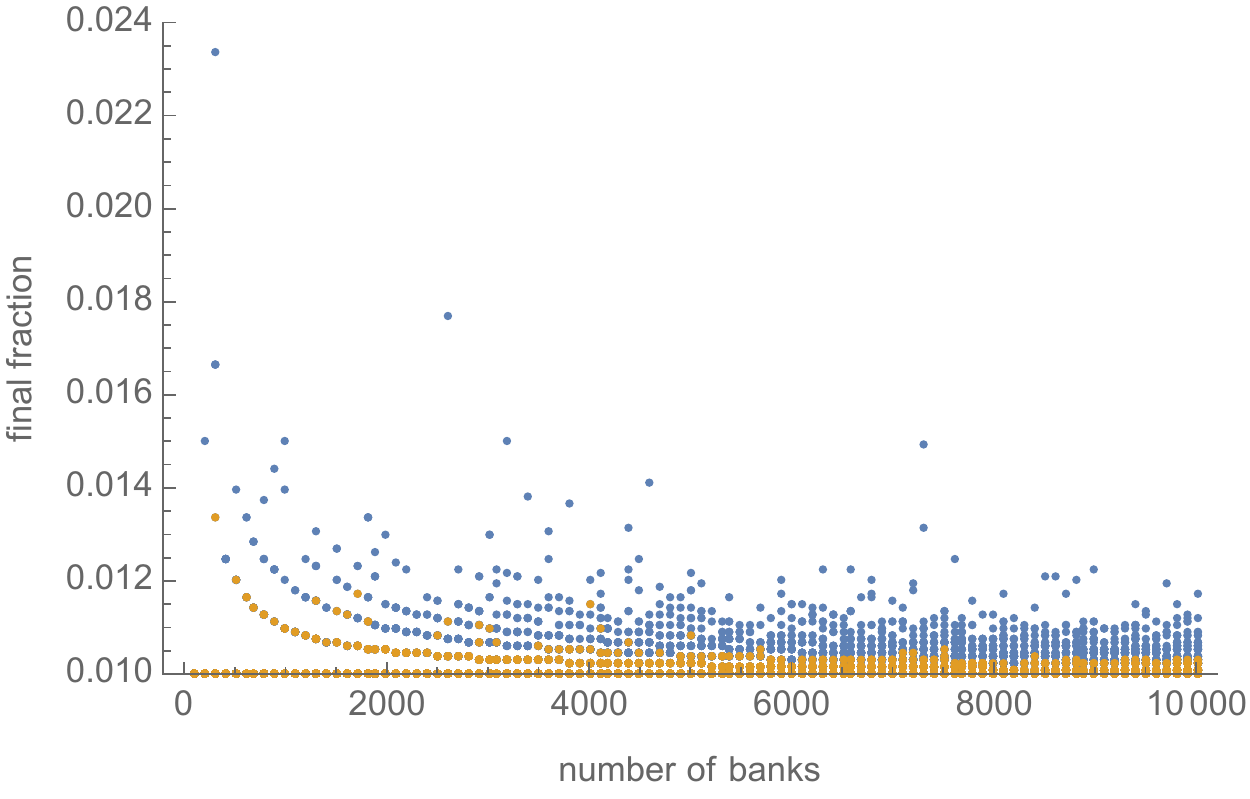}\label{2:fig:capital:requirements}}\hfill
\caption{(a) Scatter plot of the final fraction of defaulted banks for weighted networks of finite size without contagious links. Orange: average over all $100$ configurations for \mbox{each size.} \mbox{(b) Scatter} plot of the final fraction of defaulted banks for weighted networks of finite size. Blue: Capitals determined by Theorem \ref{2:cor:threshold:res}. Orange: Capitals determined by Proposition \ref{2:prop:robust:capital:requirements}}
		\label{2:fig:exposure:model:capital:requirements}
\end{figure}

In a first simulation, again we assume absence of contagious links but nothing more. That is, we first simulate the network skeleton and the edge-weights independently and then determine the banks' capitals as their largest exposure value plus some small buffer $\epsilon>0$. For our simulation, we choose $\epsilon=10^{-3}\E[E_i]=10^{-3}E_{\text{min},i}(\xi-1)/(\xi-2)$. As before, we assume initial default of $p=1\%$ uniformly chosen banks in the network and simulate the default process for $n\in\{100k\,:\, k\in[100]\}$ and $100$ different configurations of the random network for each $n$. The results for the final fraction of defaulted banks are plotted in Figure \ref{2:fig:exposure:model}. We notice that already for small network sizes there are some non-resilient network samples with final default fraction of about $80\%$. As the number of banks $n$ grows, also the probability that the networks are non-resilient significantly increases. This can be seen from the orange curve in Figure \ref{2:fig:exposure:model} which shows the average final default fraction taken over all $100$ configurations. The simulation supports our analytical result that for networks without a second moment condition on their degree sequences, simply the absence of contagious links does not ensure resilience.

In a second simulation, we keep the network topology and the exposure sizes from the first simulation unchanged and choose capitals according to the formula in Proposition \ref{2:prop:robust:capital:requirements} with $\tau(w)=\max\{2,\lfloor\alpha w^\gamma\rfloor\}$ for $\alpha=\alpha_\text{c}(1+\delta)$, $\gamma=\gamma_\text{c}(1+\delta)$ and $\delta=8.39\%$ as in Table \ref{2:tab:deltaFromP}. As can be seen from Figure \ref{2:fig:capital:requirements}, already for typical network sizes of less than $10^4$, these capital allocations make the system resilient (note the axis scale). The maximal final fraction we observed was given by $1.33\%$. As mentioned before, the capital requirements in Proposition \ref{2:prop:robust:capital:requirements} are too robust in general, however. In another simulation, we hence choose capitals as determined in Theorem \ref{2:cor:threshold:res} again for $\tau(w)=\max\{2,\lfloor\alpha w^\gamma\rfloor\}$. Figure \ref{2:fig:capital:requirements} shows that under these requirements the fundamental defaults still do not spread through the network. All observed final fractions were less or equal $2.33\%$. However, keeping track of the total capitalization of the system further reveals that the capital requirements from Theorem \ref{2:cor:threshold:res} only amount to about $61\%$ of the ones from Proposition \ref{2:prop:robust:capital:requirements} for our chosen network parameters.

\section{Proofs}\label{2:sec:proofs}
\subsection{Proofs for Section \ref{2:random:graph}}\label{2:ssec:proofs:2}
\begin{proof}[Proof of Lemma \ref{2:lem:f:continuous}]
Continuity of $f$ follows directly from Lebesgue's dominated convergence, noting that $W^+$ is integrable by Assumption \ref{2:vertex:assump}. Further, \mbox{$f(0)=\E[W^+\1\{T=0\}]>0$} and $\lim_{z\to\infty}f(z;(W^-,W^+,T))=-\infty$. Hence by the intermediate value theorem function $f$ must have a positive root $\hat{z}$. Representation (\ref{2:eqn:integral:representation}) follows by an application of Fubini's theorem:
\begin{align}\label{2:eqn:integral:representation}
f(z) &= \E\left[W^+\1\{T=0\} + \int_0^z W^-W^+\P\left(\mathrm{Poi}(W^-\xi)=T-1\right)\1\{T\geq1\}\,\dd\xi\right]-z\nonumber\\
&= \E\left[W^+\1\{T=0\}\right] + \int_0^z\left(\E\left[W^-W^+\phi_T(W^-\xi)\right]-1\right)\dd\xi\qedhere
\end{align}
\end{proof}

\begin{proof}[Proof of Theorem~\ref{2:thm:asymp:1}]
We want to make use of Theorem \ref{2:thm:threshold:model} for the threshold model. 
Thus we describe an alternative description of default contagion compared to Subsection \ref{2:ssec:default:contagion:systemic:importance}:

At the beginning we declare all initially defaulted vertices to be \emph{defaulted} but yet \emph{unexposed}. At each step, a single defaulted, unexposed vertex $i \in [n]$ is picked and exposed to its neighbors, i.\,e.~weighted edges to its neighbors are drawn. If bank $j$ goes bankrupt due to the new edge that is sent from $i$, it is added to the set of defaulted, unexposed vertices. Otherwise, the capital of $j$ is reduced by the amount $e_{i,j}$. Afterwards, we remove $i$ from the set of unexposed vertices.

We keep track of the following sets and quantities at different steps $0\leq t\leq n-1$:
\begin{enumerate}[leftmargin=*,label=\alph*.]
\item $U(t)\subset[n]$: the unexposed vertices at step $t$. We set $U(0):=\{ i \in [n] \,:\, c_i=0 \}$.
\item $N (t)\subset[n]$: the solvent vertices at step $t$. At $t=0$, we set $N(0):=[n]\backslash U(0)$.
\item The updated capitals $\{\tilde{c}_i (t) \}_{i\in [n]}$ with $\tilde{c}_i(0)=c_i$ for all $i\in[n]$.
\end{enumerate}

\pagebreak
At step $t\in [n-1]$ the sets and quantities are updated according to the following scheme:
\begin{enumerate}[leftmargin=*]
\item \label{2:chose:rule} Choose a vertex $v\in U(t-1)$ according to any rule.
\item \label{2:update:capitals:rule} Expose $v$ to all of its neighbors in $N (t-1)$. That is, for all vertices $w\in N(t-1)$ set $\tilde{c}_w (t):=\max \{ 0 ,\tilde{c}_w (t-1) - e_{v,w} \}$. Note that $\tilde{c}_w (t)= \tilde{c}_w (t-1)$ if $e_{v,w}=0$.
\item \label{2:update:sets:rule} Set $N (t):= \{ i \in N(t-1) \,:\,  \tilde{c}_i (t)> 0 \}$ and $U(t):= ( U(t-1)\setminus \{v\} ) \cup \{ i \in N(t-1) \,:\,  \tilde{c}_i (t)= 0 \}$.
\end{enumerate}
Edges that are sent to already insolvent vertices are not exposed (but they could). Above steps are repeated until step $\hat{t}$, the first time that $U(t)=\emptyset$. Note that $\hat{t}$ is the final number of infected vertices independent of the rule chosen in Step~\ref{2:chose:rule}. Further, we can complete the exposition of the entire graph by exposing also links to defaulted vertices and links sent from vertices in $N (\hat{t})$.

Now, observe that the rule chosen in Step \ref{2:chose:rule}.~defines a permutation of the $\hat{t}$ elements of $[n]$ that go bankrupt. Further, for each $j \in [n]$ it defines an ordering of the set of insolvent vertices that send an edge to $j$, describing the order in which the edges are exposed. This ordering can be completed to a bijective map $\pi_j:[n-1]\to[n]\setminus \{ j \}$ by adding vertices that either send no edge to $j$ or are still solvent in the end. To be precise, let $\pi_j$ denote the ordering for vertex $j$ and let this vertex (after the exposition) have $l$ links sent from insolvent vertices. Then the entries $\pi_j(1),\dots , \pi_j(l)$ list defaulted neighbors in $[n]\backslash\{j\}$ in the order their edges are sent to vertex $j$. The entries $\pi_j(l+1),\dots,\pi_j(n-1)$ are, in their natural order, the remaining vertices in $[n]\backslash\{j\}$.

In order to reduce the model to the threshold model from Subsection \ref{2:ssec:special:case:threshold:model}, we now want to give a meaning to the so far only hypothetical threshold values $\tau_i$, $i\in[n]$. The idea is to construct a new random graph that has the same distribution (also of the threshold) as the graph constructed in Subsection~\ref{2:ssec:exposure:model} but with thresholds that have a direct meaningful interpretation:

We work on the same probability space as before but instead of assigning weight $E_{i,j}$ to a potential edge sent from $i\in[n]\backslash\{j\}$ to $j$, now the $i-$th ($i\in[n-1]$) edge that is sent to vertex $j$ during the sequential exposition described above shall receive weight $E_{\rho_j(i),j}$, where as before $\rho_j$ is the natural enumeration of $[n]\backslash\{j\}$. That is, edge-weights are not linked to the natural indices of their vertices anymore, but instead to the order of the exposition of the edges. One notes, however, that the random graph constructed that way has the same distribution as the random graph constructed before. To see this, observe that by the sequential procedure described by the orderings $\{\pi_j\}_{j\in[n]}$ and the assignment of exposures as described above, a potential edge sent from vertex $i$ to vertex $j$ is now assigned the edge-weight $E_{\rho_j(\pi_j^{-1}(i)),j}$. By exchangeability of the lists $\{E_{i,j}\}_{i\in[n]\backslash\{j\}}$ for $j\in[n]$, the new random variables $\{E_{\rho_j(\pi_j^{-1}(i)),j}\}_{i\in[n]\backslash\{j\}}$ have the same multivariate distribution as $\{E_{i,j}\}_{i\in[n]\backslash\{j\}}$. Obviously, also the new exposures are independent of the edge-indicator functions $\{X_{i,j}\}_{i,j\in[n]}$.

Hence, both constructions result in the same distribution for the random graph. Further, note that the assignment of edge-weights has been conducted in such a way that the threshold values in both versions of the network coincide. As before, they are given as
\[ \tau_i(n):= \inf \{ s \in \{0\}\cup[n-1] \,:\, \sum_{l\leq s} E_{\rho_i(l),i} \geq c_i \},\quad i\in[n].\]
In the new random graph, however, the thresholds $\tau_i$ have the interpretation of actual thresholds meaning that bank $i$ goes bankrupt at the $\tau_i$-th of one of its neighbors' default. The sequential description of the cascade process has then the advantage that we can reduce it to the threshold model as described in Subsection \ref{2:ssec:special:case:threshold:model}. We can replace the capitals $\tilde{c}_i (t)$, which represent monetary thresholds, by integer values $\tilde{\tau}_i(t)$ (we set $\tilde{\tau}_i(0):=\tau_i$), which count numbers of neighbors, and alter Steps \ref{2:update:capitals:rule}.~and \ref{2:update:sets:rule}.~in the description of the sequential cascade process according to the rule that if there is an edge sent from $v$ to $w$ ($e_{v,w}>0$), then set $\tilde{\tau}_w (t):=\tilde{\tau}_w (t-1) - 1$. If there is no edge from $v$ to $w$ ($e_{v,w}=0$), set $\tilde{\tau}_w (t):= \tilde{\tau}_w (t-1)  $.
Then the sets $N (t)$ and $U(t)$ are defined by $N (t):= \{ i \in N(t-1) \,:\,  \tilde{\tau}_i (t)> 0 \}$ respectively $U(t):= \left(U(t-1)\backslash\{v\}\right)\cup \{ i \in N(t-1) \,:\,  \tilde{\tau}_i (t)= 0 \}$. Everything else stays unchanged. Note that the resulting threshold values $\tilde{\tau}_i(t)$ are only valid for an exposition in the order as specified above. In the threshold model, however, we are free to choose exactly the same rule as we chose in Step \ref{2:chose:rule}.~of our model since this does not affect the final set of defaulted vertices. Hence, we can replace our exposure model by the threshold model from Subsection \ref{2:ssec:special:case:threshold:model}, resulting in the same final set of defaulted vertices.
In Theorem \ref{2:thm:threshold:model}, a regularity condition on the capital (here threshold) distribution $T$ is required. This is ensured after conditioning on the values of $\{\tau_i\}_{i\in[n]}$ by Assumption~\ref{2:vertex:assump} that $G_n(x,y,v,l)$ converges to $G(x,y,v,l)$ almost surely for all $(x,y,v,l)$. Applying Theorem \ref{2:thm:threshold:model} hence yields the desired statement.
\end{proof}

\begin{proof}[Proof of Theorem~\ref{2:thm:asymp:2}]
Part \ref{2:thm:asymp:2:1}.~follows from Theorem \ref{2:thm:threshold:model} by the same arguments as before.

In order to prove the second part, we will apply an additional small shock to the system such that each bank $i$, regardless of its attributes $w_i^-$, $w_i^+$ and $\tau_i$, has its capital $c_i$ and hence its threshold $\tau_i$ set to $0$ with probability $p$, where $p$ is some fixed small number. The new limiting distribution of the system is then given by $(W^-,W^+,TM_p)$, where $M_p$ is a \mbox{$\{0,1\}$-valued} random variable independent of $(W^-,W^+,T)$ and with $\P(M_p=0)=p$. Instead of \mbox{$f(z)=f(z;(W^-,W^+,T))$} we then have to consider the function
\[ f_p(z):=f(z;(W^-,W^+,TM_p)) = p(\E[W^+]-z)+(1-p)f(z). \]
Assuming $P(T=0)<1$ (the case $\P(T=0)=1$ is trivial), it holds $f_p(z)>f(z)$ and hence we conclude that the first positive root $\hat{z}_p$ of $f_p(z)$ is larger than $z^*$. By definition of $z^*$ we further derive that $\hat{z}_p\to z^*$ as $p\to0$. The idea is therefore to choose $p$ in such a way that $\hat{z}_p$ satisfies the assumptions of Theorem \ref{2:thm:asymp:1} and then conclude by coupling the original system with the additionally shocked one to derive $n^{-1}\mathcal{S}_n\leq n^{-1}\mathcal{S}_n^{(p)}$, where $\mathcal{S}_n^{(p)}:=\sum_{i\in\mathcal{D}_n^{(p)}}s_i$ and $\mathcal{D}_n^{(p)}$ denotes the set of finally defaulted vertices in the additionally shocked system. Since $z^*$ is a root of the continuously differentiable function $f(z)$ it must hold $d(z^*)\leq0$. We distinguish two cases:

In the first case, we assume that $\kappa:=d(z^*)<0$. Then by continuity of $d(z)$ on a neighborhood of $z^*$, also
\[ d_p(z):=\E\left[W^-W^+\phi_{TM_p}(W^-z)\right]-1 
 < \frac{\kappa}{2} < 0 \]
on a neighborhood of $\hat{z}_p$ for $p$ small enough. As indicated above, an application of Theorem \ref{2:thm:asymp:1} together with a coupling argument then yields
\begin{align*}
n^{-1}\mathcal{S}_n &\leq n^{-1}\mathcal{S}_n^{(p)}= \E\left[S\psi_{TM_p}(W^-\hat{z}_p)\right] + o_p(1)\leq \E\left[S\psi_T(W^-\hat{z}_p)\right] + \E[S\1_{\{M_p=0\}}] + o_p(1)\\
&\leq \E\left[S\psi_T(W^-z^*)\right] + \epsilon + o_p(1)
\end{align*}
by continuity of $\E\left[S\psi_T(W^-z)\right]$ ($S$ is assumed integrable) and choosing $p$ small enough.

In the second case, we have $0 = d(z^*) = \lim_{\epsilon\to0}\epsilon^{-1} f(z^*+\epsilon)$. For $\tilde{\epsilon}>0$, let
\[ \delta(\tilde{\epsilon}) :=  -\inf_{0<\epsilon\leq\tilde{\epsilon}}\epsilon^{-1}f(z^*+\epsilon),\]
which is positive for all $\tilde{\epsilon}$ by definition of $z^*$. We can therefore find $\tilde{\epsilon}>0$ arbitrarily small such that $\delta(\epsilon)<\delta(\tilde{\epsilon})$ for all $\epsilon<\tilde{\epsilon}$. We then derive that
\[ 0 \leq f(z^*+\epsilon) + \delta(\epsilon)\epsilon \leq f(z^*+\epsilon) + \delta(\tilde{\epsilon})\epsilon \]
for all $\epsilon\leq\tilde{\epsilon}$ with equality only for $\epsilon=\tilde{\epsilon}$. Hence at $\epsilon=\tilde{\epsilon}$ the derivative of the last term must be non-positive, i.\,e.~$d(z^*+\tilde{\epsilon})\leq -\delta(\tilde{\epsilon}) < 0$. By continuity, also $d(z) \leq -\delta(\tilde{\epsilon})/2<0$ on a neighborhood of $z^*+\tilde{\epsilon}$. Hence $z^*+\tilde{\epsilon}$ is a good candidate for the first positive root of the additionally shocked system. All that is left to show is that there exists a certain value for the shock size $p$ such that $z^*+\tilde{\epsilon}$ becomes the first positive root. To this end, let
\[ p(\tilde{\epsilon}) := \frac{\tilde{\epsilon}\delta(\tilde{\epsilon})}{\E[W^+]-z^*-\tilde{\epsilon}(1-\delta(\tilde{\epsilon}))}. \]
Note that for $\P(T=0)<1$ the root $z^*$ is always less than $\E[W^+]$ and hence for $\tilde{\epsilon}$ small enough $p(\tilde{\epsilon})$ becomes positive. As $\tilde{\epsilon}\to0$, also $p(\tilde{\epsilon})$ tends to zero. Now note that for all $0<\epsilon\leq\tilde{\epsilon}$,
\[ f_{p(\tilde{\epsilon})}(z^*+\epsilon) \geq (1-p(\tilde{\epsilon}))(-\epsilon\delta(\tilde{\epsilon})) + p(\tilde{\epsilon})\left(\E[W^+]-(z^*+\epsilon)\right) \geq (\tilde{\epsilon}-\epsilon)\delta(\tilde{\epsilon}) \geq 0 \]
with equality only for $\epsilon=\tilde{\epsilon}$. The additional shock strictly increases $f(z)$ and hence there cannot be any root less or equal $z^*$. In particular, $z^*+\tilde{\epsilon}$ is the first positive root of the additionally shocked system. By letting $\tilde{\epsilon}\to0$, we conclude for arbitrarily small $\epsilon>0$ that
\[ n^{-1}\mathcal{S}_n \leq \E\left[S\psi_T(W^-z^*)\right] + \epsilon + o_p(1). \]
For the case of $\hat{z}=z^*$, we simply need to combine Parts \ref{2:thm:asymp:2:1}.~and \ref{2:thm:asymp:2:2}.~of the theorem.
\end{proof}

\subsection{Proofs for Section \ref{2:sec:resilience}}\label{2:ssec:proofs:3}
\begin{proof}[Proof of Proposition \ref{2:prop:convergence:speed}]
By (\ref{2:eqn:integral:representation}), we derive
\[ f_i(z):=f(z;(W^-,W^+,TM_i))\leq \E[W^+\1\{M_i=0\}] + \kappa z + o(z) \]
and similarly
\[ \E[S\psi_{TM_i}(W^-z)] \leq \E[S\1\{M_i=0\}] + \kappa_S z + o(z). \]
Hence we derive
\[ \hat{z}_i \leq -\kappa^{-1}\E[W^+\1\{M_i=0\}] + o(\E[W^+\1\{M_i=0\}]), \]
where $\hat{z}_i$ denotes the first positive root of $f_i(z)$. Together with Theorem \ref{2:thm:asymp:1} this shows that
\[ \text{w.\,h.\,p.} \quad n^{-1}\mathcal{S}_n^{M_i} \leq \E[S\1\{M_i=0\}] - \kappa^{-1}\kappa_S\E[W^+\1\{M_i=0\}] + o(\E[W^+\1\{M_i=0\}]). \]
If $f(z)$ and $\E[S\psi_T(W^-z)]$ are continuously differentiable from the right at $z=0$ with derivatives $\kappa<0$ and $\kappa_S<\infty$, then above inequalities are equalities and hence
\[ n^{-1}\mathcal{S}_n^{M_i} \xrightarrow{p} \E[S\1\{M_i=0\}] - \kappa^{-1}\kappa_S\E[W^+\1\{M_i=0\}] + o(\E[W^+\1\{M_i=0\}]).\qedhere \]
\end{proof}

\begin{proof}[Proof of Theorem \ref{2:thm:cont:res}]
By \eqref{2:thm:cont:res:ass}, we derive that $z^*_i\to0$ as $i\to\infty$, where
\[ z^*_i := \inf\left\{ z>0\,:\,f(z;(W^-,W^+,TM_i))<0\right\}. \]
For $i$ large enough such that $z^*_i<z_0$, we can then apply Part \ref{2:thm:asymp:2:2}.~of Theorem \ref{2:thm:asymp:2} to derive that
\[ \text{w.\,h.\,p.} \quad n^{-1}\mathcal{S}_n^{M_i} \leq \E\left[S\psi_{TM_i}(W^-z^*_i)\right] + \frac{\epsilon}{2} \leq \E\left[S\psi_T(W^-z^*_i)\right] + \E\left[S\1\{M_i=0\}\right] + \frac{\epsilon}{2}.\]
Note that from continuity of $d(z)$ it follows that also $\E\left[W^-W^+\phi_{TM_i}(W^-z)\right]$ is continuous by dominated convergence. Since $S$ is integrable, the first summand tends to zero as $z^*_i\to0$ and also the second summand vanishes as $\P(M_i=0)\to0$. In particular, we can choose $i$ large enough such that
\[ \E\left[S\psi_T(W^-z^*_i)\right] + \E\left[S\1\{M_i=0\}\right]\leq\frac{\epsilon}{2}.\qedhere \]
\end{proof}

We now turn to the proofs of Theorems \ref{2:threshold:res} -- \ref{2:thm:Pareto:type}. To show resilience of the financial system in Theorem \ref{2:threshold:res}, we want to use Theorem \ref{2:thm:cont:res}. In order for this to work, we need to ensure that $d(z)$ is continuous for $z>0$. This is done in the following lemma.

\begin{lemma}\label{2:lem:cont:differentiability}
Assume that $T=\tau(W^-)$ for some integer-valued function $\tau(w)=o(w)$. Then $d(z)$ is continuous on $(0,\infty)$.
\end{lemma}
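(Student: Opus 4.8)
The plan is to establish continuity of $d$ on $(0,\infty)$ by a dominated-convergence argument, where essentially all the content lies in producing an integrable dominating function. Writing out $d$ under the hypothesis $T=\tau(W^-)$ yields
\[ d(z) = \E\!\left[W^+\,g(W^-;z)\right] - 1, \qquad g(w;z) := w\,\P\!\big(\mathrm{Poi}(wz)=\tau(w)-1\big)\,\1\{\tau(w)\ge 1\}, \]
and for each fixed $w\ge 0$ the map $z\mapsto g(w;z)$ is clearly continuous (indeed smooth) on $(0,\infty)$. So it suffices to show that for every compact interval $[a,b]\subset(0,\infty)$ there is a finite constant $M=M(a,b)$ with $g(w;z)\le M$ for all $w\ge 0$ and all $z\in[a,b]$: then $W^+g(W^-;z)\le MW^+$, which is integrable by Assumption \ref{2:vertex:assump}, and dominated convergence gives $d(z_n)\to d(z_0)$ whenever $z_n\to z_0$ in $(0,\infty)$.

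For the uniform bound, the small-$w$ range is trivial: for $w\le w_0$ (a threshold fixed below) one bounds the Poisson probability by $1$, so $g(w;z)\le w_0$. The interesting regime is $w$ large, where the crude estimate $g(w;z)\le w$ is useless because $W^-W^+$ need not be integrable. Here the hypothesis $\tau(w)=o(w)$ is exactly what is needed: choose $w_0$ so large that $\tau(w)\le\tfrac{a}{2}w$ for all $w\ge w_0$; then with $\lambda:=wz$ and $k:=\tau(w)-1$ we get $t:=k/\lambda\le \tau(w)/(wa)\le\tfrac12$ for every $z\in[a,b]$. Using Stirling's inequality $k!\ge(k/e)^k$,
\[ \P\!\big(\mathrm{Poi}(\lambda)=k\big)=e^{-\lambda}\frac{\lambda^k}{k!}\le e^{-\lambda}\Big(\frac{e\lambda}{k}\Big)^{k}=e^{-\lambda h(t)},\qquad h(t):=1-t+t\ln t, \]
where $h$ is continuous on $[0,1]$ (with $h(0)=1$) and strictly decreasing, with $h(1)=0$; hence $c:=h(\tfrac12)=\tfrac12(1-\ln 2)>0$ and $h(t)\ge c$ on $[0,\tfrac12]$. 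Consequently, for $w\ge w_0$ and $z\in[a,b]$,
\[ g(w;z)=w\,\P\!\big(\mathrm{Poi}(wz)=\tau(w)-1\big)\le w\,e^{-wz\,h(t)}\le w\,e^{-ac\,w}\le\sup_{x\ge 0}x\,e^{-ac\,x}=\frac{1}{ace}<\infty. \]
(The degenerate case $\tau(w)\le 0$ gives $g\equiv 0$, and $\tau(w)=1$, i.e.\ $k=0$, is covered since then $g(w;z)=we^{-wz}\le we^{-aw}$.) Taking $M:=\max\{w_0,\,1/(ace)\}$ completes the bound, and dominated convergence then finishes the proof.

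I expect the main obstacle to be precisely this uniform bound on $g(w;z)$ for large $w$: the naive $\phi_T\le 1$ is not enough because $\E[W^-W^+]$ may be infinite, and one must instead exploit that $\tau(w)=o(w)$ forces the index $\tau(w)-1$ deep into the left tail of a $\mathrm{Poi}(wz)$ law --- uniformly for $z\ge a$ --- giving decay of order $e^{-\mathrm{const}\cdot w}$ that beats the prefactor $w$. The only subtlety beyond that is keeping the estimate uniform over the compact interval $[a,b]$, which is why the infimum $a$ is retained throughout the exponent; continuity of $g(w;\cdot)$ for fixed $w$ and integrability of $W^+$ are immediate from the Poisson mass function and Assumption \ref{2:vertex:assump} respectively.
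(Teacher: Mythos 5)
Your proposal is correct and follows essentially the same route as the paper's proof: both reduce to a uniform-in-$z$ bound on $w\,\P(\mathrm{Poi}(wz)=\tau(w)-1)$ via Stirling's approximation, identify the same exponent function $1-t+t\log t$, use $\tau(w)=o(w)$ to push its argument below $\tfrac12$ for large $w$ so the exponent is bounded away from zero uniformly on the compact $z$-interval, and conclude by dominated convergence with a constant multiple of $W^+$ as the dominating function. The only differences are cosmetic (your explicit constant $1/(ace)$ versus the paper's $\sup_w w e^{-wG(\tilde z-\delta)}$, and your separate treatment of the $k=0$ case, which the paper absorbs into the $\Gamma$-function formulation).
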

\begin{proof}
We fix some $\tilde{z}<\infty$ and aim to show that for some arbitrarily fixed $\delta<\tilde{z}$ the family $\left\{W^-W^+\phi_T(W^-z)\right\}_{z\in[\tilde{z}-\delta,\tilde{z}+\delta]}$ is bounded by some integrable random variable almost surely. This will show continuity of $d(z)$ by Lebesgue's dominated convergence theorem.

By definition of a Poisson random variable, we derive
\[ W^-W^+\phi_T(W^-z) = W^-W^+\P\left(\mathrm{Poi}(W^-z)=T-1\right) = W^-W^+e^{-W^-z}\frac{(W^-z)^{T-1}}{\Gamma(T)}. \]
Then, by applying Stirling's approximation to the $\Gamma$-function,
\[ W^-W^+\phi_T(W^-z) \leq W^-W^+\exp\left\{-W^-z\left(1-\frac{T-1}{W^-z}+\frac{T-1}{W^-z}\log\left(\frac{T-1}{W^-z}\right)\right)\right\}. \]
In the exponent, we identify the expression $g\left((T-1)/(W^-z)\right)$, where $g(x):=1-x+x\log(x)$. The continuous function $g$ admits the unique minimum $g(x^*)=0$ at $x^*=1$. Since further $\lim_{x\to0+}g(x)=1$, it holds that $g(x)\geq G$ for $x<1/2$ and some suitable $G>0$. Now choose $\tilde{w}$ large enough such that $(\tau(w)-1)/(w(\tilde{z}-\delta))<1/2$ for all $w>\tilde{w}$. We derive,
\[ W^-W^+\phi_T(W^-z) \leq W^+\left(W^-\exp\left\{-W^-(\tilde{z}-\delta)G\right\}+\tilde{w}\right) \leq W^+M(\tilde{z}) \in L^1, \]
almost surely, where $M(\tilde{z})$ is a positive constant depending on $\tilde{z}$ only.
\end{proof}

\begin{remark}\label{2:rem:cont:continuous:differentiability}
In Theorem \ref{2:threshold:res}, we have $\tau(w)=\mathcal{O}(w^\gamma)$ for $0\leq\gamma<1$ and hence $\tau(w)=o(w)$.
\end{remark}

\begin{proof}[Proof of Theorem \ref{2:threshold:res}]
In order to ease notation, we will assume throughout all the proofs that $w_\text{min}^-=w_\text{min}^+=1$. The arguments for general $w_\text{min}^-$ and $w_\text{min}^+$ are completely analogue.

Since $\E[W^-W^+]$ is maximized for comonotone weights, i.\,e.~$W^+=F_{W^+}^{-1}(F_{W^-}(W^-))=(W^-)^\frac{\beta^--1}{\beta^+-1}$, we get $\E\left[W^-W^+\right] \leq \E\left[(W^-)^{1+\frac{\beta^--1}{\beta^+-1}}\right] = (\beta^--1)\int_1^\infty w^{\gamma_\text{c}-1}\dd w < \infty$ for $\gamma_\text{c}<0$. By dominated convergence we conclude that $f$ is continuously differentiable on $[0,\infty)$ with $f'(z)=d(z)$. By $T\geq2$, in particular, $f'(0)=-1$ and hence by Theorem \ref{2:prop:res} the system is resilient to small shocks.

Now let $\gamma_\text{c}=0$ and $\alpha:=\liminf_{w\to\infty}\tau(w)>\alpha_\text{c}+1$. We assume that $\alpha<\infty$, otherwise we could truncate $\tau(w)$ at some $\N\ni\alpha>\alpha_\text{c}+1$. Since $\tau(w)\geq2$ is an integer-valued function, we observe that $\alpha\in\N\backslash\{0,1\}$ and $\tau(w)\geq\alpha$ for all $w>\tilde{w}$ and some constant $\tilde{w}>0$. Since $\psi_l(x)$ is monotonically decreasing in $l$, we derive that, as $z\to0$,
\begin{align*}
\E\left[W^+\psi_T(W^-z)\right] &\leq \E\left[W^+\psi_\alpha(W^-z) + W^+\psi_2(W^-z)\1\{W^-\leq\tilde{w}\}\right]\leq 
\E\left[W^+\psi_\alpha(W^-z)\right] + o(z).
\end{align*}
Note that since $\psi_\alpha(x)$ is a strictly increasing function in $x$, this expression becomes maximized for comonotone dependence of $W^-$ and $W^+$. We derive
\begin{align*}
\limsup_{z\to0+}z^{-1}\E\left[W^+\psi_T(W^-z)\right] &\leq \limsup_{z\to0+}z^{-1} \sum_{k\geq\alpha} \E\left[(W^-)^\frac{\beta^--1}{\beta^+-1}e^{-W^-z}\frac{(W^-z)^k}{k!}\right]\\
&= \limsup_{z\to0+}\alpha_\text{c}\sum_{k\geq\alpha} \int_z^\infty \frac{x^{\frac{\beta^--1}{\beta^+-1}-\beta^-+k}e^{-x}}{k!}\dd x\\
&= \alpha_\text{c}\sum_{k\geq\alpha} \frac{\Gamma(k-1)}{k!} = \frac{\alpha_\text{c}}{\alpha-1}<1.
\end{align*}
In particular, $\limsup_{z\to0+}z^{-1}f(z)<0$. By Lemma \ref{2:lem:cont:differentiability} and Remark \ref{2:rem:cont:continuous:differentiability}, we also know that $d(z)$ is continuous for $z>0$ (we can simply cut off $\tau(w)$ at $\alpha$) and hence by Theorem \ref{2:thm:cont:res} we can conclude that the system must be resilient.

Finally, assume that $\gamma_\text{c}>0$ and $\alpha:=\liminf_{w\to\infty}w^{-\gamma_\text{c}}\tau(w) > \alpha_\text{c}$. We can then choose some $\alpha_\text{c}<\tilde{\alpha}<\alpha$ and $\tilde{w}<\infty$ such that $\tau(w)\geq\left\lceil\tilde{\alpha}w^{\gamma_\text{c}}\right\rceil$ for all $w>\tilde{w}$. Hence we derive that
\[ \E\left[W^+\psi_T(W^-z)\right] \leq \E\left[W^+\psi_{\left\lceil\tilde{\alpha}(W^-)^{\gamma_\text{c}}\right\rceil}(W^-z)\1\{W^->\tilde{w}\}\right] + o(z). \]
By a Chernoff bound we get that $\psi_l(x)\leq(xe/l)^le^{-x}$ for $x<l$. Thus for $w\leq\left(\tilde{\alpha}^{-1}(1+\epsilon)z\right)^\frac{1}{\gamma_\text{c}-1}$ and $\epsilon>0$,
\[ \psi_{\left\lceil\tilde{\alpha}w^{\gamma_\text{c}}\right\rceil}(wz) \leq \left(\frac{wze}{\tilde{\alpha}w^{\gamma_\text{c}}}\right)^{\tilde{\alpha}w^{\gamma_\text{c}}}e^{-wz} = \exp\left\{-z^\frac{\gamma_\text{c}}{\gamma_\text{c}-1}g\left(wz^\frac{1}{1-\gamma_\text{c}}\right)\right\}, \]
where $g(x) := x-\tilde{\alpha}x^{\gamma_\text{c}}\log(ex^{1-\gamma_\text{c}}/\tilde{\alpha})$. For arbitrary $\lambda>0$, we can hence choose $\tilde{w}$ large enough such that for all $\tilde{w}<w\leq\left(\tilde{\alpha}^{-1}(1+\epsilon)ez\right)^\frac{1}{\gamma_\text{c}-1}$ it holds
\[ \psi_{\left\lceil\tilde{\alpha}w^{\gamma_\text{c}}\right\rceil}(wz) \leq \left(\frac{wze}{\tilde{\alpha}w^{\gamma_\text{c}}}\right)^{\tilde{\alpha}w^{\gamma_\text{c}}} \leq \left(\frac{wze}{\tilde{\alpha}w^{\gamma_\text{c}}}\right)^{1+\lambda}(1+\epsilon)^{-\tilde{\alpha}w^{\gamma_\text{c}}+1+\lambda} \leq \left(\tilde{\alpha}^{-1}ze\right)^{1+\lambda} \]
and
\[ \E\left[W^+\psi_{\left\lceil\tilde{\alpha}(W^-)^{\gamma_\text{c}}\right\rceil}(W^-z)\1\left\{\tilde{w}<W^-\leq \left(\tilde{\alpha}^{-1}(1+\epsilon)ez\right)^\frac{1}{\gamma_\text{c}-1}\right\}\right] \leq \left(\tilde{\alpha}^{-1}ze\right)^{1+\lambda}\E[W^+] = o(z). \]
For $(\tilde{\alpha}^{-1}(1+\epsilon)ez)^\frac{1}{\gamma_\text{c}-1} < w \leq (\tilde{\alpha}^{-1}(1+\epsilon)z)^\frac{1}{\gamma_\text{c}-1}$, we have $g(wz^\frac{1}{1-\gamma_\text{c}}) \geq \delta$ for some $\delta>0$. Thus
\begin{align*}
&\E\left[W^+\psi_{\left\lceil\tilde{\alpha}(W^-)^{\gamma_\text{c}}\right\rceil}(W^-z)\1\left\{\left(\tilde{\alpha}^{-1}(1+\epsilon)ez\right)^\frac{1}{\gamma_\text{c}-1}<W^-\leq\left(\tilde{\alpha}^{-1}(1+\epsilon)z\right)^\frac{1}{\gamma_\text{c}-1}\right\}\right]\\
&\hspace{11cm}\leq \E[W^+]e^{-\delta z^\frac{\gamma_\text{c}}{\gamma_\text{c}-1}} = o(z).
\end{align*}
Hence, as $z\to0$, only $W^->\left(\tilde{\alpha}^{-1}(1+\epsilon)z\right)^\frac{1}{\gamma_\text{c}-1}$ contributes to $z^{-1}\E[W^+\psi_T(W^-z)]$. On this set, by bounding with the comonotone dependence, we compute
\begin{align}\label{2:eqn:expectation:upper:part}
&\E\Bigg[W^+\psi_T(W^-z)\1\left\{W^->\left(\tilde{\alpha}^{-1}(1+\epsilon)z\right)^\frac{1}{\gamma_\text{c}-1}\right\}\Bigg] \leq \E\Bigg[W^+\1\left\{W^->\left(\tilde{\alpha}^{-1}(1+\epsilon)z\right)^\frac{1}{\gamma_\text{c}-1}\right\}\Bigg]\\
&\hspace{3cm}\leq (\beta^--1)\int_{\left(\tilde{\alpha}^{-1}(1+\epsilon)z\right)^\frac{1}{\gamma_\text{c}-1}}^\infty w^{\frac{\beta^--1}{\beta^+-1}-\beta^-}\dd w = \alpha_\text{c}\frac{(1+\epsilon)z}{\tilde{\alpha}}\nonumber
\end{align}
Hence, $\limsup_{z\to0+}z^{-1}f(z) = \limsup_{z\to0+}z^{-1}\E[W^+\psi_T(W^-z)]-1<0$ by choosing $\epsilon>0$ small enough such that $\tilde{\alpha}^{-1}\alpha_\text{c}(1+\epsilon)<1$. This shows resilience by the same arguments as in Part \ref{2:threshold:res:2}., noting that we can cut $\tau(w)$ at $w^\eta$ for some $\gamma_\text{c}<\eta<1$.
\end{proof}

\begin{proof}[Proof of Theorem \ref{2:thm:functional:nonres}]
Again, we simplify notation be setting $w_\text{min}^-=w_\text{min}^+=1$.

We start by proving the second statement. Let $\alpha:=\limsup_{w\to\infty}w^{-\gamma_\text{c}}\tau(w)$ and choose $\alpha < \tilde{\alpha} < \int_0^\infty\Lambda\left(x^{1-\beta^+}\right)\dd x$ and $\tilde{w}<\infty$ such that $\tau(w)\leq\left\lfloor\tilde{\alpha}w^{\gamma_\text{c}}\right\rfloor$ for all $w>\tilde{w}$. Moreover, choose $\epsilon>0$ and $\delta>0$ such that $\tilde{\alpha}<(1-\epsilon)(1-\delta)\int_0^\infty\Lambda\big(x^{1-\beta^+}\big)\dd x$ and let $z>0$ small enough such that $\tilde{w}\leq\left(\tilde{\alpha}^{-1}(1-\epsilon)z\right)^\frac{1}{\gamma_\text{c}-1}$ as well as $z<(\epsilon^2\delta)^{\frac{1}{\gamma_\text{c}}-1}(\tilde{\alpha}/(1-\epsilon))^\frac{1}{\gamma_\text{c}}$. For such $z$ and $w>(\tilde{\alpha}^{-1}(1-\epsilon)z)^\frac{1}{\gamma_\text{c}-1}$ it holds that
\[ \P\left(\mathrm{Poi}(wz)\geq\left\lfloor\tilde{\alpha}w^{\gamma_\text{c}}\right\rfloor\right) \geq 1-\P\left(\left\vert\mathrm{Poi}(wz)-wz\right\vert \geq \epsilon wz\right) \geq 1-\frac{1}{\epsilon^2wz} \geq 1-\frac{1-\epsilon}{\epsilon^2\tilde{\alpha}w^{\gamma_\text{c}}} \geq 1-\delta, \]
by Chebyshev's inequality. Therefore,
\begin{align}\label{2:eqn:int:cond:prob}
&\E\left[W^+\psi_T(W^-z)\right] \geq (1-\delta)\E\left[W^+\1\left\{W^->\left(\tilde{\alpha}^{-1}(1-\epsilon)z\right)^\frac{1}{\gamma_\text{c}-1}\right\}\right]\nonumber\\
&\hspace{0.382cm}= (1-\delta)\int_0^\infty \P\left(W^+>x,W^->\left(\tilde{\alpha}^{-1}(1-\epsilon)z\right)^\frac{1}{\gamma_\text{c}-1}\right)\dd x\nonumber\\
&\hspace{0.382cm}= (1-\delta)\left(\tilde{\alpha}^{-1}(1-\epsilon)z\right)^{\frac{1}{\gamma_\text{c}-1}\frac{\beta^--1}{\beta^+-1}}\nonumber\\
&\hspace{3.675cm} \times\int_0^\infty \P\left(F_{W^+}(W^+)>1-x^{1-\beta^+}p(z),F_{W^-}(W^-)>1-p(z)\right)\dd x\nonumber\\
&\hspace{0.382cm}= \tilde{\alpha}^{-1}(1-\epsilon)(1-\delta)z \int_0^\infty \P\left(F_{W^+}(W^+)>1-x^{1-\beta^+}p(z) \,\middle\vert\, F_{W^-}(W^-)>1-p(z)\right)\dd x,
\end{align}
where we substituted $p(z) := (\tilde{\alpha}^{-1}(1-\epsilon)z)^\frac{1-\beta^-}{\gamma_\text{c}-1}$. Note that the conditional probability is bounded by $1\wedge x^{1-\beta^+}$. Hence, by Lebesgue's dominated convergence theorem
\[ z^{-1}\E\left[W^+\psi_T(w^-z)\right] \geq \tilde{\alpha}^{-1}(1-\epsilon)(1-\delta)\int_0^\infty\Lambda\left(x^{1-\beta^+}\right)\dd x + o(1) > 1 + o(1) \]
and thus $\E\left[W^+\psi_T(W^-z)\right]>z$ for $z$ small enough which implies non-resilience by Theorem \ref{2:prop:nonres}.

For resilience in Part \ref{2:thm:functional:nonres:2}., we follow the same proof as for Theorem \ref{2:threshold:res} until we arrive at expression (\ref{2:eqn:expectation:upper:part}) which we can now evaluate with the same means as above. That is,
\[ z^{-1}\E\bigg[W^+\1\bigg\{W^->\left(\tilde{\alpha}^{-1}(1+\epsilon)z\right)^\frac{1}{\gamma_\text{c}-1}\bigg\}\bigg] \to \tilde{\alpha}^{-1}(1+\epsilon)\int_0^\infty\Lambda\left(x^{1-\beta^+}\right)\dd x <1,\quad\text{as }z\to0. \]
For the first statement of the theorem, note that we can lower bound the integral in (\ref{2:eqn:int:cond:prob}) by $\P(F_{W^+}(W^+)>1-p(z) \mid F_{W^-}(W^-)>1-p(z))$ and hence we derive non-resilience as above by
\[ \liminf_{z\to0}z^{-1}\E[W^+\psi_T(W^-z)] \geq \tilde{\alpha}^{-1}(1-\epsilon)(1-\delta) \lambda > 1.\qedhere \]
\end{proof}

\begin{proof}[Proof of Theorem \ref{2:thm:Pareto:type}]
Let $U\sim\mathcal{U}[0,1]$ and
\[ X^\pm := (1-U)^\frac{1}{1-\beta^\pm}K^\pm. \]
Then $X^\pm\sim\mathrm{Par}(\beta^\pm,K^\pm)$. Further, let $\tilde{W}^\pm := F_{W^\pm}^{-1}(U)$ such that $\tilde{W}^\pm\stackrel{d}{=} W^\pm$ but $\tilde{W}^-$ and $\tilde{W}^+$ are comonotone. By \eqref{2:eqn:Pareto:type} we then now that $\tilde{W}^\pm=F_{W^\pm}^{-1}(U)\leq F_{X^\pm}^{-1}(U)=X^\pm$ for $U\geq\tilde{u}$ large enough and hence
\[ \E[W^-W^+]\leq\E[\tilde{W}^-\tilde{W}^+]\leq \tilde{w}(\E[W^-]+\E[W^+]) + \E[X^-X^+], \]
where $\tilde{w}=\max\{F_{W^-}^{-1}(\tilde{u}),F_{W^+}^{-1}(\tilde{u})\}<\infty$. Then for the case of $\gamma_c<0$, by the same calculations as in the proof of Theorem \ref{2:threshold:res} we derive $\E[X^-X^+]<\infty$ and thus resilience of the system.

For $\gamma_c=0$, as in the proof of Theorem \ref{2:threshold:res} we have
\[ \E[W^+\psi_T(W^-z)] \leq \E[W^+\psi_\alpha(W^-z)]+o(z), \]
where $\alpha=\liminf_{w\to\infty}\tau(w)\in\N\backslash\{0,1\}$. Hence
\begin{align*}
\limsup_{z\to0+} \frac{\E[W^+\psi_T(W^-z)]}{z} &\leq \limsup_{z\to0+} \frac{\E[\tilde{W}^+\psi_\alpha(\tilde{W}^-z)]}{z}\\
&\leq \limsup_{z\to0+} \frac{\E[X^+\psi_\alpha(X^-z)] + \tilde{w}\psi_2(\tilde{w}z)}{z}\\
&=\frac{\beta^+-1}{\beta^+-2}K^+K^-\frac{1}{\alpha-1}
\end{align*}
and the system is resilient for $\liminf_{w\to\infty}\tau(w)=\alpha>\frac{\beta^+-1}{\beta^+-2}K^+K^-+1$.

For $\gamma_c>0$, we follow the proof of Theorem \ref{2:threshold:res} and replace the right-hand side of \eqref{2:eqn:expectation:upper:part} by
\begin{align*}
&\E\bigg[X^+\1\left\{X^->\left(\tilde{\alpha}^{-1}(1+\epsilon)z\right)^\frac{1}{\gamma_\text{c}-1}\right\}\bigg] + \tilde{w}\1\left\{\tilde{w}>\left(\tilde{\alpha}^{-1}(1+\epsilon)z\right)^\frac{1}{\gamma_\text{c}-1}\right\}\\
&\hspace{8cm}= \frac{\beta^+-1}{\beta^+-2}K^+(K^-)^{1-\gamma_c}\frac{(1+\epsilon)z}{\tilde{\alpha}} + o(z).\qedhere
\end{align*}
\end{proof}

\begin{proof}[Proof of Proposition \ref{2:prop:robust:capital:requirements}]
The capitals $c_i$ are chosen such that the threshold values $\tau_i$ are at least $\tau(w_i^-)$. Hence coupling the weighted network to the corresponding threshold network yields the result.
\end{proof}

\begin{proof}[Proof of Theorem \ref{2:cor:threshold:res}]
By the assumption of $c_i>\max_{j\in[n]\backslash\{i\}}E_{j,i}$ almost surely for all \mbox{$i\in[n]$,} we get $\tau_i\geq2$ almost surely. The proof of Part \ref{2:cor:threshold:res:1}.~is thus completely analogue to the one of Theorem \ref{2:threshold:res}.

We continue by proving Part \ref{2:cor:threshold:res:3}. By the means of the proof of Theorem \ref{2:threshold:res} we derive that $\limsup_{z\to0+}z^{-1}\E\left[W^+\psi_T(W^-z)\1\left\{T>(1+\epsilon)\alpha_\text{c}(W^-)^{\gamma_\text{c}}\right\}\right] < 1$ for each $\epsilon>0$. It will hence suffice to prove $\E\left[W^+\psi_T(W^-z)\1\left\{T\leq(1+\epsilon)\alpha_\text{c}(W^-)^{\gamma_\text{c}}\right\}\right] = o(z)$ in order to show resilience. To this end, choose $0<\delta<\gamma_\text{c}(t-1)$. Then
\begin{align*}
\E\Big[W^+\psi_T(W^-z)\1\Big\{T\leq(W^-)^\delta\Big\}\Big] &\leq \E\left[W^+\psi_2(W^-z)\1\left\{T\leq(W^-)^\delta\right\}\right]\\
&= \lim_{M\to\infty}\E\left[\left(W^+\wedge M\right)\psi_2(W^-z)\1\left\{T\leq(W^-)^\delta\right\}\right]\\
&\leq \liminf_{M\to\infty} \liminf_{n\to\infty} n^{-1}\sum_{i\in[n]}\left(w_i^+\wedge M\right)\psi_2(w_i^-z)\1\left\{\tau_i\leq(w_i^-)^\delta\right\},
\end{align*}
where we approximated the non-continuous integrand by continuous ones and used almost sure weak convergence by Assumption \ref{2:vertex:assump}. Now taking expectations with respect to the exposure lists, by Fatou's lemma we derive
\begin{align*}
\E\left[W^+\psi_T(W^-z)\1\left\{T\leq(W^-)^\delta\right\}\right] &\leq \liminf_{M\to\infty} \liminf_{n\to\infty} n^{-1}\sum_{i\in[n]}\left(w_i^+\wedge M\right)\psi_2(w_i^-z)\P\left(\tau_i\leq(w_i^-)^\delta\right)\\
&\leq K_1 \liminf_{M\to\infty} \liminf_{n\to\infty} n^{-1}\sum_{i\in[n]}\left(w_i^+\wedge M\right)\psi_2(w_i^-z)(w_i^-)^{\delta-t\gamma_\text{c}}\\
&= K_1 \liminf_{M\to\infty} \E\left[\left(W^+\wedge M\right)\psi_2(W^-z)(W^-)^{\delta-t\gamma_\text{c}}\right]\\
&= K_1 \E\left[W^+\psi_2(W^-z)(W^-)^{\delta-t\gamma_\text{c}}\right],
\end{align*}
where we used Assumption \ref{2:ass:exposures} to bound
\begin{align*}
\P\left(\tau_i\leq(w_i^-)^\delta\right) &= \P\Bigg(\sum_{1\leq j\leq(w_i^-)^\delta}E_{\rho_i(j),i}\geq c_i\Bigg) \leq \P\Bigg(\sum_{1\leq j\leq(w_i^-)^\delta}E_{\rho_i(j),i}\geq \tau(w_i^-)\lambda_i\Bigg)\\
& \leq \P\Bigg(\sum_{1\leq j\leq(w_i^-)^\delta}E_{\rho_i(j),i}\geq \alpha_\text{c}(w_i^-)^{\gamma_\text{c}}\lambda_i\Bigg)\leq K_1(w_i^-)^{\delta-t\gamma_\text{c}},
\end{align*}
for $w_i^-$ large enough and some uniform constant $K_1>\infty$. Note that for $W^-\leq\tilde{w}$, we have $\E\left[W^+\psi_T(W^-z)\1\{W^-\leq\tilde{w}\}\right] = o(z)$ as in the proof of Theorem \ref{2:threshold:res}. Hence it holds
\[ z^{-1}\E\left[W^+\psi_T(W^-z)\1\left\{T\leq(W^-)^\delta\right\}\right] \leq K_1 \E\left[W^+\frac{\psi_2(W^-z)}{W^-z}(W^-)^{1+\delta-t\gamma_\text{c}} \right] + o(1). \]
Since $\psi_2(x)=o(x)$, by dominated convergence it is enough to prove $\E\left[W^+(W^-)^{1+\delta-t\gamma_\text{c}} \right]<\infty$ in order for $\E\left[W^+\psi_T(W^-z)\1\left\{T\leq(W^-)^\delta\right\}\right]=o(z)$. We can easily choose $t>1$ and $\delta>0$ in such a way that $1+\delta-t\gamma_\text{c}>0$ and can therefore estimate $\E[W^+(W^-)^{1+\delta-t\gamma_\text{c}} ]$ by the comonotone expectation $\E\big[(W^-)^{\frac{\beta^--1}{\beta^+-1}+1+\delta-t\gamma_\text{c}} \big]$ which is finite since by our choice of $\delta$,
\[ \frac{\beta^--1}{\beta^+-1}+1+\delta-t\gamma_\text{c}-\beta^- = \gamma_\text{c}(1-t)+\delta-1 < -1. \]
Now let $2\leq N<\gamma_\text{c}/\delta$ and consider $\E\left[W^+\psi_T(W^-z)\1\left\{(W^-)^{(N-1)\delta}<T\leq(W^-)^{N\delta}\right\}\right]$. As in the proof of Theorem \ref{2:threshold:res} it is enough to consider $\E\big[W^+\1\big\{W^->z^\frac{1}{(N-1)\delta-1},T\leq(W^-)^{N\delta}\big\}\big]$. Similar as above, we derive $\P\left(\tau_i\leq(w_i^-)^{N\delta}\right) \leq K_N(w_i^-)^{N\delta-t\gamma_\text{c}}$ for some uniform $K_N<\infty$ and
\begin{align*}
\E\Big[W^+\1\Big\{W^->z^\frac{1}{(N-1)\delta-1},T\leq(W^-)^{N\delta}\Big\}\Big] &\leq K_N \E\left[W^+\1\left\{W^->z^\frac{1}{(N-1)\delta-1}\right\}(W^-)^{N\delta-t\gamma_\text{c}}\right]\\
&\leq K_N\E\left[(W^-)^{\frac{\beta^--1}{\beta^+-1}}\1\left\{W^->z^\frac{1}{(N-1)\delta-1}\right\}\right]z^\frac{N\delta-t\gamma_\text{c}}{(N-1)\delta-1}\\
&= K_N\frac{\beta^--1}{1-\gamma_\text{c}} z^\frac{\gamma_\text{c}-1+N\delta-t\gamma_\text{c}}{(N-1)\delta-1}= o(z)
\end{align*}
since by the choice of $\delta$ and $N$, it holds $\gamma_\text{c}-1+N\delta-t\gamma_\text{c} < (N-1)\delta-1 < 0$.

Finally, we have to consider the part $\E\left[W^+\psi_T(W^-z)\1\{(W^-)^{\gamma_\text{c}-\delta}<T\leq(1+\epsilon)\alpha_\text{c}(W^-)^{\gamma_\text{c}}\}\right]$.
If we choose $\epsilon>0$ small enough such that $(1+2\epsilon)\alpha_\text{c}<\liminf_{w\to\infty}\tau(w)/w^{\gamma_\text{c}}$ and denote $\tilde{\tau}(w):=(1+\epsilon)\alpha_\text{c}w^{\gamma_\text{c}}$, then we observe that by Assumption \ref{2:ass:exposures} for $w_i^-$ large enough
\begin{align*}
\P\left(\tau_i\leq\tilde{\tau}(w_i^-)\right) &\leq 
\P\left(\sum_{j=1}^{\tilde{\tau}(w_i^-)} E_{\rho_i(j),i}\geq\frac{1+2\epsilon}{1+\epsilon}\tilde{\tau}(w_i^-)\lambda_i\right)
\leq K\left((1+\epsilon)\alpha_\text{c}(w_i^-)^{\gamma_\text{c}}\right)^{-(t-1)}
\end{align*}
for some $K<\infty$. Since $\gamma_\text{c}-1-(t-1)\gamma_\text{c} < \gamma_\text{c}+\delta-1 < 0$, we then get
\begin{align*}
&\E\left[W^+\1\{W^->z^\frac{1}{\gamma_\text{c}+\delta-1},T\leq(1+\epsilon)\alpha_\text{c}(W^-)^{\gamma_\text{c}}\}\right]\\
&\hspace*{6cm}\leq K \E\left[W^+\1\{W^->z^\frac{1}{\gamma_\text{c}+\delta-1}\}\left((1+\epsilon)\alpha_\text{c}(W^-)^{\gamma_\text{c}}\right)^{-(t-1)}\right]\\
&\hspace*{6cm}\leq K \left((1+\epsilon)\alpha_\text{c}\right)^{-(t-1)} \frac{\beta^--1}{1-\gamma_\text{c}}z^\frac{\gamma_\text{c}-1-(t-1)\gamma_\text{c}}{\gamma_\text{c}+\delta-1}= o(z).
\end{align*}
Altogether, we derive (note that we decomposed the expectation in finitely many summands)
\[ \E\left[W^+\psi_T(W^-z)\1\left\{T\leq(1+\epsilon)\alpha_\text{c}(W^-)^{\gamma_\text{c}}\right\}\right] = o(z) \]
and hence $\limsup_{z\to0+}z^{-1}f(z) = \limsup_{z\to0+}z^{-1}\E\left[W^+\psi_T(W^-z)\right] - 1 < 0$, which shows resilience by Theorem \ref{2:thm:cont:res}. Note that we can cut off $T$ at $(W^-)^\eta$ for some $\gamma_\text{c}<\eta<1$ to ensure continuity of $d(z)$ by Lemma \ref{2:lem:cont:differentiability} and Remark \ref{2:rem:cont:continuous:differentiability}.

Part \ref{2:cor:threshold:res:2}.~follows by the same calculations replacing $\gamma_\text{c}$ by $\gamma$ and using $\gamma>\gamma_\text{c}$.
\end{proof}

\cleardoublepage
\chapter{A Model for Default Contagion in Multi-type Financial Networks}\label{chap:block:model}

Real financial networks exhibit various forms of complex structures. Typical examples are core-periphery structures, geographically induced community structures, or mixtures of the two. See Figure \ref{3:fig:core:periphery:network} for an exemplary illustration. While our model from Chapter \ref{chap:systemic:risk} can describe heavy-tailed degree distributions, one can argue that compared to real observed networks the probability of exposures between periphery institutions is still too large. Moreover, communities cannot be modeled by the means from Chapter \ref{chap:systemic:risk}. It is thus the aim of this chapter to provide a model capable of representing these features. Again our starting point is the model from \cite{Detering2015a} (called \emph{threshold model} in Chapter \ref{chap:systemic:risk}) and we extend it in two directions in Section \ref{3:sec:default:fin}. First, we divide the global financial system into subsystems of different institution types such as country, core/periphery, a combination of the two, or any other reasonable segmentation. We then let the connection probability of two institutions in the system depend on their respective types -- larger within communities or within cores, smaller between communities and for periphery institutions for instance. Second, we allow for different exposure distributions between different institution types -- larger exposures between core-institutions for example. Our model can then be considered a \emph{multi-dimensional threshold model} in the sense that due to the two extensions the analysis of the contagion process turns into a multi-dimensional problem. Thus, compared to \cite{Detering2015a} novel techniques are required to derive asymptotic results for the final systemic damage induced by some initial shock event in Section \ref{3:sec:asymptotic:results} and for the characterization of resilient and non-resilient system structures in Section \ref{3:sec:resilience}. Moreover in Section \ref{3:sec:capital:block:model}, we provide a family of capital requirements sufficient to secure a multi-type network against small initial shocks. In Section \ref{3:sec:exposure:model}, we integrate the idea of exchangeable exposures from Chapter \ref{chap:systemic:risk} into our multi-type model to gain more flexibility at the modeling of exposures. This extension can be seen as a \emph{multi-dimensional exposure model} in analogy to Chapter \ref{chap:systemic:risk}. We then demonstrate the effects that the complex structures of financial systems modeled in this chapter have on their stability in Section \ref{3:sec:applications}. Finally, we provide all our proofs for this chapter in Section \ref{3:sec:proofs}.

\begin{figure}[t]
\centering
    \includegraphics[width=0.52\textwidth]{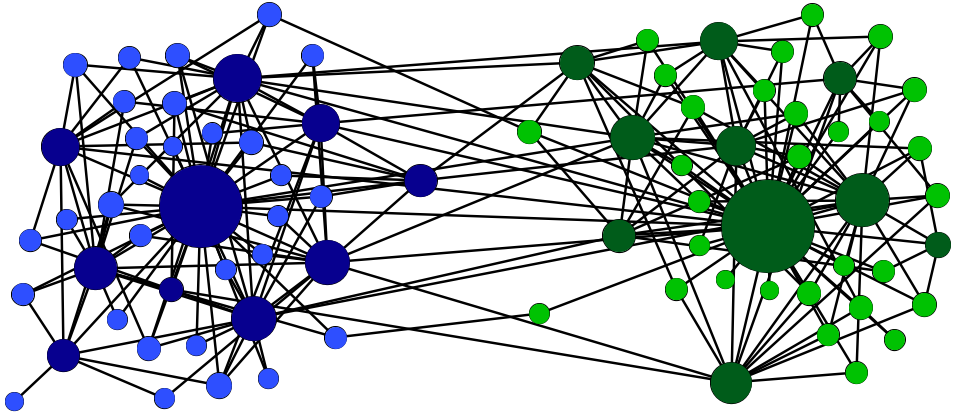}
\caption{A sample network consisting of two cores (darkblue resp.~darkgreen) and the associated peripheries (lightblue resp.~lightgreen). Vertex sizes correspond to the respective degrees. For simplicity the network is depicted undirected and the strength of links is omitted. }
\label{3:fig:core:periphery:network}
\end{figure}

\vspace*{-7pt}
\paragraph{My own contribution:} Except for Sections \ref{3:sec:capital:block:model} and \ref{3:sec:exposure:model} which are entirely my own work and have not been published elsewhere, so far, this chapter is mostly reproducing \cite{Detering2018} which is joint work with Nils Detering, Thilo Meyer-Brandis and Konstantinos Panagitou. I was significantly involved in the development of all parts of that paper and did most of the editorial work. In particular, I made major contributions to the model design, Corollary \ref{3:cor:poisson:degrees}, Lemmas \ref{3:lem:properties:f}, \ref{3:lem:existence:hatz} and \ref{3:lem:sufficient:criteria:z:star}, Theorem \ref{3:thm:general:weights}, Remark \ref{3:rem:fraction:subsystems}, Theorem \ref{3:thm:resilience}, Definition \ref{3:def:non:resilience}, Corollary \ref{3:cor:non:resilience:independent}, Lemma \ref{3:lem:z0}, Theorem \ref{3:thm:non-resilience}, Lemma \ref{3:lem:z0:equals:z*}, Examples \ref{3:ex:non-res:subsystem}, \ref{3:ex:res:global:system} and \ref{3:ex:neighbor:dependent}, the numerical simulations, Proposition \ref{3:prop:finitary:weights}, Theorem \ref{3:thm:finitary:weights}, as well as Lemmas \ref{3:lem:z^*:epsilon}, \ref{3:lem:stochastic:domination} and \ref{3:lem:convergence:g}. 
Compared to \cite{Detering2018}, in this chapter we additionally consider values of systemic importance for each institution as in Chapter \ref{chap:systemic:risk} to measure a more flexible final systemic damage than only the final default fraction.

\section{Default Contagion on a Random Weighted Multi-type Network}\label{3:sec:default:fin}
We describe a model for a financial network consisting of $n\in\N$ vertices (institutions) and random directed edges between them. We usually think of an institution \mbox{$i\in[n]:=\{1,\ldots,n\}\subset\N$} as a bank in an interbank network and of a directed edge going from institution $i\in[n]$ to $j\in[n]$ as a financial exposure of $j$ which emanates from $i$, for example by an outstanding interbank-loan from $i$ to $j$. Our model accounts for two more features. First, we assign weights to the edges which for now shall represent the amount of the loan. Later in Section \ref{3:sec:exposure:model}, we will generalize our model and identify an edge-weight as an abstract multiplier for the exposure. The model from this section can thus be interpreted as a multi-dimensional extension of the \emph{threshold model} from Subsection \ref{2:ssec:special:case:threshold:model} whereas Section \ref{3:sec:exposure:model} describes a multi-dimensional extension of the \emph{exposure model} from Subsection \ref{2:ssec:exposure:model}. As will be clear from the construction in the following, the assignment of edge-weights depends on both the creditor- and the debtor-institution. This feature is new as compared to previous literature such as~\cite{Cont2016} and Chapter \ref{chap:systemic:risk}, where the amount of each loan did only depend on the creditor bank. Second, we assign different types to the institutions in the network. This allows to describe more involved network structures such as core-periphery networks -- a two-type network in our terminology -- and (dis-)assortative structures.

\subsection{Vertex Types}
We begin by assigning to each institution $i\in[n]$ a type $\alpha_i\in[T]$, where $T\in\N$ is the fixed number of types. In the prominent case of a core-periphery network we choose $T=2$ and a bank $i\in[n]$ shall be a core bank if $\alpha_i=1$ resp.~a periphery bank if $\alpha_i=2$. Hence the financial network is partitioned into sets of institutions of different types, which we also call blocks.

\subsection{Vertex Weights and Random Weighted Edges}\label{3:ssec:random:edges}
Next, we fix $R\in\N$ and we construct a random network with edge-weights in $[R]$. To this end, assign to each institution $i\in[n]$ a set $\{w_i^{-,r,\alpha},w_i^{+,r,\alpha}\}_{1\leq r\leq R, 1\leq \alpha\leq T}$ of non-negative vertex-weights and denote $\bm{w}_i^-=(w_i^{-,r,\alpha})_{r\in[R],\alpha\in[T]}\in\R_{+,0}^{[R]\times [T]}$ resp.~$\bm{w}_i^+=(w_i^{+,r,\alpha})_{r\in[R],\alpha\in[T]}\in\R_{+,0}^{[R]\times [T]}$. The weight $w_i^{-,r,\alpha}$ describes the tendency of bank $i$ to develop incoming edges of weight $r$ from institutions of type $\alpha$. Similarly, $w_i^{+,r,\alpha}$ describes the tendency of $i$ to form outgoing edges of weight $r$ to institutions of type $\alpha$. To formalize this, 
let $X_{i,j}^r$ be the indicator random variable which is $1$ if there is an edge of weight $r$ going from $i$ to $j$ and $0$ otherwise and let $X_{i,j}^r\sim\mathrm{Be}(p_{i,j}^r)$ be a Bernoulli random variable with  expectation
\begin{equation}\label{3:eqn:edge:prob}
p_{i,j}^r := \begin{cases} \min\{R^{-1}, n^{-1}w_i^{+,r,\alpha_j}w_j^{-,r,\alpha_i}\},&i\neq j,\\0,&i=j.\end{cases}
\end{equation}
To avoid multiple edges of different weights between the institutions, we assume $\{X_{i,j}^r\}_{1\leq r\leq R}$ to be mutually exclusive in the sense that $\sum_{1\le r\le R} X_{i,j}^r\leq 1$. Also, we assume that edges between different pairs of institutions are independent, i.\,e.~$X_{i_1,j_1}^{r_1}\perp X_{i_2,j_2}^{r_2}$ for all $r_1,r_2\in[R]$ if $(i_1,j_1)\neq (i_2,j_2)$. In particular, $X_{i,j}^{r_1}\perp X_{j,i}^{r_2}$ for all $i\neq j$, $r_1,r_2\in[R]$. This can for example be achieved by introducing a sequence of independent random variables $U_{i,j}$, each distributed uniformly on the interval $[0,1]$, and letting $X_{i,j}^r=\1\big\{U_{i,j}\in\big[\sum_{s\leq r-1}p_{i,j}^s,\sum_{s\leq r}p_{i,j}^s\big)\big\}$. The upper bound $R^{-1}$ in \eqref{3:eqn:edge:prob} then ensures that $\sum_{s\leq R}p_{i,j}^s\leq1$. 


\subsection{Capital and Default Contagion}
We assign to each institution $i\in[n]$ an initial amount of capital (equity) $c_i\in\N_{0,\infty}$. We call an institution solvent if $c_i>0$ and insolvent if $c_i=0$ and we denote by $\mathcal{D}_0:=\{i\in[n]\,:\,c_i=0\}$ the set of initially defaulted institutions. The initial default shall be due to some exogenous event such as a stock market crash. Because of the interconnections in the network the default of the institutions in $\mathcal{D}_0$ will spread through the network. This happens since the defaulted banks cannot (fully) repay their loans to their creditors. As first suggested in \cite{Gai2010} it is a reasonable assumption that defaulted debtors cannot repay any of their debts since processing their default may take months or even years while financial contagion is a short term process. In fact, one can generalize our model to the case of a fixed constant recovery rate simply by adjusting the capitals. The default contagion process can then be described as follows. In round $k\geq 1$ of the default cascade the set of defaulted institutions is 
\begin{equation}
\label{3:eqn:default:contagion}
\mathcal{D}_k := \Bigg\{ i\in[n]\,:\,c_i\leq\sum_{r \in [R]} r\sum_{j\in\mathcal{D}_{k-1}}X_{j,i}^r\Bigg\}.
\end{equation}
In particular, $\mathcal{D}_0\subseteq\mathcal{D}_1\subseteq\cdots$ and the chain of default sets stabilizes at round $n-1$ the latest. We hence denote the final default set by $\mathcal{D}_n:=\mathcal{D}_{n-1}$. Note that the only randomness in this process stems from the random links in the network. Once a network configuration has been fixed the whole default contagion sequence is fully determined.
%

\begin{remark}\label{3:rem:exchangeable:exposures}
To model realistic financial networks with general exposure values, it would be a priori necessary to choose $R$ very large and our model would become very high-dimensional. Instead of considering an edge-weight $r$ as the exposure between two institutions, however, one can also interpret it as a more general factor of impact. It is then possible to model unbounded exposure distributions also with a considerably small choice of $R$. We will discuss this idea more precisely in Chapter \ref{chap:fire:sales:default}.

\end{remark}

\subsection{Systemic Importance}
The set $\mathcal{D}_n$ introduced in the previous subsection describes the set of banks that are finally driven into default by the initial default set $\mathcal{D}_0$. Only considering the size $\vert\mathcal{D}_n\vert$ of this set to measure the damage to the financial system neglects the fact that there are larger/more important banks and smaller/less important banks in the system, however. Instead, analogously to Chapter \ref{chap:systemic:risk} we finally assign to each bank $i\in[n]$ a systemic importance value $s_i\in\R_{+,0}$ which reflects the importance of the bank for the whole financial system or even the wider economy. See Chapter~\ref{chap:systemic:risk} for suitable examples of $s_i$. Instead of $\vert\mathcal{D}_n\vert$ we then consider
\[ \mathcal{S}_n:=\sum_{i\in\mathcal{D}_n}s_i \]
to measure the final damage caused by defaulted banks. Note that $\mathcal{S}_n=\vert\mathcal{D}_n\vert$ if $s_i=1$ for all $i\in[n]$, which is the special case considered in \cite{Detering2018}.

\subsection{Regular Vertex Sequences}
In the previous subsections we introduced several parameters and in particular, any random ensemble is described by the weight sequences $\bm{w}^-:=(\bm{w}_1^-,\ldots,\bm{w}_n^-)$ and $\bm{w}^+:=(\bm{w}_1^+,\ldots,\bm{w}_n^+)$, the systemic importance sequence $\bm{s}:=(s_1,\ldots,s_n)$, the capital sequence $\bm{c}:=(c_1,\ldots,c_n)$, and the vertex type sequence $\bm{\alpha}:=(\alpha_1,\ldots,\alpha_n)$. That is, much of the information about the system -- in particular the complexity of the network configuration -- is contained in the empirical distribution function
\[ F_n(\bm{x}, \bm{y}, v, l, m) := n^{-1}\sum_{i\in[n]}~\prod_{\substack{r\in [R],\alpha\in [T]}}\1\Big\{w_i^{-,r,\alpha}\leq x^{r,\alpha},w_i^{+,r,\alpha}\leq y^{r,\alpha}\Big\} \1\left\{s_i\leq v, c_i\leq l, \alpha_i\leq m\right\}, \]
for $\bm{x},\bm{y}\in\R_{+,0}^{[R]\times[T]}$, $v\in\R_{+,0}$, $l\in\N_{0,\infty}$ and $m\in[T]$. 

The setting described so far puts us in the position to model a system with a given number~$n$ of institutions. However, as already described in the introduction, our main focus is on studying how the complex structures in the underlying network affect the contagion process and, more generally, the (in-)stability of the system as a whole. Towards this aim we proceed as follows. Instead of restricting our attention  to a single system configuration, we consider an ensemble of systems that are similar, where the structural similarity is measured precisely in terms of the joint empirical distribution of all parameters that we consider. In particular, we assume that we have a collection of systems with a varying number $n$ of
institutions with the property that the sequence $(F_n)_{n \in \mathbb{N}}$ of empirical distributions converges.

\begin{definition}\label{3:def:regular:vertex:sequence}
A sequence $(\bm{w}^-(n),\bm{w}^+(n), \bm{s}(n), \bm{c}(n), \bm{\alpha}(n))_{n\in\N}$ of model parameters for different network sizes $n\in\N$ is called a \emph{regular vertex sequence} if the following conditions hold. 
\begin{enumerate}[(a)]
\item \textbf{Convergence in distribution:} 
Let $(\bm{W}_n^-,\bm{W}_n^+,S_n,C_n, A_n)$ be a random vector distributed according to the empirical distribution function $F_n$ with $\bm{W}_n^\pm=(W_n^{\pm,r,\alpha})_{r\in[R],\alpha\in[T]}$. Then there exists a distribution function $F$ such that $F_n(\bm{x},\bm{y},v,l,m)\to F(\bm{x},\bm{y},v,l,m)$ for all points $(\bm{x},\bm{y},v,l,m)$ at which $F_{l,m}(\bm{x},\bm{y},v):=F(\bm{x},\bm{y},v,l,m)$ is continuous. Denote by $(\bm{W}^-,\bm{W}^+,S,C,A)$ a random vector distributed according to 
$F$ where similar to above $\bm{W}^\pm=(W^{\pm,r,\alpha})_{r\in[R],\alpha\in[T]}$.
\item \textbf{Convergence of average weights and systemic importance:} $S$, $W^{-,r,\alpha}$ and $W^{+,r,\alpha}$ are integrable for all $r\in[R]$ and $\alpha\in[T]$, and as $n\to\infty$ it holds
\[ \E[S_n]\to\E[S],\quad \E\left[W_n^{-,r,\alpha}\right]\to\E\left[W^{-,r,\alpha}\right],\quad\text{and}\quad\E\left[W_n^{+,r,\alpha}\right]\to\E\left[W^{+,r,\alpha}\right]. \]
\end{enumerate}
\end{definition}
Note that if we extract the subnetwork of edges with weight $r$ going from banks of type $\alpha$ to banks of type $\beta$ we are exactly in the setting of \cite{Detering2015a} with limiting in-weight $W^{-,r,\beta}$ and out-weight $W^{+,r,\alpha}$ (each of them scaled to account for the changed number of banks). We hence derive the following corollary of \cite[Theorem 3.3]{Detering2015a} which gives some intuition about the geometry of the described random graph in our model.
\begin{corollary}\label{3:cor:poisson:degrees}
Consider a financial system described by a regular vertex sequence and let $D_n^{\pm,r,\alpha}$ be the $r$-out/in-degree with respect to banks of type $\alpha$ of some bank in the network chosen uniformly at random. 
Then for each $(r,\alpha)\in[R]\times[T]$, as $n\to\infty$, in distribution
\begin{equation}\label{3:eqn:degree:convergence}
D_n^{\pm,r,\alpha} \to \mathrm{Poi}\Bigg(W^{\pm,r,\alpha}\sum_{\beta\in[T]}\zeta^{r,\beta,\alpha}_\mp\1\{A=\beta\}\Bigg),
\end{equation}
where $\zeta^{r,\beta,\alpha}_\mp:=\E\left[W^{\mp,r,\beta}\1\{A=\alpha\}\right]$.
In particular, the $r$-out/in-degree with respect to banks of type $\alpha$ of some uniformly chosen bank of type $\beta$ converges in distribution to $\mathrm{Poi}\big(W^{\pm,r,\alpha}\zeta_\mp^{r,\beta,\alpha}\big)$. 
\end{corollary}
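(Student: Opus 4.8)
The plan is to reduce the statement, one edge-layer at a time, to the degree-convergence result for the single-type directed random graph of \cite{Detering2015a}, and then to reassemble the pieces by conditioning on the type of the uniformly chosen bank. First I would fix $r\in[R]$ and types $\beta,\alpha\in[T]$ with $\P(A=\beta)>0$ (the case $\P(A=\beta)=0$ being vacuous) and isolate the \emph{layer} $G_n^{r,\beta\to\alpha}$ consisting of all weight-$r$ edges that run from a type-$\beta$ bank to a type-$\alpha$ bank. By \eqref{3:eqn:edge:prob} and the independence assumptions of Subsection~\ref{3:ssec:random:edges}, $G_n^{r,\beta\to\alpha}$ is precisely the directed random graph of \cite{Detering2015a} on the full vertex set $[n]$ with zero-padded out-weights $\hat w_i^+:=w_i^{+,r,\alpha}\1\{\alpha_i=\beta\}$ and in-weights $\hat w_j^-:=w_j^{-,r,\beta}\1\{\alpha_j=\alpha\}$ --- indeed, for such a pair $\min\{R^{-1},n^{-1}\hat w_i^+\hat w_j^-\}=\min\{R^{-1},n^{-1}w_i^{+,r,\alpha}w_j^{-,r,\beta}\}$ is the layer edge probability, while the right-hand side vanishes for all other type combinations. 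One checks that these zero-padded weight sequences again satisfy the regularity hypotheses of \cite{Detering2015a}, with limiting out-weight $\widehat W^+\stackrel{d}{=}W^{+,r,\alpha}\1\{A=\beta\}$ and in-weight $\widehat W^-\stackrel{d}{=}W^{-,r,\beta}\1\{A=\alpha\}$; the needed convergence $n^{-1}\sum_i\hat w_i^+\to\E[\widehat W^+]$ follows from the joint convergence in distribution of $(W_n^{+,r,\alpha},A_n)$ together with the uniform integrability of $W_n^{+,r,\alpha}$ built into Definition~\ref{3:def:regular:vertex:sequence}, and similarly for $\hat w^-$. The cap $\min\{R^{-1},\,\cdot\,\}$ (versus $\min\{1,\,\cdot\,\}$ in \cite{Detering2015a}) causes no trouble, since for every fixed vertex of bounded weight it is vacuous once $n$ is large --- using $\max_i\hat w_i^\pm=o(n)$, again a consequence of uniform integrability --- so it does not affect the degree limit.

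Next I would apply \cite[Theorem~3.3]{Detering2015a} to this layer: the out-degree of a uniformly chosen vertex of $[n]$ in $G_n^{r,\beta\to\alpha}$ converges in distribution to $\mathrm{Poi}\bigl(\widehat W^+\,\E[\widehat W^-]\bigr)=\mathrm{Poi}\bigl(W^{+,r,\alpha}\1\{A=\beta\}\,\zeta_-^{r,\beta,\alpha}\bigr)$ with $\zeta_-^{r,\beta,\alpha}=\E[W^{-,r,\beta}\1\{A=\alpha\}]$, and the analogous statement with out-/in-weights and sources/sinks swapped gives the in-degree version. Conditioning on the chosen vertex having type $\beta$ --- legitimate since $n_\beta/n\to\P(A=\beta)>0$, where $n_\beta$ denotes the number of type-$\beta$ banks, and itself justified by Theorem~3.3 applied together with the vertex marks or by a direct probability-generating-function computation --- then yields the ``in particular'' part of the Corollary: the $r$-out-degree with respect to type $\alpha$ of a uniformly chosen bank of type $\beta$ converges to $\mathrm{Poi}(W^{+,r,\alpha}\zeta_-^{r,\beta,\alpha})$, with $W^{+,r,\alpha}$ carrying the conditional law of the out-weight given $A=\beta$.

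Finally I would reassemble the general statement. For a uniformly chosen bank $V$ of the whole network, its $r$-out-degree with respect to type $\alpha$ equals its out-degree in the single layer $G_n^{r,\alpha_V\to\alpha}$, since $V$ has no other weight-$r$ edges to type-$\alpha$ banks. Conditioning on $\alpha_V=\beta$ and using the previous step, $\mathcal L(D_n^{+,r,\alpha}\mid\alpha_V=\beta)$ converges to $\mathcal L(\mathrm{Poi}(W^{+,r,\alpha}\zeta_-^{r,\beta,\alpha})\mid A=\beta)$; mixing over $\beta$ with weights $n_\beta/n\to\P(A=\beta)$ gives that $D_n^{+,r,\alpha}$ converges in distribution to $\sum_{\beta\in[T]}\P(A=\beta)\,\mathcal L(\mathrm{Poi}(W^{+,r,\alpha}\zeta_-^{r,\beta,\alpha})\mid A=\beta)$, which is exactly the law of $\mathrm{Poi}\bigl(W^{+,r,\alpha}\sum_{\beta\in[T]}\zeta_-^{r,\beta,\alpha}\1\{A=\beta\}\bigr)$, since for a fixed realisation $A=\beta$ only the $\beta$-th summand of the inner sum survives. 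This is \eqref{3:eqn:degree:convergence} for the out-degree; the in-degree case is identical after the source/sink exchange.

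The step I expect to be the main obstacle is the first one: checking rigorously that the zero-padded weight sequences of the layer $G_n^{r,\beta\to\alpha}$ genuinely fall under the hypotheses of \cite[Theorem~3.3]{Detering2015a} --- especially the convergence of the mean weights, which hinges on the uniform integrability implicit in the notion of a regular vertex sequence --- and that the modified cap $\min\{R^{-1},\,\cdot\,\}$ does not alter the degree limit. The conditioning arguments in the last two steps are routine but should be written out at the level of probability generating functions (which characterise convergence in distribution for $\N_0$-valued variables) rather than hand-waved, and a separate line is needed for the case $\beta=\alpha$, where a vertex is simultaneously a source and a sink of the layer.
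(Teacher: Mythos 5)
Your proposal is correct and follows exactly the route the paper takes: the paper's entire justification is the remark preceding the corollary, namely that each weight-$r$, type-$\beta$-to-type-$\alpha$ layer is an instance of the single-type model of \cite{Detering2015a} (with weights rescaled/zero-padded for the changed vertex count), to which \cite[Theorem~3.3]{Detering2015a} applies, followed by mixing over the type of the uniformly chosen vertex. Your write-up merely makes explicit the details the paper leaves implicit (regularity of the padded weight sequences via uniform integrability, the harmlessness of the $R^{-1}$ cap, and the conditioning argument), all of which are handled correctly.
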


\section{Asymptotic Results}\label{3:sec:asymptotic:results}
This section presents results for the final default fraction $n^{-1}\vert\mathcal{D}_n\vert$ triggered by some set of initial defaults, which is one of the main contributions of this chapter -- see Subsection~\ref{3:subsecFinDef}. In Subsection~\ref{3:subsecPrel}, we first introduce some functions that will play an important role. For the sake of readability, most of the -- mainly technical -- proofs of this section are moved to the appendix.

\subsection{Preliminaries}\label{3:subsecPrel}
Denote in the following $V:=[R]\times[T]^2$. Let $\psi_l(x_1,\ldots,x_R) := \P\left(\sum_{s\in[R]}sX_s\geq l\right)$ for independent Poisson random variables $X_s\sim\mathrm{Poi}(x_s)$, $s\in[R]$.  In the following, the functions $f^{r,\alpha,\beta}:\R_{+,0}^V\to\R$, $(r,\alpha,\beta)\in V$,  and $g:\R_{+,0}^V\to\R_{+,0}$ will play a central role. They are given by
\begin{equation*}
\begin{gathered}
f^{r,\alpha,\beta}(\bm{z}) = \E\Bigg[W^{+,r,\alpha}\psi_C\Bigg(\sum_{\gamma\in[T]}W^{-,1,\gamma}z^{1,\beta,\gamma},\ldots,\sum_{\gamma\in[T]}W^{-,R,\gamma}z^{R,\beta,\gamma}\Bigg)\1\{A=\beta\}\Bigg] - z^{r,\alpha,\beta},\\
g(\bm{z}) = \sum_{\beta\in[T]}\E\Bigg[S \psi_C\Bigg(\sum_{\gamma\in[T]}W^{-,1,\gamma}z^{1,\beta,\gamma},\ldots,\sum_{\gamma\in[T]}W^{-,R,\gamma}z^{R,\beta,\gamma}\Bigg)\1\{A=\beta\}\Bigg]
\end{gathered}
\end{equation*}
We begin by investigating some basic but important properties of these functions.
\begin{lemma}\label{3:lem:properties:f}
The functions $f^{r,\alpha,\beta}(\bm{z})$, $(r,\alpha,\beta)\in V$, and $g(\bm{z})$ are continuous at all $\bm{z}\in\R_{+,0}^V$. Further, each function $f^{r,\alpha,\beta}(\bm{z})$ is monotonically increasing in all of its coordinates except $z^{r,\alpha,\beta}$.
\end{lemma}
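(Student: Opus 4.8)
The plan is to establish continuity and monotonicity of $f^{r,\alpha,\beta}$ and $g$ directly from their definitions as expectations of bounded functions, using the dominated convergence theorem for continuity and the monotonicity of the map $\psi_l$ for the second assertion.

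First I would record the elementary properties of $\psi_l$ that drive everything. For fixed $l \in \N_{0,\infty}$, the function $(x_1,\ldots,x_R) \mapsto \psi_l(x_1,\ldots,x_R) = \P(\sum_{s\in[R]} s X_s \geq l)$ with $X_s \sim \mathrm{Poi}(x_s)$ independent is continuous on $\R_{+,0}^R$ (it is a convergent power series in the $x_s$, or one can argue via continuity in total variation of Poisson laws in their parameter) and takes values in $[0,1]$; moreover it is nondecreasing in each $x_s$, since increasing $x_s$ stochastically increases $X_s$ (Poisson is stochastically monotone in its parameter) and hence stochastically increases $\sum_s s X_s$, so the probability of exceeding the threshold $l$ can only grow. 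Also $\psi_l$ is nonincreasing in $l$, which is immediate from the definition. These facts are exactly analogous to the properties of the one-dimensional $\psi_l$ used in Chapter~\ref{chap:systemic:risk}.

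Next, for continuity of $f^{r,\alpha,\beta}$ and $g$: fix $(r,\alpha,\beta)\in V$ and let $\bm{z}_k \to \bm{z}$ in $\R_{+,0}^V$. Inside the expectation defining $f^{r,\alpha,\beta}(\bm{z})$, the integrand is
\[
W^{+,r,\alpha}\,\psi_C\Bigl(\textstyle\sum_{\gamma\in[T]}W^{-,1,\gamma}z^{1,\beta,\gamma},\ldots,\sum_{\gamma\in[T]}W^{-,R,\gamma}z^{R,\beta,\gamma}\Bigr)\1\{A=\beta\}.
\]
For almost every realization of $(\bm{W}^-,\bm{W}^+,C,A)$, the arguments $\sum_{\gamma}W^{-,s,\gamma}z_k^{s,\beta,\gamma}$ converge to $\sum_{\gamma}W^{-,s,\gamma}z^{s,\beta,\gamma}$, and since $\psi_C$ is continuous in its $R$ arguments (for the fixed value $C$), the whole integrand converges pointwise a.s.\ along $\bm z_k \to \bm z$. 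Since $0 \le \psi_C(\cdots) \le 1$, the integrand is dominated by $W^{+,r,\alpha}$, which is integrable by the definition of a regular vertex sequence (Definition~\ref{3:def:regular:vertex:sequence}). Dominated convergence then gives $f^{r,\alpha,\beta}(\bm z_k) - z_k^{r,\alpha,\beta} \to f^{r,\alpha,\beta}(\bm z) - z^{r,\alpha,\beta}$ up to the trivially continuous linear term $-z^{r,\alpha,\beta}$; hence $f^{r,\alpha,\beta}$ is continuous. The argument for $g$ is identical, with the dominating integrable random variable $S$ and a finite sum over $\beta\in[T]$ of such terms.

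Finally, monotonicity: fix $(r,\alpha,\beta)$ and consider another coordinate $(s,\gamma,\delta)\neq(r,\alpha,\beta)$. If $\delta \neq \beta$, then $z^{s,\gamma,\delta}$ does not appear at all in $f^{r,\alpha,\beta}$ (only coordinates of the form $z^{\cdot,\beta,\cdot}$ enter, because of the indicator $\1\{A=\beta\}$ and the structure of the arguments of $\psi_C$), so $f^{r,\alpha,\beta}$ is constant, hence weakly increasing, in that coordinate. If $\delta = \beta$, then $z^{s,\gamma,\beta}$ enters only through the $s$-th argument of $\psi_C$, namely $\sum_{\gamma'}W^{-,s,\gamma'}z^{s,\beta,\gamma'}$, with nonnegative coefficient $W^{-,s,\gamma}\geq 0$; increasing $z^{s,\gamma,\beta}$ weakly increases that argument, and by the coordinatewise monotonicity of $\psi_C$ noted above, it weakly increases the integrand, hence the expectation. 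This proves $f^{r,\alpha,\beta}$ is monotonically increasing in every coordinate other than $z^{r,\alpha,\beta}$. I do not expect a genuine obstacle here; the only point requiring a little care is justifying continuity of the multivariate $\psi_l$ itself (and handling the value $l=\infty$, for which $\psi_\infty\equiv 0$ trivially), but this follows from the power-series representation or, alternatively, from the continuity of Poisson probabilities in their parameter together with finiteness of $R$.
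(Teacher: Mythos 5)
Your proof is correct and follows essentially the same route as the paper's: dominated convergence with dominating integrable random variables $W^{+,r,\alpha}$ and $S$ for continuity, and coordinatewise monotonicity of the Poisson probabilities $\psi_l$ for the second assertion. The paper's own proof is a two-line version of exactly this argument; your additional care with the continuity of the multivariate $\psi_l$ and the case $l=\infty$ is fine but not a departure.
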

\begin{proof}
Continuity 
follows from Lebesgue's dominated convergence theorem noting that the integrands are continuous in $\bm{z}$ and bounded by the integrable random variables $W^{+,r,\alpha}$ resp.~$S$. Monotonicity of $f^{r,\alpha,\beta}$ follows directly from the monotonicity of the Poisson-probabilities.
\end{proof}
Let now $P:=\bigcap_{(r,\alpha,\beta)\in V}\{\bm{z}\in\R_{+,0}^V\,:\,f^{r,\alpha,\beta}(\bm{z})\geq0\}$. Since $f^{r,\alpha,\beta}(\bm{z})<0$ for any $\bm{z}\in\R_{+,0}^V$ with $z^{r,\alpha,\beta}>\E[W^{+,r,\alpha}\1\{A=\beta\}]$ and $P$ is an intersection of closed sets, $P$ is in fact compact. Note that clearly $\bm{0}\in P$. In general $P$ might consist of several disjoint, compact, connected components. Let in the following $P_0$ denote the component (i.\,e.~the largest connected subset) of $P$ containing $\bm{0}$. Since $P$ is a compact subset of $\R_{+,0}^V$, so is $P_0$. Define now $\bm{z}^*\in\R_{+,0}^V$ by $(z^*)^{r,\alpha,\beta}:=\sup_{\bm{z}\in P_0}z^{r,\alpha,\beta}$. The following lemma shows that in fact $\bm{z}^*\in P_0$ and it can hence be thought of as the maximal point of $P_0$. Further it identifies $\bm{z}^*$ as a joint root of all the functions $f^{r,\alpha,\beta}$, $(r,\alpha,\beta)\in V$, and shows the existence of a smallest joint root $\hat{\bm{z}}$. It will turn out later (see in particular Theorem \ref{3:thm:general:weights}) that the final systemic damage $n^{-1}\mathcal{S}_n$ is intimately related to these two (typically coinciding) joint roots of the functions $f^{r,\alpha,\beta}$, $(r,\alpha,\beta)\in V$. 

\begin{lemma}\label{3:lem:existence:hatz}
There exists a smallest joint root $\hat{\bm{z}}\in\R_{+,0}^V$ of all functions $f^{r,\alpha,\beta}$, \mbox{$(r,\alpha,\beta)\in V$,} in the sense that $\hat{\bm{z}}\leq\bar{\bm{z}}$ componentwise for all joint roots $\bar{\bm{z}}\in\R_{+,0}^V$. Further, $\bm{z}^*$ as defined above is a joint root of the functions $f^{r,\alpha,\beta}$, $(r,\alpha,\beta)\in V$, and both $\hat{\bm{z}}\in P_0$ and $\bm{z}^*\in P_0$.
\end{lemma}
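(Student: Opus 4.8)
The plan is to recast joint roots of the $f^{r,\alpha,\beta}$ as fixed points of one monotone operator and then run a Knaster--Tarski style iteration, using a single topological observation to keep everything inside $P_0$. Concretely, I would introduce the operator $\Psi\colon\R_{+,0}^V\to\R_{+,0}^V$ given by $\Psi(\bm z)^{r,\alpha,\beta}:=f^{r,\alpha,\beta}(\bm z)+z^{r,\alpha,\beta}$; by construction this is exactly the expectation term in the definition of $f^{r,\alpha,\beta}$, so $\bm z$ is a joint root of all $f^{r,\alpha,\beta}$ if and only if $\Psi(\bm z)=\bm z$, and $P=\{\bm z\in\R_{+,0}^V:\Psi(\bm z)\ge\bm z\}$. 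Monotonicity of the Poisson tail probabilities $\psi_l$ makes $\Psi$ nondecreasing in every coordinate, and $\Psi$ is continuous by dominated convergence exactly as in Lemma \ref{3:lem:properties:f}. Since $\psi_l\le1$, $\Psi$ maps $\R_{+,0}^V$ into the compact box $B:=\prod_{(r,\alpha,\beta)\in V}[0,\E[W^{+,r,\alpha}\1\{A=\beta\}]]$ (the coordinates are finite by the regular vertex sequence assumption); in particular $P\subseteq B$, so $P$, and hence $P_0$ as a component of the closed set $P$, is compact, and $\bm 0\in P$ because $\Psi(\bm 0)\ge\bm 0$.

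The key step is to observe that $\Psi$ and, for each fixed $\bm v\in P$, the join map $\bm z\mapsto\bm z\vee\bm v$ are continuous self-maps of $P$: if $\bm z\in P$ then $\Psi(\Psi(\bm z))\ge\Psi(\bm z)$, and $\Psi(\bm z\vee\bm v)\ge\Psi(\bm z)\vee\Psi(\bm v)\ge\bm z\vee\bm v$; moreover for $\bm z\in P$ the whole segment $[\bm z,\Psi(\bm z)]$ lies in $P$, since any $\bm u$ on it satisfies $\Psi(\bm u)\ge\Psi(\bm z)\ge\bm u$. Applying the segment fact at $\bm 0$ shows $\Psi(\bm 0)\in P_0$, and $\bm 0\vee\bm v=\bm v$. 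Since the continuous image of the connected set $P_0$ is connected and meets $P_0$ (at $\Psi(\bm 0)$, respectively at $\bm v$ when $\bm v\in P_0$), it must lie inside the connected component $P_0$. Hence $\Psi(P_0)\subseteq P_0$ and $\bm z\vee\bm v\in P_0$ whenever $\bm z,\bm v\in P_0$; iterating the latter, $P_0$ is closed under all finite coordinatewise joins.

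With these tools the two roots come out routinely. Iterating $\Psi$ from $\bm 0$ yields a coordinatewise nondecreasing sequence $\Psi^k(\bm 0)$, bounded by $B$, hence convergent to some $\hat{\bm z}$; continuity forces $\Psi(\hat{\bm z})=\hat{\bm z}$, each $\Psi^k(\bm 0)$ lies in $P_0$, and $P_0$ is closed, so $\hat{\bm z}\in P_0$; and any joint root $\bar{\bm z}$ satisfies $\bar{\bm z}=\Psi^k(\bar{\bm z})\ge\Psi^k(\bm 0)\to\hat{\bm z}$, so $\hat{\bm z}$ is the smallest joint root. For $\bm z^*$, for each coordinate $i\in V$ pick (by compactness of $P_0$ and continuity of $\bm z\mapsto z^i$) a maximizer $\bm z^{(i)}\in P_0$; then $\bigvee_{i\in V}\bm z^{(i)}\in P_0$ by the join-closure, and comparing coordinates shows this join equals $\bm z^*$, so $\bm z^*\in P_0$. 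Finally $\bm z^*\in P_0\subseteq P$ gives $\Psi(\bm z^*)\ge\bm z^*$, while $\Psi(\bm z^*)\in\Psi(P_0)\subseteq P_0$ forces $\Psi(\bm z^*)\le\bm z^*$ coordinatewise (as $\bm z^*$ is the coordinatewise supremum of $P_0$); therefore $\Psi(\bm z^*)=\bm z^*$, i.e.\ $\bm z^*$ is a joint root of all $f^{r,\alpha,\beta}$.

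I expect the main obstacle to be precisely the closure of $P_0$ under coordinatewise joins — equivalently, that $P_0$ contains its own supremum. For an arbitrary compact connected set this can fail, so it has to be extracted from the structure of $P$; the device that makes it work is the component-preservation property of the join maps $\bm z\mapsto\bm z\vee\bm v$ and of $\Psi$ established in the second paragraph. Everything else reduces to monotone-iteration bookkeeping and the continuity/dominated-convergence arguments already present in Lemma \ref{3:lem:properties:f}.
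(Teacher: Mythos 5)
Your proof is correct, and for the harder half of the lemma --- showing $\bm{z}^*\in P_0$ and that it is a joint root --- it takes a genuinely different route from the paper. For $\hat{\bm{z}}$ both arguments are monotone iterations: the paper runs a Gauss--Seidel-type sweep, raising one coordinate at a time to the zero set of the corresponding $f^{r,\alpha,\beta}$ and checking at each step that the iterate stays in $P_0$, while you run the Jacobi iteration $\Psi^k(\bm 0)$ of the single monotone operator $\Psi(\bm z)^{r,\alpha,\beta}=f^{r,\alpha,\beta}(\bm z)+z^{r,\alpha,\beta}$; these are essentially interchangeable. The real divergence is at $\bm{z}^*$. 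The paper perturbs the system, introducing $P(\epsilon)=\bigcap\{f^{r,\alpha,\beta}\geq-\epsilon\}$ and the smallest points $\hat{\bm z}(\epsilon)$ with $f^{r,\alpha,\beta}(\hat{\bm z}(\epsilon))=-\epsilon$, then identifies $\bm z^*$ as $\lim_{\epsilon\to 0+}\hat{\bm z}(\epsilon)$ via a nested-intersection-of-connected-compacta argument plus the inclusion $P_0\subseteq[\bm 0,\hat{\bm z}(\epsilon)]$. You instead prove a structural fact about $P_0$ itself: since $\Psi$ and the join maps $\bm z\mapsto\bm z\vee\bm v$ are monotone continuous self-maps of $P$ whose images of $P_0$ are connected and meet $P_0$, maximality of the component gives $\Psi(P_0)\subseteq P_0$ and closure of $P_0$ under coordinatewise joins; joining coordinatewise maximizers then puts $\bm z^*$ in $P_0$, and $\Psi(\bm z^*)\geq\bm z^*$ combined with $\Psi(\bm z^*)\in P_0$ forces $\Psi(\bm z^*)=\bm z^*$. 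Your lattice argument is shorter and more self-contained, and it isolates exactly the point you flag as the crux (that $P_0$ contains its own supremum, which fails for general compact connected sets). What the paper's $\epsilon$-machinery buys in exchange is reuse: the objects $P_0(\epsilon)$ and $\hat{\bm z}(\epsilon)$ reappear almost verbatim in the proofs of Lemma \ref{3:lem:z0}, Lemma \ref{3:lem:z0:equals:z*}, Lemma \ref{3:lem:z^*:epsilon} and Theorem \ref{3:thm:resilience}, where the perturbation is not a proof device but the actual object of study (an ex post shock), so the paper's route amortizes its overhead across the rest of Section \ref{3:sec:resilience}.
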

The lemma identifies $\bm{z}^*$ as 
the maximal joint root of $f^{r,\alpha,\beta}$, $(r,\alpha,\beta)\in V$, in $P_0$. However, if $P_0\subsetneq P$, then there will exist joint roots $\tilde{\bm{z}}\not\in P_0$ such that $\bm{z}^*\lneq\tilde{\bm{z}}$.

Often $\hat{\bm{z}}$ and $\bm{z}^*$ will coincide and then Theorem \ref{3:thm:general:weights} below will show that the final systemic damage $n^{-1}\mathcal{S}_n$ converges to $g(\hat{\bm{z}})$ in probability. But in some pathological situations this is not the case and Theorem \ref{3:thm:general:weights} will yield a lower bound on $n^{-1}\mathcal{S}_n$ in terms of $\hat{\bm{z}}$ and an upper bound in terms of $\bm{z}^*$. Figures \ref{3:fig:one:joint:root} and \ref{3:fig:two:joint:roots} show two-dimensional examples of $f$. In both examples, we chose $R=2$ and $T=1$. In the first example, we further chose all weights to be $1$ and the capital of each bank to be $3$ with probability $80\%$ respectively $0$ with probability $20\%$. The functions $f^1(z^1,z^2):=f^{1,1,1}(z^1,z^2)$ and $f^2(z^1,z^2):=f^{2,1,1}(z^1,z^2)$, where $z^1:=z^{1,1,1}$ and $z^2:=z^{2,1,1}$, then have a unique joint root, i.\,e.~$\hat{\bm{z}}=\bm{z}^*$. In the second example, we chose all weights to be $2$ and the capital of each bank to be $3$ with probability $\approx 94.14\%$ respectively $0$ with probability $\approx 5.86\%$. In this case, there exist two distinct joint roots $\hat{\bm{z}}\neq\bm{z}^*$ in $P_0$. At $\hat{\bm{z}}$, the root sets of $f^1$ and $f^2$ do not cross each other but only touch.

\begin{figure}[t]
    \hfill\subfigure[]{\includegraphics[width=0.4\textwidth]{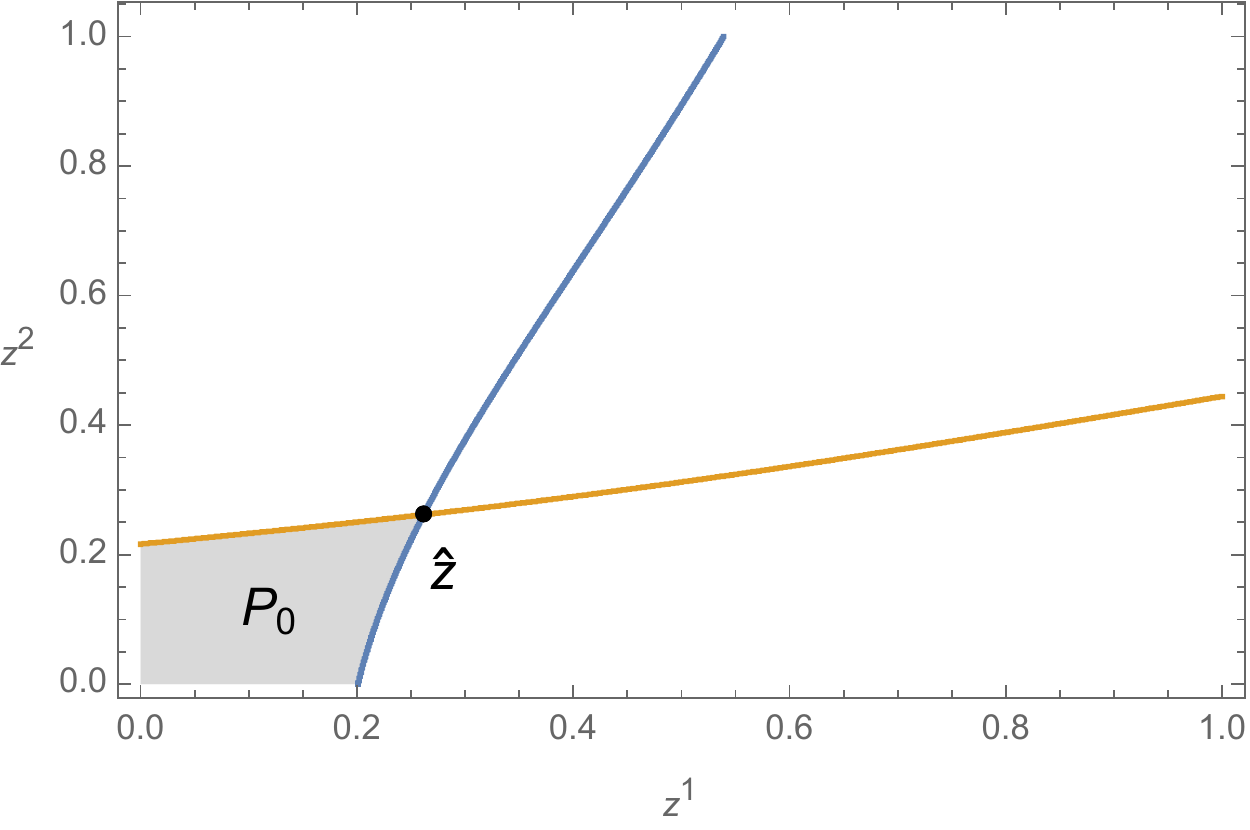}\label{3:fig:one:joint:root}}
    \hfill\subfigure[]{\includegraphics[width=0.4\textwidth]{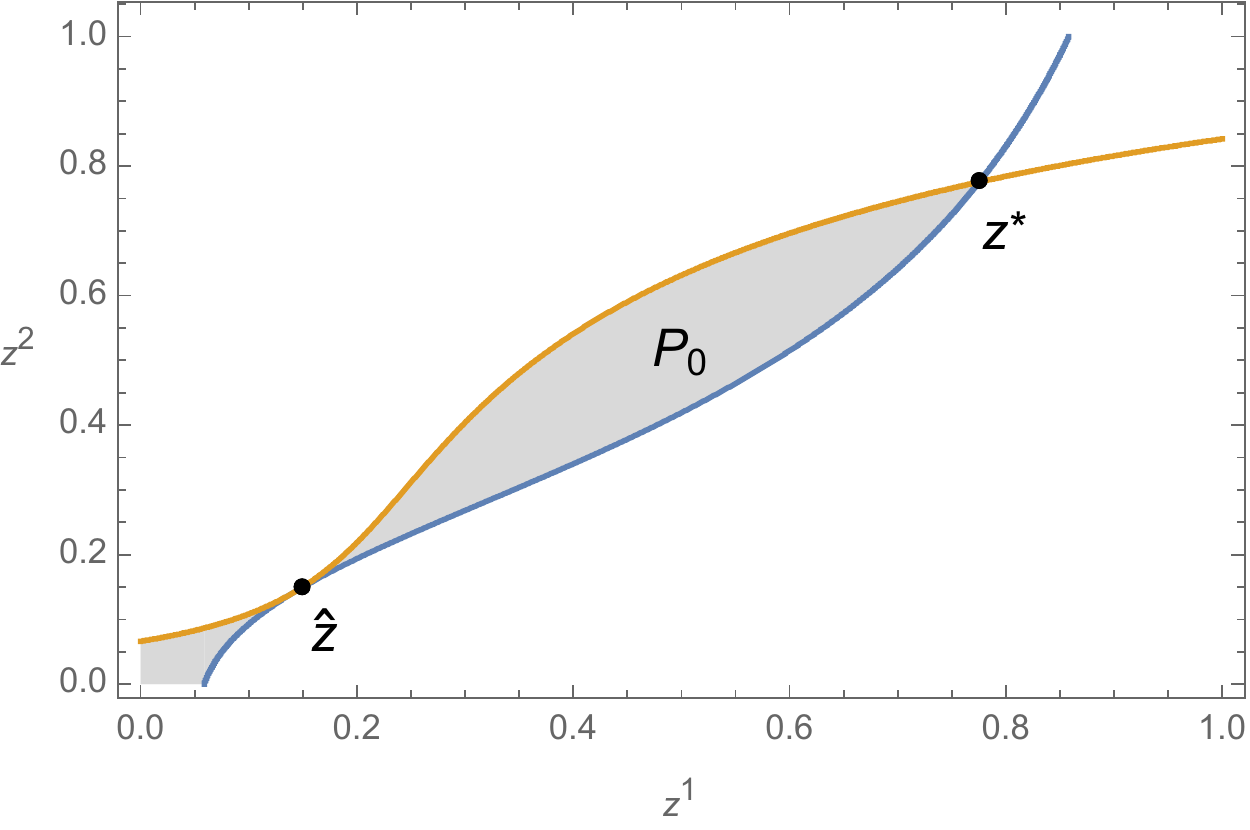}\label{3:fig:two:joint:roots}}\hfill
\caption{Plot of the root sets of the functions $f^1(z^1,z^2)$ (blue) and $f^2(z^1,z^2)$ (orange) for two different example networks.}
\end{figure}

The next lemma provides two sufficient criteria to check if a joint root, such as $\hat{\bm{z}}$, equals $\bm{z}^*$. These depend on the (weak) directional derivatives of $f^{r,\alpha,\beta}$, $(r,\alpha,\beta)\in V$, and are hence natural extensions of the stable fixed point assumption in previous literature such as \cite{Cont2016,Detering2015a} and Chapter \ref{chap:systemic:risk}. 

\begin{lemma}\label{3:lem:sufficient:criteria:z:star}
If $\bar{\bm{z}}\in P_0$ is a joint root of the functions $f^{r,\alpha,\beta}$, $(r,\alpha,\beta)\in V$, then $\bar{\bm{z}}=\bm{z}^*$ if one of the following holds:
\begin{enumerate}[(a)]
\item There exists $\bm{v}\in\R_+^V$ such that for all $(r,\alpha,\beta)\in V$ the directional derivatives $D_{\bm{v}}f^{r,\alpha,\beta}(\bar{\bm{z}})$ exist and $D_{\bm{v}}f^{r,\alpha,\beta}(\bar{\bm{z}})<0$.
\item There exist $\bm{v}\in\R_+^V$, $\kappa<1$ and $\Delta>0$ such that for every $\delta\in(0,\Delta)$,
\begin{align*}
\kappa v^{r,\alpha,\beta} &\geq \sum_{r'\in[R]}\E\Bigg[W^{+,r,\alpha}\Bigg(\sum_{\beta'\in[T]}v^{r',\beta,\beta'}W^{-,r',\beta'}\Bigg)\1\{A=\beta\}\\
&\hspace{1.5cm} \times\P\Bigg(\sum_{s\in[R]}s\mathrm{Poi}\Bigg(\sum_{\gamma\in[T]}W^{-,s,\gamma}\left(\bar{z}^{s,\beta,\gamma}+\delta v^{s,\beta,\gamma}\right)\Bigg)\in\{C-r',\ldots,C-1\}\Bigg)\Bigg].
\end{align*}
\end{enumerate}
\end{lemma}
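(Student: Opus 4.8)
The plan is to recast the statement in terms of a monotone fixed‑point operator and then show that the stability condition (a) or (b) forces $\bar{\bm z}$ to be the coordinatewise maximum of $P_0$. Define $\Phi\colon\R_{+,0}^V\to\R_{+,0}^V$ by
\[
\Phi^{r,\alpha,\beta}(\bm z):=f^{r,\alpha,\beta}(\bm z)+z^{r,\alpha,\beta}=\E\Bigl[W^{+,r,\alpha}\psi_C\Bigl(\textstyle\sum_{\gamma\in[T]}W^{-,1,\gamma}z^{1,\beta,\gamma},\dots,\sum_{\gamma\in[T]}W^{-,R,\gamma}z^{R,\beta,\gamma}\Bigr)\1\{A=\beta\}\Bigr].
\]
By Lemma \ref{3:lem:properties:f}, $\Phi$ is continuous and coordinatewise non-decreasing in \emph{all} of its coordinates, and $\bm z$ is a joint root of the $f^{r,\alpha,\beta}$ exactly when $\Phi(\bm z)=\bm z$. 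By Lemma \ref{3:lem:existence:hatz} we have $\bm z^*\in P_0$ and $\bm z^*\ge\bm z$ componentwise for every $\bm z\in P_0$, hence $\bm z^*\ge\bar{\bm z}$; so it suffices to prove the reverse inequality $\bm z^*\le\bar{\bm z}$.

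The first step is to reduce (a) and (b) to the single statement
\[
(\star)\qquad \exists\,\bm v\in\R_+^V,\ \Delta_0>0:\quad f^{r,\alpha,\beta}(\bar{\bm z}+\delta\bm v)<0\ \text{ for all }(r,\alpha,\beta)\in V,\ \delta\in(0,\Delta_0).
\]
Under (a), since $f^{r,\alpha,\beta}(\bar{\bm z})=0$ and $D_{\bm v}f^{r,\alpha,\beta}(\bar{\bm z})<0$, by definition of the directional derivative $f^{r,\alpha,\beta}(\bar{\bm z}+\delta\bm v)<0$ for all sufficiently small $\delta$; taking the minimum over the finitely many indices gives $(\star)$. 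Under (b), I would estimate, by a coupling of the Poisson counts that adds the rate increments $\delta\sum_\gamma W^{-,s,\gamma}v^{s,\beta,\gamma}$ one weight‑class at a time and a union bound over which weight‑$r'$ increment first brings the sum up to the threshold $C$,
\[
\Phi^{r,\alpha,\beta}(\bar{\bm z}+\delta\bm v)-\Phi^{r,\alpha,\beta}(\bar{\bm z})\ \le\ \delta\sum_{r'\in[R]}\E\Bigl[W^{+,r,\alpha}\bigl(\textstyle\sum_{\beta'}v^{r',\beta,\beta'}W^{-,r',\beta'}\bigr)\1\{A=\beta\}\,\P\bigl(\textstyle\sum_{s\in[R]} s\,\mathrm{Poi}(\dots)\in\{C-r',\dots,C-1\}\bigr)\Bigr],
\]
which by the displayed inequality in (b) is at most $\kappa\,\delta\,v^{r,\alpha,\beta}$; hence $f^{r,\alpha,\beta}(\bar{\bm z}+\delta\bm v)\le(\kappa-1)\,\delta\,v^{r,\alpha,\beta}<0$ for all $\delta\in(0,\Delta)$, which is $(\star)$ with $\Delta_0=\Delta$. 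Condition (b) is designed precisely so that this holds on an entire interval of $\delta$ and without assuming differentiability.

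The second, and main, step is to deduce $\bm z^*\le\bar{\bm z}$ from $(\star)$. Suppose not; then some $\bm z_1\in P_0$ satisfies $z_1^{\iota}>\bar z^{\iota}$ for an index $\iota$, so the relatively open set $P_0\cap\{\bm z:\exists\iota,\ z^\iota>\bar z^\iota\}$ and the relatively closed set $P_0\cap[\bm 0,\bar{\bm z}]$ are both non-empty and partition the connected set $P_0$; therefore there is $\bm z_0\in P_0\cap[\bm 0,\bar{\bm z}]$ that is a limit of $P_0$-points exceeding $\bar{\bm z}$ in some fixed coordinate $\iota_0$, so $z_0^{\iota_0}=\bar z^{\iota_0}$. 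Since $\bm z_0\in P$ we have $f^{\iota_0}(\bm z_0)\ge0$, i.e.\ $\Phi^{\iota_0}(\bm z_0)\ge\bar z^{\iota_0}$, while monotonicity of $\Phi$ and $\bm z_0\le\bar{\bm z}$ give $\Phi^{\iota_0}(\bm z_0)\le\Phi^{\iota_0}(\bar{\bm z})=\bar z^{\iota_0}$, hence $f^{\iota_0}(\bm z_0)=0$. To reach the contradiction I would use $(\star)$ together with continuity of the $f^{r,\alpha,\beta}$ (Lemma \ref{3:lem:properties:f}): for a fixed small $\delta\in(0,\Delta_0)$ the point $\bm w_\delta:=\bm z^*\wedge(\bar{\bm z}+\delta\bm v)$ lies in $[\bar{\bm z},\bar{\bm z}+\delta\bm v]$, strictly exceeds $\bar{\bm z}$ in every coordinate in which $\bm z^*$ does, and satisfies $\Phi(\bm w_\delta)\le\Phi(\bar{\bm z}+\delta\bm v)\lneq\bar{\bm z}+\delta\bm v$ as well as $\Phi(\bm w_\delta)\ge\bar{\bm z}$, so $f^{r,\alpha,\beta}(\bm w_\delta)<0$ in exactly those coordinates; the set $\{\bm z:f^{r,\alpha,\beta}(\bm z)<0\ \forall(r,\alpha,\beta)\}$ is open and disjoint from $P$, while on the other hand the connectedness of $P_0$ and the boundary point $\bm z_0$ just produced, combined with the maximality properties of $\bm z^*$ and $P_0$ from Lemma \ref{3:lem:existence:hatz}, force $P_0$ (hence $P$) to contain points arbitrarily close to $\bm w_\delta$ — a contradiction. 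This yields $\bm z^*\le\bar{\bm z}$ and therefore $\bar{\bm z}=\bm z^*$.

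I expect the genuinely hard part to be this last passage: controlling how $P_0$ can approach $\bar{\bm z}$ ``from above'' when $P_0$ is only known to be connected (not convex) and when $(\star)$ pins down only the single direction $\bm v$. The monotonicity of $\Phi$ (Lemma \ref{3:lem:properties:f}) and the connectedness/maximality structure of $P_0$ and $\bm z^*$ established in Lemma \ref{3:lem:existence:hatz} are the tools that must be combined carefully here, and one may well need to iterate the separation argument coordinate‑slice by coordinate‑slice. By contrast, the reduction in the first step — in particular the coupling estimate underlying (b) — is routine, if notation‑heavy.
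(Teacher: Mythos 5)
Your reduction of (a) and (b) to the statement $(\star)$ that $f^{r,\alpha,\beta}(\bar{\bm{z}}+\delta\bm{v})<0$ for all coordinates and all small $\delta$ is the right first move and matches the paper's proof, which takes $\bm{z}_n=\bar{\bm{z}}+n^{-1}\bm{v}$. The genuine gap is in your "main step". You run the connectedness/boundary argument on the box $[\bm{0},\bar{\bm{z}}]$ and correctly produce a point $\bm{z}_0\in P_0\cap[\bm{0},\bar{\bm{z}}]$ with $f^{\iota_0}(\bm{z}_0)=0$ --- but that is no contradiction (take $\bm{z}_0=\bar{\bm{z}}$ itself), and the subsequent detour through $\bm{w}_\delta=\bm{z}^*\wedge(\bar{\bm{z}}+\delta\bm{v})$ does not close it: $\bm{w}_\delta$ is \emph{not} shown to lie in the open set $\{\bm{z}:f^{r,\alpha,\beta}(\bm{z})<0\ \forall(r,\alpha,\beta)\}$ (in the coordinates where $(z^*)^{r,\alpha,\beta}=\bar{z}^{r,\alpha,\beta}$ one only gets $f^{r,\alpha,\beta}(\bm{w}_\delta)\geq 0$), and the assertion that $P_0$ must contain points arbitrarily close to $\bm{w}_\delta$ is unsupported --- connectedness of $P_0$ joins $\bm{0}$ to $\bm{z}^*$ but gives no control on whether that connected set passes near $\bm{w}_\delta$. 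The correct closing move, which the paper relies on (it is spelled out at the end of the proof of Lemma \ref{3:lem:existence:hatz}), is to run your own separation argument on the box $[\bm{0},\bar{\bm{z}}+\delta\bm{v}]$ instead of $[\bm{0},\bar{\bm{z}}]$: if $P_0\not\subseteq[\bm{0},\bar{\bm{z}}+\delta\bm{v}]$, connectedness yields $\tilde{\bm{z}}\in P_0$ with $\tilde{\bm{z}}\leq\bar{\bm{z}}+\delta\bm{v}$ and $\tilde{z}^{\iota}=\bar{z}^{\iota}+\delta v^{\iota}$ for some coordinate $\iota$, whence monotonicity of $f^{\iota}$ in all off-$\iota$ coordinates gives $f^{\iota}(\tilde{\bm{z}})\leq f^{\iota}(\bar{\bm{z}}+\delta\bm{v})<0$, contradicting $\tilde{\bm{z}}\in P$. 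Then $\bm{z}^*\leq\bar{\bm{z}}+\delta\bm{v}$ for every small $\delta$, and $\delta\to0$ finishes.

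A secondary, fixable issue: your intermediate inequality under (b), which bounds $\Phi^{r,\alpha,\beta}(\bar{\bm{z}}+\delta\bm{v})-\Phi^{r,\alpha,\beta}(\bar{\bm{z}})$ by $\delta$ times the expression of (b) evaluated at the single endpoint $\delta$, is not valid in general --- the increment equals the \emph{integral} over $t\in(0,\delta)$ of the directional derivative, and the integrand need not be dominated by its value at $t=\delta$. The paper's Fubini step does exactly this integration and then applies hypothesis (b) at every $t\in(0,\delta)\subset(0,\Delta)$, which is why (b) is stated for all $\delta$ in an interval; your conclusion $f^{r,\alpha,\beta}(\bar{\bm{z}}+\delta\bm{v})\leq(\kappa-1)\delta v^{r,\alpha,\beta}$ is correct once the bound is applied along the whole segment rather than only at its endpoint.
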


\subsection{The Main Result for the Final Systemic Damage}\label{3:subsecFinDef}
We now provide an asymptotic formula for the final systemic damage $n^{-1}\mathcal{S}_n$ in terms of function $g$ and the joint roots $\hat{\bm{z}}$ and $\bm{z}^*$. 

\begin{theorem}\label{3:thm:general:weights}
Consider a financial system described by a regular vertex sequence and let $\hat{\bm{z}}$ and $\bm{z}^*$ be the smallest respectively largest joint root in $P_0$ of the functions $f^{r,\alpha,\beta}$, $(r,\alpha,\beta)\in V$. Then
\[ g(\hat{\bm{z}}) + o_p(1) \leq n^{-1}\mathcal{S}_n \leq g(\bm{z}^*) + o_p(1). \]
In particular, if $\hat{\bm{z}}=\bm{z}^*$, then $n^{-1}\mathcal{S}_n \stackrel{p}{\to} g(\hat{\bm{z}})$ as $n\to\infty$.
\end{theorem}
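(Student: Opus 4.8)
The idea is to extend the sequential edge-revelation analysis behind the one-type threshold model (Theorem~\ref{2:thm:threshold:model}, adapted from \cite{Detering2015a}) to the vector-valued regime indexed by $V=[R]\times[T]^2$. I would run the cascade \eqref{3:eqn:default:contagion} as an exploration process: keep a queue of ``defaulted but not yet exposed'' vertices, and at each step remove one such vertex $v$, reveal all the Bernoulli variables $X^{r}_{v,\cdot}$ describing the edges emanating from $v$, and append to the queue every vertex pushed over its capital by the newly revealed weighted edges (the final default set being independent of the exploration order). For $(r,\alpha,\beta)\in V$, track the scaled cumulative out-weight $z_n^{r,\alpha,\beta}:=n^{-1}\sum_{i\in\mathcal{D},\,\alpha_i=\beta}w_i^{+,r,\alpha}$ of the set $\mathcal{D}$ of vertices explored so far. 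The roles of $f^{r,\alpha,\beta}$ and $g$ are then transparent heuristically: conditionally on $\mathcal{D}$ having channel profile $\bm z$, a still solvent bank $j$ of type $\beta$ receives, along weight-$s$ edges, a number of in-edges from $\mathcal{D}$ that is asymptotically $\mathrm{Poi}\big(\sum_{\gamma\in[T]}w_j^{-,s,\gamma}z^{s,\beta,\gamma}\big)$ — summing the independent contributions over source types $\gamma$ and using the product form of \eqref{3:eqn:edge:prob} — so $j$ defaults with probability $\psi_{c_j}\big(\sum_\gamma w_j^{-,1,\gamma}z^{1,\beta,\gamma},\dots,\sum_\gamma w_j^{-,R,\gamma}z^{R,\beta,\gamma}\big)$, and summing the $(+,r,\alpha)$-out-weights (resp.\ the systemic importances) of the newly defaulted banks reproduces $f^{r,\alpha,\beta}(\bm z)+z^{r,\alpha,\beta}$ (resp.\ $g(\bm z)$).

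For the \textbf{lower bound} I would make the above rigorous by a concentration argument: martingale (Azuma--Hoeffding) bounds on the one-step increments, or a differential-equation approximation, show that the vector process $(\bm z_n(t))_t$ stays uniformly close, with high probability, to the deterministic trajectory whose drift is the vector $\bm f(\bm z)$, the empirical-distribution averaging being supplied by Definition~\ref{3:def:regular:vertex:sequence}. As long as the trajectory lies in a region where some $f^{r,\alpha,\beta}(\bm z)>0$, the exploration queue stays nonempty and the process keeps increasing in at least that channel; since it starts at $\bm 0$, the off-diagonal monotonicity of the $f^{r,\alpha,\beta}$ (Lemma~\ref{3:lem:properties:f}) allows a monotone coupling forcing it past the smallest joint root $\hat{\bm z}$, which lies in $P_0$ by Lemma~\ref{3:lem:existence:hatz}. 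Hence the terminal profile satisfies $\bm z_n^{\mathrm{fin}}\ge\hat{\bm z}-o_p(1)$ componentwise, and since a further law of large numbers (using $\E[S_n]\to\E[S]$) gives $n^{-1}\mathcal{S}_n=g(\bm z_n^{\mathrm{fin}})+o_p(1)$, the continuity and monotonicity of $g$ yield $n^{-1}\mathcal{S}_n\ge g(\hat{\bm z})-\epsilon$ with high probability for every $\epsilon>0$.

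The \textbf{upper bound} is the delicate direction, and I expect it to be the main obstacle: at a degenerate joint root where the zero sets of the $f^{r,\alpha,\beta}$ only touch without crossing — as at $\hat{\bm z}$ in Figure~\ref{3:fig:two:joint:roots} — the stochastic fluctuations could in principle let the cascade ``tunnel'' past it, so one cannot simply assert that it halts at $\hat{\bm z}$, only that it should not overshoot $\bm z^*$. Mirroring the perturbation argument in the proof of Theorem~\ref{2:thm:asymp:2}, I would superimpose an independent ex-post shock defaulting a fraction $p$ of the banks, which replaces the limiting capital $C$ by $CM_p$ with $M_p\in\{0,1\}$, $\P(M_p=0)=p$, and shifts every $f^{r,\alpha,\beta}$ upward by an $O(p)$ amount; the corresponding maximal point $\bm z_p^*$ of the perturbed component $P_{0,p}$ then satisfies $\bm z_p^*\to\bm z^*$ as $p\to0$, and — choosing $p$ along a suitable sequence, exactly as in the $\delta(\tilde\epsilon)$-construction in the proof of Theorem~\ref{2:thm:asymp:2} — one can arrange $\bm z_p^*$ to be a stable joint root, i.e.\ one admitting a $\bm v$ as in Lemma~\ref{3:lem:sufficient:criteria:z:star}, at which the now non-degenerate perturbed cascade provably concentrates. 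A monotone coupling of the two systems gives $\mathcal{S}_n\le\mathcal{S}_n^{(p)}$, hence $n^{-1}\mathcal{S}_n\le g(\bm z_p^*)+\E[S\1\{M_p=0\}]+o_p(1)$; letting $p\to0$ and using continuity of $g$ together with $\E[S\1\{M_p=0\}]\to0$ gives $n^{-1}\mathcal{S}_n\le g(\bm z^*)+o_p(1)$. The final assertion, that $\hat{\bm z}=\bm z^*$ forces $n^{-1}\mathcal{S}_n\stackrel{p}{\to}g(\hat{\bm z})$, is then immediate by combining the two bounds.
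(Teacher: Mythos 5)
Your overall architecture matches the paper's: a sequential exploration of the cascade tracked through the channel profile $\bm z$, a differential-equation/concentration argument showing the trajectory stays in $P_0$ and must reach the smallest joint root $\hat{\bm z}$ for the lower bound, and, for the upper bound, an independent ex-post shock $M_p$ that lifts the functions $f^{r,\alpha,\beta}$ so that the perturbed maximal root $\bm z_p^*$ becomes stable (admits a direction $\bm v$ as in Lemma~\ref{3:lem:sufficient:criteria:z:star}), combined with a monotone coupling and $\bm z_p^*\to\bm z^*$. These are exactly the two pillars of Proposition~\ref{3:prop:finitary:weights} and Theorem~\ref{3:thm:finitary:weights}.

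However, there is a genuine gap: your concentration step is applied directly to a general regular vertex sequence, and it fails there. Both Azuma--Hoeffding and Wormald's theorem require bounded one-step increments and a Lipschitz, finite-dimensional drift; with the weight distributions the theorem is designed for (Pareto tails with infinite second moment, unbounded capitals), the increment contributed by a single exposed vertex to $w^{r,\alpha,\beta}(t)$ is unbounded and heavy-tailed, and the ``state'' one must track to close the ODE system (the joint empirical distribution of remaining capital, weights and type) is infinite-dimensional. The paper therefore proves the concentration results only for \emph{finitary} regular vertex sequences (finitely many weight/capital classes, where the drift is explicitly Lipschitz and the tracked vector $h(t)$ is finite-dimensional), and then sandwiches the general system between two finitary ones: $F_k^A$, obtained by truncating and sending the capital of ``large'' banks to $\infty$ while zeroing their weights, which stochastically dominates from below, and $F_k^B$, obtained by initially defaulting all large banks and concentrating \emph{twice} their aggregate out-weight and systemic importance on finitely many atoms, which dominates from above (Lemma~\ref{3:lem:stochastic:domination}). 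One then still needs Lemma~\ref{3:lem:convergence:g}, i.e.\ $\liminf_k g_k^A(\hat{\bm z}_k^A)\ge g(\hat{\bm z})$ and $\limsup_k g_k^B((\bm z^*)_k^B)\le g(\bm z^*)$, which is not automatic because the smallest joint root does not depend continuously on the system in general; the paper controls this via the auxiliary sets $D_\epsilon$ and the perturbed roots $\bm z^*(\epsilon)$. Without this two-sided finitary approximation your argument only covers bounded weights, which excludes the heterogeneous networks the model is built for. (A second, more minor imprecision: for the shocked system you also need $\hat{\bm z}_p=\bm z_p^*$ to invoke the concentration statement; the paper handles the case $\hat{\bm z}(\epsilon_n)\neq\bm z^*(\epsilon_n)$ by a further perturbation $\epsilon_n+\delta_n$, and the choice of the stable direction is made via a.e.\ differentiability of the monotone map $\epsilon\mapsto\bm z^*(\epsilon)$ rather than a verbatim copy of the one-dimensional $\delta(\tilde\epsilon)$ construction.)
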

Note that for the case $R=T=1$, Theorem \ref{3:thm:general:weights} extends \cite[Theorem 7.2]{Detering2015a} by an upper bound even if the requirements for the second part there are not satisfied. 
Also we do not 
require continuous differentiability of $f^{1,1,1}(z^{1,1,1})$ in a neighborhood of $(z^*)^{1,1,1}$ as 
in Theorem \ref{2:thm:asymp:2}.

\begin{remark}\label{3:rem:fraction:subsystems}
Theorem \ref{3:thm:general:weights} determines $ng(\hat{\bm{z}})$ as a lower bound on the damage caused by finally defaulted banks in the network. In fact, $g(\bm{z})$ is given by a sum over all the different types $\beta\in[T]$ in the network and it is thus no surprise that by small changes in the proofs of Theorems \ref{3:thm:finitary:weights} and \ref{3:thm:general:weights}, one derives that the damage caused by finally defaulted banks of type $\beta$ is lower~bounded~by
\[ n\E\Bigg[S\P\Bigg(\sum_{s\in[R]}s\mathrm{Poi}\Bigg(\sum_{\gamma\in[T]}W^{-,s,\gamma}\hat{z}^{s,\beta,\gamma}\Bigg)\geq C\Bigg)\1\{A=\beta\}\Bigg] + o_p(n). \]
The same reasoning allows to derive an upper bound in terms of $\bm{z}^*$.
\end{remark}


\section{Resilient and Non-resilient Networks}\label{3:sec:resilience}

In the previous section, we derived results that allow us to determine the typical final default fraction  in large financial systems caused by an exogenous shock. Another important question from a regulator's point of view that we study in this section is whether a given system in an \emph{initially} unshocked state is likely to be resilient to small shocks or susceptible to default~cascades.

Note that for some fixed financial network $(\bm{W}^-,\bm{W}^+,S,C,A)$ all information about the initial shock stems from $C$ and by ``initially unshocked'' we mean that $c_i>0$ for all $i\in[n]$. We model small shocks to the system by an \emph{ex post infection} in the following sense: we introduce indicators $m_i\in\{0,1\}$, $i\in[n]$, 
with the meaning that (the initially solvent) bank $i$ becomes insolvent if $m_i=0$. This amounts to setting its capital to $c_im_i$. In analogy to Definition \ref{3:def:regular:vertex:sequence} we assume regularity of $\{m_i\}_{i\in[n]}$ (jointly with the rest of the parameters) and we denote by $M$ the limiting random variable of ex post infection. 
In particular, the financial system 
shall be described by the random vector $(\bm{W}^-,\bm{W}^+,S,C,A,M)$ with $\P(C=0)=0$ and $\P(M=0)>0$. Denote by $\mathcal{D}_n^M$ the random final default set that $M$ triggers, by $\mathcal{S}_n^M=\sum_{i\in\mathcal{D}_n}s_i$ the final systemic damage, and by $(f^M)^{r,\alpha,\beta}$, $g^M$ and $(\bm{z}^*)^M$ the analogues of $f^{r,\alpha,\beta}$, $g$ respectively $\bm{z}^*$ with $C$ replaced by $CM$. 

From a regulator's point of view a desirable property of a financial system is the ability to absorb small local shocks $M$ without larger systemically important parts of the system being harmed. In our asymptotic setting, we can even choose $M$ arbitrarily small and we call a system \emph{resilient} 
if the relative final damage $n^{-1}\mathcal{S}_n^M$ tends to $0$ as $\P(M=0)\to0$. If on the other hand $n^{-1}\mathcal{S}_n^M$ is lower bounded by some positive constant, we call the system \emph{non-resilient} (see Definition \ref{3:def:non:resilience} below). 
\begin{definition}[Resilience]\label{3:def:resilience}
A financial system is said to be \emph{resilient} if for each $\epsilon>0$ there exists $\delta>0$ such that for all $M$ with $\P(M=0)<\delta$ it holds $n^{-1}\mathcal{S}_n^M \leq \epsilon$ w.\,h.\,p.
\end{definition}
It will turn out that the resilience of the system strongly depends on the form of the set $P_0$ which was introduced in Subsection~\ref{3:subsecPrel}.
Our first result is a criterion guaranteeing resilience. 
\begin{theorem}[Resilience Criterion]\label{3:thm:resilience}
Consider a financial system described by a regular vertex sequence and assume that 
$P_0=\{\bm{0}\}$. Then 
the system is resilient.
\end{theorem}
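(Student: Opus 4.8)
The plan is to reduce the claim to the deterministic limit of Theorem~\ref{3:thm:general:weights}, applied to the shocked system. For any admissible ex post infection $M$ the system $(\bm{W}^-,\bm{W}^+,S,CM,A)$ is again described by a regular vertex sequence (regularity of $\{m_i(n)\}_{i\in[n]}$ was assumed jointly with the remaining parameters), so the upper bound in Theorem~\ref{3:thm:general:weights} yields
\[ n^{-1}\mathcal{S}_n^M \le g^M\big((\bm{z}^*)^M\big) + o_p(1), \]
where $P^M$, $P_0^M$, $(\bm{z}^*)^M$, $g^M$, $(f^M)^{r,\alpha,\beta}$ denote the analogues of $P$, $P_0$, $\bm{z}^*$, $g$, $f^{r,\alpha,\beta}$ with $C$ replaced by $CM$. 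Hence it is enough to show that $g^M\big((\bm{z}^*)^M\big)\to 0$ as $\P(M=0)\to 0$, uniformly over all such $M$: given $\epsilon>0$ one then picks $\delta>0$ with $g^M((\bm{z}^*)^M)<\epsilon/2$ whenever $\P(M=0)<\delta$, and Definition~\ref{3:def:resilience} follows because the $o_p(1)$ term is below $\epsilon/2$ w.\,h.\,p.

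The core step is to prove $(\bm{z}^*)^M\to\bm{0}$. Comparing integrands, for every $(r,\alpha,\beta)\in V$ and $\bm{z}\in\R_{+,0}^V$ one has
\[ 0 \le (f^M)^{r,\alpha,\beta}(\bm{z}) - f^{r,\alpha,\beta}(\bm{z}) = \E\big[W^{+,r,\alpha}\1\{M=0\}\big(1-\psi_C(\,\cdot\,)\big)\1\{A=\beta\}\big] \le \eta_M, \]
with $\eta_M:=\max_{r\in[R],\alpha\in[T]}\E[W^{+,r,\alpha}\1\{M=0\}]$, and $\eta_M\to 0$ uniformly as $\P(M=0)\to 0$ by integrability of the $W^{+,r,\alpha}$ (truncate at level $K$, then let $K\to\infty$). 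Fix $\epsilon>0$. Since $P$ is compact and its connected component of $\bm{0}$ equals $\{\bm{0}\}$, a standard compactness argument produces an open set $O_\epsilon$ with $\bm{0}\in O_\epsilon\subset B(\bm{0},\epsilon)$ whose closure meets $P$ only in a clopen-in-$P$ neighbourhood of $\bm{0}$ contained in $O_\epsilon$; in particular the compact set $\partial O_\epsilon$ is disjoint from $P$. Consequently $h(\bm{z}):=\min_{(r,\alpha,\beta)\in V}f^{r,\alpha,\beta}(\bm{z})$, continuous by Lemma~\ref{3:lem:properties:f}, attains a strictly negative maximum $-c_\epsilon$ on $\partial O_\epsilon$. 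For $M$ with $\eta_M<c_\epsilon$ the displayed bound gives $\min_{(r,\alpha,\beta)}(f^M)^{r,\alpha,\beta}\le h+\eta_M<0$ on $\partial O_\epsilon$, so $P^M\cap\partial O_\epsilon=\emptyset$. As $P_0^M$ is connected, contains $\bm{0}\in O_\epsilon$, and avoids $\partial O_\epsilon$, it lies entirely in $O_\epsilon$; thus $P_0^M\subset B(\bm{0},\epsilon)$, and by the shocked analogue of Lemma~\ref{3:lem:existence:hatz} we have $(\bm{z}^*)^M\in P_0^M$, i.e. $\|(\bm{z}^*)^M\|<\epsilon$. Using the uniform vanishing of $\eta_M$, this shows $(\bm{z}^*)^M\to\bm{0}$ as $\P(M=0)\to 0$.

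To finish, I would combine these observations. Evaluating at $\bm{0}$ and using $\P(C=0)=0$ gives $g^M(\bm{0})=\E[S\1\{M=0\}]$, and splitting on the value of $M$ (on $\{M=0\}$ the $\psi_0$-terms cancel, on $\{M=1\}$ one recovers the unshocked integrand) shows $g^M(\bm{z})-g^M(\bm{0})\le g(\bm{z})$ for all $\bm{z}$, hence
\[ g^M\big((\bm{z}^*)^M\big) \le \E\big[S\1\{M=0\}\big] + g\big((\bm{z}^*)^M\big). \]
The first summand tends to $0$ as $\P(M=0)\to 0$ (integrability of $S$ and truncation), and the second tends to $g(\bm{0})=0$ by continuity of $g$ (Lemma~\ref{3:lem:properties:f}) together with $(\bm{z}^*)^M\to\bm{0}$, both uniformly in $M$. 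This yields $g^M((\bm{z}^*)^M)\to 0$, and by the reduction of the first paragraph the system is resilient.

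The step I expect to be the main obstacle is the topological argument in the second paragraph: turning the qualitative hypothesis $P_0=\{\bm{0}\}$ (with $P$ compact) into the quantitative fact that $P_0^M$ collapses onto $\{\bm{0}\}$ as the perturbation $\eta_M$ vanishes. Everything else -- the comparison $f^{r,\alpha,\beta}\le (f^M)^{r,\alpha,\beta}\le f^{r,\alpha,\beta}+\eta_M$, the uniform smallness of $\eta_M$ and $\E[S\1\{M=0\}]$, and the appeals to Theorem~\ref{3:thm:general:weights} and to continuity of $g$ -- is routine. (One could alternatively certify non-degeneracy at the shocked roots through the directional-derivative criteria of Lemma~\ref{3:lem:sufficient:criteria:z:star}, but the compactness route above needs no differentiability of the $f^{r,\alpha,\beta}$.)
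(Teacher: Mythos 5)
Your proof is correct, and it follows the same overall reduction as the paper: apply the upper bound of Theorem~\ref{3:thm:general:weights} to the shocked system, show that $(\bm{z}^*)^M$ collapses to $\bm{0}$ as $\P(M=0)\to0$, and conclude via the comparison $g^M(\bm{z})\le \E[S\1\{M=0\}]+g(\bm{z})$ together with continuity of $g$ at $\bm{0}$. The only genuine divergence is the mechanism for the collapse, which is indeed the crux. The paper dominates an arbitrary admissible $M$ by an independent Bernoulli($\gamma$) infection, i.e.\ by the functions $(f^\gamma)^{r,\alpha,\beta}(\bm{z})=(1-\gamma)f^{r,\alpha,\beta}(\bm{z})+\gamma(\zeta^{r,\alpha,\beta}-z^{r,\alpha,\beta})$, uses monotonicity of the sets $P_0^\gamma$ to get $(\bm{z}^*)^M\le\bm{z}^*(\gamma)$, and then invokes the already-proved fact (Lemma~\ref{3:lem:z^*:epsilon}, via the intersection of a nested chain of connected compacta being connected and contained in $P_0$) that $\bm{z}^*(\gamma)\downarrow\bm{z}^*=\bm{0}$. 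You instead work with the additive perturbation $f^{r,\alpha,\beta}\le (f^M)^{r,\alpha,\beta}\le f^{r,\alpha,\beta}+\eta_M$ and prove the collapse directly: since components and quasi-components coincide in the compact set $P$, the hypothesis $P_0=\{\bm{0}\}$ yields a clopen separation of $\{\bm{0}\}$ from the rest of $P$, hence an open $O_\epsilon$ with $\partial O_\epsilon\cap P=\emptyset$, a strictly negative maximum of $\min_{(r,\alpha,\beta)}f^{r,\alpha,\beta}$ on $\partial O_\epsilon$, and therefore $P_0^M\subset O_\epsilon$ once $\eta_M$ is below that gap. Both arguments are sound; yours is self-contained and does not need the auxiliary family $(f^\gamma)$ or the monotone-chain lemma, at the cost of importing the (standard but nontrivial) quasi-component fact, while the paper's version is shorter given that Lemma~\ref{3:lem:z^*:epsilon} is already established for the proof of Theorem~\ref{3:thm:finitary:weights}. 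A minor remark: you do not actually need $(\bm{z}^*)^M\in P_0^M$ from the shocked analogue of Lemma~\ref{3:lem:existence:hatz}; the containment $P_0^M\subset B(\bm{0},\epsilon)$ already bounds every coordinate supremum defining $(\bm{z}^*)^M$.
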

In particular for $R=T=1$, resilience is ensured if $0=\inf\{z>0\,:\,f^{1,1,1}(z)<0\}$. Theorem~\ref{3:thm:resilience} therefore extends \cite[Theorem 2.7]{Detering2015a}. Moreover, by Lemma~\ref{3:lem:sufficient:criteria:z:star}, if for some $\bm{v}\in\R_+^V$, $D_{\bm{v}}f^{r,\alpha,\beta}(\bm{0})$ exists and is negative for each $(r,\alpha,\beta)\in V$, then $P_0=\{\bm{0}\}$ and Theorem \ref{3:thm:resilience} is applicable.

Figure \ref{3:fig:resilient} shows a two-dimensional example satisfying the condition in Theorem \ref{3:thm:resilience}. We chose $R=2$, $T=1$, $W^{\pm,1}=W^{\pm,2}=1$ and $C=3$. It can be seen from the figure that small shocks (here $5\%$ of all banks) do only cause small jumps of the smallest joint root of $f^1(z^1,z^2)=f^{1,1,1}(z^1,z^2)$ and $f^2(z^1,z^2)=f^{2,1,1}(z^1,z^2)$, where $z^1:=z^{1,1,1}$ and $z^2:=z^{2,1,1}$.

\begin{figure}[t]
	\hfill\includegraphics[width=0.578\textwidth]{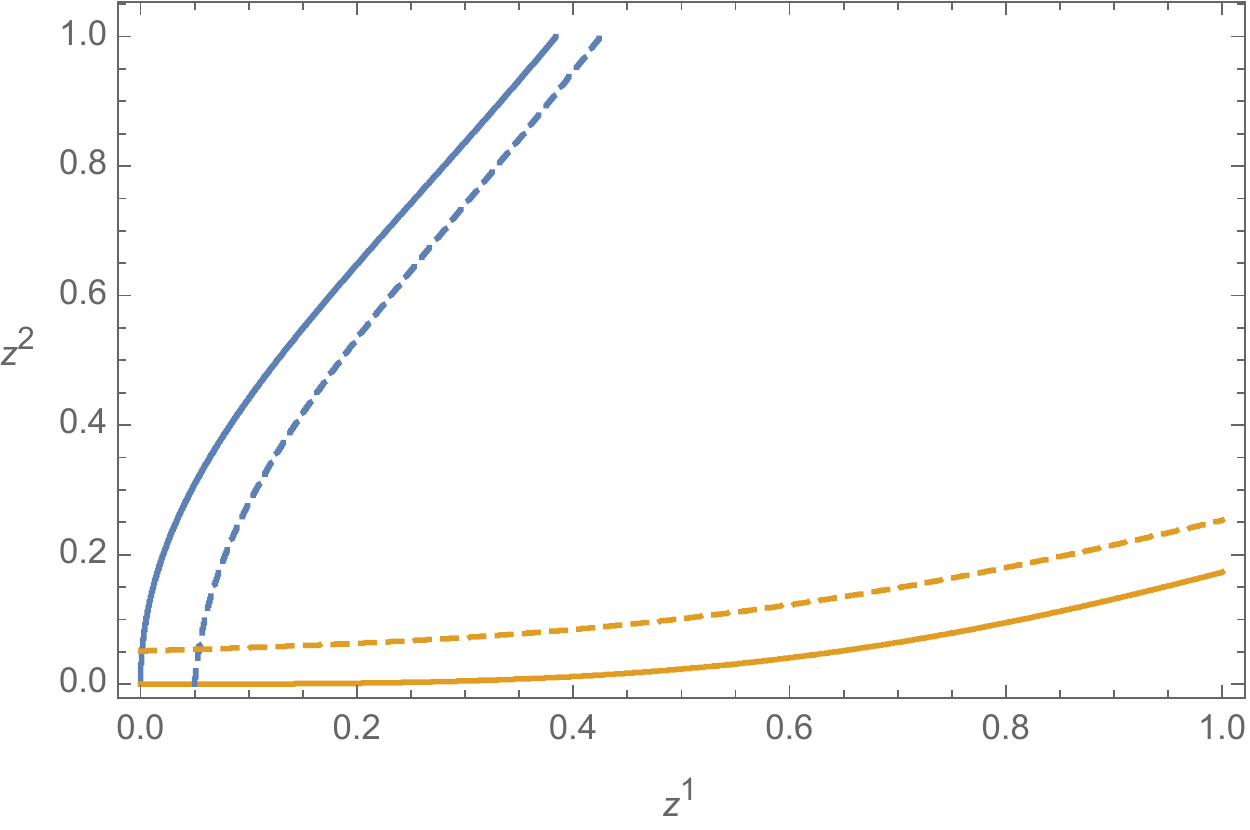}\hfill
		\caption{Plot of the root sets of the functions $f^1(z^1,z^2)$ (blue) and $f^2(z^1,z^2)$ (orange) for a financial system satisfying the condition in Theorem \ref{3:thm:resilience}. Solid: the unshocked functions. Dashed: the shocked functions.}\label{3:fig:resilient}
\end{figure}

On the other hand, concerning the characterization of non-resilient networks, a difficulty that arises is that the ex post shock $M$ possibly targets only certain subnetworks. More precisely, if $\tilde{V}:=\{(r,\alpha,\beta)\in V\,:\,\E[W^{+,r,\alpha}\1\{A=\beta\}]>0\}$, then it is still possible that for $(r,\alpha,\beta)\in \tilde{V}$ it holds $\E[W^{+,r,\alpha}\1\{A=\beta\}\1\{M=0\}]=0$. Consider for example a financial network consisting of banks of two types which are isolated of each other. Further, one of the two subnetworks shall be resilient, whereas the other one is non-resilient (in the sense of Definition \ref{3:def:non:resilience} (b)
). In order for the whole system to experience large damage, it is then necessary that $M$ does not only infect banks in the resilient subsystem but also in the non-resilient one. This explains why we have to differentiate between different choices for $M$ in the following to fully understand non-resilience in our model.
\begin{definition}[Non-resilience]\label{3:def:non:resilience}
\begin{enumerate}[(a)]
\item Let $I\subseteq \tilde{V}$. A financial system is called \emph{non-resilient with respect to shocks on $I$} if there exists a constant $\Delta_I>0$ such that $n^{-1}\mathcal{S}_n^M \geq \Delta_I$ w.\,h.\,p.~for each shock $M$ with $\E[W^{+,r,\alpha}\1\{A=\beta\}\1\{M=0\}]>0$ for all $(r,\alpha,\beta)\in I$.
\item We call a financial system \emph{non-resilient} if it is non-resilient w.\,r.\,t.~shocks on 
some $I\subseteq \tilde{V}$.
\end{enumerate} 
\end{definition}
Clearly a system is non-resilient if and only if it is non-resilient w.\,r.\,t.~shocks on $\tilde{V}$.


Let us start by considering the special case that $M$ infects every part of the system (i.\,e.~a shock on $\tilde{V}$). This is the case for example if $M$ is independent of type $A$, vertex-weights $W^{\pm,r,\alpha}$ and capital $C$. We can then formulate a corollary of Theorem \ref{3:thm:non-resilience} following later:
\begin{corollary}\label{3:cor:non:resilience:independent}
Consider a financial system described by a regular vertex sequence and  any ex post infection $M$ with $\E[W^{+,r,\alpha}\1\{A=\beta\}\1\{M=0\}]>0$ for all $(r,\alpha,\beta)\in \tilde{V}$. Then for any $\epsilon>0$ w.\,h.\,p.{} $n^{-1}\mathcal{S}_n^M \geq g(\bm{z}^*)-\epsilon$. If 
$g(\bm{z}^*)>0$, 
then the system is non-resilient.
\end{corollary}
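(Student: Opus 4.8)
The plan is to deduce the statement from Theorem~\ref{3:thm:general:weights} applied to the ex post shocked system $(\bm W^-,\bm W^+,S,CM,A)$. This system is again described by a regular vertex sequence — regularity of $\{m_i\}_{i\in[n]}$ is assumed jointly with the other parameters — and it is genuinely shocked, $\P(CM=0)=\P(M=0)>0$, since $\P(C=0)=0$. Write $(f^M)^{r,\alpha,\beta}$, $g^M$ and $\hat{\bm z}^M$ for the analogues of $f^{r,\alpha,\beta}$, $g$ and the smallest joint root; by Lemma~\ref{3:lem:existence:hatz} applied to this system $\hat{\bm z}^M$ lies in the corresponding $P_0^M$, so Theorem~\ref{3:thm:general:weights} gives, for any $\epsilon>0$, w.\,h.\,p.\ $n^{-1}\mathcal{S}_n^M\ge g^M(\hat{\bm z}^M)-\epsilon$. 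Because $\psi_l$ is non-decreasing in each argument, $g$ is non-decreasing in every coordinate of $\bm z$, and $CM\le C$ forces $\psi_{CM}\ge\psi_C$, hence $g^M\ge g$ pointwise. Therefore it suffices to prove the purely deterministic inequality $\hat{\bm z}^M\ge\bm z^*$ componentwise: this gives $n^{-1}\mathcal{S}_n^M\ge g(\bm z^*)-\epsilon$ w.\,h.\,p., and then choosing $\epsilon=g(\bm z^*)/2$ together with Definition~\ref{3:def:non:resilience}(a) with $I=\tilde V$ yields non-resilience whenever $g(\bm z^*)>0$.

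The key ingredient is a \emph{strict} domination of the shocked fixed-point functions. Using $\psi_0\equiv1$ and $C\ge1$ a.s., for every $\bm z\in\R_{+,0}^V$ and every $(r,\alpha,\beta)$,
\[
(f^M)^{r,\alpha,\beta}(\bm z)-f^{r,\alpha,\beta}(\bm z)
=\E\Bigl[W^{+,r,\alpha}\bigl(1-\psi_C(\,\cdot\,)\bigr)\1\{A=\beta\}\1\{M=0\}\Bigr],
\]
where $\psi_C(\,\cdot\,)$ denotes the Poisson tail probability evaluated at the same arguments as in $f^{r,\alpha,\beta}(\bm z)$. Since $1-\psi_C(\bm x)\ge e^{-\sum_s x_s}>0$ for finite $\bm x$ when $C\ge1$, the integrand is a.s.\ positive on the event $\{W^{+,r,\alpha}>0,\ A=\beta,\ M=0\}$. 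Discarding the coordinates outside $\tilde V$ — which vanish identically on $P$, on $P^M$ and at every joint root — the hypothesis $\E[W^{+,r,\alpha}\1\{A=\beta\}\1\{M=0\}]>0$ for $(r,\alpha,\beta)\in\tilde V$ makes the right-hand side strictly positive for \emph{all} $\bm z$. In particular, on $P_0$, where $f^{r,\alpha,\beta}(\bm z)\ge0$ by definition of $P$, we get $(f^M)^{r,\alpha,\beta}(\bm z)>0$ for every $(r,\alpha,\beta)$.

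To obtain $\hat{\bm z}^M\ge\bm z^*$ I would argue by connectedness. Set $\Phi^M(\bm z):=\bigl((f^M)^{r,\alpha,\beta}(\bm z)+z^{r,\alpha,\beta}\bigr)_{(r,\alpha,\beta)}$, which is continuous (Lemma~\ref{3:lem:properties:f}) and non-decreasing in $\bm z$, and has $\hat{\bm z}^M$ as a fixed point. Consider $A:=\{\bm z\in P_0:\bm z\le\hat{\bm z}^M\}$. It contains $\bm 0$ and is closed in $P_0$. It is also open in $P_0$: for $\bm z_0\in A$ let $\eta:=\min_{(r,\alpha,\beta)}(f^M)^{r,\alpha,\beta}(\bm z_0)>0$ by the previous paragraph, so $\Phi^M(\bm z_0)^{r,\alpha,\beta}\ge z_0^{r,\alpha,\beta}+\eta$ for every $(r,\alpha,\beta)$; on the other hand $\Phi^M(\bm z_0)\le\Phi^M(\hat{\bm z}^M)=\hat{\bm z}^M$ by monotonicity, hence $z_0^{r,\alpha,\beta}\le(\hat z^M)^{r,\alpha,\beta}-\eta$, and every $\bm z'\in P_0$ with $\|\bm z'-\bm z_0\|_\infty<\eta$ again lies in $A$. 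Since $P_0$ is connected by its very definition, $A=P_0$, i.e.\ $\bm z\le\hat{\bm z}^M$ for all $\bm z\in P_0$, and thus $\bm z^*=\sup_{\bm z\in P_0}\bm z\le\hat{\bm z}^M$.

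The hard part is precisely this last step. In the one-dimensional case $R=T=1$ one simply notes that $f^M>f\ge0$ on the interval $[0,\bm z^*]$, which forces the first positive root of $f^M$ strictly beyond $\bm z^*$; in the multi-type setting $P_0$ is a genuinely higher-dimensional set that need not be convex or even path-connected, so this elementary observation has to be replaced by the clopen/monotonicity argument above, which in turn leans on the structural facts about $P_0$ established in Lemmas~\ref{3:lem:properties:f} and \ref{3:lem:existence:hatz}. (Alternatively, the corollary can be read off from the general non-resilience result Theorem~\ref{3:thm:non-resilience} by specializing it to a shock hitting all of $\tilde V$ and then invoking Lemma~\ref{3:lem:z0:equals:z*} to identify the relevant threshold point with $\bm z^*$.)
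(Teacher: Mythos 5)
Your proof is correct, but it takes a more direct route than the paper. The paper obtains this corollary by specializing Theorem~\ref{3:thm:non-resilience} to $I=\tilde V$: there one shows $f^{r,\alpha,\beta}(\hat{\bm z}^M)\le -\epsilon$ on $I$, places $\hat{\bm z}^M$ in the set $T(\epsilon,I)$, invokes the minimality of $\hat{\bm z}(\epsilon,I)$ from Lemma~\ref{3:lem:z0} to get $\hat{\bm z}^M\ge\bm z_0(I)$, and finally identifies $\bm z_0(\tilde V)=\bm z^*$ via Lemma~\ref{3:lem:z0:equals:z*}. You bypass the entire $\bm z_0(I)$/$T(\epsilon,I)$ machinery and prove the key inequality $\hat{\bm z}^M\ge\bm z^*$ in one stroke: the strict pointwise domination $(f^M)^{r,\alpha,\beta}>f^{r,\alpha,\beta}$ on the $\tilde V$-coordinates (your lower bound $1-\psi_C\ge e^{-\sum_s x_s}>0$ for $C\ge1$ is exactly what makes the hypothesis on $M$ bite), combined with monotonicity of $\Phi^M$ and a clopen argument on the connected set $P_0$, gives $P_0\subseteq[\bm 0,\hat{\bm z}^M]$ directly; your handling of the degenerate coordinates outside $\tilde V$ is also correct, since those vanish on $P$ and at every joint root. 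The ingredients (monotonicity from Lemma~\ref{3:lem:properties:f}, connectedness of $P_0$, strict negativity of the unshocked $f$ at the shocked root) are the same ones the paper deploys, but your organization is shorter and self-contained for the full-shock case; the price is that it does not extend to shocks on proper subsets $I\subsetneq\tilde V$, which is precisely what the paper's detour through $\bm z_0(I)$ buys. The remaining steps --- applying Theorem~\ref{3:thm:general:weights} to the shocked regular vertex sequence, $g^M\ge g$ pointwise and monotone in $\bm z$, and reading off non-resilience from Definition~\ref{3:def:non:resilience} when $g(\bm z^*)>0$ --- are all sound.
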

Recall from Theorem \ref{3:thm:resilience} that a financial system is resilient if $P_0=\{\bm{0}\}$. For most practical purposes (e.\,g.~$\P(S>0)=1$) it will hold that $g(\bm{z}^*)>0$ if $P_0\supsetneq\{\bm{0}\}$ and hence $\bm{z}^*\neq\bm{0}$. Theorem \ref{3:thm:resilience} and Corollary \ref{3:cor:non:resilience:independent} then completely characterize resilience of a financial system in terms of $P_0$. 

See Figure \ref{3:fig:non:resilient} for an example where $P_0\neq\{\bm{0}\}$. In this example, we chose $R=2$, $T=1$, weights $W^{\pm,1}=W^{\pm,2}=3/2$ and capital $C=2$. 
The figure shows the jump of the smallest joint root from $\bm{0}$ to above $\bm{z}^*$ for any small shock (here $10\%$ of all banks).
\begin{figure}[t]
	\hfill\includegraphics[width=0.578\textwidth]{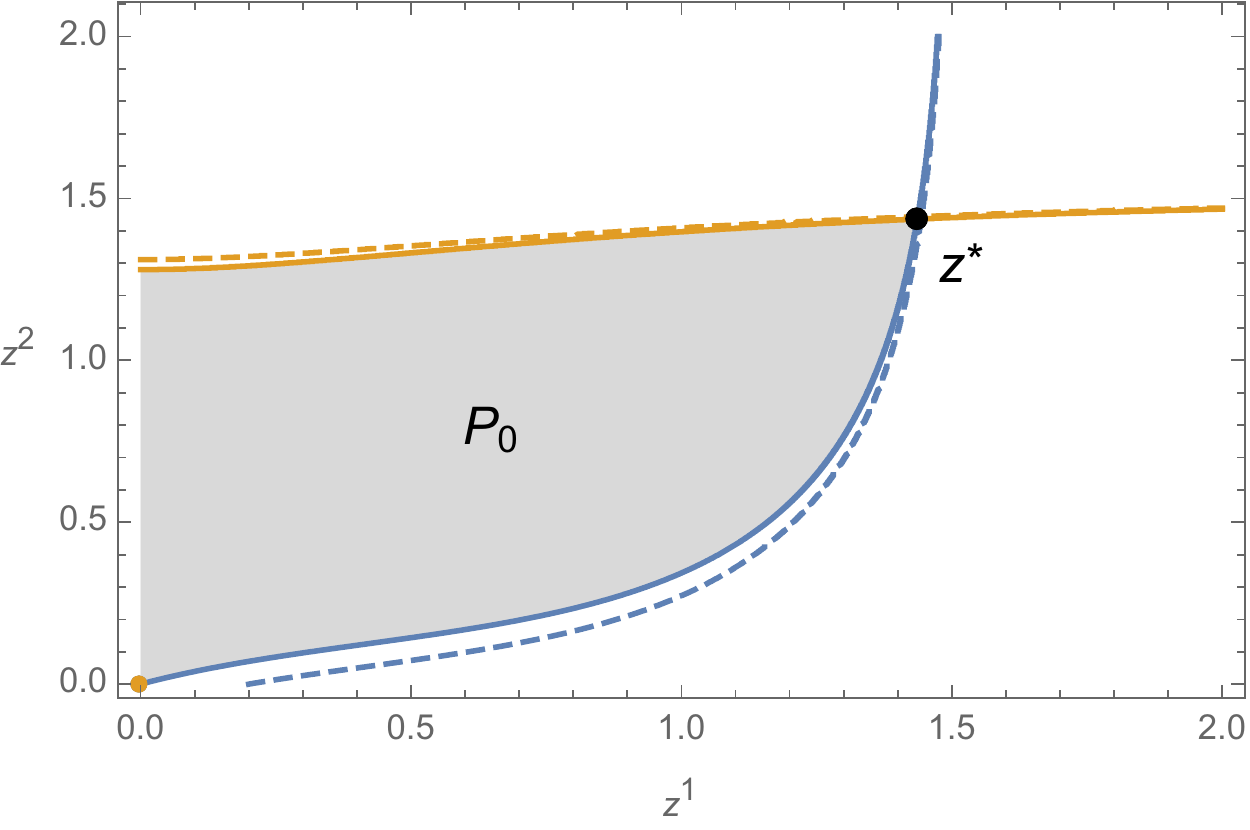}\hfill
		\caption{Plot of the root sets of the functions $f^1(z^1,z^2)$ (blue) and $f^2(z^1,z^2)$ (orange) for a financial system with 
		$P_0\neq\{\bm{0}\}$. Solid: the unshocked functions. Dashed: the shocked functions.
}\label{3:fig:non:resilient}
\end{figure}

We now aim to describe non-resilience with respect to shocks on $I\subsetneq \tilde{V}$. 
That is, we consider shocks $M$ such that $\E\left[W^{+,r,\alpha}\1\{A=\beta\}\1\{M=0\}\right]>0$ for $(r,\alpha,\beta)\in I$ but possibly $\E\left[W^{+,r,\alpha}\1\{A=\beta\}\1\{M=0\}\right]=0$ for 
$(r,\alpha,\beta)\in \tilde{V}\backslash I$. To this end, denote
\[ T(I) := \overline{\bigcap_{(r_1,\alpha_1,\beta_1)\in I}\left\{\bm{z}\in\R_{+,0}^V\,:\,f^{r_1,\alpha_1,\beta_1}(\bm{z})<0\right\} \cap \bigcap_{(r_2,\alpha_2,\beta_2)\in \tilde{V}\backslash I} \left\{\bm{z}\in\R_{+,0}^V\,:\,f^{r_2,\alpha_2,\beta_2}(\bm{z}) \leq 0\right\}} \]
and define $\bm{z}_0(I)$ 
by $z_0^{r,\alpha,\beta}(I) := \inf_{\bm{z}\in T(I)}z^{r,\alpha,\beta}$. Lemma \ref{3:lem:z0} shows that $\bm{z}_0(I)$ is the smallest joint root of 
$f^{r,\alpha,\beta}$, $(r,\alpha,\beta)\in V$, that is stable with respect to shocks in the $I$-coordinates. 
\begin{lemma}\label{3:lem:z0}
It holds 
$\bm{z}_0(I) \in P_0 \cap T(I)$ and it is thus a joint root of 
$f^{r,\alpha,\beta}$, $(r,\alpha,\beta)\in V$.
\end{lemma}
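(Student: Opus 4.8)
The plan is to show two inclusions: $\bm{z}_0(I) \in T(I)$ (essentially by definition and closedness) and $\bm{z}_0(I) \in P_0$ (the substantive part). For the first inclusion, observe that $T(I)$ is closed by construction, so it suffices to exhibit a sequence in $T(I)$ converging coordinatewise to $\bm{z}_0(I)$. Since $z_0^{r,\alpha,\beta}(I) = \inf_{\bm z \in T(I)} z^{r,\alpha,\beta}$ for each coordinate, and since $T(I) \subseteq P$ up to boundary (hence is bounded, being a subset of the region where each $f^{r,\alpha,\beta}(\bm z) \le 0$ forces $z^{r,\alpha,\beta} \le \E[W^{+,r,\alpha}\1\{A=\beta\}]$), one first argues $T(I)$ is nonempty and compact; then a standard argument picks for each coordinate $(r,\alpha,\beta)$ a point $\bm z_{(r,\alpha,\beta)} \in T(I)$ whose $(r,\alpha,\beta)$-coordinate is within $1/k$ of the infimum, and uses the monotonicity properties of the $f$'s together with a "componentwise minimum" type argument (or simply compactness plus the fact that the coordinatewise infimum over a suitable family is attained) to conclude $\bm z_0(I) \in T(I)$. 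Here the key tool is Lemma~\ref{3:lem:properties:f}: each $f^{r_1,\alpha_1,\beta_1}$ is monotonically increasing in all coordinates except its own, so decreasing any coordinate $z^{r,\alpha,\beta}$ with $(r,\alpha,\beta) \ne (r_1,\alpha_1,\beta_1)$ only decreases $f^{r_1,\alpha_1,\beta_1}$, which keeps the defining inequalities of $T(I)$ satisfied; this lets us take coordinatewise infima while staying in $T(I)$.

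The second and main inclusion, $\bm z_0(I) \in P_0$, is where the real work lies. We need that $\bm z_0(I)$ lies in the connected component of $P = \bigcap_{(r,\alpha,\beta)\in V}\{f^{r,\alpha,\beta}\ge 0\}$ containing $\bm 0$. First, $\bm z_0(I) \in P$: indeed $\bm z_0(I) \in T(I)$ gives $f^{r,\alpha,\beta}(\bm z_0(I)) \le 0$ for all $(r,\alpha,\beta) \in \tilde V$, but we also need $f^{r,\alpha,\beta}(\bm z_0(I)) \ge 0$. This should follow from $\bm z_0(I)$ being the \emph{boundary} point of $T(I)$ in the direction of decreasing coordinates: if some $f^{r,\alpha,\beta}(\bm z_0(I)) < 0$, then by continuity (Lemma~\ref{3:lem:properties:f}) and monotonicity one could decrease the $(r,\alpha,\beta)$-coordinate slightly, staying in $T(I)$ (the strict inequality for coordinates in $I$ is preserved, the non-strict ones too by monotonicity), contradicting minimality of $z_0^{r,\alpha,\beta}(I)$. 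For coordinates $(r,\alpha,\beta) \in V \setminus \tilde V$ one has $\E[W^{+,r,\alpha}\1\{A=\beta\}]=0$, so $f^{r,\alpha,\beta}(\bm z) = -z^{r,\alpha,\beta} \le 0$ always and minimality forces $z_0^{r,\alpha,\beta}(I)=0$, giving $f^{r,\alpha,\beta}(\bm z_0(I))=0$. Hence $\bm z_0(I) \in P$.

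It remains to show $\bm z_0(I)$ is connected to $\bm 0$ within $P$. The natural approach is to connect them by a path inside $P$. I would consider the straight-line segment $t \mapsto t\,\bm z_0(I)$, $t \in [0,1]$, or more robustly exploit the structure: for each $(r,\alpha,\beta)$, the function $g_{r,\alpha,\beta}(t) := f^{r,\alpha,\beta}(t\,\bm z_0(I))$ satisfies $g_{r,\alpha,\beta}(0) = \E[W^{+,r,\alpha}\1\{A=\beta\}] \ge 0$ (as $\psi_C(\bm 0) \in \{0,1\}$ with $\P(C=0)=0$ giving value $0$ — wait, need care: $\psi_C(\bm 0) = \P(C \le 0) = \P(C=0) = 0$, so actually $g_{r,\alpha,\beta}(0)=0$) and $g_{r,\alpha,\beta}(1) = 0$. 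This monotone-path argument is delicate because $f$ need not be monotone along the ray. The cleaner route, and the one I expect to use, is the argument already implicit in the excerpt: $T(I)$ is connected (it is the closure of an intersection of ``sublevel''-type sets that is star-shaped or at least connected towards large $\bm z$), $P_0$ and $T(I)$ share the boundary point structure, and one shows $P_0 \cap T(I) \ne \emptyset$ by noting $\bm z^* \in P_0$ (Lemma~\ref{3:lem:existence:hatz}) and $\bm z^* \in T(I)$ or lies in the closure, then uses that $\bm z_0(I)$ is the ``minimal'' element of $T(I)$ reachable by monotone decrease from $\bm z^*$ while staying in $P$. \textbf{The main obstacle} is precisely establishing this connectedness: ensuring that as we decrease coordinates from a known point of $P_0$ (such as $\bm z^*$, or $\hat{\bm z}$) down to $\bm z_0(I)$, we never leave $P$ — i.e., we never cross a surface $\{f^{r,\alpha,\beta} = 0\}$ into $\{f^{r,\alpha,\beta} < 0\}$ and get trapped in a different component. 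This requires carefully combining the monotonicity of Lemma~\ref{3:lem:properties:f} (which controls the off-diagonal coordinates) with a coordinatewise descent / monotone operator argument on the diagonal coordinates, likely phrased as: starting from $\bm z^*$, repeatedly replace each coordinate by the smallest value keeping all $f^{r,\alpha,\beta} \ge 0$ and the $T(I)$-inequalities valid; show this iteration stays in one connected piece of $P$ and converges to $\bm z_0(I)$, using a Knaster–Tarski / monotone-limit argument analogous to the proof of Lemma~\ref{3:lem:existence:hatz}.
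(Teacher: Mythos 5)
Your first two steps are sound: showing $\bm{z}_0(I)\in T(I)$ via the fact that the pre-closure set is stable under componentwise minima (a consequence of the monotonicity in Lemma~\ref{3:lem:properties:f}), and then deducing $f^{r,\alpha,\beta}(\bm{z}_0(I))\geq0$ by the perturbation/minimality argument, do work (though note $T(I)$ is \emph{not} compact -- $f^{r,\alpha,\beta}(\bm{z})\leq0$ bounds $z^{r,\alpha,\beta}$ from \emph{below}, not above -- so you must rely on the min-closedness route rather than compactness).

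The genuine gap is the claim $\bm{z}_0(I)\in P_0$, which you yourself flag as the main obstacle and then only sketch. The descent-from-$\bm{z}^*$ idea has two concrete problems. First, the monotonicity of Lemma~\ref{3:lem:properties:f} works \emph{against} a decreasing construction: lowering a coordinate $z^{r,\alpha,\beta}$ decreases every \emph{other} $f^{r_1,\alpha_1,\beta_1}$, so the path can immediately exit $P$; the ascending construction in Lemma~\ref{3:lem:existence:hatz} stays in $P$ precisely because the monotonicity points the other way there, and that argument does not dualize. Second, invoking $\bm{z}^*\in T(I)$ (or $\bm{z}^*=\bm{z}_0(\tilde V)$) is circular: that identity is Lemma~\ref{3:lem:z0:equals:z*}, whose proof uses the present lemma. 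Even granting a monotone decreasing sequence in $P$, you would still have to show its trace is connected and that it converges to $\bm{z}_0(I)$ rather than to some other joint root. The paper avoids descent altogether: it relaxes the constraints to $P(\epsilon,I):=\bigcap_{(r,\alpha,\beta)\in V}\{f^{r,\alpha,\beta}(\bm{z})\geq-\epsilon\1\{(r,\alpha,\beta)\in I\}\}$, runs the \emph{ascending} construction of Lemma~\ref{3:lem:existence:hatz} inside the component $P_0(\epsilon,I)$ of $\bm 0$ to produce the smallest point $\hat{\bm{z}}(\epsilon,I)$ with $f^{r,\alpha,\beta}=-\epsilon$ on $I$ and $=0$ off $I$, shows $\hat{\bm{z}}(\epsilon,I)$ is a componentwise lower bound for $T(\epsilon,I)$ and is monotone in $\epsilon$, identifies $\lim_{\epsilon\to0+}\hat{\bm{z}}(\epsilon,I)=\bm{z}_0(I)$, and concludes $\bm{z}_0(I)\in\bigcap_{\epsilon>0}P_0(\epsilon,I)=P_0$ because a nested chain of connected compact sets containing $\bm 0$ has connected intersection contained in $P$. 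Some version of this $\epsilon$-relaxation (or another device that converts the problem into an ascending one) is needed to close your argument.
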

We can state a general theorem for non-resilience in terms of $\bm{z}_0(I)$, where $I$ denotes the set of coordinates impacted by 
$M$. 
\begin{theorem}[Non-resilience Criterion]\label{3:thm:non-resilience}
Consider a financial system described by a regular vertex sequence and any ex post infection $M$ with $\E[W^{+,r,\alpha}\1\{A=\beta\}\1\{M=0\}]>0$ for all $(r,\alpha,\beta)\in I$, where $\emptyset\neq I\subseteq \tilde{V}$. Then for any $\epsilon>0$ it holds w.\,h.\,p.~$n^{-1}\mathcal{S}_n^M \geq g(\bm{z}_0(I))-\epsilon$. If $g(\bm{z}_0(I))>0$, then the system is non-resilient with respect to shocks on $I$.
\end{theorem}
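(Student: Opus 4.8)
The plan is to mimic the strategy of Part \ref{2:thm:asymp:2:1} of Theorem \ref{2:thm:asymp:2} and of the non-resilience results in the one-dimensional setting, but adapted to the multi-variate fixed-point picture. The key point is that the lower bound $g(\hat{\bm{z}}^M)+o_p(1)\le n^{-1}\mathcal{S}_n^M$ from (the lower-bound half of) Theorem \ref{3:thm:general:weights} applied to the ex post shocked system reduces everything to showing $g(\hat{\bm{z}}^M)\ge g(\bm{z}_0(I))-\epsilon$ whenever $\P(M=0)$ is small and $M$ hits all coordinates in $I$. Here $\hat{\bm{z}}^M$ denotes the smallest joint root in $P_0^M$ of the shocked functions $(f^M)^{r,\alpha,\beta}$. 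Since $g$ is continuous (Lemma \ref{3:lem:properties:f}) and monotonically increasing in each coordinate, it suffices to prove the componentwise bound $\hat{\bm{z}}^M\ge\bm{z}_0(I)$ for every such $M$ — this is the heart of the argument.

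First I would record the effect of the shock on the defining functions. Since $\psi_l$ is non-increasing in $l$ and $CM\le C$, we have $(f^M)^{r,\alpha,\beta}(\bm{z})\ge f^{r,\alpha,\beta}(\bm{z})$ for all $\bm{z}$, with strict inequality in the coordinates $(r,\alpha,\beta)\in I$ at points where $\psi_C(\cdot)$ is not already saturated, precisely because $\E[W^{+,r,\alpha}\1\{A=\beta\}\1\{M=0\}]>0$ forces a strictly positive contribution from the banks with $c_i=0$. Next I would argue that $\hat{\bm{z}}^M\in T(I)$: by definition $\hat{\bm{z}}^M$ is a joint root of the $(f^M)^{r,\alpha,\beta}$, so for $(r_2,\alpha_2,\beta_2)\in\tilde V\setminus I$ we get $f^{r_2,\alpha_2,\beta_2}(\hat{\bm{z}}^M)\le (f^M)^{r_2,\alpha_2,\beta_2}(\hat{\bm{z}}^M)=0$, while for $(r_1,\alpha_1,\beta_1)\in I$ the strict-shock inequality together with $(f^M)^{r_1,\alpha_1,\beta_1}(\hat{\bm{z}}^M)=0$ yields $f^{r_1,\alpha_1,\beta_1}(\hat{\bm{z}}^M)<0$ (one must check here that $\hat{\bm{z}}^M$ does not sit at a point where $\psi_C$ is already saturated along these coordinates, which follows because the shocked root cannot lie below the unshocked minimal positive contribution; a short monotonicity/continuity argument handles the boundary case, and in any event the closure in the definition of $T(I)$ absorbs it). Hence $\hat{\bm{z}}^M\in T(I)$, and by the defining infimum $\hat{\bm{z}}^M\ge\bm{z}_0(I)$ componentwise. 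Combining with Lemma \ref{3:lem:z0} (so that $\bm{z}_0(I)$ is genuinely a joint root and lies in $P_0$), continuity of $g$, and the fact that $\hat{\bm{z}}^M\to\bm{z}_0(I)$ as $\P(M=0)\to 0$ (which one extracts from the monotone convergence of the shocked functions to the unshocked ones and compactness of $P$), gives $g(\hat{\bm{z}}^M)\ge g(\bm{z}_0(I))$, hence the w.h.p.\ bound $n^{-1}\mathcal{S}_n^M\ge g(\bm{z}_0(I))-\epsilon$. The final claim is immediate: if $g(\bm{z}_0(I))>0$ we may set $\Delta_I:=g(\bm{z}_0(I))/2>0$ and conclude non-resilience with respect to shocks on $I$ per Definition \ref{3:def:non:resilience}(a).

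The main obstacle I anticipate is the careful handling of the boundary/saturation case in the step ``$\hat{\bm{z}}^M\in T(I)$'': one needs to rule out the degenerate possibility that at $\hat{\bm{z}}^M$ the Poisson probabilities $\psi_C$ are already at their extreme values along the $I$-coordinates, in which case the shock would produce equality rather than strict inequality and $\hat{\bm{z}}^M$ would only land in the closure $T(I)$ rather than its interior-defining intersection. This is exactly why $T(I)$ is defined as a closure, and resolving it cleanly requires either invoking that $\P(C=0)=0$ together with $\P(M=0)>0$ to guarantee a strictly positive mass of newly-defaulted banks that are ``active'' (i.e.\ with finite threshold) in the relevant coordinates, or else a limiting argument letting $\P(M=0)\downarrow 0$ along a sequence and using continuity of $g$ on the compact set $P$. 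I would also need to double-check that the lower-bound half of Theorem \ref{3:thm:general:weights} indeed applies to the shocked system $(\bm{W}^-,\bm{W}^+,S,C M,A)$, which it does since the ex post infection preserves regularity of the vertex sequence by assumption; and that $\hat{\bm{z}}^M\in P_0^M$, which is part of the conclusion of Lemma \ref{3:lem:existence:hatz} applied to the shocked functions.
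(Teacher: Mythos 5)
Your proposal is correct and follows essentially the same route as the paper's proof: decompose $(f^M)^{r,\alpha,\beta}(\hat{\bm{z}}^M)=0$ into $f^{r,\alpha,\beta}(\hat{\bm{z}}^M)$ plus a non-negative shock term that is strictly positive exactly on the coordinates where $\E[W^{+,r,\alpha}\1\{A=\beta\}\1\{M=0\}]>0$ (your ``saturation'' worry is indeed dispatched by $\P(C=0)=0$ and the Poisson mass at zero), conclude $\hat{\bm{z}}^M\in T(I)$ hence $\hat{\bm{z}}^M\geq\bm{z}_0(I)$, and finish with the lower bound of Theorem \ref{3:thm:general:weights} and monotonicity of $g$; the paper merely routes the comparison through $T(\epsilon,I)$ and $\hat{\bm{z}}(\epsilon,I)$ from Lemma \ref{3:lem:z0} instead of the componentwise infimum defining $\bm{z}_0(I)$. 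Your closing appeal to ``$\hat{\bm{z}}^M\to\bm{z}_0(I)$ as $\P(M=0)\to0$'' is unnecessary (and unproven), but it is also unused, since monotonicity of $g$ together with $\hat{\bm{z}}^M\geq\bm{z}_0(I)$ already gives $g(\hat{\bm{z}}^M)\geq g(\bm{z}_0(I))$ for each fixed $M$.
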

As before, usually (e.\,g.~if $\P(S>0)=1$) it will hold that $g(\bm{z}_0(I))>0$ whenever $\bm{z}_0(I)\neq\bm{0}$.

By Theorem \ref{3:thm:non-resilience} we derive for shocks on $\tilde{V}$ that for any $\epsilon>0$ w.\,h.\,p.~$n^{-1}\mathcal{S}_n^M \geq g(\bm{z}_0(\tilde{V}))-\epsilon$ while in Corollary \ref{3:cor:non:resilience:independent} we claimed $n^{-1}\mathcal{S}_n^M \geq g(\bm{z}^*)-\epsilon$ w.\,h.\,p. By the following lemma the two are in fact equivalent.
\begin{lemma}\label{3:lem:z0:equals:z*}
It holds $\bm{z}_0(\tilde{V})=\bm{z}^*$.
\end{lemma}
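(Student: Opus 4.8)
The plan is to prove the two componentwise inequalities $\bm z_0(\tilde V)\le\bm z^*$ and $\bm z^*\le\bm z_0(\tilde V)$ separately. The first is immediate: by Lemma~\ref{3:lem:z0} we have $\bm z_0(\tilde V)\in P_0$, and since $(z^*)^{r,\alpha,\beta}=\sup_{\bm z\in P_0}z^{r,\alpha,\beta}$ by definition, this gives $z_0^{r,\alpha,\beta}(\tilde V)\le(z^*)^{r,\alpha,\beta}$ for all $(r,\alpha,\beta)\in V$.

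For the reverse inequality, recall that $z_0^{r,\alpha,\beta}(\tilde V)=\inf_{\bm z\in T(\tilde V)}z^{r,\alpha,\beta}$ and that, for $I=\tilde V$, $T(\tilde V)=\overline{\bigcap_{(r,\alpha,\beta)\in\tilde V}\{\bm z\in\R_{+,0}^V:f^{r,\alpha,\beta}(\bm z)<0\}}$. Since each $f^{r,\alpha,\beta}$ is continuous (Lemma~\ref{3:lem:properties:f}) and a componentwise lower bound for a set passes to its closure, it suffices to prove the following claim: every $\bm z$ with $f^{r,\alpha,\beta}(\bm z)<0$ for all $(r,\alpha,\beta)\in\tilde V$ satisfies $\bm z\ge\bm z^*$ componentwise; granting the claim, every $\bm z\in T(\tilde V)$ satisfies $\bm z\ge\bm z^*$, whence $\bm z_0(\tilde V)=\inf_{\bm z\in T(\tilde V)}\bm z\ge\bm z^*$. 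For the coordinates $(r,\alpha,\beta)\in V\setminus\tilde V$ there is nothing to show, because $\E[W^{+,r,\alpha}\1\{A=\beta\}]=0$ forces $f^{r,\alpha,\beta}\le 0$ on all of $\R_{+,0}^V$ and hence $(z^*)^{r,\alpha,\beta}=0$; so only the coordinates in $\tilde V$ require an argument.

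To prove the claim, write $f^{r,\alpha,\beta}(\bm z)=\Phi^{r,\alpha,\beta}(\bm z)-z^{r,\alpha,\beta}$ with $\Phi^{r,\alpha,\beta}(\bm z):=\E[W^{+,r,\alpha}\psi_C(\sum_{\gamma}W^{-,1,\gamma}z^{1,\beta,\gamma},\ldots,\sum_{\gamma}W^{-,R,\gamma}z^{R,\beta,\gamma})\1\{A=\beta\}]$, which by Lemma~\ref{3:lem:properties:f} is nondecreasing in \emph{every} coordinate. Given $\bm z$ as in the claim, set $\bm v:=\bm z\wedge\bm z^*$ (coordinatewise minimum). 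Checking coordinatewise and using that $\bm z^*$ is a joint root (Lemma~\ref{3:lem:existence:hatz}) one gets $f^{r,\alpha,\beta}(\bm v)\le 0$ for \emph{all} $(r,\alpha,\beta)\in V$: if $v^{r,\alpha,\beta}=z^{r,\alpha,\beta}$, then $\bm v\le\bm z$ with equal own coordinate and monotonicity of $f^{r,\alpha,\beta}$ in the non-own coordinates give $f^{r,\alpha,\beta}(\bm v)\le f^{r,\alpha,\beta}(\bm z)\le 0$; if $v^{r,\alpha,\beta}=(z^*)^{r,\alpha,\beta}$, then $\bm v\le\bm z^*$ with equal own coordinate gives $f^{r,\alpha,\beta}(\bm v)\le f^{r,\alpha,\beta}(\bm z^*)=0$. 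Equivalently, $\Phi(\bm v)\le\bm v$, i.e.\ $\bm v$ is a sub-fixed point of the monotone map $\Phi$. It remains to deduce $\bm v=\bm z^*$; assume not, so $\bm v\le\bm z^*$ and $\bm v\ne\bm z^*$. The plan is to contradict the maximality of $\bm z^*$ inside the connected component $P_0$: starting from the sub-fixed point $\bm v$ one iterates $\Phi$ and combines the resulting monotone sequences with the fixed-point identity $\Phi(\bm z^*)=\bm z^*$ and the compactness and connectedness of $P_0$ (in the spirit of the proofs of Lemma~\ref{3:lem:existence:hatz} and Theorem~\ref{3:thm:non-resilience}) to exhibit a point of $P_0$ that strictly exceeds $\bm z^*$ in some coordinate, contradicting $(z^*)^{r,\alpha,\beta}=\sup_{\bm z\in P_0}z^{r,\alpha,\beta}$. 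This establishes the claim and hence $\bm z_0(\tilde V)\ge\bm z^*$.

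I expect this last step --- converting ``$\bm v\ne\bm z^*$'' into a contradiction --- to be the main obstacle. The functions $f^{r,\alpha,\beta}$ are monotone only in their non-own coordinates, so one cannot argue naively along the segment from $\bm v$ to $\bm z^*$ (such segments need not stay in $P$); the argument has to be routed through the monotone iteration of $\Phi$ together with the topology of the component $P_0$. Moreover, since no second-moment or integrability condition on the products $W^{-,\cdot,\cdot}W^{+,\cdot,\cdot}$ is assumed, the $f^{r,\alpha,\beta}$ need not be differentiable, so the stability criteria of Lemma~\ref{3:lem:sufficient:criteria:z:star} cannot be invoked directly at $\bm z_0(\tilde V)$ and a derivative-free version of the argument is required.
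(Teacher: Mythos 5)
Your first direction and your reduction are fine, and the lattice trick $\bm{v}=\bm{z}\wedge\bm{z}^*$ with the coordinatewise case analysis is a correct way to produce a point $\bm{v}\leq\bm{z}^*$ with $f^{r,\alpha,\beta}(\bm{v})\leq0$ everywhere and $f^{r,\alpha,\beta}(\bm{v})<0$ strictly on the set $J$ of coordinates where $z^{r,\alpha,\beta}<(z^*)^{r,\alpha,\beta}$. But the step you yourself flag as the main obstacle is indeed a genuine gap, and the mechanism you sketch for it cannot work. First, iterating the monotone map $\Phi$ from a point with $\Phi(\bm{v})\leq\bm{v}$ produces a componentwise \emph{decreasing} sequence converging to a fixed point $\bm{w}\leq\bm{v}\lneq\bm{z}^*$; a joint root strictly below $\bm{z}^*$ is not a contradiction (the smallest root $\hat{\bm{z}}$ is often exactly such a point, cf.\ Figure \ref{3:fig:two:joint:roots}). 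Second, your stated target -- ``exhibit a point of $P_0$ that strictly exceeds $\bm{z}^*$ in some coordinate'' -- is impossible by the very definition $(z^*)^{r,\alpha,\beta}=\sup_{\bm{z}\in P_0}z^{r,\alpha,\beta}$; no argument can produce such a point, so the contradiction must take a different form.

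The contradiction that actually closes the argument is not ``a point of $P_0$ above $\bm{z}^*$'' but ``a point of $P_0$ at which some $f^{r,\alpha,\beta}$ is strictly negative,'' i.e.\ a violation of $P_0\subseteq P$. This is obtained from the connectedness of $P_0$ (which contains both $\bm{0}$ and $\bm{z}^*$): the set $B=\{\bm{z}:z^{r,\alpha,\beta}\leq v^{r,\alpha,\beta}\ \forall(r,\alpha,\beta)\in J\}$ contains $\bm{0}$ but not $\bm{z}^*$, so $P_0$ must meet its relative boundary, yielding $\tilde{\bm{z}}\in P_0$ with $\tilde{z}^{r,\alpha,\beta}=v^{r,\alpha,\beta}$ for some $(r,\alpha,\beta)\in J$ and $\tilde{z}^{r',\alpha',\beta'}\leq v^{r',\alpha',\beta'}$ on $J$; on $V\setminus J$ one has $\tilde{\bm{z}}\leq\bm{z}^*=\bm{v}$ automatically since $\tilde{\bm{z}}\in P_0$. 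Monotonicity of $f^{r,\alpha,\beta}$ in the non-own coordinates then gives $f^{r,\alpha,\beta}(\tilde{\bm{z}})\leq f^{r,\alpha,\beta}(\bm{v})<0$, contradicting $\tilde{\bm{z}}\in P_0$. This is essentially the route the paper takes, except that instead of $\bm{v}=\bm{z}\wedge\bm{z}^*$ it works with the approximants $\hat{\bm{z}}(\epsilon,\tilde{V})$ of $\bm{z}_0(\tilde{V})$ from the proof of Lemma \ref{3:lem:z0}, at which $f^{r,\alpha,\beta}=-\epsilon<0$ simultaneously for \emph{all} $(r,\alpha,\beta)\in\tilde{V}$, so that whichever coordinate achieves equality on the boundary delivers the contradiction. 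Your version is salvageable, but you must replace the $\Phi$-iteration by this topological argument and take care that the boundary coordinate lands in $J$ (hence the restriction of the box to the $J$-coordinates above).
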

The identity $\bm{z}_0(I)=\bm{z}^*$ on the other hand does not necessarily imply $I=\tilde{V}$. 

\section{Systemic Capital Requirements}\label{3:sec:capital:block:model}
In this section, we apply the asymptotic theory developed in the previous sections in order to determine sufficient capital requirements ensuring resilience of a multi-type financial network to small initial shocks. This goes one step further than the resilience criterion from Theorem \ref{3:thm:resilience}, enabling a regulator to actively manage the system's stability. In fact, in Subsection \ref{3:ssec:suff:cap:requ}, we derive a whole family of different capital requirements parametrized by a vector $\bm{v}\in\R_+^V$. Thus, in Subsection \ref{3:ssec:interpretation:v}, we further provide an interpretation of $\bm{v}$ and we demonstrate how for different stress scenarios, amplification due to contagion effects can be controlled by means of our derived capitals.

\subsection{A Family of Capital Requirements}\label{3:ssec:suff:cap:requ}
By Theorem \ref{3:thm:resilience} and Lemma \ref{3:lem:sufficient:criteria:z:star}, for resilience of an a priori unshocked system it is sufficient that the directional derivative $D_{\bm{v}}f^{r,\alpha,\beta}(\bm{0})<0$ for all $(r,\alpha,\beta)\in V$ and some direction $\bm{v}\in\R_+^V$. It is thus the aim of this subsection to construct capitals $c_i(\bm{v})$, $i\in[n]$, for any given $\bm{v}\in\R_+^V$, such that $D_{\bm{v}}f^{r,\alpha,\beta}(\bm{0})$ exists and satisfies above condition.

In analogy to Chapter \ref{chap:systemic:risk}, for an institution $i\in[n]$ of type $\alpha_i=\beta$, we choose the following form for the capital:
\begin{equation}\label{3:eqn:def:capitals}
c_i(\bm{v}) = \max\left\{R+1,\left\lceil\mu^\beta \left( \frac{e_i(\bm{v})}{\Vert\bm{v}\Vert} \right)^{\nu^\beta}\right\rceil\right\},
\end{equation}
where $\{\mu^\beta\}_{\beta\in[T]}\subset\R_+$, $\{\nu^\beta\}_{\beta\in[T]}\subset\R_{+,0}$ and
\[ e_i(\bm{v}) := \sum_{s\in[R]} s \sum_{\gamma\in[T]}w_i^{-,s,\gamma} v^{s,\beta,\gamma} \]
can be thought of as the total exposure of $i$ weighted according to $\bm{v}$. 
Taking the maximum with $R+1$ in \eqref{3:eqn:def:capitals} ensures that $i$ does not have any contagious links to other institutions. If $W^{-,R,\gamma}\big\vert_{A=\beta}=0$ for all $\gamma\in[T]$ and institutions of type $\beta$ thus do not have any $R$-edges (in the limit of large systems), then one could decrease this lower bound on $c_i(\bm{v})$ accordingly. Moreover, as $e_i(\bm{v})$ linearly depends on $\bm{v}$, it makes sense to normalize it in \eqref{3:eqn:def:capitals}. By this choice only the direction of $\bm{v}$ is important and $\mu^\beta$ can be chosen independently of $\bm{v}$ in the following. Finally, assume that the uncapitalized system is a regular vertex sequence according to Definition \ref{3:def:regular:vertex:sequence} with limiting weights $W^{\pm,r,\alpha}$ and limiting vertex type $A$. Denote the limiting random variable of $\{e_i(\bm{v})\}_{i\in[n]}$ by
\[ E(\bm{v}) = \sum_{\beta\in[T]} \1\{A=\beta\} \sum_{s\in[R]}s\sum_{\gamma\in[T]}W^{-,s,\gamma}v^{s,\beta,\gamma} \]
and assume that also the limiting random variable of $\{c_i(\bm{v})\}_{i\in[n]}$ exists and satisfies
\[ C(\bm{v}) = \sum_{\beta\in[T]} \1\{A=\beta\} \max\left\{ R+1 , \left\lceil\mu^\beta \left( \frac{E(\bm{v})}{\Vert\bm{v}\Vert} \right)^{\nu^\beta}\right\rceil \right\}\quad\text{a.\,s.,} \]
which is true at least for absolutely continuously distributed in-weights (cf.~Subsection \ref{2:ssec:threshold:requirements}). Note, however, that also for more general distributions we can proceed similarly and derive analogue results as in the following.

We can then describe the behavior of $f^{r,\alpha,\beta}(\bm{z})$ near $\bm{0}$.
\begin{proposition}\label{3:prop:limsup:liminf}
For $\nu^\beta\geq1$, it holds
\[ D_{\bm{v}}f^{r,\alpha,\beta}(\bm{0}) = \lim_{h\to0+}h^{-1}f^{r,\alpha,\beta}(h\bm{v}) = -v^{r,\alpha,\beta}. \]
For $\nu^\beta<1$, it holds
\[ \limsup_{h\to0+}h^{-1}f^{r,\alpha,\beta}(h\bm{v}) \leq \Vert\bm{v}\Vert \limsup_{h\to0+}h^{-1}\E\left[ W^{+,r,\alpha}\1\{A=\beta\}\1\left\{ \frac{E(\bm{v})}{\Vert\bm{v}\Vert} > \left( \frac{h}{\mu^\beta} \right)^\frac{1}{\nu^\beta-1} \right\} \right] - v^{r,\alpha,\beta}, \]
and
\[ \liminf_{h\to0+}h^{-1}f^{r,\alpha,\beta}(h\bm{v}) \geq \Vert\bm{v}\Vert \liminf_{h\to0+}h^{-1}\E\left[ W^{+,r,\alpha}\1\{A=\beta\}\1\left\{ \frac{E(\bm{v})}{\Vert\bm{v}\Vert} > \left( \frac{h}{\mu^\beta} \right)^\frac{1}{\nu^\beta-1} \right\} \right] - v^{r,\alpha,\beta}. \]
\end{proposition}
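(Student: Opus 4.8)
The plan is to substitute $\bm z = h\bm v$, $h>0$, and analyse $h^{-1}f^{r,\alpha,\beta}(h\bm v)$ as $h\to 0+$. On $\{A=\beta\}$ the $s$-th argument of $\psi_C$ becomes $hx_s$ with $x_s:=\sum_{\gamma\in[T]}W^{-,s,\gamma}v^{s,\beta,\gamma}$; writing $Y:=\sum_{s\in[R]}x_s$ one has $Y\le E(\bm v)=\sum_{s\in[R]}sx_s\le RY$ and, by \eqref{3:eqn:def:capitals}, $C(\bm v)\ge\max\{R+1,\mu^\beta(E(\bm v)/\Vert\bm v\Vert)^{\nu^\beta}\}$ almost surely on $\{A=\beta\}$. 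Since $C(\bm v)\ge R+1$ we get $\psi_{C(\bm v)}(0,\dots,0)=0$, hence $f^{r,\alpha,\beta}(\bm{0})=0$, so all three asserted estimates reduce to estimates for $h^{-1}\E[W^{+,r,\alpha}\psi_C(hx_1,\dots,hx_R)\1\{A=\beta\}]$, the summand $-v^{r,\alpha,\beta}$ being exact. Two elementary inputs are used throughout: (i) $\psi_C(hx_1,\dots,hx_R)=\P\big(\sum_s sX_s\ge C\big)\le\P\big(\mathrm{Poi}(hY)\ge\lceil C/R\rceil\big)$ with $\lceil C/R\rceil\ge 2$, since each jump of $\sum_s sX_s$ is at most $R$; and (ii) the Poisson tail bounds $\P(\mathrm{Poi}(\lambda)\ge k)\le\lambda^2/2$ for $k\ge 2$ and $\P(\mathrm{Poi}(\lambda)\ge k)\le 2(e\lambda/k)^k$ for $k\ge 2\lambda$.

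For $\nu^\beta\ge 1$ I would split the expectation at $\{Y\le K_0\}$ versus $\{Y>K_0\}$, where $c_0:=\mu^\beta(R\Vert\bm v\Vert^{\nu^\beta})^{-1}$ and $K_0:=(2/c_0)^{1/\nu^\beta}$, so that $\lceil C/R\rceil\ge c_0Y^{\nu^\beta}\ge 2$ on $\{Y>K_0\}$. On $\{Y\le K_0\}$, bound (ii) gives $\psi_C\le(hK_0)^2/2$, so the corresponding part of $h^{-1}\E[\,\cdot\,]$ is $O(h)\,\E[W^{+,r,\alpha}\1\{A=\beta\}]\to 0$. On $\{Y>K_0\}$, because $\nu^\beta\ge 1$ the factor $Y^{\nu^\beta-1}$ is bounded below by $K_0^{\nu^\beta-1}$, hence for $h$ small the Chernoff bound in (ii) applies uniformly; using in addition that the exponent $c_0Y^{\nu^\beta}$ is at least $2$ one gets $\psi_C\le 2\big(eh/(c_0K_0^{\nu^\beta-1})\big)^2=O(h^2)$ uniformly, so this part of $h^{-1}\E[\,\cdot\,]$ is again $O(h)\to 0$. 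Therefore $h^{-1}\E[W^{+,r,\alpha}\psi_C(h\cdot)\1\{A=\beta\}]\to 0$ and $D_{\bm v}f^{r,\alpha,\beta}(\bm{0})=-v^{r,\alpha,\beta}$.

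For $\nu^\beta<1$ the capital grows only sublinearly and the governing split is by whether $C$ is ``reachable'', i.e.\ whether the Poisson mean $hE(\bm v)$ is comparable to $C\asymp\mu^\beta(E(\bm v)/\Vert\bm v\Vert)^{\nu^\beta}$. Solving $hE(\bm v)\asymp\mu^\beta(E(\bm v)/\Vert\bm v\Vert)^{\nu^\beta}$ for $E(\bm v)/\Vert\bm v\Vert$ yields precisely the threshold $(h/\mu^\beta)^{1/(\nu^\beta-1)}$ defining $B_h$. On a constant-factor enlargement $B_h^{+}$ of $B_h$ I would use only $\psi_C\le 1$, contributing $h^{-1}\E[W^{+,r,\alpha}\1\{B_h^{+}\}\1\{A=\beta\}]$ after dividing by $h$; the factor $\Vert\bm v\Vert$ in the statement is the price for passing from $B_h^{+}$ to the cleaner $B_h$, whose threshold differs from the true transition threshold only by bounded factors coming from $Y\le E(\bm v)\le RY$ and $C\ge\mu^\beta(E(\bm v)/\Vert\bm v\Vert)^{\nu^\beta}$. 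On the complement, where $C$ beats the Poisson mean by a factor tending to infinity with $h\to 0$, I would combine the two bounds of (ii) around the logarithmic window $E(\bm v)\asymp(\log(1/h))^{1/\nu^\beta}$: below it $\psi_C\le(hE(\bm v))^2/2$, giving a contribution of order $h\,(\log(1/h))^{2/\nu^\beta}\to 0$ after dividing by $h$; above it the Chernoff bound is at most $2h^{A}$ for a constant $A>1$, giving a contribution of order $h^{A-1}\to 0$. The $\liminf$ inequality is symmetric: on a slightly shrunken reachability event a lower (concentration) Chernoff bound forces $\psi_C\ge 1-o(1)$, yielding the matching lower estimate, and one finishes by letting the auxiliary slack parameters tend to their limits.

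I expect the main obstacle to be the uniform control of the ``unreachable'' complement in the case $\nu^\beta<1$: a bound depending only on decay in $E(\bm v)$ does not beat the $h^{-1}$ prefactor, and neither does a bound depending only on $h$, so one must exploit simultaneously the $h$-dependence of the integration window and the super-polynomial decay of the Poisson upper tail — which is why the window has to be placed at the scale $(\log(1/h))^{1/\nu^\beta}$ rather than at a fixed level. A secondary but genuine point is pinning down the constant $\Vert\bm v\Vert$: one must show that $h^{-1}$ times the $W^{+,r,\alpha}$-mass of the transition annulus $B_h^{+}\setminus B_h$ is absorbed, up to $o(1)$, into $\Vert\bm v\Vert\,h^{-1}\E[W^{+,r,\alpha}\1\{B_h\}\1\{A=\beta\}]$ (or else that $\psi_C$ can be estimated finely enough on that annulus directly); this is where the elementary inclusions among the events defined by $Y$, $E(\bm v)$ and $C$ do the work.
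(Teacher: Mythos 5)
Your treatment of the case $\nu^\beta\geq1$ is correct and is essentially the paper's own argument (bounded region via $\P(\mathrm{Poi}(\lambda)\geq2)\leq\lambda^2/2$, unbounded region via a Chernoff bound whose base is uniformly $O(h)$ because $Y^{1-\nu^\beta}$ is bounded there). The gap is in the $\limsup$ bound for $\nu^\beta<1$, in the shell just below the reachability threshold. Your claim that on the complement of a \emph{constant-factor} enlargement $B_h^+$ of $B_h$ ``$C$ beats the Poisson mean by a factor tending to infinity'' is false: if $E(\bm{v})/\Vert\bm{v}\Vert\leq c\,(h/\mu^\beta)^{1/(\nu^\beta-1)}$ with $c$ fixed, then $C/(hE(\bm{v}))\geq \mu^\beta(E(\bm{v})/\Vert\bm{v}\Vert)^{\nu^\beta-1}/(h\Vert\bm{v}\Vert)\geq c^{\nu^\beta-1}/\Vert\bm{v}\Vert$ is merely bounded below by a \emph{constant}, and near the boundary of $B_h^+$ the target $k\asymp C/R$ and the mean $\lambda=hY$ are of the same order, with $e\lambda/k$ possibly larger than $1$. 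There your simplified bound $\P(\mathrm{Poi}(\lambda)\geq k)\leq 2(e\lambda/k)^k$ is useless (the base exceeds $1$ and $k\to\infty$), so the ``$2h^A$ with $A>1$'' estimate does not hold on the whole unreachable region. You can force the base below a suitable $\theta<1$ only by taking the enlargement constant $c$ small (depending on $R,\Vert\bm{v}\Vert,\mu^\beta,\nu^\beta$), but then the reparametrization $c\,(h/\mu^\beta)^{1/(\nu^\beta-1)}=(c^{\nu^\beta-1}h/\mu^\beta)^{1/(\nu^\beta-1)}$ turns your bound into $c^{\nu^\beta-1}$ times the $\limsup$ in the statement, with $c^{\nu^\beta-1}>1$ possibly much larger than $\Vert\bm{v}\Vert$; and your proposed alternative of ``absorbing'' the mass of $B_h^+\setminus B_h$ into the main term is impossible without distributional assumptions on $E(\bm{v})$ that the proposition does not make. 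The constant $\Vert\bm{v}\Vert$ is exact (it is the rescaling $h\mapsto h\Vert\bm{v}\Vert$ coming from $C\approx\mu^\beta(E(\bm{v})/\Vert\bm{v}\Vert)^{\nu^\beta}$), and it matters downstream in the critical case $\nu^\beta=\nu_c^{r,\alpha,\beta}$.

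The missing idea is to keep the full exponential form of the Chernoff bound,
\[
\P\Bigg(\sum_{s\in[R]}s\,\mathrm{Poi}(\lambda^s)\geq c\Bigg)\leq\exp\Bigg\{R^{-1}\Bigg(c\log\Big(\frac{e\sum_t t\lambda^t}{c}\Big)-\sum_s s\lambda^s\Bigg)\Bigg\},
\]
whose subtracted mean term $-\sum_s s\lambda^s$ is exactly what rescues the critical shell. Writing $u=E(\bm{v})h^{1/(1-\nu^\beta)}$, the exponent equals $-R^{-1}h^{\nu^\beta/(\nu^\beta-1)}\omega(u)$ with $\omega(u)=u-\mu u^{\nu^\beta}\log(eu^{1-\nu^\beta}/\mu)$; on the shell between $((1+\delta)eh\mu^{-1})^{1/(\nu^\beta-1)}$ and $((1+\delta)h\mu^{-1})^{1/(\nu^\beta-1)}$ one has $\omega\geq\kappa(\delta)>0$, and since $h^{\nu^\beta/(\nu^\beta-1)}\to\infty$ this contribution is $o(h)$ for \emph{every fixed} $\delta>0$. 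This uniform-in-$\delta$-structure is what allows one to use $\psi_C\leq1$ only on $\{E(\bm{v})>((1+\delta)h\Vert\bm{v}\Vert^{\nu^\beta}/\mu^\beta)^{1/(\nu^\beta-1)}\}$ and send $\delta\to0$ \emph{after} $h\to0$, recovering the factor $(1+\delta)\Vert\bm{v}\Vert\to\Vert\bm{v}\Vert$. Your logarithmic-window split handles the regions where either $E(\bm{v})$ is moderate or $C/\lambda\to\infty$, but it cannot replace this step. The $\liminf$ sketch (lower concentration bound on a slightly shrunken reachability event) is sound and matches the paper.
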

We can then immediately conclude that the financial system becomes resilient if $\nu^\beta\geq1$ for all $\beta\in[T]$ (also see Corollary \ref{3:cor:resilient}). For $\nu^\beta<1$, however, more work is needed and in particular the choice of $\mu^\beta$ becomes important. Note that analogue to Lemma \ref{3:lem:sufficient:criteria:z:star} it is sufficient to assume $\limsup_{h\to0+}h^{-1}f^{r,\alpha,\beta}(h\bm{v})<0$ to ensure that $\bm{z}^*=\bm{0}$ and the system to be resilient. We could therefore work with upper bounds throughout all of this section. For a more concise presentation and a notion of sharpness regarding the capital requirements, however, we make the following exact assumption on the distribution tails of $W^{+,r,\alpha}\big\vert_{A=\beta}$ and $E(\bm{v})\big\vert_{A=\beta}$:
\[ u^{k^{+,r,\alpha,\beta}-1}\left(1-F_{W^{+,r,\alpha}\vert_{A=\beta}}(u)\right)\to \left(w_\text{min}^{+,r,\alpha,\beta}\right)^{k^{+,r,\alpha,\beta}-1},\quad\text{as }u\to\infty \]
for some constants $k^{+,r,\alpha,\beta}>2$ and $w_\text{min}^{+,r,\alpha,\beta}\in\R_+$, respectively
\[ u^{k^{-,\beta}-1}\left(1-F_{E(\bm{v})\vert_{A=\beta}}(u)\right)\to \left(e_\text{min}^\beta(\bm{v})\right)^{k^{-,\beta}-1},\quad\text{as }u\to\infty \]
for some constants $k^{-,\beta}>2$ and $e_\text{min}^\beta(\bm{v})\in\R_+$. That is $W^{+,r,\alpha}\big\vert_{A=\beta}$ and $E(\bm{v})\big\vert_{A=\beta}$ resemble Pareto distributions in the tail. Moreover, for the dependence structure (i.\,e.~the copula) of the two random variables we make the assumption that
\[ \Lambda^{r,\alpha,\beta}(x) = \P(A=\beta)^{-1} \lim_{p\to0}\P\left(\left. F_{W^{+,r,\alpha}\vert_{A=\beta}}(W^{+,r,\alpha})>1-xp\,\right\vert\,F_{E(\bm{v})\vert_{A=\beta}}(E(\bm{v}))>1-p, A=\beta \right) \]
exists for all $x\geq0$. Denote now
\[ \nu_c^{r,\alpha,\beta} := 2+\frac{k^{-,\beta}-1}{k^{+,r,\alpha,\beta}-1}-k^{-,\beta} \]
and
\[ \mu_c^{r,\alpha,\beta} := \Vert\bm{v}\Vert^{\nu_c^\beta} \P(A=\beta) \left( e_\text{min}^\beta(\bm{v}) \right)^{1-\nu_c^\beta} \frac{w_\text{min}^{+,r,\alpha,\beta} \int_0^\infty \Lambda^{r,\alpha,\beta}(x^{1-k^{+,r,\alpha,\beta}})\,\mathrm{d}x}{v^{r,\alpha,\beta}}. \]
Then for each function $f^{r,\alpha,\beta}$, the values $\nu_c^{r,\alpha,\beta}$ and $\mu_c^{r,\alpha,\beta}$ correspond to $\gamma_c$ resp.~$\alpha_c$ from Subsection \ref{2:ssec:threshold:requirements}. In particular, the following holds:
\begin{proposition}\label{3:prop:directional:derivative}
Under above assumptions, the directional derivative of $f^{r,\alpha,\beta}$ in direction $\bm{v}$ is given by
\[ D_{\bm{v}}f^{r,\alpha,\beta}(\bm{0}) = \begin{cases}\infty,&\text{if }\nu^\beta<\nu_c^\beta,\\\left( \frac{\mu_c^{r,\alpha,\beta}}{\mu^\beta}-1 \right)v^{r,\alpha,\beta},&\text{if }\nu^\beta=\nu_c^{r,\alpha,\beta},\\-v^{r,\alpha,\beta},&\text{if }\nu^\beta>\nu_c^{r,\alpha,\beta}.
\end{cases} \]
\end{proposition}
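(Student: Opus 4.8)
The plan is to compute the one-sided directional derivative $D_{\bm v}f^{r,\alpha,\beta}(\bm 0)=\lim_{h\to0+}h^{-1}f^{r,\alpha,\beta}(h\bm v)$ by reducing it to the one-dimensional analysis already carried out in Subsection~\ref{2:ssec:threshold:requirements} (Theorems~\ref{2:threshold:res} and~\ref{2:thm:functional:nonres}). The first step is to invoke Proposition~\ref{3:prop:limsup:liminf}. For $\nu^\beta\ge 1$ (in particular $\nu^\beta>\nu_c^{r,\alpha,\beta}$, since $\nu_c^{r,\alpha,\beta}<1$ always — recall $k^{-,\beta},k^{+,r,\alpha,\beta}>2$, so $\nu_c^{r,\alpha,\beta}=2+\tfrac{k^{-,\beta}-1}{k^{+,r,\alpha,\beta}-1}-k^{-,\beta}<1$ exactly as $\gamma_c<1$ in Subsection~\ref{2:ssec:threshold:requirements}) the proposition immediately gives $D_{\bm v}f^{r,\alpha,\beta}(\bm 0)=-v^{r,\alpha,\beta}$, which is the third case; but I still need to handle $\nu_c^{r,\alpha,\beta}\le\nu^\beta<1$, where Proposition~\ref{3:prop:limsup:liminf} only reduces the problem to analysing
\[
\Vert\bm v\Vert\lim_{h\to0+}h^{-1}\,\E\!\left[W^{+,r,\alpha}\1\{A=\beta\}\1\!\left\{\tfrac{E(\bm v)}{\Vert\bm v\Vert}>\Bigl(\tfrac{h}{\mu^\beta}\Bigr)^{\frac1{\nu^\beta-1}}\right\}\right].
\]

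The core of the argument is then to evaluate this limit using the tail and copula assumptions on $W^{+,r,\alpha}\vert_{A=\beta}$ and $E(\bm v)\vert_{A=\beta}$. Writing $u(h):=\bigl(h/\mu^\beta\bigr)^{1/(\nu^\beta-1)}\to\infty$ as $h\to0+$ (since $\nu^\beta<1$), I condition on $A=\beta$ and rewrite the expectation via the layer-cake formula,
\[
\E\!\left[W^{+,r,\alpha}\1\{A=\beta\}\1\{E(\bm v)/\Vert\bm v\Vert>u(h)\}\right]
=\int_0^\infty\P\!\left(W^{+,r,\alpha}>x,\ E(\bm v)>\Vert\bm v\Vert u(h),\ A=\beta\right)\mathrm dx,
\]
and then substitute $x=\bigl(w_\text{min}^{+,r,\alpha,\beta}\bigr)\bigl(\Vert\bm v\Vert u(h)/e_\text{min}^\beta(\bm v)\bigr)^{-(k^{+,r,\alpha,\beta}-1)/(k^{-,\beta}-1)}\,y$ — the substitution that turns the joint tail into a copula evaluated at the diagonal scaling $p\to0$ — exactly mirroring the computation in the proof of Theorem~\ref{2:thm:functional:nonres} at display~\eqref{2:eqn:int:cond:prob}. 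Using the Pareto-tail asymptotics for $E(\bm v)\vert_{A=\beta}$ to identify the marginal factor $p=p(h)=\bigl(\Vert\bm v\Vert u(h)/e_\text{min}^\beta(\bm v)\bigr)^{1-k^{-,\beta}}$, which by the definition of $u(h)$ is (up to the dominated-convergence correction) proportional to $h$ precisely when $\nu^\beta=\nu_c^{r,\alpha,\beta}$, dominated convergence together with the existence of $\Lambda^{r,\alpha,\beta}$ yields
\[
\lim_{h\to0+}h^{-1}\,\E\!\left[W^{+,r,\alpha}\1\{A=\beta\}\1\{E(\bm v)/\Vert\bm v\Vert>u(h)\}\right]
=\frac{\mu_c^{r,\alpha,\beta}\,v^{r,\alpha,\beta}}{\Vert\bm v\Vert\,\mu^\beta}
\]
when $\nu^\beta=\nu_c^{r,\alpha,\beta}$; combining with Proposition~\ref{3:prop:limsup:liminf} (both the $\limsup$ and $\liminf$ bounds coincide here) gives the middle case $D_{\bm v}f^{r,\alpha,\beta}(\bm 0)=\bigl(\mu_c^{r,\alpha,\beta}/\mu^\beta-1\bigr)v^{r,\alpha,\beta}$. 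For $\nu^\beta<\nu_c^{r,\alpha,\beta}$ the same computation shows $p(h)/h\to\infty$, so the expectation term dominates $h$ and the limit is $+\infty$; for $\nu_c^{r,\alpha,\beta}<\nu^\beta<1$ it shows $p(h)/h\to0$, so the expectation term is $o(h)$ and $D_{\bm v}f^{r,\alpha,\beta}(\bm 0)=-v^{r,\alpha,\beta}$, completing the last case and matching $\nu^\beta\ge1$ already handled.

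The main obstacle I anticipate is the bookkeeping in the change of variables connecting the multi-dimensional threshold $E(\bm v)$ to the one-dimensional Pareto-type analysis: one has to be careful that the copula limit $\Lambda^{r,\alpha,\beta}$ is taken along the correct sequence of levels (the marginal quantile $1-p(h)$ for $E(\bm v)$ and $1-x^{1-k^{+,r,\alpha,\beta}}p(h)$ for $W^{+,r,\alpha}$), and that the integrand is dominated by an integrable envelope uniformly in $h$ so that Lebesgue's theorem applies — the natural envelope being $1\wedge x^{1-k^{+,r,\alpha,\beta}}$, exactly as in the proof of Theorem~\ref{2:thm:functional:nonres}. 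The role of the normalisation by $\Vert\bm v\Vert$ in \eqref{3:eqn:def:capitals} and in $\mu_c^{r,\alpha,\beta}$ is precisely to make these factors cancel cleanly, so once the substitution is set up correctly the identification of the constant $\mu_c^{r,\alpha,\beta}$ is a direct transcription of the single-type formula $\alpha_c(\Lambda)$ from Subsection~\ref{2:ssec:threshold:requirements} with $\beta^+\leftrightarrow k^{+,r,\alpha,\beta}$, $\beta^-\leftrightarrow k^{-,\beta}$, $w_\text{min}^+\leftrightarrow w_\text{min}^{+,r,\alpha,\beta}$ and $w_\text{min}^-\leftrightarrow e_\text{min}^\beta(\bm v)/\Vert\bm v\Vert$.
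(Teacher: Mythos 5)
Your proposal follows essentially the same route as the paper's proof: reduce to the tail expectation via Proposition~\ref{3:prop:limsup:liminf}, then evaluate it by the layer-cake formula, a rescaling substitution, the copula limit $\Lambda^{r,\alpha,\beta}$ with dominated convergence against the envelope $1\wedge x^{1-k^{+,r,\alpha,\beta}}$, exactly transcribing the argument of Theorem~\ref{2:thm:functional:nonres}, and the resulting constant and trichotomy in $\nu^\beta$ versus $\nu_c^{r,\alpha,\beta}$ are correct. Only two cosmetic slips: the scaling factor in your substitution should carry the exponent $+(k^{-,\beta}-1)/(k^{+,r,\alpha,\beta}-1)$ (so that it tends to infinity with $u(h)$), and the quantity whose order in $h$ decides the trichotomy is the product $\theta(h)p(h)\propto u(h)^{\nu_c^{r,\alpha,\beta}-1}$ rather than $p(h)$ alone.
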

Then as hinted at above using Theorem \ref{3:thm:resilience} and Lemma \ref{3:lem:sufficient:criteria:z:star} we derive the following criterion for resilience.
\begin{corollary}\label{3:cor:resilient}
The system is resilient for a certain choice of $\bm{v}\in\R_+^V$ and capitals as above if for all $(r,\alpha,\beta)\in V$ one of the following holds
\begin{enumerate}
\item $\nu^\beta>\nu_c^{r,\alpha,\beta}$,
\item $\nu^\beta=\nu_c^{r,\alpha,\beta}$ and $\mu^\beta>\mu_c^{r,\alpha,\beta}$.
\end{enumerate}
\end{corollary}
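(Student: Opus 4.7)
The plan is to combine Proposition \ref{3:prop:directional:derivative} with Lemma \ref{3:lem:sufficient:criteria:z:star} and Theorem \ref{3:thm:resilience}. First I would observe that under the standing assumption $\P(C=0)=0$, the point $\bm{0}$ is trivially a joint root of all functions $f^{r,\alpha,\beta}$, $(r,\alpha,\beta)\in V$, since $f^{r,\alpha,\beta}(\bm{0}) = \E[W^{+,r,\alpha}\psi_C(0,\ldots,0)\1\{A=\beta\}] = \E[W^{+,r,\alpha}\P(C=0)\1\{A=\beta\}] = 0$. By definition of $P_0$ as the connected component of $P$ containing $\bm{0}$, we thus have $\bm{0}\in P_0$.

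Next I would apply Proposition \ref{3:prop:directional:derivative} to the direction $\bm{v}\in\R_+^V$ from the hypothesis. In case (1), $\nu^\beta > \nu_c^{r,\alpha,\beta}$ yields $D_{\bm{v}}f^{r,\alpha,\beta}(\bm{0}) = -v^{r,\alpha,\beta} < 0$, while in case (2), $\nu^\beta = \nu_c^{r,\alpha,\beta}$ together with $\mu^\beta > \mu_c^{r,\alpha,\beta}$ yields $D_{\bm{v}}f^{r,\alpha,\beta}(\bm{0}) = (\mu_c^{r,\alpha,\beta}/\mu^\beta - 1)v^{r,\alpha,\beta} < 0$. Since the assumption of the corollary is that at least one of these two conditions holds for every $(r,\alpha,\beta)\in V$, it follows that $D_{\bm{v}}f^{r,\alpha,\beta}(\bm{0}) < 0$ simultaneously for all $(r,\alpha,\beta)\in V$ in the same direction $\bm{v}$.

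Then I would invoke Lemma \ref{3:lem:sufficient:criteria:z:star}(a) with $\bar{\bm{z}}=\bm{0}\in P_0$: the simultaneous strict negativity of the directional derivatives forces $\bm{0}=\bm{z}^*$. Because $\bm{z}^*$ is by definition the componentwise supremum of $P_0$ over $\R_{+,0}^V$, this identity implies $P_0 = \{\bm{0}\}$. Finally, Theorem \ref{3:thm:resilience} delivers the resilience of the system.

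No substantial technical obstacle is expected here, as all the heavy lifting has already been performed in Proposition \ref{3:prop:directional:derivative} (computing the precise asymptotics of $h^{-1}f^{r,\alpha,\beta}(h\bm{v})$ under the Pareto-type tail and copula assumptions) and in Lemma \ref{3:lem:sufficient:criteria:z:star} together with Theorem \ref{3:thm:resilience}. The only minor point to verify carefully is that the choice $\bar{\bm{z}}=\bm{0}$ genuinely lies in $P_0$, which is where the assumption $\P(C=0)=0$ enters, and that the direction $\bm{v}$ selected in the corollary is admissible in Lemma \ref{3:lem:sufficient:criteria:z:star}(a), i.\,e.~strictly positive in every coordinate.
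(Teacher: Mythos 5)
Your proof is correct and follows exactly the route the paper intends: the paper itself introduces this corollary with the remark "as hinted at above using Theorem \ref{3:thm:resilience} and Lemma \ref{3:lem:sufficient:criteria:z:star} we derive the following criterion," and your chain (Proposition \ref{3:prop:directional:derivative} gives $D_{\bm{v}}f^{r,\alpha,\beta}(\bm{0})<0$ in both cases, Lemma \ref{3:lem:sufficient:criteria:z:star}(a) applied at $\bar{\bm{z}}=\bm{0}\in P_0$ gives $\bm{z}^*=\bm{0}$ and hence $P_0=\{\bm{0}\}$, Theorem \ref{3:thm:resilience} gives resilience) is precisely that argument. The care you take in checking that $\P(C=0)=0$ forces $f^{r,\alpha,\beta}(\bm{0})=0$ and that $\bm{v}\in\R_+^V$ is admissible is appropriate and matches the paper's setup, where $c_i(\bm{v})\geq R+1$ guarantees $C>0$.
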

Analogue to Lemma \ref{3:lem:sufficient:criteria:z:star}, we also derive that $\bm{z}^*\neq\bm{0}$ if there exists $\bm{v}\in\R_+^V$ such that for all $(r,\alpha,\beta)\in V$ the directional derivative $D_{\bm{v}}f^{r,\alpha,\beta}(\bm{0})$ exists and $D_{\bm{v}}f^{r,\alpha,\beta}(\bm{0})>0$. Together with Corollary \ref{3:cor:non:resilience:independent} we then derive the following criterion for non-resilience.
\begin{corollary}\label{3:cor:non:resilient}
Let $\P(S>0)=1$. Then the system is non-resilient for a certain choice of $\bm{v}\in\R_+^V$ and capitals as above if for all $(r,\alpha,\beta)\in V$ one of the following holds
\begin{enumerate}
\item $\nu^\beta<\nu_c^{r,\alpha,\beta}$,
\item $\nu^\beta=\nu_c^{r,\alpha,\beta}$ and $\mu^\beta<\mu_c^{r,\alpha,\beta}$.
\end{enumerate}
\end{corollary}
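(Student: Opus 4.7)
The proof naturally splits into three steps, mirroring the structure of Corollary \ref{3:cor:resilient} but with reversed inequalities. The goal is to verify the hypothesis $g(\bm{z}^*)>0$ of Corollary \ref{3:cor:non:resilience:independent}; invoking it then gives non-resilience with respect to shocks on $\tilde V$, hence non-resilience in the sense of Definition \ref{3:def:non:resilience}.

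First, I would read off from Proposition \ref{3:prop:directional:derivative} that under the assumptions on $(\nu^\beta,\mu^\beta)$, the directional derivative $D_{\bm{v}}f^{r,\alpha,\beta}(\bm{0})$ is strictly positive for every $(r,\alpha,\beta)\in V$: in case $\nu^\beta<\nu_c^{r,\alpha,\beta}$ the proposition yields the value $+\infty$, and in case $\nu^\beta=\nu_c^{r,\alpha,\beta}$ with $\mu^\beta<\mu_c^{r,\alpha,\beta}$ it yields $(\mu_c^{r,\alpha,\beta}/\mu^\beta-1)v^{r,\alpha,\beta}>0$, using $v^{r,\alpha,\beta}>0$.

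The main technical step is to deduce $\bm{z}^*\neq\bm{0}$ from this. Since $V$ is finite, strict positivity of every $D_{\bm{v}}f^{r,\alpha,\beta}(\bm{0})$ yields an $h^*>0$ such that $f^{r,\alpha,\beta}(h\bm{v})>0$ for all $(r,\alpha,\beta)\in V$ and all $0<h<h^*$. Consequently the half-open segment $\{h\bm{v}\,:\,0\leq h<h^*\}$ is contained in $P$, and being a continuous path emanating from $\bm{0}$ it lies inside the connected component $P_0$. Taking, for instance, $h=h^*/2$, we obtain a point of $P_0$ with strictly positive coordinates, and the componentwise definition of $\bm{z}^*$ as the supremum over $P_0$ forces $\bm{z}^*\geq (h^*/2)\bm{v}>\bm{0}$ componentwise. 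This is the step I expect to require the most care, because one has to combine information about directional derivatives (purely local near $\bm{0}$) with the global connectedness needed to conclude membership in $P_0$ rather than merely in $P$; the linear ray construction is what ties the two together.

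Finally, I would argue that $g(\bm{z}^*)>0$. The construction in \eqref{3:eqn:def:capitals} forces $\P(C(\bm{v})<\infty)=1$, and $\bm{z}^*>\bm{0}$ componentwise combined with non-triviality of the network (existence of some $W^{-,s,\gamma}$ that is positive with positive probability, which is implicit in $\tilde V\neq\emptyset$) makes at least one Poisson argument $\sum_{\gamma\in[T]}W^{-,s,\gamma}(z^*)^{s,\beta,\gamma}$ strictly positive on a set of positive probability; hence $\psi_{C}(\cdots)>0$ on that set, and together with $\P(S>0)=1$ the product under the expectation defining $g(\bm{z}^*)$ is strictly positive on a set of positive measure. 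Corollary \ref{3:cor:non:resilience:independent} now completes the proof.
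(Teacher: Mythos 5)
Your proof is correct and follows exactly the route the paper sketches (the paper deliberately does not spell out the details, stating only that the claim is ``analogue to Lemma \ref{3:lem:sufficient:criteria:z:star}'' and then invoking Corollary \ref{3:cor:non:resilience:independent}). Your middle step is precisely that analogue worked out: positivity of each $D_{\bm{v}}f^{r,\alpha,\beta}(\bm{0})$ together with $f^{r,\alpha,\beta}(\bm{0})=0$ (which holds because $\P(C=0)=0$ in the unshocked system, so $\psi_C(\bm{0})=0$ a.s.) gives $f^{r,\alpha,\beta}(h\bm{v})>0$ for all small $h>0$, the closed segment $[\bm{0}, (h^*/2)\bm{v}]$ then sits in $P$ and, being connected and containing $\bm{0}$, in $P_0$, so $\bm{z}^*\geq(h^*/2)\bm{v}>\bm{0}$ componentwise. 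Your final step deducing $g(\bm{z}^*)>0$ from $\P(S>0)=1$, the fact that $C(\bm{v})\le\max\{R+1,\lceil\mu^\beta(E(\bm{v})/\Vert\bm{v}\Vert)^{\nu^\beta}\rceil\}<\infty$ a.s., and the Pareto-tail assumption on $E(\bm{v})\vert_{A=\beta}$ (which guarantees that conditionally on $A=\beta$ some $W^{-,s,\gamma}$ is positive with positive probability, hence some Poisson rate $Y_{s,\beta}$ is positive with positive probability) is also sound and in fact slightly more explicit than the paper, which merely appeals to the heuristic that $g(\bm{z}^*)>0$ ``usually'' holds when $\bm{z}^*\ne\bm{0}$. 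The one phrase worth tightening is ``implicit in $\tilde V\neq\emptyset$'': nonvanishing out-weights are not what is needed for the Poisson rates, so better to cite directly the tail hypothesis on $E(\bm{v})\vert_{A=\beta}$ (or equivalently on the in-weights), as you in fact do in your sentence.
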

To construct a resilient system by capitals of the form \eqref{3:eqn:def:capitals} we thus choose
\[ \nu^\beta\geq\nu_c^\beta := \max_{r\in[R],\alpha\in[T]}\nu_c^{r,\alpha,\beta} \]
and $\mu^\beta>0$ arbitrary if $\nu^\beta>\nu_c^\beta$ respectively
\[ \mu^\beta > \mu_c^\beta := \max\left\{ \mu_c^{r,\alpha,\beta}\,:\,(r,\alpha)\in[R]\times[T]\text{ such that }\nu_c^{r,\alpha,\beta}=\nu^\beta \right\} \]
if $\nu_c^{r,\alpha,\beta}=\nu^\beta$ for some $(r,\alpha)\in[R]\times[T]$.

\begin{remark}
Assume that $\nu^\beta=\nu_c^{r,\alpha,\beta}>0$ for all $(r,\alpha,\beta)\in V$ (in particular $\nu_c^{r,\alpha,\beta}\geq0$). Moreover, suppose that for a certain choice of $\bm{v}\in\R_+^V$ we find $(s,\gamma,\delta)\in V$ such that $\mu_c^{s,\gamma,\delta}<\mu_c^\delta$. We can then decrease the coordinate $v^{s,\gamma,\delta}$ to
\[ \tilde{v}^{s,\gamma,\delta} = \frac{w_\text{min}^{+,s,\gamma,\delta}\int_0^\infty\Lambda\big( x^{1-k^{+,s,\gamma,\delta}} \big)\mathrm{d}x}{\max_{r\in[R],\alpha\in[T]}\left( v^{r,\alpha,\delta} \right)^{-1} w_\text{min}^{+,r,\alpha,\delta}\int_0^\infty\Lambda\left( x^{1-k^{+,r,\alpha,\delta}} \right)\mathrm{d}x} < v^{s,\gamma,\delta} \]
and leaving all other coordinates unchanged this leads to new critical values $\tilde{\mu}_c^\beta<\mu_c^\beta$, \mbox{$\beta\in[T]$,} (while clearly $\nu_c^\beta$ stays the same). That is, the capital requirements for each individual institution in the system can be lowered as compared to the initial choice $\bm{v}$.

It is thus favorable to choose $\bm{v}$ such that $\mu_c^{r,\alpha,\beta}=\mu_c^\beta$ for all $(r,\alpha,\beta)\in V$ or equivalently
\[ v^{r,\alpha,\beta} = \kappa^\beta w_\text{min}^{+,r,\alpha,\beta}\int_0^\infty\Lambda^{r,\alpha,\beta}\left(x^{1-k^{+,r,\alpha,\beta}}\right)\mathrm{d}x \]
for $\{\kappa^\beta\}_{\beta\in[T]}\subset\R_+$. The family $\{\kappa^\beta\}_{\beta\in[T]}$ can then be understood as a choice by the regulator which loans (i.\,e.~the debtors of which subsystems $\beta$) to regulate more than others.
\end{remark}

\subsection{Interpreting Direction \bf{\emph{v}}}\label{3:ssec:interpretation:v}

In the previous subsection, we have constructed capitals parametrized by some arbitrarily chosen vector $\bm{v}\in\R_+^V$ that ensure resilience of a given financial system as $D_{\bm{v}}f^{r,\alpha,\beta}(\bm{0})<0$ for all $(r,\alpha,\beta)\in V$. The proofs, however, do not convey any intuition about the direction $\bm{v}$.

In this subsection, we therefore change our point of view on $\bm{v}$ and interpret it as a specific stress scenario for the financial system. This perspective allows us to secure the system against any anticipated shock and at this in particular to ensure that amplification does not exceed a certain arbitrarily chosen factor.

Recall from Section \ref{3:sec:resilience} the notion of an ex post shock and consider a family of ex post shocks $\{M_\epsilon\}_{\epsilon>0}$ such that
\begin{equation}\label{3:eqn:def:v}
\E\left[ W^{+,r,\alpha}\1\{A=\beta\}\1\{M_\epsilon=0\} \right] = \epsilon v^{r,\alpha,\beta} + o(\epsilon)
\end{equation}
for some $\bm{v}\in\R_+^V$. If for example each institution defaults ex post with probability $\epsilon$ and independent of its parameters, then
\[ \E\left[ W^{+,r,\alpha}\1\{A=\beta\}\1\{M_\epsilon=0\} \right] = \epsilon \E\left[ W^{+,r,\alpha}\1\{A=\beta\}\right] \]
and $\bm{v}=\bm{\zeta}$, where $\zeta^{r,\alpha,\beta}=\E[W^{+,r,\alpha}\1\{A=\beta\}]$ (if $\zeta^{r,\alpha,\beta}=0$ for some $(r,\alpha,\beta)\in V$, we leave out the corresponding coordinate in the following). If, however, an institution of type $\beta\in[T]$ defaults ex post with probability $\epsilon\eta^\beta$ for a family $\{\eta^\beta\}_{\beta\in[T]}\subset\R_+$ and independent of all other parameters, then $v^{r,\alpha,\beta}=\eta^\beta\zeta^{r,\alpha,\beta}$. While shocks of this type already provide a rich basis of different stress scenarios, other shocks $M_\epsilon$, targeting institutions according to their weights, are also feasible as long as they satisfy \eqref{3:eqn:def:v} for some $\bm{v}\in\R_+^V$.

The vector $\bm{v}$ then describes how different parts of the financial system are affected by the initial shock. More precisely, by Corollary \ref{3:cor:poisson:degrees} and \eqref{3:eqn:def:v} one finds
\[ n \E[W^{+,r,\alpha}\1\{A=\beta\}\1\{M_\epsilon=0\}] \E[W^{-,r,\beta}\1\{A=\alpha\}] + o(n) = n \E[W^{-,r,\beta}\1\{A=\alpha\}] \epsilon v^{r,\alpha,\beta} + o(n) \]
for the number of $r$-edges from initially defaulted $\beta$-institutions to $\alpha$-institutions. Thus
\begin{equation}\label{3:eqn:initial:damage}
\tau^\beta := \sum_{r\in[R]} r \sum_{\alpha\in[T]} \E[W^{-,r,\beta}\1\{A=\alpha\}] \epsilon v^{r,\alpha,\beta}
\end{equation}
can asymptotically be identified as the total monetary damage (not to be confused with the systemic damage according to their systemic importance values) that initially defaulted institutions of type $\beta$ cause to all the institutions in the system (as always in this thesis under the assumption of zero recovery rate). 

We now want to construct capitals $c_i(\bm{v})$ such that this initial damage is amplified by some prescribed amplification factor $B\geq1$ at most. Note that analogue to \eqref{3:eqn:initial:damage} and using Theorem \ref{3:thm:general:weights} the final damage is upper bounded (and typically precisely given) by
\[ \sum_{r\in[R]} r \sum_{\alpha\in[T]} \E[W^{-,r,\beta}\1\{A=\alpha\}] (z_\epsilon^*)^{r,\alpha,\beta}, \]
where $\bm{z}_\epsilon^*$ denotes the equivalent of $\bm{z}^*$ for the shocked system and the proof of Theorem \ref{3:thm:general:weights} shows that $(z_\epsilon^*)^{r,\alpha,\beta}$ asymptotically equals the total $(r,\alpha)$-out-weights of all finally defaulted $\beta$-institutions divided by $n$.

Our goal can then be achieved in the spirit of the previous subsection choosing $\bm{v}$ from \eqref{3:eqn:def:v} for $c_i(\bm{v})$ in \eqref{3:eqn:def:capitals} and letting $\nu^\beta\geq\nu_c^\beta$ and $\mu^\beta\geq\mu_c^\beta/(1-B^{-1})$ for some $B>1$ that can be considered the maximal amplification factor according to the following result.
\begin{theorem}\label{3:thm:amplification}
With the above described capitals, the final damage in the system caused by $\beta$-institutions is upper bounded by $B\tau^\beta$ asymptotically as $\epsilon\to0$.
\end{theorem}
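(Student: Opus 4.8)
The plan is to reduce the statement to a componentwise upper bound on the maximal joint root of the shocked functions. Recall from the resilience setup that here $\P(C=0)=0$, and write $\bm{z}_\epsilon^*$ for the largest joint root in the component $P_0^{M_\epsilon}$ of $P^{M_\epsilon}:=\bigcap_{(r,\alpha,\beta)\in V}\{\bm{z}\in\R_{+,0}^V\,:\,(f^{M_\epsilon})^{r,\alpha,\beta}(\bm{z})\geq0\}$ containing $\bm{0}$, the shocked analogues of $f^{r,\alpha,\beta}$, $P$, $P_0$, $\bm{z}^*$. By the formula stated just before the theorem (a consequence of Theorem~\ref{3:thm:general:weights} and Remark~\ref{3:rem:fraction:subsystems}), the relative final monetary damage caused by $\beta$-institutions is, with high probability, at most
\[ \sum_{r\in[R]} r \sum_{\alpha\in[T]} \E[W^{-,r,\beta}\1\{A=\alpha\}]\,(z_\epsilon^*)^{r,\alpha,\beta} + o_p(1), \]
while by \eqref{3:eqn:initial:damage} the initial damage is $\tau^\beta=\sum_{r\in[R]} r \sum_{\alpha\in[T]} \E[W^{-,r,\beta}\1\{A=\alpha\}]\,\epsilon v^{r,\alpha,\beta}$. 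Hence it suffices to show that for every $\eta>0$ there is $\epsilon_0>0$ with $(z_\epsilon^*)^{r,\alpha,\beta}\leq(1+\eta)B\,\epsilon v^{r,\alpha,\beta}$ for all $(r,\alpha,\beta)\in V$ and all $\epsilon<\epsilon_0$; letting $\eta\to0$ then yields the bound $B\tau^\beta$ asymptotically.

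For the core estimate I would first bound the shocked functions. Since a.s.\ $\psi_{CM_\epsilon}\equiv1$ on $\{M_\epsilon=0\}$ and $\psi_{CM_\epsilon}=\psi_C$ on $\{M_\epsilon=1\}$, dropping the factor $\1\{M_\epsilon=1\}\leq1$ and using \eqref{3:eqn:def:v} gives, uniformly in $\bm{z}$,
\[ (f^{M_\epsilon})^{r,\alpha,\beta}(\bm{z}) \leq \E[W^{+,r,\alpha}\1\{A=\beta\}\1\{M_\epsilon=0\}] + f^{r,\alpha,\beta}(\bm{z}) = \epsilon v^{r,\alpha,\beta} + f^{r,\alpha,\beta}(\bm{z}) + o(\epsilon). \]
Next, since $\nu^\beta\geq\nu_c^\beta\geq\nu_c^{r,\alpha,\beta}$, Proposition~\ref{3:prop:directional:derivative} shows $D_{\bm{v}}f^{r,\alpha,\beta}(\bm{0})$ is finite; in the supercritical case it equals $-v^{r,\alpha,\beta}\leq-B^{-1}v^{r,\alpha,\beta}$ (as $B>1$), and in the critical case it equals $(\mu_c^{r,\alpha,\beta}/\mu^\beta-1)v^{r,\alpha,\beta}$, so the choice $\mu^\beta\geq\mu_c^\beta/(1-B^{-1})\geq\mu_c^{r,\alpha,\beta}/(1-B^{-1})$ again yields $D_{\bm{v}}f^{r,\alpha,\beta}(\bm{0})\leq-B^{-1}v^{r,\alpha,\beta}$ for every $(r,\alpha,\beta)\in V$. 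Since $f^{r,\alpha,\beta}(\bm{0})=0$, this means $f^{r,\alpha,\beta}(h\bm{v})=h\,D_{\bm{v}}f^{r,\alpha,\beta}(\bm{0})+o(h)\leq-B^{-1}v^{r,\alpha,\beta}h+o(h)$ as $h\to0+$, and choosing $h=h(\epsilon):=(1+\eta)B\epsilon$ (so $h(\epsilon)=\Theta(\epsilon)$ and $o(h(\epsilon))=o(\epsilon)$) gives $(f^{M_\epsilon})^{r,\alpha,\beta}(h(\epsilon)\bm{v})\leq\epsilon v^{r,\alpha,\beta}-(1+\eta)\epsilon v^{r,\alpha,\beta}+o(\epsilon)=-\eta\epsilon v^{r,\alpha,\beta}+o(\epsilon)<0$ for all coordinates once $\epsilon<\epsilon_0$, using that $V$ is finite and $v^{r,\alpha,\beta}>0$.

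It then remains to upgrade this pointwise negativity to the componentwise bound $\bm{z}_\epsilon^*\leq h(\epsilon)\bm{v}$, for which I would invoke the same barrier argument as in the proofs of Lemma~\ref{3:lem:z0} and Theorem~\ref{3:thm:non-resilience}. Each $(f^{M_\epsilon})^{r,\alpha,\beta}$ is continuous and, by Lemma~\ref{3:lem:properties:f} (applied with $C$ replaced by $CM_\epsilon$), monotonically increasing in every coordinate except $z^{r,\alpha,\beta}$. If $P_0^{M_\epsilon}$ were not contained in the box $Q:=\prod_{(r,\alpha,\beta)\in V}[0,h(\epsilon)v^{r,\alpha,\beta}]$, then, since $\bm{0}\in P_0^{M_\epsilon}$ sits strictly below the upper faces of $Q$ and $P_0^{M_\epsilon}$ is connected, a path in $P_0^{M_\epsilon}$ leaving $Q$ would meet $\partial Q$ at a first point $\bar{\bm{z}}$ with $\bar z^{r_0,\alpha_0,\beta_0}=h(\epsilon)v^{r_0,\alpha_0,\beta_0}$ for some coordinate and $\bar z^{r,\alpha,\beta}\leq h(\epsilon)v^{r,\alpha,\beta}$ for all; off-diagonal monotonicity then forces $(f^{M_\epsilon})^{r_0,\alpha_0,\beta_0}(\bar{\bm{z}})\leq(f^{M_\epsilon})^{r_0,\alpha_0,\beta_0}(h(\epsilon)\bm{v})<0$, contradicting $\bar{\bm{z}}\in P^{M_\epsilon}$. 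Hence $P_0^{M_\epsilon}\subseteq Q$, so $(z_\epsilon^*)^{r,\alpha,\beta}=\sup_{\bm{z}\in P_0^{M_\epsilon}}z^{r,\alpha,\beta}\leq h(\epsilon)v^{r,\alpha,\beta}$, as needed.

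The main obstacle is precisely this last step: deducing a componentwise bound on $\bm{z}_\epsilon^*$ from a single pointwise inequality relies on the off-diagonal monotonicity of the $f^{r,\alpha,\beta}$ and the connectedness of $P_0^{M_\epsilon}$, and one must be careful that the first-exit point of the path genuinely stays in the closed box. Secondary points of care are that the $o(\epsilon)$ in \eqref{3:eqn:def:v} is uniform in $\bm{z}$, so that it may be combined with the $o(h)$ from the first-order expansion of $f^{r,\alpha,\beta}$ along the ray $h\mapsto h\bm{v}$, and that the bound $D_{\bm{v}}f^{r,\alpha,\beta}(\bm{0})\leq-B^{-1}v^{r,\alpha,\beta}$ indeed holds in each of the regimes distinguished in Proposition~\ref{3:prop:directional:derivative}.
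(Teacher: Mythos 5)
Your proposal is correct and follows essentially the same route as the paper's proof: bound the shocked function by $f^{r,\alpha,\beta}(h\bm{v})+\epsilon v^{r,\alpha,\beta}+o(\epsilon)$, use the capital choice to get $f^{r,\alpha,\beta}(h\bm{v})\leq -B^{-1}v^{r,\alpha,\beta}h+o(h)$, evaluate at $h=(1+\delta)B\epsilon$ to force strict negativity, conclude $\bm{z}_\epsilon^*\leq(1+\delta)B\epsilon\bm{v}$, and let $\delta\to0$. You merely make explicit two steps the paper leaves implicit — the case check in Proposition~\ref{3:prop:directional:derivative} and the monotonicity/connectedness barrier argument yielding the componentwise bound on $\bm{z}_\epsilon^*$ — both of which are carried out correctly and consistently with the analogous arguments in Lemma~\ref{3:lem:existence:hatz}.
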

Note that any choice for $\bm{v}\in\R_+^V$ would make the financial system resilient to any initial shock by Corollary \ref{3:cor:resilient}. Whereas resilience according to Definition \ref{3:def:resilience} is to be understood in an asymptotic sense with amplification factors not being relevant, however, for practical purposes the notion of an amplification factor is very desirable. Albeit still in an asymptotic sense, using Theorem \ref{3:thm:amplification} this can be achieved when calibrating $\bm{v}$ to the anticipated shock according to \eqref{3:eqn:def:v}. Finally note that different amplification factors $B^\beta$ for the different subsystems of type $\beta\in[T]$ can be achieved by the same means as in Theorem \ref{3:thm:amplification} but choosing capitals $c_i(\tilde{\bm{v}})$ with
\[ \tilde{v}^{r,\alpha,\beta} = B^\beta v^{r,\alpha,\beta} \]
and $\mu^\beta\geq\mu_c^\beta/(1-(B^{\beta})^{-1})$.

\section{A Multi-type Exposure Model}\label{3:sec:exposure:model}
So far, in this chapter we have proposed and analyzed a model for default contagion that can be seen as a multi-variate version of the threshold model from Subsection \ref{2:ssec:special:case:threshold:model} accounting for multiple institution-types and allowing for assortative edge-weights regarding these types. While it is possible to interpret these edge-weights as exposures between institutions directly, a reasonable model for a financial network would certainly require a very large parameter $R$ and the model would thus become very high-dimensional. In this section, we therefore propose an alternative model combining the ideas from this chapter so far with the exposure model from Chapter \ref{chap:systemic:risk}.

In addition to all the previous model parameters, we thus assign to each institution $i\in[n]$ an exchangeable sequence of almost surely positive random variables $\{E_i^{j,r}\}_{j\in[n]\backslash\{i\},r\in[R]}\subset\mathcal{L}^0(\R_+)$ that we interpret as possible exposures of $i$. Note that compared to Chapter \ref{chap:systemic:risk} we consider not one but $R$ possible exposures for every pair $(i,j)$. This allows us to interpret the edge-weight from Subsection \ref{3:ssec:random:edges} as a multiplier in the following sense:

Let as before $X_{i,j}^r$ be the indicator random variable of an $r$-weighted edge going from institution $i$ to $j$. As $\sum_{r\in[R]}X_{i,j}^r\in\{0,1\}$, we can thus denote the edge-weight between $i$ and $j$ by $r(i,j)$ with $r(i,j)=0$ if $\sum_{r\in[R]}X_{i,j}^r=0$, or such that $X_{i,j}^{r(i,j)}=1$ otherwise. Let then the exposure $e_{i,j}$ be defined by
\[ e_{i,j} := \sum_{1\leq s\leq r(i,j)} E_j^{i,s}, \]
i.\,e.~as the sum over the first $r(i,j)$ possible exposures from the list $\{E_j^{i,r}\}_{r\in[R]}$. In particular, $e_{i,j}=0$ if there is no edge going from $i$ to $j$.

Moreover, we can allow for more freedom in the choice of the capitals $c_i$, $i\in[n]$. Instead of deterministic integer values we consider $c_i\in\mathcal{L}^0(\R_{+,0,\infty})$ in the following. We then define for each institution $i\in[n]$ the random threshold value $q_i$ similar as in Chapter \ref{chap:systemic:risk},
\[ q_i := \inf\left\{ q\in\{0\}\cup[(n-1)R]\,:\,\sum_{1\leq s\leq q}E_i^s \geq c_i \right\}, \]
where $\{E_i^s\}_{s\in[(n-1)R]}$ shall be an arbitrary (but fixed) enumeration of the set of random variables $\{E_i^{j,r}\}_{j\in[n]\backslash\{i\},r\in[R]}$ and $\inf\emptyset=\infty$ by convention. By exchangeability the distribution of $q_i$ does not depend on the choice of the enumeration. We now adjust Definition \ref{3:def:regular:vertex:sequence} to this new setting:
\begin{assumption}\label{3:ass:regularity:thresholds}
Consider a sequence $(\bm{w}^-(n),\bm{w}^+(n),\bm{s}(n),\bm{q}(n),\bm{\alpha}(n))$ of financial systems, where for $n\in\N$ as before $\bm{w}^-(n)=(w_i^-(n))_{i\in[n]}$, $\bm{w}^+(n)=(w_i^+(n))_{i\in[n]}$, $\bm{s}(n)=(s_i)_{i\in[n]}$, $\bm{\alpha}(n)=(\alpha_i(n))_{i\in[n]}$ and additionally $\bm{q}(n)=(q_i)_{i\in[n]}$. For each $n$, we denote the random empirical distribution function by
\[ G_n(\bm{x},\bm{y},v,l,m) = n^{-1} \sum_{i\in[n]}\prod_{r\in[R],\alpha\in[T]}\1\left\{ w_i^{-,r,\alpha}\leq x^{r,\alpha},w_i^{+,r,\alpha}\leq y^{r,\alpha} \right\} \1\{ s_i\leq v, q_i\leq l, \alpha_i\leq m \}, \]
for $(\bm{x},\bm{y},v,l,m)\in U := \R_{+,0}^{[R]\times[T]}\times\R_{+,0}^{[R]\times[T]}\times\R_{+,0}\times\N_{0,\infty}\times[T]$. Then we assume the following:
\begin{enumerate}
\item \textbf{Almost sure convergence in distribution:} There exists a deterministic distribution function $G$ on $U$ such that all points $(\bm{x},\bm{y},v,l,m)$ for which $G_{l,m}(\bm{x},\bm{y},v):=G(\bm{x},\bm{y},v,l,m)$ is continuous in $(\bm{x},\bm{y},v)$, it holds almost surely $\lim_{n\to\infty}G_n(\bm{x},\bm{y},v,l,m)=G(\bm{x},\bm{y},v,l,m)$. Denote by $(\bm{W}^-,\bm{W}^+,S,Q,A)$ a random vector distributed according to $G$.
\item \textbf{Convergence of average weights and systemic importance:} The random variables $W^{\pm,r,\alpha}$, $(r,\alpha)\in[R]\times[T]$, and $S$ are integrable and $\int_U x^{r,\alpha}\dd G(\bm{x},\bm{y},v,l,m) \to \E[W^{-,r,\alpha}]$, $\int_U y^{r,\alpha}\dd G(\bm{x},\bm{y},v,l,m) \to \E[W^{+,r,\alpha}]$ as well as $\int_U v\,\dd G(\bm{x},\bm{y},v,l,m) \to \E[S]$ as $n\to\infty$.
\end{enumerate}
\end{assumption}
We can then define the analogues of $f^{r,\alpha,\beta}$ and $g$ from Section \ref{3:sec:asymptotic:results} for the new setting by
\begin{equation*}
\begin{gathered}
f_Q^{r,\alpha,\beta}(\bm{z}) = \E\Bigg[W^{+,r,\alpha}\psi_Q\Bigg(\sum_{\gamma\in[T]}W^{-,1,\gamma}z^{1,\beta,\gamma},\ldots,\sum_{\gamma\in[T]}W^{-,R,\gamma}z^{R,\beta,\gamma}\Bigg)\1\{A=\beta\}\Bigg] - z^{r,\alpha,\beta},\\
g_Q(\bm{z}) = \sum_{\beta\in[T]}\E\Bigg[S \psi_Q\Bigg(\sum_{\gamma\in[T]}W^{-,1,\gamma}z^{1,\beta,\gamma},\ldots,\sum_{\gamma\in[T]}W^{-,R,\gamma}z^{R,\beta,\gamma}\Bigg)\1\{A=\beta\}\Bigg],
\end{gathered}
\end{equation*}
where as before $\psi_l(x_1,\ldots,x_R) := \P\left(\sum_{s\in[R]}sX_s\geq l\right)$ for independent Poisson random variables $X_s\sim\mathrm{Poi}(x_s)$, $s\in[R]$. Clearly all results from Section \ref{3:sec:asymptotic:results} directly transfer to the new functions $f_Q^{r,\alpha,\beta}$ and $g_Q$. In particular, denote the analogues of $\hat{\bm{z}}$ and $\bm{z}^*$ by $\hat{\bm{z}}_Q$ resp.~$\bm{z}_Q^*$. Then we derive the following generalization of Theorem \ref{3:thm:general:weights}.
\begin{theorem}\label{3:exposures:final:fraction}
Consider a financial system satisfying Assumption \ref{3:ass:regularity:thresholds} and let $\hat{\bm{z}}_Q$, $\bm{z}_Q^*$ as described above. Then for the final systemic damage due to default contagion $\mathcal{S}_n$ it holds
\[ g_Q(\hat{\bm{z}}_Q) + o_p(1) \leq n^{-1}\mathcal{S}_n \leq g_Q(\bm{z}_Q^*) + o_p(1). \]
In particular, if $\hat{\bm{z}}_Q=\bm{z}_Q^*$, then $n^{-1}\mathcal{S}_n\stackrel{p}{\to}g_Q(\hat{\bm{z}}_Q)$ as $n\to\infty$.
\end{theorem}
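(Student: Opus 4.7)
The plan is to reduce the exposure model to the multi-type threshold model of Section \ref{3:sec:default:fin} with the (random) capital sequence replaced by the threshold sequence $\bm{q}(n)$, and then invoke Theorem \ref{3:thm:general:weights}. The reduction will mirror the strategy used in the proof of Theorem \ref{2:thm:asymp:1}, adapted to the multi-type, multi-weight setting.

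First, I would give an equivalent sequential description of the cascade \eqref{3:eqn:default:contagion}. At each step we pick an unexposed defaulted vertex $v$, reveal all outgoing edges $(v,w)$ together with their random weights (drawn from the collection $\{X_{v,w}^r\}_r$), and for each neighbor $w$ of $v$ reduce a running capital counter by the revealed weight. When the counter becomes non-positive, $w$ joins the set of defaulted unexposed vertices. This yields the same final default set as the simultaneous description, independently of the rule used to pick $v$.

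Next comes the key step: I reassign the exposure random variables so that for each vertex $i$ the \emph{$\ell$-th unit of edge-weight} arriving at $i$ during the sequential process is labeled with the variable $E_i^\ell$ from the fixed enumeration used to define $q_i$. Since the lists $\{E_i^{j,r}\}_{j\ne i, r\in[R]}$ are exchangeable and independent of the random graph skeleton $X=(X_{i,j}^r)$, this relabeling produces a joint distribution of graph and exposures that is identical to the original one. In the relabeled model, vertex $i$ defaults precisely when the first $q_i$ units of its received edge-weight have been revealed; equivalently, the cascade dynamics coincide with the threshold-model cascade of Section \ref{3:sec:default:fin} applied to the random capital sequence $\bm{c}(n) := \bm{q}(n)$.

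Now I would verify that, after replacing $\bm{c}(n)$ by $\bm{q}(n)$, the resulting parameter sequence satisfies Definition \ref{3:def:regular:vertex:sequence} almost surely. This is exactly what Assumption \ref{3:ass:regularity:thresholds} provides: the joint empirical distribution $G_n$ converges almost surely to $G$ at all continuity points, and the marginal means of the weights and systemic importances converge. Therefore, conditional on a realization on the probability-one event where these convergences hold, the threshold-model sequence with limiting random vector $(\bm{W}^-,\bm{W}^+,S,Q,A)$ is a regular vertex sequence in the sense of Definition \ref{3:def:regular:vertex:sequence}. Applying Theorem \ref{3:thm:general:weights} to this sequence directly yields
\[ g_Q(\hat{\bm{z}}_Q) + o_p(1) \leq n^{-1}\mathcal{S}_n \leq g_Q(\bm{z}_Q^*) + o_p(1), \]
with the in-probability convergence when $\hat{\bm{z}}_Q = \bm{z}_Q^*$.

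The main obstacle is the relabeling argument: one has to verify carefully that reshuffling the exposures according to the (random, cascade-dependent) order of exposition preserves the joint law of the whole system. This uses that the exposure lists are exchangeable, independent across debtors, and independent of the graph indicators $X_{i,j}^r$, so that any measurable reindexing of each debtor's list that depends only on the graph and the other debtors' lists leaves the joint distribution invariant. Once this is established, the passage to the threshold model and the invocation of Theorem \ref{3:thm:general:weights} are routine.
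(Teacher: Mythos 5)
Your proposal is correct and follows essentially the same route as the paper: a sequential exposition of the cascade, an exchangeability-based relabeling/reordering that identifies the realized per-vertex threshold with $q_i$ in distribution, conditioning on the threshold sequence so that Assumption \ref{3:ass:regularity:thresholds} yields a (deterministic-limit) regular vertex sequence, and then an application of Theorem \ref{3:thm:general:weights}. The only cosmetic difference is framing: the paper reorders each vertex's exposure list by the random exposition order and shows the resulting threshold $p_i$ is distributed as $q_i$, whereas you relabel the exposures to make the fixed enumeration carry the cascade order directly; these are the same argument.
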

By the same means as in Section \ref{3:sec:resilience} but applying Theorem \ref{3:exposures:final:fraction} instead of Theorem \ref{3:thm:general:weights}, we then derive the following characterizations of resilient (according to Definition \ref{3:def:resilience}) and non-resilient (according to Definition \ref{3:def:non:resilience}) financial systems. To this end, we consider an ex post infection that is described by the limiting random variable $M\in\{0,1\}$ (extend Assumption \ref{3:ass:regularity:thresholds}). In particular, note that it makes no difference if we apply the ex post shock on the capital $c_i$ or the threshold $q_i$.
\begin{corollary}\label{3:cor:resilience:exposures}
Consider a financial system satisfying Assumption \ref{3:ass:regularity:thresholds} that is a priori unshocked in the sense that $\P(Q=0)=0$. If $\bm{z}_Q^*=\bm{0}$, then the system is resilient.
\end{corollary}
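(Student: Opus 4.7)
The plan is to mirror the proof of Theorem~\ref{3:thm:resilience}, replacing Theorem~\ref{3:thm:general:weights} by Theorem~\ref{3:exposures:final:fraction}. Fix an ex post shock $M$ with $\delta := \P(M=0)$ small; the shocked system with limiting threshold $QM$ in place of $Q$ continues to satisfy Assumption~\ref{3:ass:regularity:thresholds}. Hence Theorem~\ref{3:exposures:final:fraction} applied to the shocked system yields
\[ n^{-1}\mathcal{S}_n^M \leq g_Q^M\big((\bm{z}_Q^*)^M\big) + o_p(1), \]
where $(\bm{z}_Q^*)^M$ denotes the maximal point of the component $P_0^M$ associated with the shocked functions $(f_Q^M)^{r,\alpha,\beta}$. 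It therefore suffices to show $g_Q^M\big((\bm{z}_Q^*)^M\big) \to 0$ as $\delta \to 0$.

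First I would observe that $(f_Q^M)^{r,\alpha,\beta} \to f_Q^{r,\alpha,\beta}$ and $g_Q^M \to g_Q$ uniformly in $\bm{z}$ as $\delta \to 0$: indeed $(f_Q^M)^{r,\alpha,\beta}(\bm{z}) - f_Q^{r,\alpha,\beta}(\bm{z}) = \E[W^{+,r,\alpha}\1\{M=0\}(1 - \psi_Q(\cdot))\1\{A=\beta\}]$ is bounded uniformly in $\bm{z}$ by $\E[W^{+,r,\alpha}\1\{M=0\}\1\{A=\beta\}]$, which vanishes by dominated convergence, and the analogous estimate handles $g_Q^M - g_Q$. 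Combined with continuity of $g_Q$ at $\bm{0}$ (from the analogue of Lemma~\ref{3:lem:properties:f}) and $g_Q(\bm{0}) = \E[S\1\{Q=0\}] = 0$ (since $\P(Q=0) = 0$ by assumption), the task reduces to proving that for every $\varepsilon > 0$ there is $\delta_0 > 0$ such that $P_0^M \subseteq \bar B_\varepsilon(\bm{0})$ whenever $\delta < \delta_0$; in particular this forces $(\bm{z}_Q^*)^M \in \bar B_\varepsilon(\bm{0})$.

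The heart of the proof is this last claim, which is also the main obstacle, as connected components can behave discontinuously under perturbation. I would argue by contradiction: suppose there is a sequence $M_k$ with $\delta_k \to 0$ and $P_0^{M_k} \not\subseteq \bar B_\varepsilon(\bm{0})$ for all $k$. Connectedness of $P_0^{M_k}$ lets me pick a path inside $P_0^{M_k}$ from $\bm{0}$ that exits $\bar B_\varepsilon(\bm{0})$; truncating at its first hit of $\partial B_\varepsilon(\bm{0})$ produces compact connected sets $K_k \subseteq P_0^{M_k} \cap \bar B_\varepsilon(\bm{0})$ that contain $\bm{0}$ and meet $\partial B_\varepsilon(\bm{0})$. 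By Blaschke's selection theorem, a subsequence of the $K_k$ converges in Hausdorff distance to a compact connected set $K$ that still contains $\bm{0}$ and touches $\partial B_\varepsilon(\bm{0})$. The uniform convergence established above then yields $f_Q^{r,\alpha,\beta} \geq 0$ on $K$ for every $(r,\alpha,\beta)\in V$, so $K \subseteq P$. As $K$ is a connected subset of $P$ containing $\bm{0}$, we have $K \subseteq P_0^Q$; but $K \neq \{\bm{0}\}$ since $K$ meets $\partial B_\varepsilon(\bm{0})$, contradicting the assumption $P_0^Q = \{\bm{0}\}$ (equivalent to $\bm{z}_Q^* = \bm{0}$ via Lemma~\ref{3:lem:existence:hatz}). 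This gives the desired shrinking of $P_0^M$ and completes the argument.
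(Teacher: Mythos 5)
Your overall strategy---reduce to showing $P_0^M$ shrinks to $\{\bm{0}\}$ as $\P(M=0)\to 0$, then conclude via continuity of $g_Q$ at $\bm{0}$---is sound, and it takes a genuinely different route from the paper's. The paper adapts the proof of Theorem~\ref{3:thm:resilience} by comparing $(f_Q^M)^{r,\alpha,\beta}$ with a dominating family $(f_Q^\gamma)^{r,\alpha,\beta}:=(1-\gamma)f_Q^{r,\alpha,\beta}+\gamma(\zeta^{r,\alpha,\beta}-z^{r,\alpha,\beta})$ and exploits the monotone nesting $P_0^{\gamma_1}\subseteq P_0^{\gamma_2}$ for $\gamma_1\leq\gamma_2$, so that $\bigcap_{\gamma>0}P_0^\gamma$ is a chain of connected compacta (hence itself connected) and equals $P_0=\{\bm{0}\}$; this is the content of the proof of Lemma~\ref{3:lem:z^*:epsilon}. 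You bypass the auxiliary family and argue directly by extracting a Hausdorff-convergent subsequence of components via Blaschke, which is more flexible since no monotone structure over a scalar parameter is needed.

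There is, however, a concrete slip in your truncation step: $P_0^{M_k}$ is compact and connected but need not be path-connected, so you cannot ``pick a path inside $P_0^{M_k}$ from $\bm{0}$ that exits $\bar B_\varepsilon(\bm{0})$.'' Two repairs work. First, skip truncation: take $K_k:=P_0^{M_k}$ itself. These are compact, connected, uniformly bounded (contained in $[\bm{0},\bm{\zeta}]$, since $f_Q^{r,\alpha,\beta}(\bm z)<0$ once $z^{r,\alpha,\beta}>\E[W^{+,r,\alpha}\1\{A=\beta\}]$), they contain $\bm 0$, and by the contradiction hypothesis each contains some $\bm z_k$ with $\|\bm z_k\|>\varepsilon$. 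Pass to a subsequence with $K_k\to K$ in Hausdorff distance and $\bm z_k\to\bm z$, $\|\bm z\|\geq\varepsilon$; the Hausdorff limit of connected compacta is connected, so $K$ is a compact connected set containing $\bm 0$ and $\bm z$, and your uniform convergence gives $f_Q^{r,\alpha,\beta}\geq 0$ on $K$, whence $K\subseteq P_0=\{\bm 0\}$, a contradiction. Second, if you prefer the truncated sets, justify that the component of $\bm 0$ in $P_0^{M_k}\cap\bar B_\varepsilon(\bm 0)$ meets $\partial B_\varepsilon(\bm 0)$ by the boundary bumping lemma from continuum theory rather than by a path. A minor side remark: the assertion that $\E[W^{+,r,\alpha}\1\{M=0\}\1\{A=\beta\}]$ vanishes uniformly over all $M$ with $\P(M=0)<\delta$ is a consequence of absolute continuity of the Lebesgue integral of the integrable variable $W^{+,r,\alpha}$, not dominated convergence (which handles one fixed sequence, not the required uniformity over $M$).
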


\begin{corollary}\label{3:cor:non:resilience:exposures}
Consider a financial system satisfying Assumption \ref{3:ass:regularity:thresholds} that is a priori unshocked in the sense that $\P(Q=0)=0$. Moreover, let the ex post shock $M$ be such that $\E[W^{+,r,\alpha}\1\{A=\beta\}\1\{M=0\}]>0$ for all $(r,\alpha,\beta)\in V$ such that $\E[W^{+,r,\alpha}\1\{A=\beta\}]>0$. Then for any $\epsilon>0$ w.\,h.\,p.{} $n^{-1}\mathcal{S}_n^M \geq g(\bm{z}^*)-\epsilon$. If 
$g(\bm{z}^*)>0$, 
then the system is non-resilient.
\end{corollary}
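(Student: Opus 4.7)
The plan is to mirror the proof strategy used for Corollary \ref{3:cor:non:resilience:independent}, transferring all steps from Section \ref{3:sec:resilience} to the exposure-model setting by replacing every invocation of Theorem \ref{3:thm:general:weights} with its analogue Theorem \ref{3:exposures:final:fraction} and each function $f^{r,\alpha,\beta}, g$ with the respective $f_Q^{r,\alpha,\beta}, g_Q$. Throughout, note that what the statement calls $g(\bm{z}^*)$ must be read as $g_Q(\bm{z}_Q^*)$ for the exposure model.

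First I would introduce shorthand for the shocked system. Because the effective threshold of institution $i$ under ex post infection $m_i$ is $q_im_i$ (a zero threshold meaning automatic default), the limiting functions of the shocked system are
\[ (f_Q^M)^{r,\alpha,\beta}(\bm{z}) = \E\Bigg[W^{+,r,\alpha}\psi_{QM}\Bigg(\sum_{\gamma\in[T]}W^{-,1,\gamma}z^{1,\beta,\gamma},\ldots,\sum_{\gamma\in[T]}W^{-,R,\gamma}z^{R,\beta,\gamma}\Bigg)\1\{A=\beta\}\Bigg] - z^{r,\alpha,\beta}, \]
and analogously for $g_Q^M$. Let $\hat{\bm{z}}_Q^M \in P_0^M$ denote the smallest joint root of $\{(f_Q^M)^{r,\alpha,\beta}\}_{(r,\alpha,\beta)\in V}$ in the component of the shocked feasibility set containing $\bm{0}$. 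Theorem \ref{3:exposures:final:fraction} applied to the shocked system immediately gives $n^{-1}\mathcal{S}_n^M \geq g_Q^M(\hat{\bm{z}}_Q^M) + o_p(1) \geq g_Q(\hat{\bm{z}}_Q^M) + o_p(1)$, the last inequality by monotonicity of $\psi_l$ in $l$ (since $QM \leq Q$).

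The heart of the argument, and the main obstacle, is then to prove the two analogues of Lemma \ref{3:lem:z0} and Lemma \ref{3:lem:z0:equals:z*}: namely, that $\hat{\bm{z}}_Q^M$ is bounded below (componentwise, up to an $\epsilon$) by $\bm{z}_Q^*$ whenever the shock reaches every part of the system. Concretely, define
\[ T_Q(\tilde{V}) := \overline{\bigcap_{(r,\alpha,\beta)\in\tilde V}\{\bm{z}\in\R_{+,0}^V : f_Q^{r,\alpha,\beta}(\bm{z})<0\}}, \qquad z_{0,Q}^{r,\alpha,\beta}(\tilde{V}) := \inf_{\bm{z}\in T_Q(\tilde V)}z^{r,\alpha,\beta}. \]
The goal is to show $\bm{z}_{0,Q}(\tilde{V}) = \bm{z}_Q^*$ and that $\hat{\bm{z}}_Q^M \not\lneq \bm{z}_{0,Q}(\tilde{V})$ up to vanishing error. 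The former follows from maximality of $\bm{z}_Q^*$ in $P_0$ together with the continuity granted by Lemma \ref{3:lem:properties:f} (which transfers verbatim to $f_Q^{r,\alpha,\beta}$ since $\psi_Q$ is dominated by $1$ and $W^{+,r,\alpha}$ is integrable): any point in $T_Q(\tilde{V})$ can be approached from inside the open set where every $f_Q^{r,\alpha,\beta}$ is strictly negative, which by continuity forces $\bm{z}_Q^*$ itself to lie in $T_Q(\tilde{V})$, while by monotonicity no strictly smaller joint root can be inside. For the latter, observe that the assumption $\E[W^{+,r,\alpha}\1\{A=\beta\}\1\{M=0\}]>0$ for $(r,\alpha,\beta)\in\tilde{V}$ means $(f_Q^M)^{r,\alpha,\beta}(\bm{z}) > f_Q^{r,\alpha,\beta}(\bm{z})$ strictly on $\tilde{V}$, so the shocked root set is pushed up into $T_Q(\tilde V)$; making the shock vanish and invoking continuity of $(f_Q^M)^{r,\alpha,\beta}$ jointly in $\bm{z}$ and in the distribution of $M$ yields $\hat{\bm{z}}_Q^M \to \bm{z}_{0,Q}(\tilde V) = \bm{z}_Q^*$ in the relevant sense.

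Combining the pieces, continuity of $g_Q$ from (the exposure-model version of) Lemma \ref{3:lem:properties:f} gives $g_Q(\hat{\bm{z}}_Q^M) \geq g_Q(\bm{z}_Q^*) - \epsilon$, whence $n^{-1}\mathcal{S}_n^M \geq g_Q(\bm{z}_Q^*) - \epsilon$ with high probability. If $g_Q(\bm{z}_Q^*)>0$ this lower bound is uniformly positive irrespective of how small $\P(M=0)$ is, which is exactly non-resilience per Definition \ref{3:def:non:resilience}. The delicate step in this plan is the joint continuity argument identifying $\lim\hat{\bm{z}}_Q^M = \bm{z}_Q^*$, since joint roots of a multi-variate system need not vary continuously with parameters without a stability-type condition; this is where using the characterization of $\bm{z}_Q^*$ via the closure $T_Q(\tilde{V})$ (rather than via directional derivatives) is essential, exactly as in the proof of Theorem \ref{3:thm:non-resilience}.
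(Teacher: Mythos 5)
Your overall strategy is exactly the paper's: by the time the corollary appears, the paper has already remarked that ``all results from Section \ref{3:sec:resilience} directly transfer'' once Theorem~\ref{3:exposures:final:fraction} replaces Theorem~\ref{3:thm:general:weights}, and that is what you do. Transporting Lemma~\ref{3:lem:properties:f}, Lemma~\ref{3:lem:z0} and Lemma~\ref{3:lem:z0:equals:z*} to the $f_Q^{r,\alpha,\beta}$ setting is verbatim, and your opening chain
$n^{-1}\mathcal{S}_n^M \geq g_Q^M(\hat{\bm{z}}_Q^M)+o_p(1) \geq g_Q(\hat{\bm{z}}_Q^M)+o_p(1)$ (using $\psi_{QM}\geq\psi_Q$) is correct.

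Two steps in your final paragraph are muddled in a way that would make the write-up not compile into a valid argument, even though the surrounding scaffolding is right. First, \emph{non-resilience is a statement about every admissible shock}, so the inequality $\hat{\bm{z}}_Q^M\geq\bm{z}_Q^*$ has to be proved for each fixed $M$ with $\E[W^{+,r,\alpha}\1\{A=\beta\}\1\{M=0\}]>0$ on $\tilde V$, not in a limit $\P(M=0)\to0$. Your observation that ``the shocked root set is pushed up into $T_Q(\tilde V)$'' is the right one: since $(f_Q^M)^{r,\alpha,\beta}(\hat{\bm{z}}_Q^M)=0$ and the shock term is strictly positive on $\tilde V$, you get $f_Q^{r,\alpha,\beta}(\hat{\bm{z}}_Q^M)<0$ for $(r,\alpha,\beta)\in\tilde V$, hence $\hat{\bm{z}}_Q^M\in T_Q(\tilde V)$ and, by the definition of $\bm{z}_{0,Q}(\tilde V)$ as a componentwise infimum over $T_Q(\tilde V)$, $\hat{\bm{z}}_Q^M\geq\bm{z}_{0,Q}(\tilde V)=\bm{z}_Q^*$. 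The subsequent ``making the shock vanish and invoking continuity'' digression is not only unnecessary, it inverts the quantifier structure: a limiting statement about shrinking shocks would not deliver a lower bound that is uniform over all admissible $M$. Second, once $\hat{\bm{z}}_Q^M\geq\bm{z}_Q^*$ is in hand, the inequality $g_Q(\hat{\bm{z}}_Q^M)\geq g_Q(\bm{z}_Q^*)$ is an exact consequence of the monotonicity of $\psi_l$ in its arguments; invoking \emph{continuity} of $g_Q$ to produce the $\epsilon$ is wrong. The $\epsilon$ in the corollary's conclusion is the $o_p(1)$ error from Theorem~\ref{3:exposures:final:fraction}, and nothing else. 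With those two corrections your argument coincides with the paper's.
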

In Corollary \ref{3:cor:non:resilience:exposures} we consider shocks that infect every possible part of the system. In the same way also shocks on subsystems only can be considered as a corollary of Theorem \ref{3:thm:non-resilience}.

Finally, we can also derive capital requirements for the multi-type exposure model similar as in Subsection \ref{2:ssec:capital:requirements} but using Corollary \ref{3:cor:resilient}. To this end, denote for $i\in[n]$ by $\{E_i^s\}_{s\in[(n-1)R]}$ an enumeration of $i$'s exposure list, by $\lambda_i<\infty$ their mutual mean and $S_k^i=\sum_{j=1}^k E_i^j$. Moreover, assume that there exists $t>1$ such that for all $\epsilon>0$ and uniformly for all $i\in[n]$ it holds
\[ k^{t-1}\P\left( S_k^i\geq(1+\epsilon)k\lambda_i \right) \to0,\quad\text{as }k\to\infty, \]
and for all $x>1$,
\[ k^{tx-1}\P\left( S_k^i\geq\epsilon\lambda_i k^x \right) \to0,\quad\text{as }k\to\infty. \]

\begin{corollary}\label{3:cor:exposures:capital:requirements}
Fix some direction $\bm{v}\in\R_{+,0}^V$ and let $\nu_c^{r,\alpha,\beta}$ and $\mu_c^{r,\alpha,\beta}$ be as in Section \ref{3:sec:capital:block:model}. Consider then the setting described above and assume that
\[ c_i > \max\left\{ \sum_{s\in S} E_i^s\,:\, \vert S\vert=R \right\} \]
as well as
\[ c_i \geq \lambda_i \mu^\beta \left( \frac{\sum_{s\in[R]}s\sum_{\gamma\in[T]}w_i^{-,s,\gamma}v^{s,\beta,\gamma}}{\Vert\bm{v}\Vert} \right)^{\nu^\beta},\quad\text{if }\alpha_i=\beta, \]
for every $i\in[n]$, where $\{\nu^\beta\}_{\beta\in[T]}\subset\R_{+,0}$ and $\{\mu^\beta\}_{\beta\in[T]}\subset\R_+$. Then the system is resilient if for all $(r,\alpha,\beta)\in V$ one of the following holds
\begin{enumerate}
\item $\nu^\beta>\nu_c^{r,\alpha,\beta}$,
\item $\nu^\beta=\nu_c^{r,\alpha,\beta}>0$ and $\mu^\beta>\mu_c^{r,\alpha,\beta}$.
\end{enumerate}
\end{corollary}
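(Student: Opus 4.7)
The plan is to reduce the statement to the framework of Corollary \ref{3:cor:resilience:exposures} and then adapt the argument from the proof of Theorem \ref{2:cor:threshold:res} to the multi-type setting. By Corollary \ref{3:cor:resilience:exposures} it suffices to verify $\bm{z}_Q^* = \bm{0}$. By the natural analogue of Lemma \ref{3:lem:sufficient:criteria:z:star} in the exposure model, this will follow once we establish $\limsup_{h \to 0+} h^{-1} f_Q^{r,\alpha,\beta}(h\bm{v}) < 0$ for every $(r,\alpha,\beta) \in V$, where $\bm{v}$ is the same direction used in the capital construction.

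The first assumption $c_i > \max\bigl\{\sum_{s \in S} E_i^s \,:\, |S| = R\bigr\}$ forces $q_i \geq R+1$ almost surely, so contagious links are excluded and we recover the role of the lower bound $R+1$ in \eqref{3:eqn:def:capitals}. The second assumption implies $c_i \geq \lambda_i \tilde{c}_i$ for every $i$, where $\tilde{c}_i := \mu^\beta (e_i(\bm{v})/\|\bm{v}\|)^{\nu^\beta}$ is the deterministic threshold-type capital from Subsection \ref{3:ssec:suff:cap:requ}. Consequently, on the event $\{q_i \leq k\}$ we necessarily have $S_k^i \geq \lambda_i \tilde{c}_i$, so the Baum-Katz-type bounds yield polynomial decay $\P(q_i \leq k) = O(k^{1-t})$ for $k \leq (1-\epsilon)\tilde{c}_i$ and, for every $x > 1$, $\P(q_i \leq k) = O(k^{1-tx})$ for $k \leq \epsilon \lambda_i^{-1} \tilde{c}_i^{1/x}$. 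Passing to the limit in distribution via Assumption \ref{3:ass:regularity:thresholds}, the same tail bounds hold for the limiting threshold $Q$ relative to the deterministic quantity $\tilde{C}(\bm{v}) := \mu^\beta (E(\bm{v})/\|\bm{v}\|)^{\nu^\beta}$.

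The main step is to split
\[
\E\Bigl[W^{+,r,\alpha}\psi_Q\Bigl(\sum_\gamma W^{-,1,\gamma}hv^{1,\beta,\gamma},\ldots,\sum_\gamma W^{-,R,\gamma}hv^{R,\beta,\gamma}\Bigr)\1\{A=\beta\}\Bigr]
\]
into the pieces $\{Q > (1-\epsilon)\tilde{C}(\bm{v})\}$ and $\{Q \leq (1-\epsilon)\tilde{C}(\bm{v})\}$. On the first piece we monotonically replace $Q$ by $(1-\epsilon)\tilde{C}(\bm{v})$ and obtain exactly the setting of Propositions \ref{3:prop:limsup:liminf} and \ref{3:prop:directional:derivative}, giving the bound $(\mu_c^{r,\alpha,\beta}/((1-\epsilon)\mu^\beta))v^{r,\alpha,\beta}h + o(h)$ in the critical case $\nu^\beta = \nu_c^{r,\alpha,\beta}$ and $o(h)$ in the super-critical case $\nu^\beta > \nu_c^{r,\alpha,\beta}$. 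On the second piece, following the proof of Theorem \ref{2:cor:threshold:res}, we decompose dyadically along shells of the form $\{e_i(\bm{v})^{(k-1)\delta} < q_i \leq e_i(\bm{v})^{k\delta}\}$ for a small parameter $\delta > 0$, and further include the final critical shell $\{e_i(\bm{v})^{\nu^\beta - \delta} < q_i \leq (1-\epsilon)\tilde{c}_i\}$. The uniform Baum-Katz bounds give $\P(\text{shell}) = O(e_i(\bm{v})^{k\delta - t\nu^\beta})$, and the standard estimate $\psi_2(x) = O(x^2)$ combined with integrability of $W^{+,r,\alpha}(W^{-,s,\gamma})^a$ for the relevant exponents $a$ — established through comonotone bounds and the Pareto-type tail assumptions — forces each shell to contribute $o(h)$. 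Choosing $\epsilon$ sufficiently small and letting $h \to 0+$ yields $\limsup_{h \to 0+} h^{-1} f_Q^{r,\alpha,\beta}(h\bm{v}) \leq (\mu_c^{r,\alpha,\beta}/\mu^\beta - 1)v^{r,\alpha,\beta} < 0$ under the stated hypotheses.

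The main obstacle is the dyadic analysis of the small-$Q$ piece: the number of shells, the exponent $\delta$, and the Baum-Katz parameter $t$ must be tuned jointly so that the combined polynomial decay in $e_i(\bm{v})$ dominates the blow-up of $e_i(\bm{v})^{a}$ inside the remaining $\E[W^{+,r,\alpha}(W^-)^a]$ integrals. This is precisely the multivariate counterpart of the three-shell computation used for Theorem \ref{2:cor:threshold:res}, and the Pareto-type assumptions on $W^{+,r,\alpha}$ and $E(\bm{v})$ conditional on $A = \beta$ are what make the required integrability hold.
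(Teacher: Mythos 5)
Your proposal is correct and follows essentially the same path as the paper's own proof: you reduce via Corollary \ref{3:cor:resilience:exposures} and the exposure-model analogue of Lemma \ref{3:lem:sufficient:criteria:z:star}, split the expectation into $\{Q > \text{threshold}\}$ and $\{Q \leq \text{threshold}\}$, handle the large-$Q$ piece by invoking Propositions \ref{3:prop:limsup:liminf} and \ref{3:prop:directional:derivative}, and handle the small-$Q$ piece by re-running the shell decomposition from the proof of Theorem \ref{2:cor:threshold:res} with $W^-$ replaced by $\Vert\bm{v}\Vert^{-1}E(\bm{v})$. The paper slices the threshold as $(1+\epsilon)\mu_c^\beta(\cdots)^{\nu^\beta}$ while you slice at $(1-\epsilon)\mu^\beta(\cdots)^{\nu^\beta}$ — a cosmetic choice that does not change the argument, since both sit strictly between the critical and the actual capital exponent for $\epsilon$ small. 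One small slip worth fixing: in the weighted multi-type setting the base estimate is $\psi_{R+1}\bigl(\sum_\gamma w^{-,1,\gamma}hv^{1,\beta,\gamma},\ldots,\sum_\gamma w^{-,R,\gamma}hv^{R,\beta,\gamma}\bigr)=o(h)$ (two or more Poisson arrivals are needed to reach threshold $R+1$ when each edge carries weight at most $R$), not "$\psi_2(x)=O(x^2)$" as in the single-type model of Chapter \ref{chap:systemic:risk} — the first hypothesis $c_i>\max\{\sum_{s\in S}E_i^s:|S|=R\}$ is precisely what enforces $q_i\geq R+1$ and makes this substitution work.
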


\section{Applications}\label{3:sec:applications}
The theory developed in the previous sections allows to investigate many interesting novel settings as compared to previous literature. In this section, we discuss some of them and highlight their implications. Further, we demonstrate the applicability of our asymptotic results also for finite networks of reasonable size by numerical simulations. To make the notion of non-resilience easier accessible we consider the case of $s_i=1$ for all $i\in[n]$ and hence $\mathcal{S}_n=\vert\mathcal{D}_n\vert$ throughout this subsection. Generalizations under mild assumptions on $S$ are straight-forward.

In the first example, we investigate the influence of a non-resilient subsystem in a global system. Unsurprisingly the global system turns out to be non-resilient as well and we can further show that even resilient network parts become non-resilient by their connections to the non-resilient subsystem, i.\,e.~every howsoever small infection that occurs only within the resilient part of the system finally spreads to a lower bounded fraction of the resilient subsystem. 

\begin{example}\label{3:ex:non-res:subsystem}
For simplicity assume $R=1$ and denote $z^{\alpha,\beta}:=z^{1,\alpha,\beta}$, $f^{\alpha,\beta}(\bm{z}):=f^{1,\alpha,\beta}(\bm{z})$ and $W^{\pm,\alpha}:=W^{\pm,1,\alpha}$ in the following. Consider then a $1$-type banking system, described by the random vector $(\tilde{W}^-,\tilde{W}^+,\tilde{C})$, where $\P(\tilde{W}^+>0)=1$ and $\P(\tilde{C}=0)=0$, and assume that it is non-resilient
. In the $1$-dimensional case this breaks down to the existence of $\tilde{z}_0>0$ such that

\[ \tilde{f}(z) := \E\left[\tilde{W}^+\P\left(\mathrm{Poi}\left(\tilde{W}^-z\right)\geq\tilde{C}\right)\right] - z \geq 0, \quad\text{for all }z\in[0,\tilde{z}_0]. \]

\noindent Now introduce a second (possibly resilient) subsystem to the network. That is, the system is now described by the random vector $(W^{\pm,1},W^{\pm,2},C,A)$, where $\P(C=0)=0$, $A\in\{1,2\}$ and $\alpha_i=1$ means that bank $i\in[n]$ is in the non-resilient subsystem, whereas $\alpha_i=2$ means that $i$ is part of the second subsystem. In order for the characteristics of the non-resilient subsystem to be preserved, we require that $W^{-,1}\vert_{A=1} \stackrel{d}{=} \tilde{W}^-$, $W^{+,1}\vert_{A=1} \stackrel{d}{=} \tilde{W}^+/\P(A=1)$ (to account for the changed number of banks; due to the multiplicative form in \eqref{3:eqn:edge:prob} it is sufficient to adjust either in- or out-weights by $\P(A=1)$) and $C\vert_{A=1}\stackrel{d}{=}\tilde{C}$. We derive that

\[ f^{1,1}(\bm{z}) = \E\left[W^{+,1}\P\left(\mathrm{Poi}\left(W^{-,1}z^{1,1}+W^{-,2}z^{1,2}\right)\geq C\right)\1\{A=1\}\right] - z^{1,1} \geq \tilde{f}\left(z^{1,1}\right) \geq 0 \]

\noindent for all $\bm{z}=(z^{1,1},z^{1,2},z^{2,1},z^{2,2})$ with $z^{1,1}\in[0,\tilde{z}_0]$ and in particular $z_0^{1,1}(I)\geq\tilde{z}_0>0$, where $I:=\{(1,1)\}$. 
An application of Theorem \ref{3:thm:non-resilience} then yields that the fraction of finally defaulted banks in the network is lower bounded by

\begin{align*}
g(\bm{z}_0(I)) &= \E\left[\P\left(\mathrm{Poi}\left(W^{-,1}z_0^{1,1}(I)+W^{-,2}z_0^{1,2}(I)\right)\geq C\right)\1\{A=1\}\right]\\
&\hspace{4.5cm}+ \E\left[\P\left(\mathrm{Poi}\left(W^{-,1}z_0^{2,1}(I)+W^{-,2}z_0^{2,2}(I)\right)\geq C\right)\1\{A=2\}\right]
\end{align*}

\noindent w.\,h.\,p.~for any ex post infection $M$ satisfying $\P(M=0, A=1)>0$ (i.\,e.~infecting some banks in the non-resilient subsystem). That is, if a small fraction of banks in the non-resilient subsystem defaults due to an external shock event, then this infection spreads to the whole system and the fraction of finally defaulted banks in the second subsystem is w.\,h.\,p.~lower bounded~by

\begin{equation}\label{3:eqn:lower:bound:second:subsystem}
\E\left[\left.\P\left(\mathrm{Poi}\left(W^{-,1}z_0^{2,1}(I)+W^{-,2}z_0^{2,2}(I)\right)\geq C\right)\,\right\vert\,A=2\right].
\end{equation}

\noindent In fact, if we assume that $W^{+,2}\vert_{A=1}>0$ almost surely and $\P(W^{-,1}>0,C<\infty,A=2)>0$ (that is, there are some banks in the second subsystem lending to banks in the non-resilient subsystem), then it must hold that
\[ z_0^{2,1}(I) \geq \E\left[W^{+,2}\P\left(\mathrm{Poi}\left(W^{-,1}z_0^{1,1}(I)\right)\geq C\right)\1\{A=1\}\right] >0 \]

\noindent and hence the lower bound \eqref{3:eqn:lower:bound:second:subsystem} is strictly positive. That is, every howsoever small infected fraction in the non-resilient subsystem spreads to a lower bounded fraction of finally defaulted banks in the second subsystem as well.

Now finally assume that $W^{+,1}\vert_{A=2}>0$ almost surely and $\P(W^{-,2}>0,C<\infty,A=1)>0$ (that is, there are some banks in the non-resilient subsystem lending to banks in the second subsystem). By considering the function

\[ f^{1,2}(\bm{z}) = \E\left[W^{+,1}\P\left(\mathrm{Poi}\left(W^{-,1}z^{2,1}+W^{-,2}z^{2,2}\right)\geq CM\right)\1\{A=2\}\right] - z^{1,2}, \]

\noindent we derive that $(\hat{z}^M)^{1,2}>0$ for any ex post infection $M$ such that $\P(M=0,A=2)>0$ (that is, infecting some banks in the second subsystem) and hence also $(\hat{z}^M)^{1,1}>0$ by the form of $f^{1,1}(\bm{z})$ (see above). By the same means as before, we hence conclude that in fact $\hat{\bm{z}}^M\geq\bm{z}_0(I)$ and so the lower bounds derived above still hold. In particular, this means that every howsoever small initial shock to the second (possibly resilient) subsystem causes the default of a lower bounded fraction of banks in the second subsystem. That is, by connecting to the non-resilient subsystem the a priori possibly resilient second subsystem becomes non-resilient as well.

\end{example}
To better understand the phenomenon in Example \ref{3:ex:non-res:subsystem}, consider the following example: Let

\begin{align*}
W^{-,1}\vert_{A=1} &= w_1, & W^{+,1}\vert_{A=1} &= 2w_1, & C\vert_{A=1} &= 1,\\
W^{-,2}\vert_{A=2} &= w_2, & W^{+,2}\vert_{A=2} &= 2w_2, & C\vert_{A=2} &= 2,
\end{align*}

\noindent for $w_1>1$ and $w_2>0$. It is then easy to confirm that the type-$1$ subsystem is in fact non-resilient and the type-$2$ subsystem resilient. (Both subnetworks are Erd\"{o}s-R\'{e}nyi random graphs. In the first subnetwork every edge is contagious, in the second none is.) Additionally, we assume $\P(A=1)=\P(A=2)=1/2$ and $W^{\pm,1}\vert_{A=2} = W^{\pm,2}\vert_{A=1} = w_3 >0$. In particular, 

\begin{align*}
w_3\left(f^{1,1}(\bm{z})+z^{1,1}\right) &= 2w_1\left(f^{2,1}(\bm{z})+z^{2,1}\right), & w_3\left(f^{2,2}(\bm{z})+z^{2,2}\right) &= 2w_2\left(f^{1,2}(\bm{z})+z^{1,2}\right)
\end{align*}

\noindent and hence it must hold that $z_0^{2,1}(I) = (2w_1)^{-1}w_3\,z_0^{1,1}(I)$ resp.~$z_0^{1,2}(I) =(2w_2)^{-1}w_3\,z_0^{2,2}(I)$. The problem then reduces to $f^1(z^1,z^2) = 0$ and $f^2(z^1,z^2)=0$, where $z^1:=z^{1,1}$, $z^2:=z^{2,2}$ and

\begin{align*}
f^1(z^1,z^2) &:= w_1\left(1-e^{-w_1z^1-(2w_2)^{-1}w_3^2z^2}\right) - z^1,\\
f^2(z^1,z^2) &:= w_2\left(1-e^{-(2w_1)^{-1}w_3^2z^1-w_2z^2}\left(1+(2w_1)^{-1}w_3^2z^1+w_2z^2\right)\right) - z^2.
\end{align*}

\noindent Depending on the choice of the weights $w_i$, $i=1,2,3$, the system shows slightly different behavior, as illustrated in Figures \ref{3:fig:ex1:1jointRoot}, \ref{3:fig:ex1:1jointRoot:hump} and \ref{3:fig:ex1:3jointRoots}. In all cases, 
$\bm{z}_0(I)=\bm{z}^*\neq\bm{0}$ which determines a strictly positive lower bound on the final default fraction as shown in Example \ref{3:ex:non-res:subsystem}. However, $\bm{z}^*$ shows a jump discontinuity at certain choices for the weights $w_i$, $i=1,2,3$, where in particular $(z^*)^2$ changes drastically. Revisiting the definition of $z^2=z^{2,2}$, this can be interpreted as the resilient subsystem suddenly experiencing a lot more defaults. See \cite{Janson2012} for similar results on the Erd\"{o}s-R\'{e}nyi random graph.

\begin{figure}[t]
    \hfill\subfigure[]{\includegraphics[width=0.32\textwidth]{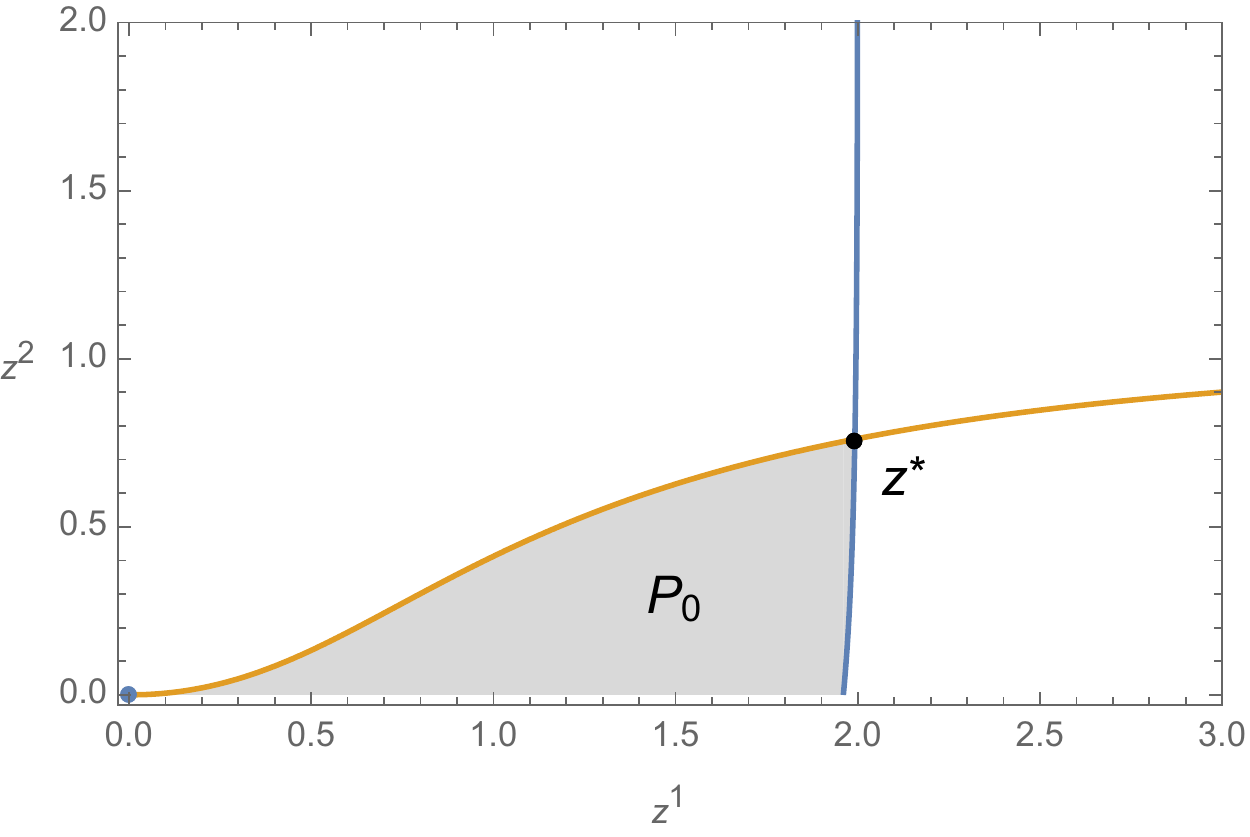}\label{3:fig:ex1:1jointRoot}}
    \hfill\subfigure[]{\includegraphics[width=0.32\textwidth]{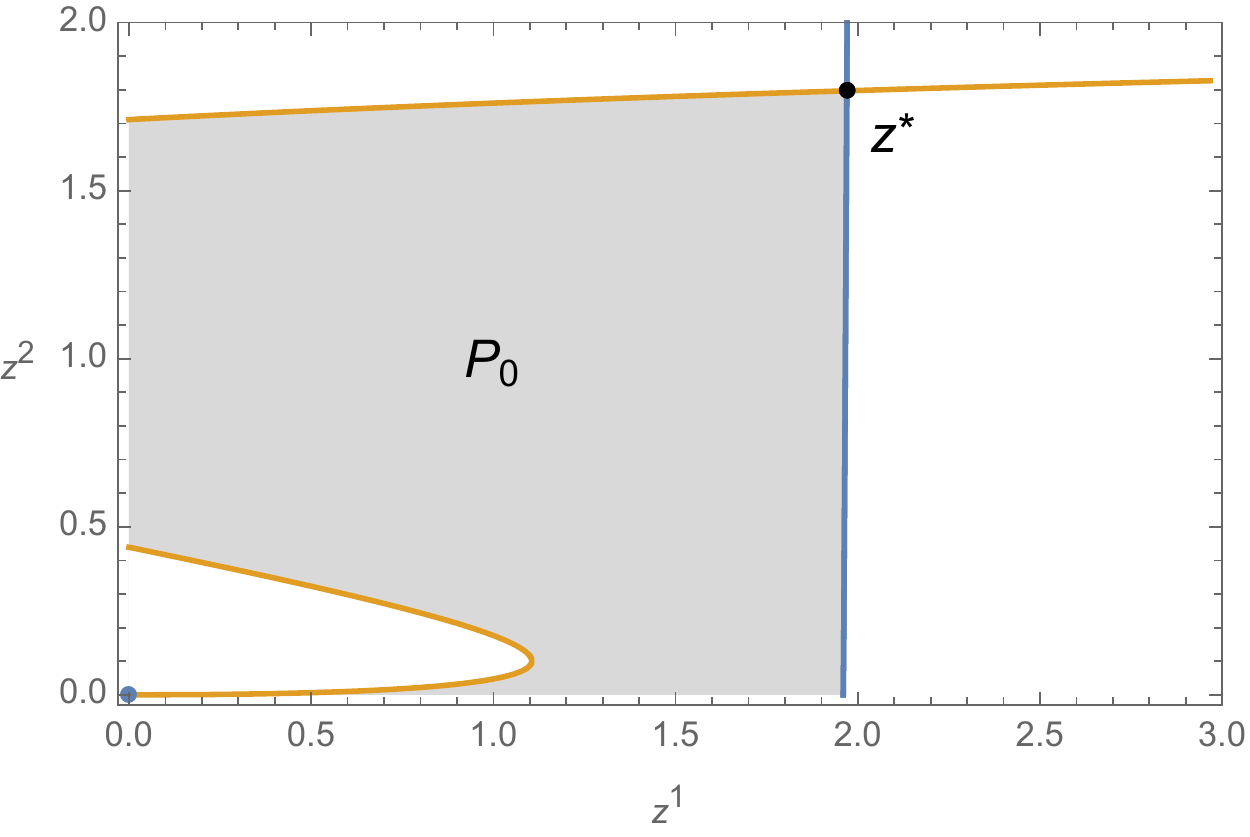}\label{3:fig:ex1:1jointRoot:hump}}
    \hfill\subfigure[]{\includegraphics[width=0.32\textwidth]{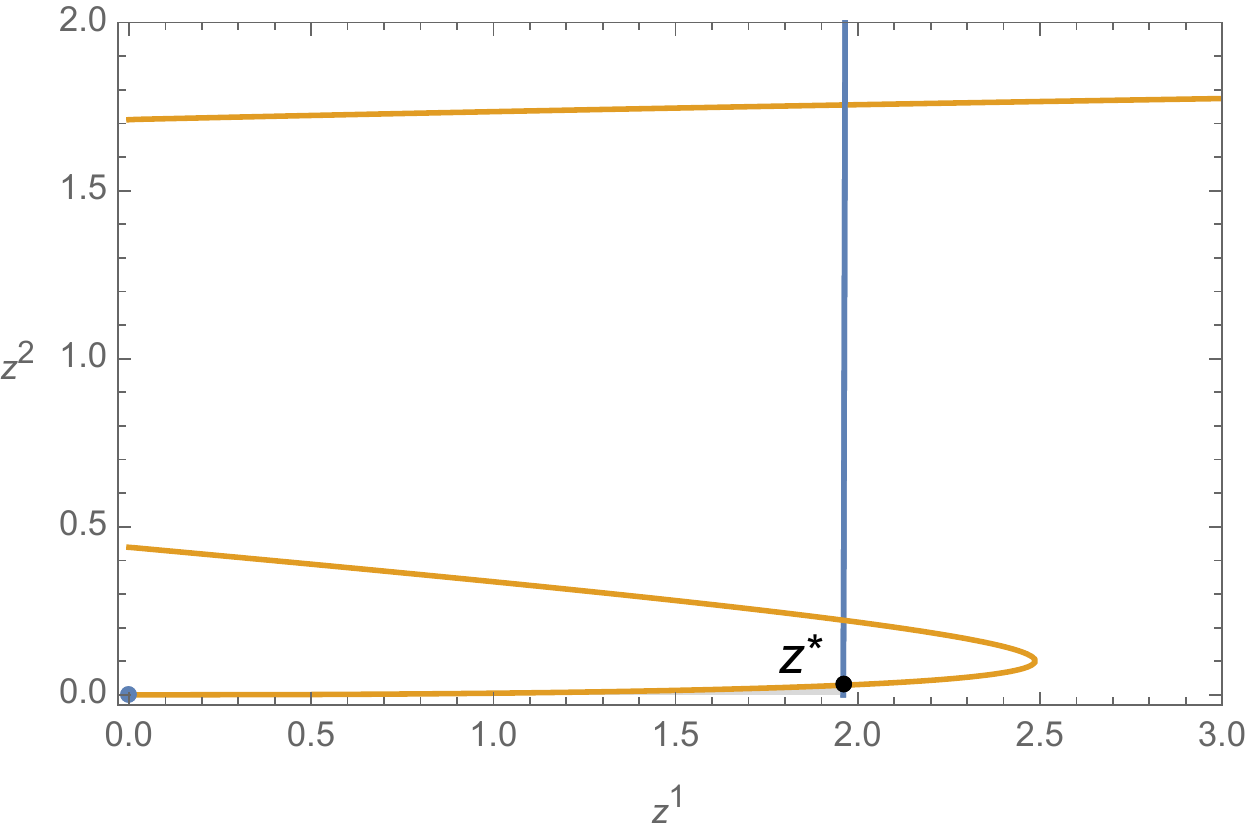}\label{3:fig:ex1:3jointRoots}}\hfill
\caption{Plot of the root sets of the functions $f^1(z^1,z^2)$ (blue) and $f^2(z^1,z^2)$ (orange) for the system with (a) $w_1=2$, $w_2=1$ and $w_3=2$, (b) $w_1=2$, $w_2=2$ and $w_3=3/4$ respectively (c) $w_1=2$, $w_2=2$ and $w_3=1/2$.
}\label{3:fig:Ex1}
\end{figure}

\pagebreak
Since all the main results of this chapter and the derivations in Example \ref{3:ex:non-res:subsystem} are asymptotical for $n\to\infty$, we demonstrate the applicability for finite networks numerically: For each of the scenarios (a)-(c) in Figure \ref{3:fig:Ex1} we performed $10^4$ simulations on networks of varying size \mbox{$n\in\{100k\,:\,k\in[100]\}$} with $1\%$ initially defaulted banks. The outcomes are plotted in Figure \ref{3:fig:Convergence:Ex1} together with the theoretical asymptotic final default fraction (taking into account the initial default fraction of $1\%$). For case (a), except for 6 simulations at $n=100$ all results lie considerably close to the theoretical final fraction of about $87.98\%$ and their deviation becomes smaller the larger $n$ grows. For case (b) and $n<10^3$, some of the simulations ended in final default fractions around $55\%$. (Graphically these come from deviations of the hump (root set of $f^2$, orange) in Figure \ref{3:fig:ex1:1jointRoot:hump} such that it intersects with the root of $f^1$ (blue)). For all other simulations and especially for $n\geq10^3$, the simulation results clearly converge to the theoretical value of about $94.25\%$. For case (c) finally, some of the simulation outcomes for $n\leq500$ were close to $0$ and few around $92.63\%$ (the value if one plugs in the largest of the three joint roots into $g$). The majority of the simulations (in particular for $n\geq4000$), however, resulted in final default fractions close to the theoretical value of $50.02\%$ and again deviations decrease as $n$ increases. Altogether we conclude that already for finite \mbox{networks of a few thousand vertices our asymptotic results are applicable with good accuracy.}

\begin{figure}[t]
\centering
    \hfill\subfigure[]{\includegraphics[width=0.47\textwidth]{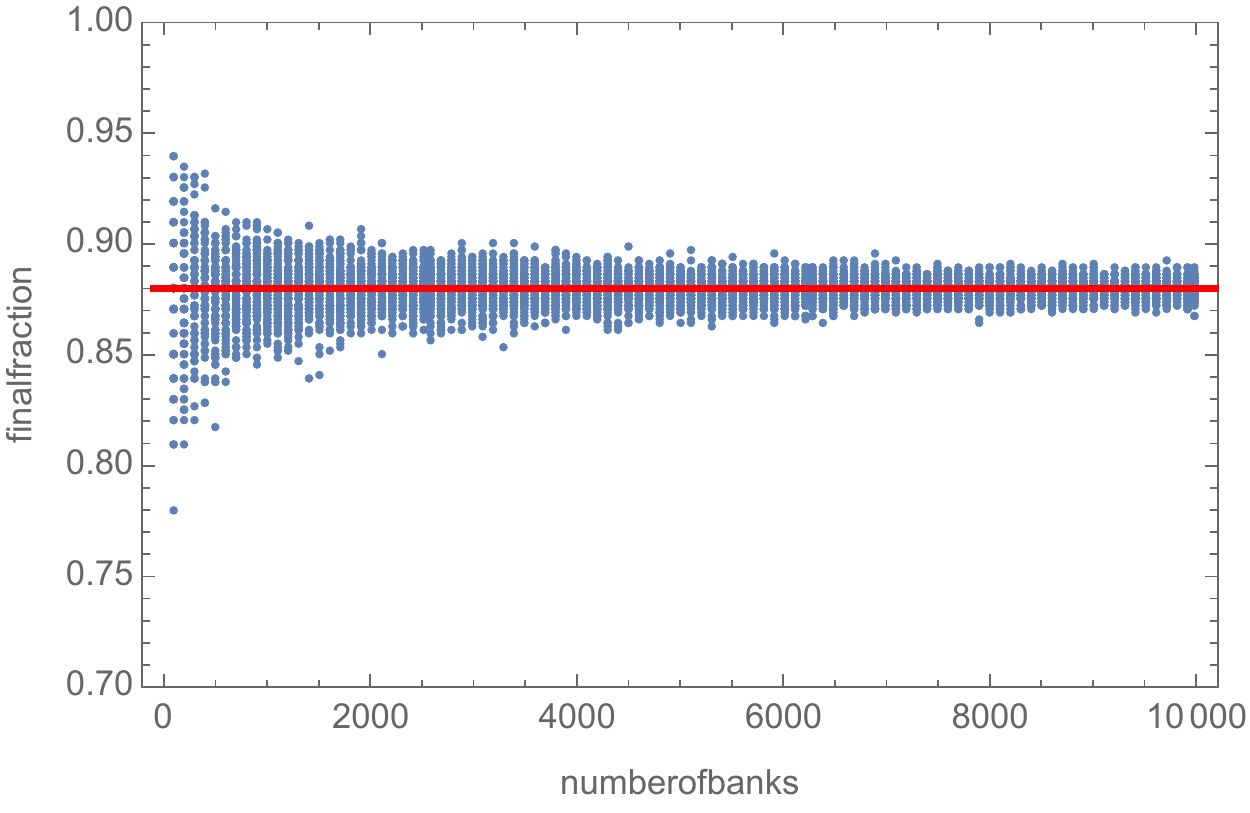}\label{3:fig:ex1:convergence:1jointRoot}}
    \hfill\subfigure[]{\includegraphics[width=0.47\textwidth]{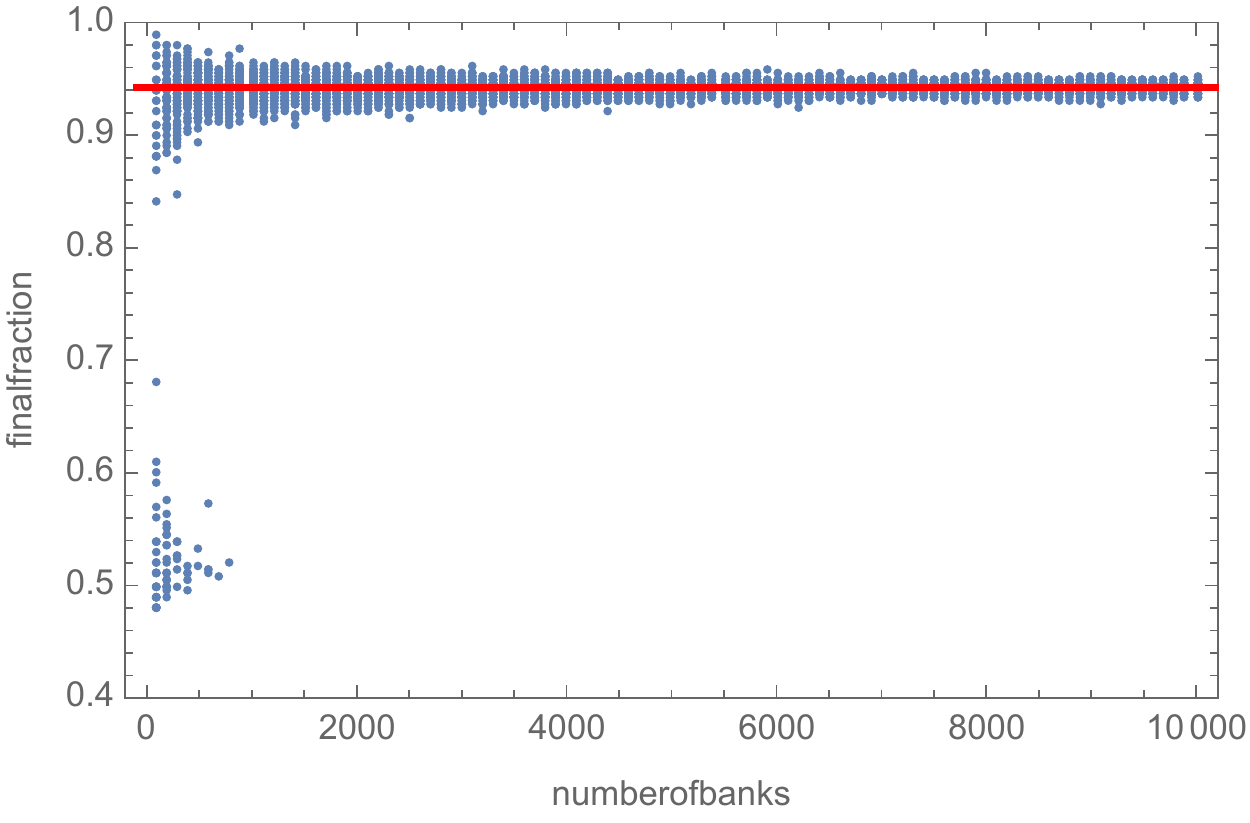}\label{3:fig:ex1:convergence:1jointRoot:hump}}
    \hfill\\
    \hfill\subfigure[]{\includegraphics[width=0.47\textwidth]{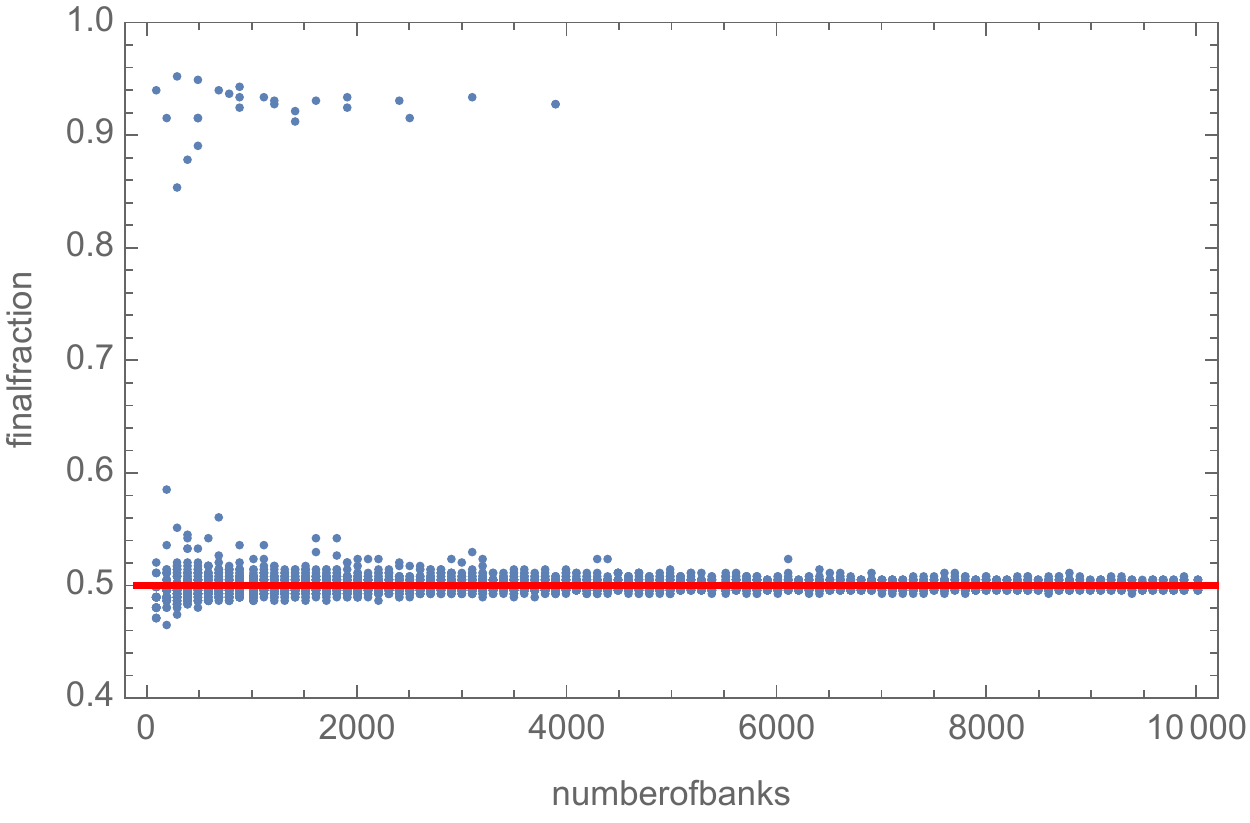}\label{3:fig:ex1:convergence:3jointRoots}}\hfill
\caption{Plot of the simulation results on networks of varying size (blue) and the theoretical asymptotic final default fraction (red) for the system with (a) $w_1=2$, $w_2=1$ and $w_3=2$, \mbox{(b) $w_1=2$,} $w_2=2$ and $w_3=3/4$ respectively (c) $w_1=2$, $w_2=2$ and $w_3=1/2$.
}\label{3:fig:Convergence:Ex1}
\end{figure}

In particular, what we learn from Example \ref{3:ex:non-res:subsystem} is that in order to ensure resilience of a particular subsystem, one needs to completely prohibit links to other non-resilient subsystems. It is, however, also possible that two subsystems which are resilient on their own form a non-resilient global system once connected to each other. It is therefore an interesting regulatory question how to ensure also resilience of a global system composed of various resilient subsystems. In general for our model the answer to this question is provided by Theorem \ref{3:thm:resilience}. However, in the following example we state a more intuitive criterion.

\begin{example}\label{3:ex:res:global:system}
Again, for simplicity assume that $R=1$. Consider a financial network consisting of $T$ subnetworks (types) which shall satisfy the following $1$-dimensional resilience conditions: For each $\epsilon>0$ there exists $z_\epsilon>0$ such that for all $z\in(0,z_\epsilon)$ and $\alpha\in[T]$ it holds
\begin{equation}\label{3:eqn:capital:requirements:subsystems}
\epsilon>\E\left[W^{+,\alpha}W^{-,\alpha}\P\left(\mathrm{Poi}\left(W^{-,\alpha}z\right)=C-1\right)\1\{A=\alpha\}\right].
\end{equation}
Note that this condition implies $\E[W^{+,\alpha}\P(\mathrm{Poi}(W^{-,\alpha}z)\geq C)\vert A=\alpha]<0$ for all $z$ small enough and hence indeed it implies resilience of the subsystem by Theorem \ref{3:thm:resilience}. In Section \ref{2:sec:resilience} explicit capital requirements (i.\,e.~a formula for $C\vert_{A=\alpha}$ in dependence of $W^{-,\alpha}\vert_{A=\alpha}$) were derived for the case of Pareto distributed weights (which are typically observed in real networks) which ensure 
\eqref{3:eqn:capital:requirements:subsystems}. 

\pagebreak
Now further assume that there exists a constant $K<\infty$ such that
\begin{equation}\label{3:eqn:bound:on:international:links}
W^{\pm,\beta}\vert_{A=\alpha} \leq KW^{\pm,\alpha}\vert_{A=\alpha}\quad\text{almost surely},
\end{equation}
for all $\alpha\neq\beta\in[T]$, i.\,e.~the tendency of institutions to develop links with institutions outside their subnetwork is bounded by a constant multiple of their tendency to develop links with institutions within their subnetwork. In particular, this is the case if the external weights are bounded from above and the internal weights are bounded from below.

Replacing $W^{\pm,\beta}\vert_{A=\alpha}$ by $KW^{\pm,\alpha}\vert_{A=\alpha}$ only makes the system less resilient (if the weights increase, the number of links increases and hence the total exposure of each institution). Hence set $\tilde{W}^{\pm,\beta}\vert_{A=\alpha} = KW^{\pm,\alpha}\vert_{A=\alpha}$ for $\alpha\neq\beta$ and $\tilde{W}^{\pm,\alpha}\vert_{A=\alpha}=W^{\pm,\alpha}\vert_{A=\alpha}$. Now define $\bm{v}\in\R_+^{[T]\times[T]}$ by $v^{\alpha,\beta}=K^{\1\{\alpha\neq\beta\}}$, $\alpha,\beta\in[T]$. Then we derive that
\begin{align*}
&\E\Bigg[\tilde{W}^{+,\alpha}\Bigg(\sum_{\beta'\in[T]}v^{\alpha,\beta'}\tilde{W}^{-,\beta'}\Bigg)\P\Bigg(\mathrm{Poi}\Bigg(\sum_{\gamma\in[T]}\tilde{W}^{-,\gamma}z^{\alpha,\gamma}\Bigg)=C-1\Bigg)\1\{A=\alpha\}\Bigg]\\
&\hspace{0.5cm} = \E\Bigg[W^{+,\alpha}W^{-,\alpha}(1+K^2(T-1))\P\Bigg(\mathrm{Poi}\Bigg(W^{-,\alpha}\Bigg(z^{\alpha,\alpha}+K\sum_{\gamma\neq\alpha}z^{\alpha,\gamma}\Bigg)\Bigg)=C-1\Bigg)\1\{A=\alpha\}\Bigg]\\
&\hspace{0.5cm} < (1+K^2(T-1))\epsilon = v^{\alpha,\alpha}(1+K^2(T-1))\epsilon,
\end{align*}
for $z^{\alpha,\alpha}+K\sum_{\gamma\neq\alpha}z^{\alpha,\gamma}<z_\epsilon$, and
\begin{align*}
&\E\Bigg[\tilde{W}^{+,\alpha}\Bigg(\sum_{\beta'\in[T]}v^{\beta,\beta'}\tilde{W}^{-,\beta'}\Bigg)\P\Bigg(\mathrm{Poi}\Bigg(\sum_{\gamma\in[T]}\tilde{W}^{-,\gamma}z^{\beta,\gamma}\Bigg)=C-1\Bigg)\1\{A=\beta\}\Bigg]\\
&\hspace{0.24cm} = \E\Bigg[KW^{+,\beta}W^{-,\beta}(1+K^2(T-1))\P\Bigg(\mathrm{Poi}\Bigg(W^{-,\beta}\Bigg(z^{\beta,\beta}+K\sum_{\gamma\neq\beta}z^{\beta,\gamma}\Bigg)\Bigg)=C-1\Bigg)\1\{A=\beta\}\Bigg]\\
&\hspace{0.24cm} < K(1+K^2(T-1))\epsilon = v^{\alpha,\beta}(1+K^2(T-1))\epsilon,
\end{align*}
for $\alpha\neq\beta$ and $z^{\beta,\beta}+K\sum_{\gamma\neq\beta}z^{\beta,\gamma}<z_\epsilon$. If we now choose $\epsilon<(1+K^2(T-1))^{-1}$, then 
\[ \tilde{f}^{\alpha,\beta}(\delta\bm{v}):=\E\Bigg[\tilde{W}^{+,\alpha}\P\Bigg(\mathrm{Poi}\Bigg(\delta\sum_{\gamma\in[T]}\tilde{W}^{-,\gamma}v^{\beta,\gamma}\Bigg)\geq C\Bigg)\1\{A=\beta\}\Bigg] - \delta v^{\alpha,\beta} < 0 \]
for all $\delta>0$ small enough. It thus holds $\bm{z}^*\leq\lim_{\delta\to0+}\delta\bm{v}=\bm{0}$ and therefore $P_0=\{\bm{0}\}$. We can then apply Theorem \ref{3:thm:resilience} and obtain that the combined system is still resilient.

From a regulatory perspective it is hence enough to impose capital requirements 
described by \eqref{3:eqn:capital:requirements:subsystems} and to restrict links between different subsystems in the sense of \eqref{3:eqn:bound:on:international:links}.
\end{example}
In our first two examples we concentrated on the (non-)resilience of multi-type networks. For simplicity, we assumed that all edges carry the same weight ($R=1$). 
Another interesting feature of our model, however, is that it allows for edge weights that depend on the types of both the creditor and the debtor bank. The following example shows that this can indeed make a huge difference as compared to previous models in which exposures (edge weights) could only depend on the size/degree/type of the creditor bank. It considers two very similar financial systems whose only difference is that in one exposures depend on both the creditor and debtor type and in the other they depend on the type of the creditor bank only. As a consequence the first system will turn out to be non-resilient whereas the second one is resilient.

\begin{example}\label{3:ex:neighbor:dependent}
Consider a network of size $n\in\N$ in which (asymptotically) $p=1/3$ of all banks have type $1$ and the remaining $1-p=2/3$ banks have type $2$. That is, $T=2$. Further assume that for each pair of vertices $(i,j)\in[n]^2$ an edge from $i$ to $j$ shall be present with probability $4/n$. Edges between two banks of type $1$ shall carry weight $2$ and all other edges weight $1$. That is, if $\alpha_i=1$, then $w_i^{\pm,2,1}=w_i^{\pm,1,2}=2$ and $w_i^{\pm,1,1}=w_i^{\pm,2,2}=0$. If $\alpha_i=2$, then $w_i^{\pm,1,1}=w_i^{\pm,1,2}=2$ and $w_i^{\pm,2,1}=w_i^{\pm,2,2}=0$. Finally, all banks shall have capital $2$.

Then similarly as for Example \ref{3:ex:non-res:subsystem} the originally eight-dimensional system reduces to 
\begin{align*}
f^1(z^1,z^2) &= 2p\P\left(\mathrm{Poi}(2z^2)+2\mathrm{Poi}(2z^1)\geq 2\right) - z^1,\\
f^2(z^1,z^2) &= 2(1-p)\P\left(\mathrm{Poi}(2(z^1+z^2))\geq 2\right) - z^2.
\end{align*}
See Figure \ref{3:fig:neighbor:dependent} for an illustration of the root sets of $f^1$ and $f^2$. This figure already shows 
that $\bm{z}^*\neq\bm{0}$ and hence non-resilience by Theorem \ref{3:thm:non-resilience}. 
\begin{figure}[t]
    \hfill\subfigure[]{\includegraphics[width=0.45\textwidth]{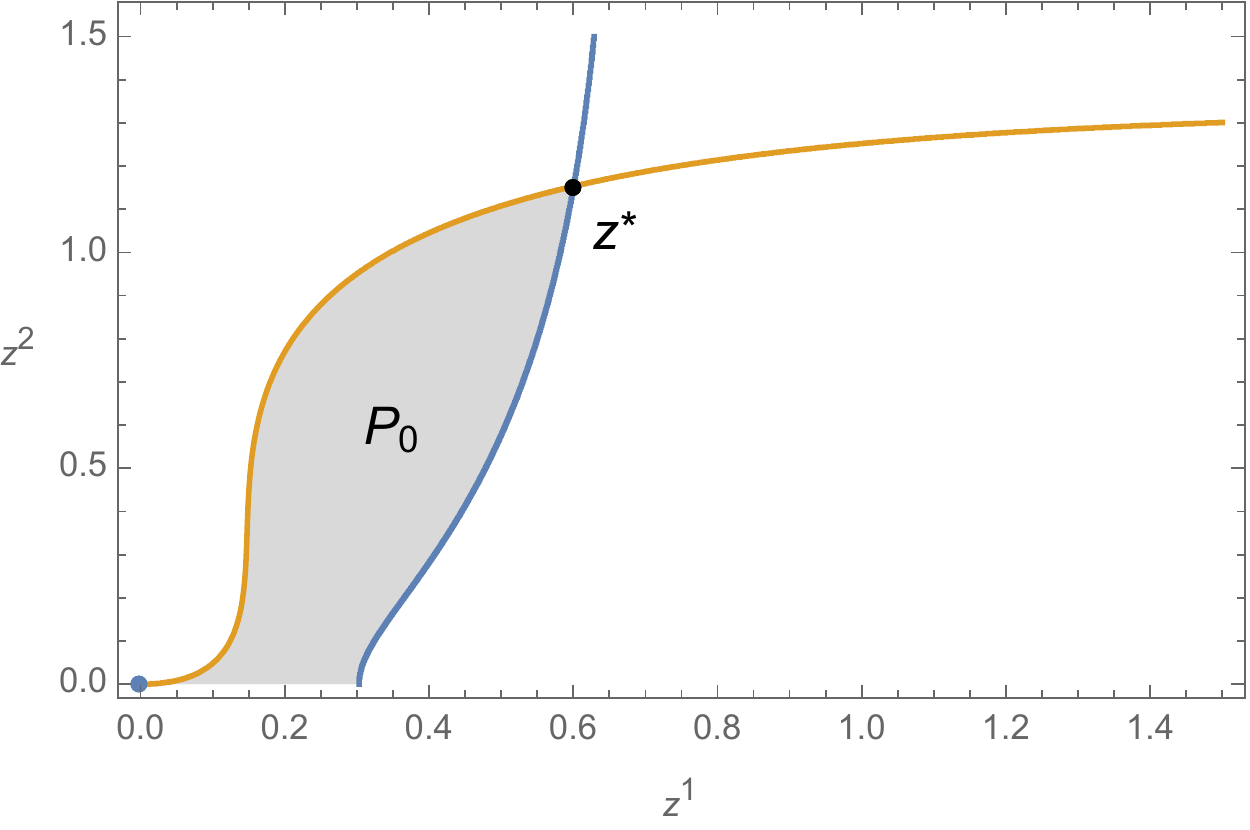}\label{3:fig:neighbor:dependent}}
    \hfill\subfigure[]{\includegraphics[width=0.45\textwidth]{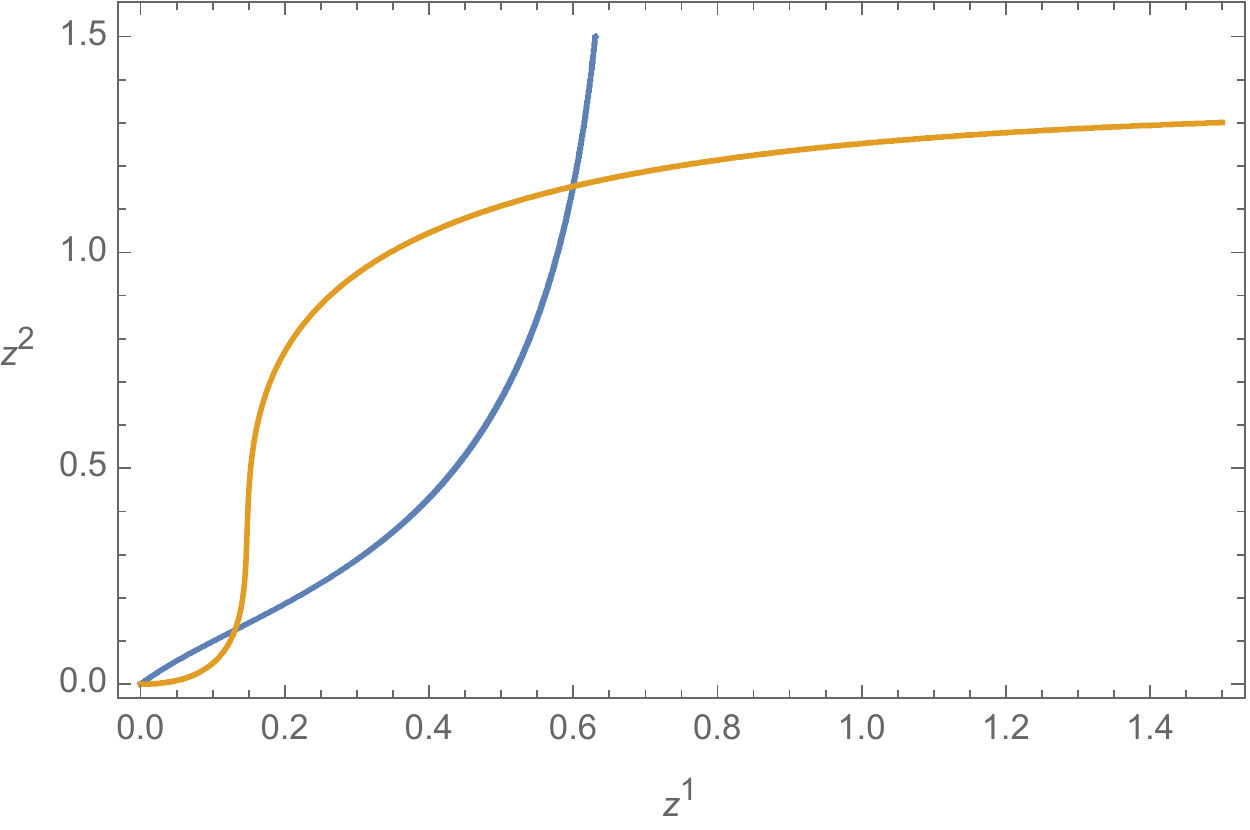}\label{3:fig:neighbor:independent}}\hfill
\caption{Plot of the root sets of the functions $f^1(z^1,z^2)$ (blue) and $f^2(z^1,z^2)$ (orange) for the system with (a) neighbor-dependent exposures respectively (b) neighbor-independent exposures.
}
\end{figure}
Also for $z^1,z^2\to0$, we can compute
\[ \frac{\partial f^1}{\partial z^1}(z^1,z^2) = 4p \P\left(\mathrm{Poi}(2z^2)+2\mathrm{Poi}(2z^1)\in\{0,1\}\right) - 1 \to 4p - 1 = \frac{1}{3} > 0, \]
which rigorously proves that the type-$1$ subnetwork and then also the whole system is non-resilient (cf.~Example \ref{3:ex:non-res:subsystem}). Numerically one derives that $\bm{z}^*\approx(0.601,1.153)$ and $g(\bm{z}^*)\approx 0.877$. In order to test this prediction, we performed $10^4$ simulations of financial networks of size $n=10^4$ with initial default probability $10^{-3}$. In only $5.32\%$ of the simulations, we observed a resilient nature in the sense that the simulated final default fraction was lower than $3\%$. All of the remaining simulations ended with a final default fraction within $[85.65\%,89.73\%]$ and are hence of a non-resilient nature. Averaging over the latter ones yields a mean final default fraction of $87.71\%$.


Now consider the following modified network: Instead of assigning weight $2$ to all links between two type-$1$ banks and weight $1$ to all other links, this time assign weight $2$ with probability $p$ to any edge going to a type-$1$ bank (all other edges are assigned weight $1$). That is, we keep the skeleton of the network but we redistribute the edge-weights in such a way that they do only depend on the creditor bank and not on the debtor bank. The total number of weight-$2$ edges stays the same (note that in the first network the type-$1$ banks accounted to a fraction of $p$ of all the debtor banks of type-$1$ banks). This can be achieved by assigning the following new vertex-weights: $w_i^{+,1,1}=w_i^{+,2,1}=2$ for all $i\in[n]$. Further, if $\alpha_i=1$, then $w_i^{-,1,1}=w_i^{-,1,2}=2(1-p)$ and $w_i^{-,2,1}=w_i^{-,2,2}=2p$. All other vertex-weights shall stay the same. The new system then reduces to the following two functions, whose root sets are shown in Figure \ref{3:fig:neighbor:independent}:
\begin{align*}
f^1(z^1,z^2) &= 2p\P\left(\mathrm{Poi}(2(1-p)(z^1+z^2))+2\mathrm{Poi}(2p(z^1+z^2))\geq 2\right) - z^1,\\
f^2(z^1,z^2) &= 2(1-p)\P\left(\mathrm{Poi}(2(z^1+z^2))\geq 2\right) - z^2
\end{align*}
Figure \ref{3:fig:neighbor:independent} shows that the root set of $f^2$ is being shifted to the left, now starting off above the root set of $f^1$. One can hence already expect that $P_0=\{\bm{0}\}$ and the new system to be resilient. Also more rigorously, as $z^1,z^2\to0$, we derive that
\begin{align*}
\frac{\partial f^1}{\partial z^1}(z^1,z^2) &= 4p(1-p)\P\left(\mathrm{Poi}(2(1-p)(z^1+z^2))+2\mathrm{Poi}(2p(z^1+z^2))=1\right)\\
&\hspace{3.15cm} + 4p^2\P\left(\mathrm{Poi}(2(1-p)(z^1+z^2))+2\mathrm{Poi}(2p(z^1+z^2))\in\{0,1\}\right) - 1\\
&\to 4p^2 - 1 = -\frac{5}{9},\\
\frac{\partial f^1}{\partial z^2}(z^1,z^2) &= 4p(1-p)\P\left(\mathrm{Poi}(2(1-p)(z^1+z^2))+2\mathrm{Poi}(2p(z^1+z^2))=1\right)\\
&\hspace{3.83cm} + 4p^2\P\left(\mathrm{Poi}(2(1-p)(z^1+z^2))+2\mathrm{Poi}(2p(z^1+z^2))\in\{0,1\}\right)\\
&\to 4p^2 = \frac{4}{9},
\end{align*}
\begin{align*}
\frac{\partial f^2}{\partial z^1}(z^1,z^2) &= 4(1-p)\P\left(\mathrm{Poi}(2(z^1+z^2))=1\right) \to 0,\\
\frac{\partial f^2}{\partial z^2}(z^1,z^2) &= 4(1-p)\P\left(\mathrm{Poi}(2(z^1+z^2))=1\right) - 1 \to -1.
\end{align*}
The directional derivatives $D_{\bm{v}}f^1(\bm{0})$ and $D_{\bm{v}}f^2(\bm{0})$ thus exist for every $\bm{v}\in\R_+^V$. Choose then for example $\bm{v}=(v^1,v^2)=(1,1)$ such that $D_{\bm{v}}f^1(\bm{0})=-1/9$ and $D_{\bm{v}}f^2(\bm{0})=-1$. From Lemma \ref{3:lem:sufficient:criteria:z:star} we thus derive that $\bm{z}^*=\bm{0}$ and hence $P_0=\{\bm{0}\}$. This allows us to apply Theorem \ref{3:thm:resilience} and hence the modified system is indeed resilient. Again this can be validated numerically. On the same skeleton as for the previous simulation but with random edge-weights as described above the simulated final default fractions are now all within the interval $[0.11\%,0.63\%]$ with an average of $0.20\%$. The system is hence indeed of a resilient nature.
\end{example}
Although Example \ref{3:ex:neighbor:dependent} is too simple to model a real financial network, it still shows that counterparty-dependent exposures may have a significant impact on the stability of the system. In general, it is also possible that they increase stability of the system, however.

\section{Proofs}\label{3:sec:proofs}
In this section, we provide the proofs of 
our results in Sections \ref{3:sec:asymptotic:results}, \ref{3:sec:resilience} and \ref{3:sec:capital:block:model}. Theorem \ref{3:thm:general:weights} will be proved in two steps. At this the underlying ideas are similar to \cite{Detering2015a} but at a considerable number of steps novel methods have to be used and we will particularly discuss them in detail.  We use the notation
\[ [\bm{a},\bm{b}]:=\bigcap_{(r,\alpha,\beta)\in V}\{\bm{z}\in\R^V\,:\,a^{r,\alpha,\beta}\leq z^{r,\alpha,\beta}\leq b^{r,\alpha,\beta}\} \]
for the cuboid spanned by the vectors $\bm{a}$ and $\bm{b}$ in $\R^V$ in the following. Further, let $\bm{\zeta}\in\R_{+,0}^V$ be defined by $\zeta^{r,\alpha,\beta}:=\E[W^{+,r,\alpha}\1\{A=\beta\}]$.

\subsection{Proofs of Lemmas \ref{3:lem:existence:hatz} and \ref{3:lem:sufficient:criteria:z:star}}\label{3:ssec:proofs:lemmas}
\begin{proof}[Proof of Lemma \ref{3:lem:existence:hatz}]
Existence of a smallest joint root $\hat{\bm{z}}\in[\bm{0},\bm{\zeta}]$ is ensured by the Knaster-Tarski fixed point theorem. We now construct a joint root $P_0\ni\bar{\bm{z}}\leq\hat{\bm{z}}$ which shows that $\hat{\bm{z}}=\bar{\bm{z}}\in P_0$, in particular:
It holds $f^{r,\alpha,\beta}(\hat{\bm{z}})=0$ for all $(r,\alpha,\beta)\in V$ and then $f^{r,\alpha,\beta}(\bm{z})\leq0$ for all $\hat{\bm{z}}\geq\bm{z}\in\R_{+,0}^V$ such that $z^{r,\alpha,\beta}=\hat{z}^{r,\alpha,\beta}$ 
by monotonicity of $f^{r,\alpha,\beta}$ from Lemma \ref{3:lem:properties:f}. 
Consider then the following sequence $(\bm{z}_n)_{n\in\N}\subset\R_{+,0}^V$:
\begin{itemize}
\item $\bm{z}_0=\bm{0}$
\item $\bm{z}_1=(z_1^{1,1,1},0,\ldots,0)$, where $z_1^{1,1,1}\geq0$ is the smallest possible value such that \mbox{$f^{1,1,1}(\bm{z}_1)=0$.} It is possible to find such $z_1^{1,1,1}$ by the intermediate value theorem since $f^{1,1,1}$ is continuous, $f^{1,1,1}(\bm{0})\geq 0$ and $f^{1,1,1}(\hat{z}^{1,1,1},0,\ldots,0)\leq0$. By Lemma \ref{3:lem:properties:f}, it then holds $f^{r,\alpha,\beta}(\bm{z}_1)\geq f^{r,\alpha,\beta}(\bm{0})\geq0$ for all $(1,1,1)\neq(r,\alpha,\beta)\in V$. In particular, $\bm{z}_1\in P_0$.
\item $\bm{z}_2=\bm{z}_1+(0,z_2^{1,1,2},0,\ldots,0)$, where $z_2^{1,1,2}\geq0$ is the smallest value such that $f^{1,1,2}(\bm{z}_2)=0$. Again it is possible to find such $z_2^{1,1,2}$ by the intermediate value theorem since $f^{1,1,2}$ is continuous, $f^{1,1,2}(\bm{z}_1)\geq0$ and $f^{1,1,2}(\bm{z}_1+(0,\hat{z}^{1,1,2},0,\ldots,0))\leq0$. Since $\bm{z}_1\in P_0$, by Lemma \ref{3:lem:properties:f} it then holds $f^{r,\alpha,\beta}(\bm{z}_2)\geq f^{r,\alpha,\beta}(\bm{z}_1)\geq0$ for all $(1,1,2)\neq(r,\alpha,\beta)\in V$. In particular, $\bm{z}_2\in P_0$.
\item $\bm{z}_i$, $i\in\{3,\ldots,RT^2\}$, are found analogously, changing only the corresponding coordinate.
\item $\bm{z}_{RT^2+1}=\bm{z}_{RT^2}+(z_{RT^2+1}^{1,1,1}-z_{RT^2}^{1,1,1},0,\ldots,0)$, where $z_{RT^2+1}^{1,1,1}\geq z_{RT^2}^{1,1,1}$ is the smallest value such that $f^{1,1,1}(\bm{z}_{RT^2+1})=0$, which is again possible by the intermediate value theorem. In particular, it still holds $z^{1,1,1}_{RT^2+1}\leq\hat{z}^{1,1,1}$. As before also $\bm{z}_{RT^2+1}\in P_0$.
\item Continue for $\bm{z}_i$, $i\geq RT^2+2$.
\end{itemize}
The sequence $(\bm{z}_n)_{n\in\N}$ constructed this way has the following properties: It is non-decreasing in each coordinate and $(\bm{z}_n)_{n\in\N}\subset P_0$. Further, it is bounded inside 
$[\bm{0},\hat{\bm{z}}]$. Hence by monotone convergence, each coordinate of $\bm{z}_n$ converges and so $\bar{\bm{z}}=\lim_{n\to\infty}\bm{z}_n$ exists. Now suppose there is $(r,\alpha,\beta)\in V$ such that $f^{r,\alpha,\beta}(\bar{\bm{z}})>0$. By continuity of $f^{r,\alpha,\beta}$ then also $f^{r,\alpha,\beta}(\bm{z}_n)>\epsilon$ for some $\epsilon>0$ and $n$ large enough. This, however, is in contradiction with the construction of the sequence $(\bm{z}_n)_{n\in\N}$ since $f^{r,\alpha,\beta}(\bm{z}_n)=0$ in every $RT^2$-th step. Hence $f^{r,\alpha,\beta}(\bar{\bm{z}})\leq 0$ for all $(r,\alpha,\beta)\in V$. Also $\bar{\bm{z}}\in P_0$, however, since this is a closed set
. Hence $f^{r,\alpha,\beta}(\bar{\bm{z}})\geq0$ for all $(r,\alpha,\beta)\in V$ and altogether this shows that $\bar{\bm{z}}$ is a joint root of all functions $f^{r,\alpha,\beta}$, $(r,\alpha,\beta)\in V$.


Now turn to the proof that $\bm{z}^*\in P_0$ and it is a joint root of all functions $f^{r,\alpha,\beta}$, $(r,\alpha,\beta)\in V$: First define the following sets for each $\epsilon>0$:
\[ P(\epsilon) := \bigcap_{(r,\alpha,\beta)\in V}\{\bm{z}\in\R_{+,0}^V\,:\,f^{r,\alpha,\beta}(\bm{z})\geq -\epsilon\}
\]
Further denote by $P_0(\epsilon)$ 
the connected component 
of $\bm{0}$ in $P(\epsilon)$
. By the same procedure as for $\hat{\bm{z}}$ above, we now derive that there exists a smallest (componentwise) point $\hat{\bm{z}}(\epsilon)\in P_0(\epsilon)$ such that $f^{r,\alpha,\beta}(\hat{\bm{z}}(\epsilon))=-\epsilon$ for all $(r,\alpha,\beta)\in V$. Clearly, $\hat{\bm{z}}(\epsilon)$ is non-decreasing in $\epsilon$ (componentwise) and hence 
$\tilde{\bm{z}}:=\lim_{\epsilon\to0+}\hat{\bm{z}}(\epsilon)$ exists (we will show 
that $\tilde{\bm{z}}=\bm{z}^*$ in fact).

Now by monotonicity of $P_0(\epsilon)$, we derive that $\hat{\bm{z}}(\delta)\in P_0(\delta)\subseteq P_0(\epsilon)$ for all $\delta\leq \epsilon$. Since $P_0(\epsilon)$ is a closed set, it must thus hold that also $\tilde{\bm{z}}=\lim_{\delta\to0+}\hat{\bm{z}}(\delta)\in P_0(\epsilon)$ for all $\epsilon>0$ and in particular, $\tilde{\bm{z}}\in\bigcap_{\epsilon>0}P_0(\epsilon)$. Further, by continuity of $f^{r,\alpha,\beta}$, $(r,\alpha,\beta)\in V$, we derive that $\bigcap_{\epsilon>0}P_0(\epsilon)\subseteq\bigcap_{\epsilon>0}P(\epsilon)\subseteq P$. Moreover, $\bigcap_{\epsilon>0}P_0(\epsilon)$ is the intersection of a chain of connected, compact sets in the Hausdorff space $\R^V$ and it is hence a connected, compact set itself. Since it further contains $\bm{0}$, we can then conclude that $\bigcap_{\epsilon>0}P_0(\epsilon)\subseteq P_0$ and thus $\tilde{\bm{z}}\in P_0$.

We now want to show that $\bm{z}\leq\tilde{\bm{z}}$ componentwise for arbitrary $\bm{z}\in P_0$. 
This clearly proves $\tilde{\bm{z}}=\bm{z}^*$. 
It thus suffices to show $P_0\subset[\bm{0},\hat{\bm{z}}(\epsilon)]$. Then $\bm{z}\leq\hat{\bm{z}}(\epsilon)$ and $\bm{z}\leq\lim_{\epsilon\to0+}\hat{\bm{z}}(\epsilon)=\tilde{\bm{z}}$. Hence assume that $P_0\not\subset[\bm{0},\hat{\bm{z}}(\epsilon)]$. By connectedness of $P_0$ we find $\bar{\bm{z}}\in P_0$ with $\bar{z}^{r,\alpha,\beta}\leq\hat{z}^{r,\alpha,\beta}(\epsilon)$ for all $(r,\alpha,\beta)\in V$ and equality for at least one coordinate. W.\,l.\,o.\,g.~let this coordinate be $(1,1,1)$. By monotonicity of $f^{1,1,1}$ with respect to $z^{r,\alpha,\beta}$ for every $(r,\alpha,\beta)\neq(1,1,1)$, we thus derive that
\[ f^{1,1,1}(\bar{\bm{z}})\leq f^{1,1,1}(\hat{\bm{z}}(\epsilon)) = -\epsilon. \]
However, we also assumed that $\bar{\bm{z}}\in P_0$ and hence $f^{1,1,1}(\bar{\bm{z}})\geq0$, a contradiction.

Finally, we obtain that $f^{r,\alpha,\beta}(\bm{z}^*)=f^{r,\alpha,\beta}(\tilde{\bm{z}}) = \lim_{\epsilon\to0+}f^{r,\alpha,\beta}(\hat{\bm{z}}(\epsilon)) = \lim_{\epsilon\to0+}(-\epsilon)=0$, by continuity of $f^{r,\alpha,\beta}$
. Hence $\bm{z}^*$ is in fact a joint root of all the functions $f^{r,\alpha,\beta}$, $(r,\alpha,\beta)\in V$.
\end{proof}
\begin{proof}[Proof of Lemma \ref{3:lem:sufficient:criteria:z:star}]
Note that it is sufficient to construct a sequence $(\bm{z}_n)_{n\in\N}\subset\R_+^V$ such that $\lim_{n\to\infty}\bm{z}_n=\bar{\bm{z}}$ and $f^{r,\alpha,\beta}(\bm{z}_n)<0$ for all $(r,\alpha,\beta)\in V$, $n\in\N$. By monotonicity of $f^{r,\alpha\beta}$ from Lemma \ref{3:lem:properties:f} it then follows that $\bm{z}^*\leq\bm{z}_n$ 
and hence $\bm{z}^*\leq\bar{\bm{z}}$. If condition 1.~is satisfied, we get $\lim_{n\to\infty}nf^{r,\alpha,\beta}\left(\bar{\bm{z}}+n^{-1}\bm{v}\right) = D_{\bm{v}}f^{r,\alpha,\beta}(\bar{\bm{z}}) < 0$ and we can hence choose $\bm{z}_n:=\bar{\bm{z}}+n^{-1}\bm{v}$.

If condition 2.~is satisfied, note that by Fubini's theorem for $n>\Delta^{-1}$ we derive
\begin{align*}
f^{r,\alpha,\beta}\bigg(\bar{\bm{z}}+n^{-1}\bm{v}\bigg) &= \int_0^{n^{-1}} -v^{r,\alpha,\beta} + \sum_{r'\in[R]}\E\Bigg[W^{+,r,\alpha}\Bigg(\sum_{\beta'\in[T]}v^{r',\beta,\beta'}W^{-,r',\beta'}\Bigg) \1\{A=\beta\}\\
&\hspace*{0.345cm}\times\P\Bigg(\sum_{s\in[R]}s\mathrm{Poi}\Bigg(\sum_{\gamma\in[T]}W^{-,s,\gamma}\Big(\bar{z}^{s,\beta,\gamma}+\delta v^{s,\beta,\gamma}\Big)\Bigg)\hspace*{-0.75ex}\in\hspace*{-0.5ex}\{C-r',\ldots,C-1\}\Bigg)\Bigg]\dd\delta\\
&\leq -n^{-1}(1-\kappa)v^{r,\alpha,\beta}
\end{align*}
and 
$\limsup_{n\to\infty}nf^{r,\alpha,\beta}\left(\bar{\bm{z}}+n^{-1}\bm{v}\right) \leq (1-\kappa)v^{r,\alpha,\beta} < 0$. 
Thus choose $\bm{z}_n:=\bar{\bm{z}}+n^{-1}\bm{v}$ again.
\end{proof}

\subsection{Proof of the Main Result for Finitary Weights}\label{3:ssec:proof:main:finitary}
In this section, we consider the special case that the vertex weights $w_i^{r,\alpha,\beta}$ and capitals $c_i$ can only take values in a finite set. To formalize this, consider the following definition:

\begin{definition}[Finitary Regular Vertex Sequence]
A regular vertex sequence denoted by $(\bm{w}^-,\bm{w}^+,\bm{s},\bm{c},\bm{\alpha})$ is called finitary if there exist $J\in\N$, $\tilde{\bm{w}}_j^-\in\R_{+,0}^{[R]\times[T]}$, $\tilde{\bm{w}}_j^+\in\R_{+,0}^{[R]\times[T]}$ and $\tilde{s}_j\in\R_{+,0}$, $j\in[J]$, as well as $c_\text{max}\in\N_0$ such that for all $n\in\N$ and $i\in[n]$, there exists $j=j(n,i)\in[J]$ such that $\bm{w}_i^\pm(n)=\tilde{\bm{w}}_j^\pm$, $s_i=\tilde{s}_j$ and $c_i(n)\in[c_\text{max}]\cup\{0,\infty\}$.
\end{definition}
That is, in a finitary system there is a partition of the set of all institutions into $TJ(c_\text{max}+2)$ 
sets. In particular, in this case all weights $w_i^{\pm,r,\alpha}$ are bounded from above by some constant $\overline{w}\in\R_+$ and hence by dominated convergence we can compute the partial derivatives of $f^{r,\alpha,\beta}$:
\begin{align*}
\frac{\partial f^{r,\alpha,\beta}}{\partial z^{r',\alpha',\beta'}}(\hat{\bm{z}}) &= -\delta_{r,r'}\delta_{\alpha,\alpha'}\delta_{\beta,\beta'} + \delta_{\beta,\alpha'}\E\Bigg[ W^{+,r,\alpha}W^{-,r',\beta'} \1\{A=\beta\}\\
&\hspace{3cm}\times\P\Bigg(\sum_{s\in[R]}s\mathrm{Poi}\Bigg(\sum_{\gamma\in[T]}W^{-,s,\gamma}z^{s,\beta,\gamma}\Bigg)\in\{C-r',\ldots,C-1\}\Bigg)\Bigg],
\end{align*}
where $\delta_{a,b}:=\1\{a=b\}$. Hence for any vector $\bm{v}\in\R^V$, the directional derivative of $f^{r,\alpha,\beta}$ in direction $\bm{v}$ is given by the following continuous expression:
\begin{align*}
D_{\bm{v}} f^{r,\alpha,\beta}(\hat{\bm{z}})&= 
- v^{r,\alpha,\beta} + \sum_{r'\in[R]} \E\Bigg[W^{+,r,\alpha}\Bigg(\sum_{\beta'\in[T]}v^{r',\beta,\beta'}W^{-,r',\beta'}\Bigg)\1\{A=\beta\}\\
&\hspace{3.6cm}\times\P\Bigg(\sum_{s\in[R]}s\mathrm{Poi}\Bigg(\sum_{\gamma\in[T]}W^{-,s,\gamma}z^{s,\beta,\gamma}\Bigg)\in\{C-r',\ldots,C-1\}\Bigg)\Bigg]
\end{align*}
We can then prove the following asymptotic results for the final systemic damage in the network. 
\begin{proposition}\label{3:prop:finitary:weights}
Consider a financial system described by a finitary regular vertex sequence and let $\hat{\bm{z}}$ be the smallest joint root of the functions $f^{r,\alpha,\beta}$, $(r,\alpha,\beta)\in V$. Then it holds that $n^{-1}\mathcal{S}_n \geq g(\hat{\bm{z}}) + o_p(1)$. If additionally there exists 
$\bm{v}\in\R_+^V$ such that $D_{\bm{v}}f^{r,\alpha,\beta}(\hat{\bm{z}})<0$ for all $(r,\alpha,\beta)\in V$, then $n^{-1}\mathcal{S}_n = g(\hat{\bm{z}}) + o_p(1)$.
\end{proposition}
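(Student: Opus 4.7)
The plan is to adapt the sequential exposition method from \cite{Detering2015a} (also used in the proof of Theorem \ref{2:thm:threshold:model}) to the present multi-type, multi-weight setting. First I would set up a discrete-time death process in which, at each step, a single defaulted-but-unexposed bank is picked and its outgoing edges of every weight $r\in[R]$ and to every type $\alpha\in[T]$ are revealed simultaneously; the revealed impact is subtracted from the capital of each recipient, and any newly insolvent bank enters the queue. The finitary assumption partitions $[n]$ into the finitely many categories $(\tilde{\bm{w}}_j^-,\tilde{\bm{w}}_j^+,\tilde{s}_j,c,\alpha)$ and, together with the uniform upper bound $\overline{w}$ on all weights, allows the use of dominated convergence and standard Bernoulli concentration uniformly in these categories.

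Next I would introduce the vector-valued tracking process
\[ \bm{Z}_n(t) := \Bigl(n^{-1}\sum_{i\in\mathcal{D}(t),\,\alpha_i=\beta} w_i^{+,r,\alpha}\Bigr)_{(r,\alpha,\beta)\in V}, \]
i.e.\ the normalised total out-weight of type $(r,\alpha)$ carried by already-defaulted banks of type $\beta$. Using boundedness of the weights and multi-variate Poisson approximation for sums of independent Bernoullis, the number of edges of weight $r$ sent from defaulted $\beta$-banks to a still-solvent bank $j$ of type $\alpha$ concentrates around $\mathrm{Poi}\bigl(w_j^{-,r,\beta}Z_n^{r,\alpha,\beta}(t)\bigr)$. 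Summing default indicators over all categories and invoking a law of large numbers for the triangular-array of indicators (as in Examples \ref{2:example:1} and \ref{2:example:2}, generalised to the vectorial case), one obtains
\[ n^{-1}\!\!\sum_{i\in\mathcal{D}(t),\alpha_i=\beta}\!\!w_i^{+,r,\alpha}=\E\Bigl[W^{+,r,\alpha}\psi_C\bigl(\textstyle\sum_\gamma\! W^{-,1,\gamma}Z_n^{1,\beta,\gamma}(t),\ldots,\sum_\gamma\! W^{-,R,\gamma}Z_n^{R,\beta,\gamma}(t)\bigr)\1\{A=\beta\}\Bigr]+o_p(1) \]
uniformly in $t$, and an analogous identity produces the $g$-integrand. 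The cascade halts at the first $t^*$ at which the queue is empty, and the balance between defaulted-and-exposed mass and produced-default mass translates exactly into $f^{r,\alpha,\beta}(\bm{Z}_n(t^*))=o_p(1)$ for every $(r,\alpha,\beta)\in V$.

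For the lower bound I would argue that $\bm{Z}_n(t)$ is monotone non-decreasing in each coordinate and starts at $\bm{0}$, and that any joint zero approached by such a monotone path from $\bm{0}$ inside $P_0$ dominates $\hat{\bm{z}}$; indeed, this is precisely the content of the inductive coordinate-wise construction used in the proof of Lemma \ref{3:lem:existence:hatz}. Hence $\bm{Z}_n(t^*)\geq \hat{\bm{z}}+o_p(1)$ componentwise, and the monotonicity and continuity of $g$ (Lemma \ref{3:lem:properties:f}) give $n^{-1}\mathcal{S}_n\geq g(\hat{\bm{z}})+o_p(1)$. For the matching upper bound I would exploit the assumption $D_{\bm{v}}f^{r,\alpha,\beta}(\hat{\bm{z}})<0$: by the explicit formula for $D_{\bm{v}}f^{r,\alpha,\beta}$ derived at the start of this subsection (which is continuous on $\R_{+,0}^V$ under the finitary assumption), there exist $\delta,\kappa>0$ with $f^{r,\alpha,\beta}(\hat{\bm{z}}+\eta\bm{v})\leq -\kappa\eta$ for all $\eta\in(0,\delta]$ and all $(r,\alpha,\beta)$. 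Coupling the cascade with one in which an additional vanishingly small deterministic shock is added along $\bm{v}$ (in the spirit of the proof of Theorem \ref{2:thm:asymp:2}) and comparing with the $o_p(1)$ fluctuation bound above shows that $\bm{Z}_n(t^*)$ cannot exceed $\hat{\bm{z}}+\eta\bm{v}$ with non-vanishing probability, so $\bm{Z}_n(t^*)=\hat{\bm{z}}+o_p(1)$ and the two-sided conclusion follows.

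The main obstacle will be the uniform-in-time multi-variate concentration: in the one-dimensional setting one controls a single empirical process, whereas here $\bm{Z}_n$ lives in $\R_+^V$ and all $|V|=RT^2$ coordinates must be handled simultaneously. The finitary structure makes this tractable through an Azuma-Hoeffding bound on the Doob martingale associated with the sequential edge-exposure, combined with a union bound over the finitely many categories and coordinates; the bookkeeping for the multi-type Poisson approximation of incoming impact and for the monotone-path argument that keeps the trajectory inside $P_0$ until it first hits $\hat{\bm{z}}$ is the delicate part that will require the most care.
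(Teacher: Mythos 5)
Your overall skeleton (sequential exposure, vector-valued tracking process, monotone path for the lower bound, exploiting $D_{\bm{v}}f^{r,\alpha,\beta}(\hat{\bm{z}})<0$ for the upper bound) matches the paper, but two of your technical choices diverge substantively and both need more care than you give them.

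\paragraph{Concentration: Wormald's theorem vs.\ Azuma--Hoeffding.} The paper does not invoke a martingale concentration bound directly; it applies \emph{Wormald's differential equations method} \cite{wormald1995} to the tuple of tracked quantities $h(t)=(u^\alpha(t),c^\alpha_{j,m,l}(t),s(t),w^{r,\alpha,\beta}(t))$. This is a genuinely different tool: the one-step conditional drift of $\bm{Z}_n(t)$ depends on the \emph{current state} $\bm{Z}_n(t)$, so $\bm{Z}_n(t)$ is not a martingale and Azuma--Hoeffding on the Doob martingale of $\bm{Z}_n(T)$ only controls the martingale fluctuations, not the feedback between the trajectory and its own drift. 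What you need, and what Wormald's theorem provides, is that the trajectory tracks the solution of the ODE system \eqref{3:eqn:diff:gamma}--\eqref{3:eqn:diff:mu} uniformly on $[0,\hat\tau-\epsilon]$. You could in principle re-derive this via a discrete Gr\"onwall argument combined with martingale concentration, but that is essentially re-proving Wormald, and your sketch leaves out the Gr\"onwall step entirely; as written the Azuma bound by itself does not yield the approximation identity you state.

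\paragraph{The upper bound argument is not correct as proposed.} Your plan is to couple with an additionally shocked system and ``compare with the $o_p(1)$ fluctuation bound above.'' That fluctuation bound holds only for $t/n<\hat\tau$; Wormald's theorem (or any such concentration) says nothing at or past the stopping time $t^*\ge\hat\tau n$, so the conclusion ``$\bm{Z}_n(t^*)$ cannot exceed $\hat{\bm{z}}+\eta\bm{v}$'' does not follow. What the paper actually does is different: it stops the Wormald analysis at a time $\tau_\epsilon<\hat\tau$ at which $\mu^{r,\alpha,\beta}(\tau_\epsilon)\le \epsilon v^{r,\alpha,\beta}$, and then uses the negativity of $D_{\bm{v}}f^{r,\alpha,\beta}$ near $\hat{\bm{z}}$ to show that the remaining infected mass after step $\lfloor\tau_\epsilon n\rfloor$ is a sub-critical branching-type quantity — the conditional expected out-weight of new defaults per round contracts geometrically with factor $c<1$ — and hence $n^{-1}R_n=o_p(1)$. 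The coupling you propose is in fact the method of Theorem \ref{3:thm:finitary:weights} (used when $\hat{\bm z}\neq\bm z^*$), not of Proposition \ref{3:prop:finitary:weights}, and invoking it here is circular: analyzing the additionally shocked system still requires a tail estimate at its own stopping time, so you would be stuck with the same boundary problem. You should replace this step with the geometric contraction estimate on $\E[T^{r,\alpha,\beta}_l\mid h(\lfloor\tau_\epsilon n\rfloor)]$.
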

See Figure \ref{3:fig:one:joint:root} for an example where such $\bm{v}\in\R_+^V$ exists respectively Figure \ref{3:fig:two:joint:roots} for an example where it does not. Theorem \ref{3:thm:finitary:weights} below will analyze systems of the latter type as well.

\begin{proof}
We begin by proving the lower bound: As in \cite{Detering2015a} and Chapter \ref{chap:systemic:risk} we switch to a sequential default contagion process. The idea is to collect defaulted institutions and instead of exposing them all at once (as in \eqref{3:eqn:default:contagion}), only select one defaulted institution uniformly at random in each round $t\geq1$ and expose it to its neighbors (draw edges). Using the finitary assumption, it is then sufficient to keep track of the following sets and quantities during the default process:

\begin{align*}
U^\alpha(t) &:= \left\{i\in[n]\,:\,\alpha_i=\alpha\text{ and }i\text{ is defaulted but unexposed at time }t\right\},\\
S_{j,m,l}^\alpha(t) &:= \left\{ i\in[n]\,:\,\alpha_i=\alpha, j(n,i)=j, c_i=m\text{ and }i\text{ has total exposure of }l\text{ at time }t\right\},\\
D_j(t) &:= \{i\in[n]\,:\,j(n,i)=j\text{ and }i\text{ is defaulted at time }t\},
\end{align*}
\begin{align*}
u^\alpha(t) &:= \left\vert U^\alpha(t)\right\vert, & c_{j,m,l}^\alpha(t) &:= \left\vert S_{j,m,l}^\alpha(t)\right\vert, & s(t) &:= \sum_{j\in[J]}\tilde{s}_j \vert D_j(t)\vert, & 
w^{r,\alpha,\beta}(t) &:= \sum_{i\in U^\beta(t)}w_i^{+,r,\alpha}.
\end{align*}

\noindent Let $h(t):=(u^\alpha(t),c_{j,m,l}^\alpha(t),s(t),w^{r,\alpha,\beta}(t))$ the vector of all tracked quantities at time $t$ (for the sake of a better readability we omitted the index sets) and $H(t)=(h(s))_{s\leq t}$. Then for $n$ large enough such that all $p_{i,j}^r<R^{-1}$ (possible by finitary weights), the expected evolution of the system at 
time $t$ is 

\begin{align*}
&\E\left[\left.c_{j,m,l}^\alpha(t+1)-c_{j,m,l}^\alpha(t)\,\right\vert\, H(t)\right]\\
&\hspace{1.5cm}= \frac{1}{\sum_{\beta\in[T]}u^\beta(t)}\sum_{\beta\in[T]}\sum_{v\in U^\beta(t)} \sum_{r\in[R]}\left(\sum_{i\in S_{j,m,l-r}^\alpha(t)}\frac{w_v^{+,r,\alpha}w_i^{-,r,\beta}}{n} - \sum_{i\in S_{j,m,l}^\alpha(t)}\frac{w_v^{+,r,\alpha}w_i^{-,r,\beta}}{n}\right)\\
&\hspace{1.5cm}= \sum_{r\in[R]}\frac{\sum_{\beta\in[T]}w^{r,\alpha,\beta}(t) \tilde{w}_j^{-,r,\beta}}{\sum_{\beta\in[T]}u^\beta(t)}\frac{c_{j,m,l-r}^\alpha(t)-c_{j,m,l}^\alpha(t)}{n}.
\end{align*}
\begin{align*}
\E\left[\left.s(t+1)-s(t)\,\right\vert\, H(t)\right] &= \sum_{j\in[J]}\tilde{s}_j \sum_{\alpha\in[T]} \sum_{r\in[R]} \sum_{m=0}^{c_\text{max}} \sum_{l=m-r}^{m-1} \frac{\sum_{\beta\in[T]}w^{r,\alpha,\beta}(t) \tilde{w}_j^{-,r,\beta}}{\sum_{\beta\in[T]}u^\beta(t)} \frac{c_{j,m,l}^\alpha(t)}{n},
\end{align*}
\begin{align*}
\E\left[\left.u^\alpha(t+1)-u^\alpha(t)\,\right\vert\, H(t)\right] &= -\frac{u^\alpha(t)}{\sum_{\beta\in[T]}u^\beta(t)} + \sum_{j,m} \sum_{r\in[R]}\frac{\sum_{\beta\in[T]}w^{r,\alpha,\beta}(t) \tilde{w}_j^{-,r,\beta}}{\sum_{\beta\in[T]}u^\beta(t)}\sum_{l=m-r}^{m-1}\frac{c_{j,m,l}^\alpha(t)}{n},
\end{align*}
\begin{align*}
&\E\left[\left.w^{r,\alpha,\beta}(t+1)-w^{r,\alpha,\beta}(t)\,\right\vert\, H(t)\right]\\
&\hspace{3.5cm}= -\frac{w^{r,\alpha,\beta}(t)}{\sum_{\gamma\in[T]}u^\gamma(t)} + \sum_{j,m} \tilde{w}_j^{+,r,\alpha} \sum_{s\in[R]}\frac{\sum_{\gamma\in[T]}w^{s,\beta,\gamma}(t)\tilde{w}_j^{-,s,\gamma}}{\sum_{\gamma\in[T]}u^\gamma(t)}\sum_{l=m-s}^{m-1}\frac{c_{j,m,l}^\beta(t)}{n}.
\end{align*}

\noindent The expressions on the right-hand side are all Lipschitz functions of $u^\alpha(t),w^{r,\alpha,\beta}(t),c_{j,m,l}^\alpha(t)$ as long as $\sum_{\beta\in[T]}u^\beta(t)$ is bounded away from zero. All the remaining conditions in Wormald’s theorem \cite{wormald1995} can be checked by similar means as in \cite{Detering2015a}. We can thus uniformly approximate 
\begin{align}
n^{-1}c_{j,m,l}^\alpha(t) &= \gamma_{j,m,l}^\alpha(n^{-1}t) + o_p(1),\label{3:eqn:approx:gamma}\\
n^{-1}s(t) &= \sigma(n^{-1}t) + o_p(1),\label{3:eqn:approx:delta}\\
n^{-1}u^\alpha(t) &= \nu^\alpha(n^{-1}t) + o_p(1),\label{3:eqn:approx:nu}\\
n^{-1}w^{r,\alpha,\beta}(t) &= \mu^{r,\alpha,\beta}(n^{-1}t) + o_p(1),\label{3:eqn:approx:mu}
\end{align}
where the functions $\gamma_{j,m,l}^\alpha(\tau)$, $\sigma(\tau)$, $\nu^\alpha(\tau)$ and $\mu^{r,\alpha,\beta}(\tau)$ are defined as the unique solution of
\begin{align}
\frac{\dd}{\dd \tau}\gamma_{j,m,l}^\alpha(\tau) &= \sum_{r\in[R]}\frac{\sum_{\beta\in[T]}\mu^{r,\alpha,\beta}(\tau)\tilde{w}_j^{-,r,\beta}}{\sum_{\beta\in[T]}\nu^\beta(\tau)}\left(\gamma_{j,m,l-r}^\alpha(\tau)-\gamma_{j,m,l}^\alpha(\tau)\right),\label{3:eqn:diff:gamma}\\
\frac{\dd}{\dd \tau}\sigma(\tau) &= \sum_{j\in[J]}\tilde{s}_j \sum_{\alpha\in[T]} \sum_{r\in[R]} \sum_{m=0}^{c_\text{max}} \sum_{l=m-r}^{m-1}\frac{\sum_{\beta\in[T]}\mu^{r,\alpha,\beta}(\tau)\tilde{w}_j^{-,r,\beta}}{\sum_{\beta\in[T]}\nu^\beta(\tau)}\gamma_{j,m,l}^\alpha(\tau),\label{3:eqn:diff:delta}\\
\frac{\dd}{\dd \tau} \nu^\alpha(\tau) &= -\frac{\nu^\alpha(\tau)}{\sum_{\beta\in[T]}\nu^\beta(\tau)}+\sum_{j,m}\sum_{r\in[R]}\frac{\sum_{\beta\in[T]}\mu^{r,\alpha,\beta}(\tau)\tilde{w}_j^{-,r,\beta}}{\sum_{\beta\in[T]}\nu^\beta(\tau)}\sum_{l=m-r}^{m-1}\gamma_{j,m,l}^\alpha(\tau),\label{3:eqn:diff:nu}\\
\frac{\dd}{\dd \tau}\mu^{r,\alpha,\beta}(\tau) &= -\frac{\mu^{r,\alpha,\beta}(\tau)}{\sum_{\gamma\in[T]}\nu^\gamma(\tau)}+\sum_{j,m}\tilde{w}_j^{+,r,\alpha}\sum_{s\in[R]}\frac{\sum_{\gamma\in[T]}\mu^{s,\alpha,\gamma}(\tau)\tilde{w}_j^{-,s,\gamma}}{\sum_{\gamma\in[T]}\nu^\gamma(\tau)}\sum_{l=m-r}^{m-1}\gamma_{j,m,l}^\alpha(\tau).\label{3:eqn:diff:mu}
\end{align}
Approximations \eqref{3:eqn:approx:gamma}-\eqref{3:eqn:approx:mu} hold uniformly for $t/n<\hat{\tau} := \inf\{\tau\in\R_{+,0}\,:\,\sum_{\beta\in[T]}\nu^\beta(\tau)=0\}$. For $z^{r,\alpha,\beta}(\tau) := \int_0^\tau\mu^{r,\alpha,\beta}(s)/\sum_{\gamma\in[T]}\nu^\gamma(s)\dd s$, an implicit solution of \eqref{3:eqn:diff:gamma}-\eqref{3:eqn:diff:mu} is given by
\begin{align*}
\gamma_{j,m,l}^\alpha(\tau) &= \P(\bm{W}^\pm=\tilde{\bm{w}}_j^\pm, C=m, A=\alpha) \P\Bigg(\sum_{s\in[R]}s\mathrm{Poi}\Bigg(\sum_{\beta\in[T]}\tilde{w}_j^{-,s,\beta}z^{s,\alpha,\beta}(\tau)\Bigg)=l\Bigg),\\
\sigma(\tau) &= \E\Bigg[S\P\Bigg(\sum_{s\in[R]}s\mathrm{Poi}\Bigg(\sum_{\beta\in[T]}W^{-,s,\beta}z^{s,\alpha,\beta}(\tau)\Bigg)\geq C\Bigg)\1\{A=\alpha\}\Bigg],\\
\nu^\alpha(\tau) &= \E\Bigg[\P\Bigg(\sum_{s\in[R]}s\mathrm{Poi}\Bigg(\sum_{\beta\in[T]}W^{-,s,\beta}z^{s,\alpha,\beta}(\tau)\Bigg)\geq C\Bigg)\1\{A=\alpha\}\Bigg] - \int_0^\tau\frac{\nu^\alpha(s)}{\sum_{\beta\in[T]}\nu^\beta(s)},
\end{align*}
\begin{align*}
\mu^{r,\alpha,\beta}(\tau) &= \E\Bigg[W^{+,r,\alpha}\P\Bigg(\sum_{s\in[R]}s\mathrm{Poi}\Bigg(\sum_{\gamma\in[T]}W^{-,s,\gamma}z^{s,\beta,\gamma}(\tau)\Bigg)\geq C\Bigg)\1\{A=\beta\}\Bigg] - z^{r,\alpha,\beta}(\tau).
\end{align*}
In particular, note that $\sigma(\tau)=g(\bm{z}(\tau))$ and $\mu^{r,\alpha,\beta}(\tau) = f^{r,\alpha,\beta}(\bm{z}(\tau))$. Thus for $\tau<\hat{\tau}$, it holds $f^{r,\alpha,\beta}(\bm{z}(\tau)) = n^{-1}w^{r,\alpha,\beta}(\lfloor\tau n\rfloor) + o_p(1) \geq 0 + o_p(1)$ 
and by letting $n\to\infty$, it follows $f^{r,\alpha,\beta}(\bm{z}(\tau))\geq0$. 
By continuity of $\bm{z}(\tau)$ and $\bm{z}(0)=0$, hence $\bm{z}(\tau)\in P_0$. 
Further,
\[ f^{r,\alpha,\beta}(\bm{z}(\tau)) = \mu^{r,\alpha,\beta}(\tau) = n^{-1}w^{r,\alpha,\beta}(\lfloor\tau n\rfloor) + o_p(1) \leq n^{-1} \overline{w} u^\beta(\lfloor\tau n\rfloor) + o_p(1) = \overline{w}\nu^\beta(\tau) + o_p(1) \]
and as $n\to\infty$, $f^{r,\alpha,\beta}(\bm{z}(\tau)) \leq \overline{w} \nu^\beta(\tau)$. 

As $\tau\to\hat{\tau}$, $\sum_{\beta\in[T]}\nu^\beta(\tau)\to0$ and hence by continuity of $\bm{z}(\tau)$, $f^{r,\alpha,\beta}(\bm{z}(\hat{\tau}))=0$ for all $(r,\alpha,\beta)\in V$. Again by continuity of $\bm{z}(\tau)$ and closedness of $P_0$, we then conclude that $\bm{z}(\hat{\tau})\geq\hat{\bm{z}}$. 
In particular, using continuity of $\sigma$
\begin{equation}\label{3:eqn:tau:hat}
g(\hat{\bm{z}}) \leq g(\bm{z}(\hat{\tau})) = \lim_{\tau\to\hat{\tau}}g(\bm{z}(\tau)) =  \lim_{\tau\to\hat{\tau}}\sigma(\tau) = \sigma(\hat{\tau}).
\end{equation}
Let now $\hat{t}$ denote the first time that $\sum_{\alpha\in[T]}u^\alpha(t)=0$ (i.\,e.~the number of steps until the contagion process stops). We show that $\hat{t}/n\geq\hat{\tau}+o_p(1)$. Define $X_n:=(\lfloor\hat{\tau}n\rfloor\wedge\hat{t})/n-\hat{\tau}$. Then $\hat{t}/n\geq\hat{\tau}+X_n$. Further, for $\epsilon>0$ and $n$ large enough such that $\hat{\tau}-\lfloor\hat{\tau} n\rfloor/n\leq\epsilon$, we obtain
\[ \P(\vert X_n\vert>\epsilon) = \P(\hat{\tau}-\hat{t}/n>\epsilon) \leq \P\Bigg(\sum_{\alpha\in[T]}\nu^\alpha(\hat{t}/n)>\frac{1}{2}\min_{\tau\in[0,\hat{\tau}-\epsilon]}\sum_{\alpha\in[T]}\nu^\alpha(\tau) , \hat{t}/n<\hat{\tau}\Bigg), \]
using continuity of $\sum_{\alpha\in[T]}\nu^\alpha(\tau)$. Let now $(Y_n)_{n\in\N}$ such that $\left\vert \sum_{\alpha\in[T]}u^\alpha(t)/n-\nu^\alpha(t/n)\right\vert\leq Y_n$ and $Y_n=o_p(1)$ (existence of $(Y_n)_{n\in\N}$ ensured by \eqref{3:eqn:approx:nu}). Since $\sum_{\alpha\in[T]}u^\alpha(\hat{t})=0$, we conclude that
\[ \P(\vert X_n\vert>\epsilon)\leq\P\Bigg(Y_n>\frac{1}{2}\min_{\tau\in[0,\hat{\tau}-\epsilon]}\sum_{\alpha\in[T]}\nu^\alpha(\tau)\Bigg)\to0,\quad\text{as }n\to\infty. \]
But then by \eqref{3:eqn:diff:delta}, for arbitrary $\epsilon>0$,
\[ n^{-1}\mathcal{S}_n = n^{-1}s(\hat{t}) \geq n^{-1}s\left( \hat{t} \wedge n(1-\epsilon)\hat{\tau} \right) = \sigma\left(\frac{\hat{t}}{n} \wedge (1-\epsilon)\hat{\tau}\right) + o_p(1) = \sigma\left((1-\epsilon)\hat{\tau}\right) + o_p(1), \]
where for the last equality we used $\hat{t}/n\geq\hat{\tau}+o_p(1)$. Now letting $\epsilon\to0$, using continuity of $\sigma$ and combining with \eqref{3:eqn:tau:hat} shows the lower bound.


In order to prove the second part, we first want to show that the existence of $\bm{v}$ implies that in fact $\bm{z}(\hat{\tau})=\hat{\bm{z}}$. To this end, assume that $\bm{z}(\hat{\tau})\neq\hat{\bm{z}}$. Then there exists $(r,\alpha,\beta)\in V$ and $\delta>0$ such that $z^{r,\alpha,\beta}(\hat{\tau})>\hat{z}^{r,\alpha,\beta}+\delta v^{r,\alpha,\beta}$. Without loss of generality assume that $z^{r,\alpha,\beta}(\tau)$ is the first coordinate that reaches $\hat{z}^{r,\alpha,\beta}+\delta v^{r,\alpha,\beta}$, that is there exists $\tau_\delta\in[0,\hat{\tau}]$ such that $\bm{z}(\tau_\delta)\leq \hat{\bm{z}}+\delta\bm{v}$ componentwise and $z^{r,\alpha,\beta}(\tau_\delta)=\hat{z}^{r,\alpha,\beta}+\delta v^{r,\alpha,\beta}$. But by $D_{\bm{v}}f^{r,\alpha,\beta}(\hat{\bm{z}})<0$ and continuity of $D_{\bm{v}}f^{r,\alpha,\beta}(\bm{z})$, we then derive for $\delta>0$ small enough that $0 > f^{r,\alpha,\beta}(\hat{\bm{z}}+\delta\bm{v}) \geq f^{r,\alpha,\beta}(\bm{z}(\tau_\delta))$, where we used monotonicity of $f^{r,\alpha,\beta}$ from Lemma \ref{3:lem:properties:f}. This contradicts that $f^{r,\alpha,\beta}(\bm{z}(\tau))\geq0$ for all $\tau\in[0,\hat{\tau}]$ and hence it must hold that $\bm{z}(\hat{\tau})=\hat{\bm{z}}$. In particular, 
$g(\hat{\bm{z}})=\sigma(\hat{\tau})$ (cf.~\eqref{3:eqn:tau:hat}).

The difficulty in the following is that the system is only described by the functions $\gamma_{j,m,l}^\alpha(\tau)$, $\sigma(\tau)$, $\nu^\alpha(\tau)$ and $\mu^{r,\alpha,\beta}(\tau)$ as long as $\tau<\hat{\tau}$, the first time at which $\sum_{\alpha\in[T]}\nu^\alpha(\tau)=0$. Wormald's theorem makes no statement about the system at or after $\hat{\tau}$, however. The idea is hence the following: We let $\tau_\epsilon$ be the first time at which $\mu^{r,\alpha,\beta}(\tau)\leq v^{r,\alpha,\beta}\epsilon$ for all $(r,\alpha,\beta)\in V$ and choose a sequence $(\epsilon_n)_{n\in\N}\subset\R_+$ such that $\epsilon_n\to0$ as $n\to\infty$. We then consider the cascade process as before for the first $\lfloor\tau_{\epsilon_n}n\rfloor$ steps and we show that the number of remaining defaults $R_n$ divided by $n$ converges to $0$ in probability as $n\to\infty$. In particular, this will show that
\[ n^{-1}\vert\mathcal{S}_n\vert = n^{-1}s\left(\hat{t}\right) = n^{-1}s\left(\lfloor\tau_{\epsilon_n}n\rfloor+R_n\right) \leq \sigma\left(\tau_{\epsilon_n}\right) + o_p(1) \leq \sigma\left(\hat{\tau}\right) + o_p(1) = g(\hat{\bm{z}}) + o_p(1). \]
In order to show $n^{-1}R_n=o_p(1)$, we will expose the defaulted banks round by round as in \eqref{3:eqn:default:contagion}, i.\,e.~we expose the banks in $\bigcup_{\alpha\in[T]}U^\alpha(\lfloor\tau_{\epsilon_n}n\rfloor)$ at once and so on. However, banks with 
$w_i^{+,r,\alpha}=0$ for all $r\in[R]$ and $\alpha\in[T]$ will never infect any new banks. Thus, we only need to consider banks with $\sum_{r\in[R]}\sum_{\alpha\in[T]}w_i^{+,r,\alpha}>0$ in the following. Since we are in a finitary setting, this means that there exists $w_0>0$ such that $\sum_{r\in[R]}\sum_{\alpha\in[T]}w_i^{+,r,\alpha}\geq w_0$ for all banks. Taking into account also banks with total out-weight of \mbox{zero only causes an extra bounded factor for $R_n$}.

For each solvent bank at step $\lfloor\tau_{\epsilon_n}n\rfloor$ there are two possible ways to default: Either there is one exposure to a defaulted bank that is larger than the remaining capital at step $\lfloor\tau_{\epsilon_n}n\rfloor$ (the bank defaults \emph{directly}) or there are at least two exposures to defaulted banks that add up to an amount larger than the remaining capital (the bank defaults \emph{indirectly}). Therefore, for $\alpha\in[T]$ and $l\geq1$ we define the following sets:

\begin{align*}
\mathcal{D}_l^\alpha &= \mathcal{D}_l^\alpha(\tau_{\epsilon_n}) = \{i\in[n]\,:\,\alpha_i=\alpha\text{ and }i\text{ defaults directly in the }l\text{-th round after step }\lfloor\tau_{\epsilon_n}n\rfloor\}\\
\mathcal{I}_l^\alpha &= \mathcal{I}_l^\alpha(\tau_{\epsilon_n}) = \{i\in[n]\,:\,\alpha_i=\alpha\text{ and }i\text{ defaults indirectly in the }l\text{-th round after step }\lfloor\tau_{\epsilon_n}n\rfloor\}
\end{align*}

\noindent Further, let $\mathcal{T}^\alpha_l = \mathcal{D}^\alpha\cup\mathcal{I}^\alpha$. In particular, $R_n=\sum_{\alpha\in[T]}\sum_{l\geq1}\left\vert\mathcal{T}_l^\alpha(\tau_{\epsilon_n})\right\vert$. Further, the following quantities will play an important role:

\[ D_l^{r,\alpha,\beta} = \sum_{i\in\mathcal{D}_l^\beta}w_i^{+,r,\alpha}, I_l^{r,\alpha,\beta} = \sum_{i\in\mathcal{I}_l^\beta}w_i^{+,r,\alpha}\quad\text{and}\quad T_l^{r,\alpha,\beta}=\sum_{i\in\mathcal{T}_l^\beta}w_i^{+,r,\alpha},\quad l\geq 1, (r,\alpha,\beta)\in V \]

\noindent We now exploit again the assumption that $D_{\bm{v}}f^{r,\alpha,\beta}(\hat{\bm{z}})<0$ for all $(r,\alpha,\beta)\in V$. Also recall that the expression $D_{\bm{v}}f^{r,\alpha,\beta}(\bm{z})$ is continuous in $\bm{z}$ since the weights are assumed finitary. Further $\bm{z}(\tau)$ is continuous in $\tau$. Hence for $\epsilon>0$ small enough (i.\,e.~$\bm{z}(\tau_\epsilon)$ close to $\hat{\bm{z}}$), it holds

\begin{align*}
0 &> D_{\bm{v}}f^{r,\alpha,\beta}(\bm{z}(\tau_\epsilon))\\
&= \sum_{r'\in[R]} \E\Bigg[W^{+,r,\alpha}\Bigg(\sum_{\beta'\in[T]}v^{r',\beta,\beta'}W^{-,r',\beta'}\Bigg)\\
&\hspace{1.76cm}\times\P\Bigg(\sum_{s\in[R]}s\mathrm{Poi}\Bigg(\sum_{\gamma\in[T]}W^{-,s,\gamma}z^{s,\beta,\gamma}(\tau_\epsilon)\Bigg)\in\{C-r',\ldots,C-1\}\Bigg)\1\{A=\beta\}\Bigg] - v^{r,\alpha,\beta}\\
&= \sum_{j,m}\tilde{w}_j^{+,r,\alpha}\sum_{r'\in[R]}\Bigg(\sum_{\beta'\in[T]}v^{r',\beta,\beta'}\tilde{w}_j^{-,r',\beta'}\Bigg)\sum_{s=1}^{r'}\gamma_{j,m,m-s}^\beta(\tau_\epsilon) - v^{r,\alpha,\beta}.
\end{align*}

\noindent We can hence find $c_1<1$ such that for all $(r,\alpha,\beta)\in V$ it holds
\[ \sum_{j,m}\tilde{w}_j^{+,r,\alpha}\sum_{r'\in[R]}\Bigg(\sum_{\beta'\in[T]}v^{r',\beta,\beta'}\tilde{w}_j^{-,r',\beta'}\Bigg)\sum_{s=1}^{r'}\gamma_{j,m,m-s}^\beta(\tau_\epsilon) \leq c_1v^{r,\alpha,\beta}. \]

\noindent By \eqref{3:eqn:approx:gamma} and possibly slightly increasing $c_1$, we then derive that

\[ \sum_{j,m}\tilde{w}_j^{+,r,\alpha}\sum_{r'\in[R]}\Bigg(\sum_{\beta'\in[T]}v^{r',\beta,\beta'}\tilde{w}_j^{-,r',\beta'}\Bigg)\sum_{s=1}^{r'}\frac{c_{j,m,m-s}^\beta(\lfloor \tau_\epsilon n\rfloor)}{n} \leq c_1v^{r,\alpha,\beta} \]

\noindent on a $\sigma(h(\lfloor\tau_\epsilon n\rfloor))$-measurable set $\Omega_n^\epsilon$ such that $\lim_{n\to\infty}\P(\Omega_n^\epsilon)=1$ for every $\epsilon>0$. Further, by the definition of $\tau_\epsilon$ and \eqref{3:eqn:approx:mu}, we can choose $\Omega_n^\epsilon$ in such a way that $n^{-1}w^{r,\alpha,\beta}(\lfloor\tau_\epsilon n\rfloor)\leq2\epsilon v^{r,\alpha,\beta}$ holds on $\Omega_n^\epsilon$ for all $(r,\alpha,\beta)\in V$. We can then compute on $\Omega_n^\epsilon$
\begin{align*}
n^{-1}\E\Big[D_1^{r,\alpha,\beta}\,\Big\vert\,h(\lfloor\tau_\epsilon n\rfloor)\Big] 
&\leq \sum_{j,m} \tilde{w}_j^{+,r,\alpha}  \sum_{l=0}^{m-1}\frac{c_{j,m,l}^\beta(\lfloor\tau_\epsilon n\rfloor)}{n} \sum_{\beta'\in[T]}\sum_{v\in U^{\beta'}(\lfloor\tau_\epsilon n\rfloor)}\sum_{r'=m-l}^R\frac{w_v^{r',+,\beta}\tilde{w}_j^{-,r',\beta'}}{n}\\
&\leq 2\epsilon\sum_{j,m} \tilde{w}_j^{+,r,\alpha} \sum_{l=0}^{m-1}\frac{c_{j,m,l}^\beta(\lfloor\tau_\epsilon n\rfloor)}{n} \sum_{\beta'\in[T]}\sum_{r'=m-l}^R v^{r',\beta,\beta'}\tilde{w}_j^{-,r',\beta'}
\leq 2\epsilon c_1v^{r,\alpha,\beta},
\end{align*}
\begin{align*}
n^{-1}\E\Big[I_1^{r,\alpha,\beta}\,\Big\vert\,h(\lfloor\tau_\epsilon n\rfloor)\Big] 
&\leq n^{-1}\sum_{j,m}\sum_{l=0}^{m-1}\sum_{i\in S_{j,m,l}^\beta(\lfloor\tau_\epsilon n\rfloor)} \overline{w} \Bigg( \sum_{\beta'\in[T]}\sum_{v\in U^{\beta'}(\lfloor\tau_\epsilon n\rfloor)}\sum_{r'=m-l}^R\frac{w_v^{r',+,\beta}w_i^{-,r',\beta'}}{n} \Bigg)^2\\
&\leq \overline{w}\Bigg( \sum_{\beta'\in[T]}\sum_{r'=0}^R\frac{w^{r',\beta,\beta'}(\lfloor\tau_\epsilon n\rfloor)}{n}\overline{w} \Bigg)^2
\leq 
C \epsilon^2,
\end{align*}
where $C:=4T^2(R+1)^2\overline{w}^3\Vert\bm{v}\Vert_\infty^2$. In particular, for $\epsilon>0$ small enough we find $c_2\in(0,1-c_1)$ such that $C\epsilon^2\leq 2\epsilon c_2 v^{r,\alpha,\beta}$ 
for all $(r,\alpha,\beta)\in V$ and hence on $\Omega_n^\epsilon$ it holds that
\[ n^{-1}\E\Big[T_1^{r,\alpha,\beta}\,\Big\vert\,h(\lfloor\tau_\epsilon n\rfloor)\Big] = n^{-1}\E\Big[D_1^{r,\alpha,\beta}\,\Big\vert\,h(\lfloor\tau_\epsilon n\rfloor)\Big] + n^{-1}\E\Big[I_1^{r,\alpha,\beta}\,\Big\vert\,h(\lfloor\tau_\epsilon n\rfloor)\Big] \leq 2\epsilon (c_1+c_2)v^{r,\alpha,\beta}. \]
Let then $c:=c_1+c_2\in(0,1)$. We continue inductively: Assume that on $\Omega_n^\epsilon$ it holds for $l\geq1$ that $n^{-1}\E\left[\left.T_l^{r,\alpha,\beta}\,\right\vert\,h(\lfloor\tau_\epsilon n\rfloor)\right] \leq 2\epsilon c^l v^{r,\alpha,\beta}$. We then derive on $\Omega_n^\epsilon$ that
\begin{align*}
n^{-1}\E\left[\left.D_{l+1}^{r,\alpha,\beta}\,\right\vert\,h(\lfloor\tau_\epsilon n\rfloor)\right] 
&= n^{-1}\sum_{j,m}\sum_{l=0}^{m-1}\sum_{i\in S_{j,m,l}^\beta(\lfloor\tau_\epsilon n\rfloor)} w_i^{+,r,\alpha}\P\left(\left.i\in\mathcal{D}_{l+1}^\beta\,\right\vert\,h(\lfloor\tau_\epsilon n\rfloor)\right)\\
&\hspace*{-1.65cm}\leq \sum_{j,m}\tilde{w}_j^{+,r,\alpha}\sum_{l=0}^{m-1}\frac{c_{j,m,l}^\beta(\lfloor\tau_\epsilon n\rfloor)}{n}\sum_{\beta'\in[T]}\sum_{r'=m-l}^R\tilde{w}_j^{-,r',\beta'}n^{-1}\E\left[\left.T_l^{r',\beta,\beta'}\,\right\vert\,h(\lfloor\tau_\epsilon n\rfloor)\right]\\
&\hspace*{-1.65cm}\leq 
2\epsilon c^l \sum_{j,m}\tilde{w}_j^{+,r,\alpha}\sum_{r'\in[R]}\Bigg(\sum_{\beta'\in[T]}v^{r',\beta,\beta'}\tilde{w}_j^{-,r',\beta'}\Bigg)\sum_{s=1}^{r'}\frac{c_{j,m,m-s}^\beta(\lfloor\tau_\epsilon n\rfloor)}{n}
\leq 2\epsilon c^l c_1v^{r,\alpha,\beta},
\end{align*}
\begin{align*}
n^{-1}\E\left[\left.I_{l+1}^{r,\alpha,\beta}\,\right\vert\,h(\lfloor\tau_\epsilon n\rfloor)\right] 
&= n^{-1}\sum_{j,m}\sum_{l=0}^{m-1}\sum_{i\in S_{j,m,l}^\beta(\lfloor\tau_\epsilon n\rfloor)} w_i^{+,r,\alpha}\P\left(\left.i\in\mathcal{I}_{l+1}^\beta\,\right\vert\,h(\lfloor\tau_\epsilon n\rfloor)\right)\\
&\hspace{-1.3cm}\leq 
\sum_{j,m} \tilde{w}_j^{+,r,\alpha}  \sum_{l=0}^{m-1}\frac{c_{j,m,l}^\beta(\lfloor\tau_\epsilon n\rfloor)}{n} \left(\sum_{\beta'\in[T]}\sum_{r'=m-l}^R \tilde{w}_j^{-,r',\beta'}n^{-1}\E\left[\left.T_l^{r',\beta,\beta'}\,\right\vert\,h(\lfloor\tau_\epsilon n\rfloor)\right]\right)\\
&\hspace{2.0cm}\times \left(\sum_{\beta'\in[T]}\sum_{r'=m-l}^R\tilde{w}_j^{-,r',\beta'}n^{-1}\sum_{k\leq l}\E\left[\left.T_k^{r',\beta,\beta'}\,\right\vert\,h(\lfloor\tau_\epsilon n\rfloor)\right]\right)\\
&\hspace{-1.3cm}\leq \overline{w}\left(2\epsilon T(R+1)\overline{w} c^l v^{r',\beta,\beta'}\right)\Bigg(2\epsilon T(R+1)\overline{w}\sum_{k\leq l} c^k v^{r',\beta,\beta'}\Bigg) \leq C c^l \frac{1}{1-c} \epsilon^2
\end{align*}
Now choose $\epsilon>0$ small enough such that even $\frac{C}{1-c}\epsilon^2 \leq 2\epsilon c_2v^{r,\alpha,\beta}$ and conclude that on $\Omega_n^\epsilon$
\begin{align*}
&\hspace*{-5.2cm} n^{-1}\E\left[\left.T_{l+1}^{r,\alpha,\beta}\,\right\vert\,h(\lfloor\tau_\epsilon n\rfloor)\right] = n^{-1}\E\left[\left.D_{l+1}^{r,\alpha,\beta}\,\right\vert\,h(\lfloor\tau_\epsilon n\rfloor)\right] + n^{-1}\E\left[\left.I_{l+1}^{r,\alpha,\beta}\,\right\vert\,h(\lfloor\tau_\epsilon n\rfloor)\right] \leq 2\epsilon c^{l+1} v^{r,\alpha,\beta},\\
n^{-1}\sum_{\alpha\in[T]}\sum_{l\geq1}\E\left[\left.\vert\mathcal{T}_l^\alpha\vert\,\right\vert\,h(\lfloor\tau_\epsilon n\rfloor)\right] &\leq n^{-1}\sum_{\alpha\in[T]}\sum_{l\geq1}\E\Bigg[\sum_{i\in\mathcal{T}_l^\alpha} \frac{\sum_{r\in[R]}\sum_{\gamma\in[T]}w_i^{+,r,\gamma}}{w_0}\,\Bigg\vert\,h(\lfloor\tau_\epsilon n\rfloor)\Bigg]\\
&=w_0^{-1}n^{-1}\sum_{r\in[R]}\sum_{\alpha,\gamma\in[T]}\sum_{l\geq1}\E\left[\left.T_l^{r,\gamma,\alpha}\,\right\vert\,h(\lfloor\tau_\epsilon n\rfloor)\right]\\
&\leq 2\epsilon w_0^{-1}(R+1)T^2\frac{1}{1-c}\Vert\bm{v}\Vert_\infty.
\end{align*}
Consider now again the sequence $(\epsilon_n)_{n\in\N}$ from before and let $\epsilon,\delta>0$ arbitrary. For $n$ large enough such that $\epsilon_n\leq\epsilon$, we derive (using Markov's inequality in the penultimate step)
\begin{align*}
\P\left(n^{-1}R_n\geq\delta\right) &
\leq \P\Bigg(n^{-1} \sum_{\alpha\in[T]}\sum_{l\geq1}\vert\mathcal{T}_l^\alpha(\tau_\epsilon)\vert\geq\delta\Bigg)
\leq \E\Bigg[\delta^{-1}n^{-1}\sum_{\alpha\in[T]}\sum_{l\geq1}\E\left[\vert\mathcal{T}_l^\alpha(\tau_\epsilon)\vert\,\big\vert\,h(\lfloor\tau_\epsilon n\rfloor)\right]\Bigg]\\
&\leq 2\epsilon \delta^{-1}w_0^{-1}(R+1)T^2\frac{1}{1-c}\Vert\bm{v}\Vert_\infty + (1-\P(\Omega_n^\epsilon)).
\end{align*}
Choosing $\epsilon$ small enough and $n$ large enough, this quantity becomes arbitrarily small. Hence, $n^{-1}R_n=o_p(1)$ and this finishes the proof as explained above. 
\end{proof}

\begin{theorem}\label{3:thm:finitary:weights}
Consider a financial system described by a finitary regular vertex sequence and let $\hat{\bm{z}}$ and $\bm{z}^*$ be the smallest resp.~largest joint root in $P_0$ of the functions $f^{r,\alpha,\beta}$, $(r,\alpha,\beta)\in V$. Then $g(\hat{\bm{z}}) + o_p(1) \leq n^{-1}\mathcal{S}_n \leq g(\bm{z}^*) + o_p(1)$. In particular, if $\hat{\bm{z}}=\bm{z}^*$, then $n^{-1}\mathcal{S}_n = g(\hat{\bm{z}}) + o_p(1)$.
\end{theorem}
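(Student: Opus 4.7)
The lower bound $n^{-1}\mathcal{S}_n \geq g(\hat{\bm{z}}) + o_p(1)$ is already furnished by the first half of Proposition \ref{3:prop:finitary:weights}, so the plan is to concentrate entirely on the upper bound $n^{-1}\mathcal{S}_n \leq g(\bm{z}^*) + o_p(1)$. My strategy mirrors the one developed for Theorem \ref{2:thm:asymp:2}: couple the given cascade to a slightly more shocked one whose smallest joint root obeys the directional-derivative hypothesis of Proposition \ref{3:prop:finitary:weights}, apply that proposition to the shocked system, and then let the additional shock vanish.

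Concretely, I will subject each institution independently to an additional ex post infection of probability $p \in (0,1)$ (replacing $C$ by $CM$ for independent $M \sim \mathrm{Be}(1-p)$). Writing $\zeta^{r,\alpha,\beta} = \E[W^{+,r,\alpha}\1\{A=\beta\}]$, the perturbed functions take the convenient linear form
\[ f_p^{r,\alpha,\beta}(\bm{z}) = (1-p) f^{r,\alpha,\beta}(\bm{z}) + p\bigl(\zeta^{r,\alpha,\beta} - z^{r,\alpha,\beta}\bigr), \]
so that $f_p \geq f$ componentwise on $[\bm{0},\bm{\zeta}]$. Denote by $\hat{\bm{z}}_p$ the smallest joint root of $\{f_p^{r,\alpha,\beta}\}_{(r,\alpha,\beta) \in V}$ in the corresponding component $P_0^{(p)}$. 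Since $f_p(\bm{z}) > f(\bm{z}) \geq 0$ strictly whenever $\bm{z}$ lies on the path in $P_0$ from $\bm{0}$ to $\bm{z}^*$ short of the endpoint, the construction used in the proof of Lemma \ref{3:lem:existence:hatz} yields $\hat{\bm{z}}_p \geq \bm{z}^*$, and continuity (Lemma \ref{3:lem:properties:f}) gives $\hat{\bm{z}}_p \to \bm{z}^*$ together with $g_p(\hat{\bm{z}}_p) \to g(\bm{z}^*)$ as $p \to 0+$.

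Next, I need to produce, for every sufficiently small $p > 0$, a direction $\bm{v}_p \in \R_+^V$ with $D_{\bm{v}_p} f_p^{r,\alpha,\beta}(\hat{\bm{z}}_p) < 0$ for every $(r,\alpha,\beta) \in V$, which is precisely the hypothesis of Proposition \ref{3:prop:finitary:weights}. In the finitary setting each $f^{r,\alpha,\beta}$ is $C^1$ and its Jacobian decomposes as $\nabla f(\bm{z}) = -I + A(\bm{z})$ with $A(\bm{z}) \geq 0$ entrywise, so the condition reduces to $(1-p) A(\hat{\bm{z}}_p) \bm{v}_p < \bm{v}_p$ componentwise; by the Perron--Frobenius theorem this is solvable in $\R_+^V$ as soon as the Perron eigenvalue of $A(\hat{\bm{z}}_p)$ lies strictly below $(1-p)^{-1}$. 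When this eigenvalue is already below $1$ at $\bm{z}^*$ (the ``non-degenerate'' case, analogous to $d(z^*) < 0$ in Theorem \ref{2:thm:asymp:2}), continuity of $\bm{z} \mapsto A(\bm{z})$ carries the inequality over to $\hat{\bm{z}}_p$ for small $p$. In the degenerate case I will follow the higher-order perturbation from Case 2 of Theorem \ref{2:thm:asymp:2}: setting $\delta(\tilde{\bm{\epsilon}}) = -\inf_{\bm{0} \lneq \bm{\epsilon} \leq \tilde{\bm{\epsilon}}} f^{r,\alpha,\beta}(\bm{z}^* + \bm{\epsilon})$ componentwise, locate $\bm{z}^\bullet = \bm{z}^* + \tilde{\bm{\epsilon}}$ slightly beyond $\bm{z}^*$ at which the averaging argument forces $A(\bm{z}^\bullet)$ to have Perron eigenvalue strictly below $1$, and then calibrate $p = p(\tilde{\bm{\epsilon}})$ such that $\hat{\bm{z}}_{p(\tilde{\bm{\epsilon}})} = \bm{z}^\bullet$ using the continuous dependence of $\hat{\bm{z}}_p$ on $p$.

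Once $\bm{v}_p$ is in hand, Proposition \ref{3:prop:finitary:weights} applied to the shocked system yields $n^{-1}\mathcal{S}_n^{(p)} = g_p(\hat{\bm{z}}_p) + o_p(1)$, and the $p$-shock can be realised on the same probability space as the original cascade in a way that only adds defaults, giving $n^{-1}\mathcal{S}_n \leq n^{-1}\mathcal{S}_n^{(p)}$. Letting $p \to 0+$ and invoking $g_p(\hat{\bm{z}}_p) \to g(\bm{z}^*)$ completes the upper bound; the ``in particular'' assertion follows by combining the two bounds when $\hat{\bm{z}} = \bm{z}^*$. The main obstacle will be the degenerate regime of the directional-derivative construction: in one dimension this was a straightforward scalar perturbation of $z^*$, whereas here the multi-dimensional geometry of $P_0$ at its extremal point $\bm{z}^*$ interacts with the spectral structure of the non-negative matrix $A(\bm{z})$, and verifying that a positive motion out of $P_0$ does push the Perron eigenvalue strictly below $1$ at some nearby $\bm{z}^\bullet$ is the technically most delicate step of the whole argument.
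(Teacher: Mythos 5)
The overall coupling strategy you lay out matches the paper's proof exactly: add a small independent ex-post shock of probability $p$, apply the second half of Proposition~\ref{3:prop:finitary:weights} to the shocked system, couple so the shock only adds defaults, and let $p\to0$ using Lemma~\ref{3:lem:z^*:epsilon} to send $\bm{z}^*(\epsilon)\to\bm{z}^*$. Where your proposal genuinely diverges is in producing the direction $\bm{v}_p$ with $D_{\bm{v}_p}f_p^{r,\alpha,\beta}(\hat{\bm{z}}_p)<0$, and this is where the gap lies.

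Your Perron--Frobenius decomposition $\nabla f=-I+A(\bm{z})$, $A\geq 0$ entrywise, is correct (read off from the partials given just before Proposition~\ref{3:prop:finitary:weights}), and your ``non-degenerate'' case $\rho(A(\bm{z}^*))<1$ is clean: take $\bm{v}=(I-A(\bm{z}^*))^{-1}\bm{1}>0$ and pass to $\hat{\bm{z}}_p$ by continuity. But the degenerate case $\rho(A(\bm{z}^*))=1$ is not resolved, and your attempt to import Case~2 of Theorem~\ref{2:thm:asymp:2} does not survive the transition to higher dimensions. In one dimension there is a single direction to move in, so the scalar $\delta(\tilde\epsilon)=-\inf_{0<\epsilon\leq\tilde\epsilon}\epsilon^{-1}f(z^*+\epsilon)$ fully controls both the sign of $f$ and the size of the perturbing shock; in $\R_{+,0}^V$ you have $|V|$ functions and a whole cone of directions, there is no canonical map producing the ``slightly beyond'' point $\bm{z}^\bullet$ where $\rho(A(\bm{z}^\bullet))<1$, and the averaging heuristic does not obviously translate to a spectral-radius statement. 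More basically, ``calibrate $p$ so that $\hat{\bm{z}}_p=\bm{z}^\bullet$'' requires the curve $p\mapsto\hat{\bm{z}}_p$ to hit a prescribed point in $\R^V$, which has no reason to hold when $|V|\geq2$: the curve traces one specific path, not a full neighborhood of $\bm{z}^*$. You also do not address the case $\hat{\bm{z}}(\epsilon)\neq\bm{z}^*(\epsilon)$, which Proposition~\ref{3:prop:finitary:weights} (stated at the \emph{smallest} joint root) requires you to handle.

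The paper sidesteps the spectral analysis entirely and thereby avoids the degenerate/non-degenerate dichotomy. Since $\epsilon\mapsto\bm{z}^*(\epsilon)$ is monotone and bounded (Lemma~\ref{3:lem:z^*:epsilon}), it is differentiable a.e.\ with integrable derivative $(\bm{z}^*)'$. By integrability one extracts $\epsilon_n\downarrow0$ with $(\bm{z}^*)'(\epsilon_n)<\epsilon_n^{-1}(\bm\zeta-\bm{z}^*-\delta\bm 1)$ componentwise. Taking $\bm{v}(\epsilon_n):=(\bm{z}^*)'(\epsilon_n)$ (shown strictly positive) and differentiating the identity $f^{r,\alpha,\beta}(\bm{z}^*(\epsilon))=-\tfrac{\epsilon}{1-\epsilon}\bigl(\zeta^{r,\alpha,\beta}-(z^*)^{r,\alpha,\beta}(\epsilon)\bigr)$ in $\epsilon$ gives $D_{\bm{v}(\epsilon_n)}f^{r,\alpha,\beta}(\bm{z}^*(\epsilon_n))<0$ directly, uniformly in the spectral structure; the case $\hat{\bm{z}}(\epsilon_n)\neq\bm{z}^*(\epsilon_n)$ is then handled by a further $\delta_n$-perturbation of $\epsilon_n$ together with $\hat{\bm{z}}(\epsilon_n+\delta)\to\bm{z}^*(\epsilon_n)$ as $\delta\to0$. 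The upshot is that the paper's choice of direction is \emph{dynamically} adapted (it is literally the velocity of $\bm{z}^*(\epsilon)$) rather than \emph{spectrally} adapted, and that is what makes the degenerate case disappear.
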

The idea for the proof of Theorem \ref{3:thm:finitary:weights} is to apply a small further shock to the financial system such that the second statement in Proposition \ref{3:prop:finitary:weights} becomes applicable. That is, we let each solvent bank in the system default independently with probability $\epsilon>0$ and denote the analogues of $f^{r,\alpha,\beta}$, $g$, $\hat{\bm{z}}$ and $\bm{z}^*$ by $f_\epsilon^{r,\alpha,\beta}$, $g_\epsilon$, $\hat{\bm{z}}(\epsilon)$ respectively $\bm{z}^*(\epsilon)$. That is,
\begin{align*}
f_\epsilon^{r,\alpha,\beta}(\bm{z}) &= \epsilon\left(\E\left[W^{+,r,\alpha}\1\{A=\beta\}\right] - z^{r,\alpha,\beta}\right) + (1-\epsilon)f^{r,\alpha,\beta}(\bm{z}),\hspace{0.3cm} & g_\epsilon(\bm{z}) &= \epsilon\E[S] + (1-\epsilon)g(\bm{z}).
\end{align*}
We can assume in the following that $\E\left[W^{+,r,\alpha}\1\{A=\beta\}\right]>0$ and hence $f_\epsilon^{r,\alpha,\beta}(\bm{z})>f^{r,\alpha,\beta}(\bm{z})$ for all $\bm{z}$, since otherwise $f^{r,\alpha,\beta}(\bm{z})=-z^{r,\alpha,\beta}$ and we can simply leave out the $(r,\alpha,\beta)$-component in the proof. The following lemma describes $\bm{z}^*(\epsilon)$ for small $\epsilon$:
\begin{lemma}\label{3:lem:z^*:epsilon}
The function $\bm{z}^*:\R_{+,0}\to\R_{+,0}^V$ is right-continuous and monotonically increasing in each component. In particular, the derivative $(\bm{z}^*)'(\epsilon)$ exists for Lebesgue-almost every $\epsilon>0$ and $\bm{z}^*(\epsilon)-\bm{z}^*\geq\int_0^\epsilon(\bm{z}^*)'(\xi)\dd\xi$ componentwise.
\end{lemma}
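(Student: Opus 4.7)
The plan is as follows. For monotonicity, I would first rewrite the perturbation explicitly as
\[
f_\epsilon^{r,\alpha,\beta}(\bm{z}) - f^{r,\alpha,\beta}(\bm{z}) = \epsilon\,\E\big[W^{+,r,\alpha}(1-\psi_C(\cdots))\1\{A=\beta\}\big] \geq 0,
\]
and more generally verify that $\epsilon\mapsto f_\epsilon^{r,\alpha,\beta}(\bm{z})$ is pointwise non-decreasing. This implies that $P(\epsilon)$ grows with $\epsilon$, and a short argument (using that $\bm{0}\in P(\epsilon)$ for every $\epsilon$ and that $P(\epsilon_1)\subseteq P(\epsilon_2)$ is a nested family) shows that the connected component $P_0(\epsilon)$ is also monotonically increasing in $\epsilon$. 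Since $\bm{z}^*(\epsilon)$ is defined as the componentwise supremum over $P_0(\epsilon)$ (and lies in $P_0(\epsilon)$ by Lemma \ref{3:lem:existence:hatz}), componentwise monotonicity of $\bm{z}^*$ follows directly.

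For right-continuity, I would fix a sequence $\epsilon_n\downarrow\epsilon$. By monotonicity, $\bm{z}^*(\epsilon_n)\geq\bm{z}^*(\epsilon)$ and $(\bm{z}^*(\epsilon_n))_{n\in\N}$ is componentwise non-increasing, hence converges to some $\bar{\bm{z}}\geq\bm{z}^*(\epsilon)$. Joint continuity of $(\epsilon,\bm{z})\mapsto f_\epsilon^{r,\alpha,\beta}(\bm{z})$ together with $f_{\epsilon_n}^{r,\alpha,\beta}(\bm{z}^*(\epsilon_n))=0$ gives $f_\epsilon^{r,\alpha,\beta}(\bar{\bm{z}})=0$ for every $(r,\alpha,\beta)\in V$. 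The crucial step is to verify that $\bar{\bm{z}}\in P_0(\epsilon)$: the set $K:=\bigcap_n P_0(\epsilon_n)$ is an intersection of a decreasing chain of compact connected sets in a Hausdorff space, hence itself compact, connected and containing $\bm{0}$; pointwise convergence $f_{\epsilon_n}\to f_\epsilon$ forces $K\subseteq P(\epsilon)$, and connectedness then yields $K\subseteq P_0(\epsilon)$. Monotonicity of $P_0(\epsilon_n)$ in $n$ gives $\bm{z}^*(\epsilon_m)\in P_0(\epsilon_n)$ for all $m\geq n$; sending $m\to\infty$ and using closedness of $P_0(\epsilon_n)$ shows $\bar{\bm{z}}\in P_0(\epsilon_n)$ for each $n$, hence $\bar{\bm{z}}\in K\subseteq P_0(\epsilon)$. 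Maximality of $\bm{z}^*(\epsilon)$ then forces $\bar{\bm{z}}\leq\bm{z}^*(\epsilon)$, so combined with $\bar{\bm{z}}\geq\bm{z}^*(\epsilon)$ we obtain $\bar{\bm{z}}=\bm{z}^*(\epsilon)$, i.e.\ right-continuity.

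For the final claim, each coordinate $(z^*)^{r,\alpha,\beta}(\epsilon)$ is a monotone non-decreasing, right-continuous function of $\epsilon$ on $\R_{+,0}$. The classical Lebesgue differentiation theorem for monotone functions then yields that $(z^*)^{r,\alpha,\beta}$ is differentiable at Lebesgue-almost every $\epsilon>0$, and that the increment dominates the integral of the derivative:
\[
(z^*)^{r,\alpha,\beta}(\epsilon) - (z^*)^{r,\alpha,\beta}(0) \geq \int_0^\epsilon \big((z^*)^{r,\alpha,\beta}\big)'(\xi)\,\dd\xi.
\]
Collecting these inequalities over all $(r,\alpha,\beta)\in V$ gives the componentwise statement $\bm{z}^*(\epsilon) - \bm{z}^* \geq \int_0^\epsilon (\bm{z}^*)'(\xi)\,\dd\xi$.

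I expect the main obstacle to be the step showing $\bar{\bm{z}}\in P_0(\epsilon)$ rather than merely $\bar{\bm{z}}\in P(\epsilon)$: one has to rule out that the limit point jumps into a different connected component of the level set. The intersection-of-nested-compact-connected-sets argument above is what makes this go through; everything else (monotonicity of $\bm{z}^*$, and the a.e.\ differentiability together with the integral inequality) is then either immediate from the definition or a direct invocation of a classical theorem.
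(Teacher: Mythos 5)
Your argument is correct and follows essentially the same route as the paper: monotonicity via the nesting $P_0(\epsilon_1)\subseteq P_0(\epsilon_2)$ for $\epsilon_1\leq\epsilon_2$, right-continuity via the intersection of a decreasing chain of compact connected sets containing $\bm{0}$ combined with (joint) continuity of $f_\epsilon^{r,\alpha,\beta}$, and the classical Lebesgue theorem on differentiation of monotone functions for the final integral inequality. The only cosmetic difference is that you work directly at a general $\epsilon$, whereas the paper first shows $\lim_{\epsilon\to 0+}\bm{z}^*(\epsilon)=\bm{z}^*$ and then invokes the same argument at general $\epsilon$.
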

\begin{proof}
For every $\bm{z}\in P_0$ and $\epsilon>0$ it holds $f_\epsilon^{r,\alpha,\beta}(\bm{z})\geq f^{r,\alpha,\beta}(\bm{z})\geq0$ and hence $P_0\subseteq P_0(\epsilon)$, where $P_0(\epsilon)$ denotes the analogue of $P_0$ for the additionally shocked case. In particular, $\bm{z}^*\in P_0(\epsilon)$ and hence $\bm{z}^*\leq\bm{z}^*(\epsilon)$ componentwise. The same argument shows that $\bm{z}^*(\epsilon_1)\leq\bm{z}^*(\epsilon_2)$ for any $\epsilon_1\leq\epsilon_2$ and hence $\bm{z}^*(\epsilon)$ is monotonically increasing in each component.

In particular, $\lim_{\epsilon\to0+}\bm{z}^*(\epsilon)$ exists and $\lim_{\epsilon\to0+}\bm{z}^*(\epsilon)\in\bigcap_{\epsilon>0}P_0(\epsilon)$. Now let $\delta>0$. By continuity of $f_\epsilon^{r,\alpha,\beta}(\bm{z})$ with respect to $\epsilon$ and $\bm{z}$, it holds $f^{r,\alpha,\beta}(\bm{z})\geq f_\epsilon^{r,\alpha,\beta}(\bm{z})-\delta$ for $\epsilon$ small enough and all $\bm{z}$ in the compact set $[\bm{0},\bm{\zeta}]$
. Hence for $\bm{z}\in\bigcap_{\epsilon>0}P_0(\epsilon)$, we obtain $f^{r,\alpha,\beta}(\bm{z})\geq-\delta$ for every $\delta>0$ and so $\bigcap_{\epsilon>0}P_0(\epsilon)\subseteq P$. However, since $\bigcap_{\epsilon>0}P_0(\epsilon)$ is the intersection of a chain of connected, compact sets in the Hausdorff space $\R^V$, it is itself a connected, compact set. Since further $\bm{0}\in\bigcap_{\epsilon>0}P_0(\epsilon)$, we thus derive that $\bigcap_{\epsilon>0}P_0(\epsilon)=P_0$. That is, $\lim_{\epsilon\to0+}\bm{z}^*(\epsilon)\in P_0$ and hence $\lim_{\epsilon\to0+}\bm{z}^*(\epsilon)=\bm{z}^*$. 
The same arguments show that $\lim_{h\to0+}\bm{z}^*(\epsilon+h)=\bm{z}^*(\epsilon)$ for every $\epsilon>0$, hence proving right-continuity of $\bm{z}^*(\epsilon)$.

A classical result for derivatives of monotone functions (see \cite[Theorem 7.21]{Wheeden1977} for instance) then yields the existence of $(\bm{z}^*)'$ almost everywhere and
\[ \bm{z}^*(\epsilon)-\bm{z}^*\geq \lim_{h\to0-}\bm{z}^*(\epsilon+h) - \lim_{h\to0+}\bm{z}^*(h) \geq \int_0^\epsilon(\bm{z}^*)'(\xi)\dd\xi. \qedhere\]
\end{proof}

\begin{proof}[Proof of Theorem \ref{3:thm:finitary:weights}]
As outlined above, in order to reduce this general setting to the special case from Proposition \ref{3:prop:finitary:weights}, we apply an additional small shock to the system. 
That is, if we can find a vector $\bm{v}(\epsilon)\in\R_+$ such that $D_{\bm{v}(\epsilon)}f_\epsilon^{r,\alpha,\beta}(\hat{\bm{z}}(\epsilon))<0$ for all $(r,\alpha,\beta)\in V$, then applying Proposition \ref{3:prop:finitary:weights}, we derive for the final damage $n^{-1}\mathcal{S}_n^\epsilon$ in the additionally shocked system
\[ n^{-1}\mathcal{S}_n^\epsilon \leq g_\epsilon(\hat{\bm{z}}(\epsilon)) + o_p(1) \leq g_\epsilon(\bm{z}^*(\epsilon)) + o_p(1) \leq \epsilon +  g(\bm{z}^*(\epsilon)) + o_p(1). \]
We then conclude $n^{-1}\mathcal{S}_n \leq g(\bm{z}^*) + o_p(1)$, as for arbitrary $\delta>0$ and $\epsilon
>0$ small enough,
\[ \P\left(n^{-1}\mathcal{S}_n - g(\bm{z}^*) > \delta\right)  \leq \P\left(n^{-1}\mathcal{S}_n^\epsilon - (\epsilon+g(\bm{z}^*(\epsilon)) > \delta/2\right) \to 0,\quad\text{as }n\to\infty. \]
So let us show the existence of the vectors $\bm{v}(\epsilon)$: By Lemma \ref{3:lem:z^*:epsilon} we know that $(\bm{z}^*)'(\epsilon)$ exists almost everywhere and that $\int_0^\epsilon(\bm{z}^*)'(\xi)\dd\xi \leq \bm{z}^*(\epsilon)-\bm{z}^* < \infty$. By the integrability, we can hence find a sequence $(\epsilon_n)_{n\in\N}\subset(0,1)$ such that $\epsilon_n\to0$ as $n\to\infty$ and for each $\epsilon_n$ it holds
\[ (\bm{z}^*)'(\epsilon_n)< \epsilon_n^{-1} \left(\pmb\zeta-\bm{z}^*-\delta\1\right) \]
componentwise, where 
$0<\delta<\zeta^{r,\alpha,\beta}-(z^*)^{r,\alpha,\beta}$ for all $(r,\alpha,\beta)\in V$ and \mbox{$\1=(1,\ldots,1)\in\R_+^V$.} This bound can be achieved with the same $\epsilon_n$ for each component, noting that the sum $\sum_{(r,\alpha,\beta)\in V}((z^*)^{r,\alpha,\beta})'(\xi)$ which is still integrable. With
\[ 0 = f_\epsilon^{r,\alpha,\beta}(\bm{z}^*(\epsilon)) = (1-\epsilon)f^{r,\alpha,\beta}(\bm{z}^*(\epsilon)) + \epsilon\left(\E\left[W^{+,r,\alpha}\1\{A=\beta\}\right] - (z^*)^{r,\alpha,\beta}(\epsilon)\right), \]
we then derive that
\begin{align*}
\frac{\dd}{\dd \epsilon}f^{r,\alpha,\beta}(\bm{z}^*(\epsilon))\big\vert_{\epsilon=\epsilon_n} 
&= -\frac{1}{(1-\epsilon_n)^2}\left(\zeta^{r,\alpha,\beta}-(z^*)^{r,\alpha,\beta}(\epsilon_n)\right) + \frac{\epsilon_n}{1-\epsilon_n}\frac{\dd}{\dd\epsilon}(z^*)^{r,\alpha,\beta}(\epsilon)\big\vert_{\epsilon=\epsilon_n}\\
&< -\frac{1}{1-\epsilon_n}\left(\zeta^{r,\alpha,\beta}-(z^*)^{r,\alpha,\beta}(\epsilon_n) - \epsilon_n\frac{\dd}{\dd\epsilon}(z^*)^{r,\alpha,\beta}(\epsilon)\big\vert_{\epsilon=\epsilon_n}\right)\\
&< -\frac{1}{1-\epsilon_n} \left((z^*)^{r,\alpha,\beta}+\delta-(z^*)^{r,\alpha,\beta}(\epsilon_n)\right)<0
\end{align*}
for $n$ large enough such that $(z^*)^{r,\alpha,\beta}(\epsilon_n)<(z^*)^{r,\alpha,\beta}+\delta$. On the other hand,
\[ \frac{\dd}{\dd \epsilon}f^{r,\alpha,\beta}(\bm{z}^*(\epsilon))\big\vert_{\epsilon=\epsilon_n} = D_{\bm{v}(\epsilon_n)}f^{r,\alpha,\beta}(\bm{z}^*(\epsilon_n)) \geq D_{\bm{v}(\epsilon_n)}f_{\epsilon_n}^{r,\alpha,\beta}(\bm{z}^*(\epsilon_n)), \]
where $\bm{v}(\epsilon_n):=(\bm{z}^*)'(\epsilon_n)$. Hence altogether,
\[ D_{\bm{v}(\epsilon_n)}f_{\epsilon_n}^{r,\alpha,\beta}(\bm{z}^*(\epsilon_n)) < 0. \]
In fact, it also holds that
\begin{align*}
v^{r,\alpha,\beta}(\epsilon_n)
&\geq \lim_{h\to0}\E\Bigg[W^{+,r,\alpha}\1\{A=\beta\}\P\Bigg(\sum_{s\in[R]}s\mathrm{Poi}\Bigg(\sum_{\gamma\in[T]}W^{-,s,\gamma}(z^*)^{s,\beta,\gamma}(\epsilon_n+h)\Bigg)< C\Bigg)\Bigg]\\
&\geq \E\Bigg[W^{+,r,\alpha}\1\{A=\beta\}\P\Bigg(\sum_{s\in[R]}s\mathrm{Poi}\Bigg(\sum_{\gamma\in[T]}W^{-,s,\gamma}\zeta^{s,\beta,\gamma}\Bigg)< C\Bigg)\Bigg] > 0,
\end{align*}
using that $f_{\epsilon_n}^{r,\alpha,\beta}(\bm{z}^*(\epsilon_n))=f_{\epsilon_n+h}^{r,\alpha,\beta}(\bm{z}^*(\epsilon_n+h))=0$.
The proof is hence finished, if $\hat{\bm{z}}(\epsilon_n)=\bm{z}^*(\epsilon_n)$. 
Otherwise, note the following: For each $\delta>0$ it holds $f_{\epsilon_n}^{r,\alpha,\beta}(\hat{\bm{z}}(\epsilon_n+\delta))<f_{\epsilon_n+\delta}^{r,\alpha,\beta}(\hat{\bm{z}}(\epsilon_n+\delta))=0$. By monotonicity of $f_{\epsilon_n}^{r,\alpha,\beta}$ from Lemma \ref{3:lem:properties:f}, we derive that \mbox{$\bm{z}^*(\epsilon_n)\leq\hat{\bm{z}}(\epsilon_n+\delta)\leq\bm{z}^*(\epsilon_n+\delta)$.} Hence as $\delta\to0$, using $\bm{z}^*(\epsilon_n+\delta)\to\bm{z}^*(\epsilon_n)$ by Lemma \ref{3:lem:z^*:epsilon}, we conclude that $\hat{\bm{z}}(\epsilon_n+\delta)\to\bm{z}^*(\epsilon_n)$ as $\delta\to0$. Thus we derive that for $\delta_n>0$ small enough, it holds $D_{\bm{v}(\epsilon_n)}f_{\epsilon_n+\delta_n}^{r,\alpha,\beta}(\hat{\bm{z}}(\epsilon_n+\delta_n))<0$ by continuity of $D_{\bm{v}}f_\epsilon^{r,\alpha,\beta}(\bm{z})$ w.\,r.\,t.~$\epsilon$ and $\bm{z}$. Hence apply Proposition \ref{3:prop:finitary:weights} to the financial systems 
additionally shocked by $\epsilon_n+\delta_n$ and choose vectors $\bm{v}(\epsilon_n)$ for the directional derivative.
\end{proof}

\subsection{Proof of Theorem \ref{3:thm:general:weights}}\label{3:ssec:proof:main:general}
In the previous section, we derived an explicit asymptotic expression for the final default fraction if in our model we choose vertex-weights only from a finite set. While this gives a first important insight into the behavior of large financial networks, it is not possible to model 
heavy tailed degree distributions as observed for real financial networks by bounded vertex-weights. Theorem \ref{3:thm:general:weights} hence extends Theorem \ref{3:thm:finitary:weights} to the case of general (non-finitary) regular vertex sequences.

The outline for the rest of this section is the following: We want to approximate the general regular vertex sequence by two sequences of finitary vertex sequences such that one of them describes a system that experiences less defaults and damage, and the other one experiences more defaults and damage. To this end, we first construct the corresponding limiting distribution functions $\{F_k^A\}_{k\in\N}$ respectively $\{F_k^B\}_{k\in\N}$ and then investigate the finitary systems with help of Theorem \ref{3:thm:finitary:weights}.

Let $D_\infty:=\left(\R_{+,0}^{[R]\times[T]}\right)^2\times\R_{+,0}\times\N_0\times[T]$ and for $(r,\alpha,\beta)\in V$, $(\bm{z},\bm{x},\bm{y},v,l,m)\in\R_{+,0}^V\times D_\infty$,
\begin{align*}
h_f^{r,\alpha,\beta}(\bm{z},\bm{x},\bm{y},v,l,m) &:= y^{r,\alpha}\psi_l\Bigg(\sum_{\gamma\in[T]}x^{1,\gamma}z^{1,\beta,\gamma},\ldots,\sum_{\gamma\in[T]}x^{R,\gamma}z^{R,\beta,\gamma}\Bigg) \1\{m=\beta\},\\
h_g(\bm{z},\bm{x},\bm{y},v,l,m) &:= v\sum_{\beta\in[T]}\psi_l\Bigg(\sum_{\gamma\in[T]}x^{1,\gamma}z^{1,\beta,\gamma},\ldots,\sum_{\gamma\in[T]}x^{R,\gamma}z^{R,\beta,\gamma}\Bigg)\1\{m=\beta\},
\end{align*}
where as before $\psi_l(x_1,\ldots,x_R) := \P(\sum_{r\in[R]}r\mathrm{Poi}(x_r)\geq l)$.
Note that although $D_\infty$ does not contain $\left(\R_{+,0}^{[R]\times[T]}\right)^2\times\R_{+,0}\times\{\infty\}\times[T]$, it holds
\[ f^{r,\alpha,\beta}(\bm{z})=\int_{D_\infty} h_f^{r,\alpha,\beta}(\bm{z},\bm{x},\bm{y},v,l,m) \dd F(\bm{x},\bm{y},v,l,m) - z^{r,\alpha,\beta} \]
and
\[ g(\bm{z})=\int_{D_\infty} h_g(\bm{z},\bm{x},\bm{y},v,l,m) \dd F(\bm{x},\bm{y},v,l,m), \]
for $F$ the limiting distribution of the weights, capital and type as given in Definition \ref{3:def:regular:vertex:sequence}.
This is because $\psi_\infty(x_1,\ldots,x_R)
=0$. Let then $Z:=[\bm{0},\pmb\zeta]$ 
and $H := \{h_g\}\cup\bigcup_{(r,\alpha,\beta)\in V}\{h_f^{r,\alpha,\beta}\}$.

As a first approximation of $F$, we choose the discretizations
\[ F_j^A(\bm{x},\bm{y},v,l,m) := F\bigg(\frac{\lceil j\bm{x}\rceil}{j}, \frac{\lceil j\bm{y}\rceil}{j},\frac{\lceil jv\rceil}{j},l,m\bigg),~ F_j^B(\bm{x},\bm{y},v,l,m) := F\bigg(\frac{\lfloor j\bm{x}\rfloor}{j}, \frac{\lfloor j\bm{y}\rfloor}{j},\frac{\lfloor jv\rfloor}{j},l,m\bigg) \]
for $j\in\N$, where $\lceil\cdot\rceil$ and $\lfloor\cdot\rfloor$ shall be applied componentwise. That is, the sequences $\{F_j^A\}_{j\in\N}$ and $\{F_j^B\}_{j\in\N}$ approximate $F$ from above respectively below and the approximations become finer as $j$ increases. Since every $h\in H$ is continuous in $\bm{z}$, $\bm{x}$, $\bm{y}$ and $v$, it is easy to obtain (cf.~\cite{Detering2015a}) that for each $k\in\N$ there exists $j_k$ large enough such that for all $j\geq j_k$ it holds
\[ \left\vert\int_{D_k} h(\bm{z},\bm{x},\bm{y},v,l,m)\dd F_j^{A,B}(\bm{x},\bm{y},v,l,m) - \int_{D_k} h(\bm{z},\bm{x},\bm{y},v,l,m)\dd F(\bm{x},\bm{y},v,l,m)\right\vert \leq k^{-1} \]
for all $\bm{z}\in Z$, where \mbox{$D_k:=\{(\bm{x},\bm{y},v,l,m)\,:\,x^{r,\alpha}\leq k,y^{r,\alpha}\leq k,v\leq k, l\leq k\}\subset D_\infty$.} We denote $\overline{F}_k^A:=F_{j_k}^A$ and $\overline{F}_k^B:=F_{j_k}^B$ in the following.

By construction, the distribution functions $\overline{F}_k^{A,B}$ clearly correspond to discrete weight and systemic importance sequences that can be obtained from the original regular vertex sequence by adjusting weights and systemic importance values upward respectively downward. However, $\overline{F}_k^{A,B}$ (potentially) still assigns mass to infinitely many weights and capitals. For the case of $\overline{F}_k^A$, we can overcome this issue by setting
\[ F_k^A(\bm{x},\bm{y},v,l,m) := \begin{cases}\overline{F}_k^A(\bm{x}\wedge k,\bm{y}\wedge k,v\wedge k,l\wedge k,m),&\text{if }l<\infty,\\1,&\text{else},\end{cases} \]
where $\cdot\wedge k$ denotes componentwise truncation at $k$. That is, if the capital, the systemic importance value or one of the weights of some bank in the system exceeds $k$ (we call this bank \emph{large} in the following), then in the approximating finitary system described by $F_k^A$, this bank's weights and systemic importance value are all set to $0$ and its capital is increased to $\infty$ (cf.~\cite{Detering2015a} for a rigorous definition of the approximating vertex sequences). Note that the type of each bank stays the same. Clearly, this further reduces defaults in the system and the corresponding systemic damage in the sense that if we couple the original system with the finitary approximating system, then the final systemic damage $n^{-1}(\mathcal{S}_k^A)_n$ is stochastically dominated by $n^{-1}\mathcal{S}_n$ for all $k\in\N$.

If we wanted to apply exactly the same idea also to $\overline{F}_k^B$, we would need to set all weights and systemic importance values of large banks to $\infty$, which is not possible by the definition of a finitary regular vertex sequence. Still it will be possible to adjust weights, systemic importance and capitals of large banks to finitely many values such that the final damage in the finitary approximating system stochastically dominates $n^{-1}\mathcal{S}_n$. To this end, let
\[ \gamma_k^\beta := \int_{D_k^c}\1\{m=\beta\}\dd F(\bm{x},\bm{y},v,l,m), \]
where $D_k^c:=D_\infty\backslash D_k$, and
\begin{align*}
(\overline{w}_k^\beta)^{r,\alpha} &:= \begin{cases}2\left(\gamma_k^\beta\right)^{-1}\displaystyle\int_{D_k^c}y^{r,\alpha}\1\{m=\beta\}\dd F(\bm{x},\bm{y},v,l,m)\geq 2k,&\text{if }\gamma_k^\beta>0,\\2k,&\text{if }\gamma_k^\beta=0,\end{cases}
\end{align*}
as well as
\begin{align*}
\overline{s}_k^\beta &:= \begin{cases}2\left(\gamma_k^\beta\right)^{-1}\displaystyle\int_{D_k^c}v\1\{m=\beta\}\dd F(\bm{x},\bm{y},v,l,m)\geq 2k,&\hspace*{0.4cm}\text{if }\gamma_k^\beta>0,\\2k,&\hspace*{0.4cm}\text{if }\gamma_k^\beta=0.\end{cases}
\end{align*}
Let then $F_k^B$ be given by $\overline{F}_k^B$ on $D_k$. By this definition we know that it holds $F_k^B(k,\ldots,k,\beta)=F(k,\ldots,k,\beta)$ for each $\beta\in[T]$. Moreover, let $F_k^B$ assign the remaining mass $\gamma_k^\beta$ to the points $(\bm{0},\overline{\bm{w}}_k^\beta,\overline{s}_k^\beta,0,\beta)$. That is, if a large bank of type $\beta$ originally has finite capital, then its approximated capital is set to $0$ (it initially defaults), its in-weights are set to $0$, its out-weights are set to $(\overline{w}_k^\beta)^{r,\alpha}$ and its systemic importance is set to $\overline{s}_k^\beta$ (again cf.~\cite{Detering2015a} for a rigorous definition of the approximating vertex sequences). As before, their type does not change. Finally, we assign the remaining mass $\P(A=\beta,C=\infty)$ to the points $(\bm{0},\bm{0},0,\infty,\beta)$ for each $\beta\in[T]$.

By construction, all large banks are initially defaulted in the approximating finitary system. Also all the weights of small banks are increased as compared to the original system. To show that there occurs more damage in the approximating system than in the original one (i.\,e.~$n^{-1}(\mathcal{S}_k^B)_n$ stochastically dominates $n^{-1}\mathcal{S}_n$), all that is left to show is that for each \mbox{$r\in[R]$} the total $r$-out-weight of large $\beta$-type banks with respect to each type $\alpha\in[T]$ in the approximating system is larger than in the original one. But the total $r$-out-weight of large $\beta$-banks with respect to type $\alpha$ is given by
\[ n(\overline{w}_k^\beta)^{r,\alpha}\left(\gamma_k^\beta+o(1)\right) = 2n \int_{D_k^c}y^{r,\alpha}\1\{m=\beta\}\dd F(\bm{x},\bm{y},v,l,m) (1+o(1)) \]
in the approximating system, whereas for the original system it is
\[ n\int_{D_k^c}y^{r,\alpha}\1\{m=\beta\}\dd F(\bm{x},\bm{y},v,l,m) (1+o(1)). \]
Hence for each small bank $i\in[n]$ the number of incoming $r$-edges from large banks in the original system is stochastically dominated by the corresponding number in the approximating system (for more details see \cite{Detering2015a}). In particular, the total exposure of $i$ to the set of large banks (the weighted sum of incoming edges) is stochastically dominated. This shows the following:
\begin{lemma}\label{3:lem:stochastic:domination}
Consider a regular vertex sequence and let sequences $\{F_k^A\}$ and $\{F_k^B\}$ be constructed as above. Further, let $(\mathcal{S}_k^A)_n$ and $(\mathcal{S}_k^B)_n$ be the total damage caused by finally defaulted banks in the finitary approximating systems. Then with $\preceq$ denoting stochastic domination it holds that
\[ n^{-1}\left(\mathcal{S}_k^A\right)_n \preceq n^{-1}\mathcal{S}_n \preceq n^{-1}\left(\mathcal{S}_k^B\right)_n. \]
\end{lemma}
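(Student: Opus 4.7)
The plan is to construct, for each $k\in\N$, a coupling of the three systems on a single probability space and then argue monotonicity of the default cascade round by round. The lower and upper bounds will be handled separately, since their couplings are of somewhat different nature.

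For the lower bound, I would proceed as follows. Call a bank \emph{large} if one of its weights, its capital or its systemic importance value exceeds $k$. In the $F_k^A$-system the large banks have in- and out-weights equal to zero, infinite capital, and zero systemic importance; they therefore neither send nor receive any edges nor ever default nor contribute to the damage. The small banks in $F_k^A$ have weights and systemic importance obtained from the original ones by rounding up in $\lceil j_k \cdot\rceil/j_k$, so I can couple the two systems by using the same family $\{U_{i,j}\}$ of uniform variates when deciding edge presence. As the edge probabilities in the $F_k^A$-system dominate those in the original one only for pairs of small banks (and are zero when a large bank is involved), this coupling is not immediate. Instead I would use a two-step coupling: first identify large banks in the original system and delete all edges incident to them (which can only decrease exposures and hence defaults); then among the remaining small banks, the rounded-up weights lead to more edges than the original ones under the $U_{i,j}$-coupling. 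Together, each bank in the thinned system has pointwise smaller exposures than in $F_k^A$, and a round-by-round induction using \eqref{3:eqn:default:contagion} yields $\mathcal{D}_n^A\subseteq\mathcal{D}_n$ and hence $(\mathcal{S}_k^A)_n\leq\mathcal{S}_n$ almost surely, which is stronger than stochastic domination.

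For the upper bound I would use an analogous coupling, but the large banks now play the key role. In the $F_k^B$-system every large bank is initially defaulted and possesses an inflated $r$-out-weight $(\overline{w}_k^\beta)^{r,\alpha}$ toward type $\alpha$, chosen precisely so that the aggregate out-weight of all large banks of type $\beta$ with respect to type $\alpha$ equals twice the corresponding integral over $D_k^c$ under $F$; the displayed calculation preceding the lemma confirms this. Using the explicit Poisson approximation for the binomial number of edges into any small bank $i$ of type $\alpha$ from the set of large defaulted banks of type $\beta$, the limiting parameter in $F_k^B$ is at least that in the original system (where only a fraction of the large banks in $D_k^c$ are initially defaulted), so the number of such edges can be coupled to be stochastically larger. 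The analogous statement for the rounded-down small-bank weights in $F_k^B$ is again immediate under the $\{U_{i,j}\}$-coupling; the inequality then transfers to total exposures and, round by round, to the whole default cascade. Summing systemic importance values (which in $F_k^B$ exceed those in the original system, both for small banks by construction and for large banks via the inflation by factor two) gives $\mathcal{S}_n\leq(\mathcal{S}_k^B)_n$ after coupling.

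The main technical obstacle is the upper-bound coupling: one has to verify that replacing the heterogeneous large-bank population (with a continuum of weight values and mixed initial statuses) by a single point mass with doubled weights really produces stochastically larger exposures for every small bank simultaneously. This is the step where the doubling of $\overline{w}_k^\beta$ is essential---without the factor two, only the mean is matched and no monotone coupling exists. I would handle this by first computing that the in-degree of any small bank from large defaulted banks of type $\beta$ is asymptotically $\mathrm{Poi}(w_i^{-,r,\beta}\lambda)$ with $\lambda$ strictly larger in $F_k^B$ than in the original system, and then invoking standard monotonicity of Poisson distributions under a thinning/coupling argument, along the lines of \cite{Detering2015a}. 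The systemic importance then inherits the same dominance from the pointwise dominance of $\mathcal{D}_n\subseteq\mathcal{D}_n^B$ under the coupling.
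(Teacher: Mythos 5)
You have the rounding direction for $F_k^A$ reversed, and this breaks the lower-bound coupling. The ceiling in $F_j^A(\bm{x},\bm{y},v,l,m) := F(\lceil j\bm{x}\rceil/j,\lceil j\bm{y}\rceil/j,\lceil jv\rceil/j,l,m)$ is applied to the \emph{argument} of the distribution function $F$, which makes $F_j^A \geq F$ pointwise and therefore makes the associated random variables \emph{smaller}: the approximating small-bank weights and systemic importances satisfy $w_i^A = \lfloor j_k w_i\rfloor/j_k \leq w_i$ and $s_i^A \leq s_i$. (The surrounding statements, e.g. $(f_k^A)^{r,\alpha,\beta}\leq f^{r,\alpha,\beta}$ and ``this further reduces defaults,'' confirm the $A$-system is the \emph{less} contagious one.) Consequently \emph{all} edge probabilities in the $F_k^A$-system are less than or equal to the original ones, not greater --- smaller for small--small pairs by the rounding, and zero whenever a large bank is involved --- so the natural $\{U_{i,j}\}$-coupling makes the $F_k^A$-graph a deterministic subgraph of the original graph; a round-by-round induction in \eqref{3:eqn:default:contagion} then gives $\mathcal{D}_n^A\subseteq\mathcal{D}_n$ directly, and $s_i^A\leq s_i$ together with $s_i^A=0$ for large banks finishes $(\mathcal{S}_k^A)_n\leq\mathcal{S}_n$. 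No two-step detour is required.

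Moreover, your two-step construction would not close even under your stated premise: showing that the thinned original system has pointwise smaller exposures than \emph{both} the original system and the $F_k^A$-system yields $\mathcal{D}_n^{\mathrm{thinned}}\subseteq\mathcal{D}_n$ and $\mathcal{D}_n^{\mathrm{thinned}}\subseteq\mathcal{D}_n^A$, and these two inclusions do not combine to the needed $\mathcal{D}_n^A\subseteq\mathcal{D}_n$. Your upper-bound sketch is in the right spirit --- the factor $2$ in $\overline{w}_k^\beta$ is indeed the slack that makes the aggregate out-weight of the $B$-system's initially defaulted large banks dominate that of the original large banks for $n$ large --- but the Poisson framing is a heuristic substitute for the finite-$n$ stochastic domination of sums of Bernoulli edge indicators that is actually needed and that the paper defers to \cite{Detering2015a}.
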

We have hence bounded the final damage $n^{-1}\mathcal{S}_n$ from below and from above using finitary approximations. We now want to compute the precise final damages for these approximating systems using Theorem \ref{3:thm:finitary:weights}. Let
\begin{gather*}
\left(f_k^{A,B}\right)^{r,\alpha,\beta}(\bm{z}) = \int_{D_\infty} h_f^{r,\alpha,\beta}(\bm{z},\bm{x},\bm{y},v,l,m)\dd F_k^{A,B}(\bm{x},\bm{y},v,l,m) - z^{r,\alpha,\beta},\\
g_k^{A,B}(\bm{z}) = \int_{D_\infty} h_g(\bm{z},\bm{x},\bm{y},v,l,m)\dd F_k^{A,B}(\bm{x},\bm{y},v,l,m)
\end{gather*}
the corresponding analogues of $f^{r,\alpha,\beta}$ and $g$. Further, denote by $\hat{\bm{z}}_k^A$ and $(\bm{z}^*)_k^A$ resp.~$\hat{\bm{z}}_k^B$ and $(\bm{z}^*)_k^B$ the smallest and largest joint roots of all functions $\left(f_k^A\right)^{r,\alpha,\beta}$ resp.~$\left(f_k^B\right)^{r,\alpha,\beta}$, $(r,\alpha,\beta)\in V$. Then we derive the following result comparing these quantities to the original system:
\begin{lemma}\label{3:lem:convergence:g}
It holds $\liminf_{k\to\infty} g_k^A\left(\hat{\bm{z}}_k^A\right) \geq g(\hat{\bm{z}})$ and $\limsup_{k\to\infty} g_k^B\left(\left(\bm{z}^*\right)_k^B\right) \leq g(\bm{z}^*)$.
\end{lemma}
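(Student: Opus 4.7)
The plan is to establish uniform convergence of the auxiliary functions $f_k^{A,B}$ and $g_k^{A,B}$ on a compact set containing the relevant joint roots for all large $k$, then extract subsequential limits of $\hat{\bm{z}}_k^A$ and $(\bm{z}^*)_k^B$ and identify them as joint roots of $f^{r,\alpha,\beta}$ admitting the required comparison with $\hat{\bm{z}}$ and $\bm{z}^*$.

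First, I would verify that $(f_k^{A})^{r,\alpha,\beta}, (f_k^{B})^{r,\alpha,\beta}\to f^{r,\alpha,\beta}$ and $g_k^A, g_k^B\to g$ uniformly on a compact superset $Z'\supseteq Z$ containing $\hat{\bm{z}}_k^A$ and $(\bm{z}^*)_k^B$ for all sufficiently large $k$; such a $Z'$ exists because $\hat{\bm{z}}_k^A\leq \bm{\zeta}_k^A$ and $(\bm{z}^*)_k^B\leq \bm{\zeta}_k^B$ with $\bm{\zeta}_k^A\to\bm{\zeta}$ and $\bm{\zeta}_k^B\to\bm{\zeta}$ by construction. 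Each integral splits into a $D_k$-contribution, controlled by $k^{-1}$ via the construction of $\overline F_k^{A,B}$, and a $D_k^c$-contribution. For $F_k^A$ the latter vanishes identically since large banks have zero weights and zero systemic importance, whereas for $F$ it is bounded by $\int_{D_k^c} y^{r,\alpha}\,dF$ or $\int_{D_k^c} v\,dF$, which tend to zero by integrability of $W^{+,r,\alpha}$ and $S$. For $F_k^B$, the point masses at $(\bm{0},\overline{\bm{w}}_k^\beta,\overline{s}_k^\beta,0,\beta)$ contribute exactly $2\int_{D_k^c} y^{r,\alpha}\1\{m=\beta\}\,dF$ to $(f_k^B)^{r,\alpha,\beta}$ (using $\psi_0(0,\dots,0)=1$), and analogously for $g_k^B$; the deviation from $\int_{D_k^c} h\,dF$ is therefore bounded by the same vanishing quantities.

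For the first inequality, pick a subsequence of $(\hat{\bm{z}}_k^A)_{k\in\N}$ attaining the liminf; by compactness of $Z'$, a further subsequence $\hat{\bm{z}}_{k_j}^A\to\bm{z}_\infty$ converges. Uniform convergence together with continuity of $f^{r,\alpha,\beta}$ gives $f^{r,\alpha,\beta}(\bm{z}_\infty)=0$, and Lemma~\ref{3:lem:existence:hatz} then yields $\bm{z}_\infty\geq\hat{\bm{z}}$ componentwise. Since each $\psi_l$ is non-decreasing in its arguments and these depend on $\bm{z}$ through non-negative linear combinations, $g$ is coordinate-wise non-decreasing, so $g(\bm{z}_\infty)\geq g(\hat{\bm{z}})$. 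Combined with $g_k^A\to g$ uniformly, this yields $\liminf_{k\to\infty} g_k^A(\hat{\bm{z}}_k^A)\geq g(\hat{\bm{z}})$.

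The second inequality is analogous but requires the extra step of showing that a subsequential limit $\bm{z}_\infty$ of $(\bm{z}^*)_{k_j}^B$ lies in $P_0$, so that the maximality of $\bm{z}^*$ yields $\bm{z}_\infty\leq\bm{z}^*$; this is the main obstacle. Mimicking the proof of Lemma~\ref{3:lem:existence:hatz}, introduce the thickened set $P(\delta):=\bigcap_{(r,\alpha,\beta)\in V}\{\bm{z}\in\R_{+,0}^V\,:\,f^{r,\alpha,\beta}(\bm{z})\geq -\delta\}$ and its connected component $P_0(\delta)$ of $\bm{0}$; that argument shows $\bigcap_{\delta>0}P_0(\delta)=P_0$. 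Fixing $\delta>0$, uniform convergence gives $\sup_{\bm{z}\in Z'}|(f_k^B)^{r,\alpha,\beta}(\bm{z})-f^{r,\alpha,\beta}(\bm{z})|<\delta$ for $k$ large enough, so every $\bm{z}\in P_0^{k,B}$ satisfies $f^{r,\alpha,\beta}(\bm{z})\geq -\delta$; since $P_0^{k,B}$ is connected and contains $\bm{0}$, it must lie within the component $P_0(\delta)$ of $\bm{0}$ in $P(\delta)$. In particular $(\bm{z}^*)_k^B\in P_0(\delta)$, and closedness of $P_0(\delta)$ forces $\bm{z}_\infty\in P_0(\delta)$ for every $\delta>0$, hence $\bm{z}_\infty\in P_0$ and $\bm{z}_\infty\leq\bm{z}^*$ componentwise. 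Monotonicity of $g$ together with $g_k^B\to g$ uniformly then produces $\limsup_{k\to\infty} g_k^B((\bm{z}^*)_k^B)=g(\bm{z}_\infty)\leq g(\bm{z}^*)$.
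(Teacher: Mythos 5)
Your proof is correct, and it takes a genuinely different route from the paper's. Both arguments start from the same uniform convergence (\ref{3:eqn:strong:G:convergence:A}) and (\ref{3:eqn:strong:G:convergence:B}), which you set up correctly, including the accounting for the point masses in $F_k^B$. The differences are in how the joint roots of the approximating systems are compared to $\hat{\bm{z}}$ and $\bm{z}^*$. For the lower bound, the paper introduces $D_\epsilon$ (the set where all $f^{r,\alpha,\beta}\in[0,\epsilon]$) and its componentwise infimum $\bm{z}_\epsilon$, shows $\bm{z}_\epsilon\uparrow\hat{\bm{z}}$ as $\epsilon\to0$, and uses the one-sided sandwich $f^{r,\alpha,\beta}-\epsilon\leq(f_k^A)^{r,\alpha,\beta}\leq f^{r,\alpha,\beta}$ to place $\hat{\bm{z}}_k^A\in D_\epsilon$, hence $\hat{\bm{z}}_k^A\geq\bm{z}_\epsilon$; no subsequence extraction is needed. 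You instead use compactness of $Z'$, pass to a convergent subsequence, identify the limit $\bm{z}_\infty$ as a joint root of $f^{r,\alpha,\beta}$ via uniform convergence, and invoke minimality of $\hat{\bm{z}}$ from Lemma~\ref{3:lem:existence:hatz}. For the upper bound, the paper brings in the additionally shocked system of Lemma~\ref{3:lem:z^*:epsilon}: it evaluates $f_k^B$ at the single point $\bm{z}^*(\epsilon)\in Z$, where all $f^{r,\alpha,\beta}<0$, and concludes $(\bm{z}^*)_k^B\leq\bm{z}^*(\epsilon)$ by the same boxing argument used for $\bm{z}^*$. You avoid that external device entirely by re-using the $\delta$-thickening $\bigcap_{\delta>0}P_0(\delta)=P_0$ already established in the proof of Lemma~\ref{3:lem:existence:hatz}, showing $P_0^{k,B}\subseteq P_0(\delta)$ for $k$ large and passing to a subsequential limit in the compact set $P_0(\delta)$. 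What each buys: the paper's construction is subsequence-free and only ever evaluates functions at fixed test points inside $Z$, so it never needs to enlarge $Z$; your argument is arguably more structural, unifies the two inequalities under a single compactness framework, and does not rely on the additionally-shocked-system lemma, at the cost of explicitly widening $Z$ to $Z'$ to accommodate $(\bm{z}^*)_k^B$ (which may slightly exceed $\bm{\zeta}$ for finite $k$ since $F_k^B$ adds out-weight mass on $D_k^c$). Your observation that $g$ is coordinate-wise non-decreasing (needed in both your steps) is correct, and the paper uses the same monotonicity implicitly when it writes $g_k^A(\hat{\bm{z}}_k^A)\geq g_k^A(\bm{z}_\epsilon)$.
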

\begin{proof}
First note that uniformly for all $\bm{z}\in Z$ and $h\in H$ it holds
\begin{align*}
&\left\vert \int_{D_k} h(\bm{z},\bm{x},\bm{y},v,l,m) \dd F_k^{A,B}(\bm{x},\bm{y},v,l,m) - \int_{D_k} h(\bm{z},\bm{x},\bm{y},v,l,m) \dd F(\bm{x},\bm{y},v,l,m)\right\vert\\
&\hspace{1.7cm}= \left\vert \int_{D_k} h(\bm{z},\bm{x},\bm{y},v,l,m) \dd \overline{F}_k^{A,B}(\bm{x},\bm{y},v,l,m) - \int_{D_k} h(\bm{z},\bm{x},\bm{y},v,l,m) \dd F(\bm{x},\bm{y},v,l,m)\right\vert\\
&\hspace{1.7cm}\leq k^{-1} \to 0,\quad\text{as }k\to\infty.
\end{align*}
Since further $\int_{D_k^c}v\dd F\to0$ and $\int_{D_k^c}y^{r,\alpha}\1\{m=\beta\}\dd F\to0$, as $k\to\infty$, and each $h\in H$ is bounded by the integrands $v$ or $y^{r,\alpha}\1\{m=\beta\}$, it holds uniformly for all $\bm{z}\in Z$ and $h\in H$ that $\int_{D_k^c}h\dd F\to0$.  Together with $\int_{D_k^c}h\dd F_k^A=0$, this implies
\begin{equation}\label{3:eqn:strong:G:convergence:A}
\int_{D_\infty}h(\bm{z},\bm{x},\bm{y},v,l,m)\dd F_k^A(\bm{x},\bm{y},v,l,m) - \int_{D_\infty} h(\bm{z},\bm{x},\bm{y},v,l,m)\dd F(\bm{x},\bm{y},v,l,m) = o(1)
\end{equation}
uniformly for all $\bm{z}\in Z$ and $h\in H$.

For $\{F_k^B\}_{k\in\N}$, we further need to consider the term $\int_{D_k^c}h(\bm{z},\bm{x},\bm{y},v,l,m)\dd F_k^B(\bm{x},\bm{y},v,l,m)$. Thus
\begin{align*}
\int_{D_k^c} v\, \dd F_k^B(\bm{x},\bm{y},v,l,m) &= \sum_{\beta\in[T]}\overline{s}_k^\beta \gamma_k^\beta, & \int_{D_k^c}y^{r,\alpha}\1\{m=\beta\}\dd F_k^B(\bm{x},\bm{y},v,l,m) &= (\overline{w}_k^\beta)^{r,\alpha} \gamma_k^\beta.
\end{align*}
All these quantities tend to $0$ as $k\to\infty$ (note that $(\overline{w}_k^\beta)^{r,\alpha}\gamma_k^\beta=2\int_{D_k^c}y^{r,\alpha}\1\{m=\beta\}\dd F$ if $\gamma_k^\beta>0$). Since each function $h\in H$ is bounded by one of the (finitely many) integrands from above, this implies that $\int_{D_k^c}h(\bm{z},\bm{x},\bm{y},v,l,m)\dd F_k^B(\bm{x},\bm{y},v,l,m) \to 0$, as $k\to\infty$, uniformly for all $\bm{z}\in Z$ and $h\in H$. Therefore we can conclude that also uniformly for all $\bm{z}\in Z$ and $h\in H$,
\begin{equation}\label{3:eqn:strong:G:convergence:B}
\int_{D_\infty}h(\bm{z},\bm{x},\bm{y},v,l,m)\dd F_k^B(\bm{x},\bm{y},v,l,m) - \int_{D_\infty} h(\bm{z},\bm{x},\bm{y},v,l,m)\dd F(\bm{x},\bm{y},v,l,m) = o(1).
\end{equation}
We now turn to the proof of the first statement: Let $\epsilon>0$ and define
\[ D_\epsilon:=\bigcap_{(r,\alpha,\beta)\in V}\{\bm{z}\in\R_{+,0}^V\,:\,f^{r,\alpha,\beta}(\bm{z})\in[0,\epsilon]\}. \]
Further let $\bm{z}_\epsilon\in\R_{+,0}^V$ be defined by $z_\epsilon^{r,\alpha,\beta}:=\inf_{\bm{z}\in D_\epsilon}z^{r,\alpha,\beta}$, $(r,\alpha,\beta)\in V$. Then in particular, $\bm{z}_\epsilon\leq\hat{\bm{z}}$ componentwise since $\hat{\bm{z}}\in D_\epsilon$. Further, $\bm{z}_\epsilon$ is clearly increasing componentwise as \mbox{$\epsilon\to0$.} Hence the limit $\tilde{\bm{z}}:=\lim_{\epsilon\to0}\bm{z}_\epsilon\leq\hat{\bm{z}}$ exists. 
Now note that for fixed $(r,\alpha,\beta)\in V$, by definition of $\bm{z}_\epsilon$, we find a sequence $(\bm{z}_n)_{n\in\N}\subset D_\epsilon$ such that $\lim_{n\to\infty}z_n^{r,\alpha,\beta}=z_\epsilon^{r,\alpha,\beta}$ and $\bm{z}_n\geq\bm{z}_\epsilon$ componentwise. By monotonicity and uniform continuity of $f^{r,\alpha,\beta}$ on $D_\epsilon$, we then get
\[ f^{r,\alpha,\beta}(\bm{z}_\epsilon) \leq f^{r,\alpha,\beta}(z_n^{1,1,1},\ldots,z_\epsilon^{r,\alpha,\beta},\ldots,z_n^{R,T,T}) = f^{r,\alpha,\beta}(\bm{z}_n)+o(n)\leq\epsilon + o(n) \]
and hence $f^{r,\alpha,\beta}(\bm{z}_\epsilon)\leq\epsilon$. Again by continuity of $f^{r,\alpha,\beta}$, we obtain $f^{r,\alpha,\beta}(\tilde{\bm{z}}) = \lim_{\epsilon\to0}f^{r,\alpha,\beta}(\bm{z}_\epsilon) \leq \lim_{\epsilon\to0}\epsilon=0$. Replacing $\hat{\bm{z}}$ by $\tilde{\bm{z}}$ in the proof of Lemma \ref{3:lem:existence:hatz}, we now get the existence of a joint root $\bar{\bm{z}}\leq\tilde{\bm{z}}$ of all the functions $f^{r,\alpha,\beta}$, $(r,\alpha,\beta)\in V$. Since $\hat{\bm{z}}$ is the smallest joint root by definition, it thus follows that $\tilde{\bm{z}}=\hat{\bm{z}}$. Now note that by \eqref{3:eqn:strong:G:convergence:A} for $k$ large enough we derive that $(f_k^A)^{r,\alpha,\beta}(\bm{z})\geq f^{r,\alpha,\beta}(\bm{z})-\epsilon$ for all $\bm{z}\in Z$. Further, by construction of $F_k^A$, it holds that $(f_k^A)^{r,\alpha,\beta}(\bm{z})\leq f^{r,\alpha,\beta}(\bm{z})$. In particular, we can conclude that $\hat{\bm{z}}_k^A\in D_\epsilon$ for $k$ large enough and hence $\hat{\bm{z}}_k^A\geq \bm{z}_\epsilon$. Thus, for each $\epsilon>0$ by \eqref{3:eqn:strong:G:convergence:A} we derive 
$\liminf_{k\to\infty}g_k^A(\hat{\bm{z}}_k^A) \geq \lim_{k\to\infty}g_k^A(\bm{z}_\epsilon) = g(\bm{z}_\epsilon)$. Finally, using continuity of $g$ and $\lim_{\epsilon\to0}\bm{z}_\epsilon=\hat{\bm{z}}$, we get the first statement:
\[ \liminf_{k\to\infty}g_k^A(\hat{\bm{z}}_k^A) \geq g(\hat{\bm{z}}) \]
If now as in the proof of Theorem \ref{3:thm:finitary:weights} $\bm{z}^*(\epsilon)$ is the largest joint root of the additionally shocked system, we derive by \eqref{3:eqn:strong:G:convergence:B} that for $k$ large enough it holds $\left(f_k^B\right)^{r,\alpha,\beta}(\bm{z}^*(\epsilon))\leq f^{r,\alpha,\beta}(\bm{z}^*(\epsilon))/2<0$ for all $(r,\alpha,\beta)\in V$ and hence $(\bm{z}^*)_k^B\leq\bm{z}^*(\epsilon)$ componentwise. (Assume $\E[W^{+,r,\alpha}\1\{A=\beta\}]>0$ for all $(r,\alpha,\beta)\in V$ such that $f^{r,\alpha,\beta}(\bm{z}^*(\epsilon))<0$. Otherwise, we can simply leave out the coordinates $z^{r,\alpha,\beta}$ in all the proof since the $(r,\alpha,\beta)$-coordinate of all joint roots will be $0$.) Again by \eqref{3:eqn:strong:G:convergence:B}, we then derive 
\[ \limsup_{k\to\infty}g_k^B\left((\bm{z}^*)_k^B\right) \leq \lim_{k\to\infty} g_k^B(\bm{z}^*(\epsilon)) = g(\bm{z}^*(\epsilon)) \]
and by letting $\epsilon\to0$,
\[ \limsup_{k\to\infty} g_k^B\left((\bm{z}^*)_k^B\right) \leq g(\bm{z}^*). \qedhere\]
\end{proof}

\begin{proof}[Proof of Theorem \ref{3:thm:general:weights}]
Let $\epsilon>0$. By Lemma \ref{3:lem:stochastic:domination}, we obtain
\[ \P\left(n^{-1}\mathcal{S}_n-g(\hat{\bm{z}})<-\epsilon\right)\leq \P\left(n^{-1}\left(\mathcal{S}_k^A\right)_n-g(\hat{\bm{z}})<-\epsilon\right). \]
Further, by Lemma \ref{3:lem:convergence:g}, for $k$ large enough, we have $g_k^A(\hat{z}_k^A)>g(\hat{\bm{z}})-\epsilon/2$ and hence
\[ \P\left(n^{-1}\mathcal{S}_n-g(\hat{\bm{z}})<-\epsilon\right) \leq \P\left(n^{-1}\left(\mathcal{S}_k^A\right)_n-g_k^A(\hat{\bm{z}}_k^A)<-\epsilon/2\right). \]
Applying Theorem \ref{3:thm:finitary:weights} to the finitary system, as $n\to\infty$ we derive \mbox{$\P\left(n^{-1}\mathcal{S}_n-g(\hat{\bm{z}}\right)<-\epsilon)\to0$,} which shows the first part of the theorem.

Similarly, for the second part, by Lemma \ref{3:lem:stochastic:domination}
\[ \P\left(n^{-1}\mathcal{S}_n-g(\bm{z}^*)>\epsilon\right) \leq \P\left(n^{-1}\left(\mathcal{S}_k^B\right)_n-g(\bm{z}^*)>\epsilon\right) \]
and by Lemma \ref{3:lem:convergence:g}, for $k$ large enough it holds that $g_k^B((\bm{z}^*)_k^B)<g(\bm{z}^*)+\epsilon/2$. Hence an application of Theorem \ref{3:thm:finitary:weights} yields that
\begin{align*}
\P\left(n^{-1}\mathcal{S}_n-g(\bm{z}^*)>\epsilon\right) &\leq \P\left(n^{-1}\left(\mathcal{S}_k^B\right)_n-g_k^B((\bm{z}^*)_k^B)>\frac{\epsilon}{2}\right) \to 0,\quad\text{as }n\to\infty. \qedhere
\end{align*}
\end{proof}

\subsection{Proofs for Section \ref{3:sec:resilience}}\label{3:ssec:proofs:resilience}
\begin{proof}[Proof of Theorem \ref{3:thm:resilience}]
Let $\gamma\in(0,1)$ and define
\[ (f^\gamma)^{r,\alpha,\beta}(\bm{z}):=(1-\gamma)f^{r,\alpha,\beta}(\bm{z})+\gamma\left(\zeta^{r,\alpha,\beta}-z\right). \]
Further, let
\[ P^\gamma:=\bigcap_{(r,\alpha,\beta)\in V}\{\bm{z}\in\R_{+,0}^V\,:\,(f^\gamma)^{r,\alpha,\beta}(\bm{z})\geq0\} \]
and denote by $P_0^\gamma$ the largest connected component of $P^\gamma$ containing $\bm{0}$. Finally, define $\bm{z}^*(\gamma)$ by
\[ (z^*)^{r,\alpha,\beta}(\gamma):=\sup_{\bm{z}\in P_0^\gamma}z^{r,\alpha,\beta}. \]
By the proof of Lemma \ref{3:lem:z^*:epsilon} we know that $\bm{z}^*(\gamma)\to\bm{0}$, as $\gamma\to0$, and hence also $g(\bm{z}^*(\gamma))\to0$, using continuity of $g$. Choose now $\gamma>0$ small enough such that $g(\bm{z}^*(\gamma))\leq\epsilon/3$ and $\delta>0$ small enough such that $(f^M)^{r,\alpha,\beta}(\bm{z})<(f^\gamma)^{r,\alpha,\beta}(\bm{z})$ uniformly for all $\bm{0}\leq\bm{z}\leq\bm{\zeta}$ componentwise and $\P(M=0)<\delta$. 
Then in particular $(\bm{z}^*)^M\leq\bm{z}^*(\gamma)$ and $g((\bm{z}^*)^M)\leq g(\bm{z}^*(\gamma)) \leq\epsilon/3$.

If we now possibly decrease $\delta$ such that $\delta\leq\epsilon/3$, then by Theorem \ref{3:thm:general:weights}, we derive for the final damage caused by defaulted banks in the shocked system $n^{-1}\mathcal{S}_n^M$ that w.\,h.\,p.
\[ n^{-1}\mathcal{S}_n^M \leq g^M((\bm{z}^*)^M) + \epsilon/3 \leq g((\bm{z}^*)^M) + 2\epsilon/3 \leq \epsilon. \qedhere \]
\end{proof}

\begin{proof}[Proof of Lemma \ref{3:lem:z0}]
Let $P_0(\epsilon,I)$ denote the largest connected subset of
\[ P(\epsilon,I) := \bigcap_{(r,\alpha,\beta)\in V}\left\{\bm{z}\in\R_{+,0}^V\,:\,f^{r,\alpha,\beta}(\bm{z})\geq -\epsilon\1\{(r,\alpha,\beta)\in I\}\right\} \]
containing $\bm{0}$. Then by replacing $P_0$ in the proof of Lemma~\ref{3:lem:existence:hatz} with $P_0(\epsilon,I)$, we obtain existence of a smallest (componentwise) point $\hat{\bm{z}}(\epsilon,I)\in\R_{+,0}^V$ such that $f^{r,\alpha,\beta}(\hat{\bm{z}}(\epsilon,I))=-\epsilon$ for $(r,\alpha,\beta)\in I$ and $f^{r\alpha,\beta}(\hat{\bm{z}}(\epsilon,I))=0$ for $(r,\alpha,\beta)\in V\backslash I$. 
In particular, $\hat{\bm{z}}(\epsilon,I)\in P_0(\epsilon,I)$. Let now
\[ T(\epsilon,I) := \bigcap_{(r,\alpha,\beta)\in V}\left\{\bm{z}\in\R_{+,0}^V\,:\,f^{r,\alpha,\beta}(\bm{z})\leq -\epsilon\1\{(r,\alpha,\beta)\in I\}\right\}. \]
Then clearly $\hat{\bm{z}}(\epsilon,I)\in T(\epsilon,I)$. Further, in the proof of the existence of $\hat{\bm{z}}(\epsilon,I)$ we can 
use any upper bound $\bm{z}\in T(\epsilon,I)$, which shows that $\hat{\bm{z}}(\epsilon,I)\leq\bm{z}$ componentwise. In particular, 
$\hat{\bm{z}}(\epsilon,I)$ is monotone in $\epsilon$ and therefore $\tilde{\bm{z}}(I):=\lim_{\epsilon\to0+}\hat{\bm{z}}(\epsilon,I)$ exists.

Let now $\bar{\bm{z}}\in T(I)$ arbitrary. Then there exists a sequence $(\bm{z}_k)_{k\in\N}\subset\R_{+,0}^V$ with $f^{r,\alpha,\beta}(\bm{z}_k)<0$ for $(r,\alpha,\beta)\in I$ respectively $f^{r,\alpha,\beta}(\bm{z}_k)\leq0$ for $(r,\alpha,\beta)\in V\backslash I$ such that $\lim_{k\to\infty}\bm{z}_k=\bar{\bm{z}}$. By finiteness of $I$, we can then find $\epsilon_k>0$ such that $f^{r,\alpha,\beta}(\bm{z}_k)\leq-\epsilon_k\1\{(r,\alpha,\beta)\in I\}$ for any $(r,\alpha,\beta)\in V$ and $k\in\N$. In particular, $\bm{z}_k\in T(\epsilon_k,I)$ and hence $\bm{z}_k\geq \hat{\bm{z}}(\epsilon_k,I)\geq\tilde{\bm{z}}$. As $k\to\infty$, we can thus conclude that $\tilde{\bm{z}}\leq\bar{\bm{z}}$ for any $\bar{\bm{z}}\in T(I)$ and hence $\tilde{\bm{z}}\leq\bm{z}_0(I)$. On the other hand, $\tilde{\bm{z}}(I)\in T(I)$ by definition and therefore $\tilde{\bm{z}}(I)=\bm{z}_0(I)$.

Finally, note that
\[ \bm{z}_0(I)=\lim_{\epsilon\to0+}\hat{\bm{z}}(\epsilon,I) \in \bigcap_{\epsilon>0}P_0(\epsilon,I)=P_0, \]
where the last equality follows from $\bigcap_{\epsilon>0}P_0(\epsilon,I)\subset P$ and that $\bigcap_{\epsilon>0}P_0(\epsilon,I)$ must be a connected set containing $\bm{0}$ since $P_0(\epsilon,I)$ is a chain of connected, compact sets containing $\bm{0}$.
\end{proof}

\begin{proof}[Proof of Theorem \ref{3:thm:non-resilience}]
Let $\hat{\bm{z}}^M$ denote the analogue of $\hat{\bm{z}}$ for the ex post shocked system. Then 
\begin{align*}
&f^{r,\alpha,\beta}(\hat{\bm{z}}^M) + \E\Bigg[W^{+,r,\alpha}\P\Bigg(\sum_{s\in[R]}s\mathrm{Poi}\Bigg(\sum_{\gamma\in[T]}W^{-,s,\gamma}(\hat{z}^M)^{s,\beta,\gamma}\Bigg)\leq C-1\Bigg)\1\{A=\beta\}\1\{M=0\}\Bigg]\\
&= (f^M)^{r,\alpha,\beta}(\hat{\bm{z}}^M) = 0
\end{align*}
and hence $f^{r,\alpha,\beta}(\hat{\bm{z}}^M)\leq0$ with equality if and only if $\E[W^{+,r,\alpha}\1\{A=\beta\}\1\{M=0\}]=0$. Define now
\[ \epsilon:=-\max_{(r,\alpha,\beta)\in I}f^{r,\alpha,\beta}(\hat{\bm{z}}^M)>0, \]
so that $\hat{\bm{z}}^M\in T(\epsilon,I)$, where $T(\epsilon,I)$ as in the proof of Lemma \ref{3:lem:z0}. In the construction of $\hat{\bm{z}}(\epsilon,I)$ (see Lemma \ref{3:lem:z0}), we can then use the upper bound $\hat{\bm{z}}^M$ and obtain that 
$\hat{\bm{z}}(\epsilon,I)\leq\hat{\bm{z}}^M$ and hence $\hat{\bm{z}}^M\geq\bm{z}_0(I)$. 
We can then apply Theorem \ref{3:thm:general:weights} to conclude that
\[ \lim_{n\to\infty}\P\left(n^{-1}\mathcal{S}_n^M < g(\bm{z}_0(I))-\epsilon\right) \leq \lim_{n\to\infty} \P\left(n^{-1}\mathcal{S}_n^M < g\left(\hat{\bm{z}}^M\right) - \epsilon\right) = 0 \]
and hence $n^{-1}\mathcal{S}_n^M \geq g(\bm{z}_0(I))-\epsilon$ w.\,h.\,p.
\end{proof}

\begin{proof}[Proof of Lemma \ref{3:lem:z0:equals:z*}]
By Lemma \ref{3:lem:z0} clearly $\bm{z}_0(\tilde{V})\leq\bm{z}^*$. Assume now that $\bm{z}_0(\tilde{V})\lneq\bm{z}^*$. Then for some $(r,\alpha,\beta)\in\tilde{V}$ it must hold $z_0^{r,\alpha,\beta}(\tilde{V})<(z^*)^{r,\alpha,\beta}$ and by the construction of $\bm{z}_0(\tilde{V})$ in the proof of Lemma \ref{3:lem:z0} we can find $\epsilon>0$ such that $z_0^{r,\alpha,\beta}(\tilde{V})\leq\hat{z}^{r,\alpha,\beta}(\epsilon,\tilde{V})<(z^*)^{r,\alpha,\beta}$. Now by the definition of $\bm{z}^*$ and connectedness of $P_0$, we find a point $P_0\ni\tilde{\bm{z}}\leq\hat{\bm{z}}(\epsilon,\tilde{V})$ such that $\tilde{z}^{r,\alpha,\beta}=\hat{z}^{r,\alpha,\beta}(\epsilon,\tilde{V})$. But then $f^{r,\alpha,\beta}(\tilde{\bm{z}})\leq f^{r,\alpha,\beta}(\hat{\bm{z}}(\epsilon,\tilde{V}))=-\epsilon<0$ which contradicts $\tilde{\bm{z}}\in P_0$.
\end{proof}

\subsection{Proofs for Sections \ref{3:sec:capital:block:model} and \ref{3:sec:exposure:model}}
\begin{proof}[Proof of Proposition \ref{3:prop:limsup:liminf}]
Consider the case that $\nu^\beta<1$. We start with the upper bound and first derive a certain Chernoff bound. To this end, let $\{\lambda^s\}_{s\in[R]}\subset\R_{+,0}$ and $c\geq\sum_{s\in[R]}s\lambda^s$. Then using Markov's inequality, we compute for a sum of independent Poisson random variables and arbitrary $\theta\geq0$ that

\vspace*{-0.4cm}
\begin{align*}
\P\Bigg( \sum_{s\in[R]}s\mathrm{Poi}(\lambda^s)\geq c \Bigg) &= \P\Bigg( \exp\Bigg\{\theta \sum_{s\in[R]}s\mathrm{Poi}(\lambda^s)\Bigg\} \geq e^{\theta c} \Bigg) \leq e^{-\theta c} \prod_{s\in[R]} \E\Bigg[ \exp\Bigg\{ \theta s \mathrm{Poi}(\lambda^s) \Bigg\} \Bigg]\\
&= e^{-\theta c} \prod_{s\in[R]} \exp\left\{\lambda^s \left(e^{\theta s}-1\right)\right\} \leq e^{-\theta c} \prod_{s\in[R]}\exp\left\{ \frac{s\lambda^s}{R}\left( e^{\theta R}-1 \right) \right\}.
\end{align*}

\vspace*{-0.15cm}
\noindent This expression is minimized for $\theta^*=R^{-1}(\log c - \log(\sum_{s\in[R]}s\lambda^s))\geq0$ and thus
\begin{equation}\label{3:eqn:chernoff:bound}
\P\Bigg( \sum_{s\in[R]}s\mathrm{Poi}(\lambda^s)\geq c \Bigg) \leq \exp\Bigg\{ R^{-1}\Bigg( c \log\Bigg( \frac{e\sum_{t\in[R]}t\lambda^t}{c} \Bigg) - \sum_{s\in[R]}s\lambda^s \Bigg) \Bigg\}.
\end{equation}
Let now $\mu,h\in\R_+$, $0\leq\nu<1$ and $\{d^s\}_{s\in[R]}\subset\R_{+,0}$. Moreover, set $\lambda^s=h d^s$, $d=\sum_{s\in[R]}sd^s$ and $c=\lceil\mu d^\nu\rceil$. Then for $d \leq (h\mu^{-1})^\frac{1}{\nu-1}$ and $\omega(u):=u-\mu u^\nu \log(e \mu^{-1} u^{1-\nu})$,
\[ \P\Bigg( \sum_{s\in[R]}s\mathrm{Poi}(h d^s) \geq c \Bigg) \leq \exp\left\{ -h^\frac{\nu}{\nu-1} \omega\left( d z^\frac{1}{1-\nu} \right) R^{-1} \right\}. \]
Let now $\delta>0$ arbitrary. Then for $\tilde{d}$ large enough and $\tilde{d}<d\leq((1+\delta)eh\mu^{-1})^\frac{1}{\nu-1}$, we derive
\begin{equation}\label{3:eqn:inequ:1}
\P\Bigg( \sum_{s\in[R]}s\mathrm{Poi}(h d^s) \geq c \Bigg) \leq \left( \frac{edh}{\mu d^\nu} \right)^\frac{\mu d^\nu}{R} \leq \left( \frac{edh}{\mu d^\nu} \right)^2 (1+\delta)^{2-R^{-1}\mu d^\nu} \leq h^2 = o(h).
\end{equation}
Also for $((1+\delta)eh\mu^{-1})^\frac{1}{\nu-1}<d\leq((1+\delta)h\mu^{-1})^\frac{1}{\nu-1}$, we obtain
\begin{equation}\label{3:eqn:inequ:2}
\P\Bigg( \sum_{s\in[R]}s\mathrm{Poi}(h d^s) \geq c \Bigg) \leq \exp\left\{ -h^\frac{\nu}{\nu-1}k \right\} = o(h),
\end{equation}
where
\[ k:=\min\left\{\omega(u)\,:\,\left((1+\delta)e\mu^{-1}\right)^\frac{1}{\nu-1} \leq u \leq \left((1+\delta)\mu^{-1}\right)^\frac{1}{\nu-1}\right\}>0. \]
Finally, for $d\leq \tilde{d}$ and $c\geq R+1$, we compute
\begin{align}
\P\left( \sum_{s\in[R]}s\mathrm{Poi}(h d^s) \geq c \right) &\leq \P\left( \sum_{s\in[R]}s\mathrm{Poi}(h d^s) \geq R+1 \right) \leq \P\left( \mathrm{Poi}\left(h \sum_{s\in[R]} d^s\right) \geq \frac{R+1}{R} \right)\nonumber\\
&\leq \P\left( \mathrm{Poi}(hd)\geq2 \right) \leq h^2d^2 \leq h^2 \tilde{d}^2 = o(h).\label{3:eqn:inequ:3}
\end{align}
Combining \eqref{3:eqn:inequ:1}, \eqref{3:eqn:inequ:2} and \eqref{3:eqn:inequ:3}, we can thus conclude that
\begin{align*}
&\E\left[ W^{+,r,\alpha} \P\left( \sum_{s\in[R]}s\mathrm{Poi}\left( \sum_{\gamma\in[T]} W^{-,s,\gamma} v^{s,\beta,\gamma} h \right) \geq C(\bm{v}) \right) \1\{A=\beta\}\right.\\
&\hspace{7.7cm}\left. \times \1\left\{ E(\bm{v}) \leq \left( \frac{(1+\delta)h\Vert\bm{v}\Vert^{\nu^\beta}}{\mu^\beta} \right)^\frac{1}{\nu^\beta-1} \right\} \right] = o(h)
\end{align*}
and hence
\[ f^{r,\alpha,\beta}(h\bm{v}) \leq o(h) + \E\left[W^{+,r,\alpha} \1\{A=\beta\} \1\left\{ E(\bm{v}) \leq \left( \frac{(1+\delta)h\Vert\bm{v}\Vert^{\nu^\beta}}{\mu^\beta} \right)^\frac{1}{\nu^\beta-1} \right\} \right] - hv^{r,\alpha,\beta}. \]
In particular, this yields
\begin{align*}
&\limsup_{h\to0+}h^{-1}f^{r,\alpha,\beta}(h\bm{v})\\
&\hspace{1cm}\leq \limsup_{h\to0+}h^{-1}\E\left[ W^{+,r,\alpha}\1\{A=\beta\}\1\left\{ E(\bm{v}) > \left( \frac{(1+\delta)h \Vert\bm{v}\Vert^{\nu^\beta}}{\mu^\beta} \right)^\frac{1}{\nu^\beta-1} \right\} \right] - v^{r,\alpha,\beta}\\
&\hspace{1cm}= (1+\delta)\Vert\bm{v}\Vert \limsup_{h\to0+}h^{-1}\E\left[ W^{+,r,\alpha}\1\{A=\beta\}\1\left\{ \frac{E(\bm{v})}{\Vert\bm{v}\Vert} > \left( \frac{h}{\mu^\beta} \right)^\frac{1}{\nu^\beta-1} \right\} \right] - v^{r,\alpha,\beta}\\
\end{align*}
and letting $\delta\to0$ ends the proof of the upper bound.

For the lower bound, a similar calculation as above yields that for $c-1\leq\sum_{s\in[R]}s\lambda^s$ and $\theta\leq0$, it holds
\[ \P\left( \sum_{s\in[R]}s\mathrm{Poi}(\lambda^s) \leq c-1 \right) \leq e^{-\theta (c-1)} \prod_{s\in[R]}\exp\left\{ s\lambda^s \left( e^\theta-1 \right) \right\} \]
and minimizing over $\theta$ we derive
\[  \P\left( \sum_{s\in[R]}s\mathrm{Poi}(\lambda^s) \leq c-1 \right) \leq \exp\left\{ (c-1) \log\left( \frac{e\sum_{t\in[R]}t\lambda^t}{c-1} \right) - \sum_{s\in[R]}s\lambda^s \right\}. \]
Let now $\delta>0$ arbitrary. Then for $c-1\leq (1-\delta)\sum_{s\in[R]}s\lambda^s$ and $\sum_{s\in[R]}s\lambda^s$ large enough it holds
\[ \P\left( \sum_{s\in[R]}s\mathrm{Poi}(\lambda^s) \leq c-1 \right) \leq \exp\left\{ \left( (1-\delta)\log\left( \frac{e}{1-\delta} \right) - 1 \right) \sum_{s\in[R]}s\lambda^s \right\} \leq \delta. \]
In particular,
\begin{align*}
&\liminf_{h\to0+} h^{-1}f^{r,\alpha,\beta}(h\bm{v})\\
&\hspace{1cm} \geq (1-\delta) \liminf_{h\to0+}h^{-1}\E\left[ W^{+,r,\alpha}\1\{A=\beta\}\1\left\{ E(\bm{v}) > \left( \frac{(1-\delta)h \Vert\bm{v}\Vert^{\nu^\beta}}{\mu^\beta} \right)^\frac{1}{\nu^\beta-1} \right\} \right] - v^{r,\alpha,\beta}\\
&\hspace{1cm} = (1-\delta)^2 \Vert\bm{v}\Vert \liminf_{h\to0+}h^{-1}\E\left[ W^{+,r,\alpha}\1\{A=\beta\}\1\left\{ \frac{E(\bm{v})}{\Vert\bm{v}\Vert} > \left( \frac{h}{\mu^\beta} \right)^\frac{1}{\nu^\beta-1} \right\} \right] - v^{r,\alpha,\beta}
\end{align*}
and letting $\delta\to0$ ends the proof for $\nu^\beta<1$.

Consider now the case that $\nu^\beta\geq1$. Clearly, $\liminf_{h\to0}h^{-1}f^{r,\alpha,\beta}(h\bm{v})\geq -v^{r,\alpha,\beta}$ by the definition of $f^{r,\alpha,\beta}$. For the upper bound, let $\mu\in\R_+$, $\nu\geq1$, $\{d^s\}_{s\in[R]}\subset\R_{+,0}$, $\tilde{d}>(R/\mu)^{1/\nu}$ and $h\leq \mu\tilde{d}^{\nu-1}/e$. Moreover, set $\lambda^s=hd^s$, $d=\sum_{s\in[R]}sd^s$ and $c=\lceil\mu d^\nu\rceil$. Then for $d\geq\tilde{d}$, by \eqref{3:eqn:chernoff:bound} we derive
\[ \P\left( \sum_{s\in[R]}s\mathrm{Poi}(\lambda^s)\geq c \right) \leq \exp\left\{ R^{-1}\left(\mu d^\nu \log\left(\frac{ehd}{\mu d^\nu}\right) - hd \right) \right\}
\leq \left(\frac{eh}{\mu \tilde{d}^{\nu-1}}\right)^\frac{\mu \tilde{d}^\nu}{R}
= o(h). \]
Together with \eqref{3:eqn:inequ:3} for $d\leq\tilde{d}$, we conclude that for all $d\geq0$,
\[ \P\left( \sum_{s\in[R]}s\mathrm{Poi}(\lambda^s)\geq c \right) \leq \left(\frac{eh}{\mu \tilde{d}^{\nu-1}}\right)^{\mu \tilde{d}^\nu/R} + h^2\tilde{d}\,^2 \]
and hence $f^{r,\alpha,\beta}(h\bm{v})\leq o(h)-v^{r,\alpha,\beta}$. In particular, this yields
\[ \limsup_{h\to0}h^{-1}f^{r,\alpha,\beta}(h\bm{v})\leq -v^{r,\alpha,\beta} \]
and ends the proof.
\end{proof}

\begin{proof}[Proof of Proposition \ref{3:prop:directional:derivative}]
For $\nu^\beta\geq1$, this is just the result from Proposition \ref{3:prop:limsup:liminf}. Therefore, assume that $\nu^\beta<1$ in the following.

For any fixed $\delta>0$, we find $\tilde{u}\geq0$ large enough such that for $u\geq\tilde{u}$ it holds
\[ 1 - (1+\delta) \left(\frac{u}{w_\text{min}^{+,r,\alpha,\beta}}\right)^{1-k^{+,r,\alpha,\beta}} \leq F_{W^{+,r,\alpha}\vert_{A=\beta}}(u) \leq 1 - (1-\delta) \left(\frac{u}{w_\text{min}^{+,r,\alpha,\beta}}\right)^{1-k^{+,r,\alpha,\beta}} \]
and
\[ 1- (1+\delta) \left(\frac{u}{e_\text{min}^\beta(\bm{v})}\right)^{1-k^{-,\beta}} \leq F_{E(\bm{v})\vert_{A=\beta}}(u) \leq 1- (1-\delta) \left(\frac{u}{e_\text{min}^\beta(\bm{v})}\right)^{1-k^{-,\beta}}. \]
Thus for $h$ large enough
\begin{align*}
&\E\left[ W^{+,r,\alpha}\1\{A=\beta\}\1\left\{ \frac{E(\bm{v})}{\Vert\bm{v}\Vert}>\left( \frac{h}{\mu^\beta} \right)^\frac{1}{\nu^\beta-1} \right\} \right]\\
&\hspace{1cm}= \int_0^\infty \P\left( W^{+,r,\alpha}>y,A=\beta,E(\bm{v})>\Vert\bm{v}\Vert\left( \frac{h}{\mu^\beta} \right)^\frac{1}{\nu^\beta-1} \right)\,\mathrm{d}y\\
&\hspace{1cm}\leq \int_0^{\tilde{u}} \P\left( A=\beta,E(\bm{v})>\Vert\bm{v}\Vert\left( \frac{h}{\mu^\beta} \right)^\frac{1}{\nu^\beta-1} \right)\,\mathrm{d}y\\
&\hspace{2cm} + \int_{\tilde{u}}^\infty \P\left( F_{W^{+,r,\alpha}\vert_{A=\beta}}\left( W^{+,r,\alpha}\big\vert_{A=\beta} \right)>1-(1+\delta)\left(\frac{y}{w_\text{min}^{+,r,\alpha,\beta}}\right)^{1-k^{+,r,\alpha,\beta}},\right.\\
&\hspace{5.9cm}\left.  F_{E(\bm{v})\vert_{A=\beta}}\left( E(\bm{v})\big\vert_{A=\beta} \right)>1-(1+\delta)p(h)  \right)\mathrm{d}y\; \P(A=\beta)\\
&\hspace{1cm}\leq \int_0^{\tilde{u}} \P(A=\beta)\P\left( E(\bm{v})\big\vert_{A=\beta}>\Vert\bm{v}\Vert\left( \frac{h}{\mu^\beta} \right)^\frac{1}{\nu^\beta-1} \right)\,\mathrm{d}y\\
&\hspace{2cm} + \int_{\tilde{u}\theta^{-1}(h)}^\infty \P\left( F_{W^{+,r,\alpha}\vert_{A=\beta}}\left( W^{+,r,\alpha}\big\vert_{A=\beta} \right)>1-(1+\delta)p(h)x^{1-k^{+,r,\alpha,\beta}},\right.\\
&\hspace{5.9cm}\left.  F_{E(\bm{v})\vert_{A=\beta}}\left( E(\bm{v})\big\vert_{A=\beta} \right)>1-(1+\delta)p(h)  \right)\mathrm{d}x\; \theta(h) \P(A=\beta)  \\
&\hspace{1cm}\leq \tilde{u} (1+\delta)e_\text{min}^\beta(\bm{v})\Vert\bm{v}\Vert^{1-k^{-,\beta}}\left( \frac{h}{\mu^\beta} \right)^\frac{1-k^{-,\beta}}{\nu^\beta-1} \\
&\hspace{2cm} + \int_0^\infty \P\left( F_{W^{+,r,\alpha}\vert_{A=\beta}}\left( W^{+,r,\alpha}\big\vert_{A=\beta} \right)>1-(1+\delta)p(h)x^{1-k^{+,r,\alpha,\beta}}\,\right\vert\\
&\hspace{4.05cm}\left. F_{E(\bm{v})\vert_{A=\beta}}\left( E(\bm{v})\big\vert_{A=\beta} \right)>1-(1+\delta)p(h)  \right)\mathrm{d}x\; \theta(h)\P(A=\beta) (1+\delta)p(h)\\
&\hspace{1cm}= o(h) + \left( \int_0^\infty \Lambda^{r,\alpha,\beta}\left( x^{1-k^{+,r,\alpha,\beta}} \right)\,\mathrm{d}x + o(1) \right)\\
&\hspace{6.25cm} \times\P(A=\beta) w_\text{min}^{+,r,\alpha,\beta} \left( \frac{\Vert\bm{v}\Vert}{e_\text{min}^\beta(\bm{v})}\left( \frac{h}{\mu^\beta} \right)^\frac{1}{\nu^\beta-1} \right)^{\nu_c^{r,\alpha,\beta}-1} (1+\delta)
\end{align*}
where
\[ p(h) = \left( \frac{\Vert\bm{v}\Vert}{e_\text{min}^\beta(\bm{v})}\left( \frac{h}{\mu^\beta} \right)^\frac{1}{\nu^\beta-1} \right)^{1-k^{-,\beta}}, \]
we substituted $y=\theta(h)x$ with
\[ \theta(h) = w_\text{min}^{+,r,\alpha,\beta} \left( p(h) \right)^\frac{1}{1-k^{+,r,\alpha,\beta}}, \]
and in the last line we used dominated convergence noting that (bounding by comonotone dependence) $\Lambda^{r,\alpha,\beta}(x^{1-k^{+,r,\alpha,\beta}})\leq 1\wedge x^{1-k^{+,r,\alpha,\beta}}$ which is integrable as $k^{+,r,\alpha,\beta}>2$.

In particular, with Proposition \ref{3:prop:limsup:liminf}, we derive that $\limsup_{h\to0+}h^{-1}f^{r,\alpha,\beta}(h\bm{v}) \leq -v^{r,\alpha,\beta}$ for $1>\nu^\beta>\nu_c^{r,\alpha,\beta}$, and for $\nu^\beta=\nu_c^{r,\alpha,\beta}$ as $\delta\to0$,
\[ \limsup_{h\to0+}h^{-1}f^{r,\alpha,\beta}(h\bm{v}) \leq \left( \frac{\mu_c^{r,\alpha,\beta}}{\mu^\beta} - 1 \right) v^{r,\alpha,\beta} \]
For the lower bounds, similarly as above
\begin{align*}
&\E\left[ W^{+,r,\alpha}\1\{A=\beta\}\1\left\{ \frac{E(\bm{v})}{\Vert\bm{v}\Vert}>\left( \frac{h}{\mu^\beta} \right)^\frac{1}{\nu^\beta-1} \right\} \right]\\
&\hspace{1cm}\geq \int_{\tilde{u}\theta^{-1}(h)}^\infty \P\left( F_{W^{+,r,\alpha}\vert_{A=\beta}}\left( W^{+,r,\alpha}\big\vert_{A=\beta} \right)>1-(1-\delta)p(h)x^{1-k^{+,r,\alpha,\beta}}\,\right\vert\\
&\hspace{4.0cm}\left. F_{E(\bm{v})\vert_{A=\beta}}\left( E(\bm{v})\big\vert_{A=\beta} \right)>1-(1-\delta)p(h)  \right)\mathrm{d}x\; \theta(h) \P(A=\beta) (1-\delta)p(h) \\
&\hspace{1cm}\geq \left( \int_0^\infty \Lambda^{r,\alpha,\beta}\left( x^{1-k^{+,r,\alpha,\beta}} \right)\,\mathrm{d}x + o(1) \right)\\
&\hspace{2cm}\times \P(A=\beta) w_\text{min}^{+,r,\alpha,\beta} \left( \frac{\Vert\bm{v}\Vert}{e_\text{min}^\beta(\bm{v})}\left( \frac{h}{\mu^\beta} \right)^\frac{1}{\nu^\beta-1} \right)^{\nu_c^{r,\alpha,\beta}-1} (1-\delta),
\end{align*}
additionally noting that the lower integral bound $\tilde{u}\theta^{-1}(h)=o(1)$. Thus by Proposition \ref{3:prop:limsup:liminf}, we conclude that for $\nu^\beta<\nu_c^{r,\alpha,\beta}$,
\[ \liminf_{h\to0+}h^{-1}f^{r,\alpha,\beta}(h\bm{v}) = \infty,  \]
for $\nu^\beta=\nu_c^{r,\alpha,\beta}$ as $\delta\to0$,
\[ \liminf_{h\to0+}h^{-1}f^{r,\alpha,\beta}(h\bm{v}) \geq \left( \frac{\mu_c^{r,\alpha,\beta}}{\mu^\beta} - 1 \right) v^{r,\alpha,\beta} \]
and obviously for $\nu^\beta>\nu_c^{r,\alpha,\beta}$,
\[ \liminf_{h\to0+}h^{-1}f^{r,\alpha,\beta}(h\bm{v}) \geq -v^{r,\alpha,\beta}. \qedhere \]
\end{proof}

\begin{proof}[Proof of Theorem \ref{3:thm:amplification}]
Denote by $f_\epsilon^{r,\alpha,\beta}$, $(r,\alpha,\beta)\in V$, the functions for the system shocked by $M_\epsilon$. Then
\[ f_\epsilon^{r,\alpha,\beta}(h\bm{v}) \leq f^{r,\alpha,\beta}(h\bm{v}) + \E\left[ W^{+,r,\alpha}\1\{A=\beta\}\1\{M_\epsilon=0\} \right] \leq -\frac{h+o(h)}{B}v^{r,\alpha,\beta} + \epsilon v^{r,\alpha,\beta}. \]
In particular, choosing $h=(1+\delta)B\epsilon$ for some arbitrary $\delta>0$, we derive
\[ f_\epsilon^{r,\alpha,\beta}((1+\delta)B\epsilon\bm{v}) \leq (-\delta\epsilon+o(\epsilon))v^{r,\alpha,\beta}. \]
For $\epsilon>0$ small enough, this expression becomes negative and thus it holds $\bm{z}_\epsilon^*\leq(1+\delta)B\epsilon\bm{v}$. Finally, we can let $\delta\to0$ to finish the proof.
\end{proof}

\begin{proof}[Proof of Theorem \ref{3:exposures:final:fraction}]
The idea is similar as in Theorem \ref{2:thm:asymp:1}. We consider the contagion process in sequential form, i.\,e.~for steps $0\leq t\leq n-1$ we let
\begin{enumerate}[\alph*.]
\item $U(t)\subset[n]$ be the unexposed institutions at step $t$ with $U(0):=\{i\in[n]\,:\,c_i=0\}$,
\item $N(t)\subset[n]$ the solvent institutions at step $t$ with $N(0):=[n]\backslash U(0)$,
\item \label{3:sequential:c}the updated capitals $\{\tilde{c}_i(t)\}_{i\in[n]}$ with $\tilde{c}_i(0):= c_i$ for all $i\in[n]$
\end{enumerate}
and at step $t\in[n-1]$ we update those sets and quantities as follows:
\begin{enumerate}
\item \label{3:sequential:1}We choose an institution $v\in U(t-1)$ according to any rule.
\item \label{3:sequential:2}We expose $v$ to all of its solvent creditors in $N(t-1)$ by setting
\[ \tilde{c}_w(t):=\max\{0,\tilde{c}_w(t-1)-e_{v,w}\}. \]
\item \label{3:sequential:3}We let $N(t):=\{i\in N(t-1)\,:\,\tilde{c}_i(t)>0\}$ the new set of solvent institutions and $U(t):=(U(t-1)\backslash\{v\})\cup\{i\in N(t-1)\,:\,\tilde{c}_i(t)=0\}$ the new set of unexposed institutions.
\end{enumerate}
Then the rule chosen in Step \ref{3:sequential:1}.~determines an ordering for the defaults (exposition) of all institutions and thus also an ordering of the exposure list $\{E_i^{j,r}\}_{j\in[n]\backslash\{i\},r\in[R]}$ where unused exposures $E_i^{j,r}$ are moved to the end of this ordered list if $r>r(j,i)$. The ordering of the unused exposures can be chosen arbitrarily. Let then $\{E_i^s\}_{s\in[(n-1)R]}$ be the ordered enumeration of $i$'s exposure list. In particular, the random variable
\begin{equation}\label{3:eqn:proof:def:q}
p_i = \inf\left\{p\in\{0\}\cup[(n-1)R]\,:\,\sum_{1\leq s\leq p}E_i^s \geq c_i\right\}
\end{equation}
can be interpreted as a threshold in the sense of Section \ref{3:sec:default:fin} (there called capital) since equivalently to \ref{3:sequential:c}~we can keep track of updated thresholds $\tilde{p}_i(t)$ with $\tilde{p}_i(0)=p_i$ and in Steps \ref{3:sequential:2}.~and \ref{3:sequential:3}.~we set $\tilde{p}_w(t)=\max\{0,\tilde{p}_w(t-1)-r(v,w)\}$, $N(t)=\{i\in N(t-1)\,:\,\tilde{p}_i(t)>0\}$ and $U(t)=(U(t-1)\backslash\{v\})\cup\{i\in[n]\,:\,\tilde{p}_i(t)=0\}$. Conditioning on $\{p_i\}_{i\in[n]}$ we are thus in the setting of Section \ref{3:sec:asymptotic:results} -- note in particular that $p_i$ defined in \eqref{3:eqn:proof:def:q} has the same distribution as $q_i$ in Section \ref{3:sec:exposure:model} and thus by Assumption \ref{3:ass:regularity:thresholds} the conditioned threshold system almost surely defines a regular vertex sequence according to Definition \ref{3:def:regular:vertex:sequence} with a fixed deterministic limiting distribution function. An application of Theorem \ref{3:thm:general:weights} thus shows the desired result.
\end{proof}

\begin{proof}[Proof of Corollary \ref{3:cor:exposures:capital:requirements}]
By the proof of Proposition \ref{3:prop:directional:derivative} all we need to show is 
\begin{align*}
&\E\Bigg[W^{+,r,\alpha}\P\left( \sum_{s\in[R]}s\mathrm{Poi}\left( \sum_{\gamma\in[T]}W^{-,s,\gamma}hv^{s,\beta,\gamma} \right) \geq R+1 \right)\\
&\hspace{6cm} \times\1\left\{ Q\leq (1+\epsilon)\mu_c^\beta\left( \frac{E(\bm{v})}{\Vert\bm{v}\Vert} \right)^{\nu^\beta} \right\} \1\{A=\beta\}\Bigg] = o(h).
\end{align*}
But this follows by exactly the same means as in the proof of Theorem \ref{2:cor:threshold:res}, replacing $W^-$ by $\Vert\bm{v}\Vert^{-1}E(\bm{v})$ and noting that in the present setting
\[ \psi_{R+1}\left( \sum_{\gamma\in[T]}w^{-,1,\gamma}hv^{1,\beta,\gamma} , \ldots, \sum_{\gamma\in[T]}w^{-,R,\gamma}hv^{R,\beta,\gamma} \right) = o(h). \qedhere \]
\end{proof}

\cleardoublepage
\chapter{A Model for Fire Sales in Financial Networks}\label{chap:fire:sales}
In the previous chapters, we gave a detailed analysis of the contagion channel \emph{default contagion}. As the latest financial crisis revealed, however, system instability is driven by multiple different channels and \emph{fire sales} are of particular importance. In this chapter, we thus propose a model for the contagion effects due to distressed asset sales. We choose similar techniques as in the previous chapters and adopt in particular the asymptotic perspective which will allow us later in Chapter \ref{chap:fire:sales:default} to integrate our models into one. Similar as in Chapter \ref{chap:block:model} financial systems are described by a set of multi-dimensional functions. One of the biggest challenges, however, is connected to the fact that these functions can become discontinuous. We first derive results about the final default fraction and the final price impact in a system  hit by some initial shock in Section \ref{4:sec:fire:sales}, and then derive criteria for whether some initially unshocked system is vulnerable to small shocks as well as formulas for capital requirements sufficient to secure the system in Section \ref{4:sec:resilience}. In Section \ref{4:sec:applications}, we apply our theory to investigate positive and negative effects of asset diversification in financial systems and demonstrate the benefits of our systemic capital requirements compared to the classical risk management approach. Proofs are postponed to Section \ref{4:sec:proofs}.

\vspace*{-7pt}
\paragraph{My own contribution:} This chapter strongly resembles joint work with Nils Detering, Thilo Meyer-Brandis and Konstantinos Panagiotou \cite{Detering2018b}. I was significantly involved in the development of all parts of that paper and did much of the editorial work. In particular, I made major contributions to the design of the model and the contagion process, Lemma \ref{4:lem:final:state:cont:rho}, Proposition \ref{4:prop:final:state:bounds}, Lemma \ref{4:lem:existence:chi:hat}, Theorem \ref{4:thm:fire:sale:final:fraction}, Examples \ref{4:ex:cont} and \ref{4:ex:discont}, Theorem \ref{4:thm:resilience}, Corollary \ref{4:cor:resilience}, Theorem \ref{4:thm:non-resilience}, Corollaries  \ref{4:cor:non:resilience}, \ref{4:cor:one:asset:sales:default}, \ref{4:cor:intermediate:sales} and \ref{4:cor:multiple:assets}, Examples  \ref{4:ex:diversification} and \ref{4:ex:diversification:similarity}, Subsection \ref{4:ssec:initial:asset:shocks}, Remark \ref{4:rem:sequence:chi:*}, and Lemma \ref{4:lem:convergence:chi:hat}. 
Whereas in \cite{Detering2018b} the final default fraction was chosen as the systemic risk measure, here we include general systemic importance values as in the previous chapters.

\section{A Model for Fire Sales}\label{4:sec:fire:sales}
In this section, we define two models of fire sales. The first considers an explicitly given finite system of financial institutions and the second takes a stochastic, asymptotic perspective. We first describe the parameters and assumptions and then determine the final state of the system after the fire sales cascade has ended both in the deterministic as well as in the stochastic setting.

\vspace{-7pt}
\paragraph{Model parameters} We consider a financial system consisting of $n\in\N$ institutions which can invest in $M\in\N$ different (not perfectly liquid) assets or asset classes. That is, to each institution $i\in[n]:=\{1,\ldots,n\}$ we assign a number $x_i^m\in\R_{+,0}$ of held shares of asset $m\in[M]$ (or any other index set of size $M$). See Figure \ref{4:fig:bipartite} for an illustration. Further, we denote by $c_i\in\R_{+,\infty}:=\R_+\cup\{\infty\}$  the initial capital of institution $i$ (for example the equity for leveraged institutions or the portfolio value for institutions that exclusively invest in assets) and we assume that it incurs exogenous losses $\ell_i\in\R_{+,0}$ due to some shock event.
In the case of a market crash for instance, it could be that $\ell_i=\sum_{1\le m\le M} x_i^m\delta^mp^m$, where $p^m$ denotes the initial price of one share of asset $m\in[M]$ and $\delta^m\in(0,1]$ is the relative price shock on the asset. Finally, assign to each institution a value of systemic importance $s_i\in\R_{+,0}$ that measures the potential damage to economy and society in the case that institution $i$ defaults (see Chapter \ref{chap:systemic:risk} for more details). Let then the empirical distribution function $F_n:\R_{+,0}^{M}\times\R_{+,0}\times\R_{+,\infty}\times\R_{+,0}\to[0,1]$ of the institutions' parameters be denoted by
\begin{equation}
\label{4:eq:empDistFixedSize}
F_n(\bm{x},s,c,\ell) = n^{-1}\sum_{i\in[n]}\1\{x_i^1\leq x^1,\ldots,x_i^M\leq x^M, s_i\leq s, c_i\leq c, \ell_i\leq \ell\}
\end{equation}
and let in the following $(\bm{X}_n,S_n,C_n,L_n)$ be a random vector with distribution $F_n$.

\begin{figure}
\centering
\includegraphics[width=0.75\textwidth]{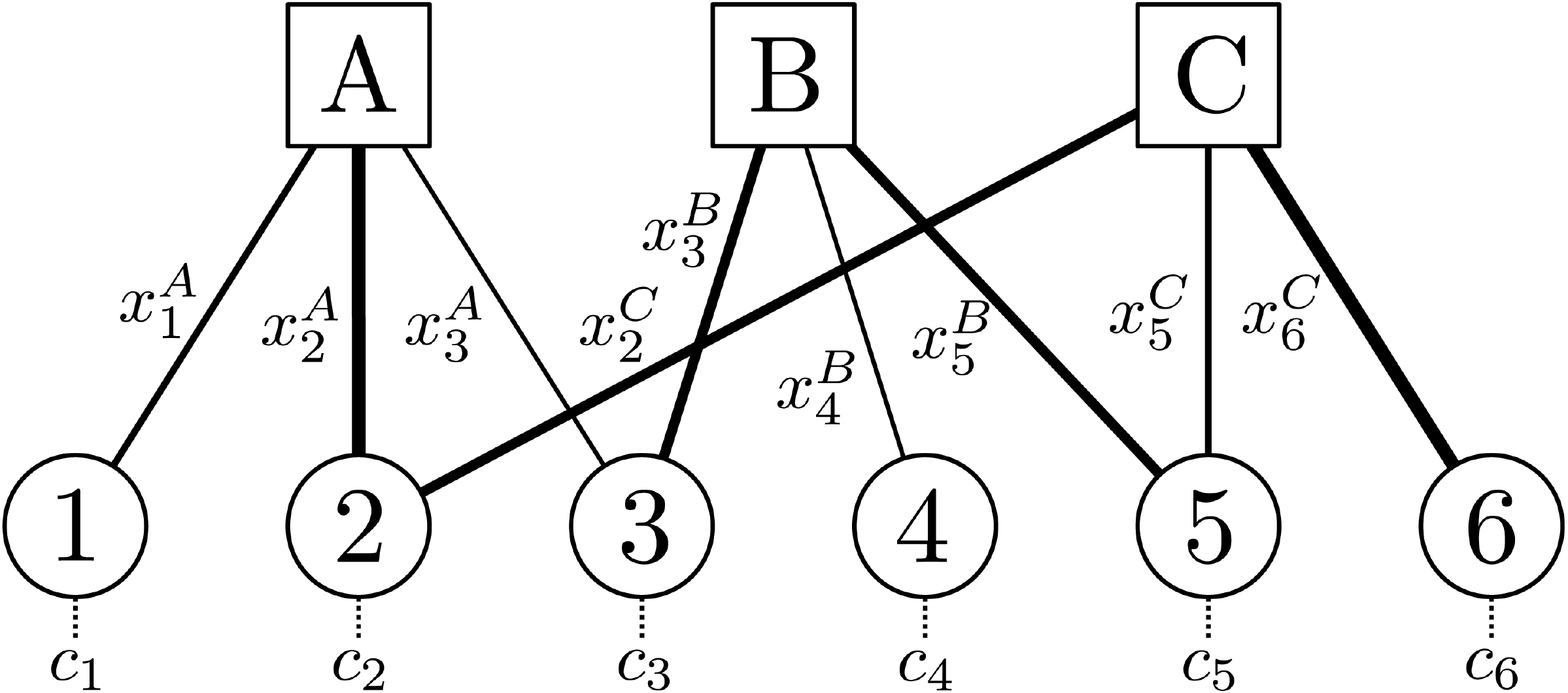} 
\caption{An illustration of a system with $n=6$ financial institutions (circles) and $M=3$ assets (squares). Edges represent investments of the institutions in the assets and their thickness indicates the investment volume $x_i^m$. Furthermore, capitals $c_i$ are attached to the institutions.}\label{4:fig:bipartite}
\end{figure}

\vspace{-7pt}
\paragraph{Asset sales} We assume that due to the exogenous losses some of the institutions are forced to liquidate parts of their asset holdings in order to comply with regulatory or market-imposed constraints (e.\,g.~leverage constraints), self-imposed risk preferences and policies to adjust the portfolio size, or to react to investor redemption. These sales are described by a non-decreasing
function $\rho:\R_{+,0}\to[0,1]$ such that each institution $i\in[n]$ incurring a loss of $\Lambda$ sells $x_i^m\rho(\Lambda/c_i)$ of its shares of asset $m$. 
The fraction $\Lambda/c_i$ describes the relative loss of institution $i$ measured against its initial equity. It is hence sensible to assume that
$$\rho(0)=0, \quad \rho(u)\leq1  \text{ and } \rho(u)=\rho(1) \text{ for all }  u\ge 1.$$
If at default of an institution the whole portfolio is to be liquidated, then $\rho(1)=1$. In general, however, the remaining assets at default may be frozen by the insolvency administrator and only be sold to the market on a longer time scale.
In this case,  $\rho(1)\in[0,1)$. Our assumptions on $\rho$ are rather mild and allow for a flexible description of various scenarios. Some concrete examples for sales functions $\rho$ are as follows:
\begin{itemize}
\itemsep-0.1em
\item The perhaps simplest non-trivial example is $\rho(u)=\1\{u\geq1\}$. It describes complete liquidation of the portfolio at default (if the institution is leveraged) resp.~dissolution.
\item A more involved example can be derived from a leverage constraint that prohibits an institution from investing more money into risky assets than a certain multiple $\lambda_\text{max}\geq 1$ of its capital/equity. In the one-asset case this means that $xp/c=:\lambda\leq\lambda_\text{max}$, where $x$ denotes the number of shares held, $p$ is the price per share and $c$ denotes the institution's capital. Assume now that while the asset price $p$ stays constant, the institution suffers an exogenous shock $\ell$ and $c$ is reduced to $\tilde{c}=c-\ell$. 
\begin{itemize}
\itemsep-0.2em
\item If $\ell\leq (1-\lambda/\lambda_\text{max})c$, then the leverage constraint $xp/\tilde{c}\leq\lambda_\text{max}$ is satisfied and no reaction is required by the institution.
\item However, if $\ell> (1-\lambda/\lambda_\text{max})c$, then the institution must divest some of its shares; suppose that it sells $\delta x$ of them, for some $0<\delta\le 1$. In order for the leverage constraint $(1-\delta)xp/\tilde{c}\leq\lambda_\text{max}$ to hold, it is easy to verify that $\delta\geq 1-(1-\frac\ell{c})\frac{\lambda_\text{max}}\lambda$.
\end{itemize}
The relative asset sales are hence given by $\rho(\ell/c)$, where $\rho(u):=(1-(1-u)\lambda_\text{max}/\lambda)^+$ for $u\in[0,1]$; this amounts to linear sales (with respect to the losses) once the threshold $1-\lambda\lambda_\text{max}^{-1}$ is reached.
\item Taking an alternative route in the previous example, suppose that the loss of the institution stems only from  a price change $p\to \tilde{p}<p$, which reduces the capital to $\tilde{c} = c - x(p-\tilde{p})$. If $\tilde{p}\geq p(1-\frac1\lambda)/(1-\frac1{\lambda_\text{max}})$, then no action is required  to comply with the leverage constraint. In the remaining cases, the institution must sell a fraction of $1-\lambda_\text{max}+\lambda_\text{max}\frac{p}{\tilde{p}}(1-\frac{1}{\lambda})$ of its assets, and we obtain  $\rho(u)=(1-\lambda_\text{max}(1-u)/(\lambda-u))^+$ for $u\in[0,1]$. 
\item Finally, it can be shown that for price changes combined with exogenous losses the sale function is bounded from above and below by the two previous cases. Leverage constraints hence imply a sale function which is $0$ below a certain threshold and then grows linearly. 
\end{itemize}
The actual reasoning behind asset sales is in general more complex than the presented examples, and this is why we consider a general sale function $\rho$ in this thesis. A natural assumption is that $\rho$ is right-continuous. By replacing $\rho(u)$ with its right-continuous modification \mbox{$\overline{\rho}(u):=\lim_{\epsilon\to0+}\rho((1+\epsilon)u)$} throughout this chapter, our results become applicable also for arbitrary (not right-continuous) sale functions $\rho$. Finally, denote by $\accentset{\circ}{\rho}(u):=\lim_{\epsilon\to0+}\rho((1-\epsilon)u)$ the left-continuous modification of $\rho$.

Let us remark that more generally we may choose different sale functions $\rho^m$ for all assets $m\in[M]$. It is then possible to replace the scalar function $\rho(u)$ by the diagonal matrix $\mathrm{diag}(\rho^1(u),\ldots,\rho^M(u))$ in all the following considerations. Further, we may partition the set of institutions into different types (banks, insurance companies, hedge funds, \dots) and choose different $\rho$ or $\rho^m$ for each type. Finally, our proofs in this chapter also work for other arguments than $ \Lambda/c_i$ for $\rho$ (where $\Lambda$ are the losses), but for simplicity we stick to this particular form.

\vspace{-7pt}
\paragraph{Price impact} Since the assets are not perfectly liquid (the limit order book has finite depth), the sales of shares triggered by the exogenous shock cause prices to go down. This on the other hand causes losses for all the institutions invested in the assets due to mark-to-market accounting.
We model the price loss of asset $m\in[M]$ by a continuous function $h^m:\R_{+,0}^M\to[0,1]$ which is non-decreasing in each coordinate. That is, if $\bm{y}=(y^1,\ldots,y^M) \in \R_{+,0}^M$ and $ny^m$ shares of asset $m$ have been sold in total, then we assume that the price of the asset $m$ drops by $h^m(\bm{y})$; hence each institution $i\in[n]$ suffers losses of $\bm{x}_i\cdot h(\bm{y})$, where $\bm{x}_i:=(x_i^1,\ldots,x_i^M)$ and $h(\bm{y})=(h^1(\bm{y}),\ldots,h^M(\bm{y}))$.

\pagebreak
Two remarks are appropriate. First, note the relative parametrization with the number of institutions $n$, where we assumed that $ny^m$ (instead of $y^m$) shares of asset $m$ are sold.
For fixed $n$ this is arbitrary; however, when we later consider the stochastic model (see Assumption \ref{4:ass:regularity:fire:sales}), this parametrization will turn out to be rather convenient to state our results.

Further note that $\bm{x}_i\cdot h(\bm{y})$ usually only describes an upper bound on institution $i$'s losses at the time that $\bm{y}$ shares were sold, since in general $i$ might already have sold parts of its shares at an earlier time and thus higher prices. Our model is in this sense conservative and also incorporates implementation losses (price changes for the particular trade itself) by selling institutions. Further, this will allow for explicit analytic results in the following. It is an interesting question for future research to extend the model so that it also accounts for intermediate sales.

\vspace{-7pt}
\paragraph{Fire sales} The fire sales process that we consider is described by the combination of the previous two ingredients. Triggered by some exogenous event the institutions start selling a portion of their assets hence driving down prices. Due to mark-to-market effects, however, this means that institutions experience further losses and are forced into further sales. This iterative process continues until the system stabilizes and no further sales, losses and price changes  occur.

\subsection{Fire Sales -- The Deterministic Model}\label{4:ssec:fire:sales:finite}
In this section, we provide a complete description of the final state of the system after the fire sales process is completed. We are interested in the vector $\bm{\chi}_n$ of the number of finally sold shares divided by $n$ after the fire sales process and hence the final price impact $h^m(\bm{\chi}_n)$ on any asset $m\in[M]$. Further, for leveraged institutions such as banks or hedge funds, it makes sense to consider also the the size of the set of finally defaulted institutions $\mathcal{D}_n$ and more generally the damage $\mathcal{S}_n=\sum_{i\in\mathcal{D}_n}s_i$ caused by their default. Given $\bm{\chi}_n$, we readily obtain that $\mathcal{D}_n:=\{i\in[n]\,:\,\ell_i+\bm{x}_i\cdot h(\bm{\chi}_n) \geq c_i\}$ and hence

\begin{equation}
\label{4:eq:finalDefFixedSize}
n^{-1}\mathcal{S}_n = n^{-1}\sum_{i\in[n]}s_i\1\{\ell_i+\bm{x}_i\cdot h(\bm{\chi}_n)\geq c_i\} = \E\left[S_n\1\left\{L_n+\bm{X}_n\cdot h(\bm{\chi}_n)\geq C_n\right\}\right].
\end{equation}
Note that for the special case of $s_i=1$ for all $i\in[n]$ considered in \cite{Detering2018b}, $\mathcal{S}_n=\vert\mathcal{D}_n\vert$ and by our results below we can thus also make statements about the final default fraction $n^{-1}\vert\mathcal{D}_n\vert$. 

In order to derive $\bm{\chi}_n$ we first consider the special case that the sale function $\rho$ is continuous. We consider the fire sales process in rounds, where in each round institutions react to the price changes from the previous round. Denote by $\bm{\sigma}_{(k)}=(\sigma_{(k)}^1,\ldots,\sigma_{(k)}^M)$ the vector of cumulatively sold shares in round $k$. For $k=1$, we readily obtain

\[ \bm{\sigma}_{(1)} = \sum_{i\in[n]}\bm{x}_i \rho\left(\frac{\ell_i}{c_i}\right) = n\E\left[\bm{X}_n\rho\left(\frac{L_n}{C_n}\right)\right]. \]

\noindent Similarly, in round $k\geq2$,

\begin{equation}
\label{4:eq;selling}
\bm{\sigma}_{(k)} = \sum_{i\in[n]} \bm{x}_i\rho\left(\frac{\ell_i + \bm{x}_i\cdot h(n^{-1}\bm{\sigma}_{(k-1)})}{c_i}\right) = n\E\left[\bm{X}_n\rho\left(\frac{L_n + \bm{X}_n\cdot h(n^{-1}\bm{\sigma}_{(k-1)})}{C_n}\right)\right].
\end{equation}

\noindent Thus, by induction $(\bm{\sigma}_{(k)})_{k\in\N}$ is non-decreasing componentwise and bounded by $n\E[\bm{X}_n]$. The limit $n\bm{\chi}_n := \lim_{k\to\infty}\bm{\sigma}_{(k)}$ -- the vector of finally sold shares -- must hence exist.

\pagebreak
\begin{lemma}\label{4:lem:final:state:cont:rho}
Consider the fire sales process with continuous $\rho$. Then  $\bm{\chi}_n = n^{-1}\lim_{k\to\infty}\bm{\sigma}_{(k)}$, the number of sold shares divided by $n$ at the end of the fire sales process, is the smallest (componentwise) solution of
\begin{equation}\label{4:eqn:fixed:point:sigma}
{
\E\left[\bm{X}_n\rho\left(\frac{L_n+\bm{X}_n\cdot h(\bm{\chi})}{C_n}\right)\right] - \bm{\chi} = \bm{0}.}
\end{equation}
\end{lemma}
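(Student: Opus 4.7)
The plan is to split the proof into two parts: first establishing that $\bm{\chi}_n$ satisfies the fixed point equation \eqref{4:eqn:fixed:point:sigma}, and then establishing its minimality among all solutions.

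For the first part, I will pass to the limit in the recursion \eqref{4:eq;selling}. Since $F_n$ is the empirical distribution of finitely many data points, the expectation is a finite sum of $n$ terms, so interchange of limit and expectation is automatic. By monotonicity $\bm{\sigma}_{(k)}\uparrow n\bm{\chi}_n$ componentwise, and by continuity of $h$ we obtain $h(n^{-1}\bm{\sigma}_{(k-1)})\to h(\bm{\chi}_n)$; then continuity of $\rho$ gives
\[
\rho\!\left(\frac{\ell_i+\bm{x}_i\cdot h(n^{-1}\bm{\sigma}_{(k-1)})}{c_i}\right)\longrightarrow\rho\!\left(\frac{\ell_i+\bm{x}_i\cdot h(\bm{\chi}_n)}{c_i}\right)
\]
pointwise in $i\in[n]$. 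Summing and dividing by $n$ yields $\bm{\chi}_n = \E\bigl[\bm{X}_n\rho((L_n+\bm{X}_n\cdot h(\bm{\chi}_n))/C_n)\bigr]$, so $\bm{\chi}_n$ solves \eqref{4:eqn:fixed:point:sigma}.

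For the second part (minimality), I will show by induction on $k$ that $n^{-1}\bm{\sigma}_{(k)}\leq\tilde{\bm{\chi}}$ componentwise for any solution $\tilde{\bm{\chi}}$ of \eqref{4:eqn:fixed:point:sigma}. The base case $k=1$ uses only the nonnegativity of $h$ and monotonicity of $\rho$:
\[
n^{-1}\bm{\sigma}_{(1)} = \E\!\left[\bm{X}_n\rho\!\left(\tfrac{L_n}{C_n}\right)\right]\leq \E\!\left[\bm{X}_n\rho\!\left(\tfrac{L_n+\bm{X}_n\cdot h(\tilde{\bm{\chi}})}{C_n}\right)\right]=\tilde{\bm{\chi}}.
\]
For the inductive step, assuming $n^{-1}\bm{\sigma}_{(k-1)}\leq\tilde{\bm{\chi}}$, monotonicity of $h$ in each coordinate gives $h(n^{-1}\bm{\sigma}_{(k-1)})\leq h(\tilde{\bm{\chi}})$ componentwise, and since $\bm{X}_n\geq\bm{0}$, the scalar product $\bm{X}_n\cdot h(\cdot)$ is monotone as well. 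Monotonicity of $\rho$ then propagates the inequality through \eqref{4:eq;selling}. Letting $k\to\infty$ yields $\bm{\chi}_n\leq\tilde{\bm{\chi}}$.

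I do not anticipate any significant obstacle: both continuity (for passing to the limit) and the monotonicity structure (for the induction) are readily available from the hypotheses on $\rho$ and $h$, and the finiteness of $n$ means no integrability issues arise. The only care needed is to verify that all expectations in \eqref{4:eq;selling} are indeed with respect to the deterministic empirical distribution $F_n$ so that convergence issues reduce to pointwise convergence in a finite sum.
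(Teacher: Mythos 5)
Your proof is correct, and the two parts are organized in the same way as the paper's argument (pass to the limit to show $\bm{\chi}_n$ is a solution, then induction to show minimality). The difference is in how you establish minimality: the paper first invokes the Knaster--Tarski theorem to assert the existence of a least fixed point $\hat{\bm{\chi}}_n$ of \eqref{4:eqn:fixed:point:sigma}, then runs the inductive comparison $\bm{\sigma}_{(k)}\le n\hat{\bm{\chi}}_n$, and finally concludes $\bm{\chi}_n=\hat{\bm{\chi}}_n$ from the least-fixed-point property. You instead run the identical inductive comparison against an \emph{arbitrary} solution $\tilde{\bm{\chi}}$, which is all one needs once Part 1 is in hand; this dispenses with Knaster--Tarski entirely. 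Your version is therefore a bit more economical, though the paper's invocation of Knaster--Tarski earns its keep elsewhere (in Proposition \ref{4:prop:final:state:bounds}, where the sale function is merely right-continuous and one cannot show the process converges to a fixed point directly, so the a priori existence of a least fixed point is genuinely needed). You also replace the paper's dominated-convergence step in Part 1 by the observation that the empirical expectation is a finite sum, so termwise convergence suffices; this is correct and slightly more elementary, though dominated convergence does no harm.
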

\begin{proof}
By continuity of $\rho$ and the dominated convergence theorem 
\begin{align*}
\bm{\chi}_n &= n^{-1}\lim_{k\to\infty}\bm{\sigma}_{(k)} = \lim_{k\to\infty}\E\left[\bm{X}_n\rho\left(\frac{L_n + \bm{X}_n\cdot h(n^{-1}\bm{\sigma}_{(k-1)})}{C_n}\right)\right]\\
&= \E\left[\bm{X}_n\rho\left(\frac{L_n + \bm{X}_n\cdot h(n^{-1}\lim_{k\to\infty}\bm{\sigma}_{(k-1)})}{C_n}\right)\right] = \E\left[\bm{X}_n\rho\left(\frac{L_n + \bm{X}_n\cdot h(n^{-1}\bm{\chi}_n)}{C_n}\right)\right]
\end{align*}
and $\bm{\chi}_n$ is thus a solution of \eqref{4:eqn:fixed:point:sigma}. By the Knaster-Tarski theorem there exists a least fixed point $\hat{\bm{\chi}}_n$. Clearly, \mbox{$\bm{\sigma}_{(0)}:=\bm{0}\leq n\hat{\bm{\chi}}_n$.} Hence assume inductively that $\bm{\sigma}_{(k)}\leq n\hat{\bm{\chi}}_n$ for $k\geq1$. Then
\begin{equation}\label{4:eqn:induction:sigma}
\bm{\sigma}_{(k+1)} = \sum_{i\in[n]}\bm{x}_i \rho\left(\frac{\ell_i + \bm{x}_i\cdot h(n^{-1}\bm{\sigma}_{(k)})}{c_i}\right) \leq \sum_{i\in[n]}\bm{x}_i \rho\left(\frac{\ell_i + \bm{x}_i\cdot h(\hat{\bm{\chi}}_n)}{c_i}\right) = n\hat{\bm{\chi}}_n
\end{equation}
by monotonicity of $\rho$, and hence $\bm{\chi}_n=n^{-1}\lim_{k\to\infty}\bm{\sigma}_{(k)}\leq\hat{\bm{\chi}}_n$. By definition of $\hat{\bm{\chi}}_n$ it thus holds that $\bm{\chi}_n=\hat{\bm{\chi}}_n$.
\end{proof}
It remains to study the case where the sale function $\rho$ is only right-continuous. The following simple example of a non-continuous $\rho$ shows that also in this case it may be possible to determine the final state of the system by the smallest solution of \eqref{4:eqn:fixed:point:sigma}. Consider $\rho(u)=\1\{u\geq1\}$, that is, institutions sell their portfolio as they go bankrupt. Then $\bm{\sigma}_{(k)}\neq\bm{\sigma}_{(k-1)}$ only if in round $k$ at least one institution defaults that was solvent in round $k-1$. Since there are only $n$ institutions, the fire sales process stops after at most $n-1$ rounds and the vector $\bm{\chi}_n$ of finally sold shares divided by $n$ solves \eqref{4:eqn:fixed:point:sigma}. Again by \eqref{4:eqn:induction:sigma} we then obtain $\bm{\chi}_n=\hat{\bm{\chi}}_n$ is the smallest solution of \eqref{4:eqn:fixed:point:sigma}.

Finally, consider an arbitrary right-continuous sale function $\rho$.  Again by the Knaster-Tarski theorem \eqref{4:eqn:fixed:point:sigma} has a smallest solution ${\overline{\bm{\chi}}_n}$ and by \eqref{4:eqn:induction:sigma} it holds $n^{-1}\lim_{k\to\infty}\bm{\sigma}_{(k)}\leq{\overline{\bm{\chi}}_n}$. For left-continuous $\rho$ in fact we would derive equality but for right-continuous $\rho$ it is in general possible that $n^{-1}\lim_{k\to\infty}\bm{\sigma}_{(k)}\lneq{\overline{\bm{\chi}}_n}$. This is the case if $\lim_{k\to\infty}\bm{\sigma}_{(k)}$ sold shares would be enough to start a new round of fire sales but this quantity is actually never reached in finitely many rounds. Then the following holds; the proof is straight-forward by bounding $\rho$ from below with its left-continuous modification $\accentset{\circ}{\rho}$.
\begin{proposition}\label{4:prop:final:state:bounds}
\label{4:prop:upperlowerfixedsize}
Consider the fire sales process with a right-continuous sale function $\rho$ and the corresponding left-continuous modification $\accentset{\circ}{\rho}$. Let {$\overline{\bm{\chi}}_n\in\R_{+,0}^M$} denote the smallest solution of \eqref{4:eqn:fixed:point:sigma}. Moreover, let $\hat{\bm{\chi}}_n\in\R_{+,0}^M$ be the smallest solution of 
\[
{\E\left[\bm{X}_n\accentset{\circ}{\rho}\left(\frac{L_n+\bm{X}_n\cdot h(\bm{\chi})}{C_n}\right)\right] - \bm{\chi} = 0.}
\]
Then the number of sold shares divided by $n$ at the end of the fire sales process satisfies 
\begin{equation}\label{4:eqn:bounds:chi:n}
\hat{\bm{\chi}}_n \leq n^{-1}\lim_{k\to\infty}\bm{\sigma}_{(k)}\leq{\overline{\bm{\chi}}_n}.
\end{equation}
\end{proposition}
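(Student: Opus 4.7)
The argument decomposes along the two inequalities of \eqref{4:eqn:bounds:chi:n}, both of which are obtained by coupling the sequence $(\bm{\sigma}_{(k)})_{k\in\N}$ with suitable fixed-point iterates and exploiting monotonicity. The only genuinely new ingredient compared with Lemma \ref{4:lem:final:state:cont:rho} is a careful treatment of the passage to the limit on the $\accentset{\circ}{\rho}$-side, which requires left-continuity rather than continuity.

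For the upper bound $n^{-1}\lim_{k\to\infty}\bm{\sigma}_{(k)} \leq \overline{\bm{\chi}}_n$, I would repeat almost verbatim the induction underlying \eqref{4:eqn:induction:sigma} in the proof of Lemma \ref{4:lem:final:state:cont:rho}. Starting from $\bm{\sigma}_{(0)}=\bm{0}\leq n\overline{\bm{\chi}}_n$, the inductive step uses monotonicity of $\rho$ in its argument together with monotonicity of each coordinate $h^m$ of $h$ and the fixed-point property of $\overline{\bm{\chi}}_n$ to obtain $\bm{\sigma}_{(k+1)} \leq n\overline{\bm{\chi}}_n$. Passing to the limit in $k$ yields the claim; note that no continuity assumption on $\rho$ enters.

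For the lower bound, I introduce the auxiliary sequence $(\tilde{\bm{\sigma}}_{(k)})_{k\in\N}$ defined exactly as $\bm{\sigma}_{(k)}$ in \eqref{4:eq;selling} but with $\rho$ replaced by its left-continuous modification $\accentset{\circ}{\rho}$. As $\accentset{\circ}{\rho}\leq\rho$ pointwise and both $\accentset{\circ}{\rho}$ and $h$ are coordinatewise non-decreasing, a straightforward induction yields $\tilde{\bm{\sigma}}_{(k)}\leq\bm{\sigma}_{(k)}$ for every $k$. The sequence $(\tilde{\bm{\sigma}}_{(k)})_{k\in\N}$ is itself non-decreasing and bounded by $n\E[\bm{X}_n]$, hence $\tilde{\bm{\chi}}_n := n^{-1}\lim_{k\to\infty}\tilde{\bm{\sigma}}_{(k)}$ exists and satisfies $\tilde{\bm{\chi}}_n \leq n^{-1}\lim_{k\to\infty}\bm{\sigma}_{(k)}$. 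It then suffices to show that $\tilde{\bm{\chi}}_n$ solves the $\accentset{\circ}{\rho}$-fixed-point equation, since by minimality of $\hat{\bm{\chi}}_n$ this forces $\hat{\bm{\chi}}_n\leq\tilde{\bm{\chi}}_n$.

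The step where care is needed is precisely this identification of $\tilde{\bm{\chi}}_n$ as a fixed point. Since $h$ is continuous and the sequence $(n^{-1}\tilde{\bm{\sigma}}_{(k-1)})_k$ increases to $\tilde{\bm{\chi}}_n$ componentwise, the argument $(L_n+\bm{X}_n\cdot h(n^{-1}\tilde{\bm{\sigma}}_{(k-1)}))/C_n$ converges monotonically from below to $(L_n+\bm{X}_n\cdot h(\tilde{\bm{\chi}}_n))/C_n$, using monotonicity of $h$. Left-continuity of $\accentset{\circ}{\rho}$, combined with its monotonicity, then yields pointwise monotone convergence of the integrands in \eqref{4:eq;selling} (with $\rho$ replaced by $\accentset{\circ}{\rho}$) to $\bm{X}_n\accentset{\circ}{\rho}((L_n+\bm{X}_n\cdot h(\tilde{\bm{\chi}}_n))/C_n)$. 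Applying the monotone convergence theorem, bounded uniformly by the integrable $\bm{X}_n\accentset{\circ}{\rho}(1)\leq\bm{X}_n$, produces the fixed-point relation and completes the argument. The main obstacle is exactly this interchange of limit and expectation: dominated convergence with a continuous integrand, as used in Lemma \ref{4:lem:final:state:cont:rho}, is no longer available, and one really has to combine left-continuity of $\accentset{\circ}{\rho}$ with the monotone-from-below convergence of the arguments, which is why the iteration with $\accentset{\circ}{\rho}$ (rather than with $\rho$ itself) is indispensable.
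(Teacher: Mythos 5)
Your argument is correct and matches the paper's (extremely terse) indication of the proof, namely bounding $\rho$ from below by $\accentset{\circ}{\rho}$ and comparing the two iterations; the upper bound is the induction \eqref{4:eqn:induction:sigma} verbatim, and the lower bound is handled by coupling with the $\accentset{\circ}{\rho}$-iteration, identifying its limit as a fixed point via left-continuity, and invoking minimality of $\hat{\bm{\chi}}_n$. One small remark: in the finite-$n$ model the expectations are finite sums over $[n]$, so the appeal to a monotone/dominated convergence theorem at the very end is unnecessary — once you have the pointwise monotone-from-below convergence of $\accentset{\circ}{\rho}\bigl((\ell_i + \bm{x}_i\cdot h(n^{-1}\tilde{\bm{\sigma}}_{(k-1)}))/c_i\bigr)$ for each $i$, which is exactly where left-continuity of $\accentset{\circ}{\rho}$ is used, the finite sum converges automatically; the genuinely essential point you identify, that one must iterate with $\accentset{\circ}{\rho}$ rather than $\rho$ in order to get monotone-from-below convergence compatible with the one-sided continuity, is exactly right.
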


\pagebreak\noindent
The equilibrium vector $n\overline{\bm{\chi}}_n$ (in the sense of \eqref{4:eqn:fixed:point:sigma}) is thus a conservative bound on the final number of sold shares $n\bm{\chi}_n=\lim_{k\to\infty}\bm{\sigma}_{(k)}$. However, as discussed above the convergence of the fire sales process to a non-equilibrium heavily relies on the assumption of \emph{arbitrarily} small sale sizes towards the end of the process. For real  systems this is obviously not realistic since the least possible number of shares sold by an institution is lower bounded by $1$. For all practical purposes it will therefore hold that $\bm{\chi}_n=\overline{\bm{\chi}}_n$ and fire sales stop at an equilibrium state.

\subsection{Fire Sales -- The Stochastic Model}
The previous section describes fire sales in any specific (finite) system. Our aim is, however, to understand qualitatively how and which characteristics of a system promote or hinder the spread of fire sales. In the following, we thus consider an ensemble of systems that are similar in the sense that they all share some (observed) statistical characteristics. This similarity is measured in terms of the most natural parameters, namely the joint empirical distribution function~\eqref{4:eq:empDistFixedSize} of the asset holdings, the capital/equity, and the initial losses. In particular, we assume that we have a collection of systems with a varying number $n$ of institutions with the property that the sequence $(F_n)_{n\in N}$ stabilizes, i.\,e. has a limit. Additionally, we assume convergence of the average asset holdings to a finite value; this is a standard assumption avoiding condensation of the distribution of the asset holdings. Our assumptions are collected in the following definition.
\begin{assumption}\label{4:ass:regularity:fire:sales}
Let $M \in \N$. For each $n\in\N$ consider a system with $n$ institutions and~$M$ assets specified by the sequences $\bm{x}(n)=(\bm{x}_i(n))_{1\le i\le n}$ of asset holdings, $\bm{s}(n)=(s_i(n))_{1\le i\le n}$ of systemic importance values, $\bm{c}(n)=(c_i(n))_{1\le i\le n}$ of capitals and $\bm{\ell}(n)=(\ell_i(n))_{1\le i\le n}$ of exogenous losses. Let $F_n$ be the empirical distribution function of these parameters for $n\in\N$ (as in~\eqref{4:eq:empDistFixedSize}) and let $$(\bm{X}_n,S_n,C_n,L_n) = \big((X_n^1, \dots, X_n^M),S_n,C_n,L_n)\big) \sim F_n.$$ Then assume the following.
\begin{enumerate}[(a)]
\itemsep-1pt
\item \textbf{Convergence in distribution:} There is a distribution function $F$ such that as $n \to \infty$, $F_n(\bm{x},s,c,\ell)\to F(\bm{x},s,c,\ell)$ at all continuity points of $F$.
\item \textbf{Convergence of means:} Let $(\bm{X},S,C,L) = ((X^1, \dots, X^M),S,C,L)\sim F$. Then as $n\to\infty$,
\[ \E[S_n]\to\E[S]<\infty\quad\text{and}\quad\E[X_n^m]\to\E[X^m]<\infty, \quad m\in[M]. \]
\end{enumerate}
\end{assumption}
An ensemble of systems satisfying Assumption~\ref{4:ass:regularity:fire:sales} will be called an \emph{$(\bm{X},S,C)$-system with initial shock $L$} in the sequel. A particular and probably the most relevant scenario is as follows. Suppose that the distribution $F$ is specified, for example by considering a real system. Then, for each $n\in \N$ we construct a system by assigning to each institution $i\in [n]$ independently asset holdings, capital and losses distributed like $F$. Then, by the strong law of large numbers, with probability 1, the sequence of systems we obtain satisfies Assumption~\ref{4:ass:regularity:fire:sales}.

As in the \emph{deterministic model} our aim is to describe in this broader setting the final state of the system. Before we do so, let us give some definitions that are handy in the forthcoming description. First, recall that for $n\in\N$ the eventual number of sold shares is characterized by the smallest solution to~\eqref{4:eqn:fixed:point:sigma}, and the systemic importance of defaulted institutions is given by~\eqref{4:eq:finalDefFixedSize}.
Let therefore $f^m,g:\R_{+,0}^M\to\R$, $m\in [M]$ be analogously defined for the ``limiting object'' by
\begin{align}
f^m(\bm{\chi}) &:= \E\left[X^m\rho\left(\frac{L+\bm{X}\cdot h(\bm{\chi})}{C}\right) \right] - \chi^m,\quad m\in[M],\label{4:eqn:fire_sales:fm}\\
g(\bm{\chi}) &:= \E\left[S\1\left\{L+\bm{X}\cdot h(\bm{\chi})\geq C\right\}\right],\nonumber
\end{align}
which are clearly upper semi-continuous, and let (cf.~Proposition~\ref{4:prop:upperlowerfixedsize})
\begin{equation}
\label{4:eqn:fire_sales:circ:fm}
\circFSuper{m}(\bm{\chi}) := \E\left[X^m\accentset{\circ}{\rho}\left(\frac{L+\bm{X}\cdot h(\bm{\chi})}{C}\right) \right] - \chi^m, m\in[M], ~~\text{and}~~
\circG(\bm{\chi}) := \E\left[S\1\left\{L+\bm{X}\cdot h(\bm{\chi})>C\right\}\right]
\end{equation}
be their lower semi-continuous modifications. Further, define the sets
\[
\accentset{\circ}{P}:=\bigcap_{m\in[M]}\left\{\bm{\chi}\in\R_{+,0}^M\,:\,\circFSuper{m}(\bm{\chi})\geq0\right\} \quad\text{and}
\quad
P:=\bigcap_{m\in[M]}\left\{\bm{\chi}\in\R_{+,0}^M\,:\,f^m(\bm{\chi})\geq0\right\}
\]
and denote by $\accentset{\circ}{P}_0$ resp.~$P_0$ the largest connected subsets of $\accentset{\circ}{P}$ and $P$ containing $\bm{0}$ (clearly $f^m(\bm{0})\geq\circFSuper{m}(\bm{0})\geq0$ for all $m\in[M]$). Note that $P$ and $P_0$ are closed sets by upper semi-continuity of $f^m$, $m\in[M]$. 

Let us immediately give an illustrative explanation of all the quantities above. Leave aside for the moment the $\accentset{\circ}{\cdot}$ modifications and assume that during the fire sales process assets are sold continuously in time $\tau$, that is,  let $n\bm{\chi}(\tau)$ at any given time $\tau$ be the current vector of sold shares. Then by definition of the process, $n\E[X^m\rho((L+\bm{X}\cdot h(\bm{\chi}(\tau)))/C) ]$ is the number of shares of asset $m$ the system as a whole needs to sell. 
Moreover, as $\bm{\chi}^m(\tau)$ is the current number of sold shares, at any point in time $\tau$ it must hold $f^m(\bm{\chi}(\tau))\geq0$. Since the process starts with zero sold shares ($\bm{\chi}(0)=\bm{0}$) it is therefore intuitive that $\bm{\chi}(\tau)$ will never leave the set $P_0$. At the end of the fire sales process, however, the final number of sold shares must equal the required number of sold shares for each asset $m\in[M]$ (cf.~\eqref{4:eqn:fixed:point:sigma} for the \emph{deterministic model}). So the final state will be described by a joint root of the functions $f^m$, $m\in[M]$, that lies in $P_0$. The $\accentset{\circ}{\cdot}$ modifications come into play because in certain pathological cases it can be important if the limiting system (as $n\to\infty$) is approached from below or from above.

To formalize this intuition about joint roots in $P_0$, consider the following lemma. Let
\[
\bm{\chi}^*\in\R_{+,0}^M \quad\text{with}\quad (\chi^*)^m:=\sup_{\bm{\chi}\in P_0}\chi^m.
\]
\begin{lemma}\label{4:lem:existence:chi:hat}
There exists a smallest joint root $\hat{\bm{\chi}}$ of all functions $\circFSuper{m}(\bm{\chi})$, $m\in[M]$ with $\hat{\bm{\chi}}\in\accentset{\circ}{P}_0$. Further, $\bm{\chi}^*$ as defined above is a joint root of the functions 
$f^m$, $m\in[M]$, and $\bm{\chi}^*\in P_0$.
\end{lemma}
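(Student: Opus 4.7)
The claim splits into two parts, each handled by a variant of the strategy behind Lemma \ref{3:lem:existence:hatz}, with the added subtlety that the underlying functions are only semi-continuous. For $\hat{\bm{\chi}}$, my plan is to invoke the Knaster--Tarski theorem for the monotone self-map $T$ of the complete lattice $[\bm{0},\bm{\zeta}]$ (componentwise order, $\zeta^m:=\E[X^m]$) given by
\[
T^m(\bm{\chi}) := \E\left[X^m\accentset{\circ}{\rho}\left(\frac{L+\bm{X}\cdot h(\bm{\chi})}{C}\right)\right],
\]
which is monotone since $\accentset{\circ}{\rho}$ and each $h^k$ are non-decreasing, and which maps into $[\bm{0},\bm{\zeta}]$ because $\accentset{\circ}{\rho}\le 1$. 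Its smallest fixed point $\hat{\bm{\chi}}$ is exactly the smallest joint root of $\circFSuper{m}=T^m-\chi^m$, $m\in[M]$. To locate $\hat{\bm{\chi}}$ in $\accentset{\circ}{P}_0$, I would consider the iterates $\bm{\chi}_n:=T^n(\bm{0})$: they are non-decreasing and bounded, and left-continuity of $\accentset{\circ}{\rho}$ with dominated convergence allows me to pass to the limit under the expectation, giving $T(\lim_n\bm{\chi}_n)=\lim_n\bm{\chi}_{n+1}$, so the limit is a fixed point and minimality forces $\bm{\chi}_n\uparrow\hat{\bm{\chi}}$. Each segment $[\bm{\chi}_n,\bm{\chi}_{n+1}]$ lies in $\accentset{\circ}{P}$ because $T^m(\bm{\chi})\ge T^m(\bm{\chi}_n)=\chi_{n+1}^m\ge\chi^m$ there, and $\bigcup_n[\bm{\chi}_n,\bm{\chi}_{n+1}]\cup\{\hat{\bm{\chi}}\}$ is a connected subset of $\accentset{\circ}{P}$ linking $\bm{0}$ and $\hat{\bm{\chi}}$.

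For $\bm{\chi}^*$ I would mimic the $\epsilon$-perturbation argument of Lemma \ref{3:lem:existence:hatz}. Setting $P(\epsilon):=\bigcap_m\{\bm{\chi}:f^m(\bm{\chi})\ge-\epsilon\}$ (compact by upper semi-continuity of $f^m$ and the a priori bound $\chi^m\le\zeta^m+\epsilon$) and letting $P_0(\epsilon)$ be the connected component of $\bm{0}$ in $P(\epsilon)$, an iterative coordinate-cycling construction as in Lemma \ref{3:lem:existence:hatz}---in which at each step one coordinate $\chi^m$ is raised to the smallest value at which $f^m(\bm{\chi})=-\epsilon$, which exists and is attained thanks to upper semi-continuity of $f^m$ and right-continuity of $\rho$ (so that $f^m$ is right-continuous in $\chi^m$)---produces a smallest componentwise $\hat{\bm{\chi}}(\epsilon)\in P_0(\epsilon)$ with $f^m(\hat{\bm{\chi}}(\epsilon))=-\epsilon$ for every $m$. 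Monotonicity of $P(\epsilon)$ in $\epsilon$ makes $\epsilon\mapsto\hat{\bm{\chi}}(\epsilon)$ non-decreasing, so $\tilde{\bm{\chi}}:=\lim_{\epsilon\downarrow 0}\hat{\bm{\chi}}(\epsilon)$ exists. Since $\hat{\bm{\chi}}(\epsilon)\downarrow\tilde{\bm{\chi}}$ from above and $\rho$ is right-continuous, dominated convergence yields $f^m(\hat{\bm{\chi}}(\epsilon))\to f^m(\tilde{\bm{\chi}})$, forcing $f^m(\tilde{\bm{\chi}})=0$; thus $\tilde{\bm{\chi}}$ is a joint root of the $f^m$. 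Moreover $\tilde{\bm{\chi}}\in\bigcap_{\epsilon>0}P_0(\epsilon)$, a nested intersection of compact connected sets containing $\bm{0}$, which is itself connected and contained in $P$, so $\tilde{\bm{\chi}}\in P_0$ and thus $\tilde{\bm{\chi}}\le\bm{\chi}^*$. The reverse inequality follows by the connectedness argument from Lemma \ref{3:lem:existence:hatz}: if $P_0\not\subseteq[\bm{0},\hat{\bm{\chi}}(\epsilon)]$, connectedness of $P_0$ produces $\bar{\bm{\chi}}\in P_0$ with $\bar\chi^k\le\hat\chi^k(\epsilon)$ for all $k$ and equality in some coordinate $m$; monotonicity of $f^m$ in the remaining coordinates then gives $f^m(\bar{\bm{\chi}})\le f^m(\hat{\bm{\chi}}(\epsilon))=-\epsilon<0$, contradicting $\bar{\bm{\chi}}\in P_0$. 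Letting $\epsilon\downarrow 0$ yields $\bm{\chi}^*\le\tilde{\bm{\chi}}$, hence $\bm{\chi}^*=\tilde{\bm{\chi}}\in P_0$ is a joint root of the $f^m$.

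The main obstacle is that, unlike in Lemma \ref{3:lem:existence:hatz}, the functions $f^m$ and $\circFSuper{m}$ are only upper respectively lower semi-continuous, since $\rho$ is merely right-continuous. The technical heart of the proof is therefore to orient every monotone approximation in the correct direction so that the corresponding one-sided continuity of $\rho$ can be invoked via dominated convergence: iterates must increase to $\hat{\bm{\chi}}$ (left-continuity of $\accentset{\circ}{\rho}$), while the $\epsilon$-approximants $\hat{\bm{\chi}}(\epsilon)$ must decrease to $\bm{\chi}^*$ (right-continuity of $\rho$).
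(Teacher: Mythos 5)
Your argument for $\hat{\bm{\chi}}\in\accentset{\circ}{P}_0$ is correct and is essentially a streamlined variant of the paper's: you replace the paper's coordinate-by-coordinate cycling with the monotone iteration $\bm{\chi}_n = T^n(\bm{0})$ and build the connecting path as the union of the boxes $[\bm{\chi}_n,\bm{\chi}_{n+1}]$. The key point — that convergence of the iterates is \emph{from below}, so the left-continuity of $\accentset{\circ}{\rho}$ lets dominated convergence pass the limit through the expectation — is exactly what makes this work, and you identify it. So this half is fine, arguably slightly cleaner than the paper's.

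The second half has a genuine gap. You claim that the coordinate-cycling construction, applied directly with $f^m$ (i.e.\ with the right-continuous $\rho$), produces a point $\hat{\bm{\chi}}(\epsilon)$ with $f^m(\hat{\bm{\chi}}(\epsilon))=-\epsilon$ for all $m$. The existence of the smallest raise at each \emph{finite} step is fine: if $a=\inf\{\chi^m : f^m<-\epsilon\}$ (other coordinates frozen), the combination of monotonicity and right-continuity of $T^m(\bm{\chi}):=f^m(\bm{\chi})+\chi^m$ in $\chi^m$ forces $T^m(a-)=T^m(a)=a-\epsilon$, so $f^m(a)=-\epsilon$ exactly. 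The problem is the passage to the limit. The cycling iterates $\bm{\chi}_{(k)}$ increase to $\hat{\bm{\chi}}(\epsilon)$ \emph{from below}, but $f^m$ is only upper semi-continuous: $\limsup_k f^m(\bm{\chi}_{(k)})\le f^m(\hat{\bm{\chi}}(\epsilon))$. From $f^m(\bm{\chi}_{(kM+m)})=-\epsilon$ you only get $f^m(\hat{\bm{\chi}}(\epsilon))\ge-\epsilon$; there is nothing forcing $f^m(\hat{\bm{\chi}}(\epsilon))\le-\epsilon$. If $T^m$ (equivalently $\rho$) has a jump that the monotone iterates accumulate onto from the left, the limit value $f^m(\hat{\bm{\chi}}(\epsilon))$ can exceed $-\epsilon$, and could even be $\ge 0$. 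This undermines both subsequent uses: that $f^m(\hat{\bm{\chi}}(\epsilon))\to 0$ as $\epsilon\downarrow 0$, and that $f^m(\hat{\bm{\chi}}(\epsilon))<0$ in the boundary-contradiction step. In short, you are orienting the $\hat{\bm{\chi}}$-construction correctly (iterates up, left-continuous $\accentset{\circ}{\rho}$) but the $\bm{\chi}^*$-construction incorrectly (iterates still up, but now right-continuous $\rho$, which is the wrong side).

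The paper avoids exactly this issue by a two-stage argument: it first proves $\bm{\chi}^*\in P_0$ assuming $\rho$ is \emph{continuous} (then $f^m$ is continuous and the cycling converges to an honest joint $(-\epsilon)$-root), and then, for general right-continuous $\rho$, approximates $\rho$ from above by continuous sale functions $\rho_r$, shows $P=\bigcap_r P^r$ and $P_0=\bigcap_r P_0^r$ using nested compact connected sets, and identifies $\bm{\chi}^*=\lim_r\bm{\chi}_r^*\in P_0$. Some such decoupling — or a separate argument identifying the cycling limit with the Knaster--Tarski least fixed point of $T+\epsilon\mathbf{1}$ despite the one-sided continuity — is needed; as written, your step asserting $f^m(\hat{\bm{\chi}}(\epsilon))=-\epsilon$ does not follow.
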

At the end of this section we will give a couple of examples illustrating the situation in concrete settings. Using the quantities $\hat{\bm{\chi}}$ and $\bm{\chi}^*$ as well as the functions $\circG$ and $g$, we can then describe the final state of the system after the fire sales process asymptotically as $n\to\infty$.
\begin{theorem}\label{4:thm:fire:sale:final:fraction}
Consider a  system satisfying Assumption \ref{4:ass:regularity:fire:sales}. Then for the final systemic damage $n^{-1}\mathcal{S}_n$ and $\chi_n^m$, the number of finally sold shares of asset $m\in[M]$ divided by $n$ it holds
\[\begin{gathered}
\circG(\hat{\bm{\chi}}) + o(1) \leq n^{-1}\mathcal{S}_n \leq g(\bm{\chi}^*) + o(1), \qquad
\hat{\chi}^m+o(1) \leq \chi_n^m \leq (\chi^*)^m+o(1).
\end{gathered}\]
In particular, for the final price impact $h^m(\bm{\chi}_n)$ on asset $m\in[M]$, we obtain
\[ h^m(\hat{\bm{\chi}})+o(1) \leq h^m(\bm{\chi}_n) \leq h^m(\bm{\chi}^*)+o(1). \]
\end{theorem}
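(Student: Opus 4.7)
The plan is to deduce the assertions from the deterministic sandwich of Proposition~\ref{4:prop:final:state:bounds} combined with an asymptotic analysis as $n\to\infty$. For every $n$, let $\hat{\bm{\chi}}_n$ and $\overline{\bm{\chi}}_n$ be the two vectors introduced in Proposition~\ref{4:prop:final:state:bounds} applied to the $n$-th financial system, so that $\hat{\bm{\chi}}_n\le\bm{\chi}_n\le\overline{\bm{\chi}}_n$; the statement on $\bm{\chi}_n$ therefore reduces to establishing $\liminf_{n\to\infty}\hat{\bm{\chi}}_n\ge\hat{\bm{\chi}}$ and $\limsup_{n\to\infty}\overline{\bm{\chi}}_n\le\bm{\chi}^*$ componentwise.

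For the lower bound I would exploit the interplay between the left-continuity of $\accentset{\circ}{\rho}$ and the monotonicity of the fire sales iteration. Since $\accentset{\circ}{\rho}$ is non-decreasing and left-continuous, monotone convergence shows that the iteration started from $\bm{0}$ converges to the componentwise smallest joint root of its driving functional. Hence $\hat{\bm{\chi}}_n$ is exactly the limit of the iteration under $F_n$, and the analogous iteration under $F$ converges to $\hat{\bm{\chi}}$ by Lemma~\ref{4:lem:existence:chi:hat}. Assumption~\ref{4:ass:regularity:fire:sales} together with a Portmanteau-type lower semi-continuity argument (and the integrable majorants supplied by $\E[X_n^m]\to\E[X^m]$) then shows that for every fixed iterate $k$ the $k$-th iterate under $F_n$ converges as $n\to\infty$ to the $k$-th iterate under $F$; sending first $n\to\infty$ and then $k\to\infty$ in the inequality between $\hat{\bm{\chi}}_n$ and the $k$-th iterate under $F_n$ yields $\liminf_n\hat{\bm{\chi}}_n\ge\hat{\bm{\chi}}$.

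For the upper bound, passing directly to the limit in $f_n^m(\overline{\bm{\chi}}_n)=0$ is not enough, because any accumulation point of $\overline{\bm{\chi}}_n$ need only be a joint root of $f^m$ and such a root need not lie in $P_0$. Following the perturbation strategy from the proof of Lemma~\ref{3:lem:existence:hatz}, for every $\epsilon>0$ I would introduce the set $P_0(\epsilon)$ as the connected component of $\bm{0}$ in $\{\bm{\chi}\in\R_{+,0}^M\,:\,f^m(\bm{\chi})\ge-\epsilon\text{ for all }m\in[M]\}$ and let $\tilde{\bm{\chi}}_\epsilon$ be the componentwise smallest point of $P_0(\epsilon)$ at which $f^m(\tilde{\bm{\chi}}_\epsilon)=-\epsilon$ for every $m$. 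The slack $\epsilon$ provides enough room to couple the fire sales iteration under $F_n$ with the perturbed iteration under $F$ and to conclude $\overline{\bm{\chi}}_n\le\tilde{\bm{\chi}}_\epsilon+o(1)$ for each fixed $\epsilon$. Arguments mirroring the proof of Lemma~\ref{3:lem:existence:hatz} then show that $\tilde{\bm{\chi}}_\epsilon$ is non-decreasing in $\epsilon$ with $\tilde{\bm{\chi}}_\epsilon\downarrow\bm{\chi}^*$ as $\epsilon\downarrow 0$, so that $\limsup_n\overline{\bm{\chi}}_n\le\bm{\chi}^*$.

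Finally, the bounds on the price impact follow immediately from the sandwich $\hat{\bm{\chi}}+o(1)\le\bm{\chi}_n\le\bm{\chi}^*+o(1)$ together with the continuity and componentwise monotonicity of each $h^m$. For the systemic damage, write $n^{-1}\mathcal{S}_n=\E[S_n\1\{L_n+\bm{X}_n\cdot h(\bm{\chi}_n)\ge C_n\}]$ and combine the sandwich with the upper semi-continuity of $g$ (for the upper bound) and the lower semi-continuity of $\circG$ at $\hat{\bm{\chi}}$ (for the lower bound), using $\E[S_n]\to\E[S]$ and the convergence of $F_n$ to $F$ at continuity points. The principal technical difficulty of the whole proof is the exchange of the two limits $n\to\infty$ and $k\to\infty$ in the iteration in the presence of discontinuities of $\rho$, which the $\epsilon$-perturbation device above resolves.
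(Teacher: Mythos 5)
Your proposal follows essentially the same route as the paper's proof: start from the deterministic sandwich $\hat{\bm{\chi}}_n\le\bm{\chi}_n\le\overline{\bm{\chi}}_n$ of Proposition~\ref{4:prop:final:state:bounds}, show $\liminf_n\hat{\bm{\chi}}_n\ge\hat{\bm{\chi}}$ via lower semi-continuity of $\circFSuper{m}$ under weak convergence, show $\limsup_n\overline{\bm{\chi}}_n\le\bm{\chi}^*$ via the $\epsilon$-perturbed roots $\hat{\bm{\chi}}(\epsilon)\downarrow\bm{\chi}^*$, and then transfer to $n^{-1}\mathcal{S}_n$ using the semi-continuity of $g$ and $\circG$. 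The one organisational difference is that for the lower bound you exchange $n\to\infty$ and $k\to\infty$ directly in the fire-sales iteration (exploiting that each $k$-th iterate sits below $\hat{\bm{\chi}}_n$ for every $n$, so one only needs $\liminf$ convergence iterate-by-iterate), whereas the paper packages the same content into Lemma~\ref{4:lem:convergence:chi:hat} via a staircase path to $\hat{\bm{\chi}}$; the two arguments are equivalent in substance and both rely on the monotonicity of the iteration together with the left-continuity of $\accentset{\circ}{\rho}$ and a truncate-then-monotone-convergence step to handle the unbounded integrands.
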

Only in rather pathological cases, when $\hat{\bm{\chi}}$ is a point of discontinuity for some $f^m$ (cf.~Figure \ref{4:fig:example:non:continuous}) or it is instable (compare Figure \ref{4:fig:two:joint:roots} to Figures \ref{4:fig:one:joint:root:small} and \ref{4:fig:one:joint:root:large}), it happens that $\hat{\bm{\chi}}\neq\bm{\chi}^*$. Similarly it usually holds that $\circG(\hat{\bm{\chi}})=g(\bm{\chi}^*)$. Then Theorem \ref{4:thm:fire:sale:final:fraction} determines the limits of $\bm{\chi}_n$ and $n^{-1}\mathcal{S}_n$ as $n\to\infty$. We conclude this section with two illustrative examples.

\begin{example}\label{4:ex:cont}
It is often (e.\,g.~if $\rho$ is continuous or $\bm{X}$ is absolutely continuous) the case that $f^m(\bm{\chi})=\circFSuper{m}(\bm{\chi})$ and hence $P=\circP$ as well as $P_0=\circP_0$. See Figure \ref{4:fig:examples:chi:hat} for an illustration of three different two-dimensional examples. We chose $h^m(\bm{\chi})=\chi^m$, $m=1,2$, $\rho(y)=\1\{y\geq1\}$, $X^1=X^2\sim\mathrm{Exp}(1)$, $C=c$, $\P(L=c)=0.15$ and $\P(L=0)=0.85$, where $c=1.9$ in (a), $c\approx1.82$ in (b) and $c=1.5$ in (c). Whereas in (a) and (c) it holds $\hat{\bm{\chi}}=\bm{\chi}^*$, in (b) the points are distinct.
\end{example}

\begin{figure}[t]
    \hfill\subfigure[]{\includegraphics[width=0.32\textwidth]{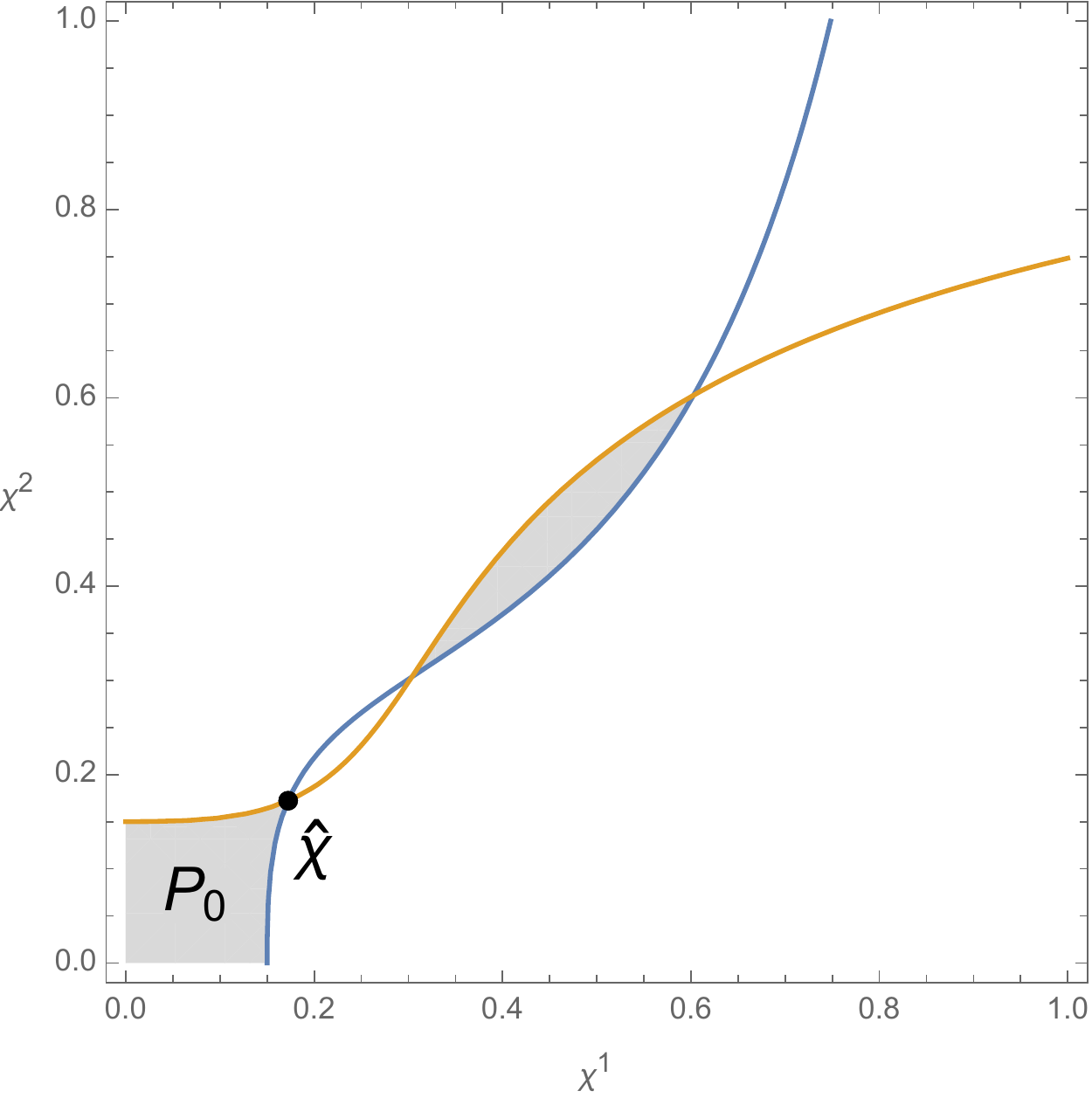}\label{4:fig:one:joint:root:small}}
    \hfill\subfigure[]{\includegraphics[width=0.32\textwidth]{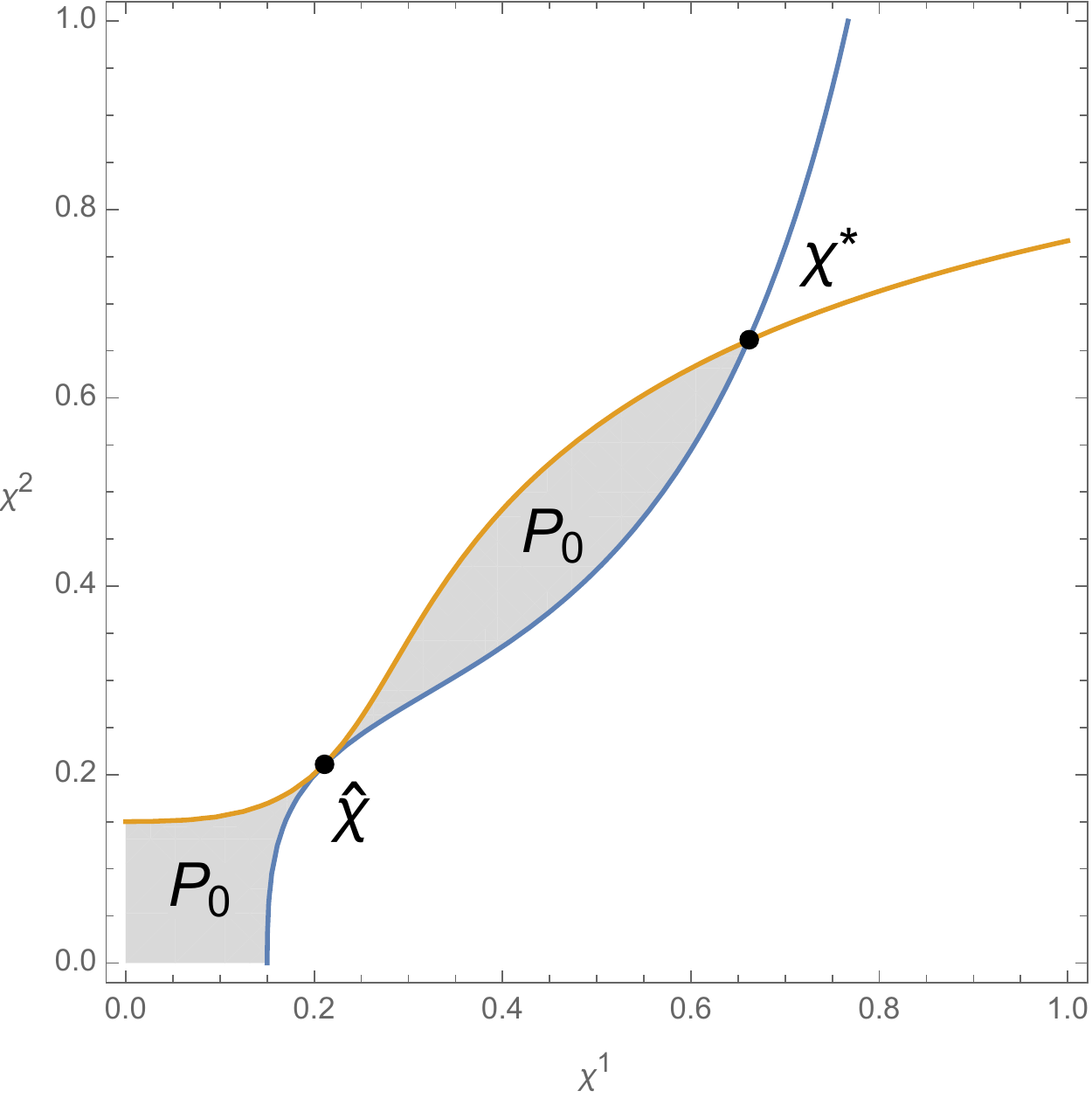}\label{4:fig:two:joint:roots}}\hfill\subfigure[]{\includegraphics[width=0.32\textwidth]{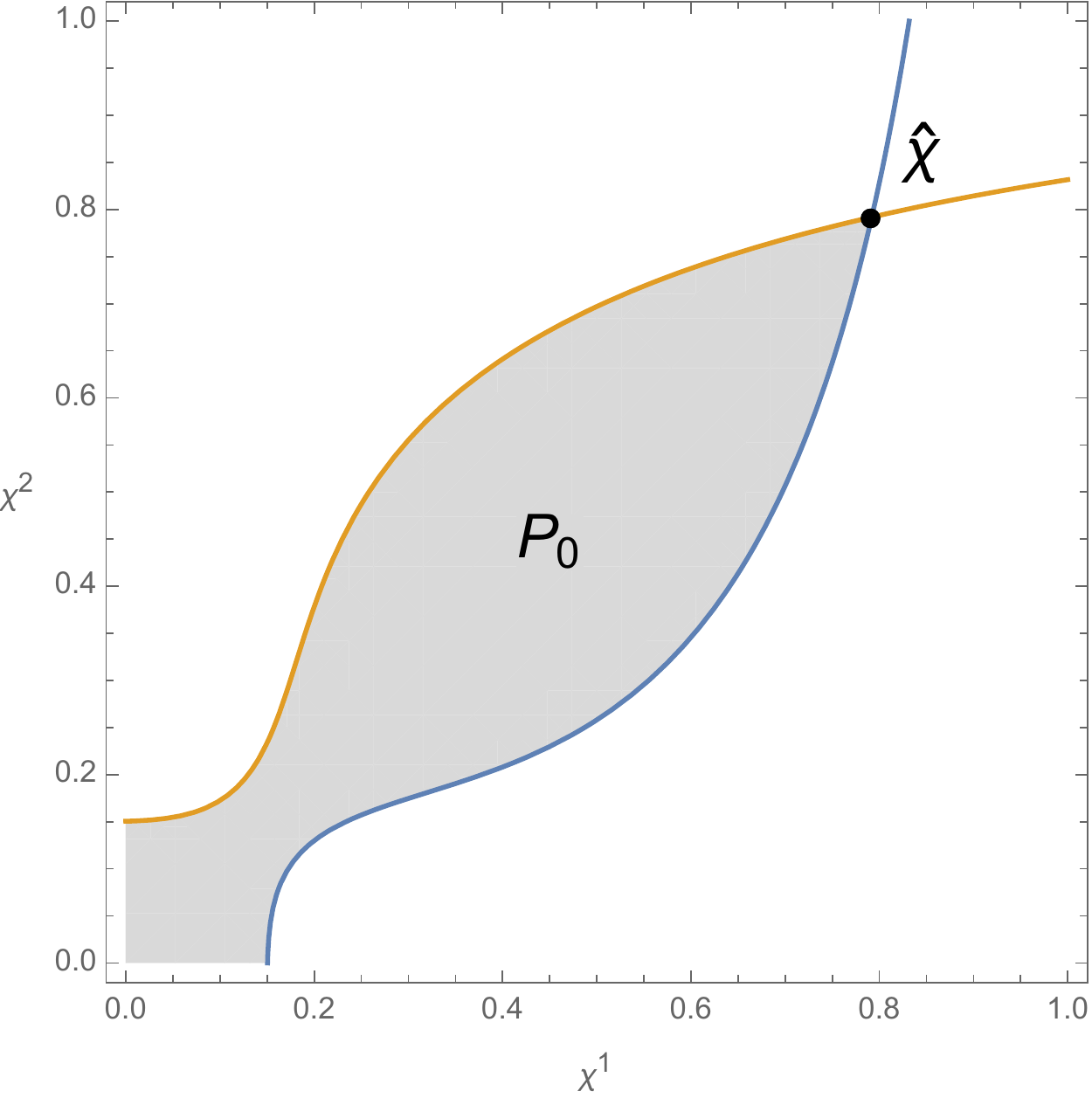}\label{4:fig:one:joint:root:large}}
    \hfill
\caption{Plot of the root sets of the functions $f^1(\chi^1,\chi^2)$ (blue) and $f^2(\chi^1,\chi^2)$ (orange) for three different example systems. In gray the set $P$ is depicted.}\label{4:fig:examples:chi:hat}
\end{figure}

\begin{example}\label{4:ex:discont}
Since it is of advantage for the proofs in the rest of this chapter to gain intuition about the general case as well, in Figure \ref{4:fig:example:non:continuous} we further provide an example of a system where $\circFSuper{m}\neq f^m$ and $\circP_0\subsetneq P_0$. Again we chose $h^m(\bm{\chi})=\chi^m$, $m=1,2$, $\rho(y)=\1\{y\geq1\}$ and $X^1=X^2$. Further, $C=0.1$, $\P(X^1=1,L=0.1)=0.1$, $\P(X^1=0.5,L=0)=0.1$ and with the remaining probability of $0.8$ it holds that $L=0$ and $X^1$ is uniformly distributed on the interval $[0,0.5]$.
\end{example}

\begin{figure}[t]
    \hfill\subfigure[]{\includegraphics[width=0.38\textwidth]{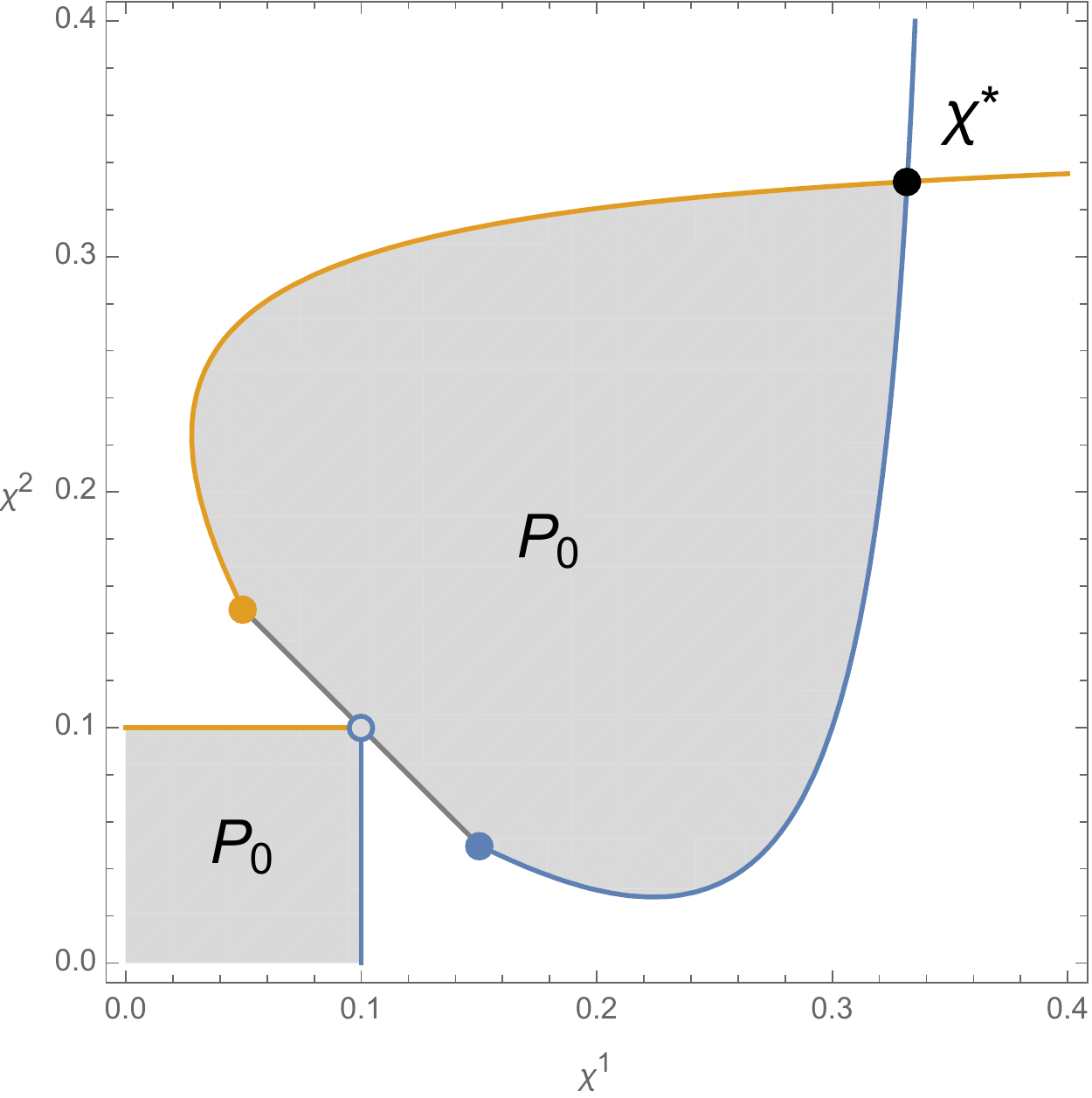}\label{4:fig:ex:non:continuous}}
    \hfill\subfigure[]{\includegraphics[width=0.38\textwidth]{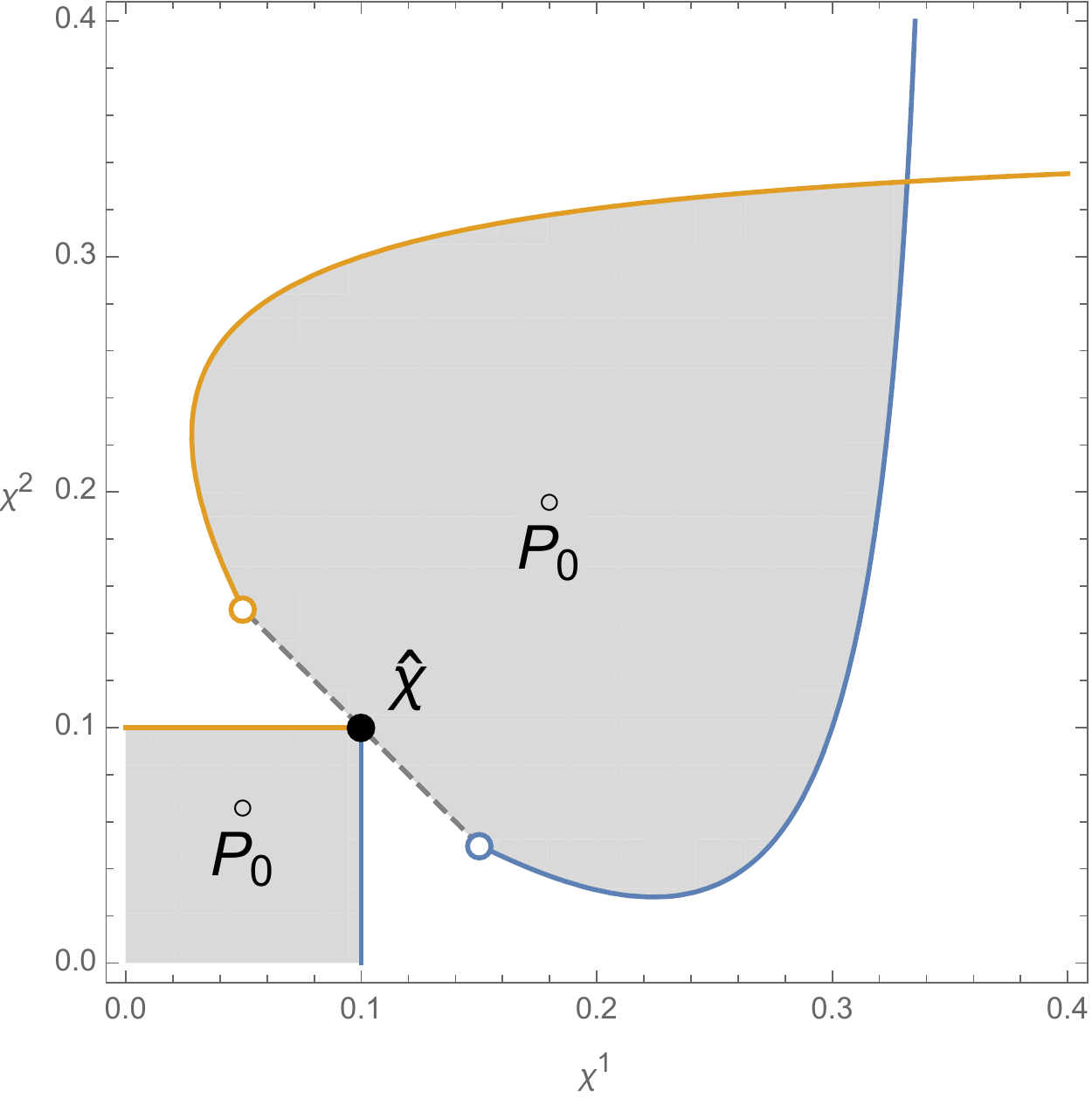}\label{4:fig:ex:non:continuous:circ}}\hfill
\caption{Plot of the root sets of the functions $f^1(\chi^1,\chi^2)$ (blue), $f^2(\chi^1,\chi^2)$ (orange) in (a) and $\circFSuper{1}(\chi^1,\chi^2)$ (blue), $\circFSuper{2}(\chi^1,\chi^2)$ (orange) in (b) respectively. In gray the sets $P_0$ and $\circP_0$ respectively, where the solid line in (a) belongs to $P_0$ whereas the dashed line in (b) does not.}\label{4:fig:example:non:continuous}
\end{figure}

\section{Resilient and Non-resilient Systems}\label{4:sec:resilience}

In the previous section, we derived results that allow us to determine the final default fraction in $({\bf X},S,C)$-systems caused by fire sales and sparked by some exogenous shock $L$. In this section, we go one step further and investigate whether a given  system in an \emph{initially} unshocked state is likely to be resilient to small shocks or susceptible to fire sales. 

Note that all information about an initial shock comes from the random variable $L$, whereas the system itself is specified by~$(\bm{X},S,C)$. So we can easily consider shocks of different magnitude~$L$ on the same a priori unshocked system. In the following, if we use the notation~$g$, $f^m$, $\circG$, $\circFSuper{m}$, $\hat{\bm{\chi}}$ and $\bm{\chi}^*$, we mean the quantities from the previous section (see Equations~\eqref{4:eqn:fire_sales:fm} and \eqref{4:eqn:fire_sales:circ:fm}) for the $(\bm{X},S,C)$-system, that is, with initial shock $L\equiv0$.

\subsection{Resilience}\label{4:ssec:resilience}

From a regulator's perspective a desirable property of an $(\bm{X},S,C)$-system is the ability to absorb small local shocks $L$ without larger parts of the system being harmed. In our model, we can even choose $L$ arbitrarily small. The following way of defining resilience is hence natural: we let the shock $L$ become small in the sense that $\E[L/C]\to0$, and we call the system \emph{resilient} if the asymptotic final systemic damage $n^{-1}\mathcal{S}_{n,L}$ caused by $L$ also tends to $0$.
\begin{definition}[Resilience]\label{4:def:resilience}
An $(\bm{X},S,C)$-system  is said to be \emph{resilient} if for each $\epsilon>0$ there exists $\delta>0$ such that for all $L$ with $\E[L/C]<\delta$ it holds $\limsup_{n\to\infty}n^{-1}\mathcal{S}_{n,L} \leq \epsilon$.
\end{definition}
We chose our definition of resilience (and non-resilience in Definition \ref{4:def:non:resilience} below) in terms of the final systemic damage and thus in line with Chapters \ref{chap:systemic:risk} and \ref{chap:block:model}. Alternatively, depending on the quantity of interest, it can also be sensible to define resilience via the final number of sold shares $n\bm{\chi}_{n,L}$ (and hence the final price impacts $h^m(\bm{\chi}_{n,L})$ which also affect the wider economy). Theorem \ref{4:thm:resilience} determines upper bounds for both $n^{-1}\mathcal{S}_{n,L}$ and $\bm{\chi}_{n,L}$ in the limit $\E[L/C]\to0$.
\begin{theorem}\label{4:thm:resilience}
For each $\epsilon>0$ there exists $\delta>0$ such that for all $L$ with $\E[L/C]<\delta$ it holds for the final damage caused by defaulted institutions $n^{-1}\mathcal{S}_{n,L}$ and the number $n\chi_{n,L}^m$ of finally sold shares of each asset $m\in[M]$ in the shocked system that
\[ \limsup_{n\to\infty}n^{-1}\mathcal{S}_{n,L}\leq g(\bm{\chi}^*)+\epsilon\quad \text{and}\quad \limsup_{n\to\infty} \chi_{n,L}^m \leq(\chi^*)^m+\epsilon, ~~ m\in[M]. \]
\end{theorem}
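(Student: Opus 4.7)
First I would apply Theorem~\ref{4:thm:fire:sale:final:fraction} to the shocked system. Denoting by $f_L^m$, $g_L$, $P_L$, $P_{0,L}$, $\bm{\chi}_L^*$ the analogues of $f^m$, $g$, $P$, $P_0$, $\bm{\chi}^*$ with $L$ in place of zero initial loss, we obtain
\[ \limsup_{n\to\infty}n^{-1}\mathcal{S}_{n,L} \leq g_L(\bm{\chi}_L^*) \quad\text{and}\quad \limsup_{n\to\infty}\chi_{n,L}^m \leq (\chi_L^*)^m. \]
Thus it suffices to prove that for every $\epsilon>0$ there exists $\delta>0$ such that $(\chi_L^*)^m\leq(\chi^*)^m+\epsilon$ and $g_L(\bm{\chi}_L^*)\leq g(\bm{\chi}^*)+\epsilon$ whenever $\E[L/C]<\delta$.

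The main step is the convergence $\bm{\chi}_L^*\to\bm{\chi}^*$. Since $\rho$ is non-decreasing, $f_L^m\geq f^m$ pointwise, whence $P_0\subseteq P_{0,L}$ and $\bm{\chi}_L^*\geq\bm{\chi}^*$ componentwise. For the reverse inequality I would argue by contradiction: assume there exist $\epsilon>0$ and a sequence of shocks $(L_n)_{n\in\N}$ with $\E[L_n/C]\to 0$ but $\|\bm{\chi}_{L_n}^*-\bm{\chi}^*\|>\epsilon$ for every $n$. After passing to a subsequence where $L_n\to 0$ almost surely, set $\tilde{L}_n:=\sup_{k\geq n}L_k$, so that $\tilde{L}_n\downarrow 0$ almost surely and $\E[\tilde{L}_n/C]\to 0$. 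Monotonicity of the map $L\mapsto\bm{\chi}_L^*$ gives $\bm{\chi}_{\tilde{L}_n}^*\geq\bm{\chi}_{L_n}^*\geq\bm{\chi}^*$ and $\bm{\chi}_{\tilde{L}_n}^*$ is non-increasing in $n$, so the componentwise limit $\tilde{\bm{\chi}}:=\lim_{n\to\infty}\bm{\chi}_{\tilde{L}_n}^*\geq\bm{\chi}^*$ exists. Right-continuity of $\rho$ and dominated convergence (with integrable envelope $X^m$) yield $f_{\tilde{L}_n}^m(\bm{\chi}_{\tilde{L}_n}^*)\to f^m(\tilde{\bm{\chi}})$, so $\tilde{\bm{\chi}}$ is a joint root of the $f^m$. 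Moreover, $\tilde{\bm{\chi}}\in\bigcap_n P_{0,\tilde{L}_n}$, a nested intersection of compact connected sets each containing $\bm{0}$, which is itself a compact connected set containing $\bm{0}$ and, using $P_{\tilde{L}_n}\downarrow P$ (also from right-continuity of $\rho$ and monotone convergence), is contained in $P$. Hence it lies in $P_0$, forcing $\tilde{\bm{\chi}}\leq\bm{\chi}^*$, so $\tilde{\bm{\chi}}=\bm{\chi}^*$. The sandwich $\bm{\chi}^*\leq\bm{\chi}_{L_n}^*\leq\bm{\chi}_{\tilde{L}_n}^*\to\bm{\chi}^*$ gives $\bm{\chi}_{L_n}^*\to\bm{\chi}^*$, contradicting the assumption.

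The coordinate bound on $(\chi_L^*)^m$ is now immediate. For the damage bound I would argue analogously: the integrand $S\1\{L+\bm{X}\cdot h(\bm{\chi})\geq C\}$ is upper semi-continuous in $(L,\bm{\chi})$ pointwise on $\Omega$ and dominated by the integrable variable $S$. Combining the almost-sure convergence $L_n\to 0$ with $\bm{\chi}_{L_n}^*\to\bm{\chi}^*$, the reverse Fatou lemma gives $\limsup_n g_{L_n}(\bm{\chi}_{L_n}^*)\leq g(\bm{\chi}^*)$, and another contradiction argument completes the proof.

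The main obstacle is the convergence $\bm{\chi}_L^*\to\bm{\chi}^*$: discontinuities of $\rho$ transfer to $f^m$, ruling out a naive implicit function argument, and $P_0$ may touch other components of $P$ exactly at $\bm{\chi}^*$. The compactness-and-connectedness argument above, similar in spirit to the proof of Lemma~\ref{3:lem:z^*:epsilon} in Chapter~\ref{chap:block:model}, is designed to sidestep these issues.
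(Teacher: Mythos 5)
Your proof is correct in its core idea but takes a genuinely different route from the paper's. The paper argues directly: using the sequence $\tilde{\bm{\chi}}(\gamma)$ from Remark \ref{4:rem:sequence:chi:*}, which satisfies $f^m(\tilde{\bm{\chi}}(\gamma))=-\gamma$ for all $m$ and $\tilde{\bm{\chi}}(\gamma)\to\bm{\chi}^*$ as $\gamma\to0$, it splits the expectation defining $f_L^m$ over $\{L/C\geq\alpha\}$ and $\{L/C<\alpha\}$, applies Markov's inequality on the first event and dominated convergence with right-continuity of $\rho$ on the second as $\alpha\to0$, to obtain $f_L^m(\tilde{\bm{\chi}}(\gamma))<0$ whenever $\E[L/C]<\delta$; this forces $\bm{\chi}_L^*<\tilde{\bm{\chi}}(\gamma)$, and the conclusion follows from monotonicity of $g_L$, an analogous estimate for $g_L(\tilde{\bm{\chi}}(\gamma))$, and upper semi-continuity of $g$. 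Your approach establishes $\bm{\chi}_L^*\to\bm{\chi}^*$ abstractly via a contradiction and the compactness-connectedness argument familiar from Lemma \ref{3:lem:z^*:epsilon}. Both are valid; the paper's route is more elementary and produces an explicit $\delta$, whereas yours is conceptually cleaner and parallels the multi-type treatment.

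One step deserves more care: ``passing to a subsequence where $L_n\to0$ almost surely'' presupposes that the shocks $L_n$ are defined on a common probability space jointly with $(\bm{X},S,C)$. A priori they are not --- each $L_n$ is specified only through the joint law of $(\bm{X},S,C,L_n)$, and there is no canonical coupling of the $L_n$'s with one another. You must first construct one, e.g.~drawing $L_1,L_2,\dots$ conditionally independently given $(\bm{X},S,C)$ so as to preserve each joint marginal, and only then use $L^1$-convergence of $L_n/C$ to extract an a.s.\ subsequence. (This works because $\bm{\chi}_L^*$, $f_L^m$, $g_L$, $P_{0,L}$ depend only on the joint law, so the coupling may be chosen freely, and $\tilde{L}_n=\sup_{k\geq n}L_k$ is then measurable on the coupled space.) Also, your claim $\E[\tilde{L}_n/C]\to0$ requires a further extraction so that $\sum_n\E[L_n/C]<\infty$; it is however not actually needed, as the argument uses only $\tilde{L}_n/C\downarrow0$ a.s.\ and the resulting monotone pointwise convergence $f_{\tilde{L}_n}^m\downarrow f^m$. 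Finally, the a.s.\ statement should properly be $L_n/C\to0$ a.s.\ (the behavior of $L_n$ on $\{C=\infty\}$ is irrelevant since both $\rho(\cdot/C)$ and $\1\{\cdot\geq C\}$ vanish there).
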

We immediately obtain the following handy resilience criterion.
\begin{corollary}[Resilience Criterion]\label{4:cor:resilience}
If $g(\bm{\chi}^*)=0$, then the $(\bm{X},S,C)$-system is resilient.
\end{corollary}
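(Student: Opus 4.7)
The statement is an immediate consequence of Theorem \ref{4:thm:resilience}, so the proof proposal is very short and consists essentially of matching the conclusion of that theorem against Definition \ref{4:def:resilience}.

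The plan is as follows. Fix an arbitrary $\epsilon>0$; our goal is to produce a $\delta>0$ such that every initial shock $L$ with $\E[L/C]<\delta$ satisfies $\limsup_{n\to\infty} n^{-1}\mathcal{S}_{n,L}\leq\epsilon$, which is exactly the content of Definition \ref{4:def:resilience}. Apply Theorem \ref{4:thm:resilience} to the $(\bm{X},S,C)$-system with the same $\epsilon$; this yields a $\delta>0$ such that for every $L$ with $\E[L/C]<\delta$ we have
\[
\limsup_{n\to\infty} n^{-1}\mathcal{S}_{n,L} \leq g(\bm{\chi}^*)+\epsilon,
\]
where $g$ and $\bm{\chi}^*$ are the quantities associated with the a priori unshocked $(\bm{X},S,C)$-system (i.e.\ with $L\equiv 0$), as fixed in the notational convention at the beginning of Section~\ref{4:sec:resilience}.

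By the assumption of the corollary, $g(\bm{\chi}^*)=0$, and plugging this in gives $\limsup_{n\to\infty} n^{-1}\mathcal{S}_{n,L}\leq\epsilon$, as required. Since $\epsilon>0$ was arbitrary, the $(\bm{X},S,C)$-system satisfies Definition~\ref{4:def:resilience} and is hence resilient.

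There is no genuine obstacle here: the only thing one has to be careful about is that the $g$ and $\bm{\chi}^*$ appearing in Theorem~\ref{4:thm:resilience} refer to the unshocked system (rather than the shocked one), which is precisely the setting in which the hypothesis $g(\bm{\chi}^*)=0$ of the corollary is formulated, so the two are consistent.
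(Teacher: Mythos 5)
Your proposal is correct and matches the paper's approach exactly: the paper states the corollary as an immediate consequence of Theorem~\ref{4:thm:resilience} without a separate proof, and you have simply spelled out the matching of that theorem's conclusion against Definition~\ref{4:def:resilience} when $g(\bm{\chi}^*)=0$.
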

Note that $g(\bm{0})=0$ and hence the system is resilient if $\bm{\chi}^*=\bm{0}$ (i.\,e.~$P_0=\{\bm{0}\}$). It is possible, however, that $g(\bm{\chi}^*)=0$ while $\bm{\chi}^*\neq\bm{0}$. In this case, by Theorem \ref{4:thm:non-resilience} below it is possible that a large fraction of shares of assets is sold, but Corollary \ref{4:cor:resilience} ensures that the fraction of finally defaulted institutions stays small.

\subsection{Non-resilience}\label{4:ssec:non:resilience}
To a large degree we can also characterize non-resilient systems. Note, however, that in our model description we made the conservative assumption that each institution $i\in[n]$ in the system is exposed to the final price impact $h(\bm{\chi}_n)$ with its total initial asset holdings $\bm{x}_i$. One can argue that institutions sell off their assets gradually and are hence not exposed to the total price change. The following result considers non-resilience under this conservative assumption. For other scenarios it can serve as a first indication of non-resilience.

In this subsection, we restrict ourselves to initial shocks of the form $\ell_i\in\{0,2c_i\}$ for all $i\in[n]$, where $\P(L=2C)>0$ and $L/C$ is independent of $(\bm{X},C)$. That is, each institution $i$ defaults initially with positive probability.
Rather than $\ell_i=2c_i$, the first natural choice to model the default of institution $i$ would be $\ell_i=c_i$. Note, however, that in the setting of Section~\ref{4:sec:fire:sales}, even if $\P(L=C)>0$, it is possible that no initial defaults occur since $(L,C)$ is defined as the weak limit of a sequence $(L_n,C_n)$ and it is possible that $L_n<C_n$ for all $n\in\N$ and still $L=C$ almost surely. In order to derive meaningful results one therefore has to choose $\ell_i=2c_i$ (or any other multiple larger than $1$). Note that this does not change the outcome of the fire sales process since $\rho(u)=\rho(1)$ for all $u\geq1$.

We now call a financial system \emph{non-resilient} if the damage caused by finally defaulted institutions is lower bounded by some positive constant.
\begin{definition}[Non-resilience]\label{4:def:non:resilience}
An $(\bm{X},S,C)$-system is said to be \emph{non-resilient} if there exists $\Delta\in\R_+$ such that $\liminf_{n\to\infty}n^{-1}\mathcal{S}_{n,L}>\Delta$ for any initial shock $L$ with the above listed properties.
\end{definition}
We derive the following lower bound on the final default fraction and finally sold shares. 

\begin{theorem}\label{4:thm:non-resilience}
If the initial shock $L$ satisfies above properties and $h^m(\bm{\chi})$ is strictly increasing in $\chi^m$ for all $m\in[M]$, then for any $\epsilon>0$ it holds
\[ \liminf_{n\to\infty}n^{-1}\mathcal{S}_{n,L} > \circG(\bm{\chi}^*)-\epsilon \quad\text{and}\quad \liminf_{n\to\infty}\chi_{n,L}^m > (\chi^*)^m-\epsilon. \]
\end{theorem}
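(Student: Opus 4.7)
I plan to apply Theorem~\ref{4:thm:fire:sale:final:fraction} to the shocked $(\bm{X},S,C)$-system, which yields $\chi_{n,L}^m \geq \hat{\chi}_L^m + o(1)$ and $n^{-1}\mathcal{S}_{n,L} \geq \circG_L(\hat{\bm{\chi}}_L) + o(1)$, where $\hat{\bm{\chi}}_L$ denotes the componentwise smallest joint root of $\circFSuperSub{m}{L}$ in $\circP_{L,0}$. Setting $p := \P(L = 2C) > 0$ and using that $L/C$ is independent of $(\bm{X},C)$ with $L/C \in \{0,2\}$, together with $\accentset{\circ}{\rho}(u) = \rho(u) = \rho(1)$ for all $u \geq 2$, I would first establish the decomposition
\[ \circFSuperSub{m}{L}(\bm{\chi}) = (1-p)\,\circFSuper{m}(\bm{\chi}) + p\bigl(\rho(1)\E[X^m] - \chi^m\bigr), \]
together with the analogous identities $f_L^m(\bm{\chi}) = (1-p)f^m(\bm{\chi}) + p(\rho(1)\E[X^m] - \chi^m)$ and $\circG_L(\bm{\chi}) = (1-p)\circG(\bm{\chi}) + p\,\E[S] \geq \circG(\bm{\chi})$. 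The task then reduces to showing $\hat{\bm{\chi}}_L \geq \bm{\chi}^* - \delta\mathbf{1}$ componentwise for every $\delta > 0$.

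Since $\rho \leq \rho(1)$, every $\bm{\chi} \in P_0$ satisfies the a priori bound $\chi^m \leq \E[X^m\rho(\bm{X}\cdot h(\bm{\chi})/C)] \leq \rho(1)\E[X^m]$, and the decomposition above then gives $f_L^m(\bm{\chi}) \geq p(\rho(1)\E[X^m] - \chi^m) \geq 0$. This shows $P_0 \subseteq P_{L,0}$, so in particular $\bm{\chi}^* \in P_{L,0}$ and $\bm{\chi}_L^* \geq \bm{\chi}^*$ componentwise. Dually, every joint root $\bar{\bm{\chi}}$ of $\circFSuperSub{m}{L}$ must satisfy $\circFSuper{m}(\bar{\bm{\chi}}) = -\tfrac{p}{1-p}(\rho(1)\E[X^m] - \bar{\chi}^m) \leq 0$ for each $m$, with strict inequality whenever $\bar{\chi}^m < \rho(1)\E[X^m]$.

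The main technical step, which I expect to be the principal obstacle, is to translate these observations into the componentwise lower bound $\hat{\bm{\chi}}_L \geq \bm{\chi}^* - \delta\mathbf{1}$. To do so I would analyze the sequential iteration from Section~\ref{4:ssec:fire:sales:finite} (with $\rho$ replaced by $\accentset{\circ}{\rho}$ for lower bounds) in the asymptotic setting, which is non-decreasing and converges to $\hat{\bm{\chi}}_L$. Exploiting that the one-step map decomposes as $(1-p)$ times the unshocked map plus the constant push $p\rho(1)\E[\bm{X}]$, the shock term guarantees a strictly positive increment at $\bm{0}$, while the strict monotonicity of each $h^m$ in $\chi^m$ prevents the iteration from stalling at jump configurations or ``sideways'' joint roots inside $P_0$ with some coordinate well below $(\chi^*)^m$. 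A perturbation/continuity argument in the spirit of Lemma~\ref{3:lem:z^*:epsilon} of Chapter~\ref{chap:block:model}, coupled with the connectedness of $P_0$, would then establish that the iteration reaches every componentwise neighborhood of $\bm{\chi}^*$. Once $\hat{\bm{\chi}}_L \geq \bm{\chi}^* - \delta\mathbf{1}$ is in hand, the second assertion of the theorem is immediate, and for the first we combine $\circG_L(\hat{\bm{\chi}}_L) \geq \circG(\hat{\bm{\chi}}_L) \geq \circG(\bm{\chi}^* - \delta\mathbf{1})$ (by monotonicity of $\circG$ inherited from that of each $h^m$) with the lower semi-continuity of $\circG$ at $\bm{\chi}^*$, and let $\delta \downarrow 0$ to obtain $\liminf_n n^{-1}\mathcal{S}_{n,L} \geq \circG(\bm{\chi}^*) - \epsilon$ for any prescribed $\epsilon > 0$. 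The most delicate aspect throughout is the distinction between $P_0$ (which contains $\bm{\chi}^*$) and $\circP_0$ (which in general may not), handled through the $\accentset{\circ}{\cdot}$ modifications and the strict monotonicity of $h^m$.
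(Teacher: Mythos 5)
Your high-level skeleton is the same as the paper's: apply Theorem~\ref{4:thm:fire:sale:final:fraction} to the shocked system so it suffices to establish $\hat{\bm{\chi}}_L \geq \bm{\chi}^*$ (up to a vanishing $\delta$), decompose $\circFSuperSub{m}{L}(\bm{\chi}) = (1-p)\,\circFSuper{m}(\bm{\chi}) + p(\rho(1)\E[X^m] - \chi^m)$ using $\accentset{\circ}{\rho}(u)=\rho(1)$ for $u\geq 2$ and the independence of $L/C$, and read off that $\circFSuper{m}(\hat{\bm{\chi}}_L) < 0$ whenever $\hat{\chi}_L^m < \rho(1)\E[X^m]$. (Your careful use of $\rho(1)\E[X^m]$ in place of the paper's $\E[X^m]$ is in fact the sharper bookkeeping.) Up to this point the proposal is sound, and the observation that $P_0 \subseteq P_{L,0}$ is also correct, though it bounds $\bm{\chi}^*_L$ from below, not the quantity $\hat{\bm{\chi}}_L$ that you actually need.

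The genuine gap is the ``main technical step'' you flag yourself, and your sketch of it would not go through. The obstruction is that $P_0$ and $\bm{\chi}^*$ are defined through the upper semi-continuous functions $f^m$, while your strict inequalities at $\hat{\bm{\chi}}_L$ concern the lower semi-continuous modifications $\circFSuper{m}$, and $\circFSuper{m} \leq f^m$. Concretely: if $\hat{\bm{\chi}}_L \not\geq \bm{\chi}^*$, connectedness of $P_0$ gives you a point $\tilde{\bm{\chi}} \in P_0$ with $\tilde{\chi}^{m_0} = \hat{\chi}_L^{m_0}$ and $\tilde{\bm{\chi}} \leq \hat{\bm{\chi}}_L$; monotonicity then gives $\circFSuper{m_0}(\tilde{\bm{\chi}}) \leq \circFSuper{m_0}(\hat{\bm{\chi}}_L) < 0$, but this is perfectly compatible with $f^{m_0}(\tilde{\bm{\chi}}) \geq 0$ (i.e., $\tilde{\bm{\chi}}\in P_0$), so no contradiction arises. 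An iteration-from-below argument cannot repair this: the iteration converges by construction to $\hat{\bm{\chi}}_L$, so it ``stalls'' there by definition, and appealing to strict monotonicity of $h^m$ as a slogan does not tell you where $\hat{\bm{\chi}}_L$ sits relative to $\bm{\chi}^*$. What is actually needed — and what the paper supplies — is the family $\hat{\bm{\chi}}(\epsilon,I)$ of smallest points with $\circFSuper{m} = -\epsilon$ for $m\in I$ and $\chi^k = \E[X^k]$ for $k\notin I$, together with the measure-theoretic argument that, because $\epsilon \mapsto \hat{\bm{\chi}}(\epsilon,I)$ is monotone (hence continuous a.e.) and each $\hat{\chi}^m(\epsilon,I)$ with $m\in I$ is \emph{strictly} increasing, the strict monotonicity of $h^m$ in $\chi^m$ forces $\rho(\bm{X}\cdot h(\hat{\bm{\chi}}(\epsilon-\gamma,I))/C) \leq \circRho(\bm{X}\cdot h(\hat{\bm{\chi}}(\epsilon,I))/C)$ on $\{X^m>0\}$ and hence $f^m(\hat{\bm{\chi}}(\epsilon,I)) = \circFSuper{m}(\hat{\bm{\chi}}(\epsilon,I)) = -\epsilon$ for almost every $\epsilon$. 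Only after this identification can one run the connectedness-plus-monotonicity contradiction against $P_0$ (now in terms of $f^m$) to conclude $P_0 \subset [\bm{0},\hat{\bm{\chi}}(\epsilon,I)]$, and then place $\hat{\bm{\chi}}_L$ into $T(\epsilon,I)$ to obtain $\hat{\bm{\chi}}_L \geq \hat{\bm{\chi}}(\epsilon,I) \geq \bm{\chi}^*$. Your proposal names some of the right ingredients (strict monotonicity of $h^m$, a continuity argument in the spirit of Lemma~\ref{3:lem:z^*:epsilon}) but does not assemble them into the step that bridges $\circFSuper{m}$ and $f^m$, which is the entire content of the proof.
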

The assumption that $h^m(\bm{\chi})$ is strictly increasing in $\chi^m$ excludes a rather pathological case. It is satisfied by all the standard price impact functions in the literature such as linear price impact $h_\text{lin}^m(\bm{\chi})=p^m\alpha\chi^m$ or log-linear price impact $h_\text{loglin}^m(\bm{\chi})=p^m(1-\exp(-\alpha\chi^m))$ for initial share price $p^m$ and some parameter $\alpha>0$.

\begin{corollary}[Non-resilience Criterion]\label{4:cor:non:resilience}
If $h^m(\bm{\chi})$ is strictly increasing in $\chi^m$ for all \mbox{$m\in[M]$,} and $\circG(\bm{\chi}^*)>0$, then the $(\bm{X},S,C)$-system is non-resilient.
\end{corollary}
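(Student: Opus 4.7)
The proof is an essentially immediate consequence of Theorem \ref{4:thm:non-resilience} combined with the definition of non-resilience (Definition \ref{4:def:non:resilience}), so my plan is simply to combine these two ingredients cleanly. The key observation is that Theorem \ref{4:thm:non-resilience} produces a lower bound on $\liminf_{n\to\infty}n^{-1}\mathcal{S}_{n,L}$ that is independent of the particular choice of $L$ (among the admissible shocks), provided only that $h^m$ is strictly increasing in $\chi^m$ for each $m\in[M]$.

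First I would exploit the positivity assumption by choosing $\Delta:=\circG(\bm{\chi}^*)/2>0$, and correspondingly set $\epsilon:=\circG(\bm{\chi}^*)/2$ in the statement of Theorem \ref{4:thm:non-resilience}. Since the strict monotonicity hypothesis on $h^m$ carries over verbatim from the corollary to the theorem, the theorem applies and yields
\[
\liminf_{n\to\infty}n^{-1}\mathcal{S}_{n,L} \;>\; \circG(\bm{\chi}^*)-\epsilon \;=\; \frac{\circG(\bm{\chi}^*)}{2} \;=\; \Delta
\]
for every admissible initial shock $L$, i.e.~every $L$ with $\ell_i\in\{0,2c_i\}$, $\P(L=2C)>0$, and $L/C$ independent of $(\bm{X},C)$.

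Finally, I would observe that the constant $\Delta$ constructed in this way depends only on $\circG(\bm{\chi}^*)$, which is a property of the unshocked $(\bm{X},S,C)$-system itself and does not involve $L$ in any way. Hence the single constant $\Delta$ works uniformly over all admissible initial shocks, which is precisely the requirement of Definition \ref{4:def:non:resilience}, and the system is non-resilient. There is no genuine obstacle here beyond verifying that the hypotheses of Theorem \ref{4:thm:non-resilience} are inherited by the corollary; the only mild subtlety worth noting is that $\bm{\chi}^*$ is defined from the unshocked functions $f^m$ (and $\circG$ from their lower semi-continuous modifications), so the lower bound depends only on the intrinsic data of the $(\bm{X},S,C)$-system and not on $L$.
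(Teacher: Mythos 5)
Your proof is correct and is exactly the intended (and essentially only) argument: pick $\epsilon=\Delta=\circG(\bm{\chi}^*)/2$, apply Theorem~\ref{4:thm:non-resilience}, and note that the resulting bound is uniform over admissible $L$ because $\circG(\bm{\chi}^*)$ is computed from the unshocked system. The paper treats this as an immediate consequence and gives no separate proof, so there is nothing to compare beyond noting you have supplied the obvious details.
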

As remarked earlier already, for most practical purposes it will hold that $\circG(\bm{\chi}^*)=g(\bm{\chi}^*)$. For the reasonable case that $h^m(\bm{\chi})$ is strictly increasing in $\chi^m$ for all $m\in[M]$, we can thus fully describe stability of an $(\bm{X},S,C)$-system by the combination of Corollaries \ref{4:cor:resilience} and \ref{4:cor:non:resilience}. Only for rather pathological cases we cannot decide if an $(\bm{X},S,C)$-system is resilient or non-resilient.

\subsection{Systemic Capital Requirements}
\label{4:ssec:capital:requirements}

Theorems \ref{4:thm:resilience} and \ref{4:thm:non-resilience} can be used to derive sufficient and necessary capital requirements to make a given system resilient with respect to initial shocks. That is, given the asset holdings of each institution, we want to determine sharp bounds for the capital that each institution must hold so that the system is (non-)resilient in the sense of Definitions~\ref{4:def:resilience} and \ref{4:def:non:resilience}. Recall that non-resilience according to Definition \ref{4:def:non:resilience} is always under our conservative model assumption that price impact is applied to all initially held shares of an asset. The derived requirements will generally depend on the actual sale function $\rho$ and price impact $h$. In the following, we demonstrate the procedure of deriving capital requirements along a series of examples of ever increasing complexity culminating in a quite general setting with multiple assets. We always consider systemic importance values $s_i=1$ for all $i\in[n]$.

\vspace{-7pt}
\paragraph{One Asset with Sales at Default} We start with considering a system with institutions investing in one asset only. The distribution $F_X$ of asset holdings is assumed to have a power law tail in the sense that there exist constants $B_1,B_2\in(0,\infty)$ such that for $x$ large enough
\begin{equation}
\label{4:eq:Xpowerlaw}
B_1 x^{1-\beta}\leq 1-F_X(x) \leq B_2 x^{1-\beta},
\end{equation}
for some $\beta > 2$. Whereas the reduction to one asset is a strong simplification, there is empirical evidence for power laws in investment volumes, see e.\,g.~\cite{Garlaschelli2005}.
Moreover, we assume \mbox{$\P(X\geq 1)=1$} -- institutions involved in the fire sales process hold at least one share. Further, assume \mbox{$\rho(u)=\1\{u\geq1\}$,} i.\,e.~institutions do not sell their assets until they default. Moreover, for the price impact we also assume a power-law, that is, there are $\nu,\mu_1,\mu_2 \in \R_{+}$ such that for small $\chi$
\begin{equation}
\label{4:eq:hpowerlaw}
\mu_1 \chi^\nu\leq h(\chi) \leq \mu_2 \chi^\nu.
\end{equation}
A typical assumption in the fire sales literature, see e.\,g.~\cite{Braverman2014,Caccioli2014,Cifuentes2005,Cont2017,Cont2018}, is (log-)linear price impact, i.\,e.~$\nu=1$. This choice can for example be justified by the fact that limit order books' shape functions are approximately constant close to the bid price, see also \cite{Gu2008}.

We now derive necessary and sufficient requirements for the institutions' capital buffers to make the financial system resilient. That is, given the asset holdings $(x_1(n), \dots, x_n(n))_{n \in \N}$ with empirical distribution converging to $F_X$ we want to determine a 
sequence of minimal capitals $(c_1(n), \dots, c_n(n))_{n \in \N}$ sufficient for ensuring resilience of the system in the sense of Definition \ref{4:def:resilience}. It turns out that a natural description emerges when we choose the capitals in dependence on the asset holdings by the following power form: $c_i = \alpha x_i^\gamma$ for $\alpha\in\R_+$ and $\gamma\in\R_{+,0}$. 

\begin{corollary}\label{4:cor:one:asset:sales:default}
Consider a system as specified above. Then,
\begin{enumerate}
\item if $\gamma>1-\nu(\beta-2)$, then the system is resilient.
\item if $\gamma=1-\nu(\beta-2)$ and $\alpha>\mu_2 \left(B_2\frac{\beta-1}{\beta-2}\right)^\nu$, then the system is resilient.
\item if $\gamma=1-\nu(\beta-2)$ and $\alpha<\mu_1\left(B_1\frac{\beta-1}{\beta-2}\right)^\nu$, then the system is non-resilient.
\item if $\gamma<1-\nu(\beta-2)$, then the system is non-resilient.
\end{enumerate}
\end{corollary}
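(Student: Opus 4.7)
My plan is to base the proof on the resilience criterion in Corollary~\ref{4:cor:resilience} and the non-resilience criterion in Corollary~\ref{4:cor:non:resilience}. Since $M=1$, the only non-trivial object is the scalar function
\[
f(\chi) = \E\left[X\rho\!\left(\tfrac{Xh(\chi)}{C}\right)\right] - \chi = \E\bigl[X\,\mathbf{1}\{Xh(\chi)\geq \alpha X^\gamma\}\bigr] - \chi,
\]
where I have plugged in $\rho(u)=\mathbf{1}\{u\geq 1\}$, $C=\alpha X^\gamma$, and $L\equiv 0$. For $\gamma\geq 1$, because $X\geq 1$ almost surely and $h(\chi)\to 0$ as $\chi\to 0$, the indicator vanishes for small $\chi$, giving $f(\chi)=-\chi<0$ and thus $\chi^{*}=0$; such $\gamma$ always fall into case~1 since $1-\nu(\beta-2)<1$. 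So the interesting regime is $\gamma<1$, where I can rewrite the indicator as $\{X\geq t(\chi)\}$ with $t(\chi):=\bigl(\alpha/h(\chi)\bigr)^{1/(1-\gamma)}$.

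The analytic engine is a sharp two-sided estimate of $\E[X\mathbf{1}\{X\geq t\}]$ for large $t$ via integration by parts:
\[
\E[X\mathbf{1}\{X\geq t\}] = t(1-F_X(t)) + \int_t^{\infty}(1-F_X(x))\,\mathrm{d}x,
\]
which together with \eqref{4:eq:Xpowerlaw} gives $B_1\tfrac{\beta-1}{\beta-2}t^{2-\beta}\leq \E[X\mathbf{1}\{X\geq t\}]\leq B_2\tfrac{\beta-1}{\beta-2}t^{2-\beta}$ for $t$ large. Composing with the bounds $\mu_1\chi^{\nu}\leq h(\chi)\leq \mu_2\chi^{\nu}$ from \eqref{4:eq:hpowerlaw} yields, for $\chi$ small,
\[
B_1\tfrac{\beta-1}{\beta-2}\alpha^{(2-\beta)/(1-\gamma)}\mu_1^{(\beta-2)/(1-\gamma)}\chi^{\nu(\beta-2)/(1-\gamma)}-\chi \;\leq\; f(\chi) \;\leq\; B_2\tfrac{\beta-1}{\beta-2}\alpha^{(2-\beta)/(1-\gamma)}\mu_2^{(\beta-2)/(1-\gamma)}\chi^{\nu(\beta-2)/(1-\gamma)}-\chi.
\]
The critical exponent condition $\nu(\beta-2)/(1-\gamma)=1$ is precisely $\gamma=1-\nu(\beta-2)$, explaining the threshold in the statement.

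I then split into the four cases. In case~1 ($\gamma>1-\nu(\beta-2)$) the exponent of $\chi$ in the upper bound exceeds $1$, so $f(\chi)<0$ for all $\chi>0$ small; in the boundary case~2 both terms are linear in $\chi$ and the upper bound coefficient $B_2\tfrac{\beta-1}{\beta-2}\alpha^{-1/\nu}\mu_2^{1/\nu}-1$ is negative precisely when $\alpha>\mu_2\bigl(B_2\tfrac{\beta-1}{\beta-2}\bigr)^{\nu}$. In both situations $\chi^{*}=0$, hence $g(\chi^{*})=0$ and Corollary~\ref{4:cor:resilience} gives resilience. Symmetrically, in case~4 the lower bound's exponent is below $1$ and dominates the $-\chi$ term, while in case~3 the linear lower bound has strictly positive coefficient under the stated threshold on $\alpha$. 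In either case $f(\chi)>0$ for small $\chi>0$, forcing $P_0$ to contain a neighborhood of $0$ in $\R_{+,0}$ and thus $\chi^{*}>0$; then $\circG(\chi^{*})=\P(Xh(\chi^{*})>\alpha X^{\gamma})=\P(X>t(\chi^{*}))>0$ by the lower power-law tail, and Corollary~\ref{4:cor:non:resilience} yields non-resilience (noting $h(\chi)\geq \mu_1\chi^{\nu}>0$ provides the local strict monotonicity of $h$ at $\chi^{*}$ required there). The main technical difficulty is the boundary regime $\gamma=1-\nu(\beta-2)$, where both resilience and non-resilience hinge on comparing the \emph{constants} rather than the exponents; this requires being careful that the asymptotic power-law bounds on $1-F_X$ and $h$ are tight enough to yield the sharp thresholds $\mu_{1,2}(B_{1,2}\tfrac{\beta-1}{\beta-2})^{\nu}$, and that the residual terms from the non-asymptotic part of the distribution of $X$ are indeed $o(\chi)$.
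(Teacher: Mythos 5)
Your proposal is correct and follows essentially the same route as the paper's proof: after handling $\gamma\geq 1$ trivially, you bound $\E[X\1\{X\geq t(\chi)\}]$ between $B_{1}\tfrac{\beta-1}{\beta-2}\,t(\chi)^{2-\beta}$ and $B_{2}\tfrac{\beta-1}{\beta-2}\,t(\chi)^{2-\beta}$ via integration by parts and the power-law tail, compose with the two-sided power-law bound on $h$, and compare the resulting exponent $\nu(\beta-2)/(1-\gamma)$ and coefficient against $1$ before invoking Corollaries~\ref{4:cor:resilience} and~\ref{4:cor:non:resilience}, exactly as the paper does. One small remark: the inequality $h(\chi)\geq\mu_1\chi^{\nu}>0$ gives only positivity, not the strict monotonicity of $h$ that Corollary~\ref{4:cor:non:resilience} requires --- this is an additional implicit standing assumption, which the paper's own proof also leaves unstated.
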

Typically one can choose $B_1$ and $B_2$ resp.~$\mu_1$ and $\mu_2$ arbitrarily close as $x\to\infty$ resp.~$\chi\to0$. Corollary \ref{4:cor:one:asset:sales:default} hence states necessary and sufficient conditions on the capital to make the financial system resilient. 

\vspace{-7pt}
\paragraph{One Asset with Intermediate Sales} In the previous example, we considered the conservative case of sales at default only. Intermediate sales will make the system less resilient, however, and we consider an example of this kind here as well. We assume again~\eqref{4:eq:Xpowerlaw} and~\eqref{4:eq:hpowerlaw}, that is, the asset holdings are power-law distributed with parameter $\beta$ and the price impact has a power-law approximation with exponent $\nu$ when $\chi\to0$. Moreover, capitals are again specified by $c_i = \alpha x_i^\gamma$. In contrast to the assumptions in the previous paragraph we now choose $\rho(u)=1\wedge u^q$ for some $q\in\R_+$. The parameter $q$ can be understood as a measure for the institutions' confidence in the asset, since it describes the speed at which they sell it. The outbreak of fire sales is then governed by $\nu$ (price impact) and $q$ (speed of selling). In fact, the product $\nu q$ is a crucial quantity in the decision whether a system is resilient. It is easy to show that if $\nu q<1$ the system is always non-resilient, as the institutions sell overproportionally fast compared to the price fall. Note that this quantity cannot be influenced by regulations but is intrinsic to the market in our setting.

\begin{corollary}\label{4:cor:intermediate:sales}
Consider a system as described above and assume that $\nu q>1$. If\linebreak $\gamma>1-\nu(\beta-2)$, then the system is resilient. If $\gamma<1-\nu(\beta-2)$, then the system is non-resilient.
\end{corollary}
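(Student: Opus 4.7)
The plan is to reduce both assertions to the stability criteria in Corollaries~\ref{4:cor:resilience} and \ref{4:cor:non:resilience} by analysing the behaviour of
\[
f(\chi) = \E\Bigl[X\rho\Bigl(\tfrac{Xh(\chi)}{C}\Bigr)\Bigr]-\chi
         = \E\bigl[X\bigl(1\wedge (X^{1-\gamma}h(\chi)/\alpha)^{q}\bigr)\bigr]-\chi
\]
of the a priori unshocked system as $\chi\to0+$. Since $\rho$ is continuous, $f=\circF$, and one easily checks that the one-dimensional $P_{0}$ is the interval $[0,\chi^{*}]$ where $\chi^{*}=\hat{\chi}$ is the smallest positive root of $f$ (or $0$ if $f<0$ on $(0,\chi_{0})$).

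The main step is a splitting argument. For small $\chi>0$, set $X_{*}(\chi):=(\alpha/h(\chi))^{1/(1-\gamma)}$, the value at which the two branches of $\rho$ meet (note $\gamma<1-\nu(\beta-2)+1=1$ in the non\nobreakdash-trivial regime). Writing $r:=1+q(1-\gamma)$ we then obtain
\[
f(\chi)+\chi
= (h(\chi)/\alpha)^{q}\,\E\!\left[X^{r}\mathbf{1}\{X\le X_{*}\}\right]
  +\E\!\left[X\,\mathbf{1}\{X> X_{*}\}\right].
\]
Using \eqref{4:eq:Xpowerlaw} and integration by parts one shows
$\E[X\mathbf{1}\{X>X_{*}\}]\asymp X_{*}^{2-\beta}\asymp h(\chi)^{(\beta-2)/(1-\gamma)}\asymp \chi^{\nu(\beta-2)/(1-\gamma)}$
as $\chi\to0+$, where the implicit constants lie between multiples of $B_{1}\mu_{1}^{(\beta-2)/(1-\gamma)}$ and $B_{2}\mu_{2}^{(\beta-2)/(1-\gamma)}$. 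For the body term three sub-cases occur depending on whether $r<\beta-1$, $r=\beta-1$, or $r>\beta-1$. In the first case $\E[X^{r}]<\infty$ and the body term is of order $h(\chi)^{q}\asymp\chi^{\nu q}$; since the hypothesis $\nu q>1$ forces $\chi^{\nu q}=o(\chi)$, the body is negligible. In the remaining two cases a direct truncation computation gives $\E[X^{r}\mathbf{1}\{X\le X_{*}\}]\asymp X_{*}^{r-\beta+1}$ (up to a logarithm in the boundary case), and a short arithmetic yields body term $\asymp\chi^{\nu(\beta-2)/(1-\gamma)}$, i.e.\ of the same order as the tail.

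Thus $f(\chi)+\chi=\Theta\bigl(\chi^{\nu(\beta-2)/(1-\gamma)}\bigr)$ as $\chi\to0+$. If $\gamma>1-\nu(\beta-2)$, then $\nu(\beta-2)/(1-\gamma)>1$, so $f(\chi)<0$ for all small positive $\chi$; combined with continuity and $f(\chi)\to-\infty$ for large $\chi$, this forces $P_{0}=\{0\}$ and hence $\chi^{*}=0$. Then $g(\chi^{*})=\P(Xh(0)\ge C)=0$ and resilience follows from Corollary~\ref{4:cor:resilience}. Conversely, if $\gamma<1-\nu(\beta-2)$, the same asymptotics yield $f(\chi)>0$ for all small $\chi>0$, so $\chi^{*}>0$. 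Since $h$ is strictly increasing and $X$ has unbounded support thanks to \eqref{4:eq:Xpowerlaw}, we get $\circG(\chi^{*})=\P(X^{1-\gamma}>\alpha/h(\chi^{*}))>0$, so Corollary~\ref{4:cor:non:resilience} gives non-resilience.

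The main obstacle is the careful asymptotic evaluation of the body term across the three regimes separated by $r\gtreqless\beta-1$; in particular, showing that the boundary logarithmic case is still absorbed into the $o(\chi)$ estimate when $\gamma>1-\nu(\beta-2)$ and $\nu q>1$ requires handling both contributions simultaneously, since neither alone dominates. The role of the standing assumption $\nu q>1$ is precisely to neutralise the body term in the light\nobreakdash-tail sub\nobreakdash-case $r<\beta-1$, where the heavy tail would otherwise play no role.
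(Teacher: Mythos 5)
Your argument is essentially correct, and the resilience direction follows the same body/tail decomposition at the threshold $X_*=(\alpha/h(\chi))^{1/(1-\gamma)}$ that the paper uses; the interesting divergence is in the non-resilience direction, where you carry out a direct lower-bound computation on $f$ near zero, whereas the paper dispatches this with a one-line coupling argument: since $\rho(u)=1\wedge u^{q}\ge\1\{u\ge 1\}$, the intermediate-sales system produces at least as many sales as the sell-at-default system of Corollary~\ref{4:cor:one:asset:sales:default}, so its non-resilience for $\gamma<1-\nu(\beta-2)$ transfers immediately. The coupling argument is shorter and reuses Corollary~\ref{4:cor:one:asset:sales:default}; your direct computation is more self-contained and makes explicit that the tail term $\E[X\1\{X>X_*\}]\gtrsim\chi^{\nu(\beta-2)/(1-\gamma)}$ already exceeds $\chi$ when $\nu(\beta-2)/(1-\gamma)<1$, even without invoking the earlier corollary.

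One imprecision worth fixing: the assertion ``$f(\chi)+\chi=\Theta(\chi^{\nu(\beta-2)/(1-\gamma)})$'' is not correct in the sub-case $r<\beta-1$. There the body term is $\asymp \chi^{\nu q}$, and since $r<\beta-1$ forces $q<(\beta-2)/(1-\gamma)$, the exponent $\nu q$ is \emph{strictly smaller} than $\nu(\beta-2)/(1-\gamma)$, so the body \emph{dominates} the tail as $\chi\to0+$; the correct summary is $f(\chi)+\chi=\Theta(\chi^{\min\{\nu q,\,\nu(\beta-2)/(1-\gamma)\}})$ (with a log factor at $r=\beta-1$). Your conclusions survive because both exponents exceed $1$ under $\nu q>1$ and $\gamma>1-\nu(\beta-2)$ (giving $o(\chi)$ and hence $f<0$), and for non-resilience only the lower bound from the tail is used; but as written the displayed $\Theta$ statement is false and the phrase ``Thus'' does not follow from the case analysis you just gave. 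Also note the arithmetic slip ``$1-\nu(\beta-2)+1=1$''; what you mean is that $1-\nu(\beta-2)<1$, so that when $\gamma\ge 1$ you are automatically in the resilient regime (and indeed the threshold $X_*$ becomes vacuous, $\rho$ is always on its power branch, and the body-only estimate $f(\chi)+\chi\lesssim\chi^{\nu q}=o(\chi)$ finishes it, matching the paper's separate treatment of $\gamma\ge1$).
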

Similar as in the proof of Corollary \ref{4:cor:one:asset:sales:default} it is possible to derive sufficient capital requirements also at the critical values $\nu q=1$ resp.~$\gamma=1-\nu(\beta-2)$; the details are omitted. 

\vspace{-7pt}
\paragraph{Multiple Assets} The two previous examples have already given first important insights into the calculation of sufficient capital requirements for stability in a given system. Whereas these concentrated on systems with one asset only, however, in reality institutions are invested in a large number of assets $M$. We will derive sufficient capital requirements also in this setting. In practice, linear capital requirements seem reasonable and the previous examples have shown already that these are sufficient in the one-asset case. Furthermore, linear capitals allow for tractable calculations in the following.

We keep the same assumptions as in the previous example. In particular, $\rho^m(y)=1\wedge y^{q^m}$ for some $q^m\in\R_+$, $m\in[M]$ (recall the note in the beginning of Section \ref{4:sec:fire:sales} about different sale functions for different assets) and $h^m(\bm{\chi}) \le \mu^m(\chi^m)^\nu$ for $\chi^m\to0$ for some $\nu,\mu^m\in\R_+$. Consider then linear capital $c_i=\sum_{m\in[M]}\theta^mx_i^m$ for each institution $i\in[n]$, where $\theta^m,\in\R_+$, $m\in[M]$.
\begin{corollary}\label{4:cor:multiple:assets}
Consider a  system with multiple assets as described above. Then the system is resilient if for each $m\in[M]$ one of the following holds:
\begin{enumerate}
\item \label{4:cor:multiple:assets:1} $q^m>\nu^{-1}$,
\item \label{4:cor:multiple:assets:2} $q^m=\nu^{-1}$ and $\theta^m>\mu^m\E[X^m]$.
\end{enumerate}
\end{corollary}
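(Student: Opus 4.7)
The plan is to invoke the Resilience Criterion in Corollary~\ref{4:cor:resilience}: since $g(\bm{0})=0$, it suffices to prove $\bm{\chi}^* = \bm{0}$, and by the construction of $\bm{\chi}^*$ in Lemma~\ref{4:lem:existence:chi:hat} this reduces to showing that $P_0 = \{\bm{0}\}$. The cleanest way to achieve this is to exhibit a neighborhood $U\subset\R_{+,0}^M$ of $\bm{0}$ such that $P\cap U = \{\bm{0}\}$, since then the connected component of $\bm{0}$ in $P$ cannot extend beyond $\bm{0}$.

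To establish that $\bm{0}$ is isolated in $P$, I would fix $\bm{\chi}\in U\setminus\{\bm{0}\}$, pick $m\in[M]$ realizing $\chi^m=\max_{k\in[M]}\chi^k>0$, and show $f^m(\bm{\chi})<0$, which forces $\bm{\chi}\notin P$. Using the initial shock $L\equiv 0$ and the bound $\rho^m(y)\le y^{q^m}$ (valid for $y\le 1$, which holds on a small enough neighborhood),
\[
f^m(\bm{\chi}) + \chi^m \;\le\; \E\!\left[X^m\Bigl(\tfrac{\bm{X}\cdot h(\bm{\chi})}{C}\Bigr)^{q^m}\right].
\]
Then I would combine the asymptotic bound $h^k(\bm{\chi})\le \mu^k(\chi^k)^\nu$, the key inequality $C\ge \theta^m X^m$ (which is where the linear form of the capital enters), and $\chi^k\le\chi^m$, to produce an estimate of the shape $f^m(\bm{\chi}) + \chi^m \le K\,(\chi^m)^{\nu q^m}$ for a constant $K$ depending only on $\mu^k$, $\theta^k$, $M$ and the moments of $\bm{X}$.

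In case~\ref{4:cor:multiple:assets:1} ($\nu q^m>1$), the dominating exponent already suffices: $(\chi^m)^{\nu q^m}=o(\chi^m)$ as $\chi^m\to 0$, so $f^m(\bm{\chi})<0$ for $U$ small enough. In case~\ref{4:cor:multiple:assets:2} ($\nu q^m=1$) the exponent is exactly $\chi^m$, so resilience becomes a quantitative statement about the constant $K$: it must satisfy $K<1$. This is the step where the hypothesis $\theta^m>\mu^m\E[X^m]$ comes into play, after an application of Hölder's inequality to push $X^m$ and the factor $(X^m/C)^{1/\nu}$ through the expectation, using $X^m/C\le 1/\theta^m$.

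The main obstacle I anticipate is precisely this borderline case~\ref{4:cor:multiple:assets:2}: the crude bound obtained by extracting factors coordinate-by-coordinate loses a power of $\nu$ in the denominator and produces a worse threshold than the stated one. To avoid this, the argument has to exploit the shared capital $C=\sum_k\theta^k X^k$ carefully, likely by reducing to a univariate computation on the $m$-axis before estimating and then propagating the bound to general directions using $\chi^m = \max_k \chi^k$ together with the monotonicity of $f^m$ in $\chi^k$ for $k\ne m$. Once $f^m(\bm{\chi})<0$ is established on $U\setminus\{\bm{0}\}$, we conclude $P\cap U=\{\bm{0}\}$, hence $P_0=\{\bm{0}\}$, hence $\bm{\chi}^*=\bm{0}$, and Corollary~\ref{4:cor:resilience} finishes the proof.
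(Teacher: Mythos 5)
Your overall framing---establishing $P\cap U=\{\bm 0\}$ for a punctured neighborhood $U$ and concluding $P_0=\{\bm 0\}$, hence $\bm\chi^*=\bm 0$, hence resilience by Corollary~\ref{4:cor:resilience}---is sound, and your max-coordinate bound does close case~\ref{4:cor:multiple:assets:1}: with $\chi^m=\max_k\chi^k$ and $h^k(\bm\chi)\le\mu^k(\chi^m)^\nu$, the quotient inside $\rho^m$ is at most $(\chi^m)^\nu\max_k(\mu^k/\theta^k)$, so $f^m(\bm\chi)+\chi^m=O\bigl((\chi^m)^{\nu q^m}\bigr)=o(\chi^m)$. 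But you explicitly leave the borderline case~\ref{4:cor:multiple:assets:2} open, and that is the part that carries the quantitative content of the corollary. The obstacle you name---that the coordinate-by-coordinate bound replaces $\sum_k\mu^kX^k/\sum_k\theta^kX^k$ by $\max_k\mu^k/\theta^k$, degrading the threshold---is real, and ``propagating a univariate estimate to general directions'' will not, by itself, rescue the constant: for a generic $\bm\chi$ near $\bm 0$ the worst-case coordinate ratio genuinely enters.

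The missing idea is to evaluate along a single well-chosen ray rather than over a full neighborhood, which is both simpler and sharp. Take $\bm v\in\R_+^M$ with $\mu^k(v^k)^\nu=\theta^k$ for each $k$ (so $v^m=\theta^m/\mu^m$ in the reference case $\nu=1$). Along $\bm\chi=\chi\bm v$ the $\bm X$-dependence inside $\rho^m$ cancels completely:
\[
\frac{\sum_k X^k\,h^k(\chi\bm v)}{\sum_k\theta^kX^k}
\;\le\;\frac{\sum_k X^k\,\mu^k(\chi v^k)^\nu}{\sum_k\theta^kX^k}
\;=\;\chi^\nu,
\]
so $f^m(\chi\bm v)\le\E[X^m]\,\chi^{\nu q^m}-\chi v^m$ for $\chi<1$. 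Hence $\limsup_{\chi\to0+}f^m(\chi\bm v)/\chi$ equals $-v^m<0$ when $\nu q^m>1$ and $\E[X^m]-v^m$ when $\nu q^m=1$, and the hypothesis $\theta^m>\mu^m\E[X^m]$ (equivalently $v^m>\E[X^m]$ in the $\nu=1$ setting) makes the latter strictly negative. Since $f^m(\chi\bm v)<0$ for every $m$ and all small $\chi>0$, the monotonicity/connectedness argument of Lemma~\ref{4:lem:existence:chi:hat} forces $P_0\subset[\bm 0,\chi\bm v]$ for all such $\chi$, so $P_0=\{\bm 0\}$ and $\bm\chi^*=\bm 0$. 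This single-ray reduction is exactly what replaces your ``univariate computation on the $m$-axis plus propagation,'' and it is the step you would need to supply for your proof to be complete.
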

The first condition reflects the interplay of price impact and the speed of asset sales as for the one-asset case. The second condition gives an explicit linear fraction of the institutions' holdings of each asset that ensures resilience. It depends on the price impact function by $\mu^m$ and the number of shares of the asset held on average by each other institution in the system $\E[X^m]$. While the condition $q^m>\nu^{-1}$ is sufficient in the limit $n\to\infty$ by our theory, for real networks of finite size the quantity $\mu^m\E[X^m]$ is of big interest as it gives a proper scaling factor also for values of $q^m$ other than $\nu^{-1}$. 
The following sample calculations show that $\theta^m$ is of a reasonable magnitude 
also under our conservative model assumptions: Assume that the price impact is log-linear with $h^m(\chi)=1-e^{-\alpha^m\chi}$ for some $\alpha^m\in\R_+$ and that the sale of all assets in the considered system reduces the asset price by $50\%$. This implies $\alpha^m=\log(2)/\E[X^m]$. Hence $\mu^m=\alpha^m$ ($\nu=1$) and $\theta^m>\log(2)\approx0.69$ ensures resilience.

Corollary \ref{4:cor:multiple:assets} thus derives linear capital requirements for institutions investing in more than one asset which are already used in Basel III for instance. We can explicitly determine the coefficients for these linear capital requirements in our model.

\section{Applications and Simulations}\label{4:sec:applications}
In this section, we apply the theory developed in Sections \ref{4:sec:fire:sales} and \ref{4:sec:resilience} to investigate which structures or properties of systems promote the emergence and spread of fire sales. To achieve this our route is as follows. In Subsection \ref{4:ssec:diversification:similarity} we consider systems parametrized by two orthogonal characteristic quantities: \emph{portfolio diversification} and \emph{portfolio similarity}. We first analyze their effect in the setting from Sections \ref{4:sec:fire:sales} and \ref{4:sec:resilience}, and we verify our findings also with simulations for finite systems of reasonable size. At this we assume initial shocks on institutions' capitals directly rather than shocks on certain asset prices. In Subsection \ref{4:ssec:initial:asset:shocks}, we concentrate on three fundamentally different system configurations and we test our derived capital requirements for shocks on asset prices. As we will show, it is beneficial to combine our capital requirements with classical risk capital in form of \emph{value-at-risk}. While the \emph{value-at-risk} part of the capital ensures for any institution that an initial shock can be absorbed with probability $1-\epsilon$ (for some small $\epsilon>0$), the additional \emph{systemic surcharge} in form of our capital requirements makes sure that also in the unlikely event of initial distress the spread of fire sales is locally confined. For simplicity we consider the case that $s_i=1$ for all $i\in[n]$ only.

\subsection{The Effect of Portfolio Diversification and Similarity}\label{4:ssec:diversification:similarity}
For simplicity, throughout this section we assume that the limiting total asset holdings given as $X^\text{tot}=X^1+\ldots+X^M$ are Pareto distributed with density $f_{X^\text{tot}}(x)=(\beta-1)x^{-\beta}\1\{x\geq1\}$ for some exponent $\beta>2$. One can generalize our results also to more general distributions. Further, we make the assumptions that $\rho(u)=\1\{u\geq1\}$ and $h^m(\bm{\chi})=1-e^{-\chi^m}$ to simplify calculations, but also for other sensible choices our observations below are applicable.

In a first example, we consider a system of institutions whose investment in each asset $m\in[M]$ makes up a fraction $\lambda^m\in\R_+$ of their total asset holdings, where $\sum_{m \in [M]}\lambda^m=1$. We show that perfect diversification ($\lambda^1=\ldots=\lambda^M=M^{-1}$) maximizes stability of the system.
\begin{example}\label{4:ex:diversification}
For a system as described above the functions $f^m(\bm{\chi})$ are given by
\[ f^m(\bm{\chi}) = \lambda^m\E\left[X^\text{tot}\1\left\{X^\text{tot}\sum_{k=1}^M\lambda^k\left(1-e^{-\chi^k}\right)\geq C\right\}\right] - \chi^m,\quad m\in[M].
\]
Let us write $t = \sum_{1\le k \le M}\lambda^k(1-e^{-\chi^k})$ for short. Now assume similar to Corollary \ref{4:cor:one:asset:sales:default} that $C=\alpha(X^\text{tot})^\gamma$ for some constants $\alpha,\gamma\in\R_{+,0}$. Then
\begin{align*}
f^m(\bm{\chi})
& = \lambda^m\E\left[X^\text{tot}\1\left\{X^\text{tot}\geq\left({\alpha}/t\right)^\frac{1}{1-\gamma}\right\}\right] - \chi^m\\
& = \lambda^m \int_{\max\left\{1 , \left({\alpha}/t\right)^\frac{1}{1-\gamma}\right\}}^\infty (\beta-1)x^{1-\beta}\,\dd x - \chi^m
=  \lambda^m \frac{\beta-1}{\beta-2} \min\left\{1, (t\alpha^{-1})^\frac{\beta-2}{1-\gamma}\right\} - \chi^m.
\end{align*}
Motivated by the symmetry of the functions, we consider $f^m(\bm{\chi})$ along direction $\bm{v}\in\R_+^M$, with $v^m=(\lambda^m)^{-1}$. Then 
\[
\frac{f^m(\chi\bm{v})}{\lambda^m} = \frac{\beta-1}{\beta-2}\left(\alpha^{-1}\sum_{k=1}^M \lambda^k \left(1-e^{-\chi/\lambda^k}\right)\right)^\frac{\beta-2}{1-\gamma} - \frac{\chi}{(\lambda^m)^2}.
\]
Let
\[ \gamma_c:=3-\beta\quad\text{and}\quad\alpha_c:=\sum_{1\le m \le M}(\lambda^m)^2\frac{\beta-1}{\beta-2}. \]
We infer that if $\chi\in\R_{+,0}$ is small enough and $\gamma>\gamma_c$ or $\gamma=\gamma_c$ and $\alpha>\alpha_c$, then $\frac{\dd}{\dd \chi}f^m(\chi\bm{v})<0$ for all $m\in[M]$. That is, $\bm{\chi}^*=\bm{0}$ and the system is resilient by Corollary~\ref{4:cor:resilience}. On the other hand, if either $\gamma<\gamma_c$ or $\gamma=\gamma_c$ and $\alpha<\alpha_c$, then $\frac{\dd}{\dd \chi}f^m(\chi\bm{v})>0$ for all $m\in[M]$ and the system is non-resilient by Corollary \ref{4:cor:non:resilience}, as $\bm{\chi}^*\neq\bm{0}$ and $\circG(\bm{\chi}^*)=g(\bm{\chi}^*)>0$. Since $\gamma_c$ does not depend on the choice of $\{\lambda^m\}_{m\in[M]}$, it makes sense to consider $\alpha_c$ as a measure for stability of the system (the smaller $\alpha_c$, the more stable the system). Clearly, $\alpha_c$ becomes minimized for $\lambda^m=M^{-1}$ for all $m\in[M]$ and hence a perfectly diversified system is the most stable.
\end{example} 
Next, we consider a financial system that comprises of $U\in\N$ subsystems of equal size $n/U$. For each subsystem $u\in[U]$ there shall exist a set of $D_u=D\in\N$ specialized assets that can only be invested in by institutions from subsystem $s$. In addition to these $U\cdot D$ specialized assets, there shall exist a set of $J\in\N$ joint assets that can be invested in by any institution of the whole system and that hence connect the different subsystems. Thus, each institution can choose from $\Delta:=D+J$ different assets to invest in. We call $\Delta$ the \emph{diversification} of the system. Further, for each institution a fraction $\Sigma:=J/\Delta$ of its available assets is available also to every other institution in the system. We call $\Sigma$ the \emph{(portfolio) similarity} in the system. Then, as in Example \ref{4:ex:diversification} we could compute the optimal investments for each institution (which is shifted towards investing in the specialized assets to avoid overlap with other subsystems). Instead we assume in the following example that each institution still perfectly diversifies its investment over the $D+J$ assets available to it. This is reasonable if the single institutions do not have a perfect overview of the whole financial system. The effect of diversification $\Delta$ and similarity $\Sigma$ is similar for the two different allocations.
\begin{example}\label{4:ex:diversification:similarity}
Consider a system as described above consisting of $U$ subsystems and allowing each institution to invest in $D$ specialized assets and in $J$ joint assets in equal shares. Then the system is described by the following functions:
\begin{align*}
f^j(\bm{\chi}) &:= U^{-1}\sum_{u=1}^U\E\left[\frac{X^\text{tot}}{D+J}\1\left\{\frac{X^\text{tot}}{D+J}\left(\sum_{k=1}^J\left(1-e^{-\chi^k}\right)+\sum_{d=1}^D\left(1-e^{-\chi^{u,d}}\right)\right)\geq C\right\}\right] - \chi^j,\\
f^{u,d}(\bm{\chi}) &:= U^{-1}\E\left[\frac{X^\text{tot}}{D+J}\1\left\{\frac{X^\text{tot}}{D+J}\left(\sum_{j=1}^J\left(1-e^{-\chi^j}\right)+\sum_{e=1}^D\left(1-e^{-\chi^{u,e}}\right)\right)\geq C\right\}\right] - \chi^{u,d},
\end{align*}
where $j\in[J]$, $u\in[U]$, $d\in[D]$ and $\bm{\chi}=(\chi^1,\ldots,\chi^J,\chi^{1,1},\ldots,\chi^{U,D})\in\R_{+,0}^{J+UD}$ with small misuse of notation. Similar as in Example \ref{4:ex:diversification} we derive that
\[ \gamma_c=3-\beta\quad \text{and}\quad \alpha_c=\frac{J+\frac{D}{U}}{(D+J)^2}\frac{\beta-1}{\beta-2}=\frac{1+(U-1)\Sigma}{\Delta U}\frac{\beta-1}{\beta-2}. \]
From the formula it is obvious that $\alpha_c$ decreases (i.\,e.~stability of the system increases) as $\Delta$ increases or $\Sigma$ decreases. 
\end{example}
Example \ref{4:ex:diversification:similarity} hence shows that diversification makes the system more stable (as already seen in Example \ref{4:ex:diversification}) whereas stronger similarity between the institutions' portfolios makes the system more fragile.

\begin{figure}[t]
    \hfill\subfigure[]{\includegraphics[width=0.43\textwidth]{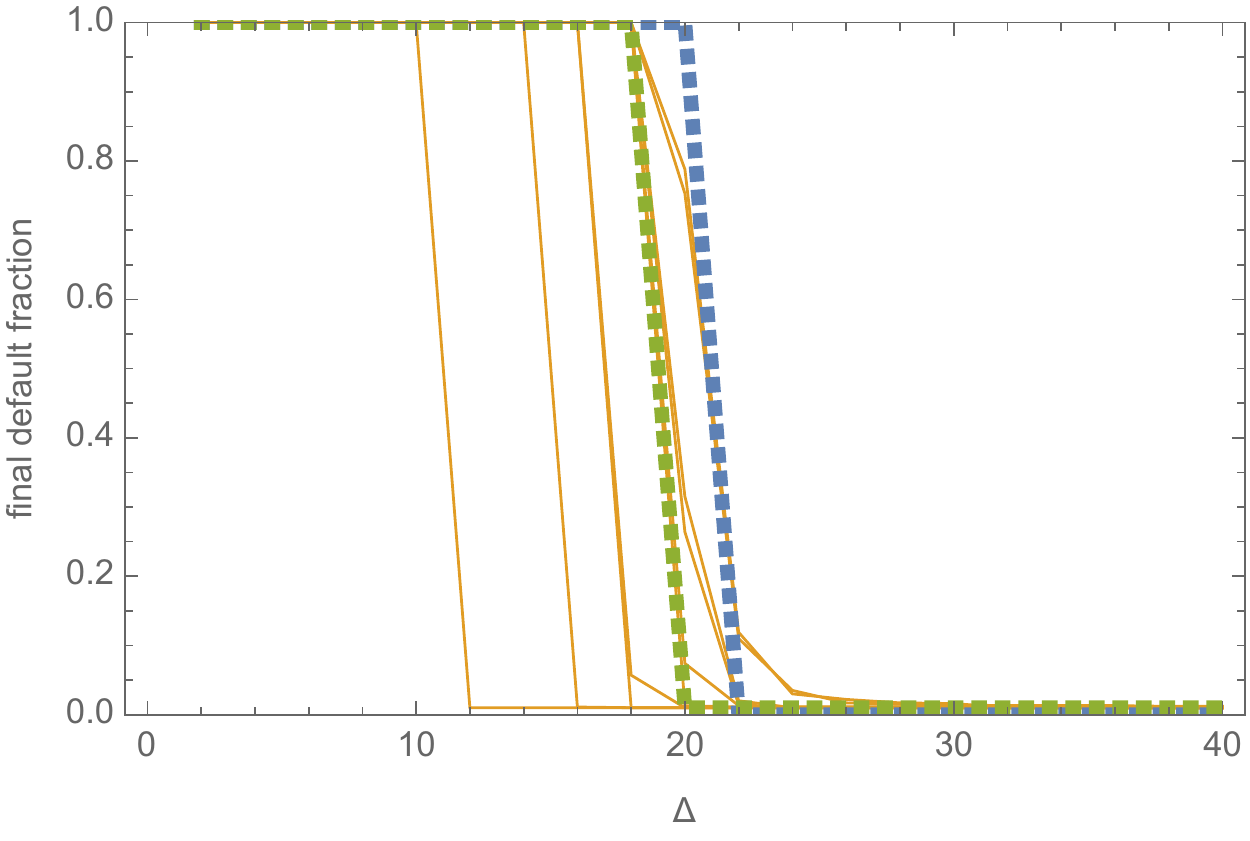}\label{4:fig:varying:diversification}}
    \hfill\subfigure[]{\includegraphics[width=0.43\textwidth]{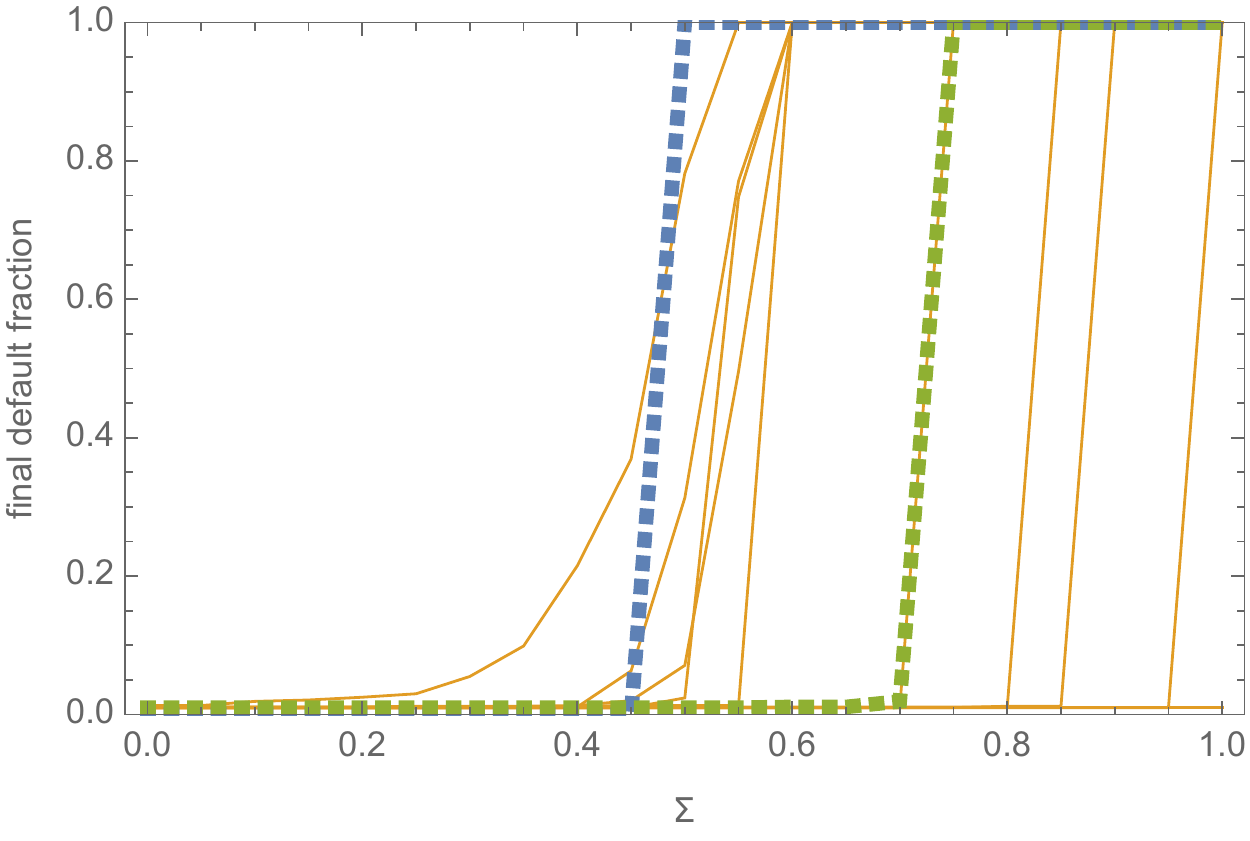}\label{4:fig:varying:similarity}}\hfill
\caption{(a)~The effect of varying portfolio diversification $\Delta$ as $\Sigma=0.5$ is fixed. (b)~The effect of varying portfolio similarity as $\Delta=20$ is fixed. In blue: the theoretical final default fraction. In orange: $10$ exemplary simulations. In green: the median over $10^3$ simulations.}
\vspace*{-0.25cm}
\end{figure}

Note that all previous conclusions build on the (asymptotic) theory from Sections \ref{4:sec:fire:sales} and \ref{4:sec:resilience}. To verify and back up the result for finite systems, however, we also give a simulation based verification for a series of moderate size  ($n=10^4$) systems. We chose $\beta=3$ and $U=2$. For $D=J=10$, we then derive $\Delta=20$, $\Sigma=0.5$, $\gamma_c=0$ and $\alpha_c=0.075$. We therefore assigned to each institution the capital $c_i=\alpha_c$. Further, we drew the total asset holdings $x_i^\text{tot}$ for each institution $i\in[n]$ as random numbers according to the above described Pareto distribution. Finally, we randomly chose a set of initially defaulted institutions of size $0.01n$ and equally distributed across the $U$ subsystems. To see the effect of diversification, we first fix $\Sigma=0.5$ and let $D=J$ vary from $1$ to $20$ (i.\,e.~$\Delta\in[40]$). The results are plotted in Figure \ref{4:fig:varying:diversification}. Since we calibrated the capitals $c_i=\alpha_c$ to the values $\Delta=20$ and $\Sigma=0.5$, the theoretical (asymptotic) final default fraction is $1$ for $\Delta\leq20$ and $0$ otherwise. This curve is shown in blue. In orange we exemplarily illustrate $10$ of the $10^3$ simulations. One can see that in each simulation the final default fraction rapidly decreases at a certain value for $\Delta$ close to the theoretical value of $20$. In green finally, we plot the median over all $10^3$ simulations which is very close to the theoretical curve despite the finite system size and hence verifies that systems become more resilient as $\Delta$ increases. Deviations from the theoretical curve 
become smaller as $n$ increases. 

Furthermore, in the same setting we conducted simulations for systems of fixed diversification $\Delta=20$ and varying similarity $\Sigma$ between $0$ and $1$ ($J\in[0,20]$ and $D=20-J$). The results are shown in Figure \ref{4:fig:varying:similarity}. Again, in blue we plot the theoretically predicted curve which is $0$ for $\Sigma<0.5$ and $1$ otherwise. In orange $10$ exemplary simulations are shown. For these, one can see that either there exists an individual threshold for $\Sigma$ close to $0.5$ at which the final default fraction rapidly increases or the final fraction stays constant at $1\%$. The median over the $10^3$ simulations for each $\Sigma$ can be seen in green 
and it verifies that the system becomes less resilient as the similarity $\Sigma$ increases. Again deviations from the theoretical curve 
are due to finite size effects and become smaller as $n$ increases.

\subsection{Testing the Capital Requirements by Simulations}\label{4:ssec:initial:asset:shocks}

In contrast to the previous subsection, we will consider initial stress in a system in form of shocks on asset prices. There are then two dimensions to consider regarding stability of a system. First, for each institution the probability of initial default should be small. Second, in the rare event that some institutions become initially distressed the remaining capital of the institutions still needs to be high enough to stop the spread of fire sales. The latter of the two is precisely the systemic risk capital derived in Subsection \ref{4:ssec:capital:requirements}. To further ensure rare initial distress we increase capital $c_i$ by $i$'s \emph{value-at-risk} with respect to some level $\epsilon>0$ which is a classical risk capital for example used in the \emph{Basel III} framework. In that sense, our systemic risk capital becomes a \emph{systemic risk surcharge} to the classical risk capital:
\begin{equation}\label{4:eqn:value:at:risk:plus:surcharge}
c_i = \text{value-at-risk}(i) + \text{systemic risk surcharge}(i),\qquad i\in[n]
\end{equation}
The aim of this subsection is  to verify by simulations that capitals of the form as in \eqref{4:eqn:value:at:risk:plus:surcharge} indeed ensure resilience of a system to initial asset shocks.

In this analysis, we further want to demonstrate the effect of different system characteristics.
For simplicity, we choose to consider two subsystems $S_1=[n/2]\subset[n]$ and $S_2=[n]\backslash S_1$; the considerations extend readily to a larger number of subsystems. Further, there are two assets $A$ and $B$ (i.\,e.~$M=2$) with uncorrelated price changes. Denote by $x_i^\text{tot}=x_i^A+x_i^B$ the total number of shares held by institution $i\in[n]$ and denote the limiting random variable by $X^\text{tot}$. We then consider the following three scenarios, see also Figure \ref{4:fig:three:scenarios}:
\begin{enumerate}[(a)]
\itemsep0pt
\item The two subsystems invest in different assets. 
\item Each institution's portfolio is perfectly diversified. 
\item All institutions in the system invest in the same asset. 
\end{enumerate}
\begin{figure}[t]
\subfigure[]{\includegraphics[width=0.28\textwidth]{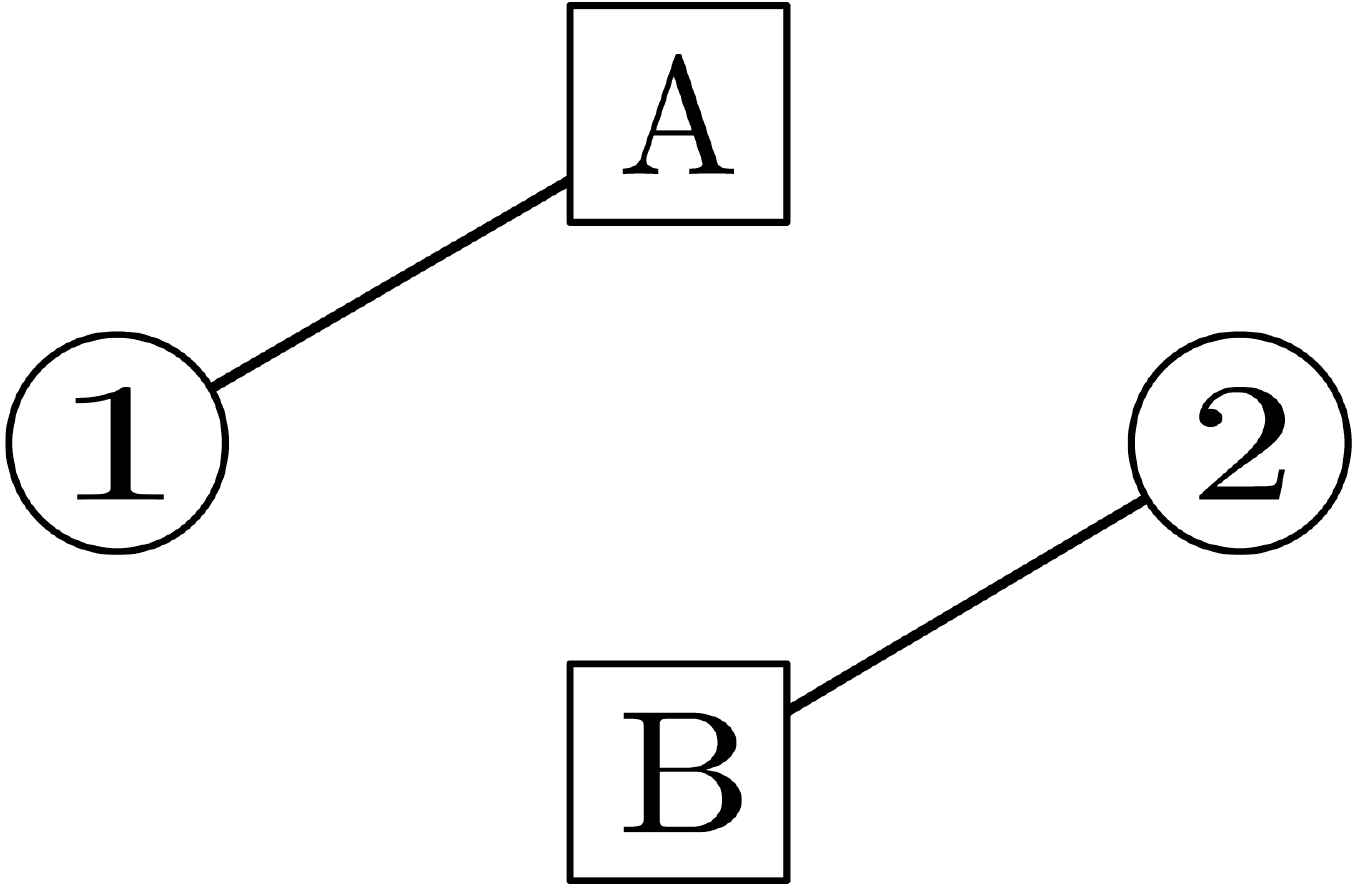}\label{4:fig:disjoint}}
    \hfill\subfigure[]{\includegraphics[width=0.28\textwidth]{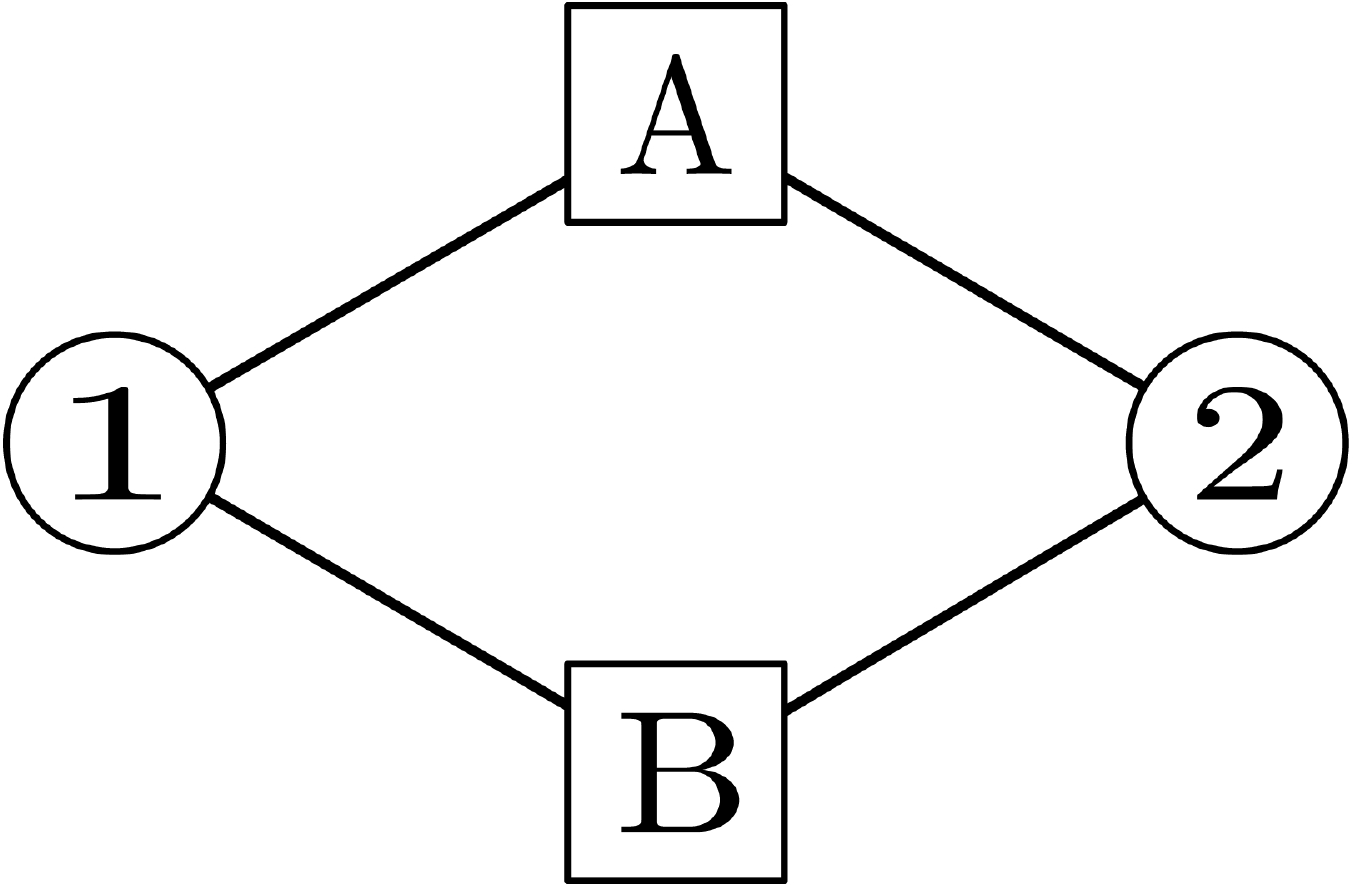}\label{4:fig:diversified}}
    \hfill\subfigure[]{\includegraphics[width=0.28\textwidth]{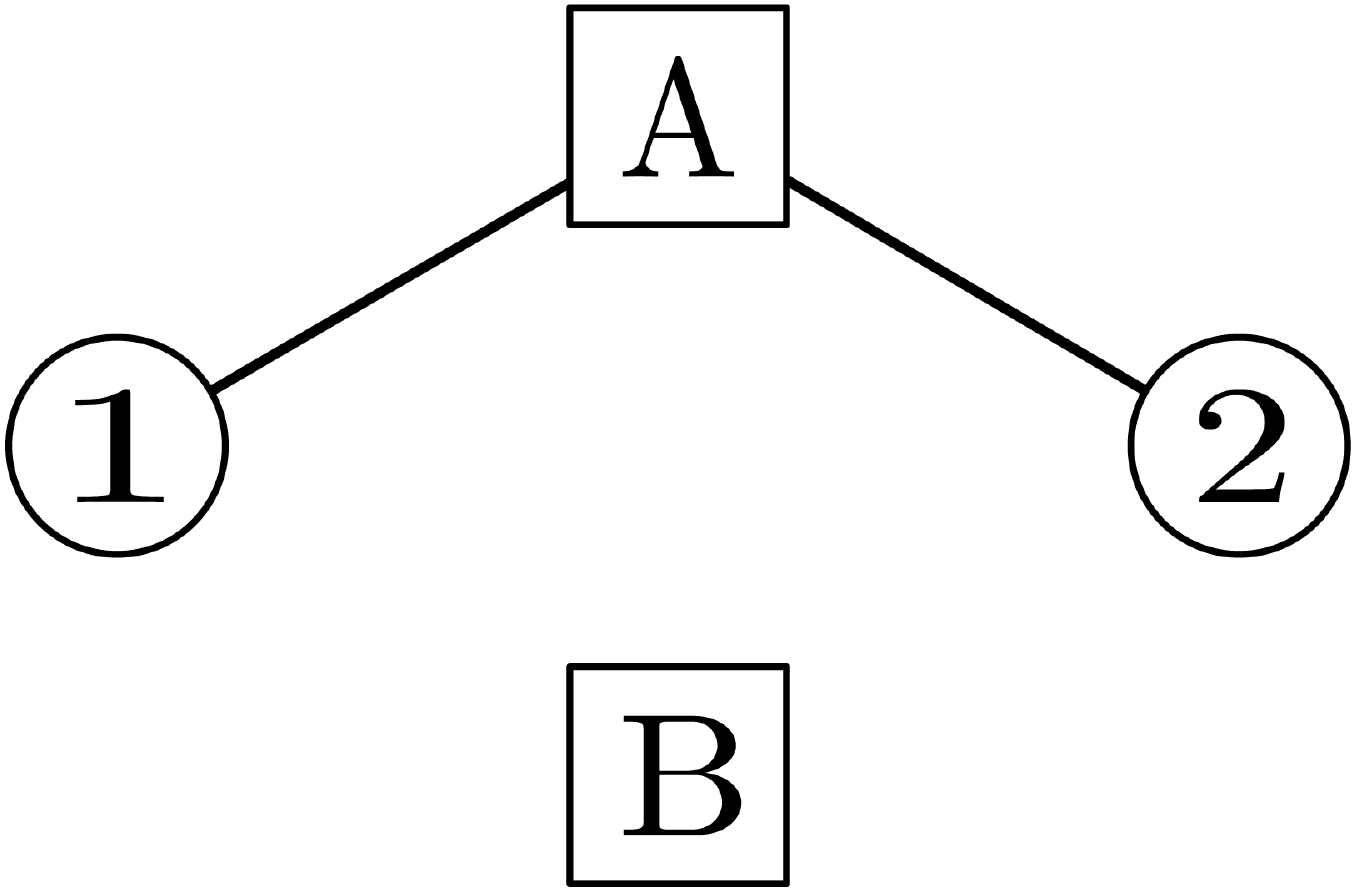}\label{4:fig:bubble}}
\caption{Illustrations of different system configurations for subsystems $1$ and $2$, and assets $A$ and $B$. (a) Separated undiversified subsystems. (b) Connected diversified subsystems. (c)~Connected undiversified subsystems.}\label{4:fig:three:scenarios}
\vspace*{-0.25cm}
\end{figure}
More formally, if we denote by $\pi^{jA}$ resp.~$\pi^{jB}$ the proportion invested in assets $A$ and $B$ by an institution in subsystem $S_j$, $j=1,2$, we can express the three scenarios as follows:
\begin{enumerate}[(a)]
\itemsep0pt
\item $\pi^{1A}=\pi^{2B}=1$ and $\pi^{1B}=\pi^{2A}=0$.
\item $\pi^{1A}=\pi^{1B}=\pi^{2A}=\pi^{2B}=1/2$.
\item $\pi^{1A}=\pi^{2A}=1$ and $\pi^{1B}=\pi^{2B}=0$.
\end{enumerate}
\begin{remark}
In terms of the dimensions $\Delta$ and $\Sigma$ from the previous subsection we can characterize the three cases by
\[ \hspace{1.7cm} \text{(a) }\Delta=1,~\Sigma=0, \hspace{1.7cm} \text{(b) }\Delta=2,~\Sigma=1, \hspace{1.7cm} \text{(c) }\Delta=1,~\Sigma=1.\hspace{1.7cm} \]
From the previous results we would therefore expect that configurations (a) and (b) are more stable than configuration (c), while we cannot make any statements about the relation between (a) and (b). Note, however, that in the previous subsection we considered the initial shock to consist of initial defaults of some institutions independent of the asset holdings. In this subsection, we will consider initial shocks on the financial system by reducing asset prices (see below). It will turn out that diversification can then have a positive effect on stability of the system as in \cite{Frey2018} because the variance of each institution's initial loss becomes smaller, or it can have a negative effect as in \cite{Beale2011,Ibragimov2011,Wagner2010} because the system is exposed to more assets or becomes more connected via assets.
\end{remark} 
Again, we assume that $\rho(u)=\1\{u\geq1\}$ and $h^m(\bm{\chi})=1-e^{-\chi^m}$, $m=A,B$. Further, let
\begin{equation}\label{4:eqn:density:total:assets}
f_{X^\text{tot}}(x) = (\beta-1)x^{-\beta}\left(\kappa\1\{1\leq x\leq b\} + \mu^{\beta-1}\1\{x>b\}\right),
\end{equation}
where $\beta=2.5$, $\mu=0.25$, $b=10^{1/(\beta-1)}\mu$ and $\kappa=(1-\mu^{\beta-1}b^{1-\beta})/(1-b^{1-\beta})$. That is the tail of the distribution resembles a Pareto distribution with exponent $\beta$ and $b$ is chosen such that the tail describes $10\%$ of the probability mass. For $x\leq b$ the exponent $\beta$ stays the same but the coefficient $\kappa$ is chosen such that the remaining mass of $90\%$ is distributed on the interval $[1,b]$ (instead of $[\mu,b]$). Note that for the computations of $\gamma_c$ and $\alpha_c$ in the previous subsection only the tail of the distribution was relevant and so we know that $\alpha_c=\frac{1}{2}\frac{\beta-1}{\beta-2}\mu^{\beta-1}$ for cases (a) and (b) resp.~$\alpha_c=\frac{\beta-1}{\beta-2}\mu^{\beta-1}$ for case (c) and $\gamma_c=3-\beta$ in all cases. We consider these as the systemic risk surcharges as discussed above. Further the \emph{value-at-risk} capital for some institution $i\in[n]$ is given by $\theta x_i^\text{tot}$ for some global parameter $\theta\in[0,1]$ that needs to be calibrated to the confidence level $\epsilon$ and the initial shock distribution (see below). The piecewise form of $f_{X^\text{tot}}$ in \eqref{4:eqn:density:total:assets} rather than for example a perfect Pareto distribution ensures that capital $c_i$ is actually smaller then the maximum potential loss $x_i^\text{tot}$ also for institutions with small investments (note that the systemic surcharge $\alpha_c$ is determined from the tail of the distribution).

\begin{table}[t]
\begin{center}
\begin{tabular}{|c|c|c|c|}\hline
\textbf{Case} & \textbf{$\varnothing$ initially infected fraction} & \textbf{$\varnothing$ finally infected fraction} & \textbf{amplification}\\\hline
(a) & $1.35\%$ & $3.82\%$ & $1.83$\\
(b) & $0.80\%$ & $3.26\%$ & $3.06$\\
(c) & $0.56\%$ & $2.13\%$ & $2.84$\\
(c') & $1.35\%$ & $5.00\%$ & $2.70$\\\hline
\end{tabular}
\caption{Simulation results for capital allocation determined by the value-at-risk plus the systemic risk surcharge}\label{4:tab:VaR+riskCapital}
\vspace{1ex}
\begin{tabular}{|c|c|c|c|}\hline
\textbf{Case} & \textbf{$\varnothing$ initially infected fraction} & \textbf{$\varnothing$ finally infected fraction} & \textbf{amplification}\\\hline
(a) & $2.38\%$ & $8.95\%$ & $2.76$\\
(b) & $1.76\%$ & $15.1\%$ & $7.53$\\
(c) & $0.84\%$ & $5.58\%$ & $5.66$\\
(c') & $2.38\%$ & $10.0\%$ & $3.21$\\\hline
\end{tabular}
\caption{Simulation results for capital allocation determined by the systemic risk surcharge}\label{4:tab:riskCapital}
\end{center}
\vspace*{-0.5cm}
\end{table}

Finally, we model the initial shock on asset $m=A,B$ as $e^{-R^m}$, where $R^A\stackrel{d}{=}R^B$ are independent random variables such that $\P(R^A=0)=90\%$ and with the remaining probability of $10\%$ it holds $R^A=\vert T\vert/10$ for $T$ having Student's $t$-distribution with $1.5$ degrees of freedom (Student's $t$-distribution is a popular choice in market models since heavy tails can be modeled by less than $2$ degrees of freedom). 

For each realization of $(R^A,R^B)$ we can then numerically determine the asymptotic final default fraction as in Section \ref{4:sec:fire:sales}. We choose $\epsilon=5\%$ which yields $\theta\approx 8.36\%$ in cases (a) and (c) resp.~$\theta\approx 8.47\%$ in case (b) as the parameter for the \emph{value-at-risk} capital. The numerical average final default fraction for the three configurations is listed in Table \ref{4:tab:VaR+riskCapital}. The most stable configuration is (c). Recall, however, that in this case $\alpha_c$ is the double amount than in cases (a) and (b). Therefore, we further included case (c') where we adjusted the value of $\alpha_c$ accordingly. It can then be seen that this configuration is in fact the least stable one and (b) becomes the most stable one. Diversification is thus beneficial for the capital allocation value-at-risk plus systemic risk surcharge. In comparison to this, we performed the same simulations with capitals determined solely by the systemic risk charge. The results are listed in Table \ref{4:tab:riskCapital}. Clearly, the system becomes less stable. In particular, the least stable configuration is now (b), the diversified one. For completeness also consider Table \ref{4:tab:VaR} listing the average fractions for systems equipped exactly with the value-at-risk as capital. In this case, all final infections are already initial infections and there is no amplification (final infection divided by initial infection minus 1). So rather counterintuitively our systemic risk surcharge increases the amplification. This is because due to the additional capital some institutions are initially saved from infection but become infected in the course of the fire sales process. Overall, the combination of value-at-risk with the systemic risk surcharge significantly increases stability of the financial system.

\begin{table}[t]
\begin{center}
\begin{tabular}{|c|c|c|c|}\hline
\textbf{Case} & \textbf{$\varnothing$ initially infected fraction} & \textbf{$\varnothing$ finally infected fraction} & \textbf{amplification}\\\hline
(a) & $5.00\%$ & $5.00\%$ & $0.00$\\
(b) & $5.00\%$ & $5.00\%$ & $0.00$\\
(c) & $5.00\%$ & $5.00\%$ & $0.00$\\\hline
\end{tabular}
\caption{Simulation results for capital allocation determined by the value-at-risk}\label{4:tab:VaR}
\end{center}
\end{table}

\section{Proofs}\label{4:sec:proofs}
\subsection{Proofs for Section \ref{4:sec:fire:sales}}\label{4:ssec:proofs:2}
\begin{proof}[Proof of Lemma \ref{4:lem:existence:chi:hat}]
Existence of $\hat{\bm{\chi}}$ follows from the Knaster-Tarski theorem. We now construct a joint root $\accentset{\circ}{P}_0\ni\bar{\bm{\chi}}\leq\hat{\bm{\chi}}$ such that we can conclude $\hat{\bm{\chi}}=\bar{\bm{\chi}}\in \accentset{\circ}{P}_0$.

It holds $\circFSuper{m}(\hat{\bm{\chi}})=0$ for all $m\in[M]$ and thus (for any fixed $m\in[M]$) $\circFSuper{m}(\bm{\chi})\leq0$ for all $\hat{\bm{\chi}}\geq\bm{\chi}\in\R_{+,0}^M$ such that $\chi^m=\hat{\chi}^m$ by monotonicity of $\circFSuper{m}$. 
Consider then the following sequence $(\bm{\chi}_{(k)})_{k\in\N}\subset\R_{+,0}^M$:
\begin{itemize}
\item $\bm{\chi}_{(0)}=\bm{0}\in \accentset{\circ}{P}_0$
\item $\bm{\chi}_{(1)}=(\chi_{(1)}^1,0,\ldots,0)$, where $0\leq\chi_{(1)}^1\leq\hat{\chi}^1$ is the smallest possible value such that $\circFSuper{1}(\bm{\chi}_{(1)})=0$. It is possible to find such $\chi_{(1)}^1$ since $\circFSuper{1}(\bm{\chi})+\chi^1$ is monotonically increasing in $\chi^1$, $\circFSuper{1}(\bm{0})\geq 0$ and $\circFSuper{1}(\hat{\chi}^1,0,\ldots,0)\leq0$. By monotonicity of $\circFSuper{m}$ with respect to $\chi^1$ for all $m\in[M]\backslash\{1\}$, it then holds $\circFSuper{m}(\bm{\chi}_{(1)})\geq \circFSuper{m}(\bm{0})\geq0$ for all $1\neq m\in [M]$ and in particular $\bm{\chi}_{(1)}\in \accentset{\circ}{P}_0$.
\item $\bm{\chi}_{(2)}=\bm{\chi}_{(1)}+(0,\chi_{(2)}^2,0,\ldots,0)$, where $0\leq\chi_{(2)}^2\leq\hat{\chi}^2$ is the smallest value such that $\circFSuper{2}(\bm{\chi}_{(2)})=0$. Again it is possible to find such $\chi_{(2)}^2$ since $\circFSuper{2}(\bm{\chi})+\chi^2$ is monotonically increasing in $\chi^2$, $\circFSuper{2}(\bm{\chi}_{(1)})\geq0$ and $\circFSuper{2}(\bm{\chi}_{(1)}+(0,\hat{\chi}^2,0,\ldots,0))\leq0$. By monotonicity of $\circFSuper{m}$ with respect to $\chi^2$ for all $m\in[M]\backslash\{2\}$, it then holds $\circFSuper{m}(\bm{\chi}_{(2)})\geq \circFSuper{m}(\bm{\chi}_{(1)})\geq0$ for all $2\neq m\in [M]$ and in particular $\bm{\chi}_{(2)}\in \accentset{\circ}{P}_0$.
\item $\bm{\chi}_{(i)}$, $i\in\{3,\ldots,M\}$, are found analogously, changing only the corresponding coordinate.
\item $\bm{\chi}_{(M+1)}=\bm{\chi}_{(M)}+(\chi_{(M+1)}^1-\chi_{(M)}^1,0,\ldots,0)$, where $\chi_{(M)}^1\leq\chi_{(M+1)}^1\leq\hat{\chi}^1$ is the smallest value such that $\circFSuper{1}(\bm{\chi}_{(M+1)})=0$, which is again possible by monotonicity of $\circFSuper{1}(\bm{\chi})+\chi^1$
, $\circFSuper{1}(\bm{\chi}_{(M)})\geq0$ and $\circFSuper{1}(\bm{\chi}_{(M)}+(\hat{\chi}^1-\chi_{(M)}^1,0,\ldots,0))\leq0$. Further, it still holds $\bm{\chi}_{(M+1)}\in \accentset{\circ}{P}_0$.
\item Continue for $\bm{\chi}_{i}$, $i\geq M+2$.
\end{itemize}
The sequence $(\bm{\chi}_{(k)})_{k\in\N}$ constructed this way has the following properties: It is non-decreasing in each coordinate and 
bounded inside $[\bm{0},\hat{\bm{\chi}}]$. Hence by monotone convergence, each coordinate of $\bm{\chi}_{(k)}$ converges and so $\bar{\bm{\chi}}=\lim_{k\to\infty}\bm{\chi}_{(k)}$ exists. Since the convergence is from below, it holds
\begin{align*}
\circFSuper{m}(\bar{\bm{\chi}}) &= \E\left[X^m\circRho\left(\frac{L+\bm{X}\cdot h(\lim_{k\to\infty}\bm{\chi}_{(k)})}{C}\right)\right] - \lim_{k\to\infty}\chi_{(k)}^m\\
&= \lim_{k\to\infty}\E\left[X^m\circRho\left(\frac{L+\bm{X}\cdot h(\bm{\chi}_{(k)})}{C}\right)\right]-\chi_{(k)}^m = \lim_{k\to\infty}\circFSuper{m}(\bm{\chi}_{(k)}) \geq 0
\end{align*}
and thus $\bar{\bm{\chi}}\in\accentset{\circ}{P}_0$. 
Now suppose there is $m\in[M]$ such that $\circFSuper{m}(\bar{\bm{\chi}})>0$. By lower semi-continuity of $\circFSuper{m}$ then also $\circFSuper{m}(\bm{\chi}_{(k)})>\epsilon$ for some $\epsilon>0$ and $k$ large enough. This, however, is a contradiction to the construction of the sequence $(\bm{\chi}_{(k)})_{k\in\N}$ since $\circFSuper{m}(\bm{\chi}_{(k)})=0$ in every $M$-th step. Hence $\circFSuper{m}(\bar{\bm{\chi}})= 0$ for all $m\in[M]$ and $\bar{\bm{\chi}}$ is a joint root of all functions $\circFSuper{m}$, $m\in[M]$.

Now turn to the proof that $\bm{\chi}^*\in P_0$
: We first consider the case that $\rho$ is continuous. We approximate $\bm{\chi}^*\in P_0$ by the sequence $(\hat{\bm{\chi}}(\epsilon))_{\epsilon>0}$ of smallest fixpoints for the functions $f^m(\bm{\chi})+ \epsilon$. This allows us to apply the Knaster-Tarski Theorem and the monotonicity properties of $f^m+ \epsilon$ similar as above. Simple topological arguments will then allow us to conclude that $\bm{\chi}^*\in P_0$. Let for $\epsilon>0$
\[ P(\epsilon) := \bigcap_{m\in[M]}\left\{\bm{\chi}\in\R_{+,0}^M\,:\,f^m(\bm{\chi})\geq -\epsilon\right\} \]
and denote by $P_0(\epsilon)$ the connected component of $\bm{0}$ in $P(\epsilon)$. By the same procedure as for $\hat{\bm{\chi}}$ above, we now derive that there exists a smallest (componentwise) point $\hat{\bm{\chi}}(\epsilon)\in P_0(\epsilon)$ such that $f^m(\hat{\bm{\chi}}(\epsilon))=-\epsilon$ for all $m\in[M]$. Clearly, $\hat{\bm{\chi}}(\epsilon)$ is non-decreasing (componentwise) in $\epsilon$ and hence $\tilde{\bm{\chi}}:=\lim_{\epsilon\to0+}\hat{\bm{\chi}}(\epsilon)$ exists (we will show that $\tilde{\bm{\chi}}=\bm{\chi}^*$ in fact).

Now by monotonicity of $P_0(\epsilon)$, we derive that $\hat{\bm{\chi}}(\delta)\in P_0(\delta)\subseteq P_0(\epsilon)$ for all $\delta\leq \epsilon$. Since $P_0(\epsilon)$ is a closed set, it must thus hold that also $\tilde{\bm{\chi}}=\lim_{\delta\to0+}\hat{\bm{\chi}}(\delta)\in P_0(\epsilon)$ for all $\epsilon>0$ and in particular, $\tilde{\bm{\chi}}\in\bigcap_{\epsilon>0}P_0(\epsilon)$. Further, we derive that $\bigcap_{\epsilon>0}P_0(\epsilon)\subseteq\bigcap_{\epsilon>0}P(\epsilon)\subseteq P$. Moreover, $\bigcap_{\epsilon>0}P_0(\epsilon)$ is the intersection of a chain of connected, compact sets in the Hausdorff space $\R^M$ and it is hence a connected, compact set itself. Since it further contains $\bm{0}$, we can then conclude that $\bigcap_{\epsilon>0}P_0(\epsilon)\subseteq P_0$ and thus $\tilde{\bm{\chi}}\in P_0$.

Consider now an arbitrary $\bm{\chi}\in P_0$. We want to show that $\bm{\chi}\leq\tilde{\bm{\chi}}$ and thus $\tilde{\bm{\chi}}=\bm{\chi}^*$. It suffices to show that $P_0\subset[\bm{0},\hat{\bm{\chi}}(\epsilon)]$  for all $\epsilon$. Then $\bm{\chi}\leq\hat{\bm{\chi}}(\epsilon)$ and \mbox{$\bm{\chi}\leq\lim_{\epsilon\to0+}\hat{\bm{\chi}}(\epsilon)=\tilde{\bm{\chi}}$}. Hence assume that $P_0\not\subset[\bm{0},\hat{\bm{\chi}}(\epsilon)]$. By connectedness of $P_0$ we find $\bar{\bm{\chi}}\in P_0$ with $\bar{\chi}^m\leq\hat{\chi}^m(\epsilon)$ for all $m\in[M]$ and equality for at least one coordinate (otherwise $P_0\cap\partial[\bm{0},\hat{\bm{\chi}}(\epsilon)]=\emptyset$ and $P_0=\left(P_0\cap\left(\R_{+,0}^M\backslash[\bm{0},\hat{\bm{\chi}}(\epsilon)]\right)\right)\cup\left(P_0\cap[\bm{0},\hat{\bm{\chi}}(\epsilon))\right)$ is the union of two open non-empty sets and hence not connected). W.\,l.\,o.\,g.~let this coordinate be $\bar{\chi}^1$. By monotonicity of $f^1$ with respect to $\chi^m$ for every $1\neq m\in[M]$, we thus derive that $f^1(\bar{\bm{\chi}})\leq f^1(\hat{\bm{\chi}}(\epsilon))=-\epsilon<0$ which is a contradiction to $\bar{\bm{\chi}}\in P_0$.

Now consider the general case that $\rho$ is right-continuous and let $(\rho_r(u))_{r\in\N}$ be a sequence of continuous sale functions approximating $\rho$ from above. Denoting by $P^r$ the analogue of $P$ for the sale function $\rho_r$, we derive that $P=\bigcap_{r\in\N}P^r$ since clearly $P^r\supseteq P$ for all $r\in\N$ and further by dominated convergence for every $\bm{\chi}\in\bigcap_{r\in\N}P^r$,
\[ \chi^m \leq \E\left[X^m\rho_r\left(\frac{L+\bm{X}\cdot h(\bm{\chi})}{C}\right)\right] \to \E\left[X^m\rho\left(\frac{L+\bm{X}\cdot h(\bm{\chi})}{C}\right)\right],\quad\text{as }r\to\infty, \]
so that $\bigcap_{r\in\N}P^r\subseteq P$. If we further let $P_0^r$ denote the largest connected subset of $P^r$ containing $\bm{0}$, then $P_0^r$ is compact and connected for every $r\in\N$ and hence so is $\bigcap_{r\in\N}P_0^r$. Since further $\bm{0}\in\bigcap_{r\in\N}P_0^r$, we derive that $\bigcap_{r\in\N}P_0^r=P_0$. Let now $\bm{\chi}_r^*$ denote the analogue of $\bm{\chi}^*$ for the sale function $\rho_r$. Then $\lim_{r\to\infty}\bm{\chi}_r^*\in P_0^R$ for all $R\in\N$ and hence $\lim_{r\to\infty}\bm{\chi}_r^*\in\bigcap_{R\in\N}P_0^R=P_0$. Now suppose there existed a vector $\bm{\chi}\in P_0$ and $m\in[M]$ such that $\chi^m>\lim_{r\to\infty}(\chi_r^*)^m$. Then also for $R$ large enough, $\chi^m>(\chi_R^*)^m$ and hence $\bm{\chi}\not\in P_0^R$. This, however, contradicts the assumption that $\bm{\chi}\in P_0=\bigcap_{R\in\N}P_0^R$. Hence there exists no such $\bm{\chi}\in P_0$ and $\bm{\chi}^*=\lim_{r\to\infty}\bm{\chi}_r^*\in P_0$.

Finally, we show that $\bm{\chi}^*$ is a joint root of $f^m$, $m\in[M]$: Since $\bm{\chi}^*\in P_0$, it holds that $f^m(\bm{\chi}^*)\geq0$ for all $m\in[M]$. Assume now that $f^m(\bm{\chi}^*)>0$ for some $m\in[M]$. We can then gradually increase the $m$-coordinate of $\bm{\chi}^*$ (until $f^m(\bm{\chi}^*)=0$). By monotonicity of $f^k(\bm{\chi})$ with respect to $\chi^m$ for every $m\neq k\in[M]$, however, we can be sure that we do not leave the set $P_0$ by this procedure which is a contradiction to the definition of $\bm{\chi}^*$. Hence $\bm{\chi}^*$ is a joint root of $f^m$, $m\in[M]$.
\end{proof}

\begin{remark}\label{4:rem:sequence:chi:*}
In the proof of Lemma \ref{4:lem:existence:chi:hat}, for the case that $\rho$ is continuous, we constructed $\bm{\chi}^*$ as the limit of a sequence $(\hat{\bm{\chi}}(\epsilon))_{\epsilon>0}$ such that $f^m(\hat{\bm{\chi}}(\epsilon))=-\epsilon$ for all $m\in[M]$. For non-continuous $\rho$ by the Knaster-Tarski theorem we still know that there exists a smallest vector $\hat{\bm{\chi}}(\epsilon)$ such that $f^m(\hat{\bm{\chi}}(\epsilon))=-\epsilon$, but the construction of $\hat{\bm{\chi}}(\epsilon)$ as for $\hat{\bm{\chi}}$ in the proof of Lemma \ref{4:lem:existence:chi:hat} fails and we can hence not be sure a priori that $\hat{\bm{\chi}}(\epsilon)\in P_0(\epsilon)$. Hence let further $\tilde{\bm{\chi}}(\epsilon)$ be defined as the smallest vector in $P_0(\epsilon)$ such that $f^m(\tilde{\bm{\chi}}(\epsilon))=-\epsilon$. This vector exists again by the Knaster-Tarski theorem noting that analogue to Lemma \ref{4:lem:existence:chi:hat} $P_0(\epsilon)$ contains its componentwise supremum $\bm{\chi}^*(\epsilon)$. Then by the same means as above, we derive that $\bm{\chi}^*=\lim_{\epsilon\to0}\tilde{\bm{\chi}}(\epsilon)$.
\end{remark}
In Theorem \ref{4:thm:fire:sale:final:fraction} we are considering a sequence of financial systems. The following lemma shows the convergence of the smallest joint roots under certain assumptions:
\begin{lemma}\label{4:lem:convergence:chi:hat}
Let a sequence (for $r\in\N$) of financial systems be described by functions $\circFSuperSub{m}{r}$, $m\in[M]$, with smallest joint root $\hat{\bm{\chi}}_r$. If $\liminf_{r\to\infty}\circFSuperSub{m}{r}(\bm{\chi})\geq\circFSuper{m}(\bm{\chi})$ pointwise for every \mbox{$m\in[M]$,} then $\liminf_{r\to\infty}\hat{\bm{\chi}}_r\geq\hat{\bm{\chi}}$, where $\hat{\bm{\chi}}$ denotes the smallest joint root of the functions $\circFSuper{m}$, $m\in[M]$.
\end{lemma}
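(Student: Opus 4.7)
The plan is to prove the inequality componentwise: for each fixed $m^*\in[M]$ I will show $\liminf_{r\to\infty}\hat{\chi}_r^{m^*}\geq\hat{\chi}^{m^*}$. Setting $\tilde{\chi}^{m^*}:=\liminf_r\hat{\chi}_r^{m^*}$ we may assume it is finite (otherwise there is nothing to do), and in fact the whole sequence $(\hat{\bm{\chi}}_r)$ may be assumed bounded (the bound $\hat{\chi}_r^m\leq\E_r[X^m]$ that one reads off from $\circFSuperSub{m}{r}(\hat{\bm{\chi}}_r)=0$ together with $\accentset{\circ}{\rho}\leq 1$ is the natural one). First I would pick a subsequence along which $\hat{\chi}_{r_k}^{m^*}\to\tilde{\chi}^{m^*}$ and then, by a Bolzano--Weierstrass diagonalization over the remaining coordinates, pass to a further subsequence so that $\hat{\bm{\chi}}_{r_k}\to\bm{\chi}^\dagger$ componentwise with $\chi^{\dagger,m^*}=\tilde{\chi}^{m^*}$.

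The central step will be to establish that $\circFSuper{m}(\bm{\chi}^\dagger)\leq 0$ for every $m\in[M]$. For this I would exploit the monotonicity of $\circFSuperSub{m}{r}(\bm{\chi})+\chi^m$ in every coordinate, which is inherited from the monotonicity of the $h^k$ and of $\accentset{\circ}{\rho}$. Fixing $\epsilon>0$ and choosing $k$ large enough that $\hat{\bm{\chi}}_{r_k}\geq\bm{\chi}^\dagger-\epsilon\mathbf{1}$ componentwise, this monotonicity combined with $\circFSuperSub{m}{r_k}(\hat{\bm{\chi}}_{r_k})=0$ gives $\circFSuperSub{m}{r_k}(\bm{\chi}^\dagger-\epsilon\mathbf{1})\leq\hat{\chi}_{r_k}^m-\chi^{\dagger,m}+\epsilon$. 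Sending $k\to\infty$ and invoking the hypothesis $\liminf_r\circFSuperSub{m}{r}(\bm{\chi}^\dagger-\epsilon\mathbf{1})\geq\circFSuper{m}(\bm{\chi}^\dagger-\epsilon\mathbf{1})$ produces $\circFSuper{m}(\bm{\chi}^\dagger-\epsilon\mathbf{1})\leq\epsilon$; letting $\epsilon\downarrow 0$ and using the lower semi-continuity of $\circFSuper{m}$ (which holds by construction) then delivers $\circFSuper{m}(\bm{\chi}^\dagger)\leq 0$.

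Once $\circFSuper{m}(\bm{\chi}^\dagger)\leq 0$ is available for all $m$, I would replay the greedy coordinate-by-coordinate construction from the proof of Lemma \ref{4:lem:existence:chi:hat} inside the box $[\bm{0},\bm{\chi}^\dagger]$: starting from $\bm{0}$, at each step I choose the smallest value of the current coordinate for which $\circFSuper{m}$ vanishes. The inequality $\circFSuper{m}(\bm{0})\geq 0$ supplies the starting bound, the partial monotonicity of $\circFSuper{m}$ in the coordinates other than its own preserves non-negativity where we have not yet intervened, and the left-continuity of $\chi^m\mapsto\E[X^m\accentset{\circ}{\rho}((L+\bm{X}\cdot h(\bm{\chi}))/C)]$ (which follows from left-continuity of $\accentset{\circ}{\rho}$, continuity of $h$, and monotone convergence) ensures that the infimum is attained as a true zero. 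The resulting limit $\bar{\bm{\chi}}\leq\bm{\chi}^\dagger$ is a joint root of the $\circFSuper{m}$ lying in $\accentset{\circ}{P}_0$, and minimality of $\hat{\bm{\chi}}$ forces $\hat{\bm{\chi}}\leq\bar{\bm{\chi}}\leq\bm{\chi}^\dagger$; in particular $\hat{\chi}^{m^*}\leq\chi^{\dagger,m^*}=\tilde{\chi}^{m^*}$, as required.

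The main obstacle is the central step: the hypothesis controls $\circFSuperSub{m}{r}$ only pointwise in $\bm{\chi}$, whereas the information $\circFSuperSub{m}{r_k}(\hat{\bm{\chi}}_{r_k})=0$ sits at a moving point. The monotonicity of $\circFSuperSub{m}{r}+\chi^m$ is exactly the tool that will let us pivot from the moving fixed point $\hat{\bm{\chi}}_{r_k}$ to the stationary test point $\bm{\chi}^\dagger-\epsilon\mathbf{1}$ where the pointwise hypothesis can be applied, and the lower semi-continuity of the $\circ$-modifications then absorbs the resulting $\epsilon$-slack.
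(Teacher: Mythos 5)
Your proof is correct and takes a genuinely different route from the paper's. The paper works ``forward'': it discretizes $\hat{\bm{\chi}}$ as the limit of a greedy staircase $\bm{\chi}_{(k)}(\epsilon)$ on which $\circFSuper{m}\leq\epsilon$, and then shows by induction that the staircase constructing $\hat{\bm{\chi}}_r$ caps at $\bm{\chi}_{(k)}(\epsilon)$ at every step $k\leq K$ for $r$ large (via a discretization of each coordinate segment to make the merely pointwise hypothesis effective along a compact path), concluding $\hat{\bm{\chi}}_r\geq\bm{\chi}_{(K)}(\epsilon)\geq(1-\delta)\hat{\bm{\chi}}$. You instead argue ``backward'' by compactness: pass to a subsequence along which $\hat{\bm{\chi}}_{r_k}\to\bm{\chi}^\dagger$ with $\chi^{\dagger,m^*}=\liminf_r\hat{\chi}_r^{m^*}$, show $\circFSuper{m}(\bm{\chi}^\dagger)\leq 0$ for every $m$ by pivoting from the moving root $\hat{\bm{\chi}}_{r_k}$ to the fixed test point $\bm{\chi}^\dagger-\epsilon\mathbf{1}$ via the partial monotonicity of $\circFSuperSub{m}{r}+\chi^m$, and then recover a joint root $\bar{\bm{\chi}}\leq\bm{\chi}^\dagger$ by the greedy construction of Lemma~\ref{4:lem:existence:chi:hat} (which only uses $\circFSuper{m}(\bm{\chi}^\dagger)\leq 0$ as the upper endpoint), so $\hat{\bm{\chi}}\leq\bar{\bm{\chi}}\leq\bm{\chi}^\dagger$ by minimality. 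Your pivot inequality is correct, and the final $\epsilon\downarrow 0$ step goes through with lower semi-continuity alone since $\circFSuper{m}(\bm{\chi}^\dagger)\leq\liminf_{\epsilon\to 0}\circFSuper{m}(\bm{\chi}^\dagger-\epsilon\mathbf{1})\leq 0$. What you buy is a shorter, more structural argument (compactness + minimality) that avoids the discretize-the-segment machinery; what you pay is the boundedness hypothesis $\sup_r\E_r[X^m]<\infty$ needed for Bolzano--Weierstrass, which the paper's forward induction never invokes. That hypothesis is harmless here (it holds in every place the lemma is used, since $\E[X_n^m]\to\E[X^m]<\infty$), but your statement is formally slightly weaker than the paper's and it would be worth flagging this explicitly, as you do.
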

\begin{proof}
The main difficulty in showing the result is that we have $\liminf_{r\to\infty}\circFSuperSub{m}{r}(\bm{\chi})\geq\circFSuper{m}(\bm{\chi})$ only pointwise but not uniformly in $\bm{\chi}$. A further difficulty is the multidimensionality. The main idea is to construct a path in analogy to the construction in Lemma \ref{4:lem:existence:chi:hat} that leads to a point $\tilde{\bm{\chi}}(\epsilon)$ smaller but close to $\hat{\bm{\chi}}$. On this path the functions $\circFSuperSub{m}{r}$, $m\in[M]$ are all positive for $r$ large. It can then be compared componentwise with a path leading to $\hat{\bm{\chi}}_r$.

For this consider the construction of $\hat{\bm{\chi}}$ in Lemma \ref{4:lem:existence:chi:hat} and change it in such a way that in each step $k=LM+m$ (where $L\in\N_0$ and $m\in[M]$) a point $\bm{\chi}_{(k)}(\epsilon)$ is chosen such that $\circFSuper{m}(\bm{\chi}_{(k)}(\epsilon))\leq\epsilon$ for some fixed $\epsilon>0$ (choose $\chi_{(k)}^m(\epsilon)\geq\chi_{(k-1)}^m(\epsilon)$ as the smallest possible value such that this inequality holds; it will then either be $\circFSuper{m}(\bm{\chi}_{(k)}(\epsilon))=\epsilon$ or $\bm{\chi}_{(k)}(\epsilon)=\bm{\chi}_{(k-1)}(\epsilon)$). Note that $\circFSuper{m}(\bm{\chi}_{(k)}(\epsilon))<\epsilon$ can only happen if $\circFSuper{m}(\bm{0}) <\epsilon$ in which case there exists $k_0\in \N_{\infty}$ such that  $\chi^m_{(k)}=0$ and $\circFSuper{m}(\bm{\chi}_{(k)}) <\epsilon$ for all $k\leq k_0$ but $\circFSuper{m}(\bm{\chi}_{(k)}) \geq \epsilon$ and $\chi^m_{(k)}>0$ for $k>k_0$. Then $(\bm{\chi}_{(k)}(\epsilon))_{k\in\N}$ is a non-decreasing (componentwise) sequence bounded by $\hat{\bm{\chi}}$ and hence \mbox{$\tilde{\bm{\chi}}(\epsilon)=\lim_{k\to\infty}\bm{\chi}_{(k)}(\epsilon)$} exists. Further, it holds that \mbox{$\circFSuper{m}(\tilde{\bm{\chi}}(\epsilon)) \leq \liminf_{k\to\infty}\circFSuper{m}(\bm{\chi}_{(k)}(\epsilon)) \leq \epsilon$.} Finally, $\tilde{\bm{\chi}}(\epsilon)$ is non-increasing componentwise in $\epsilon$ and bounded inside $[\bm{0},\hat{\bm{\chi}}]$ and thus the limit \mbox{$\tilde{\bm{\chi}}=\lim_{\epsilon\to0+}\tilde{\bm{\chi}}(\epsilon)$} exists. Moreover, $\circFSuper{m}(\tilde{\bm{\chi}}) \leq \liminf_{\epsilon\to0+}\circFSuper{m}(\tilde{\bm{\chi}}(\epsilon)) \leq \liminf_{\epsilon\to0+}\epsilon = 0$ and 
in particular $\tilde{\bm{\chi}}=\hat{\bm{\chi}}$.

Fix now $\delta>0$ and choose $\epsilon>0$ small enough such that $\tilde{\chi}^m(\epsilon)>\hat{\chi}^m(1-\delta)^{1/2}$ for all $m\in[M]$. Further, choose $K=K(\epsilon)\in\N$ large enough such that $\chi_{(K)}^m(\epsilon)>\tilde{\chi}^m(\epsilon)(1-\delta)^{1/2}$ for all $m\in[M]$. In particular, $\chi_{(K)}^m(\epsilon)>\hat{\chi}^m(1-\delta)$. Now note that $\hat{\bm{\chi}}_r$ can be constructed by a sequence $(\bm{\chi}_{(k,r)})_{k\in\N}$ analogue to $\hat{\bm{\chi}}$ in the proof of Lemma \ref{4:lem:existence:chi:hat} as well. We can then in each step $k\in\N$ cap the element of the constructing sequence $\bm{\chi}_{(k,r)}$ at $\bm{\chi}_{(k)}(\epsilon)$, which clearly does not increase the limit of the sequence. We want to make sure that in fact the cap is used in every step $k\leq K$ if we only choose $r$ large enough. Then we can conclude that $\hat{\bm{\chi}}_r \geq \bm{\chi}_{(K)}(\epsilon) \geq \hat{\bm{\chi}}(1-\delta)$ and hence letting $\delta\to0$, $\liminf_{r\to\infty}\hat{\bm{\chi}}_r \geq \hat{\bm{\chi}}$.

We now show that the cap is applied in every step $k\leq K$ for $r$ large enough by an induction argument. For $k=0$, clearly $\bm{\chi}_{(k,r)}=\bm{\chi}_{(k)}(\epsilon)=\bm{0}$ and the cap is applied. Now lets assume it holds for $k\leq k_0<K$. If $\bm{\chi}_{(k_0+1)}(\epsilon)=\bm{\chi}_{(k_0)}(\epsilon)$, then of course the cap is also applied in step $k_0+1$ as the sequence $\bm{\chi}_{(k,r)}$ is increasing. Otherwise, note that by definition of $\bm{\chi}_{(k_0+1)}(\epsilon)$, it holds $\circFSuper{m}(\bm{\chi})\geq\epsilon$ for all $\bm{\chi}\in\R_{+,0}^M$ such that $\chi^m\in[\chi_{(k_0)}^m(\epsilon),\chi_{(k_0+1)}^m(\epsilon)]$ and $\chi^\ell=\chi_{(k_0)}^\ell(\epsilon)=\chi_{(k_0+1)}^\ell(\epsilon)$ for all $\ell\in[M]\backslash\{m\}$. Now choose a discretization $\{\chi_j\}_{0\leq j\leq J}$ of $[\chi_{(k_0)}^m(\epsilon),\chi_{(k_0+1)}^m(\epsilon)]$ for $J<\infty$ such that $\chi_0 = \chi_{(k_0)}^m(\epsilon)$, $\chi_J=\chi_{(k_0+1)}^m(\epsilon)$ and $\chi_{j-1}<\chi_j<\chi_{j-1}+\epsilon/3$ for all $j\in[J]$. We now use the assumption that $\liminf_{r\to\infty}\circFSuperSub{m}{r}(\bm{\chi}_j)\geq\circFSuper{m}(\bm{\chi}_j)$ for every $0\leq j\leq J$, where $\chi^m_j=\chi_j$ and $\chi^\ell_j = \chi_{(k_0)}^\ell(\epsilon)$ for $\ell\in[M]\backslash\{m\}$. Then for $r$ large enough, $\circFSuperSub{m}{r}(\bm{\chi}_j)\geq\circFSuper{m}(\bm{\chi}_j)-\epsilon/3\geq 2\epsilon/3$. Finally, for any linear interpolation $\bm{\chi}=\alpha\bm{\chi}_{j-1}+(1-\alpha)\bm{\chi}_j$ between $\bm{\chi}_{j-1}$ and $\bm{\chi}_j$ ($\alpha\in[0,1]$), it holds
\[ \circFSuperSub{m}{r}(\bm{\chi}) \geq \circFSuperSub{m}{r}(\bm{\chi}_{j-1}) + \chi_{j-1}^m-\chi_j^m \geq 2\epsilon/3 - \epsilon/3 = \epsilon/3. \]
Hence the cap is applied in step $k_0+1$. As there are only finitely many steps $k\leq K$, this finishes the proof.
\end{proof}

\begin{proof}[Proof of Theorem \ref{4:thm:fire:sale:final:fraction}]
We start by proving the lower bound. Recall from Proposition \ref{4:prop:upperlowerfixedsize} that $\bm{\chi}_n\geq\hat{\bm{\chi}}_n$. Using weak convergence of 
$(\bm{X}_n,S_n,C_n,L_n)$ and approximating $\circRho$ from below by a sequence of continuous sale functions $(\rho_r)_{r\in\N}$, we derive for $U\in\R_+$ that pointwise
\begin{align*}
\liminf_{n\to\infty}\E\left[X_n^m\circRho\left(\frac{L_n+\bm{X}_n\cdot h(\bm{\chi}))}{C_n}\right)\right] &\geq \lim_{n\to\infty}\E\left[\left(X_n^m\wedge U\right)\rho_r\left(\frac{L_n+\bm{X}_n\cdot h(\bm{\chi}))}{C_n}\right)\right]\\
&= \E\left[\left(X^m\wedge U\right)\rho_r\left(\frac{L+\bm{X}\cdot h(\bm{\chi}))}{C}\right)\right].
\end{align*}
Hence as $U\to\infty$ and $r\to\infty$ by monotone convergence,
\begin{equation}\label{4:eqn:convergence:finite:f}
\liminf_{n\to\infty}\E\left[X_n^m\circRho\left(\frac{L_n+\bm{X}_n\cdot h(\bm{\chi})}{C_n}\right)\right] - \chi^m \geq \circFSuper{m}(\bm{\chi})
\end{equation}
and we can use Lemma \ref{4:lem:convergence:chi:hat} to derive that $\liminf_{n\to\infty}\bm{\chi}_n \geq \liminf_{n\to\infty}\hat{\bm{\chi}}_n\geq\hat{\bm{\chi}}$.

We now want to show the lower bound on the final damage. Fix some $\delta>0$ and choose $n$ large enough such that $\bm{\chi}_n\geq\hat{\bm{\chi}}_n\geq(1-\delta)\hat{\bm{\chi}}$ componentwise. Then
\[ n^{-1}\mathcal{S}_n = \E\left[S_n\1\left\{L_n+\bm{X}_n\cdot h(\bm{\chi}_n)\geq C_n\right\}\right] \geq \E\left[S_n\1\left\{L_n+\bm{X}_n\cdot h((1-\delta)\hat{\bm{\chi}})> C_n\right\}\right]. \]
However, using weak convergence of $(\bm{X}_n,S_n,C_n,L_n)$ and approximating the indicator function $\1\{y>1\}$ from below by continuous functions $(\phi_t)_{t\in\N}$, we derive for $U\in\R_+$
\begin{align*}
\liminf_{n\to\infty}n^{-1}\mathcal{S}_n &\geq \lim_{n\to\infty}\E\left[\left( S_n\wedge U \right)\phi_t\left(\frac{L_n+\bm{X}_n\cdot h((1-\delta)\hat{\bm{\chi}})}{C_n}\right)\right]\\
&= \E\left[(S\wedge U)\phi_t\left(\frac{L+\bm{X}\cdot h((1-\delta)\hat{\bm{\chi}})}{C}\right)\right]
\end{align*}
and as $U\to\infty$ and $t\to\infty$,
\[ \liminf_{n\to\infty}n^{-1}\mathcal{S}_n \geq \E\left[S\1\left\{L+\bm{X}\cdot h((1-\delta)\hat{\bm{\chi}})>C\right\}\right] = \accentset{\circ}{g}((1-\delta)\hat{\bm{\chi}}). \]
This quantity now tends to $\accentset{\circ}{g}(\hat{\bm{\chi}})$ as $\delta\to0$ by lower semi-continuity of $\accentset{\circ}{g}$.

Now we approach the second part of the theorem. Recall from Proposition \ref{4:prop:upperlowerfixedsize} that \mbox{$\bm{\chi}_n\leq\overline{\bm{\chi}}_n$.} By the construction of $\bm{\chi}^*$ in the proof of Lemma \ref{4:lem:existence:chi:hat}, we have a non-increasing (as $\epsilon\to0$) sequence $(\hat{\bm{\chi}}(\epsilon))_{\epsilon>0}$  such that $\lim_{\epsilon\to0+}\hat{\bm{\chi}}(\epsilon)=\bm{\chi}^*$. (See Remark \ref{4:rem:sequence:chi:*} for non-continuous $\rho$.) In particular, $\bm{\chi}^*\leq\hat{\bm{\chi}}(\epsilon)$ for every $\epsilon>0$ and $f^m(\hat{\bm{\chi}}(\epsilon))=-\epsilon$. Using weak convergence of $(\bm{X}_n,S_n,C_n,L_n)$ we derive for $U\in\R_+$ and $(\rho_s)_{s\in\N}$ an approximation of $\rho$ from above by continuous sale functions that
\begin{align*}
&\limsup_{n\to\infty}\E\bigg[X_n^m\rho\bigg(\frac{L_n+\bm{X}_n\cdot h(\hat{\bm{\chi}}(\epsilon))}{C_n}\bigg)\bigg] = \E[X^m] - \liminf_{n\to\infty}\E\bigg[X_n^m\bigg(1-\rho\bigg(\frac{L_n+\bm{X}_n\cdot h(\hat{\bm{\chi}}(\epsilon))}{C_n}\bigg)\bigg)\bigg]\\
&\hspace{1cm}\leq \E[X^m] - \liminf_{n\to\infty}\E\left[(X_n^m\wedge U)\left(1-\rho_s\left(\frac{L_n+\bm{X}_n\cdot h(\hat{\bm{\chi}}(\epsilon))}{C_n}\right)\right)\right]\\
&\hspace{1cm}= \E[X^m] - \E\left[(X^m\wedge U)\left(1-\rho_s\left(\frac{L+\bm{X}\cdot h(\hat{\bm{\chi}}(\epsilon))}{C}\right)\right)\right]
\end{align*}
and as $U\to\infty$, $s\to\infty$, by monotone convergence
\[ \limsup_{n\to\infty}\E\left[X_n^m\rho\left(\frac{L_n+\bm{X}_n\cdot h(\hat{\bm{\chi}}(\epsilon))}{C_n}\right)\right] \leq f^m(\hat{\bm{\chi}}(\epsilon)) + \hat{\chi}^m(\epsilon) = \hat{\chi}^m(\epsilon) - \epsilon. \]
Hence for $n$ large enough it holds
\[ \E\left[X_n^m\rho\left(\frac{L_n+\bm{X}_n\cdot h(\hat{\bm{\chi}}(\epsilon))}{C_n}\right)\right] - \hat{\chi}^m(\epsilon) \leq -\epsilon/2 <0 \]
for all $m\in[M]$. In particular, we know that $\overline{\bm{\chi}}_n\leq\hat{\bm{\chi}}(\epsilon)$.  Letting $\epsilon\to0$, this shows that $\limsup_{n\to\infty}\chi_n^m\leq\limsup_{n\to\infty}\overline{\chi}_n^m\leq(\chi^*)^m$ for all $m\in[M]$ and hence completes the proof of the upper bound on finally sold assets.

For the upper bound on the final damage $n^{-1}\mathcal{S}_n=\E[S_n\1\{L_n+\bm{X}_n\cdot h(\bm{\chi}_n)\geq C_n\}]$, approximate the indicator function $\1\{y\geq1\}$ from above by continuous functions $(\psi_t)_{t\in\N}$ and use weak convergence of $(\bm{X}_n,S_n,C_n,L_n)$ to derive for $U\in\R_+$
\begin{align*}
\limsup_{n\to\infty}n^{-1}\mathcal{S}_n &= \E[S] - \liminf_{n\to\infty}\E\left[ S_n\left( 1-\psi_t\left( \frac{L_n+\bm{X}_n\cdot h\left( \hat{\bm{\chi}}(\epsilon) \right)}{C_n} \right) \right) \right]\\
&\leq \E[S] - \liminf_{n\to\infty}\E\left[ (S_n\wedge U) \left( 1-\psi_t\left( \frac{L_n+\bm{X}_n\cdot h\left( \hat{\bm{\chi}}(\epsilon) \right)}{C_n} \right) \right) \right]\\
&= \E[S] - \E\left[ (S\wedge U) \left( 1-\psi_t\left( \frac{L+\bm{X}\cdot h\left( \hat{\bm{\chi}}(\epsilon) \right)}{C} \right) \right) \right]
\end{align*}
and as $U\to\infty$ and $t\to\infty$, $\limsup_{n\to\infty}n^{-1}\mathcal{S}_n\leq g(\hat{\bm{\chi}}(\epsilon))$. Letting $\epsilon\to0$, thus shows the second part of the theorem by upper semi-continuity of $g$.
%
\end{proof}

\subsection{Proofs for Section \ref{4:sec:resilience}}\label{4:ssec:proofs:resilience}
As in Section \ref{4:sec:resilience} we use the notation $g$, $f^m$, $\circG$, $\circFSuper{m}$, $\hat{\bm{\chi}}$ and $\bm{\chi}^*$ for an unshocked $(\bm{X},S,C)$-system. If instead we index these quantities by $\cdot_L$, we mean the system shocked by $L$.
\begin{proof}[Proof of Theorem \ref{4:thm:resilience}]
By 
Remark \ref{4:rem:sequence:chi:*}, there exists a sequence of vectors $\tilde{\bm{\chi}}(\gamma)\in\R_{+,0}^M$ such that $f^m(\tilde{\bm{\chi}}(\gamma))=-\gamma$ for all $m\in[M]$ and arbitrary $\gamma\in\R_+$. Now for arbitrary $\alpha\in\R_+$ it holds that
\begin{align*}
f_L^m(\bm{\chi}) &= \E\left[X^m\rho\left(\frac{L+\bm{X}\cdot h(\bm{\chi})}{C}\right)\right] - \chi^m\\
&\leq \E\left[X^m\1\left\{\frac{L}{C}\geq\alpha\right\}\right] + \E\left[X^m\rho\left(\frac{\alpha C+\bm{X}\cdot h(\bm{\chi})}{C}\right)\right] - \chi^m.
\end{align*}
Since $\E[L/C]<\delta$, by Markov's inequality it holds that $\P(L/C\geq\alpha)\leq\delta/\alpha$ and hence for $\delta>0$ small enough, we have $\E[X^m\1\{L/C\geq\alpha\}]\leq\gamma/3$ (recall that $\E[X^m]<\infty$). By dominated convergence and right-continuity of $\rho$, it thus holds that $f_L^m(\bm{\chi})\leq f^m(\bm{\chi})+2\gamma/3$ for $\alpha>0$ small enough such that
\[ \E\left[X^m\rho\left(\frac{\alpha C+\bm{X}\cdot h(\bm{\chi})}{C}\right)\right] \leq \E\left[X^m\rho\left(\frac{\bm{X}\cdot h(\bm{\chi})}{C}\right)\right] + \gamma/3. \]
In particular, $f_L^m(\tilde{\bm{\chi}}(\gamma)) \leq -\gamma/3<0$ and hence $\bm{\chi}_L^*<\tilde{\bm{\chi}}(\gamma)$ for $\delta$ small enough. By similar means, we further derive that for $\delta$ small enough it holds $g_L(\tilde{\bm{\chi}}(\gamma))\leq g(\tilde{\bm{\chi}}(\gamma))+\epsilon/3$. Together with Theorem \ref{4:thm:fire:sale:final:fraction}, we thus derive that
\[ \limsup_{n\to\infty}n^{-1}\mathcal{S}_{n,L} \leq g_L(\bm{\chi}_L^*) + \epsilon/3 \leq g_L(\tilde{\bm{\chi}}(\gamma)) + \epsilon/3 \leq g(\tilde{\bm{\chi}}(\gamma)) + 2\epsilon/3. \]
Now since $\tilde{\bm{\chi}}(\gamma)\to\bm{\chi}^*$ and by upper semi-continuity of $g$, we can choose $\gamma>0$ small enough such that $g(\tilde{\bm{\chi}}(\gamma))\leq g(\bm{\chi}^*)+\epsilon/3$ and conclude that $\limsup_{n\to\infty}n^{-1}\mathcal{S}_{n,L} \leq g(\bm{\chi}^*)+\epsilon$.

For the bound on $\chi_{n,L}^m$ choose $\gamma$ and $\delta$ small enough such that $(\chi_L^*)^m \leq \tilde{\chi}^m(\gamma)+\epsilon/3 \leq (\chi^*)^m+2\epsilon/3$ and conclude by Theorem \ref{4:thm:fire:sale:final:fraction} that
\[ \limsup_{n\to\infty}\chi_{n,L}^m \leq (\chi_L^*)^m + \epsilon/3 \leq (\chi^*)^m+\epsilon. \qedhere\]
\end{proof}

\begin{proof}[Proof of Theorem \ref{4:thm:non-resilience}]
For any $\epsilon>0$ and any subset $I\subset[M]$ let
\[ T(\epsilon,I) := \bigcap_{m\in I}\left\{\bm{\chi}\in\R_{+,0}^M\,:\,\circFSuper{m}(\bm{\chi})\leq-\epsilon\right\} \cap \bigcap_{k\in I^c}\left\{\bm{\chi}\in\R_{+,0}^M\,:\,\chi^k\geq\E[X^k]\right\}, \]
where $I^c:=[M]\backslash I$. Analogously to the construction of $\hat{\bm{\chi}}$ in the proof of Lemma \ref{4:lem:existence:chi:hat}, we find the smallest (componentwise) point $\hat{\bm{\chi}}(\epsilon,I)\in\R_{+,0}^M$ such that $\circFSuper{m}(\bm{\chi})=-\epsilon$ for $m\in I$ and $\chi^k=\E[X^k]$ for $k\in I^c$. Clearly, $\hat{\bm{\chi}}(\epsilon,I)\in T(\epsilon,I)$ and $\hat{\bm{\chi}}(\epsilon,I)\leq\bm{\chi}$ for any other $\bm{\chi}\in T(\epsilon,I)$ (choose $\bm{\chi}$ as an upper bound in the construction).

In particular, $\hat{\bm{\chi}}(\epsilon,I)$ is non-decreasing. As it is bounded by $\E[\bm{X}]$, we therefore know that it is continuous for almost every $\epsilon>0$. As moreover, $\E[X^m\rho(\bm{X}\cdot h(\hat{\bm{\chi}}(\epsilon,I))/C)]$ is bounded and increasing in $\epsilon$, we derive that for almost every $\epsilon>0$ and arbitrary $\delta>0$, we can find $\gamma>0$ small enough such that
\[ \E\left[ X^m\rho\left( \frac{\bm{X}\cdot h(\hat{\bm{\chi}}(\epsilon,I))}{C} \right) \right] \leq \E\left[ X^m\rho\left( \frac{\bm{X}\cdot h(\hat{\bm{\chi}}(\epsilon-\gamma,I))}{C} \right) \right] + \delta. \]
As $\hat{\chi}^m(\epsilon,I)$ is strictly increasing for $m\in I$ and by the assumption of $h^m(\bm{\chi})$ being strictly increasing in $\chi^m$, we derive on $\{X^m>0\}$ that
\[ \rho\left(\frac{\bm{X}\cdot h\left(\hat{\bm{\chi}}(\epsilon-\gamma,I)\right)}{C}\right) \leq \circRho\left(\frac{\bm{X}\cdot h\left(\hat{\bm{\chi}}(\epsilon,I)\right)}{C}\right) \]
and hence
\[ \E\left[ X^m\circRho\left( \frac{\bm{X}\cdot h(\hat{\bm{\chi}}(\epsilon,I))}{C} \right) \right] \leq \E\left[ X^m\rho\left( \frac{\bm{X}\cdot h(\hat{\bm{\chi}}(\epsilon,I))}{C} \right) \right] \leq \E\left[ X^m\circRho\left( \frac{\bm{X}\cdot h(\hat{\bm{\chi}}(\epsilon,I))}{C} \right) \right] + \delta. \]
Choosing $\delta$ arbitrarily small, we thus conclude that $f^m(\hat{\bm{\chi}}(\epsilon,I))=\circFSuper{m}(\hat{\bm{\chi}}(\epsilon,I)) = -\epsilon$ for $m\in I$.

Suppose now there was some $\bm{\chi}\in P_0\backslash[\bm{0},\hat{\bm{\chi}}(\epsilon,I)]$. As $P_0\subset[\bm{0},\E[\bm{X}]]$ and by monotonicity of $f^m$, we could then find some $m\in I$ and $\tilde{\bm{\chi}}\in P_0$ such that $\tilde{\bm{\chi}}\leq\hat{\bm{\chi}}(\epsilon,I)$ and $\tilde{\chi}^m=\hat{\chi}^m(\epsilon,I)$. This on the other hand would imply $f^m(\tilde{\bm{\chi}}) \leq f^m(\hat{\bm{\chi}}(\epsilon,I)) = -\epsilon$, which contradicts $\tilde{\bm{\chi}}\in P_0$. We can thus conclude that $\bm{\chi}^*\in P_0\subset[\bm{0},\hat{\bm{\chi}}(\epsilon,I)]$.

Consider now a certain $L$ and let
\[ I := \{m\in[M]\,:\,\hat{\chi}_L^m<\E[X^m]\}. \]
Then for $m\in I$, we have
\[ \circFSuper{m}(\hat{\bm{\chi}}_L) = \frac{\circFSuperSub{m}{L}(\hat{\bm{\chi}}_L) - \P(L=2C) (\E[X^m]-\hat{\chi}_L^m)}{\P(L=0)} <0 \]
since $\circFSuperSub{m}{L}(\hat{\bm{\chi}}_L)=0$ by definition. Let now
\[ \epsilon := -\max_{m\in I} \circFSuper{m}(\hat{\bm{\chi}}_L) >0. \]
By construction, $\hat{\bm{\chi}}_L\in T(\epsilon,I)$ and thus $\hat{\bm{\chi}}_L\geq\hat{\bm{\chi}}(\epsilon,I)\geq\bm{\chi}^*$. By Theorem \ref{4:thm:fire:sale:final:fraction} we can thus conclude that
\[ n^{-1}\mathcal{S}_{n,L} \geq \circG_L(\hat{\bm{\chi}}_L) + o(1) \geq \circG(\bm{\chi}^*) + o(1) \]
and
\[ \chi_{n,L}^m \geq \hat{\chi}_L^m + o(1) \geq (\chi^*)^m + o(1).\qedhere \]
\end{proof}

\begin{proof}[Proof of Corollary \ref{4:cor:one:asset:sales:default}]
First, let $\gamma\geq1$. Then
\[ f(\chi) = \E[X\1\{Xh(\chi)\geq \alpha X^\gamma\}] - \chi \leq \E[X\1\{h(\chi)\geq \alpha \}] - \chi = -\chi \]
for $\chi$ small enough such that $h(\chi)<\alpha$. Hence $\chi^*=0$ and $g(\chi^*)=\P(Xh(\chi^*)\geq\alpha X^\gamma)
=0$. The system is hence resilient by Corollary \ref{4:cor:resilience}.

Now assume that $\gamma<1$. Then for $\chi$ small enough
\begin{align*}
f(\chi) &\leq \E\left[X\1\left\{X\geq\left(\frac{\alpha}{\mu_2}\chi^{-\nu}\right)^\frac{1}{1-\gamma}\right\}\right] - \chi\\
&= \left(1-F_X\left(\left(\frac{\alpha}{\mu_2}\chi^{-\nu}\right)^\frac{1}{1-\gamma}\right)\right)\left(\frac{\alpha}{\mu_2}\chi^{-\nu}\right)^\frac{1}{1-\gamma} + \int_{\left(\frac{\alpha}{\mu_2}\chi^{-\nu}\right)^\frac{1}{1-\gamma}}^\infty (1-F_X(t))\dd t - \chi\\
&\leq 
B_2 \frac{\beta-1}{\beta-2} \left(\frac{\mu_2}{\alpha}\chi^\nu\right)^\frac{\beta-2}{1-\gamma} - \chi
\end{align*}
and $\liminf_{\chi\to0+} f(\chi)\chi^{-1} < 0$ for $\gamma > 1-\nu(\beta-2)$ or $\gamma = 1-\nu(\beta-2)$ and $\alpha > \mu_2 \left(B_2\frac{\beta-1}{\beta-2}\right)^\nu$. This implies $\chi^*=0$ and hence resilience as above.

On the other hand, for $\chi$ small enough also
\[ f(\chi) \geq B_1\frac{\beta-1}{\beta-2}\left(\frac{\mu_1}{\alpha}\chi^\nu\right)^\frac{\beta-2}{1-\gamma} - \chi \]
and hence $\chi^*>0$ for $\gamma<1-\nu(\beta-2)$ or $\gamma=1-\nu(\beta-2)$ and $\alpha<\mu_1\left(B_1\frac{\beta-1}{\beta-2}\right)^\nu$. Then
\[ \circG(\chi^*) = \P\left(X>\left(\frac{\alpha}{h(\chi^*)}\right)^\frac{1}{1-\gamma}\right) \geq B_1 \left(\frac{h(\chi^*)}{\alpha}\right)^\frac{\beta-1}{1-\gamma} > 0 \]
and the system is non-resilient by Corollary \ref{4:cor:non:resilience}.
\end{proof}

\begin{proof}[Proof of Corollary \ref{4:cor:intermediate:sales}]
Non-resilience for $\gamma<1-\nu(\beta-2)$ is trivial from Corollary \ref{4:cor:one:asset:sales:default} noting that the intermediate sales only make the system even less resilient.

So assume in the following that $\gamma>1-\nu(\beta-2)$: First, let $\gamma\geq1$. Then for $\chi$ small enough (cf.~the proof of Corollary~\ref{4:cor:one:asset:sales:default}) it holds
\[ f(\chi) \leq \E\left[ X \left( \frac{Xh(\chi)}{\alpha X^\gamma} \right)^q \1\left\{Xh(\chi) < \alpha X^\gamma\right\} \right] - \chi \leq \frac{\E[X] \mu_2^q}{\alpha} \chi^{\nu q} - \chi \]
and by $\nu q>1$, we derive $\chi^*=0$ and resilience of the system by Corollary \ref{4:cor:resilience}.

Now let $\gamma<1$. Using $B_1x^{1-\beta} \leq 1-F_X(x) \leq B_2 x^{1-\beta}$ for $x\geq x_0$, we derive
\begin{align*}
&\E\left[ X^{1+q(1-\gamma)} \1\left\{ X<\left(\frac{\alpha}{h(\chi)}\right)^\frac{1}{1-\gamma} \right\} \right]\\
&\hspace{0.4cm}= \int_0^{\left(\frac{\alpha}{h(\chi)}\right)^\frac{1}{1-\gamma}} (1+q(1-\gamma)) t^{q(1-\gamma)} (1-F_X(t)) \dd t - \left(\frac{\alpha}{h(\chi)}\right)^{\frac{1}{1-\gamma}+q} \left( 1-F_X\left( \left(\frac{\alpha}{h(\chi)}\right)^\frac{1}{1-\gamma} \right) \right)\\
&\hspace{0.4cm}\leq \left(B_2 \frac{1+q(1-\gamma)}{2-\beta+q(1-\gamma)} - \tilde{B}_1\right) \left(\frac{\alpha}{h(\chi)}\right)^{\frac{2-\beta}{1-\gamma}+q} + \kappa,
\end{align*}
where $\kappa>0$ accounts for the lower part of the integral from $0$ to $x_0$ and $\tilde{B}_1$ is chosen such that $\tilde{B}_1x^{1-\beta}\leq 1-F_X(x)$ for all $x\geq0$. For $\chi$ small enough (cf.~the proof of Corollary~\ref{4:cor:one:asset:sales:default}) it then holds
\begin{align*}
f(\chi) &\leq B_2 \frac{\beta-1}{\beta-2} \left(\frac{\mu_2}{\alpha}\chi^\nu\right)^\frac{\beta-2}{1-\gamma} + \E\left[X^{1+q(1-\gamma)}\1\left\{X<\left(\frac{\alpha}{h(\chi)}\right)^\frac{1}{1-\gamma}\right\}\right] \left(\frac{h(\chi)}{\alpha}\right)^q - \chi\\
&\leq B_2 \frac{\beta-1}{\beta-2} \left(\frac{\mu_2}{\alpha}\chi^\nu\right)^\frac{\beta-2}{1-\gamma} + \left(B_2 \frac{1+q(1-\gamma)}{2-\beta+q(1-\gamma)} - \tilde{B}_1\right) \left(\frac{\mu_2}{\alpha}\chi^\nu\right)^\frac{\beta-2}{1-\gamma} + \kappa \left(\frac{\mu_2}{\alpha}\chi^\nu\right)^q - \chi
\end{align*}
and by $\nu(\beta-2)/(1-\gamma)>1$ as well as $\nu q>1$, we derive that $\chi^*=0$ and the system is resilient by Corollary \ref{4:cor:resilience}.
\end{proof}

\begin{proof}[Proof of Corollary \ref{4:cor:multiple:assets}]
Let $\bm{v}\in\R_+^M$ be defined by $v^m:=\theta^m/\mu^m$. The functions $f^m$, $m\in[M]$, are given by
\[ f^m(\bm{\chi}) = \E\left[X^m\rho^m\left(\frac{\sum_{m\in[M]}X^m h^m(\bm{\chi})}{\sum_{m\in[M]}\theta^mX^m}\right)\right] - \chi^m \]
and thus for $\chi\in\R_{+,0}$ we have
\[ f^m(\chi\bm{v}) = \E[X^m \rho^m(\chi^\nu)] - \chi\frac{\theta^m}{\mu^m} = \E[X^m]\left(\1\{\chi\geq1\}+\chi^{\nu q^m}\1\{\chi<1\}\right) - \chi\frac{\theta^m}{\mu^m}. \]
As $\chi\to0$,
\[ \liminf_{\chi\to0+} f^m(\chi\bm{v})\chi^{-1} = \begin{cases} -\theta^m/\mu^m,&\text{if }q^m>\nu^{-1},\\\E[X^m]-\theta^m/\mu^m,&\text{if }q^m=\nu^{-1},\\ \infty,&\text{if }q^m<\nu^{-1}. \end{cases} \]
In particular, both \ref{4:cor:multiple:assets:1}.~and \ref{4:cor:multiple:assets:2}.~imply $\liminf_{\chi\to0+} f^m(\chi\bm{v})\chi^{-1}<0$ and since this holds for all $m\in[M]$, we can conclude that $\bm{\chi}^*=\bm{0}$ and the system is resilient by Corollary \ref{4:cor:resilience}.
\end{proof}

\cleardoublepage
\chapter{An Integrated Model for Default Contagion and Fire Sales in Multi-type Financial Networks}
\label{chap:fire:sales:default}
In the previous chapters, we have studied different aspects of systemic risk focusing on the two contagion channels \emph{default contagion} respectively \emph{fire sales} and particularly derived ways to prevent large default cascades. Each single chapter, however, was devoted to the understanding of specific phenomena and to ensure a concise presentation left out aspects covered in other chapters. The aim of this chapter is now to propose and analyze in detail a model combining the model features from Chapters \ref{chap:block:model} and \ref{chap:fire:sales} and by this obtain a more comprehensive picture of systemic risk. In particular, we will integrate the fire sales channel of systemic risk and the default contagion channel which allows us to better understand their interlocking in a cascade of financial distress. As in the previous chapters we will derive results about the final state of an initially distressed financial system, a characterization of resilience, and capital requirements. Moreover, we demonstrate that default contagion and fire sales can significantly amplify each other and that these amplification effects can even cause a system to become non-resilient. In Section \ref{5:sec:system:model}, we describe the model and particularly emphasize the joint contagion process of default contagion and fire sales. Next, we derive results about the final systemic damage in Section \ref{5:asymp:res:sys} and identify resilient and non-resilient system characteristics in Section \ref{5:sec:resilience}. In Section \ref{5:sec:applications}, we show that the combination of default contagion and fire sales can have tremendous impact on the stability of the system and support this example by numerical simulations. Moreover, we derive sufficient capital requirements to protect a financial system against the joint effects of default contagion and fire sales. Finally, we give proofs for all our results in Section \ref{5:sec:proofs}.

\vspace*{-7pt}
\paragraph{My own contribution:} This chapter presents an extended version of the model from \cite{Detering2018c}. Still many passages are adopted or slightly modified from there. \cite{Detering2018c} is joint work with Nils Detering, Thilo Meyer-Brandis and Konstantinos Panagiotou. I was significantly involved in the development of all parts of that paper and did most of the editorial work. In particular, I made major contributions to the conceptualization of the model and the joint contagion process, as well as Lemma \ref{5:lem:cont:rho}, Proposition \ref{5:prop:bounds:finite:systems}, Theorems \ref{5:thm:final:fraction:combined:general}, \ref{5:thm:resilience} and \ref{5:thm:non-resilience}, Examples \ref{5:ex:simulation} and \ref{5:ex:combined}, Theorem \ref{5:thm:final:fraction:combined:finitary}, and Lemmas \ref{5:lem:stochastic:domination} and \ref{5:lem:convergence:g}, that were included in \cite{Detering2018c} for the special case of the \emph{threshold model} from Chapter \ref{chap:systemic:risk} (i.\,e. \mbox{$R=T=1$}) and for the final default fraction rather than a general measure of systemic importance (i.\,e.~\mbox{$S\equiv 1$}). 

\section{An Integrated Model for the Financial System}\label{5:sec:system:model}

In this section, we state our model for a financial system. It includes all the parameters we need to investigate the interplay of the contagion channels \emph{fire sales} and \emph{default contagion}. We assume that there are $n\in\N$ financial institutions. We use the term financial institutions in a wide sense. It may include banks, insurance companies, mutual funds, asset managers but also non-financial institutions as for example corporations if they hold a large number of the assets and would sell them in case of a decline in value. We denote the set of institutions by $[n]:=\{1,\ldots,n\}$. Further we consider $M\in\N$ assets. These are the assets institutions invest in and that are considered relevant for potential fire sales. We denote by $[M]:=\{1,\ldots,M\}$ the set of these assets.

\subsection{Model Parameters}\label{5:ssec:model:parameters}
Each institution $i\in [n]$ has a set of parameters assigned:
\begin{enumerate}
\item {\bf The value of systemic importance $s_i\in\R_{+,0}$}: It describes the potential damage that a default of institution $i$ will cause for the global financial system or the wider economy. See Chapter \ref{chap:systemic:risk} for more details and possible choices for $s_i$.
\item {\bf The initial capital parameter $c_i\in\R_{+,\infty}:=\R_+\cup\{\infty\}$}: This parameter determines the monetary buffer of institution $i$ against losses. For banks this is usually their equity (assets minus liabilities) which is positive if the bank is solvent. For an asset manager it could be the total value of assets managed. In the following, we refer to $c_i$ as capital for simplicity.
\item {\bf The exogenous loss parameter $\ell_i\in\R_{+,0}$}: It models the impact of some external shock on institution $i$. The specification of $\ell_i$ allows for a variety of stress tests for the financial system, i.\,e.~asset price shock, defaults of some institutions, etc. The actual magnitude of $\ell_i$ will thus crucially depend on the stress testing and the business model of institution $i$. The actual new capital of institution $i$ after the shock is $c_i-\ell_i$. 
\item {\bf The number $x_i^m\in\R_{+,0}$ of shares institution $i$ holds of asset $m\in[M]$}: As we are only interested in the effect of sales, we consider only positive holdings. If an institution $i$ is shortening asset $m$, we set $x_i^m=0$. So for each institution we can assign a vector $\bm{x}_i:=(x_i^1,\ldots,x_i^M)\in \R_{+,0}^M$ of asset holdings.
\item {\bf Direct exposures $e_{i,j}\in \R_{+,0}$}: 
In this section, we consider $e_{i,j}$ to be the observed (deterministic) exposure of $j$ to $i$. If $e_{i,j}>0$, this means that institution $i$ owes a monetary amount to institution $j$ via for example a loan. In the next section, we will propose a random model for $\{e_{i,j}\}_{i,j\in[n]}$ using the ideas from Chapter \ref{chap:block:model}.
\end{enumerate}
In Figure~\ref{5:fig:parameters} we summarize the parameters assigned to each institution.

\begin{figure}
\centering
\includegraphics[width=0.6\textwidth]{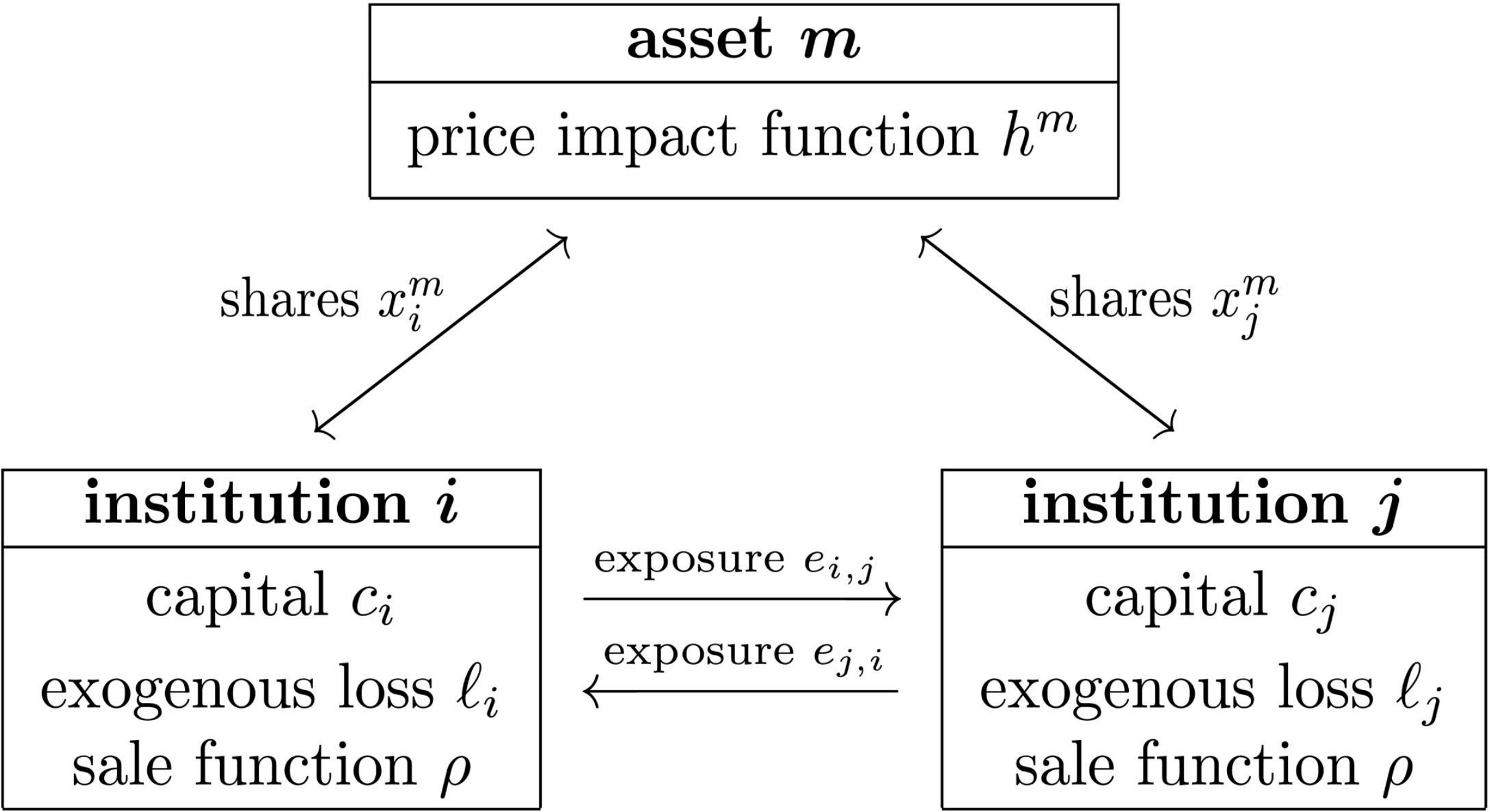} 
\caption{Model parameters for institutions $i$ and $j$ in the financial system as well as asset $m$}\label{5:fig:parameters}
\end{figure}


\subsection{Fire Sales}
We repeat the most important notions about fire sales from Chapter \ref{chap:fire:sales}. Fire sales are the combination of asset sales and price impact. The exogenous losses $\ell_i$, $i\in [n]$, possibly drive some institutions into selling parts of their assets. This can be due to their own risk preference or regulation that forces them to stay within certain risk bounds or leverage constraints. We model these asset sales by a function $\rho:\R_{+,0}\to[0,1]$, which describes the fraction of assets sold after an institution lost a certain fraction of its capital, i.\,e.~$i\in[n]$ sells $x_i^m\rho(\Lambda/c_i)$ of its shares of asset $m$ after it incurred losses of $\Lambda$. We refer to Chapter \ref{chap:fire:sales} for more details on the specification of $\rho$. The setup can easily be extended to account for different sales behaviors for different institutions. This is especially important if one goes beyond the banking network. Then one has to  reflect the fact that banks and insurance companies are regulated differently. Also wealth managers will have a different sale function depending on their risk profile. For the sake of notational simplicity we restrict ourselves to one sale function $\rho$ here. We make the following natural assumptions: the sale function $\rho$ is non-decreasing, $\rho(0)=0$ and $\rho(u)=\rho(1)$ for all $u\geq1$. Moreover to simplify notation in the following, we choose $\rho$ to be right-continuous and denote by $\circRho(u):=\lim_{\epsilon\to0+} \rho((1-\epsilon)u)$ its left-continuous modification.

The sales will cause prices to go down as the assets are not perfectly liquid (the limit order book has finite depth). To model the decline in the asset prices we use functions \mbox{$h^m:\R_{+,0}^M\to[0,1]$} which are non-decreasing and continuous in each coordinate. After \mbox{$ny^ m\hspace*{-0.05cm} \in \hspace*{-0.05cm} \R_{+,0}$} shares of asset $m$ have been sold, we assume that the share price of each asset $m\in[M]$ decreases by $h^m(\bm{y})$, where $\bm{y}=(y^1,\ldots,y^M)$. Each institution $i\in[n]$ is further assumed to suffer losses of $\bm{x}_i\cdot h(\bm{y})$ due to mark-to-market valuation of its portfolio, where $h(\bm{y})=(h^1(\bm{y}),\ldots,h^M(\bm{y}))$.

There are two remarks in order: First, we pick $\bm{y}$ as the argument of $h$ instead of the actual vector of sold shares $n\bm{y}$; this choice is purely conventional for any fixed $n$ but will be convenient for our results (also cf.~Assumption \ref{5:ass:regular:vertex:sequence}). Second, as institutions start selling assets during the contagion process they actually reduce their exposure to future price drops and $\bm{x}_i\cdot h(\bm{y})$ merely functions as an upper bound on $i$'s losses. In this sense, our model is conservative. In particular, implementation costs for each trade are covered. Moreover, this assumption allows for better analytic results in the following.

For a financial system without direct exposures, i.\,e.~$e_{i,j}=0$ for all $i,j\in[n]$, the contagion process is then solely driven by rounds of alternating asset sales and price impact, i.\,e.~fire sales. Denoting by $\bm{\sigma}_{(k)} = (\sigma_{(k)}^1,\ldots,\sigma_{(k)}^M)$ the vector of sold shares in round $k\in\N_0$ with $\bm{\sigma}_{(0)}=\bm{0}$ we then derive
\[ \bm{\sigma}_{(k)} = \sum_{i\in[n]} \bm{x}_i \rho\left( \frac{\ell_i + \bm{x}_i\cdot h(n^{-1}\bm{\sigma}_{(k-1)})}{c_i} \right),\qquad k\geq1. \]
See Chapter \ref{chap:fire:sales} for results on the pure fire sales process.

\subsection{Default Contagion}
If on the other hand, we consider a financial system with $\bm{x}_i=\bm{0}$ for all $i\in[n]$, then contagion completely proceeds via the direct exposure network. That is, if $\ell_i\geq c_i$ and institution $i\in[n]$ is therefore initially defaulted, then each institution $j\in[n]$ suffers losses of $e_{i,j}$. This possibly causes further defaults in the system and so on. Note that this supposes a recovery rate of zero which is a conservative yet reasonable assumption as the time horizon of the default contagion process is short compared to the time the resolution of an insolvent institution takes and there is a huge amount of uncertainty about the actual value of an insolvent institution immediately after its default. One could easily implement other fixed recovery rates in our model.

Again we can consider the pure default contagion process in rounds analogue to Chapter \ref{chap:block:model}. Denoting $\mathcal{D}_{(k)}\subseteq[n]$ the set of defaulted institutions in round $k\in\N_0$ with $\mathcal{D}_{(0)} = \emptyset$ we obtain
\[ \mathcal{D}_{(k)} = \{i\in[n]\,:\,\sum_{j\in\mathcal{D}_{(j)}}e_{j,i} \geq c_i-\ell_i\},\qquad k\geq1, \]
and the contagion process ends after at most $n-1$ rounds. In particular, $\mathcal{D}_n:=\mathcal{D}_{(n-1)}$ consists of all finally defaulted institutions and $\mathcal{S}_n=\sum_{i\in\mathcal{D}_n}s_i$ amounts to the total final systemic damage caused by defaults. See \cite{Cont2016,Detering2015a} and Chapter \ref{chap:block:model} for more results on the pure default contagion process.

\subsection{The Contagion Process}\label{5:sec:cont:prozess}
The focus of this chapter is on the understanding of the joint effects of fire sales and default contagion. We therefore combine the two processes from above and again consider contagion in rounds: Let $\mathcal{D}_{(0)}=\emptyset$ and $\bm{\sigma}_{(0)}=\bm{0}$. Moreover, for $k\geq1$,
\[ \mathcal{D}_{(k)} = \bigg\{i\in[n]\,:\,\sum_{j\in\mathcal{D}_{(k-1)}} e_{j,i} \geq c_i-\ell_i-\bm{x}_i\cdot h\left(n^{-1}\bm{\sigma}_{(k-1)}\right)\bigg\} \]
and
\[ \bm{\sigma}_{(k)} = \sum_{i\in[n]} \bm{x}_i \rho\left( \frac{\sum_{j\in\mathcal{D}_{(k-1)}}e_{j,i} + \ell_i+\bm{x}_i\cdot h\left(n^{-1}\bm{\sigma}_{(k-1)}\right)}{c_i} \right). \]
Then $\mathcal{D}_{(0)}\subseteq\mathcal{D}_{(1)}\subseteq\cdots\subseteq[n]$ and $\bm{\sigma}_{(0)}\leq\bm{\sigma}_{(1)}\leq\cdots\leq\sum_{i\in[n]}\bm{x}_i$. We can thus conclude that the process converges as $k\to\infty$. Let then $\mathcal{D}_n := \bigcup_{k\in\N}\mathcal{D}_{(k)}$ the set of finally defaulted institutions, $\mathcal{S}_n=\sum_{i\in\mathcal{D}_n}s_i$ their systemic importance and $\bm{\chi}_n:=\n^{-1}\lim_{k\to\infty}\bm{\sigma}_{(k)}$ the vector of finally sold shares divided by $n$.

For continuous $\rho$, we derive the following result.
\begin{lemma}\label{5:lem:cont:rho}
Consider above contagion process for continuous $\rho$. Then $\mathcal{D}_n\subseteq[n]$ and $\bm{\chi}_n$ are the smallest solution to
\begin{equation}\label{5:eqn:lem:cont:rho:D}
\mathcal{D} = \bigg\{i\in[n]\,:\,\sum_{j\in\mathcal{D}} e_{j,i} \geq c_i-\ell_i-\bm{x}_i\cdot h(\bm{\chi})\bigg\},
\end{equation}
\begin{equation}\label{5:eqn:lem:cont:rho:chi}
\bm{\chi} = \sum_{i\in[n]} \bm{x}_i \rho\left( \frac{\sum_{j\in\mathcal{D}}e_{j,i} + \ell_i+\bm{x}_i\cdot h(\bm{\chi})}{c_i} \right).
\end{equation}
That is, if $(\tilde{\mathcal{D}},\tilde{\bm{\chi}})$ also solves \eqref{5:eqn:lem:cont:rho:D} and \eqref{5:eqn:lem:cont:rho:chi}, then $\mathcal{D}_n\subseteq\tilde{\mathcal{D}}$ and $\bm{\chi}_n\leq\tilde{\bm{\chi}}$.
\end{lemma}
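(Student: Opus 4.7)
The plan is to prove the two parts of the claim in turn, following the pattern of the proof of Lemma~\ref{4:lem:final:state:cont:rho} for the pure fire sales case. First I would check that the pair $(\mathcal{D}_n, \bm{\chi}_n)$ produced by the coupled iteration of Subsection~\ref{5:sec:cont:prozess} satisfies the fixed-point system \eqref{5:eqn:lem:cont:rho:D}--\eqref{5:eqn:lem:cont:rho:chi}, and then I would establish minimality by induction on the round number $k$.

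For the first part, the key observation is that the sequence $(\mathcal{D}_{(k)})_{k \geq 0}$ is non-decreasing in the finite lattice $2^{[n]}$, so it stabilizes at some round $k_0$ with $\mathcal{D}_{(k)} = \mathcal{D}_n$ for all $k \geq k_0$. Meanwhile $(\bm{\sigma}_{(k)})_{k \geq 0}$ is componentwise non-decreasing and uniformly bounded by $\sum_{i \in [n]} \bm{x}_i$, hence it converges to $n\bm{\chi}_n$. Continuity of $h$ and $\rho$ then allows me to pass to the limit in the recursion for $\bm{\sigma}_{(k)}$ to obtain \eqref{5:eqn:lem:cont:rho:chi}. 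For \eqref{5:eqn:lem:cont:rho:D}, the inclusion of $\mathcal{D}_n$ in the right-hand side follows because any $i \in \mathcal{D}_n$ is first included at some round $k_0$, and monotonicity of $h$ together with $\mathcal{D}_{(k_0-1)} \subseteq \mathcal{D}_n$ and $\bm{\sigma}_{(k_0-1)} \leq n\bm{\chi}_n$ propagates the triggering inequality from round $k_0$ to the limit; the converse inclusion follows by passing to the limit in the negated defining inequality for $i \notin \mathcal{D}_{(k)}$, using continuity of $h$ and the fact that $\mathcal{D}_{(k-1)} = \mathcal{D}_n$ for $k > k_0$.

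For the minimality statement, I would fix an arbitrary solution $(\tilde{\mathcal{D}}, \tilde{\bm{\chi}})$ of \eqref{5:eqn:lem:cont:rho:D}--\eqref{5:eqn:lem:cont:rho:chi} and argue by induction on $k$ that $\mathcal{D}_{(k)} \subseteq \tilde{\mathcal{D}}$ and $\bm{\sigma}_{(k)} \leq n\tilde{\bm{\chi}}$. The base case $k=0$ is immediate since $\mathcal{D}_{(0)} = \emptyset$ and $\bm{\sigma}_{(0)} = \bm{0}$. In the inductive step, monotonicity of $h$ in each coordinate gives that any $i \in \mathcal{D}_{(k)}$ satisfies
\[ \sum_{j \in \tilde{\mathcal{D}}} e_{j,i} \geq \sum_{j \in \mathcal{D}_{(k-1)}} e_{j,i} \geq c_i - \ell_i - \bm{x}_i \cdot h\bigl( n^{-1}\bm{\sigma}_{(k-1)} \bigr) \geq c_i - \ell_i - \bm{x}_i \cdot h(\tilde{\bm{\chi}}), \]
which via \eqref{5:eqn:lem:cont:rho:D} forces $i \in \tilde{\mathcal{D}}$; analogously, monotonicity of $\rho$ and $h$ together with the inductive hypothesis yields $\bm{\sigma}_{(k)} \leq n\tilde{\bm{\chi}}$ via \eqref{5:eqn:lem:cont:rho:chi}. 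Passing to the limit $k \to \infty$ then gives $\mathcal{D}_n \subseteq \tilde{\mathcal{D}}$ and $\bm{\chi}_n \leq \tilde{\bm{\chi}}$. The main technical subtlety will be the verification of \eqref{5:eqn:lem:cont:rho:D} in the edge case where the defining inequality for some $i \notin \mathcal{D}_n$ is attained only asymptotically as $\bm{\sigma}_{(k)} \uparrow n\bm{\chi}_n$; this is a purely boundary phenomenon that does not affect the minimality conclusion and can be dealt with by the same passage-to-the-limit argument using continuity of $h$.
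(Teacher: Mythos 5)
Your minimality argument --- induction on $k$ with base case $\mathcal{D}_{(0)}=\emptyset$, $\bm{\sigma}_{(0)}=\bm{0}$, and an inductive step driven by monotonicity of the exposure sum, $h$ and $\rho$ --- matches the paper's proof exactly, and so does your derivation of \eqref{5:eqn:lem:cont:rho:chi} by passing to the limit in the $\bm{\sigma}_{(k)}$-recursion using continuity of $\rho$ and $h$.

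The gap is the reverse inclusion in \eqref{5:eqn:lem:cont:rho:D}, i.e.\ that every $i$ with $\sum_{j\in\mathcal{D}_n}e_{j,i}\geq c_i-\ell_i-\bm{x}_i\cdot h(\bm{\chi}_n)$ actually lies in $\mathcal{D}_n$. Passing to the limit in the strict inequality $\sum_{j\in\mathcal{D}_n}e_{j,i}<c_i-\ell_i-\bm{x}_i\cdot h(n^{-1}\bm{\sigma}_{(k-1)})$ that characterizes $i\notin\mathcal{D}_{(k)}$ produces only the non-strict bound $\sum_{j\in\mathcal{D}_n}e_{j,i}\leq c_i-\ell_i-\bm{x}_i\cdot h(\bm{\chi}_n)$, precisely because $h$ is continuous. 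You flag this edge case yourself, but your closing assertion that it ``can be dealt with by the same passage-to-the-limit argument using continuity of $h$'' is the opposite of what happens: the limit argument is exactly what loses the strictness. And this is not a vacuous worry. Take $n=2$, $M=1$, $h(\chi)=\chi\wedge1$, $\rho(u)=u\wedge1$, institution $1$ with $c_1=\ell_1=1$, $x_1^1=0$, institution $2$ with $c_2=3/2$, $\ell_2=0$, $x_2^1=1$, and $e_{1,2}=1$, $e_{2,1}=0$. Then $\mathcal{D}_{(k)}=\{1\}$ and $\bm{\sigma}_{(k)}=1-3^{1-k}<1$ for all $k\geq1$, so $\mathcal{D}_n=\{1\}$ and $\bm{\chi}_n=1/2$; yet $e_{1,2}=1=c_2-\ell_2-x_2^1 h(1/2)$ places institution $2$ in the right-hand side of \eqref{5:eqn:lem:cont:rho:D}, so $(\mathcal{D}_n,\bm{\chi}_n)$ is not a solution at all, and the smallest solution is $(\{1,2\},1/2)$. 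You should also be aware that the paper's own proof never establishes this reverse inclusion either --- it shows $\mathcal{D}_n\subseteq\{i:\sum_{j\in\mathcal{D}_n}e_{j,i}\geq c_i-\ell_i-\bm{x}_i\cdot h(\bm{\chi}_n)\}$ by two separate arguments and then asserts the set equality --- so the difficulty is inherited from the statement itself, but the patch you propose does not close it.
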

One particular consequence of Lemma \ref{5:lem:cont:rho} is then that for continuous $\rho$ we can alter the contagion process in the following way. Let $\mathcal{D}_0=\emptyset$ and $\bm{\chi}_0=\bm{0}$. Repeat the following for $k\geq1$ until $\mathcal{D}_k=\mathcal{D}_{k-1}$ and $\bm{\chi}_k=\bm{\chi}_{k-1}$.
\begin{enumerate}[(i)]
\item Let $\mathcal{D}_k\subseteq[n]$ the smallest set such that
\[ \mathcal{D}_k = \bigg\{i\in[n]\,:\,\sum_{j\in\mathcal{D}_k}e_{j,i}\geq c_i-\ell_i-\bm{x}_i\cdot h(\bm{\chi}_{k-1})\bigg\}. \]
\item Let $\bm{\chi}_k$ be the smallest vector such that
\[ \bm{\chi}_k = n^{-1}\sum_{i\in[n]}\bm{x}_i\rho\left(\frac{\sum_{j\in\mathcal{D}_k}e_{j,i} + \ell_i+\bm{x}_i\cdot h(\bm{\chi}_k)}{c_i}\right) \]
which exists by the Knaster-Tarski theorem.
\end{enumerate}
That is, instead of considering fire sales and default contagion simultaneously, we first consider a complete default contagion cascade, then a complete fire sales cascade and so on (compare \cite{Weber2016} for instance). This procedure ends after at most $n$ steps in $\mathcal{D}_n$ and $\bm{\chi}_n$.

For general (not necessarily continuous) $\rho$, clearly the process described by (i) and (ii) still converges to the smallest solution $\overline{\mathcal{D}}_n$ and $\overline{\bm{\chi}}_n$ of \eqref{5:eqn:lem:cont:rho:D} and \eqref{5:eqn:lem:cont:rho:chi}. By similar means as in the proof of Lemma \ref{5:lem:cont:rho} it then holds that $\mathcal{D}_n\subseteq\overline{\mathcal{D}}_n$ as well as $\bm{\chi}_n\leq\overline{\bm{\chi}}_n$.

Furthermore, consider the following process with strict inequality in (i') and $\rho$ replaced by its left-continuous modification $\circRho$ in (ii'): Let $\hat{\mathcal{D}}_0=\emptyset$ and $\hat{\bm{\chi}}_0=\bm{0}$. Repeat the following for $k\geq1$ until $\hat{\mathcal{D}}_k=\hat{\mathcal{D}}_{k-1}$ and $\hat{\bm{\chi}}_k=\hat{\bm{\chi}}_{k-1}$.
\begin{enumerate}[(i')]
\item Let $\hat{\mathcal{D}}_k\subseteq[n]$ the smallest set such that
\[ \hat{\mathcal{D}}_k = \bigg\{i\in[n]\,:\,\sum_{j\in\hat{\mathcal{D}}_k}e_{j,i} > c_i-\ell_i-\bm{x}_i\cdot h(\hat{\bm{\chi}}_{k-1})\bigg\}. \]
\item Let $\hat{\bm{\chi}}_k$ be the smallest vector such that
\[ \hat{\bm{\chi}}_k = n^{-1}\sum_{i\in[n]}\bm{x}_i\circRho\left(\frac{\sum_{j\in\hat{\mathcal{D}}_k}e_{j,i} + \ell_i+\bm{x}_i\cdot h(\hat{\bm{\chi}}_k)}{c_i}\right) \]
which exists by the Knaster-Tarski theorem.
\end{enumerate}
Again this process ends after at most $n$ rounds in $\hat{\mathcal{D}}_n$ and $\hat{\bm{\chi}}_n$. Moreover, by the same means as in the proof of Lemma \ref{5:lem:cont:rho} and since $\circRho$ is a lower bound on $\rho$, we then derive that $\hat{\mathcal{D}}_n\subseteq\mathcal{D}_n$ and $\hat{\bm{\chi}}_n\leq\bm{\chi}_n$. 

Finally, denote $\hat{\mathcal{S}}_n=\sum_{i\in\hat{\mathcal{D}}_n}s_i$ and $\overline{\mathcal{S}}_n=\sum_{i\in\overline{\mathcal{D}}_n}s_i$. Then altogether, we derive the following.
\begin{proposition}\label{5:prop:bounds:finite:systems}
For the set of finally defaulted institutions $\mathcal{D}_n$, their total systemic importance $\mathcal{S}_n$ and the vector $\bm{\chi}_n$ of finally sold shares divided by $n$ it holds
\[ \hat{\mathcal{D}}_n \subseteq \mathcal{D}_n \subseteq \overline{\mathcal{D}}_n,\qquad\hat{\mathcal{S}}_n \leq \mathcal{S}_n \leq \overline{\mathcal{S}}_n,\qquad \hat{\bm{\chi}}_n \leq \bm{\chi}_n \leq \overline{\bm{\chi}}_n. \]
\end{proposition}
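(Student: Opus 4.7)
The plan is to prove the two chains of inequalities by induction on the rounds of the bracketing processes (i)--(ii) and (i')--(ii'), respectively, each time comparing against the pair $(\mathcal{D}_n,\bm{\chi}_n)$ arising from the simultaneous process of Subsection \ref{5:sec:cont:prozess}. For the upper bound I would show by induction on the round $k$ of the simultaneous process that $\mathcal{D}_{(k)}\subseteq\overline{\mathcal{D}}_n$ and $n^{-1}\bm{\sigma}_{(k)}\leq\overline{\bm{\chi}}_n$. The base case is trivial; in the inductive step, the definition of $\mathcal{D}_{(k)}$ gives for any $i\in\mathcal{D}_{(k)}$ that $\sum_{j\in\mathcal{D}_{(k-1)}}e_{j,i}\geq c_i-\ell_i-\bm{x}_i\cdot h(n^{-1}\bm{\sigma}_{(k-1)})$, and the induction hypothesis together with monotonicity of $h$ upgrades this to $\sum_{j\in\overline{\mathcal{D}}_n}e_{j,i}\geq c_i-\ell_i-\bm{x}_i\cdot h(\overline{\bm{\chi}}_n)$, so that $i\in\overline{\mathcal{D}}_n$ because $(\overline{\mathcal{D}}_n,\overline{\bm{\chi}}_n)$ solves \eqref{5:eqn:lem:cont:rho:D}. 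The bound $\bm{\sigma}_{(k)}\leq n\overline{\bm{\chi}}_n$ is analogous, using monotonicity of $\rho$ and \eqref{5:eqn:lem:cont:rho:chi}. Passing to $k\to\infty$ yields $\mathcal{D}_n\subseteq\overline{\mathcal{D}}_n$ and $\bm{\chi}_n\leq\overline{\bm{\chi}}_n$, and $\mathcal{S}_n\leq\overline{\mathcal{S}}_n$ follows at once from $s_i\geq 0$.

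For the lower bound I would argue by induction on the round $k$ of the sequential process (i')--(ii') that $\hat{\mathcal{D}}_k\subseteq\mathcal{D}_n$ and $\hat{\bm{\chi}}_k\leq\bm{\chi}_n$. The key preparatory observation is that, letting $k\to\infty$ in the simultaneous process and using continuity of $h$, the pair $(\mathcal{D}_n,\bm{\chi}_n)$ satisfies $\mathcal{D}_n\supseteq\{i\,:\,\sum_{j\in\mathcal{D}_n}e_{j,i}\geq c_i-\ell_i-\bm{x}_i\cdot h(\bm{\chi}_n)\}$ and $\bm{\chi}_n\geq n^{-1}\sum_i\bm{x}_i\circRho\bigl((\sum_{j\in\mathcal{D}_n}e_{j,i}+\ell_i+\bm{x}_i\cdot h(\bm{\chi}_n))/c_i\bigr)$; the appearance of $\circRho$ rather than $\rho$ here is because $\bm{\sigma}_{(k-1)}$ approaches its limit from below and $\rho$ is only right-continuous. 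In the inductive step for (i'), I would compute $\hat{\mathcal{D}}_k$ by a Kleene iteration as the monotone limit of $\mathcal{A}_0=\emptyset$ and $\mathcal{A}_l=\{i\,:\,\sum_{j\in\mathcal{A}_{l-1}}e_{j,i}>c_i-\ell_i-\bm{x}_i\cdot h(\hat{\bm{\chi}}_{k-1})\}$, and then show by a nested induction in $l$ that $\mathcal{A}_l\subseteq\mathcal{D}_n$: the strict inequality in the definition of $\mathcal{A}_l$ combined with the induction hypotheses and monotonicity of $h$ yields the weak inequality needed to place $i$ into $\mathcal{D}_n$. Step (ii') is treated by the analogous Kleene iteration for $\hat{\bm{\chi}}_k$, started at $\bm{0}$ and iterated via the operator on the right-hand side of the defining equation with $\circRho$, using the preparatory inequality on $\bm{\chi}_n$ to close the induction. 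The bound $\hat{\mathcal{S}}_n\leq\mathcal{S}_n$ then follows from $s_i\geq 0$.

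The main obstacle, and the reason for introducing the two different bracketing processes in the first place, is the discontinuity of $\rho$: a right-continuous $\rho$ together with the monotone-from-below nature of the simultaneous process creates an asymmetry between the upper and the lower side, which is precisely bridged by replacing $\rho$ with $\circRho$ and by replacing the weak inequality ``$\geq$'' with the strict inequality ``$>$'' in the definition of $(\hat{\mathcal{D}}_n,\hat{\bm{\chi}}_n)$. Once the correct limiting identity for $(\mathcal{D}_n,\bm{\chi}_n)$ is extracted and the strict-versus-weak inequality bookkeeping is done carefully, the monotonicity of all operators involved reduces both inductions to straightforward computations.
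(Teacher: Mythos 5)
Your overall approach matches the paper's: both the upper and the lower bound are obtained by inducting on rounds of the bracketing processes and comparing against the simultaneous process, exactly as the paper indicates when it says the bounds follow ``by similar means as in the proof of Lemma~\ref{5:lem:cont:rho}.'' The upper-bound part of your argument is correct as written.

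There is, however, a small but genuine bookkeeping error in your lower-bound preparatory observation. You claim that, by passing $k\to\infty$ in the simultaneous process,
\[
\mathcal{D}_n \supseteq \Big\{i : \textstyle\sum_{j\in\mathcal{D}_n}e_{j,i}\geq c_i-\ell_i-\bm{x}_i\cdot h(\bm{\chi}_n)\Big\}.
\]
This inclusion is false in general. If $i\notin\mathcal{D}_n$, then for all large $k$ (once $\mathcal{D}_{(k-1)}=\mathcal{D}_n$) one has $\sum_{j\in\mathcal{D}_n}e_{j,i}<c_i-\ell_i-\bm{x}_i\cdot h(n^{-1}\bm{\sigma}_{(k-1)})$, and passing to the limit in $k$ using continuity of $h$ only yields the \emph{weak} inequality $\sum_{j\in\mathcal{D}_n}e_{j,i}\leq c_i-\ell_i-\bm{x}_i\cdot h(\bm{\chi}_n)$. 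Thus an institution sitting exactly on the default threshold may lie in the right-hand set while never entering any $\mathcal{D}_{(k)}$ because $\bm{\sigma}_{(k)}$ approaches $n\bm{\chi}_n$ only from below. This is the exact same ``from below'' phenomenon that forces you to replace $\rho$ with $\circRho$ in the sale equation, so the indicator for default needs the analogous strict-inequality modification: the correct statement is
\[
\mathcal{D}_n \supseteq \Big\{i : \textstyle\sum_{j\in\mathcal{D}_n}e_{j,i}> c_i-\ell_i-\bm{x}_i\cdot h(\bm{\chi}_n)\Big\}.
\]
Your Kleene step actually produces a strict inequality (since $\sum_{j\in\mathcal{A}_{l-1}}e_{j,i}>c_i-\ell_i-\bm{x}_i\cdot h(\hat{\bm{\chi}}_{k-1})\geq c_i-\ell_i-\bm{x}_i\cdot h(\bm{\chi}_n)$), so once you replace the weak criterion by the strict one, the nested induction closes. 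But your phrase ``yields the weak inequality needed to place $i$ into $\mathcal{D}_n$'' shows the strict-versus-weak bookkeeping has slipped at precisely the point where the strictness is the whole reason $\hat{\mathcal{D}}_n$ is defined with a strict inequality in (i') to begin with. Fixing the sign in the preparatory observation and carrying the strictness through the Kleene step repairs the argument.
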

The case that $\mathcal{D}_n \subsetneq \overline{\mathcal{D}}_n$ or $\bm{\chi}_n < \overline{\bm{\chi}}_n$ can happen if in the contagion process the sold shares converge to a vector that would be large enough to cause new defaults or trigger further asset sales but is actually never reached in finitely many steps. Then the process converges to a \emph{non-equilibrium} state. As for real financial systems the least possible number of sold shares in each round is lower bounded by $1$, this can actually never happen and for all practical purposes the final set of defaulted institutions is given by $\overline{\mathcal{D}}_n$, their caused systemic damage by $\overline{\mathcal{S}}_n$ and the vector of finally sold shares is given by $\overline{\bm{\chi}}_n$.

\section{The Stochastic Model}\label{5:asymp:res:sys}
In the previous section, we considered the combined contagion process of fire sales and default contagion on any explicitly given financial system. In this section, we go one step further and analyze a whole ensemble of systems simultaneously that share certain statistical characteristics. This will ultimately allow us to understand which system structures promote global contagion or contain it locally.

In a first step, we therefore replace the actual $e_{i,j}$ from the previous section by a random sample instead of using the actual observed edges/exposures. As in Chapter \ref{chap:block:model} we let \mbox{$R\in\N$} the maximal exposure between two institutions and we define a probability measure $\P$ on $\{0,\ldots,R\}^{\abs{E}}$, where $E$ is the set of possible directed edges $E:=\{ (i,j)\in[n]^2 \,:\, i\neq j  \}$. This approach has several advantages:
\begin{enumerate}
\item The network of exposures can change significantly on a microscopic level but as empirical studies show, the global statistics are reasonably stable (see e.\,g.~\cite{Cont2013}).
\item Often only the aggregated exposures  $\sum_{j\in [n]} e_{i,j}$ 
are available to the regulator. Since the individual exposures are unknown it is thus advisable to use the information available and consider probabilistic samples. Ideally one obtains results that hold for all possible realizations. 
\item A random network is analytically more tractable and provides more understanding of the impact of the network characteristics on the combined fire sales and contagion process.
\end{enumerate}
Our choice of $\P$ has to be such that the generated networks share the characteristic of the observed financial network. The actual network is thus replaced by a random network that looks very similar to the original network but has better analytic properties. 

As in Chapter \ref{chap:block:model}, we assume the global financial system to be composed of $T\in\N$ subsystems of different institution types. 
Then for each institution $i\in[n]$ we replace the direct exposures $\{e_{i,j}\}_{j\in[n]}\subset\R_{+,0}$ from Subsection \ref{5:ssec:model:parameters} by
\begin{enumerate}[leftmargin=*,label=\quad 5'.\,(\alph*)]
\item {\bf An institution-type $\alpha_i\in[T]$}: This parameter allocates institution $i$ to a certain subsystem such as country or core/periphery. 
\item {\bf A vector of in-weights $\bm{w}_i^-
\in\R_{+,0}^{[R]\times[T]}$}: The in-weight $w_i^{-,r,\alpha}$ describes the tendency of institution $i$ to be exposed to an institution of type $\alpha$ with an exposure of size $r$.
\item {\bf A vector of out-weights $\bm{w}_i^+
\in\R_{+,0}^{[R]\times[T]}$}: The out-weight $w_i^{+,r,\alpha}$ describes the tendency of institutions of type $\alpha$ to be exposed to $i$ with an exposure of size $r$.
\end{enumerate}
The occurrence of an edge of multiplicity $r\in[R]$ going from $i$ to $j$ is then modeled by a Bernoulli random variable $X_{i,j}^r$ with success probability
\begin{equation}\label{5:edge:prob}
p_{i,j}^r := \begin{cases}\min\left\{R^{-1},n^{-1}w_i^{+,r,\alpha_j}w_j^{-,r,\alpha_i}\right\},&\text{if }i\neq j,\\0,&\text{if }i=j,\end{cases} 
\end{equation}
such that $\sum_{r\in[R]}X_{i,j}^r\leq 1$ and $X_{i_1,j_1}^{r_1}\perp X_{i_2,j_2}^{r_2}$ for all $r_1,r_2$ and $(i_1,j_1)\neq (i_2,j_2)$. 

Now consider a collection of financial systems with varying size $n$. We want to ensure that their statistical characteristics measured by means of the empirical distribution functions stabilize as $n\to\infty$. Moreover, we want to prohibit that exposures or asset holdings condense in one institution. 

\begin{assumption}[Regular Vertex Sequence]\label{5:ass:regular:vertex:sequence}
Let $M\in\N$. For each $n\in\N$ consider a system with $n$ institutions and $M$ assets specified by the sequences $\bm{w}^-(n)=(\bm{w}_i^-(n))_{i\in[n]}$ of in-weights, $\bm{w}^+(n)=(\bm{w}_i^+(n))_{i\in[n]}$ of out-weights
, $\bm{x}(n)=(\bm{x}_i(n))_{i\in[n]}$ of asset holdings, \mbox{$\bm{s}(n)=(s_i(n))_{i\in[n]}$} of systemic importance values, $\bm{c}(n)=(c_i(n))_{i\in[n]}$ of capitals, \mbox{$\bm{\ell}(n)=(\ell_i(n))_{i\in[n]}$} of exogenous losses and $\bm{\alpha}(n)=(\alpha_i(n))_{i\in[n]}$ of institution types. Then the following shall hold:
\begin{enumerate}[(a)]
\item \textbf{Convergence in distribution:} For each $n\in\N$ let the random empirical distribution function of the system parameters be denoted by
\begin{align*}
&F_n(\bm{w}^-,\bm{w}^+,\bm{x},s,c,\ell,\alpha)\\
&\hspace{2.5cm}:= n^{-1} \sum_{i\in[n]} \prod_{r\in[R],\beta\in[T]}\1\left\{w_i^{-r,\beta}\leq w^{-,r,\beta}, w_i^{+,r,\beta}\leq w^{+,r,\beta}\right\}\\
&\hspace{5cm} \times \prod_{m\in[M]} \1\left\{x_i^m\leq x^m\right\} \1\{s_i\leq s, c_i\leq c, \ell_i\leq \ell, \alpha_i\leq \alpha\},
\end{align*}
for $\bm{w}^-,\bm{w}^+\in\R_{+,0}^{[R]\times[T]}$, $\bm{x}\in\R_{+,0}^M$, $s\in\R_{+,0}$, $c\in\R_{+,0,\infty}$, $\ell\in\R_{+,0}$ and $\alpha\in[T]$. Let in the following $(\bm{W}^-_n,\bm{W}^+_n,\bm{X}_n, S_n, C_n, L_n, A_n)$ denote a random vector distributed according to $F_n$. Then there exists a distribution function $F$ such that
\[ F_n(\bm{w}^-,\bm{w}^+,\bm{x},s,c,\ell,\alpha;\bm{\chi})\to F(\bm{w}^-,\bm{w}^+,\bm{x},s,c,\ell,\alpha;\bm{\chi}),\quad\text{as }n\to\infty, \]
at all continuity points of $F_{\alpha}(\bm{w}^-,\bm{w}^+,\bm{x},s,c,\ell):=F(\bm{w}^-,\bm{w}^+,\bm{x},s,c,\ell,\alpha)$.
\item \textbf{Convergence of means:} Denote by $(\bm{W}^-,\bm{W}^+,\bm{X},S,C,L,A)$ a random vector distributed according to the limiting distribution $F$. Then as $n\to\infty$,
\[ \E[W_n^{-,r,\alpha}]\to\E[W^{-,r,\alpha}]<\infty,\quad\E[W_n^{+,r,\alpha}]\to\E[W^{+,r,\alpha}]<\infty,\quad\text{for all }r\in[R],~\alpha\in[T],\]
\[ \E[S_n]\to\E[S]<\infty\quad\text{and}\quad\E[X_n^m]\to\E[X^m]<\infty,\quad \text{for all }m\in[M]. \]
\end{enumerate}
\end{assumption}
Let $V=[R]\times[T]^2$. Define now for $\bm{z}\in\R_{+,0}^V$ and $\bm{\chi}\in\R_{+,0}^M$,
\begin{align*}
g(\bm{z},\bm{\chi}) &:= \sum_{\beta\in[T]}\E\Bigg[S\psi\Bigg( Y_{1,\beta},\ldots,Y_{R,\beta}; C-L-\bm{X}\cdot h(\bm{\chi}) \Bigg)\1\{A=\beta\}\Bigg],\\
f^{r,\alpha,\beta}(\bm{z},\bm{\chi}) &:= \E\Bigg[W^{+,r,\alpha}\psi\Bigg( Y_{1,\beta},\ldots,Y_{R,\beta}; C-L-\bm{X}\cdot h(\bm{\chi}) \Bigg)\1\{A=\beta\}\Bigg] - z^{r,\alpha,\beta},~(r,\alpha,\beta)\in V,\\
f^m(\bm{z},\bm{\chi}) &:= \sum_{\beta\in[T]}\E\Bigg[X^m\phi\Bigg( Y_{1,\beta},\ldots,Y_{R,\beta}; L+\bm{X}\cdot h(\bm{\chi}),C \Bigg)\1\{A=\beta\}\Bigg] - \chi^m,~ m\in[M],
\end{align*}
where we abbreviate
\[ Y_{r,\beta}:= \sum_{\gamma\in[T]}W^{-,r,\gamma}z^{r,\beta,\gamma},\quad r\in[R],~\beta\in[T], \]
and for $\{q_s\}_{s\in[R]}\subset\R_{+,0}$ and independent $Q_s\sim\mathrm{Poi}(q_s)$, $s\in[R]$,
\[ \psi(q_1,\ldots,q_R;t) := \P\left(\sum_{s\in[R]}sX_s \geq t\right), \]
respectively
\[ \phi(q_1,\ldots,q_R;\ell,c) := \E\left[ \rho\left( \frac{\sum_{s\in[R]}sX_s + \ell}{c} \right) \right]. \]
Let us give an intuitive explanation for the functions $g$, $f^{r,\alpha,\beta}$ and $f^m$ first. For this we consider the special case $R=T=1$ and start looking at the fire sales and the default contagion process separately. 

We start with the default contagion process. Heuristically, for an externally given vector of asset sales $\bm{\chi}$, the function $f^{1,1,1}(\cdot,\bm{\chi})$ describes (in the limit $n\rightarrow \infty$) the intensity of the default contagion process over time. Here time refers to steps in a sequential analysis of the process which leads to the same set of defaulted institutions. Let now $\bar{z}\in[0,\E[W^+]]$ denote the total out-weight of finally defaulted banks divided by $n$. Then by the specification of $p_{i,j}$ for any fixed bank $i\in[n]$ the number of incoming edges (exposures) from finally defaulted banks is given by a random variable $\mathrm{Poi}(w_i^{-,1,1}\bar{z})$. Institution $i$ is hence finally defaulted itself if and only if $\mathrm{Poi}(w_i^{-,1,1}\bar{z})\geq c_i- \ell_i-\bm{x_i}\cdot h(\bm{\chi})$. Summing over all banks in the system we thus derive the following identity:
\[  \bar{z} = n^{-1} \sum_{i \in [n]}   w_i^{+,1,1}  \1\big\{\mathrm{Poi}(w_i^{-,1,1}\bar{z}) \geq c_i - \ell_i - \bm{x}_i\cdot h(\bm{\chi}) \big\}\approx   \E\big[W^{+,1,1}\psi(W^{-,1,1}\bar{z};C-L-\bm{X}\cdot h(\bm{\chi}))\big], \]
and therefore $f^{1,1,1} (\bar{z}, \bm{\chi})=0$. Further, the damage by finally defaulted banks is then given by 
\[ n^{-1} \sum_{i \in [n]} s_i \1\big\{\ell_i+\mathrm{Poi}(w_i^{-,1,1}\bar{z}) + \bm{x}_i\cdot h(\bm{\chi})\geq c_i\big\}\approx\E\big[S\psi(W^{-,1,1}\bar{z},C-L-\bm{X}\cdot h(\bm{\chi}))\big] =g(\bar{z},\bm{\chi} ). \]
Hence if fire sales are ignored, meaning the initial capital is simply reduced by a fixed amount accounting for some externally given sales vector $\bm{\chi}$, then in order to get the final state of the system we only need to determine the (first) root $\bar{z}$ of $f^{1,1,1}(\cdot, \bm{\chi})$ and plug it into $g(\cdot,\bm{\chi} )$. 

Let us now look at the fire sales system with an externally given contagion result. For the case of one asset with label $1$, fixing the sum of the out-weights of defaulted institutions, divided by $n$ to be $z$, then the loss institution $i$ receives due to the liabilities to defaulted banks is described by the random quantity $\mathrm{Poi}(w_i^{-,1,1}z)$ and, for continuous $\rho$, similarly as in the derivation of Lemma \ref{5:lem:cont:rho}, the number of finally sold shares $\bar{\chi}$ solves
\begin{equation}
\bar{\chi} = n^{-1}\sum_{i\in[n]}x_i^1\rho\left(\frac{\ell_i+\mathrm{Poi}(w_i^{-,1,1}z)+ x_i^1 h^1(\bar{\chi})}{c_i}\right)\approx\E\left[X^1 \phi(W^{-,1,1}z;L+X^1 h(\bar{\chi}),C)) \right]
\end{equation}
such that $\bar{\chi}$ is a root of $f^1 (z, \cdot )$. Moreover, the final systemic importance of defaulted institutions divided by $n$ is given by
\begin{equation}
n^{-1} \sum_{i\in[n] } s_i \1\big\{ \mathrm{Poi}(w_i^{-,1,1}z) + \ell_i+x_i^1 h^1(\bar{\chi}) \geq c_i\big\} \approx\E\big[S\psi(W^{-,1,1}z,C-L-X^1h(\bar{\chi}))\big] =g(z,\bar{\chi} ).
\end{equation}
So the root $\bar{\chi}$ of $f^1 (z, \cdot )$ determines the end of the process and again $g$ yields the damage by defaulted institutions.

These heuristics show that the joint fire sales and default contagion process should come to an end at a joint root of the functions $f^{r,\alpha,\beta}$, $(r,\alpha,\beta)\in V$, and $f^m, m\in [M]$. Under some circumstances, however, if the distribution of $(\bm{W}^-,\bm{W}^+,\bm{X},S,C,L,A)$ has atoms and the function $\rho$ is discontinuous, also the functions $f^{r,\alpha,\beta}$ and $f^m$ might be discontinuous. Similar as in the previous section, it is then in general not possible to determine the precise end state of the system. Still we will be able to derive lower bounds on the final default fraction and the vector of finally sold shares. To this end, 
define lower semi-continuous modifications of $g$, $f^{r,\alpha,\beta}$, $(r,\alpha,\beta)\in V$, and $f^m$, $m\in[M]$, by
\begin{align*}
\circG(\bm{z},\bm{\chi}) &:= \sum_{\beta\in[T]}\E\Bigg[S\circPsi\Bigg( Y_{1,\beta},\ldots,Y_{R,\beta}; C-L-\bm{X}\cdot h(\bm{\chi}) \Bigg)\1\{A=\beta\}\Bigg],\\
\circFSuper{r,\alpha,\beta}(\bm{z},\bm{\chi}) &:= \E\Bigg[W^{+,r,\alpha}\circPsi\Bigg( Y_{1,\beta},\ldots,Y_{R,\beta}; C-L-\bm{X}\cdot h(\bm{\chi}) \Bigg)\1\{A=\beta\}\Bigg] - z^{r,\alpha,\beta},~(r,\alpha,\beta)\in V,\\
\circFSuper{m}(\bm{z},\bm{\chi}) &:= \sum_{\beta\in[T]}\E\Bigg[X^m\circPhi\Bigg( Y_{1,\beta},\ldots,Y_{R,\beta}; L+\bm{X}\cdot h(\bm{\chi}),C \Bigg)\1\{A=\beta\}\Bigg] - \chi^m,~ m\in[M],
\end{align*}
where as before
\[ Y_{r,\beta}:= \sum_{\gamma\in[T]}W^{-,r,\gamma}z^{r,\beta,\gamma},\quad r\in[R],~\beta\in[T], \]
and for $\{q_s\}_{s\in[R]}\subset\R_{+,0}$ and independent $Q_s\sim\mathrm{Poi}(q_s)$, $s\in[R]$,
\[ \circPsi(q_1,\ldots,q_R;t) := \P\left(\sum_{s\in[R]}sX_s > t\right), \]
respectively
\[ \circPhi(q_1,\ldots,q_R;\ell,c) := \E\left[ \circRho\left( \frac{\sum_{s\in[R]}sX_s + \ell}{c} \right) \right] \]
for $\circRho(u):=\lim_{\epsilon\to0+}\rho((1-\epsilon)u)$. Further, let $\accentset{\circ}{P}_0$ and $P_0$ the largest connected subsets of
\begin{align*}
\accentset{\circ}{P} &:=\bigcap_{(r,\alpha,\beta)\in V} \big\{(\bm{z},\bm{\chi})\in\R_{+,0}^V\times\R_{+,0}^M\,:\,\circFSuper{r,\alpha,\beta}(\bm{z},\bm{\chi})\geq0\big\}\\
&\hspace{6cm}\cap \bigcap_{m\in[M]} \big\{(\bm{z},\bm{\chi})\in\R_{+,0}^V\times\R_{+,0}^M\,:\,\circFSuper{m}(\bm{z},\bm{\chi})\geq0\big\}
\end{align*}
respectively
\begin{align*}
P &:= \bigcap_{(r,\alpha,\beta)\in V} \big\{(\bm{z},\bm{\chi})\in\R_{+,0}^V\times\R_{+,0}^M\,:\,f^{r,\alpha,\beta}(\bm{z},\bm{\chi})\geq0\big\}\\
&\hspace{6cm}\cap \bigcap_{m\in[M]} \big\{(\bm{z},\bm{\chi})\in\R_{+,0}^V\times\R_{+,0}^M\,:\,f^m(\bm{z},\bm{\chi})\geq0\big\}
\end{align*}
that contain $(\bm{0},\bm{0})$ (note that $f^{r,\alpha,\beta}(\bm{0},\bm{0})\geq\circFSuper{r,\alpha,\beta}(\bm{0},\bm{0})\geq0$ for all $(r,\alpha,\beta)\in V$ as well as $f^m(\bm{0})\geq\circFSuper{m}(\bm{0})\geq0$ for all $m\in[M]_0$ and thus $\bm{0}\in \circP$ and $\bm{0}\in P$). We will later make use of the fact that $P$ and $P_0$ are clearly closed sets. Finally, define $\bm{z}^*\in\R_{+,0}^V$ and $\bm{\chi}^*\in\R_{+,0}^M$ by $(z^*)^{r,\alpha,\beta}:=\sup_{(\bm{z},\bm{\chi})\in P_0}z^{r,\alpha,\beta}$ and $(\chi^*)^m:=\sup_{(\bm{z},\bm{\chi})\in P_0}\chi^m$. Then the following holds:
\begin{lemma}\label{5:lem:existence:z:chi:hat}
There exists a smallest joint root $(\hat{\bm{z}},\hat{\bm{\chi}})$ of 
all the functions $\circFSuper{r,\alpha,\beta}$, $(r,\alpha,\beta)\in V$, and $\circFSuper{m}$, $m\in[M]$
. It holds $(\hat{\bm{z}},\hat{\bm{\chi}})\in\accentset{\circ}{P}_0$. Further, $(\bm{z}^*,\bm{\chi}^*)$ as defined above is a joint root of the functions 
$f^{r,\alpha,\beta}$, $f^m$ and $(\bm{z}^*,\bm{\chi}^*)\in P_0$.
\end{lemma}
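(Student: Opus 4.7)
The plan is to adapt the strategies from the proofs of Lemma \ref{3:lem:existence:hatz} and Lemma \ref{4:lem:existence:chi:hat}, handling the $\bm{z}$- and $\bm{\chi}$-coordinates simultaneously. For the existence of a smallest joint root $(\hat{\bm{z}},\hat{\bm{\chi}})$ of the family $\{\circFSuper{r,\alpha,\beta}\}\cup\{\circFSuper{m}\}$, first I would invoke the Knaster-Tarski theorem. Note that each of the functions $\circFSuper{r,\alpha,\beta}+z^{r,\alpha,\beta}$ and $\circFSuper{m}+\chi^m$ is componentwise non-decreasing in every coordinate (this uses monotonicity of $\circPsi$ in each $q_r$, of $\circPhi$ in each $q_r$ via $\circRho$, and of $h^m$ in each $\chi^m$). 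Thus the map on the compact rectangle $[\bm{0},\bm{\zeta}]\times[\bm{0},\E[\bm{X}]]$, where $\zeta^{r,\alpha,\beta}:=\E[W^{+,r,\alpha}\1\{A=\beta\}]$, that sends $(\bm{z},\bm{\chi})$ to $(\circFSuper{r,\alpha,\beta}(\bm{z},\bm{\chi})+z^{r,\alpha,\beta},\circFSuper{m}(\bm{z},\bm{\chi})+\chi^m)_{r,\alpha,\beta,m}$ is monotone, and by Knaster-Tarski admits a smallest fixed point, which is the desired smallest joint root.

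To see that this smallest joint root lies in $\accentset{\circ}{P}_0$ I would cycle through the coordinates in analogy to the construction in Lemma \ref{4:lem:existence:chi:hat}: starting from $(\bm{0},\bm{0})\in\accentset{\circ}{P}_0$, enumerate the coordinates in $V\cup[M]$ periodically and, at each step, move only the currently chosen coordinate upward to the smallest value at which the corresponding $\circFSuper{}$ function equals zero (which exists in $[0,\hat{z}^{r,\alpha,\beta}]$ resp.\ $[0,\hat{\chi}^m]$ by intermediate-value reasoning: each $\circFSuper{}$ is non-positive at $\hat{\bm{z}},\hat{\bm{\chi}}$ by monotonicity and starts non-negative). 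By monotonicity of all other $\circFSuper{}$'s in that coordinate, the iterate stays in $\accentset{\circ}{P}$, and by connectedness even in $\accentset{\circ}{P}_0$. The sequence is non-decreasing componentwise and bounded by $(\hat{\bm{z}},\hat{\bm{\chi}})$, so a limit $(\bar{\bm{z}},\bar{\bm{\chi}})\in\accentset{\circ}{P}_0$ exists (closedness of $\accentset{\circ}{P}_0$ follows from lower semi-continuity of $\circFSuper{}$'s). Lower semi-continuity then gives $\circFSuper{}(\bar{\bm{z}},\bar{\bm{\chi}})\le 0$ on the cycled coordinate in each period, while membership in $\accentset{\circ}{P}_0$ yields $\ge 0$, so $(\bar{\bm{z}},\bar{\bm{\chi}})$ is a joint root; minimality of $(\hat{\bm{z}},\hat{\bm{\chi}})$ forces equality. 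Hence $(\hat{\bm{z}},\hat{\bm{\chi}})\in\accentset{\circ}{P}_0$.

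For the statement about $(\bm{z}^*,\bm{\chi}^*)$ my plan is to approximate by perturbations. For $\epsilon>0$ define $P(\epsilon)$ as the set where all $f^{r,\alpha,\beta}\ge-\epsilon$ and all $f^m\ge-\epsilon$, and let $P_0(\epsilon)$ be its connected component containing $(\bm{0},\bm{0})$; $P_0(\epsilon)$ is closed (by upper semi-continuity of $f^{r,\alpha,\beta}$ and $f^m$) and compact, and decreases in $\epsilon$. Running the above coordinate-cycling construction inside $P_0(\epsilon)$ produces a smallest point $(\hat{\bm{z}}(\epsilon),\hat{\bm{\chi}}(\epsilon))\in P_0(\epsilon)$ at which all perturbed functions equal $-\epsilon$. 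This point is non-decreasing as $\epsilon\downarrow 0$, and I would denote its limit by $(\tilde{\bm{z}},\tilde{\bm{\chi}})$. Since each $P_0(\epsilon)$ is closed, $(\tilde{\bm{z}},\tilde{\bm{\chi}})\in\bigcap_{\epsilon>0}P_0(\epsilon)$. The intersection of the decreasing chain of compact connected sets $\{P_0(\epsilon)\}_{\epsilon>0}$ in the Hausdorff space $\R^V\times\R^M$ is itself compact and connected, contains $(\bm{0},\bm{0})$, and is contained in $P$ (by upper semi-continuity, $f^{r,\alpha,\beta}(\bm{z},\bm{\chi})\ge -\epsilon$ for all $\epsilon>0$ implies $f^{r,\alpha,\beta}(\bm{z},\bm{\chi})\ge 0$, and likewise for $f^m$), hence contained in $P_0$, so $(\tilde{\bm{z}},\tilde{\bm{\chi}})\in P_0$.

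It remains to show $(\tilde{\bm{z}},\tilde{\bm{\chi}})=(\bm{z}^*,\bm{\chi}^*)$ and that this common point is a joint root of the $f^{r,\alpha,\beta}$ and $f^m$. For any $(\bm{z},\bm{\chi})\in P_0$ I would argue by contradiction that $(\bm{z},\bm{\chi})\le(\hat{\bm{z}}(\epsilon),\hat{\bm{\chi}}(\epsilon))$ componentwise: if not, connectedness of $P_0$ together with $(\bm{0},\bm{0})\in[\bm{0},(\hat{\bm{z}}(\epsilon),\hat{\bm{\chi}}(\epsilon))]$ produces a boundary point $(\bar{\bm{z}},\bar{\bm{\chi}})\in P_0$ with at least one coordinate equal to its cap; monotonicity of the corresponding $f^{r,\alpha,\beta}$ (or $f^m$) in all other coordinates then yields $f^{\cdot}(\bar{\bm{z}},\bar{\bm{\chi}})\le f^{\cdot}(\hat{\bm{z}}(\epsilon),\hat{\bm{\chi}}(\epsilon))=-\epsilon<0$, contradicting $(\bar{\bm{z}},\bar{\bm{\chi}})\in P_0$. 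Letting $\epsilon\to 0$ gives $(\bm{z},\bm{\chi})\le(\tilde{\bm{z}},\tilde{\bm{\chi}})$, so by the definition of $(\bm{z}^*,\bm{\chi}^*)$ we obtain $(\tilde{\bm{z}},\tilde{\bm{\chi}})=(\bm{z}^*,\bm{\chi}^*)\in P_0$. Finally, if some $f^{r,\alpha,\beta}(\bm{z}^*,\bm{\chi}^*)>0$ (or some $f^m(\bm{z}^*,\bm{\chi}^*)>0$), then by monotonicity of all other $f^{\cdot}$'s in that coordinate, one could gradually increase it while remaining in $P_0$, contradicting the maximality of $\bm{z}^*$ or $\bm{\chi}^*$; hence $(\bm{z}^*,\bm{\chi}^*)$ is a joint root.

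I expect the main obstacle to lie in handling the discontinuities of $f^{r,\alpha,\beta}$ and $f^m$ correctly when $\rho$ is only right-continuous. In Lemma \ref{4:lem:existence:chi:hat} this was circumvented by approximating $\rho$ from above by continuous sale functions and showing $P=\bigcap_r P^r$, $P_0=\bigcap_r P_0^r$. The same device would be needed here for the $f^m$-functions, while the $f^{r,\alpha,\beta}$-part is automatic since $\psi$ is already upper semi-continuous. Once this approximation argument is in place, the monotonicity-plus-connectedness argument above carries over verbatim.
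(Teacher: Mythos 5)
Your proposal is correct and follows exactly the route the paper has in mind: the paper does not spell out a separate proof for Lemma \ref{5:lem:existence:z:chi:hat} but simply states that it is analogous to Lemma \ref{4:lem:existence:chi:hat}, and your adaptation (Knaster--Tarski on the compact rectangle $[\bm{0},\bm{\zeta}]\times[\bm{0},\E[\bm{X}]]$ for the smallest joint root, coordinate-cycling to place it in $\accentset{\circ}{P}_0$, perturbed cycling combined with the nested-intersection/connectedness argument for $(\bm{z}^*,\bm{\chi}^*)\in P_0$, and finally the ``increase one coordinate'' argument for the joint-root property) is precisely that intended extension. One small remark on your closing paragraph: the reason the $f^{r,\alpha,\beta}$-part does not require the $\rho$-approximation is not really that $\psi$ is upper semi-continuous in $\bm{\chi}$, but that the coordinate-cycling construction only ever needs an intermediate-value argument in a function's own coordinate, and $f^{r,\alpha,\beta}+z^{r,\alpha,\beta}$ depends on $z^{r,\alpha,\beta}$ only through the Poisson intensities (hence continuously), whereas $f^m+\chi^m$ depends on $\chi^m$ through the possibly discontinuous $\rho$; jumps of $f^{r,\alpha,\beta}$ in the $\bm{\chi}$-directions are harmless by monotonicity. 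This is a matter of phrasing, not a gap.
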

The proof is analogue to the one of Lemma \ref{4:lem:existence:chi:hat}.

We can then describe the final default fraction and the final number of sold shares asymptotically as $n\to\infty$ in terms of $(\hat{\bm{z}},\hat{\bm{\chi}})$ and $(\bm{z}^*,\bm{\chi}^*)$.
\begin{theorem}\label{5:thm:final:fraction:combined:general}
Consider a financial system that fulfills Assumption~\ref{5:ass:regular:vertex:sequence}. Then for the final systemic damage $n^{-1}\mathcal{S}_n$ and $\chi_n^m$, the number of finally sold shares of asset $m\in[M]$ divided by $n$, it holds
\[\begin{gathered}
\circG(\hat{\bm{z}},\hat{\bm{\chi}}) + o_p(1) \leq n^{-1}\mathcal{S}_n \leq g(\bm{z}^*,\bm{\chi}^*) + o_p(1),\\
\hat{\chi}^m+o_p(1) \leq \chi_n^m \leq (\chi^*)^m+o_p(1).
\end{gathered}\]
In particular, for the final price impact $h^m(\bm{\chi}_n)$ on asset $m\in[M]$ it holds
\[ h^m(\hat{\bm{\chi}})+o_p(1) \leq h^m(\bm{\chi}_n) \leq h^m(\bm{\chi}^*)+o_p(1). \]
\end{theorem}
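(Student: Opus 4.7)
The plan is to extend the two-step approach used for Theorem \ref{3:thm:general:weights} in Chapter \ref{chap:block:model} to the present combined model, incorporating the fire sales machinery from Chapter \ref{chap:fire:sales}. First I would prove the statement for \emph{finitary} systems, where the weight vectors $\bm{w}_i^\pm$, asset holdings $\bm{x}_i$, systemic importance $s_i$, capital $c_i$ and loss $\ell_i$ all take values in finite sets. Analogous to Proposition \ref{3:prop:finitary:weights}, I would analyze the cascade in \emph{sequential form}: alternately pick a defaulted but unexposed institution, expose all its outgoing edges (distributing multiplicity-$r$ exposures to its neighbors and possibly triggering new defaults), and simultaneously update the sold-shares vector via the fire sales rule (ii') in Subsection \ref{5:sec:cont:prozess}. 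Tracking the joint evolution of the number of unexposed institutions of each type, the cumulative out-weights emanating from defaulted institutions, the remaining-capital buckets, the cumulative systemic importance and the vector of sold shares gives a finite-dimensional Markov chain with Lipschitz expected increments (once $\sum_\beta \nu^\beta(\tau)$ stays bounded away from zero). Wormald's differential equation method then yields uniform approximation of these quantities by the unique solution of a system of ODEs whose implicit solution satisfies $f^{r,\alpha,\beta}(\bm{z}(\tau),\bm{\chi}(\tau))\geq 0$ and $\circFSuper{m}(\bm{z}(\tau),\bm{\chi}(\tau))\geq 0$ along the trajectory, so that $(\bm{z}(\tau),\bm{\chi}(\tau))\in\accentset{\circ}{P}_0$. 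Letting $\tau$ tend to the first time $\hat{\tau}$ at which all unexposed mass vanishes and exploiting closedness of $\accentset{\circ}{P}_0$ yields the lower bound $n^{-1}\mathcal{S}_n\geq\circG(\hat{\bm{z}},\hat{\bm{\chi}})+o_p(1)$ and analogous bounds for $\chi_n^m$.

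For the matching \emph{upper} bound in the finitary case, I would apply a small additional shock $\epsilon>0$ (for instance an independent ex-post loss on a fraction $\epsilon$ of institutions of each type) to smoothen the joint root structure, analogous to the proof of Theorem \ref{3:thm:finitary:weights}. Choosing a sequence $\epsilon_n\to 0$ along which the $\epsilon_n$-perturbed smallest joint root $\bm{z}^*(\epsilon_n)$ admits a strictly negative directional derivative (via the analogue of Lemma \ref{3:lem:z^*:epsilon}, using monotonicity and right-continuity of $\bm{z}^*,\bm{\chi}^*$ in $\epsilon$) reduces the problem to the situation where Wormald's trajectory necessarily terminates at $(\bm{z}^*(\epsilon_n),\bm{\chi}^*(\epsilon_n))$, plus an additional $o_p(1)$ for late-stage cascades that I would bound by an inductive geometric-series argument on $T_l^{r,\alpha,\beta}$ as in Proposition \ref{3:prop:finitary:weights}. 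Passing $\epsilon_n\to 0$ and using right-continuity of $\bm{\chi}^*(\epsilon)$ and upper semi-continuity of $g$ yields $n^{-1}\mathcal{S}_n\leq g(\bm{z}^*,\bm{\chi}^*)+o_p(1)$.

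The second step is to lift the result from finitary to general regular vertex sequences by two-sided approximation. Following the scheme of Subsection \ref{3:ssec:proof:main:general}, I would construct sequences $\{F_k^A\}_k$ and $\{F_k^B\}_k$ of finitary distributions discretizing the marginals on a grid of mesh $1/j_k$ and truncating at level $k$: for $F_k^A$ large institutions are neutralized (weights, asset holdings and importance set to zero, capital to $\infty$), while for $F_k^B$ large institutions are assigned aggregated weights $(\overline w_k^\beta)^{r,\alpha}$, aggregated importance $\overline s_k^\beta$, zero in-weights and zero capital so that the total outgoing mass they contribute dominates the original one. A coupling then yields, analogously to Lemma \ref{3:lem:stochastic:domination}, stochastic domination $n^{-1}\mathcal{S}_n^{A,k}\preceq n^{-1}\mathcal{S}_n\preceq n^{-1}\mathcal{S}_n^{B,k}$ and similarly for $\bm{\chi}_n$. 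Combining this with the finitary version of Theorem \ref{5:thm:final:fraction:combined:general}, the proof is concluded by the analogue of Lemma \ref{3:lem:convergence:g}: using uniform convergence of the integrals against $F_k^{A,B}$ on the compact parameter box $Z=[\bm{0},\bm{\zeta}]\times[\bm{0},\E[\bm{X}]]$ together with continuity of $g$ and $\circG$, one obtains $\liminf_{k\to\infty}\circG_k^A(\hat{\bm{z}}_k^A,\hat{\bm{\chi}}_k^A)\geq\circG(\hat{\bm{z}},\hat{\bm{\chi}})$ and $\limsup_{k\to\infty}g_k^B((\bm{z}^*)_k^B,(\bm{\chi}^*)_k^B)\leq g(\bm{z}^*,\bm{\chi}^*)$, with identical arguments for the $\bm{\chi}_n$ bounds.

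The main obstacle will be the genuine coupling of the discrete default-contagion and the continuous fire-sales dynamics in the Wormald step: unlike the pure default-contagion setting of Chapter \ref{chap:block:model} or the pure fire-sales setting of Chapter \ref{chap:fire:sales}, here each sequential default triggers an entire inner fire-sales cascade (step (ii') of Subsection \ref{5:sec:cont:prozess}) that must be run to completion before the next default is exposed, and the capital buckets $\gamma^\beta_{j,m,l}(\tau)$ have to be re-discretized to accommodate the continuous loss contribution $\bm{x}_i\cdot h(\bm{\chi}(\tau))$ from price impact. I would handle this by discretizing the loss variable on the same $1/j_k$-grid as the other parameters in the finitary reduction, so that within the sequential process the price-impact contribution moves institutions between finitely many buckets and the transition kernel remains Lipschitz, after which Wormald's theorem applies as in Proposition \ref{3:prop:finitary:weights}. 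Possible discontinuities of $\rho$ (and hence of $f^m$) are absorbed by the $\circFSuper{m}$-versus-$f^m$ asymmetry in the bounds, exactly as the analogous $\circFSuper{r,\alpha,\beta}$-versus-$f^{r,\alpha,\beta}$ asymmetry was exploited in Chapter \ref{chap:fire:sales}.
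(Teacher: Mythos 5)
Your two-step architecture (finitary reduction, then two-sided approximation by finitary vertex sequences) matches the paper's. The second step is essentially the paper's route: stochastic domination via $F_k^A$, $F_k^B$ as in Lemma \ref{5:lem:stochastic:domination} and a convergence argument as in Lemma \ref{5:lem:convergence:g} -- though note that $g$ and $\circG$ are only upper resp.\ lower semi-continuous, not continuous, so the passage to the limit there requires the Portmanteau-type argument with approximating indicators $I_s(\bm z,\bm\chi)$ rather than plain uniform convergence.

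The genuine gap is in the finitary step. You propose a \emph{single unified Wormald analysis} of a sequential process that, at each step, exposes one defaulted institution and ``simultaneously'' runs the fire-sales fixed point (ii$'$) to completion before exposing the next institution. This breaks the hypotheses of Wormald's theorem: the embedded fire-sales fixed point is an inner cascade of potentially unbounded length, so the state variables (buckets $\gamma_{j,m,l}^\beta$, the vector $\bm\chi$) can jump by a macroscopic amount in a single outer step, and the trend function is no longer a Lipschitz (or even well-defined, single-valued) function of the current state alone -- it is itself the output of a fixed-point problem. You acknowledge the difficulty, but the proposed remedy (discretizing the price-impact contribution on the same $1/j_k$-grid as the other parameters) does not remove it, because $\rho$ may be discontinuous and a grid step of the loss variable can still trigger a positive fraction of institutions to sell at once, which is precisely the macroscopic inner cascade you need to avoid. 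What the paper does instead is \emph{not} run any unified Wormald. It alternates between a full default-contagion cascade (applying the already proved Theorem \ref{3:thm:general:weights} as a black box for fixed $\bm\chi$) and a full fire-sales cascade (applying Theorem \ref{4:thm:fire:sale:final:fraction} as a black box for the resulting loss distribution), and it controls the number of such alternations by the structural observation that for a finitary system the functions $f^{r,\alpha,\beta}$ are discontinuous in $\bm\chi$ only along the finitely many hyperplanes $\{\tilde c_j-\tilde\ell_j-\tilde{\bm x}_j\cdot h(\bm\chi)\in\N\}$, $j\in[J]$; between consecutive hyperplanes the default-contagion root $\hat{\bm z}_t$ cannot increase, so the iteration $(\hat{\bm z}_t,\hat{\bm\chi}_t)$ reaches $(\hat{\bm z},\hat{\bm\chi})$ after boundedly many alternations. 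Your proposal misses both this modular black-box structure and the finite-hyperplane argument that makes the alternation terminate, and without them the finitary lower bound (and by the same token the upper bound, where the paper iterates with $\mathcal D^\delta$, $\bm\chi^\delta$ in place of the hats) is not established.

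A secondary but related issue: your upper-bound sketch leans on a monotone $\epsilon$-perturbation of $(\bm z^*(\epsilon),\bm\chi^*(\epsilon))$ together with a geometric-series late-stage bound in the style of Theorem \ref{3:thm:finitary:weights}; that machinery was designed for the pure default-contagion process and does not directly extend to the coupled system, because the ``late-stage'' rounds now also involve fire sales, whose contribution is not controlled by the out-weight bookkeeping $T_l^{r,\alpha,\beta}$. The paper sidesteps this by applying Theorem \ref{3:thm:general:weights} to the system with capital preconditioned by $\bm x_i\cdot h(\tilde{\bm\chi}(\delta))$, comparing the resulting $\mathcal D_k$, $\bm\chi_k$ inductively to $\mathcal D^\delta$, $\bm\chi^\delta$, and again invoking finitariness to terminate the induction.
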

In most cases, $(\hat{\bm{z}},\hat{\bm{\chi}})$ and $(\bm{z}^*,\bm{\chi}^*)$ will coincide and $\circG(\hat{\bm{z}},\hat{\bm{\chi}})=g(\bm{z}^*,\bm{\chi}^*)$. Theorem \ref{5:thm:final:fraction:combined:general} then describes the limits in probability of $\bm{\chi}_n$ and $n^{-1}\mathcal{S}_n$ for $n\to\infty$.


\section{Resilient and Non-resilient Systems}\label{5:sec:resilience}
Our results from the previous section allow us to compute the final state of a system that was initially hit by some exogenous shock starting a cascade of default contagion and fire sales. We shall now go one step further and describe the vulnerability of an \emph{initially} unshocked system to small shocks. We achieve this goal by considering shocks $L$ of different magnitude on the same initially unshocked system described by $(\bm{W}^-,\bm{W}^+,\bm{X},S,C,A)$. In the following, if we use the notations $g$, $\circG$, $\bm{z}^*$ and $\bm{\chi}^*$ from Section \ref{5:asymp:res:sys} we mean the unshocked system with $L\equiv0$.

Our notion of resilience extends the one for pure fire sales systems in Chapter \ref{chap:fire:sales} and this section heavily borrows from Section \ref{4:sec:resilience}. 

\subsection{Resilience}\label{5:ssec:resilience}
When it comes to regulation of a financial system, one desirable property is the capability to absorb local shocks rather than amplify them through large parts of the system. In our asymptotic model we can consider arbitrarily small shocks $L$ and the following natural notion of resilience emerges: when considering initial shocks $L$ such that $\E[L/C]\to0$, then a system is called \emph{resilient} if also the induced asymptotic final damage $n^{-1}\mathcal{S}_{n,L}$ tends to $0$.

\begin{definition}[Resilience]\label{5:def:resilience}
A financial system $(\bm{W}^-,\bm{W}^+,\bm{X},S,C,A)$ is said to be \emph{resilient} if for each $\epsilon>0$ there exists $\delta>0$ such that for all $L$ with $\E[L/C]<\delta$ it holds $n^{-1}\mathcal{S}_{n,L} \leq \epsilon$ with high probability.
\end{definition}
While this definition (and Corollary \ref{5:cor:resilience} below) is concerned with the final systemic damage only, the following theorem also investigates
the number of sold shares of the assets (and hence the price impacts which also affect the wider economy) in the limit $\E[L/C]\to0$.


\begin{theorem}\label{5:thm:resilience}
For each $\epsilon>0$ there exists $\delta>0$ such that for all $L$ with $\E[L/C]<\delta$ it holds for the final damage by defaulted institutions $n^{-1}\mathcal{S}_{n,L}$ and the number $n\chi_{n,L}^m$ of finally sold shares of each asset $m\in[M]$ in the shocked system that w.\,h.\,p.
\[ n^{-1}\mathcal{S}_{n,L}\leq g(\bm{z}^*,\bm{\chi}^*)+\epsilon \quad\text{and}\quad \chi_{n,L}^m\leq(\chi^*)^m+\epsilon,\quad m\in[M]. \]
\end{theorem}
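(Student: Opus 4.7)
The plan is to adapt the strategy from the proof of Theorem~\ref{4:thm:resilience} to the combined setting, using the multi-dimensional analogue of the construction in Lemma~\ref{4:lem:existence:chi:hat} and Remark~\ref{4:rem:sequence:chi:*}. By the same iterative construction as used there (but applied to the larger system of functions $f^{r,\alpha,\beta}$, $(r,\alpha,\beta)\in V$, and $f^m$, $m\in[M]$), one obtains for each $\gamma>0$ a pair $(\tilde{\bm{z}}(\gamma),\tilde{\bm{\chi}}(\gamma))$ with $f^{r,\alpha,\beta}(\tilde{\bm{z}}(\gamma),\tilde{\bm{\chi}}(\gamma))=-\gamma$ and $f^m(\tilde{\bm{z}}(\gamma),\tilde{\bm{\chi}}(\gamma))=-\gamma$ for all $(r,\alpha,\beta)\in V$ and $m\in[M]$, componentwise non-increasing in $\gamma$, and such that $(\tilde{\bm{z}}(\gamma),\tilde{\bm{\chi}}(\gamma))\to(\bm{z}^*,\bm{\chi}^*)$ as $\gamma\to 0+$.

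Next, I would show that the shocked functions $f_L^{r,\alpha,\beta}$ and $f_L^m$ can be made uniformly close to their unshocked counterparts on the compact set $[\bm{0},\bm{z}^*+\bm{1}]\times[\bm{0},\bm{\chi}^*+\bm{1}]$ when $\E[L/C]$ is sufficiently small. For this, I split each expectation over the events $\{L/C\geq\alpha\}$ and $\{L/C<\alpha\}$. Markov's inequality yields $\P(L/C\geq\alpha)\leq\delta/\alpha$, so integrability of $W^{+,r,\alpha}$ and $X^m$ makes the contribution on the first event at most $\gamma/3$ for $\delta$ small enough (relative to $\alpha$). On the second event, right-continuity of $\rho$ and of $\psi(\,\cdot\,;\,\cdot\,)$ in its threshold argument combined with dominated convergence give, for $\alpha>0$ small enough,
\[
f_L^{r,\alpha,\beta}(\bm{z},\bm{\chi})\le f^{r,\alpha,\beta}(\bm{z},\bm{\chi})+\tfrac{2\gamma}{3},\quad f_L^m(\bm{z},\bm{\chi})\le f^m(\bm{z},\bm{\chi})+\tfrac{2\gamma}{3},
\]
uniformly on the compact set; the same type of bound gives $g_L(\bm{z},\bm{\chi})\leq g(\bm{z},\bm{\chi})+\epsilon/3$ for $\delta$ chosen small enough.

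Evaluating these bounds at $(\tilde{\bm{z}}(\gamma),\tilde{\bm{\chi}}(\gamma))$ shows that all shocked functions are at most $-\gamma/3<0$ there. By the definition of $(\bm{z}_L^*,\bm{\chi}_L^*)$ (the maximal point of the largest component of the shocked positivity set containing $\bm{0}$) together with the monotonicity of $f_L^{r,\alpha,\beta}$, $f_L^m$ in the coordinates other than their own---which forbids the component containing $\bm{0}$ from extending past a point where all functions are strictly negative---we conclude $(\bm{z}_L^*,\bm{\chi}_L^*)\leq(\tilde{\bm{z}}(\gamma),\tilde{\bm{\chi}}(\gamma))$ componentwise. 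Applying Theorem~\ref{5:thm:final:fraction:combined:general} to the shocked system yields $\chi_{n,L}^m\le(\chi_L^*)^m+o_p(1)\le\tilde{\chi}^m(\gamma)+o_p(1)$ and $n^{-1}\mathcal{S}_{n,L}\le g_L(\bm{z}_L^*,\bm{\chi}_L^*)+o_p(1)\le g(\tilde{\bm{z}}(\gamma),\tilde{\bm{\chi}}(\gamma))+\epsilon/3+o_p(1)$. Choosing $\gamma$ small enough (using upper semi-continuity of $g$ and continuity of the projection in $m$) so that $\tilde{\chi}^m(\gamma)\le(\chi^*)^m+\epsilon$ and $g(\tilde{\bm{z}}(\gamma),\tilde{\bm{\chi}}(\gamma))\le g(\bm{z}^*,\bm{\chi}^*)+\epsilon/3$ finishes the argument.

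The main obstacle is step two: the shock $L$ enters both the default-contagion term (as $C-L-\bm{X}\cdot h(\bm{\chi})$ inside the discontinuous $\psi$) and the fire-sales term (as $(L+\cdots)/C$ inside the potentially discontinuous $\rho$), so making a \emph{uniform} comparison of $f_L$ with $f$ requires handling both kinds of discontinuity simultaneously. The two-scale argument---first using Markov's inequality to remove the tail of $L/C$ at level $\alpha$, then using right-continuity at $\alpha\to0$---needs to be carried out jointly for the finitely many functions $f^{r,\alpha,\beta}$, $f^m$, and $g$, which is essentially the same trick as in Theorem~\ref{4:thm:resilience} but must now be coordinated across the larger system of functions while keeping all bounds uniform on the relevant compact set.
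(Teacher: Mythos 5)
Your proposal follows essentially the same route as the paper's proof: constructing $(\tilde{\bm{z}}(\gamma),\tilde{\bm{\chi}}(\gamma))$ via the analogue of Remark~\ref{4:rem:sequence:chi:*}, splitting the shocked expectations over $\{L\geq\alpha C\}$ and its complement, applying Markov plus dominated convergence, forcing all shocked functions strictly negative at $(\tilde{\bm{z}}(\gamma),\tilde{\bm{\chi}}(\gamma))$ to pin down $(\bm{z}_L^*,\bm{\chi}_L^*)$, and then invoking Theorem~\ref{5:thm:final:fraction:combined:general} together with upper semi-continuity of $g$. Two small imprecisions that do not affect the argument: the sequence $(\tilde{\bm{z}}(\gamma),\tilde{\bm{\chi}}(\gamma))$ is componentwise \emph{non-decreasing} in $\gamma$ (decreasing to $(\bm{z}^*,\bm{\chi}^*)$ as $\gamma\to 0^+$), and $\psi(q_1,\ldots,q_R;t)$ is \emph{left}-continuous in its threshold argument $t$ (which is what is actually needed, since $C-L$ increases to $C$ as the shock vanishes); also, the claimed uniformity of the comparison over a compact set is stronger than what dominated convergence delivers, but is superfluous here since you only evaluate at the single point $(\tilde{\bm{z}}(\gamma),\tilde{\bm{\chi}}(\gamma))$.
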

In particular, we derive the following resilience criterion.
\begin{corollary}[Resilience Criterion]\label{5:cor:resilience}
If $g(\bm{z}^*,\bm{\chi}^*)=0$, then the system is resilient.
\end{corollary}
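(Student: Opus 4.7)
The plan is to derive this as a direct consequence of Theorem \ref{5:thm:resilience}, which already does most of the heavy lifting. The theorem asserts that for any $\epsilon > 0$ there exists $\delta > 0$ such that whenever $\E[L/C] < \delta$, the final systemic damage satisfies $n^{-1}\mathcal{S}_{n,L} \leq g(\bm{z}^*,\bm{\chi}^*) + \epsilon$ with high probability. Under the hypothesis $g(\bm{z}^*,\bm{\chi}^*) = 0$, this upper bound simplifies to $\epsilon$, which matches exactly the condition appearing in Definition \ref{5:def:resilience} of resilience.

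Concretely, I would fix an arbitrary $\epsilon > 0$, apply Theorem \ref{5:thm:resilience} with this $\epsilon$ to obtain the corresponding $\delta > 0$, and then observe that for every initial shock $L$ with $\E[L/C] < \delta$ the inequality
\[
n^{-1}\mathcal{S}_{n,L} \leq g(\bm{z}^*,\bm{\chi}^*) + \epsilon = \epsilon
\]
holds with high probability. Since $\epsilon$ was arbitrary, this is precisely Definition \ref{5:def:resilience} and the system is resilient.

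There is essentially no obstacle here: the corollary is a one-line consequence of Theorem \ref{5:thm:resilience}, and all of the real work (extracting $\bm{z}^*$ and $\bm{\chi}^*$ from the unshocked system, controlling how the shocked joint fixed point is perturbed away from the unshocked one as $\E[L/C] \to 0$, and invoking Theorem \ref{5:thm:final:fraction:combined:general} to convert the fixed-point bound into a bound on $n^{-1}\mathcal{S}_{n,L}$) is already absorbed in the proof of that theorem. The only small subtlety worth flagging in the write-up is that the definition of resilience is stated in terms of $\mathcal{S}_{n,L}$ alone, so only the first inequality of Theorem \ref{5:thm:resilience} is needed; the bound on $\chi_{n,L}^m$ is not required here, although it would be relevant if one defined resilience via price impact.
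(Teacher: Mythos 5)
Your proof is correct and matches the paper's (implicit) argument: the corollary is presented immediately after Theorem~\ref{5:thm:resilience} with the phrase ``In particular, we derive the following resilience criterion,'' i.e.\ exactly the one-line substitution $g(\bm{z}^*,\bm{\chi}^*)=0$ into the theorem's bound, as you describe. Your remark that only the first inequality of the theorem is needed is also accurate, since Definition~\ref{5:def:resilience} is phrased solely in terms of $n^{-1}\mathcal{S}_{n,L}$.
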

It is thus sufficient for resilience if $(\bm{z}^*,\bm{\chi}^*)=(\bm{0},\bm{0})$ or equivalently $P_0=\{(\bm{0},\bm{0})\}$. If, however, $g(\bm{z}^*,\bm{\chi}^*)=0$ while $\bm{\chi}^*\neq\bm{0}$, then Corollary \ref{5:cor:resilience} still ensures that the final systemic damage stays small and the system is resilient by Definition \ref{5:def:resilience}, while a large fraction of shares of assets is sold due to fire sales as a reaction to small local shocks -- see Theorem \ref{5:thm:non-resilience} below.

\subsection{Non-resilience}\label{5:ssec:non:resilience}

We now aim at characterizing non-resilient systems. For this note that our fire sales model is in itself a conservative model as for each institution $i\in[n]$ the entire asset holdings $\bm{x}_i$ are exposed to the price impact $h(\bm{\chi}_n)$. It therefore ignores intermediate sales at a more favorable asset price level. We refer to Chapter \ref{chap:fire:sales} for more discussion on intermediate sales. The following results still give a first indication of non-resilience for general financial systems.

We consider shocks of the form $\ell_i\in\{0,2c_i\}$ such that $\P(L=2C)>0$ and $L/C$ is independent of $(\bm{W}^-,\bm{W}^+,\bm{X},S,C,A)$. It may seem odd at first to choose $\ell_i=2c_i$ (or any other multiple strictly larger than $1$) instead of $\ell_i=c_i$ to express the default of institution $i$. The reason is that in the proof of Theorem \ref{5:thm:non-resilience} below we want to use Theorem \ref{5:thm:final:fraction:combined:general} which only considers the limiting random vector $(\bm{W}^-,\bm{W}^+,\bm{X},S,C,L,A)$. It would then be possible that $L=C$ in the limit $n\to\infty$ while $L_n<C_n$ almost surely for all $n\in\N$. This situation would not be distinguishable from $L_n=C_n$ for all $n\in\N$ and in order to derive meaningful results in Theorem \ref{5:thm:non-resilience} we have to choose $\ell_i>c_i$. Since $\circRho(u)=\rho(u)=\rho(1)$ for all $u>1$, this does not affect the contagion process.

In contrast to Definition \ref{5:def:resilience} of resilience, we call a financial system \emph{non-resilient} if any small shock causes a lower bounded linear damage by bankrupt institutions.
\begin{definition}[Non-resilience]\label{5:def:non:resilience}
A financial system is said to be \emph{non-resilient} if there exists $\Delta>0$ such that $n^{-1}\mathcal{S}_{n,L}>\Delta$ w.\,h.\,p.~for any $L$ with the above listed properties.
\end{definition}
The following theorem identifies lower bounds for the final default fraction and finally sold shares.

\begin{theorem}\label{5:thm:non-resilience}
If the initial shock $L$ satisfies above properties and $h^m(\bm{\chi})$ is strictly increasing in $\chi^m$ for all $m\in[M]$, then for any $\epsilon>0$ it holds w.\,h.\,p.~that
\[ n^{-1}\mathcal{S}_{n,L} > \circG(\bm{z}^*,\bm{\chi}^*)-\epsilon \quad\text{and}\quad \chi_{n,L}^m > (\chi^*)^m-\epsilon. \]
\end{theorem}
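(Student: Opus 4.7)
The plan is to adapt the strategy used in the proof of Theorem~\ref{4:thm:non-resilience} to the combined setting, where we now have to handle both the default contagion coordinates $\bm{z}$ and the fire sales coordinates $\bm{\chi}$ simultaneously. First, for any $\epsilon>0$ and any subsets $I\subseteq V$ and $J\subseteq [M]$, I would introduce
\begin{align*}
T(\epsilon,I,J) &:= \bigcap_{(r,\alpha,\beta)\in I}\{(\bm{z},\bm{\chi})\,:\,\circFSuper{r,\alpha,\beta}(\bm{z},\bm{\chi})\leq -\epsilon\} \cap \bigcap_{m\in J}\{(\bm{z},\bm{\chi})\,:\,\circFSuper{m}(\bm{z},\bm{\chi})\leq -\epsilon\}\\
&\hspace{0.5cm}\cap \bigcap_{(r',\alpha',\beta')\in V\setminus I}\{(\bm{z},\bm{\chi})\,:\,z^{r',\alpha',\beta'}\geq \E[W^{+,r',\alpha'}\1\{A=\beta'\}]\}\\
&\hspace{0.5cm}\cap \bigcap_{k\in[M]\setminus J}\{(\bm{z},\bm{\chi})\,:\,\chi^k\geq\E[X^k]\}.
\end{align*}
By the componentwise construction analogous to the one in the proof of Lemma~\ref{5:lem:existence:z:chi:hat} I would then build the smallest vector $(\hat{\bm{z}}(\epsilon,I,J),\hat{\bm{\chi}}(\epsilon,I,J))\in T(\epsilon,I,J)$ for which $\circFSuper{r,\alpha,\beta}$ equals $-\epsilon$ on $I$, $\circFSuper{m}$ equals $-\epsilon$ on $J$, and the remaining coordinates saturate at their respective expectations. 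Every other element of $T(\epsilon,I,J)$ dominates this vector componentwise.

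Next, I would argue that $(\bm{z}^*,\bm{\chi}^*)\in P_0\subseteq[(\bm{0},\bm{0}),(\hat{\bm{z}}(\epsilon,I,J),\hat{\bm{\chi}}(\epsilon,I,J))]$ using a boundary argument: if this inclusion failed, then by connectedness of $P_0$ there would exist a point in $P_0$ that is componentwise bounded by $(\hat{\bm{z}}(\epsilon,I,J),\hat{\bm{\chi}}(\epsilon,I,J))$ and touches the boundary in at least one coordinate in $I\cup J$; by monotonicity of the relevant function in all other coordinates, this would force the function value to be $\leq -\epsilon$, contradicting $P_0\subseteq P$. The key use of the assumption that $h^m$ is strictly increasing in $\chi^m$ appears here (and earlier to justify that $\circFSuper{m}$ exhibits the required almost-sure dominance): it is needed to pass from the pointwise inequality between $\rho$ and $\circRho$ to an inequality between $\phi$ and $\circPhi$ applied at distinct vectors, exactly as in the proof of Theorem~\ref{4:thm:non-resilience} where one shows $\E[X^m\rho(\cdots)]$ at $(\hat{\bm{z}}(\epsilon-\gamma,I,J),\hat{\bm{\chi}}(\epsilon-\gamma,I,J))$ is bounded by $\E[X^m\circRho(\cdots)]$ at $(\hat{\bm{z}}(\epsilon,I,J),\hat{\bm{\chi}}(\epsilon,I,J))$. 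This is the main obstacle: making the monotonicity machinery compatible with the lower semi-continuous modifications in the combined $(\bm{z},\bm{\chi})$-setting, and ensuring the auxiliary vector $(\hat{\bm{z}}(\epsilon,I,J),\hat{\bm{\chi}}(\epsilon,I,J))$ depends continuously enough on $\epsilon$ for the substitution argument to work.

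Finally, for the given shock $L$ with $\P(L=2C)>0$ and $L/C$ independent of the remaining parameters, define
\[ I := \{(r,\alpha,\beta)\in V\,:\,\hat{z}_L^{r,\alpha,\beta}<\E[W^{+,r,\alpha}\1\{A=\beta\}]\},\qquad J := \{m\in[M]\,:\,\hat{\chi}_L^m<\E[X^m]\}. \]
Since $\circFSuperSub{r,\alpha,\beta}{L}(\hat{\bm{z}}_L,\hat{\bm{\chi}}_L)=0$ and the shock contribution splits linearly via independence as
\[ \circFSuperSub{r,\alpha,\beta}{L}(\hat{\bm{z}}_L,\hat{\bm{\chi}}_L) = \P(L{=}0)\circFSuper{r,\alpha,\beta}(\hat{\bm{z}}_L,\hat{\bm{\chi}}_L) + \P(L{=}2C)\bigl(\E[W^{+,r,\alpha}\1\{A=\beta\}]-\hat{z}_L^{r,\alpha,\beta}\bigr), \]
and analogously for $\circFSuper{m}$, we obtain strict negativity of $\circFSuper{r,\alpha,\beta}(\hat{\bm{z}}_L,\hat{\bm{\chi}}_L)$ for $(r,\alpha,\beta)\in I$ and of $\circFSuper{m}(\hat{\bm{z}}_L,\hat{\bm{\chi}}_L)$ for $m\in J$. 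Choosing $\epsilon$ as the negative of the maximum of all these values, we have $(\hat{\bm{z}}_L,\hat{\bm{\chi}}_L)\in T(\epsilon,I,J)$, hence $(\hat{\bm{z}}_L,\hat{\bm{\chi}}_L)\geq(\hat{\bm{z}}(\epsilon,I,J),\hat{\bm{\chi}}(\epsilon,I,J))\geq(\bm{z}^*,\bm{\chi}^*)$ componentwise. Applying Theorem~\ref{5:thm:final:fraction:combined:general} to the shocked system and using monotonicity of $\circG$ yields
\[ n^{-1}\mathcal{S}_{n,L}\geq \circG_L(\hat{\bm{z}}_L,\hat{\bm{\chi}}_L) + o_p(1) \geq \circG(\bm{z}^*,\bm{\chi}^*) + o_p(1),\qquad \chi_{n,L}^m \geq \hat{\chi}_L^m+o_p(1) \geq (\chi^*)^m+o_p(1), \]
which is the desired conclusion.
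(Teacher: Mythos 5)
Your proposal is correct and follows essentially the same route as the paper's proof: you introduce the stability set $T(\epsilon,I,J)$ (the paper writes $I = (I_V, I_M)$ but the object is identical), construct the componentwise-smallest vector in it, argue $(\bm{z}^*,\bm{\chi}^*)$ is dominated via connectedness of $P_0$ and monotonicity, use the strict monotonicity of $h^m$ plus a.e. continuity of $\epsilon \mapsto (\hat{\bm{z}}(\epsilon,I,J),\hat{\bm{\chi}}(\epsilon,I,J))$ to reconcile $f^m$ with $\circFSuper{m}$, define $I,J$ from the shocked solution, exploit independence to get the strict sign, and close with Theorem~\ref{5:thm:final:fraction:combined:general}. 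The only thing you leave slightly implicit is the resolution of the ``main obstacle'' you flag (the $\gamma$-substitution using a.e. continuity of the monotone map $\epsilon \mapsto \hat{\bm{\chi}}(\epsilon,I,J)$), but you correctly identify both the obstacle and the mechanism by which Theorem~\ref{4:thm:non-resilience} handles it, which is exactly what the paper invokes.
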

The assumption on $h(\bm{\chi})$ is a rather mild one and is satisfied for all standard choices for price impact functions such as linear price or log-linear price impact.
\begin{corollary}[Non-resilience Criterion]\label{5:cor:non:resilience}
If $h^m(\bm{\chi})$ is strictly increasing in $\chi^m$ for all \mbox{$m\in[M]$} and $\circG(\bm{z}^*,\bm{\chi}^*)>0$, then the system is non-resilient.
\end{corollary}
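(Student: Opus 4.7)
The plan is to derive this corollary as a direct consequence of Theorem \ref{5:thm:non-resilience} by choosing $\epsilon$ and $\Delta$ appropriately. First, I would observe that the hypothesis $\circG(\bm{z}^*,\bm{\chi}^*)>0$ gives us a positive quantity that I can split in half: set $\Delta := \circG(\bm{z}^*,\bm{\chi}^*)/2 > 0$ and $\epsilon := \circG(\bm{z}^*,\bm{\chi}^*)/2$. This choice decouples the corollary from any particular shock $L$.

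Next, I would fix an arbitrary initial shock $L$ satisfying the conditions of Definition \ref{5:def:non:resilience}, namely $\ell_i\in\{0,2c_i\}$ with $\P(L=2C)>0$ and $L/C$ independent of $(\bm{W}^-,\bm{W}^+,\bm{X},S,C,A)$. These are precisely the assumptions under which Theorem \ref{5:thm:non-resilience} applies, and together with the hypothesis that each $h^m(\bm{\chi})$ is strictly increasing in $\chi^m$, the theorem yields that with high probability
\[
n^{-1}\mathcal{S}_{n,L} > \circG(\bm{z}^*,\bm{\chi}^*)-\epsilon = \circG(\bm{z}^*,\bm{\chi}^*)/2 = \Delta.
\]
Since this lower bound $\Delta$ is a positive constant independent of the specific shock $L$ (it depends only on the unshocked system), Definition \ref{5:def:non:resilience} is satisfied and the system is non-resilient.

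There is no real obstacle here: the corollary is essentially just a rephrasing of Theorem \ref{5:thm:non-resilience} in the language of Definition \ref{5:def:non:resilience}, with the role of the universal quantifier over shocks $L$ being handled by the fact that $\circG(\bm{z}^*,\bm{\chi}^*)$ and hence $\Delta$ are intrinsic to the unshocked system. The only minor conceptual point worth stating explicitly in the proof is that the lower bound produced by Theorem \ref{5:thm:non-resilience} is uniform over all admissible shocks, which is what allows us to extract the single constant $\Delta$ required by the definition.
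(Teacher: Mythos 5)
Your proof is correct and is exactly the intended argument: the paper states the corollary without a separate proof because it follows immediately from Theorem \ref{5:thm:non-resilience} in precisely the way you describe. The key observation you make explicit---that the lower bound $\circG(\bm{z}^*,\bm{\chi}^*)-\epsilon$ involves only the unshocked system and is therefore uniform over all admissible shocks $L$, so a single $\Delta$ can be extracted---is the right thing to emphasize.
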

For most practical purposes Corollaries \ref{5:cor:resilience} and \ref{5:cor:non:resilience} hence fully determine whether a financial system is resilient or non-resilient.

\section{Applications \& Simulations}\label{5:sec:applications}
In this section, we provide two applications of our theory. Example \ref{5:ex:simulation} has a twofold purpose: It demonstrates the joint impact of default contagion and fire sales. The model parameters are chosen in such a way that the financial system would be resilient with respect to either one of them but non-resilient with respect to their combination. Further, we provide simulations for finite networks in this setting to confirm the applicability of our asymptotic results also for reasonably sized financial systems. In Example \ref{5:ex:combined}, we derive sufficient capital requirements for very general combined financial systems of default contagion and fire sales. This extends results from Chapter \ref{chap:systemic:risk} for pure default contagion and from Chapter \ref{chap:fire:sales} for pure fire sales. Besides certain global parameters that need to be determined by a regulating institution, these capital requirements for each institution $i\in[n]$ only depend on its asset holdings $\bm{x}_i$ and its in-weights $w_i^{-,r,\alpha}$ which can be thought of as in-degrees (cf.~Chapter \ref{chap:systemic:risk}). They are thus very transparent and can be computed locally by the institutions themselves as they only depend on the institutions own business decisions. Moreover, this prevents institutions from manipulating their own or others' capital requirements and serves as a fair allocation of risk in the financial system. For simplicity we assume $S\equiv 1$ throughout this section and hence consider the final default fraction as the measure of systemic damage.

\begin{example}
\label{5:ex:simulation}
Consider a financial system with $R=T=M=1$. For simplicity we omit superscripts throughout this example where appropriate. Let $w_i^-=w_i^+=x_i$ for each $i\in[n]$ and $W^-=W^+=X$ be Pareto distributed with density $f_X(x)=2x^{-3}\1\{x\geq1\}$. Further, let $c_i=3.5$ for each $i\in[n]$ and in particular $C=3.5$. Finally assume $h(\chi)=1-e^{-\chi}$ and $\rho(u)=\1\{u\geq1\}$, that is banks sell assets at default only.

Since $c_i>3$, the system without fire sales would then be resilient (see Theorem \ref{2:threshold:res}
). Also the pure fire sales system without loans would be resilient by Corollary \ref{4:cor:resilience} since for $\chi\leq h^{-1}(3.5)$
\begin{align*}
f(\chi) &= \E[X\1\{X\geq 3.5/h(\chi)\}] - \chi = \int_{3.5/h(\chi)}^\infty 2x^{-2}\dd x - \chi = \frac{4}{7}h(\chi) - \chi = \frac{4}{7}\left(1-e^{-\chi}\right) - \chi
\end{align*}
and hence $f'(0)=-3/7<0$.

However, for the combined contagion system, we derive that
\begin{align*}
f^{1,1,1}(z,\chi) &= 2 +2z\left(\mathrm{Ei}\left(-\frac{7z}{2h(\chi)}\right) - \mathrm{Ei}\left(-\frac{5z}{2h(\chi)}\right)\right)
+ \frac{4h(\chi)}{7}e^{-\frac{7z}{2h(\chi)}} + ze^{-\frac{3z}{2h(\chi)}} - (z+2)e^{-z}\\
&\hphantom{=2}-\frac{z}{3}\left( -e^{-\frac{z}{2h(\chi)}}\left(\frac{z}{2h(\chi)}+1\right) + e^{-z}(z+1) \right)\1\{\chi\leq\log 2\} - z,\\
f^1(z,\chi) &= f_0(z,\chi) + z - \chi,
\end{align*}
where $\mathrm{Ei}(x):=\int_{-\infty}^x t^{-1}e^t\dd t$ denotes the exponential integral. In particular, $f^{1,1,1}(z,z)=f^1(z,z)$ and
\[ \frac{\dd}{\dd z}f^{1,1,1}(z,z)\big\vert_{z=0} = -\frac{1}{3} + \frac{1}{2}e^{-\frac{1}{2}} + e^{-\frac{3}{2}} + \frac{4}{7}e^{-\frac{7}{2}} + 2\left(\mathrm{Ei}\left(-\frac{7}{2}\right) - \mathrm{Ei}\left(-\frac{5}{2}\right)\right) \approx 0.2462 > 0.\]
Hence the directional derivatives of $f^{1,1,1}$ and $f^1$ in direction $(1,1)$ are both positive and thus $z^*>0$ and $\chi^*>0$. See Figure \ref{5:fig:Ex1} for an illustration.

\begin{figure}[t]
\hfill\includegraphics[width=0.5\textwidth]{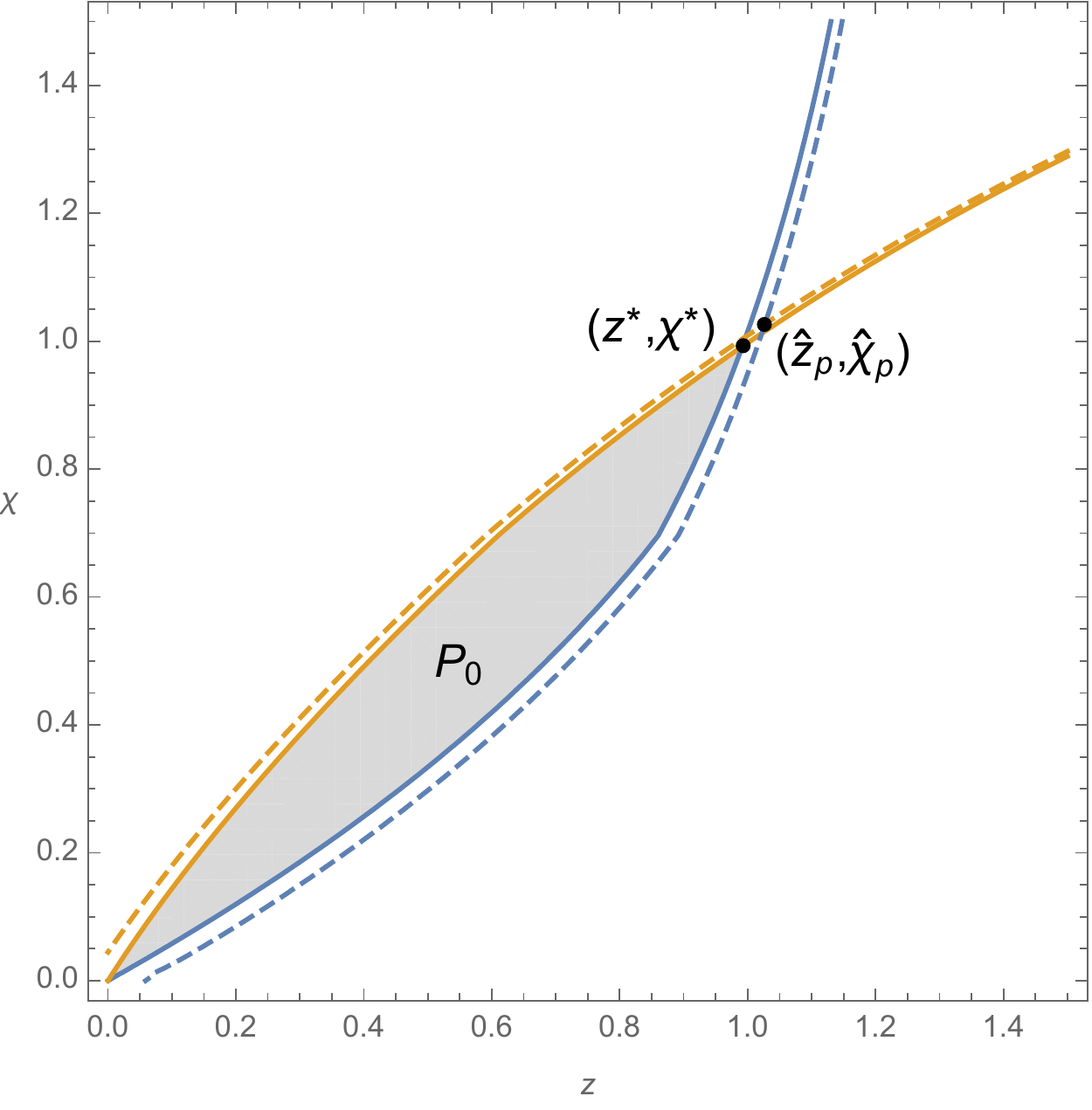} \hfill
\caption{Plot of the root sets of the functions $f^{1,1,1}(z,\chi)$ (blue) and $f^1(z,\chi)$ (orange). Solid: the unshocked functions. Dashed: the shocked functions. In grey the set $P=P_0$}\label{5:fig:Ex1}
\end{figure}
More precisely, we numerically determine $(z^*,\chi^*)\approx(0.992,0.992)$ and since \mbox{$g(z,\chi)=\circG(z,\chi)$} is given by
\begin{align*}
g(z,\chi) &= 1+ z^2\left(2\mathrm{Ei}\left(-\frac{5z}{2h(\chi)}\right)-\mathrm{Ei}\left(-\frac{3z}{2h(\chi)}\right)-\mathrm{Ei}\left(-\frac{7z}{2h(\chi)}\right)\right) - (z+1)e^{-z}\\
&\hspace{1.05cm} + \frac{2h(\chi)}{7}\left(\frac{2h(\chi)}{7} - z\right)e^{-\frac{7z}{2h(\chi)}} + \frac{4}{5}zh(\chi)e^{-\frac{5z}{2h(\chi)}}  + \frac{z^2}{3}\left(e^{-\frac{z}{2h(\chi)}}-e^{-z}\right)\1\{\chi\leq\log 2\}
\end{align*}
a lower bound on the final default fraction is asymptotically given by $g(z^*,\chi^*)\approx 29.24\%$. The combined system is thus non-resilient.

If we let each bank in the system initially default with probability $p=1\%$, then we can determine $(\hat{z}_p,\hat{\chi}_p)=(z_p^*,\chi_p^*)\approx(1.028,1.028)$ as the unique joint root 
of the functions $\circFSuper{0}(z,\chi)=f_p^{1,1,1}(z,\chi)=(1-p)f^{1,1,1}(z,\chi)+p(2-z)$ and $\circFSuper{1}(z,\chi)=f_p^1(z,\chi)=(1-p)f^1(z,\chi)+p(2-\chi)$. Plugging it into $\circG_p(z,\chi)=g_p(z,\chi)=(1-p)g(z,\chi)+p$ yields an asymptotic final fraction of $31.32\%$.

To verify this result for finite systems, we performed $10^5$ simulations on systems of sizes between $10^2$ and $10^4$ ($1000$ simulations for every multiple of $100$) as well as $10^5$ simulations on systems of sizes between $10^3$ and $10^5$ ($1000$ simulations for every multiple of $1000$), where we drew $x_i$ randomly according to the limiting distribution of $X$. Figure \ref{5:fig:simulations} shows the mean over all $1000$ simulations as an orange curve. Additionally, $100$ simulations for every system size are depicted by blue dots. The theoretical final fraction of $31.32\%$ is drawn as a red line. While for small $n$ only few simulations ended in a final default fraction significantly larger than $p=1\%$ and those which did were considerably higher than the theoretical value of $31.32\%$, as $n$ becomes larger, the average final fraction converges to $31.32\%$ and the deviation around this value becomes smaller and smaller. Already for $n\approx 4,000$ the simulated and the theoretical results are considerably close.
\begin{figure}[t]
    \hfill\subfigure[]{\includegraphics[width=0.49\textwidth]{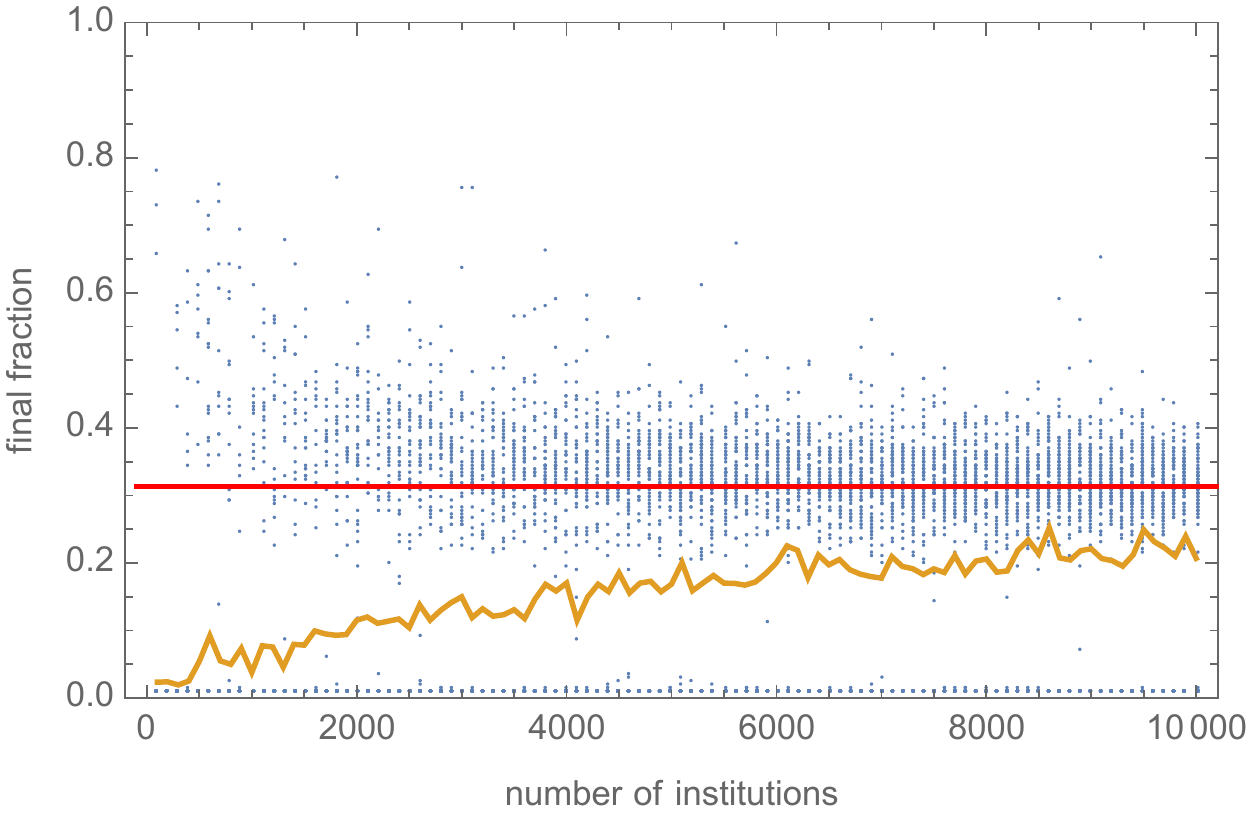}\label{5:fig:convergence:10000}}
    \hfill\subfigure[]{\includegraphics[width=0.49\textwidth]{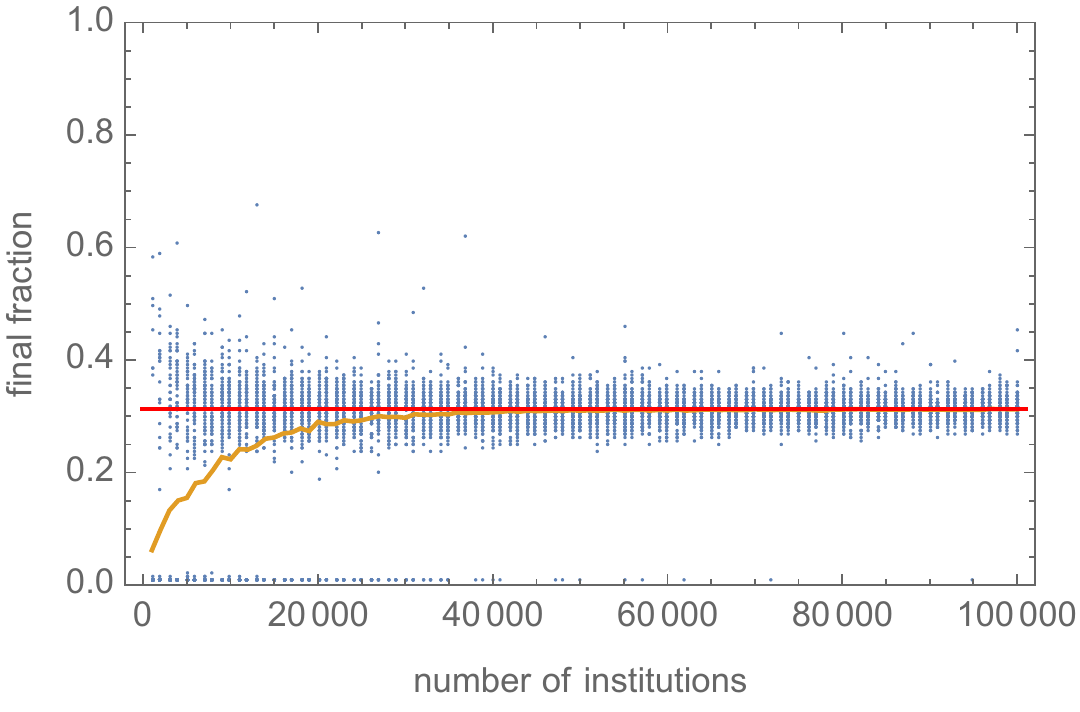}\label{5:fig:convergence100000}}\hfill
\caption{The simulation outcomes for systems as described in Example \ref{5:ex:simulation}. In blue single outcomes, in orange the mean over all outcomes and in red the theoretical asymptotic final fraction. 
}\label{5:fig:simulations}
\end{figure}
\end{example}

\begin{example}[Capital Requirements]\label{5:ex:combined}
In the previous example, we considered the case that \mbox{$\rho(u)=\1\{u\geq1\}$} with sales at default only. Intermediate sales will make the system less resilient, and we shall consider such an example now. We choose $\rho(u)=1\wedge u^q$ for some $q>0$. We consider one asset only and the parameter $q$ could be understood as a measure for the banks' confidence in the asset. Further, assume that the price impact is $h(\chi)=\Theta(\chi^\nu)$ for small $\chi$ and $\nu\geq q^{-1}$, i.\,e.~there exist constants $\mu_1,\mu_2\in(0,\infty)$ such that $\mu_1 \chi^\nu\leq h(\chi) \leq \mu_2 \chi^\nu$ for $\chi\leq\chi_0$ small enough.
The generalization to multiple assets is straight forward in analogy to Corollary \ref{4:cor:multiple:assets}.

The distribution of asset holdings is assumed to have a power law tail in the sense that $1-F_X(x)=\Theta(x^{1-\beta})$ for some $\beta\in(2,\infty)$, i.\,e.~there exist constants $B_1,B_2\in(0,\infty)$ such that $B_1 x^{1-\beta}\leq 1-F_X(x) \leq B_2 x^{1-\beta}$ for $x\geq x_0$ large enough. 

First assume that $R=T=1$. Recall then from Theorem \ref{2:thm:Pareto:type} the sufficient (and necessary) capital requirements for a pure default contagion model without fire sales: Assume \mbox{$1-F_{W^\pm}(w)\leq (w/K^\pm)^{1-\beta^\pm}$} for constants $K^\pm\in(0,\infty)$ and $\beta^\pm>2$, and for $w\geq w_0\in\R_+$. That is, the tails of the distributions of $W^-$ and $W^+$ are at most of power $\beta^-$ resp.~$\beta^+$. If we let $\gamma_c:=2+\frac{\beta^--1}{\beta^+-1}-\beta^-$ and $c_i=c(w_i^-)$ for each bank $i\in[n]$ with $c:\R_{+,0}\to(1,\infty)$, then the (pure default contagion) system is resilient if either $\gamma_c<0$, $\gamma_c>0$ and $\liminf_{w\to\infty}w^{-\gamma_c}c(w)>\frac{\beta^+-1}{\beta^+-2}K^+(K^-)^{1-\gamma_c}=:\alpha_c$ or $\gamma_c=0$ and $\liminf_{w\to\infty}c(w)>\alpha_c+1$. It thus makes sense to define capital requirements $c^\text{dir}(w)=\max\{2,\alpha w^\gamma\}$ for some constants $\alpha>\alpha_c$ and $\gamma\geq\gamma_c$.

Adding the capital requirements $c^\text{dir}$ against direct contagion to the capital requirements $c^\text{ind}(x)=\theta x$ (where $\theta>\mu\E[X]$) 
against fire sales (indirect contagion) found in Corollary \ref{4:cor:multiple:assets}, we thus get the combined capital requirement $c_i\geq c(w_i^-,x_i)$ for each $i\in[n]$, where
\[ c(w,x)=\max\{2,\alpha w^\gamma\} + \theta x. \]
In fact, we can show that these capital requirements make the combined system resilient: By Corollary \ref{4:cor:multiple:assets} it holds
\[ f^1(0,\chi) = \E\left[X\min\left\{1,\left(\frac{Xh(\chi)}{C}\right)^q\right\}\right] - \chi < 0 \]
for $\chi>0$ small enough since $C\geq\theta X$. Since $f^1(z,\chi)$ is continuous in $z$ for fixed $\chi$, we can then choose $z>0$ small enough such that still $f^1(z,\chi)<0$. Furthermore, it holds for $\chi<h^{-1}(\theta)$ and $z>0$ small enough that
\[ f^{1,1,1}(z,\chi) \leq \E\left[W^+\P\left(\mathrm{Poi}(W^-z)\geq\max\left\{2,\alpha (W^-)^\gamma\right\}\right)\right] - z < 0 \] 
by resilience of the pure default contagion system (see the proof of Theorem \ref{2:thm:Pareto:type}). By definition of $(z^*,\chi^*)$ we can then conclude $z^*<z$ and $\chi^*<\chi$. However, $z$ and $\chi$ can be chosen arbitrarily small and thus $z^*=\chi^*=0$. The combined system is then resilient by Corollary \ref{5:cor:resilience}.

\pagebreak
For the case of general $R,T\in\N$, we obtain sufficient capital requirements against default contagion from Corollary \ref{3:cor:resilient}. Thus for an institution $i\in[n]$ of type $\beta\in[T]$ choose
\[ c_i^\text{dir}(\bm{v})=\max\left\{R+1,\left\lceil\mu \left( \frac{\sum_{s\in[R]}s\sum_{\gamma\in[T]}w_i^{-,s,\gamma}v^{s,\beta,\gamma}}{\Vert\bm{v}\Vert} \right)^\nu\right\rceil\right\}, \]
where $\nu\geq\nu_c^\beta$ and $\mu>\mu_c^\beta$ as defined in Section \ref{3:ssec:suff:cap:requ}. By the same means as for the one-dimensional case above, we then derive that
\[ c_i\geq c_i^\text{dir} + c^\text{ind}(x_i) \]
is sufficient for resilience of the financial system. Again the generalization to multiple assets ($M\geq2$) is straightforward by Corollary \ref{4:cor:multiple:assets}.
\end{example}

\section{Proofs}\label{5:sec:proofs}
\subsection{Proofs for Section \ref{5:sec:system:model}}
\begin{proof}[Proof of Lemma \ref{5:lem:cont:rho}]
Clearly,
\begin{align*}
\mathcal{D}_{(k)} &= \bigg\{i\in[n]\,:\,\sum_{j\in\mathcal{D}_{(k-1)}} e_{j,i} \geq c_i-\ell_i-\bm{x}_i\cdot h(n^{-1}\bm{\sigma}_{(k-1)})\bigg\}\\
&\subseteq \bigg\{i\in[n]\,:\,\sum_{j\in\mathcal{D}_n} e_{j,i} \geq c_i-\ell_i-\bm{x}_i\cdot h(\bm{\chi}_n)\bigg\}
\end{align*}
and thus
\[ \mathcal{D}_n=\bigcup_{k\in\N}\mathcal{D}_{(k)} \subseteq \bigg\{i\in[n]\,:\,\sum_{j\in\mathcal{D}_n} e_{j,i} \geq c_i-\ell_i-\bm{x}_i\cdot h(\bm{\chi}_n)\bigg\}. \]
On the other hand, if $i\in\mathcal{D}_n$, then there exists $k_i\in\N$ such that
\[ 0 \geq c_i-\ell_i-\bm{x}_i\cdot h(n^{-1}\bm{\sigma}_{(k_i)}) - \sum_{j\in\mathcal{D}_{(k_i)}} e_{j,i} \geq c_i-\ell_i-\bm{x}_i\cdot h(\bm{\chi}_n) - \sum_{j\in\mathcal{D}_n} e_{j,i} \]
and thus 
\[ \mathcal{D}_n \subseteq \bigg\{i\in[n]\,:\,\sum_{j\in\mathcal{D}_n} e_{j,i} \geq c_i-\ell_i-\bm{x}_i\cdot h(\bm{\chi}_n)\bigg\}. \]
That is, $\mathcal{D}_n$ and $\bm{\chi}_n$ solve \eqref{5:eqn:lem:cont:rho:D}. Moreover, they solve \eqref{5:eqn:lem:cont:rho:chi} as
\begin{align*}
\bm{\chi}_n &= n^{-1}\lim_{k\to\infty}\bm{\sigma}_{(k)} = \lim_{k\to\infty} n^{-1}\sum_{i\in[n]} \bm{x}_i \rho\left( \frac{\sum_{j\in\mathcal{D}_{(k-1)}}e_{j,i} + \ell_i+\bm{x}_i\cdot h(n^{-1}\bm{\sigma}_{(k-1)})}{c_i} \right)\\
&= n^{-1}\sum_{i\in[n]} \bm{x}_i \rho\left( \frac{\sum_{j\in\mathcal{D}_n}e_{j,i} + \ell_i+\bm{x}_i\cdot h(\bm{\chi}_n)}{c_i} \right)
\end{align*}
where we used continuity of $\rho$ and $h$ and the fact that $\mathcal{D}_{(k)}=\mathcal{D}_n$ for $k$ large enough.

Now assume that $\tilde{\mathcal{D}}$ and $\tilde{\bm{\chi}}$ also solve \eqref{5:eqn:lem:cont:rho:D} and \eqref{5:eqn:lem:cont:rho:chi}. Clearly $\mathcal{D}_{(0)}\subseteq\tilde{\mathcal{D}}$ and $\bm{\sigma}_{(0)}\leq n\tilde{\bm{\chi}}$. Hence assume inductively that $\mathcal{D}_{(k)}\subseteq\tilde{\mathcal{D}}$ and $\bm{\sigma}_{(k)}\leq n\tilde{\bm{\chi}}$. Then
\begin{align*}
\mathcal{D}_{(k+1)} &= \bigg\{i\in[n]\,:\,\sum_{j\in\mathcal{D}_{(k)}} e_{j,i} \geq c_i-\ell_i-\bm{x}_i\cdot h(n^{-1}\bm{\sigma}_{(k)})\bigg\}\\
&\subseteq \bigg\{i\in[n]\,:\,\sum_{j\in\tilde{\mathcal{D}}} e_{j,i} \geq c_i-\ell_i-\bm{x}_i\cdot h(\tilde{\bm{\chi}})\bigg\} = \tilde{\mathcal{D}}
\end{align*}
and
\begin{align*}
\bm{\sigma}_{(k+1)} &= \sum_{i\in[n]} \bm{x}_i \rho\left( \frac{\sum_{j\in\mathcal{D}_{(k)}}e_{j,i} + \ell_i+\bm{x}_i\cdot h(n^{-1}\bm{\sigma}_{(k)})}{c_i} \right)\\
&\leq \sum_{i\in[n]} \bm{x}_i \rho\left( \frac{\sum_{j\in\tilde{\mathcal{D}}}e_{j,i} + \ell_i+\bm{x}_i\cdot h(\tilde{\bm{\chi}})}{c_i} \right) = n\tilde{\bm{\chi}}.
\end{align*}
In particular, $\mathcal{D}_n=\bigcup_{k\in\N}\mathcal{D}_{(k)} \subseteq\tilde{\mathcal{D}}$ and $\bm{\chi}_n=n^{-1}\lim_{k\to\infty}\bm{\sigma}_{(k)}\leq\tilde{\bm{\chi}}$.
\end{proof}

\subsection{Proofs for Section \ref{5:asymp:res:sys}}\label{5:ssec:proofs:2}
We first consider the special case, summarized in the following definition, where the weights, asset holdings, capitals and exogenous losses take only finitely many different values.
\begin{definition}[Finitary Regular Vertex Sequence]\label{5:def:finitary:vertex:sequence}
A regular vertex sequence (see Assumption \ref{5:ass:regular:vertex:sequence}) denoted by $(\bm{w}^-(n),\bm{w}^+(n),\bm{x}(n),\bm{s}(n),\bm{c}(n),\bm{\ell}(n),\bm{\alpha}(n))_{n\in\N}$ is called \emph{finitary} if there exist $J\in\N$ and a finite set $\{(\tilde{\bm{w}}_j^-,\tilde{\bm{w}}_j^+,\tilde{\bm{x}}_j,\tilde{s}_j,\tilde{c}_j,\tilde{\ell}_j)\}_{j\in[J]}\subset\R_{+,0}^{[R]\times[T]}\times\R_{+,0}^{[R]\times[T]}\times\R_{+,0}^M\times\R_{+,0}\times\R_{+,\infty}\times\R_{+,0}$ such that for all $n\in\N$ and $i\in[n]$, there exists $j=j(n,i)\in[J]$ such that $(\bm{w}_i^-,\bm{w}_i^+,\bm{x}_i,s_i,c_i,\ell_i)=(\tilde{\bm{w}}_j^-,\tilde{\bm{w}}_j^+,\tilde{\bm{x}}_j,\tilde{s}_j,\tilde{c}_j,\tilde{\ell}_j)$. Denote in the following
\[ p_j^\beta(n):=\P(\bm{W}^-_n=\tilde{\bm{w}}_j^-,\bm{W}^+_n=\tilde{\bm{w}}_j^+,\bm{X}_n=\tilde{\bm{x}}_j,S_n=\tilde{s}_j,C_n=\tilde{c}_j,L_n=\tilde{\ell}_j,A_n=\beta) \]
and
\[ p_j^\beta = \lim_{n\to\infty}p_j^\beta(n) = \P(\bm{W}^-=\tilde{\bm{w}}_j^-,\bm{W}^+=\tilde{\bm{w}}_j^+,\bm{X}=\tilde{\bm{x}}_j,S=\tilde{s}_j,C=\tilde{c}_j,L=\tilde{\ell}_j,A=\beta). \]
\end{definition}
We can then prove a version of Theorem \ref{5:thm:final:fraction:combined:general} for the finitary case:
\begin{theorem}\label{5:thm:final:fraction:combined:finitary}
Consider a financial system described by a finitary regular vertex sequence. Then for the final systemic damage $n^{-1}\mathcal{S}_n$ and $\chi_n^m$, the number of finally sold shares of asset $m\in[M]$ divided by $n$, it holds
\[\begin{gathered}
\circG(\hat{\bm{z}},\hat{\bm{\chi}}) + o_p(1) \leq n^{-1}\mathcal{S}_n \leq g(\bm{z}^*,\bm{\chi}^*) + o_p(1),\\
\hat{\chi}^m+o_p(1) \leq \chi_n^m \leq (\chi^*)^m+o_p(1).
\end{gathered}\]
In particular, for the final price impact $h^m(\bm{\chi}_n)$ on asset $m\in[M]$ it holds
\[ h^m(\hat{\bm{\chi}})+o_p(1) \leq h^m(\bm{\chi}_n) \leq h^m(\bm{\chi}^*)+o_p(1). \]
\end{theorem}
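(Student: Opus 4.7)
The strategy parallels the one used for the pure default contagion finitary case (Proposition \ref{3:prop:finitary:weights}) but we now have to track the fire sales component in parallel. My starting point is to sequentialize the joint contagion process: initialize by declaring every initially defaulted institution (i.\,e.~those with $\ell_i\geq c_i$) as ``defaulted but unexposed''; then in each round pick \emph{one} such institution $v$ and expose its outgoing interbank edges to the currently solvent institutions, so that any creditor $w$ has its running capital reduced by $\sum_{r\in[R]}rX_{v,w}^r$. After this exposure, process any fire sales that the accumulated losses trigger in a single sub-step: increment the cumulative sales vector $\bm{\sigma}(t)$ by the additional shares that solvent institutions want to sell given their current loss state, and then reduce all solvent institutions' capitals by the resulting extra price-impact losses $\bm{x}_i\cdot[h(n^{-1}\bm{\sigma}(t))-h(n^{-1}\bm{\sigma}(t-1))]$. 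Any institution whose capital drops to $0$ joins the pool of defaulted unexposed institutions. By the finitary assumption the joint state of an institution is determined by a pair $(j,l)$ where $j\in[J]$ is its class and $l$ is its accumulated (discrete) loss, so we can track the counts $c_{j,l}^\beta(t)$, total outgoing $(r,\alpha)$-weights $w^{r,\alpha,\beta}(t)$ of unexposed defaulted institutions of type $\beta$, the cumulative systemic damage $s(t)$, and the coordinates $\sigma^m(t)$ of sales.

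Second, I would verify the hypotheses of Wormald's differential equation method for these tracked quantities just as in the proof of Proposition \ref{3:prop:finitary:weights}. The expected one-step increments are Lipschitz functions of the rescaled state $n^{-1}(c_{j,l}^\beta,w^{r,\alpha,\beta},s,\bm{\sigma})$ as long as $\sum_\beta u^\beta(t)>0$, where the fire-sales sub-step contributes additional terms involving $\rho$ and $h$ evaluated at the current $n^{-1}\bm{\sigma}(t)$. Solving the resulting ODE system and defining $z^{r,\alpha,\beta}(\tau):=\int_0^\tau \mu^{r,\alpha,\beta}(s)/\sum_\gamma\nu^\gamma(s)\,\mathrm{d}s$ and $\chi^m(\tau):=n^{-1}\sigma^m(\lfloor\tau n\rfloor)$ (in the limit), one can check by substitution and Fubini that
\[
\mu^{r,\alpha,\beta}(\tau)=f^{r,\alpha,\beta}(\bm{z}(\tau),\bm{\chi}(\tau)),\qquad
\text{and}\qquad
\chi^m(\tau)\text{ solves the fire-sales fixed point for }\bm{z}(\tau),
\]
while $\sigma(\tau)=g(\bm{z}(\tau),\bm{\chi}(\tau))$. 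In particular the trajectory stays in the set where all $f^{r,\alpha,\beta}\geq 0$ and $f^m\geq 0$, hence in the connected component $P_0$ of $\bm{0}$.

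Third, for the lower bound I would let $\hat{\tau}:=\inf\{\tau:\sum_\beta\nu^\beta(\tau)=0\}$ be the time at which the cascade exhausts. Continuity forces $f^{r,\alpha,\beta}(\bm{z}(\hat{\tau}),\bm{\chi}(\hat{\tau}))=0$ for all $(r,\alpha,\beta)$; because sales were processed with the left-continuous modification (every fresh sale triggers immediately once the threshold is crossed), the same holds with $\circFSuper{}$ in place of $f$, and analogously for the $m$-coordinates. Thus $(\bm{z}(\hat{\tau}),\bm{\chi}(\hat{\tau}))\in\circP_0$ is a joint root of all the $\circFSuper{\cdot}$, and by definition of $(\hat{\bm{z}},\hat{\bm{\chi}})$ as the smallest such root (Lemma \ref{5:lem:existence:z:chi:hat}) we get $\bm{z}(\hat{\tau})\geq \hat{\bm{z}}$ and $\bm{\chi}(\hat{\tau})\geq\hat{\bm{\chi}}$ componentwise. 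The standard argument (as at the end of the proof of Proposition \ref{3:prop:finitary:weights}) that $\hat{t}/n\geq\hat{\tau}+o_p(1)$ combined with the ODE approximation for $\sigma(\cdot)$ and the lower semi-continuity of $\circG$ then yields
\[
n^{-1}\mathcal{S}_n\geq \sigma(\hat{\tau})+o_p(1)=\circG(\bm{z}(\hat{\tau}),\bm{\chi}(\hat{\tau}))+o_p(1)\geq\circG(\hat{\bm{z}},\hat{\bm{\chi}})+o_p(1),
\]
and the same argument applied coordinate-wise to $\chi^m$ gives $\chi_n^m\geq\hat{\chi}^m+o_p(1)$.

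Finally, for the upper bound I would follow the perturbation trick of Theorem \ref{3:thm:finitary:weights}: introduce an additional independent shock that zeros every institution's capital with probability $\epsilon>0$ so that the perturbed system $(\bm{W}^-,\bm{W}^+,\bm{X},S,C,L,A,M_\epsilon)$ has a smallest joint root $(\hat{\bm{z}}_\epsilon,\hat{\bm{\chi}}_\epsilon)$ which, by the analogue of Lemma \ref{3:lem:z^*:epsilon} applied jointly to the $V\cup[M]$-dimensional fixed-point map, converges to $(\bm{z}^*,\bm{\chi}^*)$ as $\epsilon\to 0$ and for a.\,e.\ $\epsilon$ admits a vector $\bm{v}$ in whose direction the relevant directional derivatives of all $f^{r,\alpha,\beta}_\epsilon,f^m_\epsilon$ are strictly negative. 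By the already-proved lower bound applied to this stable perturbed system the trajectory terminates at $(\hat{\bm{z}}_\epsilon,\hat{\bm{\chi}}_\epsilon)$ exactly (no overshoot), so $n^{-1}\mathcal{S}_n^\epsilon=g_\epsilon(\hat{\bm{z}}_\epsilon,\hat{\bm{\chi}}_\epsilon)+o_p(1)$ and $\chi_n^{m,\epsilon}=\hat{\chi}_\epsilon^m+o_p(1)$. Coupling the original and perturbed systems (the perturbation can only create more defaults and sales) gives $\mathcal{S}_n\leq\mathcal{S}_n^\epsilon$ and $\chi_n^m\leq\chi_n^{m,\epsilon}$; letting $\epsilon\to 0$ and invoking upper semi-continuity of $g$ at $(\bm{z}^*,\bm{\chi}^*)$ completes the proof. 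The main obstacle is the joint Wormald analysis: one has to handle both the discrete default events and the piecewise-constant jumps of $\bm{\sigma}$ across the same time scale while ensuring the Lipschitz condition of Wormald's theorem holds throughout the cascade; once the ODE identification $\mu^{r,\alpha,\beta}=f^{r,\alpha,\beta}$, $\sigma^m\leftrightarrow f^m$-fixed-point is in place, the rest of the argument follows the template established in Chapters \ref{chap:systemic:risk}, \ref{chap:block:model} and \ref{chap:fire:sales}.
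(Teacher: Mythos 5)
Your sequentialization with a single "fire-sales sub-step" per exposure is a genuinely different route from the paper, and it runs into a concrete obstruction that the paper's proof is specifically designed to avoid. The drift functions $f^m$ (and, through the combined loss $C - L - \bm{X}\cdot h(\bm{\chi})$, also $f^{r,\alpha,\beta}$) are \emph{discontinuous} in $\bm{\chi}$ whenever $\rho$ is discontinuous (the canonical example $\rho(u)=\1\{u\geq1\}$) or when $\tilde{c}_j - \tilde{\ell}_j - \tilde{\bm{x}}_j\cdot h(\bm{\chi})$ crosses an integer for some class $j$. At such hyperplanes a linear fraction of the system's institutions can default simultaneously, so the one-step expected increment of your tracked vector is \emph{not} a Lipschitz function of the rescaled state, and Wormald's theorem does not apply. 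You flag this yourself ("the main obstacle is the joint Wormald analysis \dots ensuring the Lipschitz condition \dots holds throughout"), but you do not resolve it; in fact there is no way to resolve it within a single interleaved Wormald run, because the discontinuities are intrinsic to $\rho$ and to the finitary capitals, not an artifact of the sequentialization.

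Your upper-bound step has a second, related gap: you invoke an analogue of Lemma \ref{3:lem:z^*:epsilon} to produce a direction $\bm{v}$ along which all $f_\epsilon^{r,\alpha,\beta}$ and $f_\epsilon^m$ have strictly negative directional derivatives at $(\hat{\bm{z}}_\epsilon,\hat{\bm{\chi}}_\epsilon)$. But the functions $f_\epsilon^m$ are not differentiable in $\bm{\chi}$ where $\rho$ jumps, so these directional derivatives need not exist, and the ``stable fixed point implies no overshoot'' conclusion does not follow. The corresponding argument in Chapter \ref{chap:block:model} relies crucially on the fact that, in the pure default-contagion setting, the capitals are integer-valued and the relevant functions are smooth in $\bm{z}$ alone — this smoothness is lost once $\bm{\chi}$ enters through $h$ and $\rho$.

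The paper sidesteps both issues by \emph{not} attempting a joint microscopic Wormald analysis. Instead it uses the macro-alternation of Proposition \ref{5:prop:bounds:finite:systems}: run a complete default-contagion cascade (quantified via Theorem \ref{3:thm:general:weights}, keeping track not just of defaults but of the distribution of accumulated interbank losses per class), translate those losses into a modified exogenous-loss random variable $L'$, run a complete fire-sales cascade (quantified via Theorem \ref{4:thm:fire:sale:final:fraction}), and iterate. Finitariness guarantees only finitely many hyperplanes of discontinuity, so the macro-iteration reaches $(\hat{\bm{z}},\hat{\bm{\chi}})$ in a bounded number of macro-rounds. The upper bound is handled analogously with the constructing sequence $(\tilde{\bm{z}}(\delta),\tilde{\bm{\chi}}(\delta))$, using upper semi-continuity of $f^m$ and $g$ rather than any differentiability. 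If you want to salvage an interleaved microscopic analysis you would at minimum need to excise the hyperplanes of discontinuity from the time interval and stitch the Wormald approximations across them — at which point you have effectively reinvented the paper's macro-alternation.
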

The difficulty for this problem lies in the fact that the functions $f^{r,\alpha,\beta}$ 
are discontinuous in $\bm{\chi}$. That is, there exist values for $\bm{\chi}$ (sold assets) at which a linear fraction of banks defaults. 
However, $f^{r,\alpha,\beta}$ is discontinuous at $(\bm{z},\bm{\chi})$  only if $(\tilde{c}_j-\tilde{\ell}_j-\tilde{\bm{x}}_j\cdot h(\bm{\chi}))\in\N$ for some $j\in J$ and there are hence only finitely many (possibly degenerated) hyperplanes of discontinuities. 
\begin{proof}
We start with the proof of the lower bounds. That is, for arbitrary $\epsilon>0$ we will show that $n^{-1}\mathcal{S}_n \geq n^{-1}\hat{\mathcal{S}}_n\geq (1-\epsilon)\circG(\hat{\bm{z}},\hat{\bm{\chi}})$ and $\chi_n^m\geq\hat{\chi}_n^m\geq(1-\epsilon)\hat{\chi}^m$ w.\,h.\,p. We therefore consider the contagion process given by rounds (i') and (ii'). That is, we first consider a cascade of default contagion. Once this cascade has ended (after at most $n-1$ steps) we start a cascade of fire sales and so on.

In order to quantify the default contagion cascade we use Theorem \ref{3:thm:general:weights}. That is, if we denote by $\hat{\bm{z}}_1\in\R_{+,0}^V$ the smallest vector such that
\[ \E\left[W^{+,r,\alpha}\P\left(\sum_{s\in[R]}s\mathrm{Poi}\left(\sum_{\gamma\in[T]}W^{-,s,\gamma}\hat{z}_1^{s,\beta,\gamma}\right)> C-L\right)\1\{A=\beta\}\right]=\hat{z}_1^{r,\alpha,\beta} \]
for all $(r,\alpha,\beta)\in V$, then the systemic importance of finally defaulted banks is lower bounded by
\[ (1-\delta)n\sum_{\beta\in[T]}\E\left[S\P\left(\sum_{s\in[R]}s\mathrm{Poi}\left(\sum_{\gamma\in[T]}W^{-,s,\gamma}\hat{z}_1^{s,\beta,\gamma}\right)> C-L\right)\1\{A=\beta\}\right] \]
w.\,h.\,p.~for any fixed $\delta>0$. In fact, by finitariness of the system we can find $\theta>0$ small enough such that
\begin{align*}
&\E\left[W^{+,r,\alpha}\P\left(\sum_{s\in[R]}s\mathrm{Poi}\left(\sum_{\gamma\in[T]}W^{-,s,\gamma}z^{s,\beta,\gamma}\right)> C-L\right)\1\{A=\beta\}\right]\\
&\hspace{3cm} = 
\E\left[W^{+,r,\alpha}\P\left(\sum_{s\in[R]}s\mathrm{Poi}\left(\sum_{\gamma\in[T]}W^{-,s,\gamma}z^{s,\beta,\gamma}\right)\geq \left\lceil C-L+\theta\right\rceil\right)\1\{A=\beta\}\right]
\end{align*}
(note that $\left\lceil C-L + \theta\right\rceil$ is the weak limit of $\left\lceil C_n-L_n + \theta\right\rceil$ again by finitariness) and we are thus in the setting of Chapter \ref{chap:block:model}. 
However, while Theorem \ref{3:thm:general:weights} focuses on the systemic damage due to defaulted banks only, here it is also important to keep track of all losses due to defaults. In fact, the proof of Theorem \ref{3:thm:general:weights} for finitary systems shows that the number $q_{j,k}^\beta$ of institutions of class $j$ and type $\beta$ with a total edge weight from finally defaulted neighbors of at least $k\leq\tilde{c}_j$ is lower bounded by
\[ (1-\delta)np_j^\beta\P\left(\sum_{s\in[R]}s\mathrm{Poi}\left(\sum_{\gamma\in[T]}\tilde{w}_j^{-,s,\gamma}\hat{z}_1^{s,\beta,\gamma}\right)\geq k\right) \]
w.\,h.\,p.~for $\delta>0$. Since in this part of the proof we are interested in lower bounds, we assume in the following that $q_{j,\lceil\tilde{c}_j-\tilde{\ell}_j\rceil}^\beta=(1-\delta)np_j^\beta\P(\sum_{s\in[R]}s\mathrm{Poi}(\sum_{\gamma\in[T]}\tilde{w}_j^{-,s,\gamma}\hat{z}_1^{s,\beta,\gamma})\geq \lceil\tilde{c}_j-\tilde{\ell}_j\rceil)$, $q_{j,k}^\beta=(1-\delta)np_j^\beta\P(\sum_{s\in[R]}s\mathrm{Poi}(\sum_{\gamma\in[T]}\tilde{w}_j^{-,s,\gamma}\hat{z}_1^{s,\beta,\gamma}) = k)$ for $1\leq k\leq \lceil\tilde{c}_j-\tilde{\ell}_j\rceil-1$, and \mbox{$q_{j,0}^\beta=np_j^\beta(n)-\sum_{k=1}^{\lceil\tilde{c}_j-\tilde{\ell}_j\rceil}q_{j,k}^\beta$} w.\,h.\,p. That is, we increase the losses due to default contagion.

Next, we want to use Theorem \ref{4:thm:fire:sale:final:fraction} to quantify the impact of the round of fire sales. We need to consider losses (and defaults in particular) due to the previous default contagion cascade. That is, we need to add to the exogenous losses $\ell_i$ the edge weight from defaulted debtors of each bank $i\in[n]$. This leads to a new loss vector $\big(\ell_i'\big)_{i\in[n]}$. Note that we can set $\ell_i'=\tilde{\ell}_j+\lceil\tilde{c}_j-\tilde{\ell}_j\rceil$ if $i$ is of type $j$ and the total edge-weight $k$ from finally defaulted debtors of $i$ is larger or equal to $\lceil\tilde{c}_j-\tilde{\ell}_j\rceil$. Denoting by $L_n'$ a random vector distributed according to the empirical distribution function of $\big(\ell_i'\big)_{i\in[n]}$, we thus derive that w.\,h.\,p.
\begin{align*}
&\P\left(\bm{W}_n^-=\tilde{\bm{w}}_j^-,\bm{W}_n^+=\tilde{\bm{w}}_j^+,\bm{X}_n=\tilde{\bm{x}}_j,S_n=\tilde{s}_j,C_n=\tilde{c}_j,L_n'=\tilde{\ell}_j+k,A_n=\beta\right) \\
&\hspace{0.1cm} =\begin{cases}
(1-\delta)p_j^\beta\P\left(\sum_{s\in[R]}s\mathrm{Poi}\left(\sum_{\gamma\in[T]}\tilde{w}_j^{-,s,\gamma}\hat{z}_1^{s,\beta,\gamma}\right)\geq \left\lceil\tilde{c}_j-\tilde{\ell}_j\right\rceil\right),&\text{if }k=\left\lceil\tilde{c}_j-\tilde{\ell}_j\right\rceil,\\
(1-\delta)p_j^\beta\P\left(\sum_{s\in[R]}s\mathrm{Poi}\left(\sum_{\gamma\in[T]}\tilde{w}_j^{-,s,\gamma}\hat{z}_1^{s,\beta,\gamma}\right) = k\right),&\text{if }1\leq k\leq\left\lceil\tilde{c}_j-\tilde{\ell}_j\right\rceil-1, \\
p_j^\beta(n) - (1-\delta)p_j^\beta\P\left(\sum_{s\in[R]}s\mathrm{Poi}\left(\sum_{\gamma\in[T]}\tilde{w}_j^{-,s,\gamma}\hat{z}_1^{s,\beta,\gamma}\right)\geq 1\right),&\text{if }k=0.
\end{cases}
\end{align*}
For simplicity in the notation, we assume that from $(\tilde{\bm{x}}_j,\tilde{s}_j,\tilde{c}_j,\tilde{\ell}_j)=(\tilde{\bm{x}}_k,\tilde{s}_k,\tilde{c}_k,\tilde{\ell}_k)$ it follows \mbox{$j=k$} (i.\,e.~classes $j$ and $k$ are not distinguished by their in- and out-weights only) in the following. Otherwise consider sums over classes with the same asset holdings, systemic importance, capital and exogenous loss.

In particular, for the weak limit $(\bm{X},S,C,L',A)$ of $(\bm{X}_n,S_n,C_n,L_n',A_n)$ and $0\leq k\leq \lceil\tilde{c}_j-\tilde{\ell}_j\rceil$ it holds,
\[ \P\Big(\bm{X}=\tilde{\bm{x}}_j,S=\tilde{s}_j,C=\tilde{c}_j,L'=\tilde{\ell}_j+k,A=\beta\Big) \geq (1-\delta)p_j^\beta\P\Bigg(\sum_{s\in[R]}s\mathrm{Poi}\Bigg(\sum_{\gamma\in[T]}\tilde{w}_j^{-,s,\gamma}\hat{z}_1^{s,\beta,\gamma}\Bigg) = k\Bigg). \]
Let now
\[ \circFSuperSub{m}{\delta}(\bm{\chi}) := \E\left[X^m\circRho\left(\frac{L'+\bm{X}\cdot h(\bm{\chi})}{C}\right)\right] - \chi^m \]
the corresponding functions as in Chapter \ref{chap:fire:sales} and $\hat{\bm{\chi}}_\delta$ its smallest fixed point. Then
\begin{align*}
&\circFSuperSub{m}{\delta}(\bm{\chi})+\chi^m\\
&\qquad= \sum_{\beta\in[T]}\sum_{j\in [J]}\sum_{k\geq0}\tilde{x}_j^m\circRho\left(\frac{\tilde{\ell}_j+k+\tilde{\bm{x}}_j\cdot h(\bm{\chi})}{\tilde{c}_j}\right) \P(\bm{X}=\tilde{\bm{x}}_j,S=\tilde{s}_j,C=\tilde{c}_j,L'=\tilde{\ell}_j+k,A=\beta)\\
&\qquad\geq (1-\delta)\sum_{\beta\in[T]}\sum_{j\in [J]}\sum_{k\geq0}\tilde{x}_j^m\circRho\left(\frac{\tilde{\ell}_j+k+\tilde{\bm{x}}_j\cdot h(\bm{\chi})}{\tilde{c}_j}\right) p_j^\beta\P\Bigg(\sum_{s\in[R]}s\mathrm{Poi}\Bigg(\sum_{\gamma\in[T]}\tilde{w}_j^{-,s,\gamma}\hat{z}_1^{s,\beta,\gamma}\Bigg) = k\Bigg)\\
&\qquad= (1-\delta)(\circFSuper{m}(\hat{\bm{z}}_1,\bm{\chi})+\chi^m)
\end{align*}
and $\circFSuperSub{m}{\delta}(\bm{\chi})\geq(1-\delta)\circFSuper{m}(\hat{\bm{z}}_1,\bm{\chi})-\delta\chi^m\geq\circFSuper{m}(\hat{\bm{z}}_1,\bm{\chi})-\delta\E[X^m]$. In particular, $\liminf_{\delta\to0+}\circFSuperSub{m}{\delta}(\bm{\chi})\geq\circFSuper{m}(\hat{\bm{z}}_1,\bm{\chi})$ and by Lemma \ref{4:lem:convergence:chi:hat} we derive that $\liminf_{\delta\to0+}\hat{\bm{\chi}}_\delta\geq\hat{\bm{\chi}}_1$, where $\hat{\bm{\chi}}_1$ denotes the smallest joint root of the functions $\circFSuper{m}(\hat{\bm{z}}_1,\bm{\chi})$, $m\in[M]_0$, for fixed $\bm{z}=\hat{\bm{z}}_1$.

We can hence choose $\delta$ small enough such that the number of finally sold shares of asset $m$ is lower bounded by $n(1-\epsilon)\hat{\chi}_1^m$ w.\,h.\,p.~by Theorem \ref{4:thm:fire:sale:final:fraction}. Further, for\begin{align*}
&\circG_\delta(\bm{\chi}) := \E\left[S\1\left\{L'+\bm{X}\cdot h(\bm{\chi})> C\right\}\right]\\
&\hphantom{:}=\sum_{\beta\in[T]}\sum_{j\in[J]}\tilde{s}_j\sum_{k\geq0}\1\left\{ k > \tilde{c}_j-\tilde{\ell}_j-\tilde{\bm{x}}_j\cdot h(\bm{\chi}) \right\}\P\left(\bm{X}=\tilde{\bm{x}}_j,S=\tilde{s}_j,C=\tilde{c}_j,L'=\tilde{\ell}_j+k,A=\beta\right)\\
&\hphantom{:}\geq(1-\delta)\sum_{\beta\in[T]}\sum_{j\in[J]}\tilde{s}_j\sum_{k\geq0} \1\left\{ k > \tilde{c}_j-\tilde{\ell}_j-\tilde{\bm{x}}_j\cdot h(\bm{\chi}) \right\}p_j^\beta\P\Bigg(\sum_{s\in[R]}s\mathrm{Poi}\Bigg(\sum_{\gamma\in[T]}\tilde{w}_j^{-,s,\gamma}\hat{z}_1^{s,\beta,\gamma}\Bigg) = k\Bigg)\\
&\hphantom{:}= (1-\delta)\circG(\hat{\bm{z}}_1,\bm{\chi})
\end{align*}
and possibly further reducing $\delta$, we derive $n^{-1}\hat{\mathcal{S}}_n\geq\sqrt{1-\epsilon}\circG_\delta(\hat{\bm{\chi}}_\delta)\geq(1-\epsilon)\circG(\hat{\bm{z}}_1,\hat{\bm{\chi}}_1)$ w.\,h.\,p. So if $(\hat{\bm{z}}_1,\hat{\bm{\chi}}_1)=(\hat{\bm{z}},\hat{\bm{\chi}})$, then this finishes the proof of the lower bounds.

If $(\hat{\bm{z}}_1,\hat{\bm{\chi}}_1)\neq(\hat{\bm{z}},\hat{\bm{\chi}})$, then by construction of $(\hat{\bm{z}},\hat{\bm{\chi}})$, $\hat{\bm{z}}_1$ and $\hat{\bm{\chi}}_1$ it must hold that $\hat{\bm{\chi}}_1\leq\hat{\bm{\chi}}$ and $\hat{\bm{z}}_1\lneq\hat{\bm{z}}_2\leq\hat{\bm{z}}$, where $\hat{\bm{z}}_2\in\R_{+,0}^V$ denotes the smallest vector such that for all $(r,\alpha,\beta)\in V$,
\[ \E\left[W^{+,r,\alpha}\P\left(\sum_{s\in[R]}s\mathrm{Poi}\left(\sum_{\gamma\in[T]}W^{-,s,\gamma}\hat{z}_2^{s,\beta,\gamma}\right)> C-L-\bm{X}\cdot h(\hat{\bm{\chi}}_1)\right)\1\{A=\beta\}\right] = \hat{z}_2^{r,\alpha,\beta}. \]
The next step in the cascade process would now be the default contagion cascade from (i') starting from the state of the system after the fire sales cascade. Note, however, that (w.\,h.\,p.) we can equivalently restart the whole cascade process if for the default contagion cascade we choose capitals $c_i-\ell_i-\bm{x}_i\cdot h((1-\epsilon)\hat{\bm{\chi}}_1)$. If anything this reduces contagion effects which is alright because we are interested in lower bounds.

By the finitariness of the system, if we choose $\epsilon$ small enough, then $\hat{\bm{z}}_2$ is also the smallest solution of
\[ \E\left[W^{+,r,\alpha}\P\left(\sum_{s\in[R]}s\mathrm{Poi}\left(\sum_{\gamma\in[T]}W^{-,s,\gamma}\hat{z}_2^{s,\beta,\gamma}\right)> C-L-\bm{X}\cdot h((1-\epsilon)\hat{\bm{\chi}}_1)\right)\1\{A=\beta\}\right] = \hat{z}_2^{r,\alpha,\beta}. \]
and we can hence use Theorem \ref{3:thm:general:weights} to quantify the default contagion cascade. By exactly the same means as above we can then translate the losses due to default contagion into exogenous losses and investigate the fire sales process by Theorem \ref{4:thm:fire:sale:final:fraction} and we derive that the vector of finally sold shares is lower bounded by $n(1-\epsilon)\hat{\bm{\chi}}_2$ w.\,h.\,p., where $\hat{\bm{\chi}}_2$ denotes the smallest joint root of the functions $\circFSuper{m}(\hat{\bm{z}}_2,\bm{\chi})$ for fixed $\bm{z}=\hat{\bm{z}}_2$, and $n^{-1}\hat{\mathcal{S}}_n\geq(1-\epsilon)\circG(\hat{\bm{z}}_2,\hat{\bm{\chi}}_2)$ w.\,h.\,p.

Again, if $(\hat{\bm{z}}_2,\hat{\bm{\chi}}_2)=(\hat{\bm{z}},\hat{\bm{\chi}})$, then this finishes the proof of the lower bounds. Otherwise we can continue on for $t\geq3$. Note, however, that $\hat{\bm{z}}\gneq\hat{\bm{z}}_t$ is only possible if $\hat{\bm{\chi}}_{t-1}$ and $\hat{\bm{\chi}}_t$ are separated by a hyperplane of discontinuity of $f^{r,\alpha,\beta}$ for some $(r,\alpha,\beta)\in V$ (and hence the fire sales lead to further defaults in the system). However, as remarked earlier, there can only be finitely many such hyperplanes for finitary systems. Hence by the procedure outlined above, we will reach $\hat{\bm{z}}$ in finitely many steps and hence the end results still hold w.\,h.\,p.

We can now turn to the second part of the proof. We consider the contagion process in rounds (i) and (ii) to derive upper bounds on $\overline{\mathcal{S}}_n$ and $\overline{\bm{\chi}}_n$. Let $(\tilde{\bm{z}}(\delta),\tilde{\bm{\chi}}(\delta))_{\delta>0}$ be the constructing sequence of $(\bm{z}^*,\bm{\chi}^*)$ analogue to Remark \ref{4:rem:sequence:chi:*}. Then note that by upper semi-continuity and the discrete nature of $f^{r,\alpha,\beta}$ we can find $\Delta>0$ such that $f^{r,\alpha,\beta}(\bm{z},\tilde{\bm{\chi}}(\delta))=f^{r,\alpha,\beta}(\bm{z},\bm{\chi}^*)$ for all $0\leq\delta<\Delta$, $(r,\alpha,\beta)\in V$ and $\bm{z}\in\R_{+,0}^V$. 

Fix now some $\delta\in(0,\Delta)$ and consider the financial system with reduced capital values $c_i-\ell_i-\bm{x}_i\cdot h(\tilde{\bm{\chi}}(\delta))$ for each bank $i\in[n]$. We only want to consider the default contagion process in this new financial system and we are hence in the setting of Chapter \ref{chap:block:model} with limiting random variables $(\bm{W}^-,\bm{W}^+,S,\lceil C-L-\bm{X}\cdot h(\tilde{\bm{\chi}}(\delta))\rceil^+,A)$. Note that by finitariness the regularity transfers. By the choice of $\delta$ above, we derive that for $\bm{z}_\delta^*$ in this new financial system, it holds $\bm{z}_\delta^*=\bm{z}^*$ 
and by Theorem \ref{3:thm:general:weights} 
we know that the final systemic damage in the new system is upper bounded by
\begin{equation}\label{5:eqn:upper:bound:D:delta}
n\sum_{\beta\in[T]}\E\left[S\P\left(\sum_{s\in[R]}s\mathrm{Poi}\left(\sum_{\gamma\in[T]}W^{-,s,\gamma}(z^*)^{s,\beta,\gamma}\right)\geq C-L-\bm{X}\cdot h(\tilde{\bm{\chi}}(\delta))\right)\1\{A=\beta\}\right]+o_p(n).
\end{equation}
The proof of Theorem \ref{3:thm:general:weights} actually shows that the number of banks of type $\beta$ and class $j$ with at least an edge-weight of $k$ from defaulted neighbors at the end of the default contagion process is upper bounded by
\[ (1+\epsilon)np_j^\beta\P\left(\sum_{s\in[R]}s\mathrm{Poi}\left(\sum_{\gamma\in[T]}\tilde{w}_j^{-,s,\gamma}(z^*)^{s,\beta,\gamma}\right)\geq k\right) \]
w.\,h.\,p.~for any fixed $\epsilon>0$.

Similarly as in the first part of this proof (for the lower bounds) we can then construct a fire sales system (as in Chapter \ref{chap:fire:sales}) with limiting random vector $(\bm{X},S,C,L')$ such that
\begin{align*}
&\P(\bm{X}=\tilde{\bm{x}}_j,S=\tilde{s}_j,C=\tilde{c}_j,L'=\tilde{\ell}_j+k,A=\beta)\\
&\hspace{0.75cm}= (1+\epsilon) p_j^\beta\P\left(\sum_{s\in[R]}s\mathrm{Poi}\left(\sum_{\gamma\in[T]}\tilde{w}_j^{-,s,\gamma}(z^*)^{s,\beta,\gamma}\right) = k\right),\quad 0\leq k<\left\lceil\tilde{c}_j-\tilde{\ell}_j-\tilde{\bm{x}}_j\cdot h(\tilde{\bm{\chi}}(\delta))\right\rceil,\\
&\P\left(\bm{X}=\tilde{\bm{x}}_j,S=\tilde{s}_j,C=\tilde{c}_j,L'=\tilde{\ell}_j+\left\lceil\tilde{c}_j-\tilde{\ell}_j-\tilde{\bm{x}}_j\cdot h(\tilde{\bm{\chi}}(\delta))\right\rceil,A=\beta\right)\\
&\hspace{2.75cm}= p_j^\beta - \sum_{k=0}^{\left\lceil\tilde{c}_j-\tilde{\ell}_j-\tilde{\bm{x}}_j\cdot h(\tilde{\bm{\chi}}(\delta))\right\rceil-1}(1+\epsilon) p_j^\beta\P\left(\sum_{s\in[R]}s\mathrm{Poi}\left(\sum_{\gamma\in[T]}\tilde{w}_j^{-,s,\gamma}(z^*)^{s,\beta,\gamma}\right) = k\right)\\
&\hspace{2.75cm}\leq (1+\epsilon) p_j^\beta\P\left(\sum_{s\in[R]}s\mathrm{Poi}\left(\sum_{\gamma\in[T]}\tilde{w}_j^{-,s,\gamma}(z^*)^{s,\beta,\gamma}\right) \geq \left\lceil\tilde{c}_j-\tilde{\ell}_j-\tilde{\bm{x}}_j\cdot h(\tilde{\bm{\chi}}(\delta))\right\rceil \right)
\end{align*}
which dominates the stochastic final state after the default contagion cascade w.\,h.\,p.

Let now
\[ f_\epsilon^m(\bm{\chi}) := \E\left[X^m\rho\left(\frac{L'+\bm{X}\cdot h(\bm{\chi})}{C}\right)\right] - \chi^m \]
the corresponding functions for the 
fire sales system as in Chapter \ref{chap:fire:sales} and $\bm{\chi}_\epsilon^*$ the corresponding value for $\bm{\chi}^*$ in Chapter \ref{chap:fire:sales}. Then for $f^m(\bm{z},\bm{\chi})$ as in Section \ref{5:asymp:res:sys},
\begin{align*}
&f_\epsilon^m(\bm{\chi})+\chi^m\\
&\hspace{0.25cm}= \sum_{j\in J}\sum_{k\geq0}\tilde{x}_j^m\rho\left(\frac{\tilde{\ell}_j+k+\tilde{\bm{x}}_j\cdot h(\bm{\chi})}{\tilde{c}_j}\right) \P(\bm{X}=\tilde{\bm{x}}_j,C=\tilde{c}_j,L'=\tilde{\ell}_j+k)\\
&\hspace{0.25cm}\leq \sum_{\beta\in[T]}\sum_{j\in J}\sum_{k\geq0}\tilde{x}_j^m\rho\left(\frac{\tilde{\ell}_j+k+\tilde{\bm{x}}_j\cdot h(\bm{\chi})}{\tilde{c}_j}\right) (1+\epsilon) p_j^\beta\P\left(\sum_{s\in[R]}s\mathrm{Poi}\left(\sum_{\gamma\in[T]}\tilde{w}_j^{-,s,\gamma}(z^*)^{s,\beta,\gamma}\right) = k\right)\\
&\hspace{0.25cm}= (1+\epsilon)(f^m(\bm{z}^*,\bm{\chi})+\chi^m).
\end{align*}
In particular, we can choose $\epsilon>0$ small enough such that
\[ f_\epsilon^m(\tilde{\bm{\chi}}(\delta/2)) \leq f^m(\bm{z}^*,\tilde{\bm{\chi}}(\delta/2))+\epsilon\big(f^m(\bm{z}^*,\tilde{\bm{\chi}}(\delta/2))+\tilde{\chi}^m(\delta/2)\big) \leq -\delta/2 + \epsilon\E[X^m] < 0, \]
where in the last inequality it was used that $f^m(\bm{z}^*, \tilde{\bm{\chi}}(\delta/2))\leq f^m(\tilde{\bm{z}}(\delta/2), \tilde{\bm{\chi}}(\delta/2))=-\delta/2$. We can hence conclude that $\bm{\chi}_\epsilon^* \leq \tilde{\bm{\chi}}(\delta/2)$ componentwise. By Theorem \ref{4:thm:fire:sale:final:fraction} we thus derive that the number of finally sold shares of asset $m$ in the fire sales system $(\bm{X},S,C,L')$ is upper bounded by $n((\chi_\epsilon^*)^m+o(1))\leq n(\tilde{\chi}^m(\delta/2)+o(1)) \leq n\tilde{\chi}^m(\delta)$, where the last inequality holds for $n$ large enough since $\tilde{\chi}^m(\delta)>0$. 

The idea for the rest of this proof is now to apply this upper bound on the number of finally sold shares inductively in each step of the contagion process. Again we consider the contagion process with steps (i) and (ii). Then in iteration $1\leq k\leq n-1$ we derive the smallest set $\mathcal{D}_k\subseteq[n]$ such that
\[ \mathcal{D}_k = \left\{ i\in[n]\,:\,\sum_{j\in \mathcal{D}_k}e_{j,i}\geq c_i-\ell_i-\bm{x}_i\cdot h(\bm{\chi}_{k-1}) \right\} \]
and the smallest vector $\bm{\chi}_k$ such that
\[ \bm{\chi}_k = n^{-1}\sum_{i\in[n]}\bm{x}_i \rho\left(\frac{\sum_{j\in\mathcal{D}_k}e_{j,i}+\ell_i+\bm{x}_i\cdot h(\bm{\chi}_k)}{c_i}\right). \]
In particular, since $\bm{\chi}_0=\bm{0}\leq\tilde{\bm{\chi}}(\delta)$, we derive that $\mathcal{D}_1\subseteq\mathcal{D}^\delta$, where $\mathcal{D}^\delta\subseteq[n]$ is the smallest set such that
\begin{equation}\label{5:eqn:D:delta}
\mathcal{D}^\delta = \left\{ i\in[n]\,:\,\sum_{j\in \mathcal{D}^\delta}e_{j,i}\geq c_i-\ell_i-\bm{x}_i\cdot h(\tilde{\bm{\chi}}(\delta)) \right\}
\end{equation}
and hence $\bm{\chi}_1\leq\bm{\chi}^\delta$, where $\bm{\chi}^\delta$ denotes the smallest vector such that
\begin{equation}\label{5:eqn:chi:delta}
\bm{\chi}^\delta = n^{-1}\sum_{i\in[n]}\bm{x}_i \rho\left(\frac{\sum_{j\in\mathcal{D}^\delta}e_{j,i}+\ell_i+\bm{x}_i\cdot h(\bm{\chi}^\delta)}{c_i}\right).
\end{equation}
However, \eqref{5:eqn:D:delta} is exactly the cascade of default contagion with initial capitals given by \mbox{$c_i-\ell_i-\bm{x}_i\cdot h(\tilde{\bm{\chi}}(\delta))$}, $i\in[n]$, which we considered before and \eqref{5:eqn:chi:delta} the subsequent cascade of fire sales for which we showed that the vector of finally sold shares is upper bounded by $n\tilde{\bm{\chi}}(\delta)$ w.\,h.\,p. We can then consider the second iteration and derive that w.\,h.\,p.~$\mathcal{D}_2\subseteq\mathcal{D}^\delta$ and $\bm{\chi}_2\leq\bm{\chi}^\delta$. Inductively this shows that w.\,h.\,p.~$\mathcal{D}_k\subseteq\mathcal{D}^\delta$ and $\bm{\chi}_k\leq\bm{\chi}^\delta$ for each fixed $k\in\N$ \mbox{(independent of $n$).}

Now note that because of the finitariness of the system, the contagion process stops after a bounded (independent of $n$) number of iterations. We have thus shown that also for the final vector of sold shares $\bm{\chi}_n$ it holds $\bm{\chi}_n\leq\tilde{\bm{\chi}}(\delta)$ w.\,h.\,p. Letting $\delta\to0$ this proves the upper \mbox{bound on $\bm{\chi}_n$.}

For the final systemic damage note that w.\,h.\,p.~$\mathcal{D}_n\subseteq\mathcal{D}^\delta$ and hence $n^{-1}\mathcal{S}_n \leq n^{-1}\mathcal{S}^\delta$. But
\begin{align*}
n^{-1}\mathcal{S}^\delta &\leq \E\left[S\P\left(\sum_{s\in[R]}s\mathrm{Poi}\left(\sum_{\gamma\in[T]}W^{-,s,\gamma}(z^*)^{s,\beta,\gamma}\right)\geq C-L-\bm{X}\cdot h(\tilde{\bm{\chi}}(\delta))\right)\right] + o_p(1)\\
&= g(\bm{z}^*,\tilde{\bm{\chi}}(\delta)) + o_p(1)
\end{align*}
by \eqref{5:eqn:upper:bound:D:delta}. Using upper semi-continuity of $g$ and letting $\delta\to0$ this finishes the proof.
\end{proof}

\subsection{Proof of Theorem \ref{5:thm:final:fraction:combined:general}}\label{5:ssec:proofs:main}
In this section, we show how the validity of Theorem \ref{5:thm:final:fraction:combined:finitary} can be extended to the case of general (non-finitary) regular vertex sequences. The idea is the following: We will approximate the given regular vertex sequence from below and from above by finitary vertex sequences and couple the contagion processes in those system such that the final default fraction and the number of sold shares is under- or overestimated by the finitary systems.

We will describe the finitary systems by their distribution functions $\{F_k^A\}_{k\in\N}$ and $\{F_k^B\}_{k\in\N}$ respectively in the following and we need to ensure that the functions $g$ and $f^m$, $m\in[M]$, are approximated close enough. To this end, consider the integrands
\begin{align*}
&h_g(\bm{z},\bm{\chi};\bm{w}^-,\bm{w}^+,\bm{x},s,c,\ell,\tau)\\
&\hspace*{0.1cm}:= s \sum_{\beta\in[T]}\psi\left(\sum_{\gamma\in[T]}w^{-,1,\gamma}z^{1,\beta,\gamma},\ldots,\sum_{\gamma\in[T]}w^{-,R,\gamma}z^{R,\beta,\gamma};c-l-\bm{x}\cdot h(\bm{\chi})\right)\1\{\tau=\beta\},\\
&h_f^{r,\alpha,\beta}(\bm{z},\bm{\chi};\bm{w}^-,\bm{w}^+,\bm{x},s,c,\ell,\tau)\\
&\hspace*{0.1cm}:= w^{+,r,\alpha}\psi\left(\sum_{\gamma\in[T]}w^{-,1,\gamma}z^{1,\beta,\gamma},\ldots,\sum_{\gamma\in[T]}w^{-,R,\gamma}z^{R,\beta,\gamma};c-l-\bm{x}\cdot h(\bm{\chi})\right)\1\{\tau=\beta\},\hspace*{0.1cm}(r,\alpha,\beta)\in V,\\
&h_f^m(\bm{z},\bm{\chi};\bm{w}^-,\bm{w}^+,\bm{x},s,c,\ell,\tau)\\
&\hspace*{0.1cm}:= x^m \sum_{\beta\in[T]}\phi\left(\sum_{\gamma\in[T]}w^{-,1,\gamma}z^{1,\beta,\gamma},\ldots,\sum_{\gamma\in[T]}w^{-,R,\gamma}z^{R,\beta,\gamma};l+\bm{x}\cdot h(\bm{\chi}),c\right)\1\{\tau=\beta\},\hspace*{0.1cm} m\in[M],
\end{align*}
for $(\bm{z},\bm{\chi},\bm{w}^-,\bm{w}^+,\bm{x},s,c,\ell,\tau)\in\R_{+,0}^V\times\R_{+,0}^M\times D_\infty$, with $\bm{w}^{\pm}=(w^{\pm,r,\alpha})_{r\in[R],\alpha\in[T]}$ as well as \mbox{$D_\infty:=\big(\R_{+,0}^{[R]\times[T]}\big)^2\times\R_{+,0}^{M+3}\times[T]$} and where $\psi$ and $\phi$ are as defined in Section \ref{5:asymp:res:sys}. Then
\begin{align*}
g(\bm{z},\bm{\chi}) &= \int_{D_\infty} h_g(\bm{z},\bm{\chi};\bm{w}^-,\bm{w}^+,\bm{x},s,c,\ell,\tau) \dd F(\bm{w}^-,\bm{w}^+,\bm{x},s,c,\ell,\tau),\\
f^{r,\alpha,\beta}(\bm{z},\bm{\chi}) &= \int_{D_\infty} h_f^{r,\alpha,\beta}(\bm{z},\bm{\chi};\bm{w}^-,\bm{w}^+,\bm{x},s,c,\ell,\tau) \dd F(\bm{w}^-,\bm{w}^+,\bm{x},s,c,\ell,\tau) - z^{r,\alpha,\beta},\hspace*{0.1cm}(r,\alpha,\beta)\in V,\\
f^m(\bm{z},\bm{\chi}) &= \int_{D_\infty} h_f^m(\bm{z},\bm{\chi};\bm{w}^-,\bm{w}^+,\bm{x},s,c,\ell,\tau) \dd F(\bm{w}^-,\bm{w}^+,\bm{x},s,c,\ell,\tau) - \chi^m,\hspace*{0.1cm} m\in[M],
\end{align*}
where $F$ denotes the distribution function of $(\bm{W}^-,\bm{W}^+,\bm{X},S,C,L,A)$ (note that the integrands vanish for $c=\infty$ and it is thus sufficient to integrate over $D_\infty$). We denote in the following
\[ H := \{h_g\}\cup\bigcup_{(r,\alpha,\beta)\in V}\{h_f^{r,\alpha,\beta}\}\cup\bigcup_{m\in[M]}\{h_f^m\} \]
and $Z:=[\bm{0},\bm{\zeta}]\times[\bm{0},\bm{\eta}]\subset\R_{+,0}^V\times\R_{+,0}^M$, for $\zeta^{r,\alpha,\beta}=\E[W^{+,r,\alpha}\1\{A=\beta\}]$, $(r,\alpha,\beta)\in V$, and $\eta^m=\E[X^m]$, $m\in[M]$. If we further let
\begin{align*}
&\circHSub{g}(\bm{z},\bm{\chi};\bm{w}^-,\bm{w}^+,\bm{x},s,c,\ell,\tau)\\
&\hspace*{0.1cm}:= s \sum_{\beta\in[T]}\circPsi\left(\sum_{\gamma\in[T]}w^{-,1,\gamma}z^{1,\beta,\gamma},\ldots,\sum_{\gamma\in[T]}w^{-,R,\gamma}z^{R,\beta,\gamma};c-l-\bm{x}\cdot h(\bm{\chi})\right)\1\{\tau=\beta\},\\
&\circHSuperSub{r,\alpha,\beta}{f}(\bm{z},\bm{\chi};\bm{w}^-,\bm{w}^+,\bm{x},s,c,\ell,\tau)\\
&\hspace*{0.1cm}:= w^{+,r,\alpha}\circPsi\left(\sum_{\gamma\in[T]}w^{-,1,\gamma}z^{1,\beta,\gamma},\ldots,\sum_{\gamma\in[T]}w^{-,R,\gamma}z^{R,\beta,\gamma};c-l-\bm{x}\cdot h(\bm{\chi})\right)\1\{\tau=\beta\},\hspace*{0.1cm}(r,\alpha,\beta)\in V,\\
&\circHSuperSub{m}{f}(\bm{z},\bm{\chi};\bm{w}^-,\bm{w}^+,\bm{x},s,c,\ell,\tau)\\
&\hspace*{0.1cm}:= x^m \sum_{\beta\in[T]}\circPhi\left(\sum_{\gamma\in[T]}w^{-,1,\gamma}z^{1,\beta,\gamma},\ldots,\sum_{\gamma\in[T]}w^{-,R,\gamma}z^{R,\beta,\gamma};l+\bm{x}\cdot h(\bm{\chi}),c\right)\1\{\tau=\beta\},\hspace*{0.1cm} m\in[M],
\end{align*}
where $\circPsi$ and $\circPhi$ are defined as in Section \ref{5:asymp:res:sys}, then it holds
\begin{align*}
\circG(\bm{z},\bm{\chi}) &= \int_{D_\infty} \circHSub{g}(\bm{z},\bm{\chi};\bm{w}^-,\bm{w}^+,\bm{x},s,c,\ell,\tau) \dd F(\bm{w}^-,\bm{w}^+,\bm{x},s,c,\ell,\tau),\\
\circFSuper{r,\alpha,\beta}(\bm{z},\bm{\chi}) &= \int_{D_\infty} \circHSuperSub{r,\alpha,\beta}{f}(\bm{z},\bm{\chi};\bm{w}^-,\bm{w}^+,\bm{x},s,c,\ell,\tau) \dd F(\bm{w}^-,\bm{w}^+,\bm{x},s,c,\ell,\tau) - z^{r,\alpha,\beta},\hspace*{0.1cm}(r,\alpha,\beta)\in V,\\
\circFSuper{m}(\bm{z},\bm{\chi}) &= \int_{D_\infty} \circHSuperSub{m}{f}(\bm{z},\bm{\chi};\bm{w}^-,\bm{w}^+,\bm{x},s,c,\ell,\tau) \dd F(\bm{w}^-,\bm{w}^+,\bm{x},s,c,\ell,\tau) - \chi^m,\hspace*{0.1cm} m\in[M].
\end{align*}
Also denote $\accentset{\circ}{H}:=\{\circHSub{g}\}\cup\bigcup_{(r,\alpha,\beta)\in V}\{\circHSuperSub{r,\alpha,\beta}{f}\}\cup\bigcup_{m\in[M]_0}\{\circHSuperSub{m}{f}\}$. 

For $j\in\N$, consider now discretizations
\begin{gather*}
\tilde{F}_j^A(\bm{w}^-,\bm{w}^+,\bm{x},s,c,\ell,\tau) := F\left(\frac{\lceil j\bm{w}^-\rceil}{j}, \frac{\lceil j\bm{w}^+\rceil}{j},\frac{\lceil j\bm{x}\rceil}{j},\frac{\lfloor js\rfloor}{j},\frac{\lfloor jc\rfloor}{j},\frac{\lceil j\ell\rceil}{j},\tau\right),\\
\tilde{F}_j^B(\bm{w}^-,\bm{w}^+,\bm{x},s,c,\ell,\tau) := F\left(\frac{\lfloor j\bm{w}^-\rfloor}{j}, \frac{\lfloor j\bm{w}^+\rfloor}{j},\frac{\lfloor j\bm{x}\rfloor}{j},\frac{\lceil js\rceil}{j},\frac{\lceil jc\rceil}{j},\frac{\lfloor j\ell\rfloor}{j},\tau\right),
\end{gather*}
where $\lceil\cdot\rceil$ and $\lfloor\cdot\rfloor$ shall be applied componentwise on the vectors $j\bm{w}^-$, $j\bm{w}^+$ and $j\bm{x}$. 
In particular, the distributions $\big\{\tilde{F}_j^{A,B}\big\}_{j\in\N}$ converge to $F$. Choose now $k\in\N$, $h\in H$ and let for $0\leq s\leq 2k^2$ the sets
\[ I_s(z,\bm{\chi}) := \left\{(\bm{w}^-,\bm{w}^+,\bm{x},s,c,\ell,\tau)\,:\,h(\bm{z},\bm{\chi};\bm{w}^-,\bm{w}^+,\bm{x},s,c,\ell,\tau) \geq \frac{s}{2k}\right\} \]
which are closed by upper semi-continuity of $h$. Further, let
\[ \hat{h}(\bm{z},\bm{\chi};\bm{w}^-,\bm{w}^+,\bm{x},s,c,\ell,\tau) := \frac{1}{2k}\sum_{s=0}^{2k^2}\1\left\{(\bm{w}^-,\bm{w}^+,\bm{x},s,c,\ell,\tau)\in I_s(z,\bm{\chi})\right\} \]
such that in particular $h-(2k)^{-1}\leq \hat{h} \leq h \leq k$ on $\R_{+,0}^V\times\R_{+,0}^M\times D_k$ with
\[ D_k := \left\{(\bm{w}^-,\bm{w}^+,\bm{x},s,c,\ell,\tau)\,:\,\bm{w}^-\leq k\bm{\1}, \bm{w}^+\leq k\bm{\1}, \bm{x}\leq k\bm{\1}, s\leq k, c\leq k, \ell\leq k\right\}. \] By the Portmanteau theorem, we then know that for $j\geq j_k$ large enough it holds
\[ \int_{D_k}\hat{h}\dd\tilde{F}_j^{A,B} - \int_{D_k}\hat{h}\dd F \leq \frac{1}{2k} \]
and hence
\begin{equation}\label{5:eqn:portmanteau:upper}
\int_{D_k} h(\bm{z},\bm{\chi};\bm{w}^-,\bm{w}^+,\bm{x},s,c,\ell,\tau)\dd \tilde{F}_j^{A,B}
 - \int_{D_k} h(\bm{z},\bm{\chi};\bm{w}^-,\bm{w}^+,\bm{x},s,c,\ell,\tau)\dd F
  \leq k^{-1}
\end{equation}
Completely analogue, but choosing
\[ \accentset{\circ}{I}_s(z,\bm{\chi}) := \left\{(\bm{w}^-,\bm{w}^+,\bm{x},s,c,\ell,\tau)\,:\,\circH(\bm{z},\bm{\chi};\bm{w}^-,\bm{w}^+,\bm{x},s,c,\ell,\tau) > \frac{s}{2k}\right\}, \]
we derive for $\circH\in\accentset{\circ}{H}$ and $j\geq j_k$ (possibly increase $j_k$) that
\begin{equation}\label{5:eqn:portmanteau:lower}
\int_{\circDSub{k}} \circH(\bm{z},\bm{\chi};\bm{w}^-,\bm{w}^+,\bm{x},s,c,\ell,\tau)\dd \tilde{F}_j^{A,B}
 - \int_{\circDSub{k}} \circH(\bm{z},\bm{\chi};\bm{w}^-,\bm{w}^+,\bm{x},s,c,\ell,\tau)\dd F
  \geq -k^{-1},
\end{equation}
where $\circDSub{k}:=\{(\bm{w}^-,\bm{w}^+,\bm{x},s,c,\ell,\tau)\,:\,\bm{w}^-<k\bm{\1}, \bm{w}^+<k\bm{\1}, \bm{x}<k\bm{\1}, s<k, c<k, \ell<k\}$. We denote $\overline{F}_k^A:=\tilde{F}_{j_k}^A$ and $\overline{F}_k^B:=\tilde{F}_{j_k}^B$ in the following.

Note that $\overline{F}_k^{A,B}$ already describe discrete distributions approximating $F$ from below and from above. To qualify as a distribution function of a finitary vertex sequence, however, only finitely many atoms are allowed. For the lower bound, we thus choose
\[ F_k^A(\bm{w}^-,\bm{w}^+,\bm{x},s,c,\ell,\tau) := \begin{cases}\overline{F}_k^A(\bm{w}^-\wedge k,\bm{w}^+\wedge k,\bm{x}\wedge k,s\wedge k,c\wedge k,\ell\wedge k,\tau),&\text{if }c<\infty,\\1,&\text{else},\end{cases} \]
thus setting capital to $\infty$ for banks with in-weight, out-weight, asset holdings, capital or exogenous losses larger than $k$. We call such banks \emph{large} in the following. Banks with infinite capital keep their capital. As banks with infinite capital cannot ever default or sell any asset shares anyway we set their weights, asset holdings and losses all to zero. In particular, above choice reduces contagion in the system even further and thus for all $k\in\N$ the final systemic damage $n^{-1}(\mathcal{S}_k^A)_n$ is stochastically dominated by $n^{-1}\mathcal{S}_n$. The same holds for the number of finally sold shares of the assets.

We now want to construct the upper bound distribution function $F_k^B$. That is, we need to accumulate the contagious potential of all large banks to finitely many point masses. Thus denote the fraction of large $\beta$-banks in the system by
\[ \gamma_k^\beta := \int_{D_k^c} \1\{\tau=\beta\} \dd F(\bm{w}^-,\bm{w}^+,\bm{x},s,c,\ell,\tau), \]
where $D_k^c:=D_\infty\backslash D_k$, and
\begin{align*}
(\overline{w}_k^\beta)^{r,\alpha} &:= \begin{cases}2(\gamma_k^\beta)^{-1}\int_{D_k^c}w^{+,r,\alpha}\1\{\tau=\beta\}\dd F(\bm{w}^-,\bm{w}^+,\bm{x},s,c,\ell,\tau)\geq 2k,\hspace*{-0.2cm}&\text{if }\gamma_k^\beta>0,\\2k,&\text{if }\gamma_k^\beta=0,\end{cases}~(r,\alpha,\beta)\in V,\\
(\overline{x}_k^\beta)^m &:= \begin{cases}2(\gamma_k^\beta)^{-1}\int_{D_k^c}x^m\1\{\tau=\beta\}\dd F(\bm{w}^-,\bm{w}^+,\bm{x},s,c,\ell,\tau)\geq 2k,\hspace*{0.3cm}&\text{if }\gamma_k^\beta>0,\\2k,&\text{if }\gamma_k^\beta=0,\end{cases}~~ m\in[M],\\
\overline{s}_k^\beta &:= \begin{cases}2(\gamma_k^\beta)^{-1}\int_{D_k^c}s\1\{\tau=\beta\}\dd F(\bm{w}^-,\bm{w}^+,\bm{x},s,c,\ell,\tau)\geq 2k,\hspace*{0.6cm}&\text{if }\gamma_k^\beta>0,\\2k,&\text{if }\gamma_k^\beta=0.\end{cases}
\end{align*}
Similar as for the lower bound before, we now let $F_k^B$ be given by $\overline{F}_k^B$ on $D_k$. Moreover, let $F_k^B$ assign the remaining masses $\gamma_k^\beta$ to the points $(\bm{0},\overline{\bm{w}}_k^\beta,\overline{\bm{x}}_k^\beta,\overline{s}_k^\beta,0,0,\beta)$. As we have left out institutions with infinite capital thus far, finally let $F_k^B$ assign masses $\P(C=\infty,A=\beta)$ to the point $(\bm{0},\bm{0},\bm{0},0,\infty,0,\beta)$ for each $\beta\in[T]$.

The construction above ensures that small banks are more contagious than in the original system as their weights, asset holdings and losses are increased whereas their capitals are decreased. Moreover, all large banks are initially defaulted and their total number of shares held of each asset $m$ is given by
\[ n\sum_{\beta\in[T]}(\overline{x}_k^\beta)^m(\gamma_k^\beta+o(1)) = 2n\int_{D_k^c}x^m\dd F(\bm{w}^-,\bm{w}^+,\bm{x},s,c,\ell,\tau)(1+o(1)) \]
which is larger than in the original system,
\[ n\int_{D_k^c}x^m\dd F(\bm{w}^-,\bm{w}^+,\bm{x},s,c,\ell,\tau)(1+o(1)). \]
Finally, also the total $r$-out-weight of large $\beta$-banks with respect to each type $\alpha\in[T]$
\[ n(\overline{w}_k^\beta)^{r,\alpha}\left(\gamma_k^\beta+o(1)\right) = 2n \int_{D_k^c}w^{+,r,\alpha}\1\{\tau=\beta\}\dd F(\bm{w}^-,\bm{w}^+,\bm{x},s,c,\ell,\tau) (1+o(1)) \]
is increased by approximation $F_k^B$ compared to the original system with
\[ n \int_{D_k^c}w^{+,r,\alpha}\1\{\tau=\beta\}\dd F(\bm{w}^-,\bm{w}^+,\bm{x},s,c,\ell,\tau) (1+o(1)). \]
Similar as in \cite{Detering2015a} for each $r\in[R]$ the number of $r$-edges from large banks to a specific small bank in the approximating system thus stochastically dominates their analogue in the original system and in particular this property transfers to the total direct exposure from large banks by summing over all $r\in[R]$. Altogether we derive the following result.
\begin{lemma}\label{5:lem:stochastic:domination}
Consider a regular vertex sequence and let sequences $\{F_k^A\}_{k\in\N}$ and $\{F_k^B\}_{k\in\N}$ be constructed as above. Further let $\left(\mathcal{S}_k^A\right)_n$ and $\left(\mathcal{S}_k^B\right)_n$ be the total systemic importance of finally defaulted institutions in the finitary approximating systems. Then it holds that
\[ n^{-1}\left(\mathcal{S}_k^A\right)_n \preceq n^{-1}\mathcal{S}_n \preceq n^{-1}\left(\mathcal{S}_k^B\right)_n, \]
where $\preceq$ denotes stochastic domination. If further $(\chi_k^{A,B})_n^m$ denotes the number of finally sold shares of asset $m$ divided by $n$, then it holds
\[ \left(\chi_k^A\right)_n^m \preceq \chi_n^m \preceq \left(\chi_k^B\right)_n^m. \]
\end{lemma}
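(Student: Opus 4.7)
The plan is to construct, on a common probability space, a coupling of the three systems (original, $F_k^A$-approximation, $F_k^B$-approximation) under which the default sets and the vectors of sold shares are ordered sample-wise in each round of the cascade process (i)-(ii) from Subsection \ref{5:sec:cont:prozess}. The stochastic domination statements then follow immediately by pushing the coupling through the iteration and passing to the limit.

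For the lower bound, I would couple the edges by standard uniform random variables $U_{i,j}^r\sim\mathcal{U}[0,1]$: in any of the three systems, an $r$-edge from $i$ to $j$ is declared present iff $U_{i,j}^r<p_{i,j}^r$ for the edge probability $p_{i,j}^r$ of that system as given by \eqref{5:edge:prob}. Because the construction of $F_k^A$ only decreases weights, asset holdings, and exogenous losses and only increases capital (with large banks assigned $c=\infty$ and all weights/holdings set to $0$, so that they never default nor sell), the $A$-system edge probabilities are pointwise dominated by those in the original. Hence the set of edges in the $A$-system is a subset of the original's. An induction on the rounds of (i)-(ii) together with monotonicity of $\rho$ and $h$ then gives $\mathcal{D}^A_{(k)}\subseteq\mathcal{D}_{(k)}$ and $\bm{\sigma}^A_{(k)}\leq\bm{\sigma}_{(k)}$ for every $k$, which yields the first inequality of each chain.

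For the upper bound, the small banks (parameters lying in $D_k$) can be handled symmetrically: in $F_k^B$ their weights, holdings, and losses are rounded upward and their capital downward, so the same uniform-coupling argument makes their outgoing edges in the $B$-system contain those in the original. The delicate part concerns the large banks, which in $F_k^B$ are collapsed to $T$ synthetic types of capital zero with aggregated out-weights $(\overline{\bm{w}}_k^\beta)^{r,\alpha}$ and holdings $(\overline{\bm{x}}_k^\beta)^m$ inflated by the factor $2$ relative to the averages of the original large banks. Because these synthetic institutions are initially defaulted, every one of their potential edges contributes a defaulted-neighbor exposure in the $B$-system. Thus for each fixed small bank $j$ of type $\alpha$, I would couple, separately for each $(r,\beta)$, the sum of Bernoullis describing incoming $r$-edges from original large $\beta$-banks with the corresponding sum in the $B$-system. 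By ordering the potential edges according to their cumulative probabilities and using a single shared uniform, one can realize both sums on a common space so that the $B$-sum dominates the original sum pathwise, provided the total expected $r$-out-weight from large $\beta$-banks in the $B$-system exceeds that in the original — which is precisely what the factor $2$ in the definition of $(\overline{\bm{w}}_k^\beta)^{r,\alpha}$ buys us.

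The main obstacle is making this last domination statement robust against the $o(1)$-fluctuations in the empirical fraction $n^{-1}\cdot\#\{\text{large }\beta\text{-banks}\}$, which converges to $\gamma_k^\beta$ only in probability. The doubling factor was introduced precisely so that, on an event of probability tending to one, the aggregated $B$-system out-weight strictly dominates the realized out-weight of the original large banks. A standard concentration argument (Chernoff or Bernstein) on this empirical count confirms the coupling. Once edges and initial defaults are coupled as described, the monotone propagation through the cascade rounds (i)-(ii) is mechanical: assuming $\mathcal{D}_{(k)}\subseteq\mathcal{D}^B_{(k)}$ and $\bm{\sigma}_{(k)}\leq\bm{\sigma}^B_{(k)}$, any institution satisfying the default condition in the original also satisfies it in the $B$-system because the latter has smaller capital, larger exogenous losses, larger price-impact losses (via monotonicity of $h$), and a larger sum of incoming defaulted exposures; and the sales update for $\bm{\sigma}^B_{(k+1)}$ dominates that of $\bm{\sigma}_{(k+1)}$ by monotonicity of $\rho$. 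Stochastic domination of $n^{-1}\mathcal{S}_n$ and of $\chi_n^m$ then follows by summing $s_i$ over the ordered default sets and projecting $\bm{\sigma}_n/n$ onto the $m$-th coordinate.
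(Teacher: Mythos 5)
Your overall strategy — couple the edge indicators on a common probability space, then push the ordering through the rounds (i)--(ii) of the cascade by monotonicity of $\rho$ and $h$ — is the right one and is what the paper implicitly does (deferring the hard part to the citation to~\cite{Detering2015a}). The lower-bound half of the lemma is fine: in $F_k^A$ every weight, holding and loss of a small institution is only decreased, every capital only increased, and large institutions are sterilized with $c=\infty$ and zero weights, so the edge probabilities in the $A$-system are pointwise $\leq$ those in the original and the naive uniform coupling already gives $\mathcal{D}^A_{(t)}\subseteq\mathcal{D}_{(t)}$ and $\bm{\sigma}^A_{(t)}\leq\bm{\sigma}_{(t)}$ by induction over the rounds.

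The gap is in the upper-bound half, specifically in the treatment of the collapsed large banks. You correctly observe that a pointwise coupling via $U_{i,j}^r$ fails there, because individual large banks in the original may have out-weights far \emph{above} the synthetic level $(\overline{w}_k^\beta)^{r,\alpha}$, so $p_{i,j}^{r,B}\geq p_{i,j}^r$ need not hold edge by edge. Your remedy -- ``ordering the potential edges according to their cumulative probabilities and using a single shared uniform'' so that the $B$-side aggregate edge count dominates pathwise -- does not work as stated. Using a single shared uniform through the inverse distribution functions gives pathwise ordering only \emph{if} the two Poisson-binomial laws are already stochastically ordered, which is exactly what needs to be proved; it is not a consequence of the doubled total mean. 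Indeed, a Poisson-binomial with small but positive success probabilities can have a strictly smaller probability of being zero than a Binomial with twice the total mean, so first-order stochastic dominance does not follow from the aggregate mean comparison alone. The actual argument (as in~\cite{Detering2015a}) compares the \emph{degree count} laws directly, using that all individual edge probabilities are $O(n^{-1})$ so that both sides converge to Poisson laws with the factor $2$ giving a strict mean gap; this gives the required dominance for all $n$ large enough, which is the regime in which the lemma is used. Finally, your appeal to a Chernoff/Bernstein concentration bound for ``$o(1)$-fluctuations'' of the count of large $\beta$-banks is unnecessary: the empirical parameter sequences $(\bm{w}^-(n),\dots,\bm{\alpha}(n))$ are deterministic, so this count is deterministic and converges to $n\gamma_k^\beta$ deterministically by Assumption~\ref{5:ass:regular:vertex:sequence}; the only randomness to control is in the edge assignment.
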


\vspace*{-0.2cm}
\noindent Denote

\vspace*{-0.7cm}
\begin{gather*}
g_k^{A,B}(\bm{z},\bm{\chi}) = \int_{D_\infty} h_g(\bm{z},\bm{\chi};\bm{w}^-,\bm{w}^+,\bm{x},s,c,\ell,\tau)\dd F_k^{A,B}(\bm{w}^-,\bm{w}^+,\bm{x},s,c,\ell,\tau),\\
\left(f_k^{A,B}\right)^{r,\alpha,\beta}(\bm{z},\bm{\chi}) = \int_{D_\infty} h_f^{r,\alpha,\beta}(\bm{z},\bm{\chi};\bm{w}^-,\bm{w}^+,\bm{x},s,c,\ell,\tau)\dd F_k^{A,B}(\bm{w}^-,\bm{w}^+,\bm{x},s,c,\ell,\tau) - z^{r,\alpha,\beta},\\
\left(f_k^{A,B}\right)^m(\bm{z},\bm{\chi}) = \int_{D_\infty} h_f^m(\bm{z},\bm{\chi};\bm{w}^-,\bm{w}^+,\bm{x},s,c,\ell,\tau)\dd F_k^{A,B}(\bm{w}^-,\bm{w}^+,\bm{x},s,c,\ell,\tau) - \chi^m,
\end{gather*}
analogue to $g$, $f^{r,\alpha,\beta}$, $(r,\alpha,\beta)\in V$, and $f^m$, $m\in[M]$. Moreover, let $\circGSuperSub{{A,B}}{k}$, $\big(\circFSuperSub{A,B}{k}\big)^{r,\alpha,\beta}$, $\big(\circFSuperSub{A,B}{k}\big)^m$, $(\hat{\bm{z}}_k^{A,B},\hat{\bm{\chi}}_k^{A,B})$ and $((\bm{z}^*)_k^{A,B},(\bm{\chi}^*)_k^{A,B})$ the analogues of $\circG$, $\circFSuper{r,\alpha,\beta}$, $(r,\alpha,\beta)\in V$, $\circFSuper{m}$, $m\in[M]$, $(\hat{\bm{z}},\hat{\bm{\chi}})$ and $(\bm{z}^*,\bm{\chi}^*)$. Then by Theorem \ref{5:thm:final:fraction:combined:finitary} we derive lower and upper bounds for the approximating systems in terms of those quantities. The following lemma compares them to the original quantities.
\begin{lemma}\label{5:lem:convergence:g}
It holds
\[ \liminf_{k\to\infty} \circGSuperSub{A}{k}\left(\hat{\bm{z}}_k^A,\hat{\bm{\chi}}_k^A\right) \geq \vphantom{g}\circG(\hat{\bm{z}},\hat{\bm{\chi}}) \]
and
\[ \limsup_{k\to\infty} g_k^B\left(\left(\bm{z}^*\right)_k^B,(\bm{\chi}^*)_k^B\right) \leq g(\bm{z}^*,\bm{\chi}^*), \]
as well as $\limsup_{k\to\infty}(\hat{\bm{\chi}}_k^A)^m\geq\hat{\chi}^m$ and $\liminf_{k\to\infty}((\bm{\chi}_k^B)^*)^m\leq(\chi^*)^m$ for all $m\in[M]$.
\end{lemma}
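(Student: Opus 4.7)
The plan is to proceed in three stages, closely following the strategy used for Lemma \ref{3:lem:convergence:g} but now accounting for both coordinates $\bm{z}$ and $\bm{\chi}$ simultaneously and for the discontinuities introduced by $\rho$ and the indicator functions in $\psi$. First I would establish uniform control, over $(\bm{z},\bm{\chi})\in Z$ and over $h\in H\cup\accentset{\circ}{H}$, of the discrepancy $\int_{D_k} h \,\dd F_k^{A,B} - \int_{D_k} h \,\dd F$ on the bounded piece $D_k$. This is exactly the content of \eqref{5:eqn:portmanteau:upper} and \eqref{5:eqn:portmanteau:lower} lifted from $\tilde F_k^{A,B}$ to $F_k^{A,B}$ (the capping/aggregation only affects mass outside $D_k$), so the $O(k^{-1})$ bound carries over. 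Next I would bound the tails. For $F_k^A$ the tail contribution is zero by construction, while for every $h\in H$ the integrand is dominated by one of $s$, $w^{+,r,\alpha}$ or $x^m$, whose integrals over $D_k^c$ vanish as $k\to\infty$ by Assumption~\ref{5:ass:regular:vertex:sequence}(b); hence
\[
\sup_{(\bm{z},\bm{\chi})\in Z,\,h\in\accentset{\circ}{H}}\left\vert\int_{D_\infty}h\,\dd F_k^A - \int_{D_\infty}h\,\dd F\right\vert = o(1).
\]
For $F_k^B$ the aggregated atoms $(\bm{0},\overline{\bm{w}}_k^\beta,\overline{\bm{x}}_k^\beta,\overline{s}_k^\beta,0,0,\beta)$ contribute exactly twice the $F$-tail integrals of $w^{+,r,\alpha}$, $x^m$ and $s$, which again tend to zero; and since every $h\in H$ is dominated by a linear combination of these quantities, the same uniform $o(1)$-statement holds for $F_k^B$ in place of $F_k^A$. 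Together these give pointwise (and uniform on $Z$) convergences $\circGSuperSub{A}{k}\to\circG$, $(\circFSuperSub{A}{k})^{r,\alpha,\beta}\to\circFSuper{r,\alpha,\beta}$, $(\circFSuperSub{A}{k})^m\to\circFSuper{m}$ from below, and $g_k^B\to g$, $(f_k^B)^{r,\alpha,\beta}\to f^{r,\alpha,\beta}$, $(f_k^B)^m\to f^m$ from above.

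In the second stage I would transfer these functional bounds to the smallest joint roots. For the lower bound on $\hat{\bm{\chi}}_k^A$, I would mimic the construction in Lemma \ref{5:lem:existence:z:chi:hat} (as done for the analogous setting in Lemma \ref{3:lem:convergence:g}): for each $\epsilon>0$ build a finite sequence $(\bm{z}_{(j)}(\epsilon),\bm{\chi}_{(j)}(\epsilon))$ by cycling through the coordinates and, in each step, moving only the chosen coordinate up to the smallest value at which the corresponding function $\circFSuper{r,\alpha,\beta}$ or $\circFSuper{m}$ drops to $-\epsilon$. By monotonicity (Lemma \ref{3:lem:properties:f}-style) this sequence is nondecreasing and bounded by $\hat{\bm{z}},\hat{\bm{\chi}}$ plus the uniform upper cube, hence has a limit $(\tilde{\bm{z}}(\epsilon),\tilde{\bm{\chi}}(\epsilon))$ with all $\circFSuper{\cdot}\leq\epsilon$, and $(\tilde{\bm{z}}(\epsilon),\tilde{\bm{\chi}}(\epsilon))\to(\hat{\bm{z}},\hat{\bm{\chi}})$ as $\epsilon\to 0$. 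The uniform convergence from stage one then ensures that for $k$ large, each corresponding function of the $A$-approximation still satisfies the strict inequalities needed to force $(\hat{\bm{z}}_k^A,\hat{\bm{\chi}}_k^A)\geq (\tilde{\bm{z}}(\epsilon),\tilde{\bm{\chi}}(\epsilon))$ componentwise. Letting first $k\to\infty$ and then $\epsilon\to 0$, using lower semi-continuity of $\circG$, gives the first statement and simultaneously $\liminf_k(\hat{\bm{\chi}}_k^A)^m\geq\hat{\chi}^m$.

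In the third stage I would handle the upper bound via the sequence $((\bm{z}^*)(\epsilon),(\bm{\chi}^*)(\epsilon))\searrow(\bm{z}^*,\bm{\chi}^*)$ of smallest vectors at which each $f^{r,\alpha,\beta},f^m$ equals $-\epsilon$ (Remark~\ref{4:rem:sequence:chi:*} applied in the combined setting via Lemma \ref{5:lem:existence:z:chi:hat}). Uniform convergence $f_k^B\to f$ from above on $Z$ yields, for $k$ large, $(f_k^B)^{\cdot}((\bm{z}^*)(\epsilon),(\bm{\chi}^*)(\epsilon))\leq -\epsilon/2<0$ for every coordinate, so by the same monotonicity arguments as in Lemma \ref{5:lem:existence:z:chi:hat} the maximal vector $((\bm{z}^*)_k^B,(\bm{\chi}^*)_k^B)$ of the $B$-approximation satisfies $((\bm{z}^*)_k^B,(\bm{\chi}^*)_k^B)\leq((\bm{z}^*)(\epsilon),(\bm{\chi}^*)(\epsilon))$. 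Applying again uniform convergence of $g_k^B$ to $g$ and upper semi-continuity of $g$ in $(\bm{z},\bm{\chi})$ gives $\limsup_k g_k^B((\bm{z}^*)_k^B,(\bm{\chi}^*)_k^B)\leq g((\bm{z}^*)(\epsilon),(\bm{\chi}^*)(\epsilon))\to g(\bm{z}^*,\bm{\chi}^*)$ as $\epsilon\to 0$, and the componentwise bound $\limsup_k ((\bm{\chi}^*)_k^B)^m\leq(\chi^*)^m$ follows immediately.

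The main obstacle I expect is the bookkeeping in the second stage: unlike the pure default-contagion setting of Chapter \ref{chap:block:model}, here the functions $\circFSuper{r,\alpha,\beta}$ and $\circFSuper{m}$ couple $\bm{z}$ and $\bm{\chi}$ through $h(\bm{\chi})$ and through $\psi,\phi$, and the semi-continuity direction of each function must be tracked carefully (upper for $f^{\cdot}$, lower for $\circFSuper{\cdot}$) so that the cyclic construction actually produces a nondecreasing sequence inside $\accentset{\circ}{P}_0$ and the strict inequalities in the final comparison step are genuinely preserved under the $o(1)$-perturbations coming from the approximations $F_k^{A,B}$. Dealing with the potential discontinuities of $\rho$ at $1$ and of the indicator inside $\psi$ at integer values is precisely what forces the separation of $H$ and $\accentset{\circ}{H}$ above, and is the reason why both the first and the third statements use $g,f^{\cdot}$ while the second and fourth use their $\accentset{\circ}{\cdot}$-modifications.
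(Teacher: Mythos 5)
Your proposal is correct and follows essentially the same route the paper takes: one-sided Portmanteau/tail estimates giving \eqref{5:eqn:strong:G:convergence:A} and \eqref{5:eqn:strong:G:convergence:B}, transfer to the smallest joint root via the cyclic-cap argument (which the paper simply imports by citing Lemma~\ref{4:lem:convergence:chi:hat} extended to the $\bm z$-coordinates, whereas you re-derive it), and closing with lower/upper semi-continuity of $\circG$ and $g$. The one genuine point of divergence is your Stage~3: to produce points at which all the $f$-functions are strictly negative and that converge to $(\bm z^*,\bm\chi^*)$, you use the Remark~\ref{4:rem:sequence:chi:*}-type sequence $((\bm z^*)(\epsilon),(\bm\chi^*)(\epsilon))$ where every function equals $-\epsilon$, while the paper instead perturbs the system with an extra ex-post shock of size $\epsilon$ and uses the shocked largest joint root $(\bm z^*(\epsilon),\bm\chi^*(\epsilon))$, for which $f^{r,\alpha,\beta}=-\tfrac{\epsilon}{1-\epsilon}(\E[W^{+,r,\alpha}\1\{A=\beta\}]-z^{r,\alpha,\beta}(\epsilon))<0$; both furnish the needed strict negativity and convergence, so the two variants are interchangeable here. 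Your approach is slightly more self-contained (it reuses the root-perturbation machinery already built for Lemma~\ref{5:lem:existence:z:chi:hat}), the paper's is slightly shorter (the shocked functions take a simple convex-combination form, making the strict negativity and the monotonicity in $\epsilon$ immediate).
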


\begin{proof}
For $\circH\in\accentset{\circ}{H}$, using \eqref{5:eqn:portmanteau:lower} we derive
\begin{align*}
&\int_{\circDSub{k}} \circH(\bm{z},\bm{\chi};\bm{w}^-,\bm{w}^+,\bm{x},s,c,\ell,\tau)\dd F_k^A
 - \int_{\circDSub{k}} \circH(\bm{z},\bm{\chi};\bm{w}^-,\bm{w}^+,\bm{x},s,c,\ell,\tau)\dd F
 \\
&\hspace{3cm}= \int_{\circDSub{k}} \circH(\bm{z},\bm{\chi};\bm{w}^-,\bm{w}^+,\bm{x},s,c,\ell,\tau)\dd \tilde{F}_{j_k}^A
 - \int_{\circDSub{k}} \circH(\bm{z},\bm{\chi};\bm{w}^-,\bm{w}^+,\bm{x},s,c,\ell,\tau)\dd F
 \\
&\hspace{3cm}\geq -k^{-1} \to 0,\quad\text{as }k\to\infty.
\end{align*}
Moreover, $\int_{\circDSuperSub{c}{k}}s\,\dd F\to0$, $\int_{\circDSuperSub{c}{k}}w^{+,r,\alpha}\,\dd F\to0$, $(r,\alpha)\in[R]\times[T]$, and $\int_{\circDSuperSub{c}{k}}x^m\,\dd F\to0$, $m\in[M]$, as $k\to\infty$. In particular, it must then hold that $\int_{\circDSuperSub{c}{k}}\circH(\bm{z},\bm{\chi};\bm{w}^-,\bm{w}^+,\bm{x},s,c,\ell,\tau)\dd F\to0$.  Together with 
\[ \int_{\circD_k^c}\accentset{\circ}{h}(\bm{z},\bm{\chi};\bm{w}^-,\bm{w}^+,\bm{x},s,c,\ell,\tau)\dd F_k^A=\int_{D_k\cap\circD_k^c}\accentset{\circ}{h}(\bm{z},\bm{\chi};\bm{w}^-,\bm{w}^+,\bm{x},s,c,\ell,\tau)\dd F_k^A=o(1), \]
we can then conclude that
\begin{equation}\label{5:eqn:strong:G:convergence:A}
\int_{D_\infty}\circH(\bm{z},\bm{\chi};\bm{w}^-,\bm{w}^+,\bm{x},s,c,\ell,\tau)\dd F_k^A - \int_{D_\infty} \circH(\bm{z},\bm{\chi};\bm{w}^-,\bm{w}^+,\bm{x},s,c,\ell,\tau)\dd F \geq o(1).
\end{equation}
For $\{F_k^B\}_{k\in\N}$, we further obtain
\begin{gather*}
\int_{D_k^c}s\,\dd F_k^B(\bm{w}^-,\bm{w}^+,\bm{x},s,c,\ell,\tau) = \sum_{\beta\in[T]}\overline{s}_k^\beta\gamma_k^\beta,\\
\int_{D_k^c}w^{+,r,\alpha}\1\{\tau=\beta\}\dd F_k^B(\bm{w}^-,\bm{w}^+,\bm{x},s,c,\ell,\tau) = (\overline{w}_k^\beta)^{r,\alpha} \gamma_k^\beta,\\
\int_{D_k^c}x^m\dd F_k^B(\bm{w}^-,\bm{w}^+,\bm{x},s,c,\ell,\tau) = \sum_{\beta\in[T]}(\overline{x}_k^\beta)^m \gamma_k^\beta
\end{gather*}
and as $k\to\infty$, by definition of $\gamma_k^\beta$, $\overline{s}_k^\beta$, $(\overline{w}_k^\beta)^{r,\alpha}$ and $(\overline{x}_k^\beta)^m$ all those terms vanish. In particular,
\[ \int_{D_k^c}h(\bm{z},\bm{\chi};\bm{w}^-,\bm{w}^+,\bm{x},s,c,\ell,\tau)\dd F_k^B(\bm{w}^-,\bm{w}^+,\bm{x},s,c,\ell,\tau) \to 0,\quad\text{as }k\to\infty \]
and by \eqref{5:eqn:portmanteau:upper}
\begin{equation}\label{5:eqn:strong:G:convergence:B}
\int_{D_\infty}h(\bm{z},\bm{\chi};\bm{w}^-,\bm{w}^+,\bm{x},s,c,\ell,\tau)\dd F_k^B - \int_{D_\infty} h(\bm{z},\bm{\chi};\bm{w}^-,\bm{w}^+,\bm{x},s,c,\ell,\tau)\dd F \leq o(1).
\end{equation}
By \eqref{5:eqn:strong:G:convergence:A} we can apply Lemma \ref{4:lem:convergence:chi:hat} (extend it by the $\bm{z}$-dimensions) and thus derive that \mbox{$\liminf_{k\to\infty}\hat{\bm{z}}_k^A\geq\hat{\bm{z}}$} and $\liminf_{k\to\infty}\hat{\bm{\chi}}_k^A\geq\hat{\bm{\chi}}$, where $(\hat{\bm{z}}_k^A,\hat{\bm{\chi}}_k^A)$ denotes the smallest joint root of the functions
\[ (\circFSuperSub{A}{k})^{r,\alpha,\beta}(\bm{z},\bm{\chi}) = \int_{D_\infty}\circHSuperSub{m}{f}(\bm{z},\bm{\chi};\bm{w}^-,\bm{w}^+,\bm{x},s,c,\ell,\tau)\dd F_k^A(\bm{w}^-,\bm{w}^+,\bm{x},s,c,\ell,\tau) - z^{r,\alpha,\beta}, \]
\[ (\circFSuperSub{A}{k})^m(\bm{z},\bm{\chi}) = \int_{D_\infty}\circHSuperSub{m}{f}(\bm{z},\bm{\chi};\bm{w}^-,\bm{w}^+,\bm{x},s,c,\ell,\tau)\dd F_k^A(\bm{w}^-,\bm{w}^+,\bm{x},s,c,\ell,\tau) - \chi^m. \]
Now choose some $\delta>0$ and $k$ large enough such that $(\hat{\bm{z}}_k^A,\hat{\bm{\chi}}_k^A)\geq(1-\delta)(\hat{\bm{z}},\hat{\bm{\chi}})$. Then by \eqref{5:eqn:strong:G:convergence:A},
\[ \liminf_{k\to\infty}\circGSuperSub{A}{k}(\hat{\bm{z}}_k^A,\hat{\bm{\chi}}_k^A) \geq \liminf_{k\to\infty}\circGSuperSub{A}{k}((1-\delta)(\hat{\bm{z}},\hat{\bm{\chi}})) \geq \circG((1-\delta)(\hat{\bm{z}},\hat{\bm{\chi}})) \]
and using lower semi-continuity of $\circG$, as $\delta\to0$,
\[ \liminf_{k\to\infty} \circGSuperSub{A}{k}\left(\hat{\bm{z}}_k^A,\hat{\bm{\chi}}_k^A\right) \geq \vphantom{g}\circG(\hat{\bm{z}},\hat{\bm{\chi}}). \]
For the second statement apply a small additional shock to the system in the sense that each solvent institution defaults with probability $\epsilon>0$. Then the analogues of $f^{r,\alpha,\beta}$ and $f^m$ in the shocked system are given by
\begin{align*}
f_\epsilon^{r,\alpha,\beta}(\bm{z},\bm{\chi}) &= (1-\epsilon)f^{r,\alpha,\beta}(\bm{z},\bm{\chi}) + \epsilon(\E[W^+]-z^{r,\alpha,\beta}),\\
f_\epsilon^m(\bm{z},\bm{\chi}) &= (1-\epsilon)f^m(\bm{z},\bm{\chi}) + \epsilon(\E[X^m]-\chi^m).
\end{align*}
Denote the analogues of $(\bm{z}^*,\bm{\chi}^*)$ for these functions by $(\bm{z}^*(\epsilon),\bm{\chi}^*(\epsilon))$. Then using \eqref{5:eqn:strong:G:convergence:B} for $k$ large enough it holds $(f_k^B)^{r,\alpha,\beta}(\bm{z}^*(\epsilon),\bm{\chi}^*(\epsilon))\leq f^{r,\alpha,\beta}(\bm{z}^*(\epsilon),\bm{\chi}^*(\epsilon))/2<0$, \mbox{$(r,\alpha,\beta)\in V$}, and \mbox{$(f_k^B)^m(\bm{z}^*(\epsilon),\bm{\chi}^*(\epsilon))\leq f^m(\bm{z}^*(\epsilon),\bm{\chi}^*(\epsilon))/2<0$}, $m\in[M]$. Note that actually it is possible that $f^{r,\alpha,\beta}(\bm{z}^*(\epsilon),\bm{\chi}^*(\epsilon))=0$ resp.~$f^m(\bm{z}^*(\epsilon),\bm{\chi}^*(\epsilon))=0$ if $\E[W^{+,r,\alpha}\1\{A=\beta\}]=0$ resp.~$\E[X^m]=0$. In this case, however, the corresponding coordinates $z^{r,\alpha,\beta}$ resp.~$\chi^m$ are trivial and can be left out. Thus $(\bm{z}^*)_k^B\leq \bm{z}^*(\epsilon)$  and $(\bm{\chi}^*)_k^B\leq\bm{\chi}^*(\epsilon)$ componentwise and in particular $\limsup_{k\to\infty}((\chi_k^B)^*)^m\leq(\chi^*)^m$ for all $m\in[M]$. Using \eqref{5:eqn:strong:G:convergence:B} again, we now obtain
\[ \limsup_{k\to\infty}g_k^B\left((\bm{z}^*)_k^B,(\bm{\chi}^*)_k^B\right) \leq \limsup_{k\to\infty} g_k^B(\bm{z}^*(\epsilon),\bm{\chi}^*(\epsilon)) \leq g(\bm{z}^*(\epsilon),\bm{\chi}^*(\epsilon)) \]
and as $\epsilon\to0$ using upper semi-continuity of $g$ we can conclude that
\[ \limsup_{k\to\infty} g_k^B\left((\bm{z}^*)_k^B,(\bm{\chi}^*)_k^B\right) \leq g(\bm{z}^*,\bm{\chi}^*). \qedhere\]
\end{proof}
We can then state the proof of the main theorem for general regular vertex sequences:
\begin{proof}[Proof of Theorem \ref{5:thm:final:fraction:combined:general}]
For arbitrary $\epsilon>0$ we can apply Lemma \ref{5:lem:stochastic:domination} to derive
\[ \P\left(n^{-1}\mathcal{S}_n-\circG(\hat{\bm{z}},\hat{\bm{\chi}})<-\epsilon\right)\leq \P\left(n^{-1}\left(\mathcal{S}_k^A\right)_n-\circG(\hat{\bm{z}},\hat{\bm{\chi}})<-\epsilon\right). \]
Moreover, it holds $\circGSuperSub{A}{k}(\hat{\bm{z}}_k^A,\hat{\bm{\chi}}_k^A)>\circG(\hat{\bm{z}},\hat{\bm{\chi}})-\epsilon/2$ for $k$ large enough by Lemma \ref{5:lem:convergence:g} and then
\[ \P\left(n^{-1}\mathcal{S}_n-\circG(\hat{\bm{z}},\hat{\bm{\chi}})<-\epsilon\right) \leq \P\left(n^{-1}\left(\mathcal{S}_k^A\right)_n-\circGSuperSub{A}{k}(\hat{\bm{z}}_k^A,\hat{\bm{\chi}}_k^A)<-\epsilon/2\right). \]
Theorem \ref{5:thm:final:fraction:combined:finitary} now yields
\[ \P\left(n^{-1}\mathcal{S}_n-\circG(\hat{\bm{z}},\hat{\bm{\chi}})<-\epsilon\right) \to 0,\quad\text{as }n\to\infty, \]
and thus $n^{-1}\mathcal{S}_n\geq \circG(\hat{\bm{z}},\hat{\bm{\chi}})+o_p(1)$ as $\epsilon>0$ was arbitrary. Similarly,
\[ \P(\chi_n^m - \hat{\chi}^m<-\epsilon) \leq \P\left((\chi_k^A)_n^m - \hat{\chi}^m < -\epsilon\right) \leq \P\left((\chi_k^A)_n^m - (\hat{\chi}_k^A)^m < -\epsilon/2\right) \to0 \]
as $n\to\infty$ and hence $\chi_n^m \geq \hat{\chi}^m+o_p(1)$ for all $m\in[M]$.

In the same way, by Lemma \ref{5:lem:stochastic:domination}
\[ \P\left(n^{-1}\mathcal{S}_n-g(\bm{z}^*,\bm{\chi}^*)>\epsilon\right) \leq \P\left(n^{-1}\left(\mathcal{S}_k^B\right)_n-g(\bm{z}^*,\bm{\chi}^*)>\epsilon\right) \]
and hence for $k$ large enough it holds $g_k^B\left((\bm{z}^*)_k^B,(\bm{\chi}^*)_k^B\right)<g(\bm{z}^*,\bm{\chi}^*)+\epsilon/2$ by Lemma \ref{5:lem:convergence:g}. Using Theorem \ref{5:thm:final:fraction:combined:finitary} we thus conclude that
\begin{align*}
\P\left(n^{-1}\mathcal{S}_n-g(\bm{z}^*,\bm{\chi}^*)>\epsilon\right) &\leq \P\left(n^{-1}\left(\mathcal{S}_k^B\right)_n-g_k^B((\bm{z}^*)_k^B,(\bm{\chi}^*)_k^B)>\epsilon/2\right) \to 0,\quad\text{as }n\to\infty,
\end{align*}
as well as
\[ \P(\chi_n^m - (\chi^*)^m > \epsilon) \leq \P\left((\chi_k^B)_n^m - (\chi^*)^m > \epsilon\right) \leq \P\left((\chi_k^A)_n^m - ((\chi^*)_k^A)^m > \epsilon/2\right) \to0. \qedhere \]
\end{proof}

\subsection{Proofs for Section \ref{5:sec:resilience}}\label{5:ssec:proofs:resilience}
We keep the notation $g$, $f^{r,\alpha,\beta}$, $f^m$, $\bm{z}^*$, $\bm{\chi}^*$, $\circG$, $\circFSuper{r,\alpha,\beta}$, $\circFSuper{m}$, $\hat{\bm{z}}$ and $\hat{\bm{\chi}}$ for the quantities from Section \ref{5:asymp:res:sys} for the unshocked system and add the index $\cdot_L$ to indicate the corresponding quantities and functions in the system shocked by $L$.
\begin{proof}[Proof of Theorem \ref{5:thm:resilience}]
For arbitrary $\alpha>0$, we derive
\begin{align*}
&f_L^{r,\alpha,\beta}(\bm{z},\bm{\chi})\\
&\quad= \E\left[W^{+,r,\alpha}\P\left(\sum_{s\in[R]}s\mathrm{Poi}\left(\sum_{\gamma\in[T]}W^{-,s,\gamma}z^{s,\beta,\gamma}\right)\geq C-L-X\cdot h(\bm{\chi})\right)\1\{A=\beta\}\right] - z^{r,\alpha,\beta}\\
&\quad\leq \E\left[W^{+,r,\alpha}\1\{L\geq\alpha C\}\right] - z^{r,\alpha,\beta}\\
&\quad\hspace{1cm} + \E\left[W^{+,r,\alpha}\P\left(\sum_{s\in[R]}s\mathrm{Poi}\left(\sum_{\gamma\in[T]}W^{-,s,\gamma}z^{s,\beta,\gamma}\right)\geq C(1-\alpha)-X\cdot h(\bm{\chi})\right)\1\{A=\beta\}\right].
\end{align*}
Using that $\E[L/C]<\delta$, we derive with Markov's inequality that $\P(L\geq\alpha C)<\delta/\alpha$ and as $\E[W^{+,r,\alpha}]<\infty$ it thus holds that $\E[W^{+,r,\alpha}\1\{L\geq\alpha C\}]\leq\gamma/3$ for any arbitrary $\gamma>0$ if we choose $\delta>0$ small enough. Also the second summand in above inequality can be bounded by $f^{r,\alpha,\beta}(\bm{z},\bm{\chi})+\gamma/3$ if $\alpha$ is chosen small enough using the dominated convergence theorem.

Let now $((\tilde{\bm{z}}(\gamma),\tilde{\bm{\chi}}(\gamma)))_{\gamma>0}\subset\R_{+,0}^V\times\R_{+,0}^M$ be such that $f^{r,\alpha,\beta}(\tilde{\bm{z}}(\gamma),\tilde{\bm{\chi}}(\gamma))=-\gamma$ for all $(r,\alpha,\beta)\in V$ resp.~$f^m(\tilde{\bm{z}}(\gamma),\tilde{\bm{\chi}}(\gamma))=-\gamma$ for all $m\in[M]$, which exists analogue to Remark \ref{4:rem:sequence:chi:*} (extending it by the $\bm{z}$-coordinates). By the above result then \mbox{$f_L^{r,\alpha,\beta}(\tilde{\bm{z}}(\gamma),\tilde{\bm{\chi}}(\gamma)) \leq -\gamma/3<0$} for $\delta$ small enough. Similarly, one derives that $f_L^m(\tilde{\bm{z}}(\gamma),\tilde{\bm{\chi}}(\gamma)) \leq -\gamma/3<0$ for $\delta$ small enough. We can thus conclude that $(\bm{z}_L^*,\bm{\chi}_L^*)<(\tilde{\bm{z}}(\gamma),\tilde{\bm{\chi}}(\gamma))$. However, by Remark \ref{4:rem:sequence:chi:*} we further know that $(\tilde{\bm{z}}(\gamma),\tilde{\bm{\chi}}(\gamma))\to(\bm{z}^*,\bm{\chi}^*)$ and hence by upper semi-continuity of $g$ and possibly further decreasing $\delta$ it holds
\begin{equation}\label{5:eqn:g:z*}
g(\bm{z}_L^*,\bm{\chi}_L^*)\leq g(\bm{z}^*,\bm{\chi}^*)+\epsilon/3=\epsilon/3.
\end{equation}
By similar means as for $f^{r,\alpha,\beta}$ and $f^m$ above, we also derive that $g_L(\bm{z},\bm{\chi})\leq\epsilon/3+g(\bm{z},\bm{\chi})$ for $\delta$ small enough. By Theorem \ref{5:thm:final:fraction:combined:general} we can thus conclude that w.\,h.\,p.
\[ n^{-1}\mathcal{S}_{n,L} \leq g_L(\bm{z}_L^*,\bm{\chi}_L^*) + \epsilon/3 \leq g(\bm{z}_L^*,\bm{\chi}_L^*) + 2\epsilon/3 \leq \epsilon. \]
Now let $\delta$ small enough such that also $(\chi_L^*)^m\leq (\chi^*)^m+\epsilon/2$. Applying Theorem \ref{5:thm:final:fraction:combined:general} we thus derive that w.\,h.\,p.
\[ \chi_{n,L}^m \leq (\chi_L^*)^m + \epsilon/2 \leq (\chi^*)^m+\epsilon. \qedhere\]
\end{proof}

\begin{proof}[Proof of Theorem \ref{5:thm:non-resilience}]
Define for $\epsilon>0$ and $I_V\subset V$ resp.~$I_M\subset [M]$ the set
\begin{align*}
T(\epsilon,I) &:= \bigcap_{(r,\alpha,\beta)\in I_V}\left\{(\bm{z},\bm{\chi})\in\R_{+,0}^V\times\R_{+,0}^M\,:\,\circFSuper{r,\alpha,\beta}(\bm{z},\bm{\chi})\leq-\epsilon\right\}\\
&\hspace{3cm}\cap \bigcap_{(s,\theta,\lambda)\in I_V^c}\left\{(\bm{z},\bm{\chi})\in\R_{+,0}^V\times\R_{+,0}^M\,:\,z^{s,\theta,\lambda}\geq\E[W^{+,s,\theta}\1\{A=\lambda\}]\right\}\\
&\hspace{3cm}\cap\bigcap_{m\in I_M}\left\{(\bm{z},\bm{\chi})\in\R_{+,0}^V\times\R_{+,0}^M\,:\,\circFSuper{m}(\bm{z},\bm{\chi})\leq-\epsilon\right\}\\
&\hspace{3cm}\cap \bigcap_{k\in I_M^c}\left\{(\bm{z},\bm{\chi})\in\R_{+,0}^V\times\R_{+,0}^M\,:\,\chi^k\geq\E[X^k]\right\}
\end{align*}
where we denote $I_V^c:=V\backslash I_V$ and $I_M^c=[M]\backslash I_M$. Moreover, denote by $(\hat{\bm{z}}(\epsilon,I),\hat{\bm{\chi}}(\epsilon,I))\in\R_{+,0}^V\times\R_{+,0}^M$ the smallest vector such that $\circFSuper{r,\alpha,\beta}(\hat{\bm{z}}(\epsilon,I),\hat{\bm{\chi}}(\epsilon,I))=-\epsilon$ for $(r,\alpha,\beta)\in I_V$ and $\hat{z}^{s,\theta,\lambda}(\epsilon,I)=\E[W^{+,r,\alpha}\1\{A=\beta\}]$ for $(s,\theta,\lambda)\in I_V^c$ resp.~$\circFSuper{m}(\hat{\bm{z}}(\epsilon,I),\hat{\bm{\chi}}(\epsilon,I))=-\epsilon$ for $m\in I_M$ and $\hat{\chi}^k(\epsilon,I)=\E[X^k]$ for $k\in I_M^c$. The existence of such a vector is ensured analogue to Lemma \ref{5:lem:existence:z:chi:hat}. In particular, it then holds $(\hat{\bm{z}}(\epsilon,I),\hat{\bm{\chi}}(\epsilon,I))\in T(\epsilon,I)$ and by the construction of $(\hat{\bm{z}}(\epsilon,I),\hat{\bm{\chi}}(\epsilon,I))$ analogue to Lemma \ref{5:lem:existence:z:chi:hat} we obtain that $(\hat{\bm{z}}(\epsilon,I),\hat{\bm{\chi}}(\epsilon,I))\leq(z,\bm{\chi})$ for any other $(\bm{z},\bm{\chi})\in T(\epsilon,I)$

In particular, this implies that $(\hat{\bm{z}}(\epsilon,I),\hat{\bm{\chi}}(\epsilon,I))$ is non-decreasing in $\epsilon$ and thus continuous for almost every $\epsilon>0$. As moreover, the expressions $f^{r,\alpha,\beta}(\hat{\bm{z}}(\epsilon,I),\hat{\bm{\chi}}(\epsilon,I))+\hat{z}^{r,\alpha,\beta}(\epsilon,I)$ and $f^m(\hat{\bm{z}}(\epsilon,I),\hat{\bm{\chi}}(\epsilon,I))+\hat{\chi}^m(\epsilon,I)$ are bounded and increasing in $\epsilon$, we derive that for almost every $\epsilon>0$ and $\delta>0$, we can choose $\gamma>0$ small enough such that
\[ f^{r,\alpha,\beta}(\hat{\bm{z}}(\epsilon,I),\hat{\bm{\chi}}(\epsilon,I))+\hat{z}^{r,\alpha,\beta}(\epsilon,I) \leq f^{r,\alpha,\beta}(\hat{\bm{z}}(\epsilon-\gamma,I),\hat{\bm{\chi}}(\epsilon-\gamma,I))+\hat{z}^{r,\alpha,\beta}(\epsilon-\gamma,I) + \delta \]
and
\[ f^m(\hat{\bm{z}}(\epsilon,I),\hat{\bm{\chi}}(\epsilon,I))+\hat{\chi}^m(\epsilon,I) \leq f^m(\hat{\bm{z}}(\epsilon-\gamma,I),\hat{\bm{\chi}}(\epsilon-\gamma,I))+\hat{\chi}^m(\epsilon-\gamma,I) + \delta. \]
If $(r,\alpha,\beta)\in I_V$, as $\hat{\bm{z}}(\epsilon,I)$ is strictly increasing, we derive that
\begin{align*}
\circFSuper{r,\alpha,\beta}(\hat{\bm{z}}(\epsilon,I),\hat{\bm{\chi}}(\epsilon,I)) &\leq f^{r,\alpha,\beta}(\hat{\bm{z}}(\epsilon,I),\hat{\bm{\chi}}(\epsilon,I))\\
&\leq f^{r,\alpha,\beta}(\hat{\bm{z}}(\epsilon-\gamma,I),\hat{\bm{\chi}}(\epsilon-\gamma,I)) + \delta\\
&\leq f^{r,\alpha,\beta}(\hat{\bm{z}}(\epsilon-\gamma,I),\hat{\bm{\chi}}(\epsilon,I)) + \delta\\
&\leq \circFSuper{r,\alpha,\beta}(\hat{\bm{z}}(\epsilon,I),\hat{\bm{\chi}}(\epsilon,I)) + \delta
\end{align*}
and choosing $\delta$ arbitrarily small we conclude that for almost every $\epsilon>0$,
\[ f^{r,\alpha,\beta}(\hat{\bm{z}}(\epsilon,I),\hat{\bm{\chi}}(\epsilon,I))=\circFSuper{r,\alpha,\beta}(\hat{\bm{z}}(\epsilon,I),\hat{\bm{\chi}}(\epsilon,I))=-\epsilon. \]
Moreover, if $m\in I_M$, then $\hat{\chi}^m(\epsilon,I)$ is strictly increasing and by the assumption of $h^m(\bm{\chi})$ being strictly increasing in $\chi^m$, we derive for $x^m>0$ that
\[ \rho\left(\frac{\ell+\bm{x}\cdot h(\hat{\bm{\chi}}(\epsilon-\gamma,I))}{c}\right) \leq \circRho\left(\frac{\ell+\bm{x}\cdot h(\hat{\bm{\chi}}(\epsilon,I))}{c}\right). \]
Hence similarly as above, $f^m(\hat{\bm{z}}(\epsilon,I),\hat{\bm{\chi}}(\epsilon,I))=\circFSuper{m}(\hat{\bm{z}}(\epsilon,I),\hat{\bm{\chi}}(\epsilon,I))=-\epsilon$.

Let us now show that $(\bm{z}^*,\bm{\chi}^*)\leq(\hat{\bm{z}}(\epsilon,I),\hat{\bm{\chi}}(\epsilon,I))$. To this end, suppose that we could choose some $(\bm{z},\bm{\chi})\in P_0$ such that $z^{r,\alpha,\beta}>\hat{z}^{r,\alpha,\beta}(\epsilon,I)$ for some $(r,\alpha,\beta)\in V$ or $\chi^m>\hat{\chi}(\epsilon,I)$ for some $m\in[M]$. Then by $P_0\subset[0,\bm{\zeta}]\times[\bm{0},\bm{\eta}]$, where $\zeta^{r,\alpha,\beta}=\E[W^{+,r,\alpha}\1\{A=\beta\}]$ and $\eta^m=\E[X^m]$, and monotonicity of the functions $f^{r,\alpha,\beta}$ resp.~$f^m$ we would derive the existence of a point $P_0\ni(\tilde{\bm{z}},\tilde{\bm{\chi}})\leq(\hat{\bm{z}}(\epsilon,I),\hat{\bm{\chi}}(\epsilon,I))$ such that either $\tilde{z}^{r,\alpha,\beta}=\hat{z}^{r,\alpha,\beta}(\epsilon,I)$ for some $(r,\alpha,\beta)\in I_V$ or $\tilde{\chi}^m=\hat{\chi}^M(\epsilon,I)$ for some $m\in I_M$. But then $f^{r,\alpha,\beta}(\tilde{\bm{z}},\tilde{\bm{\chi}})\leq f^{r,\alpha,\beta}(\hat{\bm{z}}(\epsilon,I),\hat{\bm{\chi}}(\epsilon,I))=-\epsilon$ resp.~$f^m(\tilde{\bm{z}},\tilde{\bm{\chi}})\leq f^m(\hat{\bm{z}}(\epsilon,I),\hat{\bm{\chi}}(\epsilon,I))=-\epsilon$ and thus a contradiction to $(\tilde{\bm{z}},\tilde{\bm{\chi}})\in P_0$. It must therefore hold $(\bm{z}^*,\bm{\chi}^*)\leq(\hat{\bm{z}}(\epsilon,I),\hat{\bm{\chi}}(\epsilon,I))$.
Let now for given shock $L$,
\[ I_M := \{m\in[M]\,:\,\hat{\chi}_L^m<\E[X^m]\}\quad\text{and}\quad I_V := \{(r,\alpha,\beta)\in V\,:\,\hat{z}_L^{r,\alpha,\beta}<\E[W^{+,r,\alpha}\1\{A=\beta\}]. \]
If $I_V=\emptyset$ and $I_M=\emptyset$ the result is trivial, so assume that either $I_V\neq\emptyset$ or $I_M\neq\emptyset$. For $m\in I_M$, using $\circFSuperSub{m}{L}(\hat{\bm{z}}_L,\hat{\bm{\chi}}_L)=0$ we then derive
\[ \circFSuper{m}(\hat{\bm{z}}_L,\hat{\bm{\chi}}_L) = \frac{\circFSuperSub{m}{L}(\hat{\bm{z}}_L,\hat{\bm{\chi}}_L) - \P(L=2C) (\E[X^m]-\hat{\chi}_L^m)}{\P(L=0)} <0 \]
and analogously $\circFSuper{r,\alpha,\beta}(\hat{\bm{z}}_L,\hat{\bm{\chi}}_L)<0$ for $(r,\alpha,\beta)\in I_V$. For the choice
\[ \epsilon := -\max\left\{\max_{(r,\alpha,\beta)\in I_V} \circFSuper{r,\alpha,\beta}(\hat{\bm{z}}_L,\hat{\bm{\chi}}_L), \max_{m\in I_M} \circFSuper{m}(\hat{\bm{z}}_L,\hat{\bm{\chi}}_L)\right\} >0 \]
it then holds $(\hat{\bm{z}}_L,\hat{\bm{\chi}}_L)\in T(\epsilon,I)$ and further $(\hat{\bm{z}}_L,\hat{\bm{\chi}}_L)\geq(\hat{\bm{z}}(\epsilon,I),\hat{\bm{\chi}}(\epsilon,I))\geq(\bm{z}^*,\bm{\chi}^*)$. We can then apply Theorem \ref{5:thm:final:fraction:combined:general} and conclude
\[ n^{-1}\mathcal{S}_{n,L} \geq \circG_L(\hat{\bm{z}}_L,\hat{\bm{\chi}}_L) + o_p(1) \geq \circG(\bm{z}^*,\bm{\chi}^*) + o_p(1) \]
as well as
\[ \chi_{n,L}^m \geq \hat{\chi}_L^m + o_p(1) \geq (\chi^*)^m + o_p(1).\qedhere \]
\end{proof}

\cleardoublepage
\bibliography{finance}
\bibliographystyle{abbrv}
\addcontentsline{toc}{chapter}{Bibliography}
\end{document}